\newcommand{\bea}{\begin{eqnarray}}
	\newcommand{\eea}{\end{eqnarray}}
\newcommand{\be}{\begin{equation}}
	\newcommand{\ee}{\end{equation}}
\theoremstyle{plain}
\newtheorem{theorem}{Theorem}
\newtheorem{corollary}[theorem]{Corollary}
\newtheorem{proposition}[theorem]{Proposition}
\newtheorem{lemma}[theorem]{Lemma}
\theoremstyle{definition}
\newtheorem{definition}[theorem]{Definition}
\newtheorem{example}{Example}
\newcommand{\rr}{\mathbb{R}}
\newcommand{\rz}{\mathbb{R} \backslash \{0\}}
\newcommand{\rpm}{\mathbb{R}_{\pm}}
\theoremstyle{definition}
\newcommand\xqed[1]{%
	\leavevmode\unskip\penalty9999 \hbox{}\nobreak\hfill
	\quad\hbox{#1}}
\newcommand\demo{\xqed{$\triangle$}}
\newcommand{\beas}{\begin{eqnarray*}}
	\newcommand{\eeas}{\end{eqnarray*}}
\numberwithin{equation}{section}
\numberwithin{theorem}{section}
\def\rc#1{\begin{color}{red}#1\end{color}}
\begin{document}

\setlength{\baselineskip}{14pt}

\begin{titlepage}
\begin{center}
{\huge \bf University of Warsaw}\\[0.5cm]
{\huge Faculty of Physics}\\[3cm]
{\LARGE Daniel Wysocki}\\[0.3cm]
{\large Student no.: 335602}\\[2cm]
{\huge \bf Geometric approaches to Lie bialgebras,\\\vskip .5cm their classification, and applications}\\[2cm]
{\Large Doctoral Dissertation}\\[0.1cm]
{\Large in}\\[0.1cm]
{\Large MATHEMATICS}\\[3cm]
\end{center}
\hspace*{6cm}{\large Advisor:}\\
\hspace*{6cm}{\large \bf dr. hab. Javier de Lucas Araujo, prof. UW}\\
\hspace*{6cm}{\large Department of Mathematical Methods in Physics}\\[3cm]

\begin{center}
{\large Warsaw, March 2023}
\end{center}
\end{titlepage}

\pagestyle{fancy}
\fancyhead{}
\fancyhead[LE,RO]{\thepage}
\fancyhead[LO,RE]{\nouppercase{\leftmark}}
\fancyfoot{}

\clearpage

{\it Oświadczenie kierującego pracą / Advisor's statement}
\vspace{0.5cm}

Oświadczam, że niniejsza praca została przygotowana pod moim kierunkiem i stwierdzam, że spełnia ona warunki do przedstawienia jej w postępowaniu o nadanie stopnia doktora w dziedzinie nauk ścisłych i przyrodniczych w dyscyplinie matematyka.

\vspace{0.5cm}

Hereby, I confirm that the presented thesis was prepared under my supervision and that it fulfills the requirements for the degree of PhD in Mathematics.

\vspace{2cm}

\hspace{1cm} {Data / Date} \hspace{5cm} {Podpis kierującego pracą / Advisor's signature}

\vspace{3cm}

{\it Oświadczenie autora pracy / Author's statement}
\vspace{0.5cm}

Świadom odpowiedzialności prawnej oświadczam, że niniejsza rozprawa doktorska 
została napisana przeze mnie samodzielnie i nie zawiera treści uzyskanych w sposób niezgodny z obowiązującymi przepisami.

Oświadczam również, że przedstawiona praca nie była wcześniej przedmiotem procedur związanych z uzyskaniem stopnia doktora w innej jednostce.

Oświadczam ponadto, że niniejsza wersja pracy jest identyczna z załączoną wersją elektroniczną. 

\vspace{0.5cm}

Hereby, I declare that the presented thesis was prepared by me and none of its contents was obtained by means that are against the law.

The thesis has never before been a subject of any procedure of obtaining an academic degree.

Moreover, I declare that the present version of the thesis is identical to the attached electronic version.

\vspace{2cm}

\hspace{1cm} {Data / Date} \hspace{5cm} {Podpis autora pracy / Author's signature}

\newpage

\newpage
\begin{center}
{\large \bf Abstract}
\end{center}
\vspace{0.5cm}

This thesis presents novel algebraic and geometric approaches to the classification problem of coboundary Lie bialgebras up to Lie algebra automorphisms. This entails the analysis of modified and classical Yang-Baxter equations and other related mathematical structures. More specifically, we develop Grassmann, graded algebra, and algebraic invariant techniques for the classification problem of coboundary Lie bialgebras and the so-called $r$-matrices.  The devised techniques are mainly focused on the study of $r$-matrices for three-dimensional and indecomposable four-dimensional Lie algebras. An  example of a decomposable four-dimensional Lie algebra, namely $\mathfrak{gl}_2$, is also considered. Other particular higher-dimensional Lie bialgebras, e.g. $\mathfrak{so}(2,2)$ and $\mathfrak{so}(3,2)$, are partially studied. Special relevance has the development of a new notion: the Darboux families, which provide a powerful method for the classification and determination of classes of $r$-matrices. In fact, the classification of $r$-matrices for four-dimensional indecomposable Lie bialgebras is a significative advance relative to previous results in the literature as the classification problem of Lie bialgebras has been predominantly treated algebraically in the literature. Meanwhile, we address the problem in a more geometric manner.

Moreover, the theory of Lie bialgebras and related notions are employed in the study of Lie systems, their generalisations, and Hamiltonian systems. More specifically, we study the use of $r$-matrices for the description of interesting Hamiltonian systems relative to symplectic and Poisson structures. Additionally, we extend the construction of integrable deformations of Lie--Hamilton systems on symplectic manifolds to Jacobi manifolds.

The outlook of this doctoral thesis suggests some further research directions, both in the abstract classification problem and the mathematical-physics applications. Appendices present a new practical method to obtain matrix representation for Lie algebras with nontrivial centre used throughout the work and the code written in Mathematica that has been used to facilitate some of the  computations for coboundary Lie bialgebras classifications.

\vspace{1cm}
\begin{center}
{\large \bf Keywords:}
\end{center}
\vspace{0.5cm}
\begin{center}
{Classical Yang--Baxter equation, Darboux family, deformation of Lie--Hamilton system, foliated Lie system,  invariant form,  Jacobi manifold, Lie bialgebra, modified Yang--Baxter equation, r-matrix}
\end{center}

\newpage

\begin{center}
{\large \bf Streszczenie}
\end{center}
\vspace{0.5cm}

Niniejsza rozprawa prezentuje nowe algebraiczne i geometryczne podejścia do problemu klasyfikacji kobrzegowych bialgebr Liego z dokładnością do automorfizmów algebr Liego. Analizowane są również zmodyfikowane i klasyczne równania Yanga--Baxtera oraz powiązane struktury matematyczne. W szczególności, rozwijamy techniki oparte na zastosowaniu algebr Grassmanna, algebr gradowanych oraz algebraicznych niezmienników w celu klasyfikacji kobrzegowych bialgebr Liego oraz tak zwanych r-macierzy. Skupiamy się przede wszystkim na badaniu r-macierzy dla trójwymiarowych oraz nierozkładalnych cztero\-wymiarowych algebr Liego. Rozważamy też czterowymiarowy przykład rozkładalnej algebry Liego, $\mathfrak{gl}_2$. Inne wyżejwymiarowe przypadki, m.in. $\mathfrak{so}(2,2)$ oraz $\mathfrak{so}(3,2)$, są częściowo analizowane. Szczególne znaczenie ma nowe pojęcie wprowadzone i rozwijane w pracy, rodziny Darboux, które dostarcza wyjątkowo użytecznej metody wyznaczania i klasyfikacji klas r-macierzy. Analiza r-macierzy przeprowadzona dla czterowymiarowych nierozkładalnych algebr Liego  jest jednym z niewielu geometrycznych ujęć problemu klasyfikacji bialgebr Liego i stanowi istotny postęp względem wcześniejszych prac, które oferowały głównie algebraiczne podejścia.

Ponadto, stosujemy pojęcia znane z teorii bialgebr Liego do badania układów Liego, ich uogólnień oraz układów hamiltonowskich. Przyglądamy się szczególnie zastosowaniom r-macierzy w opisie układów hamiltonowskich wględem struktur symplektycznych i poissonowskich. Dodatkowo, rozszerzamy konstrukcję całkowalnych deformacji układów Liego--Hamiltona na rozmaitościach symplektycznych na rozmaitości Jacobiego.

W Podsumowaniu pracy nakreślamy możliwe kierunki dalszych badań, zarówno w abstrakcyjnym problemie klasyfikacyjnym, jak i zastosowaniach w fizyce matematycznej. W Dodatkach opisujemy praktyczną metodę otrzymywania macierzowej reprezentacji algebr Liego posiadających nietrywialne centrum, używaną w całej pracy, oraz prezentujemy kod napisany w programie Mathematica, który był stosowany do wykonania rachunków niezbędnych podczas klasyfikowania kobrzegowych bialgebr Liego.

\vspace{1cm}
\begin{center}
{\large \bf Słowa kluczowe}
\end{center}
\vspace{0.5cm}
\begin{center}
{Bialgebra Liego, deformacja układu Liego--Hamiltona, foliacja układów Liego, forma niezmiennicza, klasyczne równanie Yanga--Baxtera, zmodyfikowane równanie Yanga--Baxtera,  r-macierz, rodzina Darboux, rozmaitość Jacobiego}
\end{center}
\vspace{1cm}
\begin{center}
{\large \bf Tytuł pracy w języku polskim}
\end{center}
\vspace{0.5cm}
\begin{center}
{Geometryczne podejścia do bialgebr Liego, ich klasyfikacji oraz zastosowań}
\end{center}

\newpage 

\begin{center}
{\large \bf Acknowledgements}
\end{center}
\vspace{0.5cm}

In the first place, I am grateful to my advisor, prof. Javier de Lucas, for accepting me as his student and giving me a unique opportunity to experience the academic world. Without his understanding and endless efforts, this thesis would not exist and my studies would never reach this point.

Secondly, I would like to thank my parents for their support during this challenging PhD journey of mine.

No words can sufficiently describe my gratitude to Beata Dul for standing by me for this whole time. And for reminding me that proper resting is as important as working.

Many thanks to Karol Kraszewski for opening his doors when I needed it the most. And the longest discussions I've ever had. I'm still standing after all this time thanks to you as well.

I am indebted to the amazing psychotherapists at CPP UW, Anna Mach and Renata Składanek, for their professional help during the darkest times. Also, a great shoutout to wonderful people I've met there.

Last but not least, I want to thank Clarisson Canlubo for many beautiful memories and countless life lessons. I cannot thank you well enough.

And to many other people whom I met along the way - thank you!

My PhD studies were partially funded by the PhD Scholarship Programme "Kartezjusz" for students at the University of Warsaw and Jagiellonian University.

\newpage

The content of this Doctoral Dissertation is based on three published research papers:
\begin{itemize}
\item J. de Lucas and D. Wysocki,
{\it A Grassmann and graded approach to coboundary Lie bialgebras, their classification, and Yang-Baxter equations}, 
J. Lie Theory {\bf 30} (2020), no. 4, 1161--1194.
\item J. de Lucas and D. Wysocki, 
{\it Darboux families and the classification of real four-dimensional indecomposable coboundary Lie bialgebras}, 
Symmetry {\bf 13} (2021), no. 3, 465.
\item J. F. Cari\~nena, J. de Lucas and D. Wysocki,
{\it Stratified Lie systems: theory and applications},
J. Phys. A {\bf 55} (2022), 385206, 31 pp.
\end{itemize}
and one unpublished research paper draft:
\begin{itemize}
\item J. de Lucas and D. Wysocki, 
{\it  Deformation of Lie-Hamilton systems via Jacobi structures}, in preparation (2023).
\end{itemize}

\newpage

\tableofcontents

\newpage

\chapter*{Introduction}
\chaptermark{Introduction}
\addcontentsline{toc}{chapter}{Introduction}

It has been a decade-long journey for the notion of Lie bialgebra to appear on the mathematical landscape and become a topic of separate studies. As for many modern concepts, its beginnings are tightly related with the intense development of physical theories in the previous century and can be traced back to the research on exactly solvable models in quantum mechanics by Yang \cite{Y67, Y68} and statistical mechanics by Baxter \cite{Bax72, Bax78}. In these independent work, the same type of equation, called later the quantum Yang--Baxter equation (QYBE), has been analysed. Further contributions in the study of QYBE has come from the researchers of the Leningrad Mathematical School. L. Faddeev and Sklyanin generalised the so-called inverse scattering method \cite{BIK93, Fa87}, an effective tool for analysing nonlinear partial differential equations, in order to address the integrability problems of Yang and Baxter \cite{Fa84, FS78, FT79, FST79, Sk82}. The relation between these methods and their underlying geometric structures have been discussed extensively by Drinfeld. These endeavours are considered the origin of the theory of Hopf algebras and quantum groups (see \cite{CP95, Dr87, Ma95}). It provided mathematical foundations for prior findings in the field. In particular, the description of r-matrices and the classical Yang--Baxter equation, which are crucial objects in the inverse scattering method, has been given a simple mathematical formulation. In this framework, an r-matrix $r$ is just an element of the tensor product $\mathfrak{g}^{\otimes 2}$ for a Lie algebra $(\mathfrak{g}, [\cdot, \cdot]_{\mathfrak{g}})$ satisfying the classical Yang--Baxter equation of the form
$$
[a_i, a_j] \otimes b_i \otimes b_j + a_i \otimes [b_i, a_j] \otimes b_j + a_i \otimes a_j \otimes [b_i, b_j] = 0
$$
where one uses the decomposition $r = \sum_i a_i \otimes b_i$, $a_i, b_i \in \mathfrak{g}$. Moreover, it could be understood as a special case of a new structure, a Lie bialgebra, that is a pair of Lie algebra structures on $\mathfrak{g}$ and $\mathfrak{g}^*$ with a suitable compatibility condition \cite{CP95}. 

Since the report \cite{Dr87} presented by Drinfeld in 1986 at the International Mathematical Symposium, the research on the topic has experienced noticeable growth. Particularly influential work has been done on the quantisation problem of Lie bialgebras \cite{EH10, EK96, Resh92}. Alongside these mathematical inquiries on quantisation, physical applications of Lie bialgebras have been developed, especially the study of quantum spacetimes \cite{BGGH17, GBGH19, BH96, BHM13, BHMN14, BHN15, BHN14, MS11, MS03}, field-theoretical models \cite{BKLSY16, CFG03} and deformations of classical systems \cite{BBHMR09, BCFHL18, BCFHL21, BO98}. A separate line of research concerned the study of solutions and geometry of CYBE \cite{B07, Dr83, GG97, Ma90, STS83, WX92} and QYBE \cite{CG89, EHo00, GG98}. Both notions also admitted interesting generalisations \cite{BDG17, BGP14, Do21, EL05, ESSo99}.

The systematic analysis of solutions to CYBE remains until now the most basic problem. In \cite{BD82, BF84, Dr83a}, Drinfeld classified the solutions to the classical Yang--Baxter equation in case of the complex semisimple Lie algebra and has made many relevant observations on the structure of the solutions to CYBE for the Lie algebra $\mathfrak{g}[u]$ associated with the complex semisimple $\mathfrak{g}$. It was the first result that addressed the classification problem for r-matrices. Since then, further research in this direction was relatively scarce, as the problem is commonly considered hard. However, several noticeable advancements have been made over last forty years on this issue.

By direct generalisation of the method presented in \cite{Dr83a}, Andruskiewitsch obtained the classification of r-matrices for real absolutely simple Lie algebras \cite{AJ03}. Another insightful approach is due to A. Stolin. In \cite{St91a, St91b}, he continued the initial study of Drinfeld on solutions over $\mathfrak{g}[u]$ and obtained the classification of so-called constant and rational solutions for $\mathfrak{sl}_n$, $n > 1$. He also proved that this problem is equivalent to the analysis of Frobenius subalgebras of $\mathfrak{g}$. In \cite{SP14}, he extended the previous analysis and by cohomological means, classified all bialgebra structures on $\mathfrak{sl}(n, \mathbb{K})$, where $\mathbb{K}$ is an arbitrary field of characteristic zero. Recently, S. Maximov et al. obtained the classification of Lie bialgebra structures over $\mathfrak{g}[u]$, where $\mathfrak{g}$ is any simple Lie algebra over a field of characteristic zero \cite{AMSZ22}.

Apart from already mentioned structural considerations on the classification problem, there has been several more direct algebraic approaches to the problem. Full classification of three-dimensional real Lie bialgebras has been obtained by Gomez \cite{Go00} and Farinati \cite{FJ15}, whereas the four-dimensional case has only been partially studied \cite{ARH17, BLT16}. A few  cases of particular structure have also been analysed \cite{FJ13, FJ22}. The problem has also been discussed in equivalent formulations (see \cite{De18} on the classification of Manin pairs or \cite{HS02} on the classification of Drinfeld doubles). However, Lie bialgebra structures on Lie algebras of physical importance has mostly been considered \cite{BCH00, BH96, BH97, BHMN14, BHMN17, BHP99, BLT16, HLS09, LT17, NT00, Op98, Op00, WWY15}. 

\section*{Outline of the thesis}
This thesis presents a few novel algebraic and geometric approaches to the classification problem of coboundary Lie bialgebras. In particular, we focus on the study of r-matrices for three-dimensional and indecomposable four-dimensional Lie algebras. An additional example of a four-dimensional decomposable Lie algebra $\mathfrak{gl}_2$ is also considered.

Chapter 1 is a survey of prerequisites that are necessary to understand the findings of this thesis. Basic notions and fundamental results in differential geometry, Hopf algebra theory, Lie algebras and Lie systems have been presented.

Chapter 2 is based on \cite{LW20} and it presents the algebraic approach to the classification problem. Its exposition notably clarifies and extends the content of the master thesis of the author \cite{W17}.
Let $\mathfrak{g}$ hereafter stand for a Lie algebra and let ${\rm End}(V)$ stand for the space of endomorphisms on a vector space $V$. Section \ref{Ch:alg_Sec:inv} discusses invariant maps on vector spaces. In particular, the construction of $k$-multilinear $\mathfrak{g}$-invariant maps on $\Lambda V$ and $\Lambda^mV$, for the exterior algebra $\Lambda V$ and the subspaces $\Lambda^mV$ of $m$-vectors, using $k$-multilinear $\mathfrak{g}$-invariant map on $V$ is detailed (Theorem \ref{extension}). Moreover, we show that the induced map on $\Lambda^mV$ is invariant relative to the Lie group ${\rm GL}(\Lambda^m\rho)$ of the $\mathfrak{g}$-module $\Lambda^mV$, namely the natural lifts to $\Lambda^mV$ of the automorphisms on $V$ induced by the exponentiation of the endomorphisms $\rho(\mathfrak{g})$. In Section \ref{Ch:alg_Sec:kil}, we apply these results to the maps induced on $\Lambda^m\mathfrak{g}$ by the Killing form, called {\it Killing-type forms}. We prove Killing-type forms are also invariant relative to the induced automorphism group GL$(\Lambda^m\widetilde{\rm ad})$. Some efficient methods to determine such forms are presented in Section \ref{Ch:alg_Sec:invcalc}. In the next part, we focus on the modified classical Yang-Baxter equation (mCYBE). Section \ref{Ch:alg_Sec:grad} introduces a gradation on the Grassmann algebra that simplifies the analysis of mCYBE solutions and provides an approximate geometric picture of the mCYBE. In Section \ref{Ch:alg_Sec:gInv}, several tools based on this gradation are presented, which allow us to easily obtain $\mathfrak{g}$-invariant elements in $\Lambda\mathfrak{g}$. We discuss how the study of mCYBE might be facilitated by such elements. In Section \ref{Ch:alg_Sec:3DClass}, previous techniques are employed to compute inequivalent $r$-matrices for all three-dimensional Lie algebras. As an illustration of further possibilities of our approach, $r$-matrices for the four-dimensional Lie algebra $\mathfrak{gl}_2$ are obtained in Section \ref{Ch:alg_Sec:4D}.

Chapter 3 is based on \cite{LW21} and it presents an approach from the geometric standpoint. The main tool is the so-called {\it  Darboux family}, an analogue of a Darboux polynomial, which is carefully explained in Section \ref{Ch:Darb_Sec:DarFam}. Its relation to mCYBE and other geometric properties are discussed in Section \ref{Ch:Darb_Sec:mcybe}. Finally, the usefulness of this notion in the classification problem is illustrated in Sections \ref{Ch:Darb_Sec:Cla} and \ref{Ch:Darb_Sec:4Ddec} by computing inequivalent $r$-matrices for all indecomposable four-dimensional Lie algebras and the decomposable Lie algebra $\mathfrak{gl}_2$.

Chapters 4 and 5 focus on using $r$-matrices to study differential equations and develop new applications. Based on \cite{CLW22}, Chapter 4 introduces the notion of stratified Lie systems and shows how Poisson structures induced by $r$-matrices might be applied to obtain and to study Hamiltonian systems of this type. Chapter 5 presents some unpublished results \cite{LW23}. We extend the construction of integrable deformations by Ballesteros et al. \cite{BCFHL18, BCFHL21} to Jacobi and contact settings. Moreover, the properties and applications of such generalisations are briefly discussed. 

The outlook chapter suggests some further research directions, both in the abstract classification problem and the applications in mathematical physics. 

Appendix A presents the code written in Mathematica that has been used in order to facilitate the computations necessary for the classification results presented in Chapter 2 and 3.

Appendix B discusses an effective method to obtain matrix representation for Lie algebras with nontrivial centre. Such representations are necessary to carry out the computations in Chapter 3.

\chapter{Fundamentals}
As already stressed in the Introduction, this thesis addresses the classification problem of Lie bialgebras in a geometric manner. In consequence, presented methods require basic familiarity with certain well-known results in differential geometry, Poisson geometry etc. This chapter is intended as a brief survey of such prerequisites. We introduce the necessary geometric tools and algebraic notions to be used in further chapters. 

Unless explicitly stated otherwise, several general assumptions hold throughout this work. The set of natural numbers $\mathbb{N}$ contains the zero element. All geometric objects are smooth and real. Manifolds are finite-dimensional and connected. Moreover, all geometric structures are globally defined. Of course, a more detailed treatment without previous assumptions is possible (especially it is believed that most results could be extended to the complex case without much difficulties). However, we decided to focus on presenting the main aspects of the theory in the simplest scenario and thus, we omit such minor technical considerations. 

Finally, for the tangent bundle $TM$ of a manifold $M$, the space of its sections will be denoted by $\mathfrak{X}(M)$. Each element of $\mathcal{X}(M)$ is called a {\it vector field} on $M$.
	
\section{Poisson geometry and Schouten-Nijenhuis bracket}

A {\it Poisson algebra} over $\mathbb{R}$ is a real vector space $V$  equipped with two bilinear maps $\bullet$ and $\{\cdot, \cdot\}$, such that $(V, \bullet)$ is a real algebra and $(V, \{\cdot, \cdot\})$ forms a Lie algebra. Moreover, the Lie bracket $\{\cdot, \cdot\}$ satisfies the Leibniz rule relative to $\bullet$, i.e. $\{a, b \bullet c\} = \{a, b\} \bullet c + b \bullet \{a, c\}$ for any $a,b,c \in V$.

The most relevant example of a Poisson algebra is the tensor algebra $T(\mathfrak{g})$ of a Lie algebra $(\mathfrak{g}, [\cdot, \cdot]_{\mathfrak{g}})$. Recall that $T(\mathfrak{g}) := \bigoplus_{p=0} T^p(\mathfrak{g})$, where $T^0(\mathfrak{g}) := \mathbb{R}$ and $T^k(\mathfrak{g}) := \mathfrak{g}^{\otimes k}$ for any $k \geq 0$. it is an associative algebra relative to the tensor product. The Lie bracket $[\cdot ,\cdot]_{T(\mathfrak{g})}$ on $T(\mathfrak{g})$ is defined as follows. First, $[m, g]_{T(\mathfrak{g})} := 0$ for $m \in T^0(\mathfrak{g})$ and $g \in T(\mathfrak{g})$. Next, we put $[p, q]_{T(\mathfrak{g})} := [p, q]_{\mathfrak{g}}$ for $p,q \in T^1(\mathfrak{g})$. Then recursively, we get the Lie bracket on the whole $T(\mathfrak{g})$ by imposing $[a \otimes b, c]_{T(\mathfrak{g})} = [a,c]_{T(\mathfrak{g})} + a \otimes [b,c]_{T(\mathfrak{g})}$ for $a,b,c \in T(\mathfrak{g})$.

The Poisson algebra structure is inherited by the universal enveloping algebra, that is the quotient $U(\mathfrak{g}) := T(\mathfrak{g}) / \mathcal{I}$, where $\mathcal{I}$ is the ideal generated by the elements $a \otimes b - b \otimes a - [a, b]_{\mathfrak{g}}$, where $a,b \in \mathfrak{g}$.

Another crucial example is the algebra $\mathcal{F}(M)$ of real smooth functions on a symplectic manifold $M$. In short, a symplectic manifold is a manifold $M$ equipped with a nondegenerate, closed 2-form $\omega \in \Omega^2(M)$. Given any function $f \in C^{\infty}(M)$, there exists an associated unique vector field $X_f \in \mathcal{X}(M)$, called {\it Hamiltonian vector field}, satisfying $\iota_{X_F}\omega = -{\rm d}f$ (see \cite{AM87} for details). Then for any $f,g \in C^{\infty}(M)$, the map $\{f, g\} := \omega(X_f, X_g)$ together with the pointwise multiplication of functions gives the sought Poisson algebra structure on $C^{\infty}(M)$.

The above case motivates a slightly general notion of a Poisson manifold. A {\it Poisson manifold} is a manifold $M$ such that its algebra $C^{\infty}(M)$ of smooth functions is equipped with a Poisson algebra structure. In other words, there exists an bilinear antisymmetric map $\{\cdot, \cdot\}: C^{\infty}(M) \times C^{\infty}(M) \to C^{\infty}(M)$ that satisfies Jacobi identity and Leibniz rule. This map is called a {\it Poisson bracket}. There is, however, an equivalent definition that we discuss next.

Since $\mathfrak{X}(M)$ is a vector space, we can construct the $k$-th tensor product of $\mathfrak{X}(M)$ with itself. Its anti-symmetric part is usually denoted by $\mathcal{V}^k M$ and the elements of $\mathcal{V}^k M$ are called $k$-{\it vector fields}. Let $\mathcal{V}^0 M := C^\infty(M)$ and define $\mathcal{V}M:=\bigoplus_{k\in \mathbb{N}}\mathcal{V}^k M$, the space of so-called {\it multivector fields}. 

From the fact that a Poisson bracket on $M$ is antisymmetric and satisfies the Leibniz rule, we conclude that $\{\cdot, \cdot\}$ gives rise to a 2-vector field $\pi \in \Lambda^2 M$. The remaining question is how to encode the Jacobi identity in terms of such a bivector field. This task can be accomplished using the notion introduced next.

\begin{definition}\label{SNbracket}
The {\it Schouten-Nijenhuis bracket} on $\mathcal{V}M$ is the unique bilinear map $[\cdot, \cdot]_{S}: \mathcal{V}M \times \mathcal{V}M \to \mathcal{V}M$ satisfying that: a) $[f,g]_S=0$ for every $f,g\in C^\infty(M)$; b) $Xf=[X,f]_S=-[f,X]_S$, for every $X\in \mathfrak{X}(M)$ and $f\in C^\infty(M)$; c) We define
\begin{equation*}
[X_1 \wedge \ldots \wedge X_k, Y_1 \wedge \ldots \wedge Y_l]_{S} := \sum_{i=1}^k \sum_{j =1}^l (-1)^{i+j} [X_i, Y_j] \wedge X_1 \wedge \ldots \wedge \widehat{X}_i \wedge \ldots X_k 
\wedge Y_1\wedge\ldots\wedge  \widehat{Y}_j \wedge \ldots \wedge Y_l,
\end{equation*}
where $X_1,\ldots X_k,Y_1,\ldots,Y_l$ are vector fields on $M$ with $k,l\in \mathbb{N}$, hatted vector fields are omitted in the exterior product and $[\cdot, \cdot]$ stands for the Lie bracket of vector fields.
\end{definition}

The Schouten-Nijenhuis bracket restricted to $\mathcal{V}^1 M = \mathfrak{X}(M)$ matches the Lie bracket of vector fields. In this sense, the Schouten-Nijenhuis bracket can be considered as a generalisation of the Lie derivative of vector fields to multivector fields \cite{Esposito,Ni55,Va94}. Since the Lie bracket of two vector fields is local, namely the value of $[X,Y]$ for two vector fields $X,Y$ on $M$ at $p\in M$ depends only on the value of $X$ and $Y$ on a local open neighbourhood of $p$. Due to this and Definition \ref{SNbracket}, the Schouten-Nijenhuis bracket is also local.

Finally, it is a matter of straightforward computation to verify that the Jacobi identity for the Poisson bracket is equivalent to the condition $[\pi, \pi]_{SN} = 0$ for the associated 2-vector field $\pi$.

\begin{proposition}\label{Pr:PropSchou}
Let $X \in \mathcal{V}^k M, Y \in \mathcal{V}^l M$, and $Z \in \mathcal{V}^m M$. The Schouten-Nijenhuis bracket satisfies the following properties:
\begin{enumerate}
\item $[X,Y]_S\in \mathcal{V}^{k+l-1}M$,
\item $[X, Y]_{S} = -(-1)^{(k-1)(l-1)} [Y, X]_{S}$,
\item $[X, Y \wedge Z]_{S} = [X, Y]_{S} \wedge Z + (-1)^{(k+1)l} Y \wedge [X, Z]_{S}$,
\item $[X, [Y, Z]_{S}]_{S} = [[X, Y]_{S}, Z]_{S} + (-1)^{(k-1)(l-1)} [Y,[X, Z]_{S}]_{S}$.
\end{enumerate}
\end{proposition}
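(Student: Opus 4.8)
The plan is to establish all four properties by reducing to the defining recursion in Definition \ref{SNbracket} and then proceeding by induction on the total degree $k+l$ (and $k+l+m$ for property~4). Since the Schouten--Nijenhuis bracket is local and both sides of each identity are local and multilinear, it suffices to verify the claims on decomposable multivector fields of the form $X_1\wedge\cdots\wedge X_k$ with $X_i\in\mathfrak{X}(M)$; the general case then follows by bilinearity and a partition-of-unity argument, exactly as for the locality remark already made after Definition \ref{SNbracket}.

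For property~1, I would simply read off the degree count from part c) of the definition: each summand $[X_i,Y_j]\wedge X_1\wedge\cdots\widehat{X_i}\cdots\wedge Y_1\wedge\cdots\widehat{Y_j}\cdots$ has one bracket (a $1$-vector) wedged with $(k-1)+(l-1)$ vector fields, giving degree $k+l-1$; the boundary cases involving $C^\infty(M)$ are handled by clauses a) and b). For property~2 (graded antisymmetry), on decomposables I would compare the double sum defining $[X_1\wedge\cdots\wedge X_k,\,Y_1\wedge\cdots\wedge Y_l]_S$ with the one for the reversed bracket: swapping $X$'s and $Y$'s turns $[X_i,Y_j]$ into $[Y_j,X_i]=-[X_i,Y_j]$, contributing a sign $-1$, while reordering the wedge product $X_1\wedge\cdots\widehat{X_i}\cdots\wedge Y_1\wedge\cdots\widehat{Y_j}\cdots$ into $Y_1\wedge\cdots\widehat{Y_j}\cdots\wedge X_1\wedge\cdots\widehat{X_i}\cdots$ costs a sign $(-1)^{(k-1)(l-1)}$; combining with the $(-1)^{i+j}$ factors (which are symmetric under the swap) yields the stated $-(-1)^{(k-1)(l-1)}$. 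Property~3 (the graded Leibniz rule) I would prove by induction on $k$: the case $k=1$ is essentially part b)--c) of the definition unwound on the first factor, and the inductive step writes $X=X'\wedge X_k$ (or uses the derivation property built into the recursion $[a\otimes b,c]$ analogue) and applies the induction hypothesis, tracking the Koszul signs carefully.

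Property~4, the graded Jacobi identity, is the main obstacle and the part that requires real work. The cleanest route is again induction on $k+l+m$, using properties~1--3 already established to reduce the degree: if, say, $Z=Z'\wedge Z_m$, then expanding $[Y,Z]_S=[Y,Z']_S\wedge Z_m \pm Z'\wedge[Y,Z_m]_S$ via property~3, applying $[X,-]_S$ (again via property~3), and comparing with the analogous expansions of $[[X,Y]_S,Z]_S$ and $[Y,[X,Z]_S]_S$, one finds that all terms match up once the lower-degree instances of the Jacobi identity are invoked, together with the ordinary Jacobi identity for the Lie bracket of vector fields as the base case $k=l=m=1$. The bookkeeping of signs here is genuinely delicate — one must be scrupulous that the exponents $(k-1)(l-1)$, $(k+1)l$, etc., transform correctly when a vector field is peeled off and the degree drops by one — so I would set up the induction so that at each step only property~3 and strictly-lower-degree cases of property~4 are used, and verify the sign arithmetic on the generators. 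An alternative, perhaps more conceptual, approach is to realise $[\cdot,\cdot]_S$ as (a shift of) the commutator of derivations of the exterior algebra $\bigoplus_k\mathcal{V}^kM$ acting by contraction, whereupon properties~2 and~4 become the standard graded-commutator identities; but carrying that out rigorously also requires the same locality reductions, so I expect the direct inductive computation to be the most self-contained.
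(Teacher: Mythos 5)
The paper does not actually prove this proposition: it is stated as a known fact about the Schouten--Nijenhuis bracket and implicitly deferred to the cited literature (Nijenhuis, Vaisman, Marle), so there is no in-paper argument to compare yours against. Judged on its own terms, your outline is the standard and correct route: reduce to decomposables by multilinearity and locality, read off the degree for property~1, compare the two double sums for property~2 (your sign accounting $-1$ from $[Y_j,X_i]=-[X_i,Y_j]$ times $(-1)^{(k-1)(l-1)}$ from transposing the two wedge blocks is right), and induct on total degree for property~4 with the Jacobi identity for vector fields (and the degree-zero cases $[X,[Y,f]_S]_S$) as base.

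Two remarks. First, property~3 does not need induction and your proposed induction on $k$ is slightly circular as stated: peeling $X=X'\wedge X_k$ off the \emph{first} argument requires a Leibniz rule in the first slot, which is exactly property~3 up to the antisymmetry of property~2; the clean argument is simply to apply clause c) of Definition~\ref{SNbracket} to the decomposable $(l+m)$-vector $Y\wedge Z$ and split the inner sum over $j$ into the $Y$-factors and the $Z$-factors, whereupon the second block acquires the sign $(-1)^{l}\cdot(-1)^{kl}=(-1)^{(k+1)l}$ from shifting the indices by $l$ and moving the $Y$-block past the $k$-vector $[X_i,Z_j]\wedge X_1\wedge\cdots\wedge\widehat{X_i}\wedge\cdots\wedge X_k$. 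Second, the heart of the proposition is property~4, and there your proof is only a plan: "one finds that all terms match up" is precisely the delicate sign bookkeeping that constitutes the proof, and you have not carried it out. That is not a wrong approach --- it is how the graded Jacobi identity is usually verified --- but as written the hardest quarter of the statement remains asserted rather than proved. You also tacitly assume that the formula in clause c) is well defined (independent of the chosen decomposition of $X$ and $Y$), which deserves at least a sentence before the properties are derived from it.
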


The last property in Proposition \ref{Pr:PropSchou} can be rewritten as
\begin{equation*}
{\footnotesize (-1)^{(k-1)(m-1)} [X, [Y, Z]_{S}]_{S} + (-1)^{(l-1)(k-1)} [Y, [Z, X]_{S}]_{S} + (-1)^{(m-1)(l-1)}[Z, [X, Y]_{S}]_{S} = 0}
\end{equation*}
and, in this form, it is called a \textit{graded Jacobi identity} \cite{Ma97}. 
The space $\mathcal{V} M$ equipped with the Schouten-Nijenhuis bracket is an example of a {\it Gerstenhaber algebra}. 

In this thesis, we are mostly concerned with a special case of the Schouten-Nijenhuis bracket, defined on the exterior algebra $\Lambda \mathfrak{g}$ of a Lie algebra $\mathfrak{g}$. Let us see how its construction follows from the above definition. Let $G$ be a Lie group with an associated Lie algebra $\mathfrak{g}$. Since the Lie bracket of two left-invariant vector fields on $G$ is left-invariant, it stems from Definition \ref{SNbracket}that the Schouten-Nijenhuis bracket of two left-invariant multivector fields on $G$ is again a left-invariant multivector field on $G$. Moreover, left-invariant $0$-vector fields $\mathcal{V}^0(G)^L$ on a Lie group $G$ are given by the constant functions on $G$.  Indeed, the space of $0$-vector fields on $G$ consists of functions on $G$. Given $f \in \mathcal{V}^0(G)$, the left-invariance condition implies that $f(h) = f(gh)$ for any $g, h\in G$. Thus, $f$ is constant on $G$ and $(\mathcal{V}^0(G))^L$ is isomorphic, as a vector space, to $\mathbb{R}$. 
Therefore, the restriction of the Schouten-Nijenhuis bracket  on $\mathcal{V}G$  to the space $\mathcal{V}^LG$ of left-invariant multivector fields on $G$ gives rise to an algebra, which is again a Gerstenhaber algebra. Since $T_eG$ is isomorphic to the left-invariant vector fields on $G$, a $k$-antisymmetric tensor product $\Lambda^k\mathfrak{g}$ of elements of $\mathfrak{g}$ can be identified with $\mathcal{V}^k(G)^L$ for any $k \in \mathbb{N}$. In consequence, the whole $\mathcal{V}^LG$ can be written as $\bigoplus_{k \in \mathbb{N}} \Lambda^k \mathfrak{g}$. For convenience, we denote 
$\Lambda \mathfrak{g} := \bigoplus_{k \in \mathbb{N}} \Lambda^k \mathfrak{g}$. By identifying $\mathcal{V}^LG$ with $\Lambda \mathfrak{g}$, we obtain the Schouten bracket on $\Lambda\mathfrak{g}$. This bracket is known in the literature as the \textit{algebraic Schouten bracket} on the exterior algebra $\Lambda\mathfrak{g}$ \cite[pg. 172]{Va94}.

\begin{example} \label{Ex:sl2}
Consider the Lie algebra $\mathfrak{sl}_2 = \langle e_1,e_2,e_3 \rangle$ with commutation relations
\begin{equation*}
[e_1,e_2]=e_1,\qquad [e_1,e_3]=2e_2,\qquad [e_2,e_3]=e_3.
\end{equation*}
Then,
$
\Lambda^2\mathfrak{sl}_2=\langle e_1\wedge e_2,e_1,\wedge e_3,e_2\wedge e_3\rangle, \Lambda^3\mathfrak{sl}_2=\langle e_1\wedge e_2\wedge e_3\rangle.
$
Every Lie algebra can be understood as  the Lie algebra of left-invariant vector fields of a Lie group \cite{Ha15}. In particular, $\mathfrak{sl}_2$ is isomorphic to the Lie algebra of vector fields of the Lie group, $SL_2$, of unimodular real $2\times 2$ matrices. Then,  the elements of each $\Lambda^k\mathfrak{sl}_2$ can be identified with left-invariant $k$-vector fields on $SL_2$. In particular, $e_1,e_2,e_3$ can be understood as left-invariant vector fields on $SL_2$ and, in view of the formula for the Schouten bracket, it reduces to the Lie bracket of elements of $\mathfrak{sl}_2$, namely
$$
[e_1,e_2]_S=e_1,\qquad [e_1,e_3]_S=2e_2,\qquad 
[e_2,e_3]_S=e_3.
$$
Meanwhile, applying the property 3. of Proposition \ref{Pr:PropSchou}, we get
$$
[e_1,e_2\wedge e_3]_S=[e_1,e_2]_S\wedge e_3+e_2\wedge[e_1,e_3]_S=e_1\wedge e_3+e_2\wedge 2e_2=e_1\wedge e_3.
$$
Similarly, we obtain
$$
[e_1, e_1 \wedge e_2 \wedge e_3]_S = [e_1, e_1] \wedge e_2 \wedge e_3 + e_1 \wedge [e_1, e_2 \wedge e_3]_S = e_1 \wedge e_1 \wedge e_3 = 0
$$
One can proceed analogously with the remaining multivectors.
\end{example}

\section{Generalised distributions}\label{Sec:StSus}

A {\it generalised distribution} (also called a {\it Stefan--Sussmann distribution}) on $M$ is a correspondence $\mathcal{D}$ attaching to each $x\in M$ a subspace $\mathcal{D}_x\subset T_xM$. We say that  $\mathcal{D}$ is {\it differentiable} if for every point $x \in M$ there exist smooth vector fields $X_1, \ldots, X_p$ defined on some open neighbourhood $U$ of $x$ so that they take values in $\mathcal{D}$ at points of $U$, i.e. $X_1(x'),\ldots,X_p(x')\in \mathcal{D}_{x'}$ for every $x'\in U$, and they span the distribution at $p$, namely $\mathcal{D}_x = \langle X_1(x), \ldots, X_p(x)\rangle$. We stress that, from now on, we assume generalised distributions to be differentiables.

We call {\it rank} of $\mathcal{D}$ at $x$ the dimension of $\mathcal{D}_x$. As a function $\rho:x\in  M \to \dim \mathcal{D}_x\in \mathbb{N}$, the rank of a differentiable distribution is lower-semicontinuous (for a given topological space $X$, a function $f: X \to \mathbb{R}$ is called lower-semicontinuous at $x$ if for every $\epsilon > 0$ there exists a neighbourhood $U$ of $x$ such that any point $y \in U$ satisfies $f(y) > f(x) - \epsilon$.). Indeed, since $\mathcal{D}$ is differentiable, it is spanned by a finite family of vector fields being linearly independent at $x$ and taking values in $\mathcal{D}$. This ensures that, for every $x \in M$, there exists an open neighbourhood $U$ of $x$ such that $\rho(y)\geq \rho(x)$ for every $y \in U$. A generalised distribution need not have the same rank at every point of $M$. If $\mathcal{D}$ has the same rank at every point of $M$, then  $\mathcal{D}$ is said to be {\it regular} or $\mathcal{D}$ is simply  called a {\it distribution}. Otherwise, $\mathcal{D}$ is said to be {\it singular}. A generalised distribution $\mathcal{D}$ on $M$ is \emph{involutive} if every two vector fields taking values in $\mathcal{D}$ satisfy that their Lie bracket takes values in $\mathcal{D}$ as well. 

\begin{example}\label{Ex:GD1}
Let us consider the generalised distribution $\mathcal{D}$ on $\mathbb{R}^2$ of the form
\begin{equation*}
\mathcal{D}_{(x, y)} = 
\begin{cases}
\langle\frac{\partial}{\partial x}\rangle, & y \leq 0, \\
\langle\frac{\partial}{\partial x}, \exp\left(-\frac{1}{y^2}\right) \frac{\partial}{\partial y}\rangle, & y > 0,
\end{cases}
\end{equation*}
for every $(x,y)\in \mathbb{R}^2$. The rank of $\mathcal{D}$ is equal  to $2$ for points $(x,y)$ with $y > 0$, while  $\rho(x,y) = 1$ for  points with $y \leq  0$. Therefore, $\mathcal{D}$ is singular.
\end{example}

\begin{example}\label{Ex:GD2}
Let us analyse the generalised distribution $\mathcal{D}$ on $\mathbb{R}^2$ given by 
\begin{equation*}
\mathcal{D}_{(x, y)} = 
\begin{cases}
\langle \frac{\partial}{\partial x}\rangle, & y \leq 0, \\
\langle \frac{\partial}{\partial x}, \exp\left(-\frac{1}{x^2}\right) \frac{\partial}{\partial y}\rangle, & y > 0,
\end{cases}
\end{equation*}
By a similar discussion as the one in Example \ref{Ex:GD1}, we obtain  that $\mathcal{D}$ is singular.
\end{example}
A \emph{stratification} $\mathcal{F}$ on a manifold $M$ is a partition of $M$ into connected disjoint immersed submanifolds $\{\mathcal{F}_k\}_{k\in I}$, where $I$ is a certain set of indices, i.e. $M=\bigcup_{k\in I}\mathcal{F}_k$ and   the submanifolds $\{\mathcal{F}_k\}_{k\in I}$ satisfy $\mathcal{F}_{k}\cap \mathcal{F}_{k'}=\emptyset$ for $k\neq k'$ and $k,k'\in I$. The connected immersed submanifolds $\mathcal{F}_k$, with $k\in I$, are called the {\it strata} of the stratification. A stratification is \emph{regular} if its strata are immersed submanifolds of the same dimension, whereas it is \emph{singular} otherwise. Regular stratifications are called {\it foliations} and their strata are called {\it leaves}.  The tangent space to a stratum, $\mathcal{F}_k$, of a stratification passing through a point $x\in M$ is a subspace $\mathcal{D}_x\subset T_xM$. All the subspaces $\mathcal{D}_x\subset T_xM$ for every point $x\in M$ give rise to a {generalised distribution}  $\mathcal{D}:=\bigcup_{x\in M}\mathcal{D}_x$ on $M$. All the leaves of a foliation have the same dimension and, therefore, the generalised distribution formed by the tangent spaces at every point to its leaves is regular. Meanwhile, a singular stratification gives rise to a singular generalised distribution. 

In this work, we are specially interested in generalised distributions generated by finite-dimensional Lie algebras of vector fields, the so-called {\it Vessiot--Guldberg Lie algebras} \cite{LS20}.  More specifically, let $V$ be a Vessiot--Guldberg Lie algebra, the vector fields of $V$ span a generalised distribution $\mathcal{D}^V$ given by
 $$
	\mathcal{D}_x^V:=\{X_x:X\in V\}\subset T_xM,\qquad \forall x\in M.
	$$

Since the space of vector fields tangent to the strata of a stratification are closed under Lie brackets, the Lie bracket of vector fields on $M$ taking values in a distribution $\mathcal{D}$ can be restricted to each one of its strata. A generalised distribution $\mathcal{D}$ on $M$ is \emph{integrable} if there exists a stratification $\mathcal{F}$ on $M$ such that each stratum $\mathcal{F}_k$ thereof satisfies $T_x\mathcal{F}_k=\mathcal{D}_x$ for every $x\in \mathcal{F}_k$.

A relevant question is whether a generalised distribution on $M$ is integrable or not. For regular distributions, the Fr\"obenius theorem holds \cite{Fr77,La18}.

\begin{theorem} 
	If $\mathcal{D}$ is a regular generalised distribution on a manifold $M$, then $\mathcal{D}$ is integrable if and only if it is involutive.
\end{theorem}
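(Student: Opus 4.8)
The plan is to prove the classical Fr\"obenius theorem, so I will establish both implications separately. The easy direction is that integrability implies involutivity: if $\mathcal{D}$ is integrable, then through each point $x\in M$ there passes a leaf $\mathcal{F}_k$ with $T_y\mathcal{F}_k=\mathcal{D}_y$ for all $y$ in the leaf. Given two vector fields $X,Y$ on $M$ taking values in $\mathcal{D}$, their restrictions to $\mathcal{F}_k$ are tangent to the leaf (since $\mathcal{D}_y=T_y\mathcal{F}_k$), hence are genuine vector fields on the immersed submanifold $\mathcal{F}_k$; the Lie bracket computed on $\mathcal{F}_k$ agrees with the ambient bracket because the immersion is an embedding locally and the bracket is a local operator (as stressed in the text for the Schouten--Nijenhuis bracket). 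Therefore $[X,Y]_x\in T_x\mathcal{F}_k=\mathcal{D}_x$, so $\mathcal{D}$ is involutive.

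For the converse, assume $\mathcal{D}$ is a regular involutive distribution of rank $r$ on the $n$-dimensional manifold $M$. First I would fix $x_0\in M$ and choose, on a neighbourhood, $r$ local vector fields $X_1,\dots,X_r$ spanning $\mathcal{D}$ (using the differentiability hypothesis built into the definition of generalised distribution in the excerpt). The standard argument proceeds by induction on $r$, straightening out one vector field at a time: since $X_1(x_0)\neq 0$, the flow-box theorem gives coordinates in which $X_1=\partial/\partial y^1$; one then replaces $X_2,\dots,X_r$ by $\mathbb{R}$-linear combinations (with smooth coefficients) to kill their $\partial/\partial y^1$-components, obtaining $\widetilde X_2,\dots,\widetilde X_r$ that together with $\partial/\partial y^1$ still span $\mathcal{D}$. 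The involutivity of $\mathcal{D}$ forces the brackets $[\partial/\partial y^1,\widetilde X_j]$ and $[\widetilde X_i,\widetilde X_j]$ to lie in $\mathcal{D}$, and combined with the fact that $\widetilde X_i$ have no $\partial/\partial y^1$-component this shows the $\widetilde X_i$ span an involutive rank-$(r-1)$ distribution in the remaining coordinates, not depending on $y^1$; the inductive hypothesis then produces coordinates $(z^1,\dots,z^n)$ in which $\mathcal{D}=\langle\partial/\partial z^1,\dots,\partial/\partial z^r\rangle$. In such a chart the slices $\{z^{r+1}=c^{r+1},\dots,z^n=c^n\}$ are integral submanifolds.

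It then remains to assemble the local integral submanifolds into a global stratification $\mathcal{F}$. For this I would declare two points equivalent if they can be joined by a piecewise-smooth curve whose velocity everywhere lies in $\mathcal{D}$; the equivalence classes are the candidate leaves. One checks, using the flat charts above, that each class carries a natural structure of connected immersed submanifold with tangent space exactly $\mathcal{D}$ at each point, that the classes are pairwise disjoint and cover $M$, and that they have constant dimension $r$ (so the stratification is a foliation, consistent with the terminology fixed just before the theorem). This uses a countability argument to show each leaf meets each flat chart in at most countably many slices, so the immersed-submanifold structure is well defined and second countable.

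The main obstacle is the converse direction, and within it the passage from local flat charts to a globally well-defined immersed-submanifold structure on each leaf: verifying that the leaf topology is second countable and that the inclusion is a smooth immersion requires the countability-of-slices argument and a careful check that transition between overlapping flat charts preserves the slicing. The local straightening step, while the technical heart of why involutivity is exactly the right hypothesis, is a routine induction once the flow-box theorem is invoked; I would present it concisely and refer to \cite{Fr77,La18} for the full details, as the excerpt already does.
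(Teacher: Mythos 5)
The paper does not prove this theorem at all: it is stated as a classical background result and referred to \cite{Fr77,La18} for proofs. Your argument is the standard textbook proof of the Fr\"obenius theorem and is correct in outline; the only step that is glossed over is the claim that the modified generators $\widetilde X_2,\dots,\widetilde X_r$ give a distribution ``not depending on $y^1$'' --- this is not automatic and in the standard argument one instead either restricts to the slice $y^1=0$ and sweeps out with the flow of $\partial/\partial y^1$ (using $[\partial/\partial y^1,\widetilde X_j]\in\mathcal{D}$ to see that this flow preserves $\mathcal{D}$), or solves a linear ODE in $y^1$ to normalise the coefficients. Since you explicitly defer these details to the same references the paper cites, this is a presentational gap rather than a mathematical one, and your identification of the leaf-assembly and second-countability issues as the delicate points is accurate.
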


In case of a generalised distribution, involutivity is a natural necessary condition for its integrability because the space of vector fields tangent to any stratum of a stratification is involutive. Nevertheless, if a generalised distribution is not regular, its involutiveness is not sufficient.

\begin{example}
We have already verified that the distribution in Example \ref{Ex:GD1} is singular. Let us see now whether it is integrable. To do so, it is enough to give a stratification whose tangent spaces to its strata. In the region $y \leq 0$, the strata are the straight lines given by $y = const.$ These are indeed connected disjoint immersed submanifolds. There is an additional stratum given by the whole half plane for which $y>0$. It is simple to see that the tangent space to these strata give the generalised distribution $\mathcal{D}$.

The distribution in Example \ref{Ex:GD2} is not integrable. As before, the lower half-plane (with $y \leq 0$) is stratified by the straight lines $y = const.$ Now, let us look for the stratification of the upper half-plane. For the quadrants $x>0, y > 0$ and $x < 0, y > 0$, we can take the quadrants themselves as the strata. The remaining $y$-axis, however, poses a problem. The distribution at these points is spanned by $\partial_x$. Since the tangent space to a point is zero, we cannot stratify it by separate points. If on the other hand, we consider an open interval, the tangent space would have the $\partial_y$-component. Therefore, any available option contradicts $T_x F_x = \mathcal{D}_x$.

\end{example}

Remarkably, integrability of generalised distributions can be asserted under additional conditions. Let us note two relevant results on this issue (for a detailed exposition of integrability theorems we refer to \cite{La18}). If a generalised distribution $\mathcal{D}$ on $M$ is {\it analytical}, i.e. for every $x\in M$ there exists a family of analytical vector fields taking values in $\mathcal{D}$ and spanning $\mathcal{D}_{x'}$ for every $x'$ in an open neighbourhood of $x$, one has the following proposition. 

 \begin{theorem}\emph{\textbf{(Nagano \cite{Na66})}}\label{Th:nagano}
	Let $M$ be a real analytic manifold, and let $V$ be a sub-Lie algebra of analytic vector fields on $M$. Then, the induced analytic distribution $\mathcal{D}^V$ is integrable.
\end{theorem}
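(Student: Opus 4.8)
\emph{Proof proposal.} The plan is to produce, through every point of $M$, an integral submanifold of $\mathcal{D}^V$ by taking it to be the \emph{orbit} of that point under the (local) pseudogroup $\mathcal{G}_V$ generated by the flows of the vector fields in $V$. First I would invoke the orbit theorem of Sussmann and Stefan (see \cite{La18}): for an arbitrary family of smooth vector fields, and in particular for $V$, each orbit $\mathcal{O}_x:=\{\Phi(x):\Phi\in\mathcal{G}_V\}$ carries a canonical structure of connected immersed submanifold of $M$, the orbits partition $M$ into a (possibly singular) stratification, and for every $y\in\mathcal{O}_x$ the tangent space $T_y\mathcal{O}_x$ is the linear span of the vectors $(\Phi_*X)_y$ with $X\in V$ and $\Phi\in\mathcal{G}_V$ for which $\Phi_*X$ is defined near $y$. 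In view of the definition of integrability in Section~\ref{Sec:StSus}, it then remains only to establish the identity $T_y\mathcal{O}_x=\mathcal{D}^V_y$ for every $y$; once this is done, the partition of $M$ into orbits is exactly the stratification required.

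One inclusion already holds in the smooth category: every $X\in V$ is tangent to each orbit, so $\mathcal{D}^V_y=\{X_y:X\in V\}\subseteq T_y\mathcal{O}_x$. For the reverse inclusion I would first reduce, by writing a general $\Phi\in\mathcal{G}_V$ as a finite composition of flows and arguing factor by factor, to showing that the pushforward of $X\in V$ by a single flow $\phi^Y_t$ of some $Y\in V$ takes values in $\mathcal{D}^V$ at each point; since pushforward by $\phi^Y_t$ is pullback by $\phi^Y_{-t}$, it is equivalent to treat the pullback. Here real-analyticity enters: for $|t|$ small the Lie series
\begin{equation*}
(\phi^Y_t)^{*}X=\sum_{n\ge 0}\frac{t^{n}}{n!}(\mathrm{ad}_Y)^{n}X
\end{equation*}
converges to $(\phi^Y_t)^{*}X$ as an analytic vector field near any given point. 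Because $V$ is a Lie subalgebra, each $(\mathrm{ad}_Y)^{n}X$ lies in $V$, so evaluating the series at an arbitrary point $q$ exhibits $\big((\phi^Y_t)^{*}X\big)_q$ as the limit of a convergent sequence of vectors in the subspace $\mathcal{D}^V_q=\{Z_q:Z\in V\}$ of $T_qM$; being a linear subspace of the finite-dimensional space $T_qM$, $\mathcal{D}^V_q$ is closed, hence $\big((\phi^Y_t)^{*}X\big)_q\in\mathcal{D}^V_q$. Thus $(\phi^Y_t)^{*}X$, and likewise any $\Phi_*X$ obtained by composing such flows, takes values in $\mathcal{D}^V$ pointwise, which gives $T_y\mathcal{O}_x\subseteq\mathcal{D}^V_y$ and, with the easy inclusion, the desired equality.

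The main obstacle is precisely this reverse inclusion. In the merely smooth setting the orbit theorem identifies $T_y\mathcal{O}_x$ with the value at $y$ of the smallest $\mathcal{G}_V$-invariant module of vector fields containing $V$, which can be strictly larger than $\mathcal{D}^V_y$; this is exactly why involutivity does not force integrability for smooth singular distributions (cf. Example~\ref{Ex:GD2}). What collapses that module back onto $V$ at each point is the combination of analyticity with $V$ being closed under the bracket, through the Lie series argument above. Two technical points I would address but not belabour are the merely local convergence of the Lie series in $t$ — harmless, since only the pointwise conclusion is needed and $\mathcal{G}_V$ is generated by short-time flows — and the immersed-submanifold and partition statements of the orbit theorem, which I would quote as a known result rather than reprove.
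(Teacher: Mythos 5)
The paper does not actually prove this statement: Theorem \ref{Th:nagano} is quoted as a classical result of Nagano \cite{Na66}, with \cite{La18} cited for a detailed exposition, so there is no in-paper argument to compare yours against. Taken on its own merits, your proposal is the standard proof of Nagano's theorem and is essentially correct: the orbit theorem supplies the candidate strata, the inclusion $\mathcal{D}^V_y\subseteq T_y\mathcal{O}_x$ is immediate, and the reverse inclusion is exactly where analyticity together with closure of $V$ under the bracket enters, via the convergent Lie series whose partial sums lie in $V$ and whose pointwise limit therefore lies in the closed subspace $\mathcal{D}^V_q\subseteq T_qM$. One step deserves tightening: the ``factor by factor'' reduction for a composition $\Phi=\phi^{Y_k}_{t_k}\circ\cdots\circ\phi^{Y_1}_{t_1}$ cannot literally iterate the single-flow statement, because after the first pushforward the vector field $(\phi^{Y_1}_{t_1})_*X$ is in general no longer an element of $V$, so the Lie series argument does not apply to it directly. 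The fix is to upgrade the single-flow conclusion: since it holds for every $X\in V$, it says that $(d\phi^{Y}_{t})_q$ maps $\mathcal{D}^V_q$ into $\mathcal{D}^V_{\phi^Y_t(q)}$ (and onto, by applying the same statement with $t$ replaced by $-t$ at the image point), i.e.\ the flows of elements of $V$ preserve the field of subspaces $\mathcal{D}^V$; compositions of such linear isomorphisms then still preserve it, which is precisely what the reverse inclusion requires. With that rephrasing the argument is complete, modulo the local convergence of the Lie series for analytic vector fields, which you rightly identify as a standard estimate to be quoted rather than reproved.
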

In Example 2, the distribution is spanned by the vector fields
$$
X=\frac{\partial}{\partial x},\qquad Y=\exp(-1/y^2)\frac{\partial}{\partial y}
$$
which generate a two-dimensional abelian Lie algebra. However, since $Y$ is not an analytical vector field, Theorem \ref{Th:nagano} cannot ensure its integrability.

If a generalised distribution is not analytical, sufficient conditions for integrability are established by the following {\it Stefan-Sussmann theorem}. This result will be of great relevance in this thesis.

\begin{theorem}{\bf (Hermann \cite{He62})}\label{Th:StSus} 
Let $M$ be a smooth manifold. If $V$ is a  finite-dimensional Lie subalgebra of $\mathfrak{X}(M)$, then the distribution $\mathcal{D}^V$ is integrable.
\end{theorem}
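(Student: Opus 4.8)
The plan is to prove the Stefan--Sussmann theorem (in the Hermann form stated) by constructing, through each point $x \in M$, the accessible set obtained by flowing along the vector fields of the finite-dimensional Lie algebra $V$, and showing that these accessible sets are the strata of an integrable stratification realising $\mathcal D^V$. First I would fix $x_0 \in M$ and consider the pseudogroup generated by the local flows $\{\varphi^t_X : X \in V\}$; the orbit $\mathcal S_{x_0}$ of $x_0$ under all finite compositions $\varphi^{t_k}_{X_k} \circ \cdots \circ \varphi^{t_1}_{X_1}$ is the candidate stratum. Since the $\varphi^t_X$ are (local) diffeomorphisms preserving $\mathcal D^V$ — because $V$ is a Lie algebra, $(\varphi^t_X)_* Y \in V$ for every $Y \in V$, which is the key algebraic input — the distribution is invariant under the pseudogroup, so its rank $\rho(y)=\dim\mathcal D^V_y$ is constant along each orbit. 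Call this constant $r$ on $\mathcal S_{x_0}$.

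Next I would put a smooth manifold structure of dimension $r$ on $\mathcal S_{x_0}$. The idea is to choose $X_1,\dots,X_r \in V$ whose values at $x_0$ form a basis of $\mathcal D^V_{x_0}$, and use the map $(t_1,\dots,t_r) \mapsto (\varphi^{t_r}_{X_r}\circ\cdots\circ\varphi^{t_1}_{X_1})(x_0)$ as a local parametrisation near $x_0$; its differential at $0$ is injective because the $X_i(x_0)$ are independent, so by the inverse function theorem it is an immersion onto a neighbourhood of $x_0$ inside $\mathcal S_{x_0}$, and one transports this chart around by composing with further flows. One must check the overlaps are smooth (they are compositions of flows) and that the topology is well-defined and makes $\mathcal S_{x_0}$ a connected immersed submanifold; second countability of $M$ keeps the leaf space manageable. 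Then I would verify $T_y\mathcal S_{x_0} = \mathcal D^V_y$ for all $y \in \mathcal S_{x_0}$: the inclusion $\supseteq$ follows since every $X \in V$ restricted to $\mathcal S_{x_0}$ is tangent to it (its flow stays in the orbit), and equality follows by the dimension count $\dim T_y\mathcal S_{x_0}=r=\dim\mathcal D^V_y$, using that $\rho$ is constant $=r$ on the orbit. Since distinct orbits are disjoint and cover $M$, the family $\{\mathcal S_x\}$ is a stratification with $T\mathcal S_x=\mathcal D^V$, i.e. $\mathcal D^V$ is integrable in the sense defined above.

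The main obstacle — the genuinely delicate point, as the failure in Example 4 illustrates — is showing that the orbit $\mathcal S_{x_0}$ really is an immersed submanifold of the \emph{correct} dimension $r$ everywhere, i.e. that the rank cannot jump inside a single orbit and that the flow-box parametrisations at different points of the orbit are mutually compatible. This is exactly where finite-dimensionality of $V$ is used crucially: it guarantees that only finitely many vector fields are needed, that $(\varphi^t_X)_*$ maps $V$ into itself, and (via a compactness/uniformity argument on the finite generating set) that the constructed charts cover the orbit coherently; without it (or without analyticity, as in Nagano's theorem) the orbit can be pinched to lower dimension along a subset, destroying the submanifold structure. Once this regularity of orbits is established, the remaining verifications — local nature of the Lie bracket (noted earlier in the excerpt), smoothness of transition maps, and the tangency $T_y\mathcal S_x = \mathcal D^V_y$ — are routine.
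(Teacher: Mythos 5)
The paper does not prove this statement; it is quoted from Hermann \cite{He62} and used as a black box, so there is no in-paper argument to compare against. Your proposal reconstructs the standard orbit-theorem proof, which is the correct route (and, in substance, the cited one): define the orbit of $x_0$ under finite compositions of local flows of elements of $V$, use flow-invariance of $\mathcal{D}^V$ to see that the rank is constant along each orbit, and parametrise a plaque by $(t_1,\ldots,t_r)\mapsto(\varphi^{t_r}_{X_r}\circ\cdots\circ\varphi^{t_1}_{X_1})(x_0)$. You also correctly isolate where finite-dimensionality enters and why the construction collapses in situations such as Example \ref{Ex:GD2} (your ``Example 4'' does not match the paper's numbering).

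Two points deserve sharpening. First, the assertion that $(\varphi^t_X)_*Y\in V$ ``because $V$ is a Lie algebra'' is misleading as stated: closure under the bracket alone does not yield it (infinite-dimensional Lie algebras of vector fields are bracket-closed and the theorem fails for them, as the paper's own non-integrable example shows). The honest argument is that, for fixed $y$, the evaluation $t\mapsto\bigl((\varphi^{-t}_X)_*Y\bigr)(y)$ satisfies a linear ODE governed by the endomorphism ${\rm ad}_X$ of the \emph{finite-dimensional} space $V$, whose unique solution is $\bigl(e^{t\,{\rm ad}_X}Y\bigr)(y)$; hence the pushforward stays in (the restriction of) $V$. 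You do attribute the phenomenon to finite-dimensionality later, but the justification should be made explicit since it is the crux of the theorem. Second, the claim that the composed-flow map is ``an immersion onto a neighbourhood of $x_0$ inside $\mathcal{S}_{x_0}$'' is asserted rather than argued: one must show that the plaque's tangent space equals $\mathcal{D}^V$ at every point of the plaque (flow-invariance plus the dimension count), and then that any further flow starting on the plaque remains in it for small times (a uniqueness-of-integral-curves argument), before concluding that the plaques glue into a coherent immersed-submanifold structure on the orbit. You flag this as the delicate step, which is fair, but as written the proposal defers precisely the part where the work lies; filling it in reproduces Hermann's proof rather than offering an alternative to it.
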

 
As already noticed, vector fields $X, Y$ in Example 2 generate a two-dimensional Lie algebra. Thus in view of Theorem \ref{Th:StSus}, the induced distribution is integrable.

Once we assure the integrability of the generalised distribution, the next relevant problem concerns the determination of its strata. In case of a distribution $\mathcal{D}^V$ on a manifold $M$ spanned by the Vessiot--Guldberg Lie algebra $V$, it is known (see \cite{St80} for details) that the stratum $\mathcal{F}_k$ of $\mathcal{D}^V$ passing through $x \in M$ is given by the set of points
\begin{equation}\label{eq:dec}
\mathcal{F}_k = \{y \in M: \exists t_1, \ldots, t_s \in \mathbb{R}, \, y = \exp(t_1 X_1)\circ \exp(t_2 X_2)\circ \ldots \circ \exp(t_s X_s)x\},
\end{equation}
where $s$ is any natural number, $X_1,\ldots,X_s$ are any vector fields of  $V$, and each $\exp(X)$ stands for the local diffeomorphism on $M$ induced by the vector field $X$ on $M$, i.e. it maps $x$ onto the point $\gamma(1)$ of the integral curve $\gamma:]-\epsilon,\epsilon[\subset \mathbb{R}\mapsto M$ of $X$ passing through $x$.

\section{Lie algebras, $\mathfrak{g}$-modules, and Grassmann algebras}

For a vector space $V$, let $GL(V)$ and $\mathfrak{gl}(V)$ stand for the Lie group of automorphisms and the Lie algebra of endomorphisms on $V$, respectively. A {\it $\mathfrak{g}$-module} is a pair $(V,\rho)$, where  $\rho:v\in \mathfrak{g}\mapsto \rho_v\in \mathfrak{gl}(V)$ is a Lie algebra homomorphism. A $\mathfrak{g}$-module $(V,\rho)$ will be represented just by $V$, while $\rho_v(x)$, for any $v\in \mathfrak{g}$ and $x\in V$, will be written simply as $vx$  if $\rho$ is understood from context. If ${\rm Im}(\rho)$ leaves invariant a certain subspace $W \subset V$, then one can consider the restriction $\rho_W: \mathfrak{g} \to \mathfrak{gl}(W)$. A pair $(W, \rho_W)$ is called a {\it $\mathfrak{g}$-submodule}.
	
	\begin{example}\label{ex:Lie_alg_der}
	Let ${\rm ad}: v \in \mathfrak{g}\mapsto [v,\cdot]_\mathfrak{g} \in \mathfrak{gl}(\mathfrak{g})$ be the adjoint representation of $\mathfrak{g}$. Then, $(\mathfrak{g},{\rm ad})$ is a $\mathfrak{g}$-module \cite{FH91}. Since each $[v,\cdot]_\mathfrak{g}$, with $v\in\mathfrak{g}$, is a derivation of the Lie algebra $\mathfrak{g}$ \cite{FH91}, the map ${\rm ad}$ can be considered as a mapping ${\rm ad}:\mathfrak{g}\rightarrow \mathfrak{der}(\mathfrak{g})$, where $\mathfrak{der}(\mathfrak{g})$ is the Lie algebra of derivations on $\mathfrak{g}$.
	\end{example}
	
	\begin{example}\label{ex:Lie_alg_aut} The group ${\rm Aut}(\mathfrak{g})$ of Lie algebra automorphisms of $\mathfrak{g}$ admits a Lie group structure \cite{SW73} and its Lie algebra is denoted by $\mathfrak{aut}(\mathfrak{g})$. The tangent map at ${\rm id}_\mathfrak{g}\in {\rm Aut}(\mathfrak{g})$ to the injection $\iota:{\rm Aut}(\mathfrak{g})\hookrightarrow GL(\mathfrak{g})$ induces a Lie algebra morphism $\widehat {\rm ad}:\mathfrak{aut}(\mathfrak{g})\simeq {\rm T}_{{\rm id}_\mathfrak{g}}{\rm Aut}(\mathfrak{g})\rightarrow \mathfrak{gl}(\mathfrak{g})\simeq {\rm T}_{{\rm id}_\mathfrak{g}}GL(\mathfrak{g})$ and $(\mathfrak{g},\widehat{\rm ad})$ becomes an $\mathfrak{aut}(\mathfrak{g}$)-module. 
	\end{example}
	
Given a $\mathfrak{g}$-module $(V, \rho)$, a Grassmann algebra $\Lambda\mathfrak{g}$ and its specific subspaces can also be equipped with the $\mathfrak{g}$-module structure.
	
	\begin{proposition} \label{Prop:grass_gmod}
	If $(V, \rho)$ is a $\mathfrak{g}$-module, each  $(\Lambda^m V,\Lambda^m\rho)$, where  $\Lambda^m \rho: v \in \mathfrak{g} \mapsto \Lambda^m \rho_v \in \mathfrak{gl}(\Lambda^m V)$, and  $(\Lambda V,\Lambda \rho:v\in \mathfrak{g}\mapsto \Lambda\rho_v\in \mathfrak{gl}(\Lambda V))$ are $\mathfrak{g}$-modules.
	\end{proposition}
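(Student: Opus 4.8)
The plan is to verify that the map $\Lambda^m\rho$ (respectively $\Lambda\rho$) is a Lie algebra homomorphism into the appropriate $\mathfrak{gl}$, since this is precisely what the definition of a $\mathfrak{g}$-module requires. First I would make explicit the action of $\Lambda^m\rho_v$ on decomposable elements, defining it as a derivation: for $v\in\mathfrak{g}$ and $x_1\wedge\cdots\wedge x_m\in\Lambda^mV$, set
\begin{equation*}
\Lambda^m\rho_v(x_1\wedge\cdots\wedge x_m):=\sum_{i=1}^m x_1\wedge\cdots\wedge\rho_v(x_i)\wedge\cdots\wedge x_m,
\end{equation*}
and extend by linearity; for $\Lambda V$ one takes the direct sum of these over all $m$, noting that $\Lambda^0 V=\mathbb{R}$ carries the trivial action. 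The first point to check is that this is well defined on $\Lambda^m V$, i.e. compatible with the antisymmetrisation defining the wedge product; this follows because transposing two adjacent factors $x_i,x_{i+1}$ introduces a sign $-1$ uniformly in every summand, so the expression changes sign as required. Equivalently, one can observe that $\Lambda^m\rho_v$ is the natural action on $\Lambda^m V$ of the endomorphism $\rho_v\in\mathfrak{gl}(V)$ induced by functoriality of the exterior power together with the Leibniz/derivation rule, which is automatically well defined.

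Next I would establish linearity of $\Lambda^m\rho$ in $v$, which is immediate since $\rho$ is linear and the formula above is linear in each occurrence of $\rho_v$. The substantive step is the bracket condition: for $v,w\in\mathfrak{g}$ one must show $\Lambda^m\rho_{[v,w]_\mathfrak{g}}=[\Lambda^m\rho_v,\Lambda^m\rho_w]=\Lambda^m\rho_v\circ\Lambda^m\rho_w-\Lambda^m\rho_w\circ\Lambda^m\rho_v$ in $\mathfrak{gl}(\Lambda^m V)$. I would prove this by evaluating both sides on a decomposable element $x_1\wedge\cdots\wedge x_m$. Applying the derivation $\Lambda^m\rho_w$ and then $\Lambda^m\rho_v$ produces a sum of terms where either $\rho_v\rho_w$ acts on a single factor $x_i$, or $\rho_v$ and $\rho_w$ act on two distinct factors $x_i,x_j$. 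In the commutator the mixed terms (two distinct factors hit) cancel in pairs, leaving exactly $\sum_i x_1\wedge\cdots\wedge(\rho_v\rho_w-\rho_w\rho_v)(x_i)\wedge\cdots\wedge x_m=\sum_i x_1\wedge\cdots\wedge\rho_{[v,w]_\mathfrak{g}}(x_i)\wedge\cdots\wedge x_m$, using that $\rho$ itself is a homomorphism. This is precisely $\Lambda^m\rho_{[v,w]_\mathfrak{g}}(x_1\wedge\cdots\wedge x_m)$, and by linearity the identity holds on all of $\Lambda^m V$.

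Finally, for $\Lambda V=\bigoplus_{m\in\mathbb{N}}\Lambda^m V$ the claim follows formally: $\Lambda\rho_v$ is by definition the direct sum $\bigoplus_m\Lambda^m\rho_v$, each summand preserves its graded piece, and both linearity and the bracket identity hold degreewise, hence hold on the direct sum; thus $\Lambda\rho:\mathfrak{g}\to\mathfrak{gl}(\Lambda V)$ is a Lie algebra homomorphism and $(\Lambda V,\Lambda\rho)$ is a $\mathfrak{g}$-module. I do not anticipate a genuine obstacle here; the only point requiring a little care is the bookkeeping of the cancellation of the ``two distinct factors'' terms in the commutator computation, which is the heart of why the derivation property upgrades a homomorphism on $V$ to a homomorphism on $\Lambda^m V$. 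Everything else is routine verification of well-definedness and linearity.
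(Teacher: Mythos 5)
Your proof is correct and follows essentially the same route as the paper: the paper writes $\Lambda^m\rho_v=\sum_i(\rho_v)_i$ and uses $[(\rho_v)_i,(\rho_w)_j]=0$ for $i\neq j$, which is precisely your observation that the ``two distinct factors'' terms cancel in the commutator. The remaining verifications (well-definedness, linearity, the direct-sum case for $\Lambda V$) match the paper's treatment.
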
	
	\begin{proof}
Let us firstly focus on the case $(\Lambda^m V,\Lambda^m\rho)$. The goal is to show that the map $\Lambda^m\rho: v \in \mathfrak{g} \mapsto \Lambda^m \rho_v \in \mathfrak{gl}(\Lambda^m V)$ is a Lie algebra homomorphism, i.e. $\Lambda^m\rho([v, w]) = [\Lambda^m\rho(v), \Lambda^m\rho(w)]$. 

Recall that every $T\in \mathfrak{gl}(V)$ gives rise to the mappings $\Lambda^mT \in \mathfrak{gl}(\Lambda^m V)$ such that $\Lambda^mT \equiv 0$ for $m\leq 0$, and for $m>0$ it is of the form
	$
	\Lambda^mT := \sum_{i=1}^{m} T_i,
	$ 
	where $T_i \in \mathfrak{gl}(\Lambda^m V)$ is given by $T_i(v_1, \otimes \ldots \otimes v_m):=v_1 \otimes \ldots \otimes T(v_i) \otimes \ldots \otimes v_m$ for any $v_1, \ldots, v_m \in V$ and $\Lambda^mT$ is assumed to be restricted to $\Lambda^mV$. Given two maps $S, T \in \mathfrak{gl}(V)$, we have $[S_i, T_j] = 0$ for $i \neq j$.

 In view of that, we get
 $$
 [\Lambda^m\rho(v), \Lambda^m\rho(w)] = \sum_{i,j =1}^{m} [(\rho_v)_i, (\rho_w)_j] = \sum_{i=1}^{m} [(\rho_v)_i, (\rho_w)_i] = \sum_{i=1}^{m} (\rho_{[v,w]})_i = \Lambda^m\rho([v, w])
 $$
 Denote $\Lambda T:=\bigoplus_{m\in \mathbb{Z}}\Lambda^mT\in\mathfrak{gl}(\Lambda V)$. Then, the second part of the proposition follows immediately from previous considerations.
 \end{proof}
 By the properties of the algebraic Schouten bracket, Proposition \ref{Prop:grass_gmod} implies that any $\mathfrak{g}$ gives rise to a $\mathfrak{g}$-module  $(\Lambda\mathfrak{g},{\rm ad})$, where ${\rm ad}:v\in \mathfrak{g}\mapsto [v,\cdot]_S \in \mathfrak{gl}(\Lambda \mathfrak{g})$ (cf. \cite{Va94}). 
 Similarly to Proposition \ref{Prop:grass_gmod}, one can show that the k-tensor subspaces $T^k(V)$ with the map $\otimes^k T := \sum_{p=1}^k T_p$ and the whole tensor algebra $T(V)$ with the map $\otimes T := \oplus_{m \in \mathbb{Z}} \otimes^k T$ are also $\mathfrak{g}$-modules. 

Let us now focus on certain properties of the Lie algebra representation $\rho$, which will be relevant for our classification methods later in this work. 
	
	\noindent
	\begin{minipage}{0.64\textwidth}
		\begin{lemma}\label{Lio} 
   \setlength{\baselineskip}{14pt}
  Let $(V,\rho)$ be a $\mathfrak{g}$-module and let $G$ be a connected Lie group with Lie algebra $\mathfrak{g}$. If $\Phi:G\rightarrow GL(V)$ is a Lie group homomorphism such that the diagram (\ref{diag}) is commutative,  where $\exp_G$ and $\exp$ are exponential maps on $\mathfrak{g}$ and $\mathfrak{gl}(V)$ respectively, then $\Phi(G)$ is an immersed Lie subgroup of $GL(V)$ generated by the elements $\exp(\rho(\mathfrak{g}))$.
		\end{lemma}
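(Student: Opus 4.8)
The plan is to combine two standard facts: the image of a Lie group homomorphism carries a canonical immersed Lie subgroup structure, and a connected Lie group is generated, as an abstract group, by the image of its exponential map. The commutativity of the diagram (the relation $\Phi\circ\exp_G=\exp\circ\rho$) is what links these two facts to the subalgebra $\rho(\mathfrak{g})$. As a preliminary step I would differentiate $\Phi\circ\exp_G=\exp\circ\rho$ at $0\in\mathfrak{g}$; since the differentials of $\exp_G$ and $\exp$ at the origin are the respective identity maps, this gives $\mathrm{d}\Phi_{e}=\rho$, so $\rho(\mathfrak{g})=\mathrm{d}\Phi_{e}(\mathfrak{g})$ is a Lie subalgebra of $\mathfrak{gl}(V)$.

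To obtain the immersed subgroup structure I would pass to a quotient. The kernel $N:=\ker\Phi$ is a closed normal subgroup of $G$, hence an embedded Lie subgroup, and $G/N$ inherits a Lie group structure through which $\Phi$ factors as an \emph{injective} Lie group homomorphism $\overline{\Phi}\colon G/N\to GL(V)$. An injective Lie group homomorphism has trivial kernel, so the Lie algebra of its kernel, which equals the kernel of its differential at the identity, is zero; hence $\mathrm{d}\overline{\Phi}_{e}$ is injective, and by left translation $\mathrm{d}\overline{\Phi}$ is injective everywhere, i.e. $\overline{\Phi}$ is an injective immersion. Transporting the manifold structure of $G/N$ along $\overline{\Phi}$ then exhibits $\Phi(G)=\overline{\Phi}(G/N)$ as an immersed Lie subgroup of $GL(V)$ whose Lie algebra is $\mathrm{d}\Phi_{e}(\mathfrak{g})=\rho(\mathfrak{g})$.

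For the generation statement I would use that $\exp_G$ is a local diffeomorphism at $0$, so $\exp_G(\mathfrak{g})$ contains a neighbourhood of $e$, and any neighbourhood of the identity generates the connected group $G$; thus $G=\langle\exp_G(\mathfrak{g})\rangle$. Applying the group homomorphism $\Phi$ and the identity $\Phi(\exp_G(X))=\exp(\rho(X))$ from the diagram gives
$$\Phi(G)=\Phi\bigl(\langle\exp_G(\mathfrak{g})\rangle\bigr)=\bigl\langle\Phi(\exp_G(\mathfrak{g}))\bigr\rangle=\bigl\langle\exp(\rho(\mathfrak{g}))\bigr\rangle,$$
which is the asserted description of $\Phi(G)$. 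The main obstacle is the careful justification of the immersed submanifold structure: one must invoke that $G/\ker\Phi$ is a Lie group and that the induced injective homomorphism is an immersion (equivalently, that $\Phi$ has constant rank), since $\Phi(G)$ need not be an embedded submanifold of $GL(V)$ — the irrational-winding phenomenon shows the subgroup topology and the subspace topology may differ. Everything else reduces to the identification $\mathrm{d}\Phi_{e}=\rho$ and the connectedness argument above.
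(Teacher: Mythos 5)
Your proposal is correct and follows essentially the same route as the paper: the paper's own proof consists precisely of the observation that a connected Lie group is generated by the image of its exponential map, combined with the commutativity relation $\Phi\circ\exp_G=\exp\circ\rho$. Your additional justification of the immersed-subgroup structure via the quotient $G/\ker\Phi$ is a correct elaboration of a point the paper leaves implicit, not a different argument.
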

  \vskip0.5cm
	\end{minipage}
	\begin{minipage}{0.35\textwidth}
		\begin{center}
			\vskip-0.5cm
			\begin{equation}\label{diag}		
   \begin{gathered}
   \xymatrix{\mathfrak{g}\ar[rr]^{\rho\quad\quad}\ar[d]^{\exp_G}&&\mathfrak{gl}(V)\ar[d]^{\exp}\\G\ar[rr]^{\Phi}&&GL(V)}
			\end{gathered}
   \end{equation}
		\end{center}
	\end{minipage}

 This result stems from the fact that every element of a connected Lie group is a product of elements in the image of its exponential map (see \cite[p. 228]{Sc94}).
	
A Lie algebra homomorphism $\rho:\mathfrak{g}\rightarrow \mathfrak{gl}(V)$ gives rise to a Lie group homomorphism $\Phi:\widetilde{G}\rightarrow GL(V)$, where $\widetilde{G}$ is  connected and simply connected, so that the associated diagram of the form (\ref{diag}) is commutative \cite{DK00}. Thus, Lemma \ref{Lio} asserts $\Phi(\widetilde{G})$ is generated by the elements of $\exp(\rho(\mathfrak{g}))$. In other words, $\Phi(\widetilde{G})$ is the smallest group containing $\exp(\rho(\mathfrak{g}))$. 
Although $\Phi(\widetilde{G})$ may not be an embedded submanifold in $GL(V)$, it is always a Lie group \cite{L03}. Previous facts motivate the following definition.

	\begin{definition}
		The {\it Lie group} of a $\mathfrak{g}$-module $(V, \rho)$ is the immersed Lie subgroup $GL(\rho)$ of $GL(V)$ generated by $\exp (\rho(\mathfrak{g}))$. 
	\end{definition}

For further purposes, it is necessary to establish the form of Lie groups for $\mathfrak{g}$-modules discussed in Examples \ref{ex:Lie_alg_der} and \ref{ex:Lie_alg_aut}. The next two propositions show that in case of the $\mathfrak{g}$-module $(\mathfrak{g},{\rm ad})$ and the $\mathfrak{aut}(\mathfrak{g})$-module $(\mathfrak{g},\widehat{\rm ad})$, their Lie groups are  ${\rm Inn}(\mathfrak{g})$ and ${\rm Aut}_c(\mathfrak{g})$, respectively. 

	\begin{proposition}\label{Inn} 
	Let ${\rm Ad}:g\in G\mapsto {\rm Ad}_g\in GL(\mathfrak{g})$ be the adjoint action of a connected Lie group $G$ on its Lie algebra $\mathfrak{g}$. The Lie group of the $\mathfrak{g}$-module $(\mathfrak{g},{\rm ad})$ is  equal to ${\rm Ad}(G)$.
	\end{proposition}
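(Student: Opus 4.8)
The plan is to identify $\mathrm{Ad}(G)$ with the Lie group $GL(\mathrm{ad})$ of the $\mathfrak{g}$-module $(\mathfrak{g},\mathrm{ad})$ by invoking Lemma \ref{Lio} with $V=\mathfrak{g}$, $\rho=\mathrm{ad}$, and $\Phi=\mathrm{Ad}$. Concretely, I would first recall the standard fact that for a connected Lie group $G$ with Lie algebra $\mathfrak{g}$, the adjoint action $\mathrm{Ad}:G\to GL(\mathfrak{g})$ is a Lie group homomorphism whose differential at the identity is the adjoint representation $\mathrm{ad}:\mathfrak{g}\to\mathfrak{gl}(\mathfrak{g})$. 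This is exactly the statement that the naturality square relating the exponential maps commutes, i.e. $\mathrm{Ad}_{\exp_G(v)}=\exp(\mathrm{ad}_v)$ for every $v\in\mathfrak{g}$, which is precisely diagram \eqref{diag} in the case at hand.

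Once that commutativity is in place, Lemma \ref{Lio} applies verbatim: it tells us that $\mathrm{Ad}(G)$ is an immersed Lie subgroup of $GL(\mathfrak{g})$ generated by the elements $\exp(\mathrm{ad}(\mathfrak{g}))$. But by the Definition preceding the proposition, the immersed Lie subgroup of $GL(\mathfrak{g})$ generated by $\exp(\mathrm{ad}(\mathfrak{g}))$ is by definition the Lie group $GL(\mathrm{ad})$ of the $\mathfrak{g}$-module $(\mathfrak{g},\mathrm{ad})$. Since $G$ is connected, every element of $G$ is a finite product of elements of $\exp_G(\mathfrak{g})$, so $\mathrm{Ad}(G)$ is generated as a group by $\mathrm{Ad}(\exp_G(\mathfrak{g}))=\exp(\mathrm{ad}(\mathfrak{g}))$; hence the two subgroups of $GL(\mathfrak{g})$ coincide as sets, and therefore as immersed Lie subgroups. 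This yields $GL(\mathrm{ad})=\mathrm{Ad}(G)$, which is the assertion.

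The only genuinely delicate point — and the one I would be most careful to state cleanly rather than a true obstacle — is verifying the hypothesis of Lemma \ref{Lio}, namely that $\mathrm{Ad}$ really is a Lie group homomorphism making diagram \eqref{diag} commute. This rests on the identity $\mathrm{Ad}_{\exp_G(v)}=\exp(\mathrm{ad}_v)$, which is a classical result (differentiating $\mathrm{Ad}$ and using $\frac{d}{dt}\big|_0\mathrm{Ad}_{\exp_G(tv)}=\mathrm{ad}_v$), so I would simply cite it; here $\exp$ on $\mathfrak{gl}(\mathfrak{g})$ is the ordinary matrix exponential. A secondary subtlety is that Lemma \ref{Lio} is phrased for a connected $G$, which is exactly our standing assumption, and that the ``immersed Lie subgroup'' structure on $\mathrm{Ad}(G)$ is the one induced as the image of a Lie group homomorphism; one should note this is consistent with the immersed-subgroup structure coming from Lemma \ref{Lio}, since both are characterised as the unique Lie group structure on the subgroup generated by $\exp(\mathrm{ad}(\mathfrak{g}))$ for which the inclusion into $GL(\mathfrak{g})$ is an immersion (cf. the remarks after Lemma \ref{Lio}). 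With these remarks, the proof is essentially a one-line application of the lemma.
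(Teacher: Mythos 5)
Your proof is correct and is precisely the argument the paper intends: the paper omits the proof of this proposition, but its sibling, Proposition \ref{Out}, is proved by exactly the same device — verifying the commutativity of diagram (\ref{diag}) and invoking Lemma \ref{Lio} — and your use of the classical identity $\mathrm{Ad}_{\exp_G(v)}=\exp(\mathrm{ad}_v)$ is the right way to check the hypothesis here. Nothing is missing.
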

	
\begin{proposition}\label{Out} 
 The Lie group of the $\mathfrak{aut}(\mathfrak{g})$-module $(\mathfrak{g},\widehat {\rm ad})$ is given by the connected component, ${\rm Aut}_c(\mathfrak{g})$, containing the neutral element of ${\rm Aut}(\mathfrak{g})$.
\end{proposition}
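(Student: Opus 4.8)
The plan is to apply Lemma \ref{Lio} with the $\mathfrak{aut}(\mathfrak{g})$-module $(V,\rho) = (\mathfrak{g}, \widehat{\mathrm{ad}})$ and with a suitably chosen connected Lie group $G$ having Lie algebra $\mathfrak{aut}(\mathfrak{g})$. The natural candidate is $G = \mathrm{Aut}_c(\mathfrak{g})$, the identity component of $\mathrm{Aut}(\mathfrak{g})$, which by Example \ref{ex:Lie_alg_aut} is a Lie group with Lie algebra $\mathfrak{aut}(\mathfrak{g})$, and which is connected by construction. For the homomorphism $\Phi$ we take the inclusion $\iota \colon \mathrm{Aut}_c(\mathfrak{g}) \hookrightarrow GL(\mathfrak{g})$ (the restriction to the identity component of the injection $\mathrm{Aut}(\mathfrak{g}) \hookrightarrow GL(\mathfrak{g})$ from Example \ref{ex:Lie_alg_aut}). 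The first step is thus to verify that the diagram of the form (\ref{diag}) commutes for this choice, i.e. that $\iota \circ \exp_{\mathrm{Aut}_c(\mathfrak{g})} = \exp_{GL(\mathfrak{g})} \circ \, \widehat{\mathrm{ad}}$ as maps $\mathfrak{aut}(\mathfrak{g}) \to GL(\mathfrak{g})$. This is a consequence of the naturality of the exponential map with respect to Lie group homomorphisms: since $\iota$ is a Lie group homomorphism and its differential at the identity is precisely $\widehat{\mathrm{ad}}$ (this is the definition of $\widehat{\mathrm{ad}}$ given in Example \ref{ex:Lie_alg_aut}), one has $\iota \circ \exp_{\mathrm{Aut}_c(\mathfrak{g})} = \exp_{GL(\mathfrak{g})} \circ \, T_{\mathrm{id}}\iota = \exp_{GL(\mathfrak{g})} \circ \, \widehat{\mathrm{ad}}$.

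With the hypotheses of Lemma \ref{Lio} satisfied, the lemma gives that $\Phi(G) = \iota(\mathrm{Aut}_c(\mathfrak{g})) = \mathrm{Aut}_c(\mathfrak{g})$, viewed as a subgroup of $GL(\mathfrak{g})$, is an immersed Lie subgroup generated by the elements $\exp(\widehat{\mathrm{ad}}(\mathfrak{aut}(\mathfrak{g})))$. By the definition of the Lie group of a $\mathfrak{g}$-module, the immersed Lie subgroup of $GL(\mathfrak{g})$ generated by $\exp(\widehat{\mathrm{ad}}(\mathfrak{aut}(\mathfrak{g})))$ is exactly $GL(\widehat{\mathrm{ad}})$. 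Hence $GL(\widehat{\mathrm{ad}}) = \mathrm{Aut}_c(\mathfrak{g})$, which is the claim. The remaining point to be careful about is that the two notions of "the subgroup generated by $\exp(\widehat{\mathrm{ad}}(\mathfrak{aut}(\mathfrak{g})))$" — the abstract one in the definition of $GL(\rho)$ and the one produced by Lemma \ref{Lio} — coincide as immersed Lie subgroups; this follows because, as noted after Lemma \ref{Lio}, every element of a connected Lie group is a finite product of exponentials, so $\mathrm{Aut}_c(\mathfrak{g})$ is set-theoretically generated by $\exp(\widehat{\mathrm{ad}}(\mathfrak{aut}(\mathfrak{g})))$, and the smallest such subgroup is unique.

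The main obstacle, and the step requiring the most care, is not the algebra but the topology: one must ensure that $\mathrm{Aut}_c(\mathfrak{g})$ really is connected and really does have $\mathfrak{aut}(\mathfrak{g})$ as its Lie algebra, so that it is a legitimate choice of $G$ in Lemma \ref{Lio}, and that $\iota$ restricted to this component is a genuine Lie group homomorphism into $GL(\mathfrak{g})$ whose image carries the immersed-subgroup topology rather than the subspace topology. These facts rest on the cited result \cite{SW73} that $\mathrm{Aut}(\mathfrak{g})$ is a (closed) Lie subgroup of $GL(\mathfrak{g})$ — indeed it is the closed subgroup preserving the Lie bracket tensor — together with the general fact that the identity component of a Lie group is an open, hence embedded, Lie subgroup with the same Lie algebra. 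Once these structural points are in place, the proof is a direct application of Lemma \ref{Lio} exactly as in the proof of Proposition \ref{Inn}.
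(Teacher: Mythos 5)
Your proof is correct and follows essentially the same route as the paper: both check that the inclusion of the automorphism group into $GL(\mathfrak{g})$ makes the diagram of Lemma \ref{Lio} commute and then invoke that lemma to identify $GL(\widehat{\rm ad})$ with ${\rm Aut}_c(\mathfrak{g})$. The only difference is that you apply Lemma \ref{Lio} directly to the connected group ${\rm Aut}_c(\mathfrak{g})$, whereas the paper routes through the connected and simply connected group $\widetilde{{\rm Aut}}(\mathfrak{g})$ and the induced morphism $\widetilde{\iota}$, whose image is again ${\rm Aut}_c(\mathfrak{g})$; since the lemma only requires connectedness, your shortcut is legitimate and slightly more economical.
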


	\begin{proof} 
		The inclusion $\iota:{\rm Aut}_c(\mathfrak{g})\hookrightarrow GL(\mathfrak{g})$ has a tangent map $\widehat{{\rm ad}}:\mathfrak{aut}(\mathfrak{g})\rightarrow \mathfrak{gl}(\mathfrak{g})$ at ${\rm id}_\mathfrak{g}\in {\rm Aut}_c(\mathfrak{g})$, which 
		\vskip 0.1cm
		\noindent
		\begin{minipage}{0.6\textwidth}
  \setlength{\baselineskip}{14pt}
is a Lie algebra morphism. Thus, the right part of the diagram aside commutes. Let $\widetilde{{\rm Aut}}(\mathfrak{g})$ be the connected and simply connected Lie group associated with $\mathfrak{aut}(\mathfrak{g})$ and let $\widetilde{\iota}:\widetilde{{\rm Aut}}(\mathfrak{g})\rightarrow GL(\mathfrak{g})$ be the induced Lie group morphism. The properties of $\widetilde{\iota}$ for $\widetilde{\rm Aut}(\mathfrak{g})$ and $\widehat{\rm ad}$ imply the commutativity of
		\end{minipage}
		\begin{minipage}{0.4\textwidth}
			\vskip-0.4cm
   \centering
		\begin{equation*}
					\xymatrix{\mathfrak{aut}(\mathfrak{g})\ar[r]^(0.5){\widehat{\rm ad}}\ar[d]^{\exp_{\widetilde{{\rm Aut}}(\mathfrak{g})}}&\mathfrak{gl}(\mathfrak{g})\ar[d]^{\exp}&\mathfrak{aut}(\mathfrak{g})\ar[l]_{\widehat{\rm ad}}\ar[d]_{\exp_{{{\rm Aut}}_c(\mathfrak{g})}}\\\widetilde{{\rm Aut}}(\mathfrak{g})\ar[r]^{\widetilde{{\iota}}}&GL(\mathfrak{g})&{\rm Aut}_c(\mathfrak{g})\ar@{->}[l]_{\iota}}
				\end{equation*}
		\end{minipage}
  \noindent
 the left part of the diagram aside. This fact, together with Lemma \ref{Lio}, gives $GL(\widehat{\rm ad})=\widetilde{\iota}(\widetilde{{\rm Aut}}(\mathfrak{g}))={\rm Aut}_c(\mathfrak{g})$.
	\end{proof}
	
 We have shown in Proposition \ref{Prop:grass_gmod} that a $\mathfrak{g}$-module $(V,\rho)$ induces new ones $(\Lambda^m V,\Lambda^m\rho)$. Similarly close relation holds between their Lie groups $GL(\rho)$ and $GL(\Lambda^m\rho)$ as explained next.
	
	\begin{proposition}\label{Prop:invk}
	Let $(V,\rho)$ be a $\mathfrak{g}$-module. For any $m\in \mathbb{Z}$, the Lie group $GL(\Lambda^m\rho)$ of the $\mathfrak{g}$-module $(\Lambda^m V,\Lambda^m\rho)$ is given as $GL(\Lambda^m\rho)\!=\!\left\{T^{\otimes m} \in GL(\Lambda^m V) : T\in GL(\rho)\right\}.$
		
	\end{proposition}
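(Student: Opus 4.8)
The plan is to reduce the claim about $GL(\Lambda^m\rho)$ to the analogous structure on the Lie-algebra level and then invoke Lemma~\ref{Lio} twice. First I would recall that, by definition, $GL(\Lambda^m\rho)$ is the immersed Lie subgroup of $GL(\Lambda^m V)$ generated by $\exp(\Lambda^m\rho(\mathfrak{g}))$, while $GL(\rho)$ is generated by $\exp(\rho(\mathfrak{g}))$. The key computational observation, already implicit in the proof of Proposition~\ref{Prop:grass_gmod}, is that for a single endomorphism $T\in\mathfrak{gl}(V)$ the operator $\Lambda^m T=\sum_{i=1}^m T_i$ is precisely the infinitesimal generator of the one-parameter group $t\mapsto (\exp(tT))^{\otimes m}$ restricted to $\Lambda^m V$; concretely, $\exp(\Lambda^m T)=(\exp T)^{\otimes m}$ on $\Lambda^m V$. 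This follows because the $T_i$ commute pairwise only when they come from different tensor slots — here they all come from the same $T$, but the factorisation $\exp(\sum_i T_i)=\prod_i\exp(T_i)$ still holds since each $T_i$ acts on a distinct tensor factor, and $\prod_i\exp(T_i)$ is exactly $(\exp T)^{\otimes m}$ acting on $T^m(V)$, which preserves $\Lambda^m V$. Hence $\exp(\Lambda^m\rho(v))=(\exp\rho(v))^{\otimes m}$ for every $v\in\mathfrak{g}$.

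Next I would set up the commutative diagram needed to apply Lemma~\ref{Lio}. Let $G$ be a connected Lie group with Lie algebra $\mathfrak{g}$ — one may take the connected, simply connected one so that the required Lie group homomorphism exists. Define $\Psi:G\to GL(\Lambda^m V)$ by $\Psi(g):=\Phi(g)^{\otimes m}$, where $\Phi:G\to GL(V)$ is the Lie group homomorphism integrating $\rho$ with $GL(\rho)=\Phi(G)$. Then $\Psi$ is a Lie group homomorphism (composition of $\Phi$ with the smooth homomorphism $A\mapsto A^{\otimes m}$ from $GL(V)$ to $GL(\Lambda^m V)$), and the diagram
\begin{equation*}
\xymatrix{\mathfrak{g}\ar[rr]^{\Lambda^m\rho\quad\quad}\ar[d]^{\exp_G}&&\mathfrak{gl}(\Lambda^m V)\ar[d]^{\exp}\\G\ar[rr]^{\Psi}&&GL(\Lambda^m V)}
\end{equation*}
commutes: running along the top-right gives $\exp(\Lambda^m\rho(v))=(\exp\rho(v))^{\otimes m}$ by the previous paragraph, while running along the left-bottom gives $\Psi(\exp_G v)=\Phi(\exp_G v)^{\otimes m}=(\exp\rho(v))^{\otimes m}$ by commutativity of the diagram~(\ref{diag}) for $\Phi$. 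By Lemma~\ref{Lio}, $\Psi(G)$ is the immersed Lie subgroup of $GL(\Lambda^m V)$ generated by $\exp(\Lambda^m\rho(\mathfrak{g}))$, i.e. $\Psi(G)=GL(\Lambda^m\rho)$. On the other hand $\Psi(G)=\{\Phi(g)^{\otimes m}:g\in G\}=\{T^{\otimes m}:T\in GL(\rho)\}$, which is exactly the claimed description.

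Finally I would address the degenerate cases $m\le 0$ and argue they are consistent: for $m\le 0$ one has $\Lambda^m V=\{0\}$ or $\Lambda^m V=\mathbb{R}$ (the $m=0$ case, $\Lambda^0 V=\mathbb{R}$ with $\Lambda^0\rho=0$), so both sides reduce to the trivial group, matching the convention $T^{\otimes 0}=\mathrm{id}_{\mathbb{R}}$. The main obstacle I anticipate is not any deep difficulty but rather bookkeeping: one must be careful that the map $A\mapsto A^{\otimes m}$ is genuinely a Lie group homomorphism on $GL(V)$ landing in $GL(\Lambda^m V)$ (it preserves the antisymmetric subspace since permutations of factors commute with the diagonal action), and that "the group generated by" on both sides really does match — this is precisely what Lemma~\ref{Lio} delivers, so the argument hinges on checking the commuting-square hypotheses cleanly rather than on a new idea.
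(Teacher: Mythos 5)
Your proof is correct and follows essentially the same route as the paper: both hinge on the identity $\exp(\Lambda^m\rho_v)=\exp(\rho_v)^{\otimes m}$, obtained from the pairwise commutation of the slot operators $(\rho_v)_i$, and then identify the group generated by these exponentials with $\left\{T^{\otimes m}:T\in GL(\rho)\right\}$. Your packaging of the last step through Lemma~\ref{Lio} applied to $\Psi=(\cdot)^{\otimes m}\circ\Phi$ is a slightly more formal version of the paper's direct observation that products of the $\exp(\rho_v)^{\otimes m}$ are exactly the $T^{\otimes m}$ with $T$ a product of the $\exp(\rho_v)$, i.e. $T\in GL(\rho)$.
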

	\begin{proof}
 Assume first $m>0$. By the discussion in the proof of Proposition \ref{Prop:grass_gmod}, it follows that $[T_i,T_j]=0$ for $i,j=1,\ldots,m$, and $T\in \mathfrak{gl}(\mathfrak{g})$. Thus,
		$$
		\exp(\Lambda^m \rho_v) \!=\! \exp \left(\sum_{i=1}^{m} (\rho_v)_i \right)=\exp(\rho_v)^{\otimes m}
		$$
		for all $v\in \mathfrak{g}.$ Hence, $GL(\Lambda^m \rho)$ is generated by the composition of  operators $T^{\otimes m}$, where $T$ is a composition of operators $\exp(\rho_v)$ with $v\in \mathfrak{g}$. As the maps $\exp(\rho_v)$, for every $v\in \mathfrak{g}$, generate $GL(\rho)$, then $T$ is any element of $GL(\rho)$, which finishes the proof for $m>0$. The case $m\leq 0$ is immediate. 
	\end{proof}

For convenience, let us introduce the following notation. If $T \in GL(V)$, we define $\Lambda^m T:=T^{\otimes m}$ for $m\geq 1$ and $\Lambda^m T$ is the identity on $\Lambda^mV$ for $m\leq 0$. Then, we conclude by Proposition \ref{Prop:invk} that a group action $\Phi:g\in G\mapsto \Phi_g\in GL(V)$ leads to a new one $\Lambda^m\Phi: g\in G\mapsto \Lambda^m\Phi_g\in GL(\Lambda^mV)$ for $m\in \mathbb{Z}$. 
 
	\noindent
 \begin{minipage}{0.6\textwidth}
 \setlength{\baselineskip}{14pt}
Let $\mathfrak{inn}(\mathfrak{g})$ be the Lie algebra of ${\rm Inn}(\mathfrak{g})$. Note that $\mathfrak{inn}(\mathfrak{g})$ and ${\rm Inn}(\mathfrak{g})$ can be naturally embedded in $\mathfrak{gl}(\mathfrak{g})$ and $GL(\mathfrak{g})$, respectively. Let us denote both embeddings by $\iota_1$ and $\iota_2$. Previous comments and the fact that ${\rm Ad}(G)={\rm Inn}(\mathfrak{g})$ allow us to extend the diagram of Proposition \ref{Inn} as shown aside. 
  \end{minipage}	
 \begin{minipage}{0.4\textwidth}
 \vskip-0.5cm
	\begin{center}
			\begin{equation*}
		\xymatrix@R=5.3mm@C=1.3cm{\mathfrak{g}\ar[r]^(0.5){\Lambda^m{\rm ad}}\ar[d]^{\exp_{{G}}}&\mathfrak{gl}(\Lambda^m\mathfrak{g})\ar[d]^{\exp}&\mathfrak{inn}(\mathfrak{g})\ar[l]_{\Lambda^m\iota_1}\ar[d]_{\exp_{{{\rm Inn}}(\mathfrak{g})}}\\{G}\ar[r]^{{\Lambda^m{\rm Ad}}}&GL(\Lambda^m\mathfrak{g})&{\rm Inn}(\mathfrak{g})\ar@{->}[l]_{{\Lambda^m \iota_2}}}
			\end{equation*}
		\end{center}
			\end{minipage}

In this thesis, ${\rm Inn}(\mathfrak{g})$-actions on $\Lambda^2 \mathfrak{g}$ and $\Lambda^3 \mathfrak{g}$ are of particular interest, since further studies of Lie bialgebra equivalence rely on the analysis of their orbits. Especially, it is necessary to know their dimension. The following Proposition presents a simple way to compute it.

	\begin{proposition}\label{proporb}
Let $G$ be a connected Lie group associated with a Lie algebra $\mathfrak{g}$ and let $\mathcal{O}^{(m)}_w$ denote the orbit of ${\rm Inn}(\mathfrak{g})$-action on $\Lambda^m\mathfrak{g}$ through $w\in \Lambda^m\mathfrak{g}$. Then,
$
\dim \mathcal{O}^{(m)}_w = \dim  {\rm Im}\,\Theta^m_w,
		$
where $\Theta^{(m)}_w:v\in \mathfrak{inn}(\mathfrak{g})\mapsto [v, w]  \in \Lambda^m\mathfrak{g}$.
	\end{proposition}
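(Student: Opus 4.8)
The plan is to identify the tangent space to the orbit $\mathcal{O}^{(m)}_w$ at the point $w$ with the image of the infinitesimal action map $\Theta^{(m)}_w$, and then invoke the standard fact that a homogeneous space (an orbit of a Lie group action) has dimension equal to that of its tangent space at any point. Concretely, ${\rm Inn}(\mathfrak{g})$ is the Lie group $GL(\Lambda^m{\rm ad})$ acting on $\Lambda^m\mathfrak{g}$ (in the case $m$, via $\Lambda^m{\rm Ad}$ of the connected Lie group $G$, by Proposition \ref{Inn} and Proposition \ref{Prop:invk}); call this action $\psi\colon {\rm Inn}(\mathfrak{g})\times \Lambda^m\mathfrak{g}\to \Lambda^m\mathfrak{g}$. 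The orbit $\mathcal{O}^{(m)}_w = \psi({\rm Inn}(\mathfrak{g}),w)$ is an immersed submanifold of $\Lambda^m\mathfrak{g}$, and its dimension equals $\dim {\rm Inn}(\mathfrak{g}) - \dim \mathrm{Stab}(w)$, which is also $\dim T_w\mathcal{O}^{(m)}_w$.

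\textbf{Key steps.} First I would write the orbit map $\psi_w\colon g\in {\rm Inn}(\mathfrak{g})\mapsto \psi(g,w)\in \Lambda^m\mathfrak{g}$ and compute its differential at the identity $e\in {\rm Inn}(\mathfrak{g})$. By definition of the infinitesimal generator of a group action, for $v\in \mathfrak{inn}(\mathfrak{g})$ one has
\begin{equation*}
({\rm d}\psi_w)_e(v) = \left.\frac{{\rm d}}{{\rm d}t}\right|_{t=0}\psi(\exp(tv),w).
\end{equation*}
Since the ${\rm Inn}(\mathfrak{g})$-action on $\Lambda^m\mathfrak{g}$ is exactly the one integrating the $\mathfrak{inn}(\mathfrak{g})$-module structure $v\mapsto [v,\cdot]_S$ on $\Lambda^m\mathfrak{g}$ (this is the content of Proposition \ref{Prop:grass_gmod} together with the commuting diagram following Proposition \ref{Prop:invk}, identifying $\exp(\Lambda^m{\rm ad}_v)$ with the action of $\exp_G(v)$), differentiating gives $({\rm d}\psi_w)_e(v) = [v,w]_S = \Theta^{(m)}_w(v)$. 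Hence ${\rm Im}\,\Theta^{(m)}_w = T_w\mathcal{O}^{(m)}_w$. Second, because the orbit is homogeneous under the ${\rm Inn}(\mathfrak{g})$-action, the rank of ${\rm d}\psi_w$ is the same at every point of ${\rm Inn}(\mathfrak{g})$ (it equals the rank at $e$ by left-translation), and $\dim \mathcal{O}^{(m)}_w = {\rm rank}\,({\rm d}\psi_w)_e = \dim {\rm Im}\,\Theta^{(m)}_w$. One should note that $\mathfrak{inn}(\mathfrak{g})$ is identified with a subalgebra of $\mathfrak{gl}(\mathfrak{g})$ (as in the diagram following Proposition \ref{Prop:invk}), so $\Theta^{(m)}_w$ makes sense as stated; equivalently, since $\mathfrak{inn}(\mathfrak{g}) = {\rm ad}(\mathfrak{g})$, one may pull back along ${\rm ad}\colon\mathfrak{g}\to\mathfrak{inn}(\mathfrak{g})$ and describe the image via representatives $[v,w]_S$ with $v\in\mathfrak{g}$.

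\textbf{Main obstacle.} The only delicate point is the justification that the orbit is an (immersed) submanifold whose dimension equals the rank of the orbit map, given that ${\rm Inn}(\mathfrak{g})$ need not be embedded in $GL(\mathfrak{g})$ — but this is precisely handled by the fact, already invoked in the excerpt, that $GL(\rho)$ is always a Lie group (even if only immersed), so the standard orbit–stabiliser theorem for smooth Lie group actions applies; alternatively one can argue directly that the rank of $\psi_w$ is constant on ${\rm Inn}(\mathfrak{g})$ and appeal to the constant-rank theorem. The remaining verification — that differentiating the action at the identity really returns the Schouten bracket $[\,\cdot\,,w]_S$ — is the computational heart, but it follows immediately from the commuting exponential diagram established just before the statement. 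I expect no genuine difficulty beyond bookkeeping of these identifications.
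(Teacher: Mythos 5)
Your proposal is correct and follows essentially the same route as the paper: both arguments reduce to the orbit–stabiliser theorem together with the computation $\frac{d}{dt}\big|_{t=0}\Lambda^m{\rm Ad}_{\exp(tv)}(w)=[v,w]$. The only cosmetic difference is that you read off $\dim\mathcal{O}^{(m)}_w$ directly as the rank of the orbit map (the image of $\Theta^{(m)}_w$), whereas the paper identifies the isotropy Lie algebra with $\ker\Theta^{(m)}_w$ and applies rank–nullity — two sides of the same computation.
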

	\begin{proof}
		The orbit $\mathcal{O}^{(m)}_w$ of $w\in\Lambda^m \mathfrak{g}$ relative to ${\rm Inn}(\mathfrak{g})$ is given by the set $\{\Lambda^m {\rm Ad}_g w: g\in G\}$. Define $g_t := \exp(tv)$, $g_1 :=g := \exp(v)$ for $v\in\mathfrak{g}$. 
		Then, $\dim \mathcal{O}^{(m)}_w = \dim {G}\cdot w=\dim (\mathfrak{g})-\dim (\mathfrak{g}_w)$, where the Lie algebra $\mathfrak{g}_w$ of $G_w$, the isotropy group of $w\in \Lambda^m\mathfrak{g}$, is given by those $v\in\mathfrak{g}$ such that 
		$\frac{d}{dt}\big|_{t=0}\Lambda^m{\rm Ad}_{g_t}(w)=[v,w] =0$. This amounts to $v \in \ker\,\Theta^{m}_w$.
		Hence, $\dim \mathcal{O}^{(m)}_w = \dim \mathfrak{g}-\dim \mathfrak{g}_w=\dim {\rm Im}\, \Theta_{w}^{m}$.
	\end{proof}

A similar result, which is a rather straightforward generalisation of Proposition \ref{proporb}, can be proved for ${\rm Aut}(\mathfrak{g})$-actions and their orbits.

\begin{proposition}\label{prop:derbiv}
	The dimension of the orbit $\mathscr{O}_w$ of the action of ${\rm Aut}(\mathfrak{g})$ on $\Lambda^m\mathfrak{g}$ through $w\in \Lambda^m\mathfrak{g}$ is 
	$
	\dim  {\rm Im}\,\Upsilon^m_w,
	$
	where $\Upsilon^m_w:d\in \mathfrak{der}(\mathfrak{g})\mapsto (\Lambda^md)(w) \in \Lambda^m\mathfrak{g}$.
\end{proposition}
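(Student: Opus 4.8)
The plan is to transcribe the proof of Proposition~\ref{proporb}, with the connected Lie group $G$ and the $\mathfrak{g}$-module $(\mathfrak{g},{\rm ad})$ replaced by (the identity component of) ${\rm Aut}(\mathfrak{g})$ and the $\mathfrak{aut}(\mathfrak{g})$-module $(\mathfrak{g},\widehat{\rm ad})$ of Example~\ref{ex:Lie_alg_aut}. The case $m\le 0$ is immediate: then $\Lambda^m\mathfrak{g}$ is either $\{0\}$ or $\mathbb{R}$ carrying a trivial ${\rm Aut}(\mathfrak{g})$-action, so $\mathscr{O}_w$ is a point, $\Upsilon^m_w\equiv 0$, and both sides vanish; hence assume $m\ge 1$. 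The first genuine step is to reduce to the connected case. The group ${\rm Aut}(\mathfrak{g})$ acts on $\Lambda^m\mathfrak{g}$ through $\phi\mapsto\Lambda^m\phi=\phi^{\otimes m}$ (cf. Proposition~\ref{Prop:invk} and the remark following it), and $\mathscr{O}_w$ is a disjoint union of ${\rm Aut}_c(\mathfrak{g})$-orbits, each obtained from the ${\rm Aut}_c(\mathfrak{g})$-orbit through $w$ by applying a fixed automorphism and therefore diffeomorphic to it. Thus $\dim\mathscr{O}_w$ equals the dimension of the orbit of the connected group ${\rm Aut}_c(\mathfrak{g})$ through $w$, and by Proposition~\ref{Out} this group equals $GL(\widehat{\rm ad})$, with Lie algebra $\mathfrak{aut}(\mathfrak{g})\simeq\mathfrak{der}(\mathfrak{g})$.

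Next I would invoke the orbit--stabiliser theorem for the connected group ${\rm Aut}_c(\mathfrak{g})$: its orbit through $w$ is the homogeneous space ${\rm Aut}_c(\mathfrak{g})/H_w$, with $H_w$ the isotropy subgroup of $w$, so its dimension is $\dim\mathfrak{der}(\mathfrak{g})-\dim\mathfrak{h}_w$, where $\mathfrak{h}_w$ is the Lie algebra of $H_w$. To identify $\mathfrak{h}_w$, fix $d\in\mathfrak{der}(\mathfrak{g})$, set $\phi_t:=\exp_{{\rm Aut}_c(\mathfrak{g})}(td)$, and differentiate $t\mapsto\Lambda^m\phi_t(w)$ at $t=0$. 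Because the diagram of Proposition~\ref{Out} commutes --- equivalently, because $\exp(\Lambda^m d)=(\exp d)^{\otimes m}$ as computed in the proof of Proposition~\ref{Prop:invk} --- the curve $t\mapsto\Lambda^m\phi_t$ in $GL(\Lambda^m\mathfrak{g})$ has velocity at $t=0$ equal to the derivation-type extension $\Lambda^m d=\sum_{i=1}^m d_i$, whence $\tfrac{d}{dt}\big|_{t=0}\Lambda^m\phi_t(w)=(\Lambda^m d)(w)=\Upsilon^m_w(d)$. Consequently $\phi_t$ fixes $w$ to first order precisely when $d\in\ker\Upsilon^m_w$, so $\mathfrak{h}_w=\ker\Upsilon^m_w$.

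Finally, applying rank--nullity to the linear map $\Upsilon^m_w:\mathfrak{der}(\mathfrak{g})\to\Lambda^m\mathfrak{g}$ gives $\dim\mathscr{O}_w=\dim\mathfrak{der}(\mathfrak{g})-\dim\ker\Upsilon^m_w=\dim{\rm Im}\,\Upsilon^m_w$, which is the assertion. The only point that goes beyond a mechanical copy of Proposition~\ref{proporb} is the handling of the possible disconnectedness of ${\rm Aut}(\mathfrak{g})$, which I expect to be the main (and essentially only) obstacle; once one knows that the orbit dimension is detected by the identity component ${\rm Aut}_c(\mathfrak{g})$, whose Lie algebra is $\mathfrak{der}(\mathfrak{g})$, the identification of the infinitesimal action with $\Upsilon^m_w$ is a direct transcription of the $\mathfrak{inn}(\mathfrak{g})$ computation, using the exponential-compatibility $\exp(\Lambda^m d)=(\exp d)^{\otimes m}$.
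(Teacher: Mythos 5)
Your argument is correct and follows essentially the same route as the paper: compute the tangent space to the orbit at $w$ by differentiating $t\mapsto(\Lambda^m\exp(td))(w)$ to get $(\Lambda^m d)(w)=\Upsilon^m_w(d)$, identify the isotropy Lie algebra with $\ker\Upsilon^m_w$, and conclude by rank--nullity. Your explicit reduction to the identity component ${\rm Aut}_c(\mathfrak{g})$ is a welcome clarification that the paper leaves implicit, but it does not change the substance of the proof.
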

\begin{proof}
	The orbit of $w\in\Lambda^m \mathfrak{g}$ relative to ${\rm Aut}(\mathfrak{g})$ is given by the points $ (\Lambda^mT)( w)$ for every $T\in {\rm Aut}(\mathfrak{g})$. Define $\exp(td ) =:T_t$, with $t\in \mathbb{R}$, for $d\in\mathfrak{der}(\mathfrak{g})$. Then, the tangent space at $w$ of $\mathscr{O}_w$  is spanned by the tangent vectors
	\begin{equation}\label{Eq:Red}
	\frac{d}{dt}\bigg|_{t=0}[\Lambda^m\exp(td)](w)=(\Lambda^md)(w)\in T_w\Lambda^m\mathfrak{g}\simeq \Lambda^m\mathfrak{g}.
	\end{equation}
	Then, $\dim \mathscr{O}_w=\dim \mathfrak{der}(\mathfrak{g})-\dim \mathfrak{g}_w$, where $G_w$ is the isotropy group of $w\in \Lambda^m\mathfrak{g}$ relative to the action of ${\rm Aut}(\mathfrak{g})$ and $\mathfrak{g}_w$ is the Lie algebra of $G_w$. Moreover, $\mathfrak{g}_w$ is given by those $d\in\mathfrak{der}(\mathfrak{g})$ such that  $(\Lambda^md)(w)=0$.
	This amounts to $d\in \ker\,\Upsilon^{m}_w$.
	Hence, $\dim \mathscr{O}_w = \dim \mathfrak{der}(\mathfrak{g})-\dim \mathfrak{g}_w=\dim {\rm Im}\, \Upsilon_{w}^{m}$.
\end{proof}
\begin{corollary}\label{Re:DerAlg}
The fundamental vector fields of a Lie group action $\varphi:G\times N\rightarrow N$ are defined as $X_v(x) :=\frac{d}{dt}\big|_{t=0}\varphi(\exp(-tv),x)$ for every $v\in T_eG$ and $x\in N$. As a consequence of Proposition \ref{prop:derbiv}, the fundamental vector fields of the natural Lie group action of ${\rm Aut}(\mathfrak{g})$ on $\Lambda^2 \mathfrak{g}$ are spanned by $X^2_v(w) = \sum_{i=1}^s[\Upsilon^2_w]_i \partial_{x_i}$ for arbitrary $w \in \Lambda^2\mathfrak{g}$ and $v \in \mathfrak{der}(\mathfrak{g})$, given any linear coordinate system $\{x_1, \ldots, x_s\}$ on $\Lambda^2 \mathfrak{g}$. Moreover, the coordinates of $X^2_v$ are the coordinates of the extensions to $\Lambda^2\mathfrak{g}$ of the derivations of $\mathfrak{g}$, i.e. linear maps $d \in \mathfrak{gl}(\mathfrak{g})$ satisfying $d([e_1,e_2])=[d(e_1),e_2]+[e_1,d(e_2)]$ for all $e_1,e_2 \in \mathfrak{g}$.
\end{corollary}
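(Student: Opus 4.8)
The plan is to unwind the definition of the fundamental vector field for the concrete action at hand and reduce everything to the computation already carried out in Proposition~\ref{prop:derbiv}. First I would fix the action: by the notation introduced after Proposition~\ref{Prop:invk}, the natural action of ${\rm Aut}(\mathfrak{g})$ on $\Lambda^2\mathfrak{g}$ is $\varphi:({T},w)\in {\rm Aut}(\mathfrak{g})\times \Lambda^2\mathfrak{g}\mapsto (\Lambda^2{T})(w)\in \Lambda^2\mathfrak{g}$, and I would use the identification of the Lie algebra ${\rm T}_{{\rm id}_\mathfrak{g}}{\rm Aut}(\mathfrak{g})$ with $\mathfrak{der}(\mathfrak{g})$ that is already in force in Examples~\ref{ex:Lie_alg_der}--\ref{ex:Lie_alg_aut} and in the proof of Proposition~\ref{prop:derbiv}. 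For $v\in \mathfrak{der}(\mathfrak{g})$ the one-parameter subgroup $t\mapsto \exp(tv)\in {\rm Aut}(\mathfrak{g})$ lifts, at each $t$, to $\Lambda^2\exp(tv)=\exp(tv)^{\otimes 2}=\exp(t\,\Lambda^2 v)$ on $\Lambda^2\mathfrak{g}$; this is exactly the identity $\exp(\Lambda^m\rho_v)=\exp(\rho_v)^{\otimes m}$ from the proof of Proposition~\ref{Prop:invk}, specialised to $m=2$ and $\Lambda^2 v = v_1 + v_2$ the canonical extension of $v$ as in Proposition~\ref{Prop:grass_gmod}.

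Next I would differentiate along the curve appearing in the definition of the fundamental vector field. Since $\Lambda^2\mathfrak{g}$ is a vector space, there is the canonical identification ${\rm T}_w\Lambda^2\mathfrak{g}\simeq \Lambda^2\mathfrak{g}$, under which a linear coordinate system $\{x_1,\dots,x_s\}$ sends a vector with components $(c_1,\dots,c_s)$ to $\sum_{i=1}^s c_i\partial_{x_i}$. Plugging $t\mapsto \varphi(\exp(-tv),w)=[\Lambda^2\exp(-tv)](w)$ into the definition and invoking formula (\ref{Eq:Red}) yields
\begin{equation*}
X^2_v(w)=\frac{d}{dt}\bigg|_{t=0}[\Lambda^2\exp(-tv)](w)=-(\Lambda^2 v)(w)=-\Upsilon^2_w(v)\in {\rm T}_w\Lambda^2\mathfrak{g}\simeq \Lambda^2\mathfrak{g}.
\end{equation*}
Writing $[\Upsilon^2_w]_i$ for the $i$-th coordinate of $\Upsilon^2_w$ in the chosen basis, this gives $X^2_v(w)=\sum_{i=1}^s[\Upsilon^2_w]_i\partial_{x_i}$; the overall sign is immaterial for the spanning claim (and is absorbed by $v\mapsto -v$). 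As $v$ ranges over $\mathfrak{der}(\mathfrak{g})$, the vectors $\Upsilon^2_w(v)$ span ${\rm Im}\,\Upsilon^2_w$, which by Proposition~\ref{prop:derbiv} is precisely ${\rm T}_w\mathscr{O}_w$, so the fundamental vector fields span the tangent space to the orbit as asserted.

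For the final sentence I would simply recall that, by the construction in Proposition~\ref{Prop:grass_gmod}, $\Upsilon^2_w(d)=(\Lambda^2 d)(w)$ with $\Lambda^2 d = d_1 + d_2$ the canonical extension to $\Lambda^2\mathfrak{g}$ of the linear map $d$; since $d\in\mathfrak{der}(\mathfrak{g})$ exactly when $d([e_1,e_2])=[d(e_1),e_2]+[e_1,d(e_2)]$ for all $e_1,e_2\in\mathfrak{g}$, the components $[\Upsilon^2_w]_i$ are literally the coordinates of these extended derivations evaluated at $w$, which is the ``moreover'' part.

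I do not expect a genuine obstacle: the statement is a direct specialisation of Proposition~\ref{prop:derbiv}. The only points needing a little care are bookkeeping ones — the identification ${\rm T}_{{\rm id}_\mathfrak{g}}{\rm Aut}(\mathfrak{g})\simeq\mathfrak{der}(\mathfrak{g})$ so that ``$v\in\mathfrak{der}(\mathfrak{g})$ generating a one-parameter subgroup'' makes sense, the compatibility of the linear chart on $\Lambda^2\mathfrak{g}$ with the canonical identification ${\rm T}_w\Lambda^2\mathfrak{g}\simeq\Lambda^2\mathfrak{g}$, and the harmless sign produced by the $\exp(-tv)$ in the definition of the fundamental vector field.
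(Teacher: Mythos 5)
Your proposal is correct and follows essentially the same route the paper intends: the corollary is stated as a direct consequence of Proposition \ref{prop:derbiv}, whose proof already contains the key computation $\frac{d}{dt}\big|_{t=0}[\Lambda^2\exp(td)](w)=(\Lambda^2 d)(w)$ in (\ref{Eq:Red}), and your argument just specialises this to $m=2$, adds the identification $T_{{\rm id}_\mathfrak{g}}{\rm Aut}(\mathfrak{g})\simeq\mathfrak{der}(\mathfrak{g})$, and correctly notes that the sign coming from $\exp(-tv)$ is irrelevant to the spanning claim. No gaps.
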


Finally, let us shortly mention about a special kind of maps associated with $\mathfrak{g}$-modules. 
	\begin{definition}
	Let us consider a $\mathfrak{g}$-module $(V, \rho)$. A {\it trace form} associated with $\mathfrak{g}$ is the bilinear symmetric mapping ${\rm tr}_\rho$ on $V$ given by ${\rm tr}_\rho(v_1,v_2):={\rm tr}(\rho(v_1)\circ \rho(v_2))$ for any $v_1, v_2 \in V$.
	\end{definition}
	
	\begin{lemma} \label{Lem:trace_form}
 Given a $\mathfrak{g}$-module $(V,\rho)$, its associated trace form is ${\rm ad}$-invariant, i.e. 
	$$
	{\rm tr}_\rho([v_1,v_2],v_3)+	{\rm tr}_\rho(v_2,[v_1,v_3])=0,\qquad \forall v,v',v''\in \mathfrak{g}.
	$$
	Moreover, $\ker {\rm tr}_\rho$ is an ideal of $\mathfrak{g}$. 
	\end{lemma}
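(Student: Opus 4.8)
The plan is to prove the two assertions of Lemma \ref{Lem:trace_form} directly from the definition of the trace form and the basic algebraic property $\operatorname{tr}(AB) = \operatorname{tr}(BA)$ of the trace on $\mathfrak{gl}(V)$, together with the fact that $\rho$ is a Lie algebra homomorphism.

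First I would establish ${\rm ad}$-invariance. Fix $v_1, v_2, v_3 \in \mathfrak{g}$ and expand using $\rho([v_1,v_2]) = \rho(v_1)\rho(v_2) - \rho(v_2)\rho(v_1)$:
\begin{equation*}
{\rm tr}_\rho([v_1,v_2],v_3) = {\rm tr}\big(\rho(v_1)\rho(v_2)\rho(v_3) - \rho(v_2)\rho(v_1)\rho(v_3)\big),
\end{equation*}
\begin{equation*}
{\rm tr}_\rho(v_2,[v_1,v_3]) = {\rm tr}\big(\rho(v_2)\rho(v_1)\rho(v_3) - \rho(v_2)\rho(v_3)\rho(v_1)\big).
\end{equation*}
Adding these, the two middle terms cancel, leaving ${\rm tr}(\rho(v_1)\rho(v_2)\rho(v_3)) - {\rm tr}(\rho(v_2)\rho(v_3)\rho(v_1))$, which vanishes by cyclicity of the trace. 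This gives the identity; note the statement as written has a harmless mismatch between the displayed variables $v_1,v_2,v_3$ and the quantifier $\forall v,v',v''$, which I would simply restate cleanly as $\forall v_1,v_2,v_3 \in \mathfrak{g}$.

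Next I would show $\ker {\rm tr}_\rho$ is an ideal. Recall $\ker {\rm tr}_\rho = \{ v \in \mathfrak{g} : {\rm tr}_\rho(v, w) = 0 \ \forall w \in \mathfrak{g}\}$; it is automatically a linear subspace since ${\rm tr}_\rho$ is bilinear. Take $v \in \ker {\rm tr}_\rho$ and arbitrary $u \in \mathfrak{g}$; I must check $[u,v] \in \ker {\rm tr}_\rho$, i.e. ${\rm tr}_\rho([u,v],w) = 0$ for all $w$. By the ${\rm ad}$-invariance just proved, ${\rm tr}_\rho([u,v],w) = -{\rm tr}_\rho(v,[u,w])$, and since $[u,w] \in \mathfrak{g}$ and $v \in \ker {\rm tr}_\rho$, the right-hand side is zero. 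Hence $[u,v] \in \ker {\rm tr}_\rho$, completing the proof.

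There is essentially no obstacle here: the whole argument is a two-line consequence of trace cyclicity plus the homomorphism property, and the ideal claim is the standard fact that the radical of an invariant bilinear form is an ideal. The only point requiring a word of care is keeping track of signs and the cyclic reordering in the three-term trace, but nothing subtler than that arises.
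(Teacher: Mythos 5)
Your proof is correct and follows essentially the same route as the paper: both parts reduce to the homomorphism property of $\rho$ together with cyclicity of the trace. The only cosmetic difference is that for the ideal claim you invoke the already-proved ${\rm ad}$-invariance, whereas the paper redoes the cyclic-trace computation from scratch; the content is identical.
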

	\begin{proof} 
Since $\rho: \mathfrak{g} \to \mathfrak{gl}(V)$ is an algebra homomoprhism and by the cyclicity of the trace, we get
\begin{equation*}
\begin{split}
&{\rm tr}_\rho([v_1,v_2],v_3)+	{\rm tr}_\rho(v_2,[v_1,v_3]) = {\rm tr}(\rho([v_1, v_2])\circ \rho(v_3)) + {\rm tr}(\rho(v_2)\circ \rho([v_1, v_3])) \\
&= {\rm tr}((\rho(v_1) \circ \rho(v_2) - \rho(v_2) \circ \rho(v_1))\circ \rho(v_3)) + {\rm tr}(\rho(v_2)\circ (\rho(v_1) \circ \rho(v_3) - \rho(v_3) \circ \rho(v_1))) \\
&= {\rm tr}(\rho(v_1) \circ \rho(v_2) \circ \rho(v_3)) - {\rm tr}(\rho(v_2) \circ \rho(v_1) \circ \rho(v_3)) + {\rm tr}(\rho(v_2)\circ \rho(v_1) \circ \rho(v_3)) - {\rm tr}(\rho(v_2)\circ \rho(v_3) \circ \rho(v_1)) \\
&= 0
\end{split}
\end{equation*}
 
Recall that the kernel of a bilinear form $b$ on V is a subspace $\ker b \subset V$ such that $b(v,w) = 0$ for any $v \in V$ and $w \in \ker b$ \cite{L02}. Proving the second part of our lemma amounts to showing that if $v\in \ker {\rm tr}_\rho$ and $w\in \mathfrak{g}$, then $[w,v]\in \ker {\rm tr}_{\rho}$. From the definition of the trace form, for any $g \in \mathfrak{g}$ we obtain
 \begin{equation*}
\begin{split}
 &{\rm tr}_{\rho}([w, v], g) = {\rm tr}(\rho([w, v]), \rho(g)) = {\rm tr}(\rho(w) \circ \rho(v) \circ \rho(g) - \rho(v) \circ \rho(w) \circ \rho(g)) \\
 &= {\rm tr}(\rho(w) \circ \rho(v) \circ \rho(g)) - {\rm tr}(\rho(v) \circ \rho(w) \circ \rho(g)) = {\rm tr}(\rho(v) \circ \rho(g) \circ \rho(w)) - {\rm tr}(\rho(v) \circ \rho(w) \circ \rho(g)) \\
  &= {\rm tr}(\rho(v) \circ (\rho(g) \circ \rho(w) - \rho(w) \circ \rho(w))) = {\rm tr}(\rho(v) \circ \rho([g, w])) = {\rm tr}_{\rho}(v, [g, w]) = 0
 \end{split}
\end{equation*}
Thus, $[w,v] \in \ker {\rm tr}_{\rho}$.
\end{proof}
 
\section{The Chevalley-Eilenberg cohomology of Lie algebras}

Following the standard exposition in \cite{Jacobson}, we present the construction of Lie algebra cohomology. For a complex Lie algebra $\mathfrak{g}$, let $(M, \rho)$ be a $\mathfrak{g}$-module. For any $n \in \mathbb{N}$, denote the space of $n$-cochains as $C^n(\mathfrak{g},M) := {\rm Hom}_{\mathbb{C}}(\Lambda^n \mathfrak{g}, M)$, consisting of skew-symmetric n-multilinear functions on $\mathfrak{g}^{\times n}$ with values in $M$. We understand $C^0(\mathfrak{g}, M)$ as the set of constant functions from $\mathfrak{g}$ to a fixed element of $M$. For $n \geq 1$, the coboundary map ${\rm d_n}: C^n(\mathfrak{g},M) \to C^{n+1}(\mathfrak{g},M)$ is defined as
\begin{equation}\label{cob_map}
\begin{split}
{\rm d_n}f(x_1, \ldots, x_n, x_{n+1}) := &\sum_{p=1}^{n+1} (-1)^{p} \rho_{x_p} f(x_1, \ldots, \hat{x_i}, \ldots, x_{n+1}) \\
&+ \sum_{1 \leq i < j \leq n+1} (-1)^{i+j} f([x_i, x_j], x_1, \ldots, \hat{x_i}, \hat{x_j}, \ldots, x_{n+1})
\end{split}
\end{equation}
where the hatted elements are omitted and by the symbol $\rho_x$ we denote the action of $x \in \mathfrak{g}$ on $M$. In case of $n=0$, we set ${\rm d_0}f(x_1) := \rho_{x_1} x_f$ for the constant map $f: g \in \mathfrak{g} \mapsto x_f \in M$. By a tedious computation, one shows that $d_{n+1} \circ d_n = 0$. This property implies that $B_n \subseteq Z_{n+1}$, where $Z_n := \ker({\rm d_{n}})$ is the set of {\it n-cocycles} and $B_n := {\rm Im }({\rm d_{n-1}})$ is the set of {\it n-coboundaries}. By default, we put $B_0 = \varnothing$. In consequence, we can define the cohomology space in the usual way, as the quotient space $H_n(\mathfrak{g}, M) := Z_{n+1} / B_n$.

It is remarkable that for a semi-simple complex Lie algebra $\mathfrak{g}$, its cohomology spaces $H_1(\mathfrak{g}, M)$ and $H_2(\mathfrak{g}, M)$ are trivial. These two facts are commonly known as, respectively, the first and the second {\it Whitehead lemma}.

Let us briefly focus on the first lemma. By definition, the space of 1-cocycles consists 
of linear functions $\Psi: \mathfrak{g} \to M$ such that ${\rm d}_1 \Psi (x,y) = 0$, or equivalently $\rho_x \cdot \Psi(y) - \rho_y \cdot \Psi(x) - \Psi([x,y]) = 0$. On the other hand, the space of 1-coboundaries is formed by the functions $f: \mathfrak{g} \ni v \mapsto \rho_v m \in M$ for a fixed $m \in M$.

Thus, vanishing of the cohomology space $H_1(\mathfrak{g}, M)$ is equivalent to the following statement: any linear function $\Psi: \mathfrak{g} \to M$ such that $\rho_x \cdot \Psi(y) - \rho_y \cdot \Psi(x) - \Psi([x,y]) = 0$ can be written as $\Psi(g) = \rho_g m$ for a certain $m \in M$. This result follows immediately from the next theorem.

\begin{theorem}\label{Th:whitehead} 
Let $\mathfrak{g}$ be a semi-simple Lie algebra over a field $k$ of characteristic zero and let $(V, \rho)$ be a $\mathfrak{g}$-module. If $\phi:\mathfrak{g}\rightarrow V$ is a linear map such that 
\begin{equation}\label{wh_cond}
\phi([v_1,v_2])=\rho_{v_1} \phi (v_2)- \rho_{v_2} \phi(v_1),
\end{equation}
then there exists $x\in V$ such that $\phi(v)= \rho_v x$ for every $v\in \mathfrak{g}$.
\end{theorem}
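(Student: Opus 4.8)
The plan is to prove the first Whitehead lemma using the Casimir element of $\mathfrak{g}$ relative to a faithful representation, which is the classical approach and the one consistent with the trace-form machinery developed just above in the excerpt. First I would reduce to the case of a \emph{faithful} $\mathfrak{g}$-module: since $\mathfrak{g}$ is semisimple it decomposes as a direct sum of simple ideals, and the kernel of $\rho$ is a sum of some of these ideals; the condition \eqref{wh_cond} and the conclusion both split along this decomposition, so we may assume $\rho$ is injective. Then the trace form ${\rm tr}_\rho$ of Lemma \ref{Lem:trace_form} is ${\rm ad}$-invariant and, because $\rho$ is faithful and $\mathfrak{g}$ is semisimple, nondegenerate (here one invokes Cartan's criterion together with the fact that $\ker {\rm tr}_\rho$ is an ideal, so it is a sum of simple ideals on which $\rho$ would have to be trivial, contradicting faithfulness unless it is zero). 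Fixing a basis $\{e_i\}$ of $\mathfrak{g}$ and its dual basis $\{e^i\}$ with respect to ${\rm tr}_\rho$, I would form the Casimir operator $C := \sum_i \rho_{e_i}\rho_{e^i} \in {\rm End}(V)$, which commutes with $\rho_v$ for all $v\in\mathfrak{g}$ by the invariance of the form.

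Next I would handle the two cases according to whether $C$ is invertible on $V$. The key computation is to rewrite the cocycle condition \eqref{wh_cond} after applying $\rho$-operators and summing against the dual bases. Concretely, define $x := \sum_i \rho_{e^i}\,\phi(e_i)\in V$ (or $C^{-1}$ of a suitable such element), and compute $\rho_v x - \phi(v)$ for arbitrary $v$. Using \eqref{wh_cond} in the form $\phi([v,e_i]) = \rho_v\phi(e_i) - \rho_{e_i}\phi(v)$, together with the structure-constant identities relating $[v,e_i]$, $[v,e^i]$ expressed in the two bases (the ${\rm ad}$-invariance of ${\rm tr}_\rho$ says the matrix of ${\rm ad}_v$ in the basis $\{e_i\}$ is minus the transpose of its matrix in $\{e^i\}$), the cross terms telescope and one is left with $\rho_v x = C\phi(v) + \phi(v)\cdot(\text{something})$, more precisely an identity of the shape $C\phi(v) = \rho_v x$ after the dust settles. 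When $C$ is invertible this immediately gives $\phi(v) = \rho_v(C^{-1}x)$, as desired. The general case is reduced to this by decomposing $V$ into the generalized eigenspace of $C$ for eigenvalue $0$ and its complement (both $\mathfrak{g}$-submodules since $C$ is $\mathfrak{g}$-equivariant); on the complement $C$ is invertible and we are done, while on the kernel part one shows, via the same Casimir trace argument applied to $\Lambda^\bullet$ or directly, that a faithful semisimple action cannot have $C$ identically zero unless $\mathfrak{g}=0$ — so after peeling off trivial summands the problem always lands in the invertible case.

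The main obstacle, and the step I would spend the most care on, is the bookkeeping in the telescoping computation: getting the signs and the placement of $\rho_{e_i}$ versus $\rho_{e^i}$ exactly right so that the unwanted double sums $\sum_{i}\rho_{e^i}\rho_{e_i}\phi(v)$ and $\sum_i \rho_{[v,e_i]}\,(\cdots)$ cancel against each other. This hinges on the precise duality relation between $\{e_i\}$ and $\{e^i\}$ under ${\rm tr}_\rho$ and on the ${\rm ad}$-invariance identity from Lemma \ref{Lem:trace_form}, which controls how the adjoint action permutes the two bases. A secondary subtlety is justifying nondegeneracy of ${\rm tr}_\rho$ in the faithful semisimple case and the $\mathfrak{g}$-module decomposition $V = \ker C^{N} \oplus {\rm Im}\,C^{N}$; both are standard but should be stated cleanly since they are what let us assume $C$ is invertible without loss of generality. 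Once those pieces are in place the conclusion $\phi(v)=\rho_v x$ is essentially forced, and the argument also transparently yields the stated consequence that $H_1(\mathfrak{g},M)=0$ and, by an analogous Casimir argument one could remark on, $H_2(\mathfrak{g},M)=0$.
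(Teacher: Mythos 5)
Your proposal follows essentially the same route as the paper's proof: reduce modulo the kernel of $\rho$ so that the trace form is nondegenerate, build the Casimir operator from dual bases, verify the key identity $\rho_v\bigl(\sum_i\rho_{e^i}\phi(e_i)\bigr)=C\,\phi(v)$, and split $V$ by the Fitting decomposition of $C$, inverting $C$ on one summand and handling the other by a trace argument and induction. The only spot to tighten is the nilpotent summand: what the trace computation ${\rm tr}(C)=\dim\mathfrak{g}\neq 0$ (characteristic zero) rules out is $C$ being \emph{nilpotent} on a module with faithful nonzero action, not merely $C=0$, and since the action on $V_0$ need no longer be faithful one must re-run the whole reduction there — exactly the finite-dimensional induction the paper carries out.
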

\begin{proof}
Before we move to the main part of the proof, let us make an important observation. For two ideals $\mathfrak{h}_1, \mathfrak{h}_2$ of $\mathfrak{g}$, assume there exists a nondegenerate bilinear form $(\cdot, \cdot): \mathfrak{h}_1 \times \mathfrak{h}_2 \to k$ such that
\begin{equation}\label{bil_form}
([h_1, g], h_2) + (h_1, [h_2, g]) = 0, \qquad \forall h_1 \in \mathfrak{h}_1, \quad \forall h_2 \in \mathfrak{h}_2, \quad \forall g \in \mathfrak{g}.
\end{equation}
Let $\{u_1, \ldots, u_m\}$ be a basis of $\mathfrak{h}_1$ and denote by $\{u^1, \ldots, u^m\}$ the dual basis for $\mathfrak{h}_1$. Moreover, introduce coefficients $\alpha_{ij}, \beta_{ij}$ such that
$$
[u_i, g] = \sum_{i,j} \alpha_{ij} u_j, \qquad [u^i, g] = \sum_{i,j} \beta_{ij} u^j
$$
Then, (\ref{bil_form}) implies that $\alpha_{ij} = -\beta_{ji}$.

Let $\rho$ and $\rho^{*}$ denote a representation of $\mathfrak{g}$ and $\mathfrak{g}^*$ on $V$. Define an element $\Gamma \in {\rm End}(M)$ as
\begin{equation}\label{casimir}
\Gamma := \sum_{k=1}^m \rho_{u_k} \rho_{u^k}
\end{equation}
where for convenience, we omit the composition symbol for elements of ${\rm End}(M)$. Notice that for any $g \in \mathfrak{g}$ we have
\begin{equation*}
\begin{split}
[\Gamma, \rho_{g}] &= \sum_{k=1}^m \rho_{u_k} \rho_{u^k} \rho_{g} - \sum_{k=1}^m \rho_{g} \rho_{u_k} \rho_{u^k} 
= \sum_{k=1}^m \rho_{u_k} [\rho_{u^k}, \rho_{g}] - \sum_{k=1}^m \rho_{u_k} \rho_{g} \rho_{u^k} + \sum_{k=1}^m [\rho_{u_k}, \rho_{g}] \rho_{u^k} - \sum_{k=1}^m \rho_{u_k} \rho_{g} \rho_{u^k} \\
&= \sum_{k=1}^m \rho_{u_k} [\rho_{u^k}, \rho_{g}] + \sum_{k=1}^m [\rho_{u_k}, \rho_{g}] \rho_{u^k} 
= \sum_{k=1}^m \sum_{l=1}^m \beta_{kl} \rho_{u_k} \rho_{u^l} + \sum_{k=1}^m \sum_{l=1}^m \alpha_{kl}\rho_{u_l} \rho_{u^k} \\
&= \sum_{k=1}^m \sum_{l=1}^m \beta_{kl} \rho_{u_k} \rho_{u^l} - \sum_{k=1}^m \sum_{l=1}^m \beta_{lk}\rho_{u_l} \rho_{u^k} 
= \sum_{k=1}^m \sum_{l=1}^m \beta_{kl} \rho_{u_k} \rho_{u^l} - \sum_{k=1}^m \sum_{l=1}^m \beta_{kl}\rho_{u_k} \rho_{u^l} 
= 0.
\end{split}
\end{equation*}
This shows that the element $\Gamma$, called the {\it Casimir element}, commutes with a representation of any element of $\mathfrak{g}$.

Now, let $\mathfrak{g}$ denote a Lie algebra satisfying the assumptions of this Theorem. Denote by $\mathfrak{R}$ the kernel of a representation $\rho$ of $\mathfrak{g}$. Since $\mathfrak{g}$ is semisimple, it admits a Killing form $\kappa$. Thus, $\mathfrak{g} \simeq \mathfrak{R} \oplus \mathfrak{R}^{\perp}$, where $\mathfrak{R}^{\perp} := \{v \in \mathfrak{g}: \kappa(v,w) = 0 \, \forall w \in \mathfrak{R}\}$. The restriction of $\rho$ to $\mathfrak{R}^{\perp}$ is injective and, in consequence, $\mathfrak{R}^{\perp}$ is semisimple. Therefore, the trace form on $\mathcal{L}$ is non-degenerate \cite{Bo05} and by Lemma \ref{Lem:trace_form}, it satisfies (\ref{bil_form}). Notice additionally that both $\mathfrak{R}$ and $\mathfrak{R}^{\perp}$ are ideals of $\mathfrak{g}$. Then by the remark at the beginning of the proof, we can construct the Casimir element $\Gamma$ for $\mathfrak{R}^{\perp}$ as given by the formula (\ref{casimir}). Moreover, ${\rm tr}(\Gamma) = \sum_{k=1}^m (u_k, u^k) = m = {\rm dim}(\mathcal{L})$.

Recall the Fitting's lemma \cite{Jacobson}: any linear endomorphism $A$ of a finite-dimensional vector space $V$ yields a decomposition $V \simeq V_0 \oplus V_1$ into $A$-invariant subspaces, such that $A$ restricted to $V_0$ is nilpotent, whereas its restriction to $V_1$ gives an isomorphism. Using that fact, we obtain the decomposition $V \simeq V_0 \oplus V_1$ relative to $\Gamma$. Since $\Gamma \rho_{g} = \rho_{g} \Gamma$ for any $g \in \mathfrak{g}$, the subspaces $V_0, V_1$ are $\mathfrak{g}$-submodules. Indeed, if $v \in V_0$, then $\Gamma \rho_{g} v = \rho_{g} \Gamma v = 0$, which means that $\rho_{g} v \in V_0$. If $v \in V_1$, then $\Gamma \rho_{g} v = \rho_{g} \Gamma v = \rho_{g} v$, whence $\rho_{g} v \in V_1$.

Similarly, the map $\phi$ admits a decomposition $\phi(v) = \phi_0(v) + \phi_1(v)$ for any $v \in \mathfrak{g}$, where $\phi_0(v) \in V_0$ and $\phi_1(v) \in V_1$. Since $\phi$ satisfies (\ref{wh_cond}), we get
$$
\phi_0([v,w]) + \phi_1([v,w]) = (v \phi_0(w) - w \phi_0(v)) + (v \phi_1(w) - w \phi_1(v)) \in V_0 \oplus V_1
$$
In consequence, both maps $\phi_0, \phi_1$ satisfy the condition (\ref{wh_cond}) as well. 

If $V_0, V_1$ are both nonzero, then ${\rm dim}(V_i) < {\rm dim}(V)$, $i = 0,1$. Since for both $V_0, V_1$ there exist maps $\phi_0, \phi_1$ satisfying the assumptions of this Theorem, we can repeat the whole argument for these smaller submodules. As $V$ is finite-dimensional, this procedure stops finally, yielding a decomposition $V \simeq \bigoplus_k Z^{(k)}$ into one-dimensional submodules $Z^{(k)}$ and for every such $Z^{(k)}$, we get $\phi_k(v) = \rho_{v} d_k$ for a certain element $d_k \in Z^{(k)}$. Therefore, $\phi(v) = \sum_k \phi_k(v) = \rho_v \left(\sum_k d_k\right)$ and $d:= \sum_k d_k \in V$ such that $\phi(v) = \rho_{v} d$.

If $V_1 = \{0\}$, then $\Gamma$ is a nilpotent linear transformation on $V$ by the Fitting's lemma. As it is nilpotent, its trace must vanish and thus, ${\rm dim}(\mathfrak{R}^{\perp}) = 0$. That means the whole $\mathfrak{g}$ is the kernel of $\rho$ and the condition (\ref{wh_cond}) reads $\phi([v_1, v_2]) = 0$ for any $v_1, v_2, \in \mathfrak{g}$. Since $\mathfrak{g}$ is semi-simple, $[\mathfrak{g}, \mathfrak{g}] = \mathfrak{g}$ and we conclude that $\phi(v) = 0$ for any $v \in \mathfrak{g}$. Thus, $x = 0$ satisfies that $\phi(v) = \rho_v x$. 

Finally, let $V_0 = \{0\}$. As before, denote a basis of $\mathcal{L}$ and its dual by $\{u_1, \ldots, u_m\}$ and $\{u^1, \ldots, u^m\}$, respectively. Define $V \ni y := \sum_{k=1}^m \rho^{*}_{u^k} \phi(u_k)$. Then for $a \in \mathfrak{g}$,
\begin{equation*}
\begin{split}
\rho_a y &= \sum_{k=1}^m \rho_a\rho^{*}_{u^k} \phi(u_k) 
= \sum_{k=1}^m \rho^{*}_{u^k}\rho_a \phi(u_k) + \sum_{k=1}^m [\rho_a, \rho^{*}_{u^k}] \phi(u_k) 
= \sum_{k=1}^m \rho^{*}_{u^k}\rho_a \phi(u_k) - \sum_{k=1}^m [\rho^{*}_{u^k}, \rho_a] \phi(u_k) \\
&= \sum_{k=1}^m \rho^{*}_{u^k}\rho_a \phi(u_k) - \sum_{k=1}^m \beta_{kl} \rho_{u^l} \phi(u_k) 
= \sum_{k=1}^m \rho^{*}_{u^k}\rho_a \phi(u_k) + \sum_{k=1}^m \sum_{l=1}^m \alpha_{lk} \rho_{u^l} \phi(u_k) \\
&= \sum_{k=1}^m \rho^{*}_{u^k}\rho_a \phi(u_k) + \sum_{k=1}^m \sum_{l=1}^m \rho_{u^l} \phi(\alpha_{lk} u_k) 
= \sum_{k=1}^m \rho^{*}_{u^k}\rho_a \phi(u_k) + \sum_{l=1}^m  \rho_{u^l} \phi(\sum_{k=1}^m \alpha_{lk} u_k) \\
&= \sum_{k=1}^m \rho^{*}_{u^k}\rho_a \phi(u_k) + \sum_{l=1}^m  \rho_{u^l} \phi([u_l, a]) 
= \sum_{k=1}^m \rho^{*}_{u^k}\rho_a \phi(u_k) + \sum_{l=1}^m  \rho_{u^l} \rho_{u_l} \phi(a) - \sum_{l=1}^m  \rho_{u^l} \rho_{a} \phi(u_l) \\
&= \sum_{l=1}^m  \rho_{u^l} \rho_{u_l} \phi(a) 
= \Gamma \phi(a)
\end{split}    
\end{equation*}
By the Fitting's lemma, $\Gamma$ is an isomorphism on $V_1$ and in this case, on the whole $V$. Thus, an element $x:= \Gamma^{-1} y$ satisfies the condition $\phi(v) = \rho_v x$.
\end{proof}

\section{Lie bialgebras}

This section describes the most important facts on Lie bialgebras (see \cite{CP95, Dr83, Dr87,KS04,Le90} for details). 

\begin{definition}\label{Def:bialg}
A \textit{Lie bialgebra} is a pair $(\mathfrak{g}, \delta)$, where $\mathfrak{g}$ is a Lie algebra with a Lie bracket $[\cdot, \cdot]$ and  $\delta: \mathfrak{g} \to T^2 \mathfrak{g}$ is a linear map satisfying that:
\begin{enumerate}
\item the (restricted) dual map $\delta^*: T^2 \mathfrak{g}^* \to \mathfrak{g}^*$ is a Lie bracket on $\mathfrak{g}^*$,
\item the map $\delta$ satisfies the condition
\begin{equation}\label{cocycle_cond}
\delta([x,y]) = \otimes^2 {\rm ad}_x \delta(y) - \otimes^2 {\rm ad}_y \delta(x)
\qquad \forall x,y \in \mathfrak{g}.
\end{equation}
\end{enumerate}
\end{definition}

 \begin{definition}\label{bialg_hom}
\textit{A homomorphism of Lie bialgebras} $(\mathfrak{g}, \delta_{\mathfrak{g}})$ and $(\mathfrak{h}, \delta_{\mathfrak{h}})$ is a homomorphism $\phi: \mathfrak{g} \to \mathfrak{h}$ of Lie algebras such that
\begin{equation*}
(\phi \otimes \phi) \circ \delta_{\mathfrak{g}} = \delta_{\mathfrak{h}} \circ \phi.
\end{equation*}
\end{definition}

The map $\delta$ is usually called a {\it cocommutator} to emphasize the fact that its transpose is indeed a Lie bracket (also called commutator at times) on $\mathfrak{g}^*$. 

The first condition in Definition \ref{Def:bialg} implies that ${\rm Im} \delta \subseteq \Lambda^2 \mathfrak{g} \subset T^2 \mathfrak{g}$ as $\delta^*$ is required to be a Lie bracket on $\mathfrak{g}^*$. In consequence, (\ref{cocycle_cond}) can be written using Schouten bracket as
\begin{equation}\label{cocycle_cond_anti}
\delta([x,y]) = [x, \delta(y)]_{S} - [y, \delta(x)]_{S}, \qquad \forall x,y \in \mathfrak{g}.
\end{equation}

Note that $\delta^*$ is well-defined regardless of the Lie algebra dimension. Indeed, for every finite-dimensional Lie algebra $\mathfrak{g}$, one has that $(T^2\mathfrak{g})^*\simeq T^2\mathfrak{g}^*$, whereas in the infinite-dimensional case, we have $\mathfrak{g}^* \otimes \mathfrak{g}^* \subset (\mathfrak{g} \otimes \mathfrak{g})^*$. The latter situation requires only the restriction of the dual map. 

The second condition in Definition \ref{Def:bialg} has a concise cohomological interpretation. For $\delta: \mathfrak{g} \to T^2 \mathfrak{g}$, one can understand (\ref{cocycle_cond}) as ${\rm d}_1 \delta = 0$, where ${\rm d}_1$ is the coboundary map given by (\ref{cob_map}) for $M = T^2 \mathfrak{g}$. In other words, a cocommutator $\delta$ is a 1-cocycle in the Chevalley-Eilenberg cohomology of $\mathfrak{g}$ with values in $T^2 \mathfrak{g}$.

Naturally, 1-coboundaries represent a particular example of such 1-cocycles. Thus, 1-coboundaries give rise to the following special case of the Lie bialgebra structure.

\begin{definition}\label{coboundary}
A \textit{coboundary Lie bialgebra} is a Lie bialgebra $(\mathfrak{g}, \delta_r)$ whose cocommutator $\delta_r$ takes the form $\delta_r(x) = T^2 {\rm ad}_x r$ for certain $r \in T^2 \mathfrak{g}$ and all $x \in \mathfrak{g}$. The element $r$ is called a \textit{quasitriangular $r$-matrix} of $\delta_r$. If $r \in \Lambda^2 \mathfrak{g}$, then it is called a {\it triangular $r$-matrix}.
\end{definition}

Once again, the condition $\delta_r(x) = T^2 {\rm ad}_x r$ imposed on the cocommutator $\delta_r$ follows immediately from the definition of the coboundary map (\ref{cob_map}). In the particular case of $r \in \Lambda^2 \mathfrak{g}$, it can be written concisely as $\delta_r(x) = [x, r]_S$.

It is obvious that for any Lie algebra $\mathfrak{g}$, there is a trivial Lie bialgebra structure $\delta(v) = 0 \in \Lambda^2 \mathfrak{g}$ for any $v \in \mathfrak{g}$. The natural question is whether there are any nontrivial bialgebra structures as well. For a large class of finite-dimensional Lie algebras, the positive answer was given. For abelian Lie algebras, any Lie bracket on $\mathfrak{g}^*$ induces a Lie bialgebra structure on $\mathfrak{g}$. In case of nonabelian Lie algebras, de Smedt \cite{Sm94} asserts the existence of a triangular structure over fields of characteristic zero. Lie algebras over fields of arbitrary characteristic were considered in \cite{Fel99, Mich94}. However, very little is known for general infinite-dimensional Lie algebras - only specific examples were studied in detail (e.g. \cite{HLS09, Lei94, Mich94, WWY15}).

\begin{example} 
Consider the Lie algebra $\mathfrak{sl}_2$ with the basis and commutation relations as in Example \ref{Ex:sl2}. Let us define a mapping $\delta:v\in \mathfrak{sl}_2\mapsto [v,e_2\wedge e_3]_S\in \Lambda^2\mathfrak{sl}_2$. Then by the third property in Proposition \ref{Pr:PropSchou}, one has that
$$
\delta([v_1,v_2])=[[v_1,v_2],e_2\wedge e_3]_S=-[e_2\wedge e_3,[v_1,v_2]_S]_S=-[[e_2\wedge e_3 ,v_1]_S,v_2]_S+[[e_2\wedge e_3,v_2]_S,v_1]_S.
$$
Thus,
$$
\delta([v_1,v_2])=-[v_2,\delta(v_1)]_S+[v_1,\delta(v_2)]_S.
$$
In particular, one has
$$
\delta(e_1)=[e_1,e_2\wedge e_3]=[e_1,e_2]\wedge e_3 +e_2\wedge [e_1, e_3]=e_1\wedge e_3,\quad \delta(e_2)=e_2\wedge e_3,\quad \delta(e_3)=0.
$$
Let us work out $\delta^*:\Lambda^2 \mathfrak{sl}_2^*\rightarrow\mathfrak{sl}_2^*$. Denote by $\{e^1,e^2,e^3\}$ the basis dual to $\{e_1,e_2,e_3\}$. Then, $\{e^1\wedge e^2,e^2\wedge e^3,e^1\wedge e^3\}$ is the dual basis to $\{e_1\wedge e_2,e_2\wedge e_3,e_1\wedge e_3\}$. For each element of the dual basis, let us compute its image by $\delta^*$. For example,
\begin{equation*}
\begin{split}
&[\delta^*(e^1\wedge e^2)](e_1) = \langle e^1\wedge e^2, \delta(e_1) \rangle = \langle e^1\wedge e^2, e_1 \wedge e_3 \rangle = 0 \\
&[\delta^*(e^1\wedge e^2)](e_2) = \langle e^1\wedge e^2, \delta(e_2) \rangle = \langle e^1\wedge e^2, e_2 \wedge e_3 \rangle = 0 \\
&[\delta^*(e^1\wedge e^2)](e_3) = \langle e^1\wedge e^2, \delta(e_3) \rangle = \langle e^1\wedge e^2, 0 \rangle = 0 
\end{split}
\end{equation*}
Thus, $\delta^*(e^1\wedge e^2) = 0$. Similarly, we get
$$
\delta^*(e^1\wedge e^3)=e^1,\quad \delta^*(e^2\wedge e^3)==e^2.
$$
For convenience, let us write $[\theta^1,\theta^2]_{\mathfrak{sl}_2^*}:=\delta^*(\theta^1\wedge \theta^2)$. Then,
$$
[e^1,e^2]_{\mathfrak{sl}^*}=0,\qquad [e^1,e^3]_{\mathfrak{sl}^*}=e^1,\qquad [e^2,e^3]_{\mathfrak{sl}^*}=e^2.
$$
In order to prove the Jacobi identity, it is enough to check that
$$
[e^1,[e^2,e^3]_{\mathfrak{sl}^*}]_{\mathfrak{sl}^*}+[e^2,[e^3,e^1]_{\mathfrak{sl}^*}]_{\mathfrak{sl}^*}+[e^3,[e^1,e^2]_{\mathfrak{sl}^*}]_{\mathfrak{sl}^*}=0,
$$
We obtain
$$
[e^1,[e^2,e^3]_{\mathfrak{sl}^*}]_{\mathfrak{sl}^*}=[e^1,e^2]_{\mathfrak{sl}^*}=0,
$$
and
$$
[e^2,[e^3,e^1]_{\mathfrak{sl}^*}]_{\mathfrak{sl}^*}=[e^2,-e^1]_{\mathfrak{sl}^*}=0,\qquad
[e^3,[e^1,e^2]_{\mathfrak{sl}^*}]_{\mathfrak{sl}^*}=[e^3,0]_{\mathfrak{sl}^*}=0.
$$
Therefore, $\delta^*$ defines a Lie bracket on $\mathfrak{sl}_2^*$ and thus, $\delta$ gives a coboundary cocommutator relative to the triangular r-matrix $e_2 \wedge e_3$.
\end{example}

Interestingly, semi-simple Lie algebras admit only coboundary Lie bialgebra structures. Let $\mathfrak{g}$ be a semi-simple Lie algebra and let $M=T^2\mathfrak{g}$. Then, Theorem \ref{Th:whitehead} implies that for any cocommutator $\delta: \mathfrak{g} \to M$, there exists $r\in M $ such that $\delta(v)=\otimes^2 {\rm ad}_v r$ for any $v\in \mathfrak{g}$.

Let us now address an important question what conditions an element $r \in T^2 \mathfrak{g}$ should satisfy so that the map $\delta_r: v \in \mathfrak{g} \mapsto \otimes^2 {\rm ad}_v r$ is a cocommutator. First, write $r = s + a$, where $s \in S^2 \mathfrak{g}$ and $a \in \Lambda^2 \mathfrak{g}$. Since the dual of $\delta_r = {\rm d}_0 r$ must yield a Lie bracket, which is antisymmetric by definition, it follows that ${\rm d}_0 s = 0$ or equivalently, $\otimes^2 {\rm ad}_v s = 0$ for any $v \in \mathfrak{g}$.

\begin{definition}
An element of $w\in T^k\mathfrak{g}$ is called $\mathfrak{g}$-invariant if $\otimes^2 {\rm ad}_v w = 0$ for every $v\in \mathfrak{g}$. We write $(T^k\mathfrak{g})^\mathfrak{g}$ for all $\mathfrak{g}$-invariant elements in $T^k\mathfrak{g}$. Similarly, an element of $w\in\Lambda^k\mathfrak{g}$ is called $\mathfrak{g}$-invariant if $[v,w]_S=0$ for every $v\in \mathfrak{g}$. We write $(\Lambda^k\mathfrak{g})^\mathfrak{g}$ for all $\mathfrak{g}$-invariant elements in $\Lambda^k\mathfrak{g}$.
\end{definition}

In other words, for the element $r \in T^2 \mathfrak{g}$ to be a quasitriangular r-matrix, its symmetric part has to be $\mathfrak{g}$-invariant. 

Additionally, in order for $({\rm d}_0 r)^*$ to define a Lie bracket, the dual $({\rm d}_0 a)^*$ must satisfy the Jacobi identity. In turn, this puts an additional condition on the element $r \in T^2 \mathfrak{g}$, which we discuss next.

First, let us introduce a comfortable notation. For $r = \sum_i a_i \otimes b_i \in T^2 \mathfrak{g}$, we define
$$
[[r, r]] := [r_{12}, r_{13}] + [r_{12}, r_{23}] + [r_{13}, r_{23}]
$$
where 
\begin{equation*}
[r_{12}, r_{13}] := [a_i, a_j] \otimes b_i \otimes b_j, \quad
[r_{12}, r_{23}] := a_i \otimes [b_i, a_j] \otimes b_j, \quad
[r_{13}, r_{23}] := a_i \otimes a_j \otimes [b_i, b_j].
\end{equation*}
Obviously, $[[r,r]]$ gives an element of $T^3 \mathfrak{g}$. Let us note that for $r \in \Lambda^2 \mathfrak{g}$, the element $[[r, r]]$ reduces to the Schouten bracket $[r, r]_S$ of $r$ with itself.

\begin{definition}\label{def:alt}
The map $\textrm{Alt}: \mathfrak{g}^{\otimes (n+1)} \to \mathfrak{g}^{\otimes (n+1)}$ is defined by
\begin{equation*}
\textrm{Alt}(a_1 \otimes \ldots \otimes a_n) := \sum_{k=0}^{n} a_{\sigma_k(1)} \otimes \ldots a_{\sigma_k(n)}, \quad \forall a_1, \ldots, a_n \in \mathfrak{g},
\end{equation*}
where $\sigma_k$ is a cyclic permutation of $\{1, \ldots, n\}$ such that $\sigma_k(i) := (i+k)\, (\textrm{mod}\, n)$ for $i = 1, 2, \ldots, n$.
\end{definition}

\begin{lemma}\label{lemma}
Let the map $\delta_r: \mathfrak{g} \to \Lambda^2 \mathfrak{g}$ be defined by $\delta_r(x) := [x, r]_{S}$ for $r \in T^2 \mathfrak{g}$. Then
\begin{equation*}
J_{\delta_r} (x) + \otimes^3 {\rm ad}_x [[r, r]] = 0,
\end{equation*}
where $J_{\delta_r}(x) := {\rm Alt}[((\delta_r \otimes {\rm id}) \circ \delta_r)(x)]$.
\end{lemma}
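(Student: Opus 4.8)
The plan is to relate the Jacobiator $J_{\delta_r}$ of the cocommutator $\delta_r$ directly to the object $[[r,r]]$ by unwinding the definitions and using the cocycle property already built into $\delta_r = {\rm d}_0 r$. First I would expand $((\delta_r\otimes {\rm id})\circ\delta_r)(x)$ in terms of a fixed decomposition $r=\sum_i a_i\otimes b_i$, so that $\delta_r(x)=\otimes^2{\rm ad}_x r = \sum_i\bigl([x,a_i]\otimes b_i + a_i\otimes[x,b_i]\bigr)$, and then apply $\delta_r$ again to each tensor leg; the key point is that $\delta_r$ acts as a derivation-type map on each factor, so the double application produces a sum of terms of the shape ${\rm ad}$ applied to brackets $[a_i,a_j]$, $[b_i,a_j]$, $[b_i,b_j]$, tensored with the remaining legs. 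Because ${\rm ad}_x$ is a derivation of the Lie bracket (Example \ref{ex:Lie_alg_der}), these can be reorganized as $\otimes^3{\rm ad}_x$ applied to the three pieces $[r_{12},r_{13}]$, $[r_{12},r_{23}]$, $[r_{13},r_{23}]$ that make up $[[r,r]]$, plus ``diagonal'' terms that will be killed by the alternation.

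Next I would carry out the antisymmetrization. The map ${\rm Alt}$ sums over cyclic permutations of the three tensor slots, so applying it to the double-$\delta_r$ expression symmetrizes the positions of the three ${\rm ad}$-images. Using the Jacobi identity in $\mathfrak{g}$ (equivalently, the graded Jacobi identity of the Schouten bracket, Proposition \ref{Pr:PropSchou}(4)) and the fact that ${\rm ad}_x$ commutes with ${\rm Alt}$ (since ${\rm ad}_x$ acts diagonally on all slots while ${\rm Alt}$ only permutes slots), the cyclic sum of the three bracket-pieces collapses to exactly ${\rm Alt}[[r,r]]$, and one checks that ${\rm Alt}[[r,r]]$ with the appropriate sign equals $[[r,r]]$ up to a constant, so that $J_{\delta_r}(x) = -\otimes^3{\rm ad}_x({\rm Alt}[[r,r]]) = -\otimes^3{\rm ad}_x[[r,r]]$. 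The cleanest route is probably to work entirely with the Schouten bracket on $\Lambda\mathfrak{g}$ when $r$ is antisymmetric (where $[[r,r]]=[r,r]_S$) and then note that the general $r\in T^2\mathfrak{g}$ case reduces to this since the symmetric part of $r$ contributes ${\rm ad}$-closed terms that drop out after alternation; alternatively, track the tensor indices honestly throughout.

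The main obstacle I expect is purely combinatorial bookkeeping: correctly matching the $3\cdot 2 = 6$ terms coming from $(\delta_r\otimes{\rm id})\circ\delta_r$ (before alternation) against the three terms of $[[r,r]]$ and their cyclic images, keeping all signs straight, and verifying that the ``spurious'' terms (those where the same index pair $i$ appears twice, or where a bracket lands within a single original tensor factor) either cancel in pairs or vanish after ${\rm Alt}$ by the Jacobi identity. Everything else is a direct, if lengthy, unwinding of Definitions \ref{SNbracket}, \ref{def:alt}, and the formula for $[[r,r]]$; no deep structural input beyond the Jacobi identity and the derivation property of ${\rm ad}$ is needed.
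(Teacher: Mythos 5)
Your overall plan -- expand $\delta_r(x)=\sum_i([x,a_i]\otimes b_i+a_i\otimes[x,b_i])$, apply $\delta_r\otimes{\rm id}$, alternate, and use the Jacobi identity -- is exactly the direct computation the paper has in mind (the paper gives no proof of its own and simply defers to \cite{CP95} for this calculation). However, two of your intermediate claims would derail the bookkeeping if followed literally. First, the computation does not pass through ${\rm Alt}[[r,r]]$: the correct endpoint is that ${\rm Alt}[((\delta_r\otimes{\rm id})\circ\delta_r)(x)]$ equals $-\otimes^3{\rm ad}_x$ applied to $[[r,r]]$ \emph{itself}. Your proposed detour, ``the cyclic sum collapses to ${\rm Alt}[[r,r]]$, which equals $[[r,r]]$ up to a constant,'' is false for general $r\in T^2\mathfrak{g}$ (the three pieces $[r_{12},r_{13}]$, $[r_{12},r_{23}]$, $[r_{13},r_{23}]$ are not cyclic images of one another), and even for antisymmetric $r$, where $[[r,r]]=[r,r]_S$ is cyclic-invariant, it would produce a spurious factor of $3$ incompatible with the identity as stated. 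Second, the reduction ``the symmetric part of $r$ contributes ${\rm ad}$-closed terms that drop out after alternation'' is not automatic: for an arbitrary element of $T^2\mathfrak{g}$ the symmetric part is not ${\rm ad}$-invariant, and the clean identity of the lemma (as in \cite{CP95}) is established under the hypothesis that the symmetric part of $r$ is $\mathfrak{g}$-invariant -- which is also precisely what makes $\delta_r$ land in $\Lambda^2\mathfrak{g}$, as the statement of the lemma tacitly assumes. If you drop both shortcuts and, as you yourself suggest, ``track the tensor indices honestly throughout,'' matching the terms of $(\delta_r\otimes{\rm id})\circ\delta_r$ after cyclic alternation against $\otimes^3{\rm ad}_x$ of the three summands of $[[r,r]]$ and cancelling the remainder via the Jacobi identity, the argument goes through and coincides with the cited computation.
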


The proof is a direct computation (see \cite{CP95} for details).

\begin{theorem}\label{Thm:YBE}
The map $\delta_r: \mathfrak{g} \ni v \mapsto [v, r]_S \in \Lambda^2\mathfrak{g}$ defined by $r \in \Lambda^2 \mathfrak{g}$ is a cocommutator if and only if 
\begin{equation}\label{rl3inv}
[[r, r]] \in (T^3 \mathfrak{g})^{\mathfrak{g}}.
\end{equation}
\end{theorem}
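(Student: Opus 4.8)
The strategy is to use Lemma~\ref{lemma}, which for $r \in \Lambda^2\mathfrak{g}$ reads $J_{\delta_r}(x) + \otimes^3 {\rm ad}_x[[r,r]] = 0$, together with the cohomological characterisation of cocommutators discussed before Definition~\ref{coboundary}: the map $\delta_r$ is a cocommutator precisely when (a) it actually takes values in $\Lambda^2\mathfrak{g}$ and (b) the dual map $\delta_r^*$ satisfies the Jacobi identity on $\mathfrak{g}^*$. Condition (a) is automatic here since $r \in \Lambda^2\mathfrak{g}$ and $[x, r]_S \in \Lambda^2\mathfrak{g}$ by property~1 of Proposition~\ref{Pr:PropSchou}; moreover the cocycle condition~(\ref{cocycle_cond_anti}) holds automatically for any $r$ by the graded Jacobi identity for the Schouten bracket (as illustrated in the $\mathfrak{sl}_2$ example). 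So the whole content is condition (b).

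First I would make precise that the Jacobi identity for $\delta_r^*$ is equivalent to the vanishing of $J_{\delta_r}$, i.e. $J_{\delta_r}(x) = 0$ for all $x \in \mathfrak{g}$. This is the dual reformulation: writing $[\xi,\eta]_{\mathfrak{g}^*} := \delta_r^*(\xi \wedge \eta)$, the Jacobiator of $[\cdot,\cdot]_{\mathfrak{g}^*}$ evaluated against an element $x \in \mathfrak{g} \subset (\mathfrak{g}^*)^*$ is exactly the pairing of $x$ with ${\rm Alt}[((\delta_r \otimes {\rm id})\circ \delta_r)(x')]$-type expressions; unwinding the definition of the co-Jacobiator and of ${\rm Alt}$ shows that $\delta_r^*$ satisfies Jacobi iff $J_{\delta_r} \equiv 0$. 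I would state this as the key intermediate claim and justify it by dualising: the co-Jacobi identity $\mathrm{Alt} \circ (\delta_r \otimes {\rm id}) \circ \delta_r = 0$ is the transpose of the Jacobi identity for $\delta_r^*$.

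Granting that, the theorem follows immediately from Lemma~\ref{lemma}. If $[[r,r]] \in (T^3\mathfrak{g})^{\mathfrak{g}}$, then $\otimes^3{\rm ad}_x[[r,r]] = 0$ for every $x$, so Lemma~\ref{lemma} gives $J_{\delta_r}(x) = 0$ for every $x$, hence $\delta_r^*$ is a Lie bracket and $\delta_r$ is a cocommutator. Conversely, if $\delta_r$ is a cocommutator then $J_{\delta_r} \equiv 0$, so Lemma~\ref{lemma} forces $\otimes^3 {\rm ad}_x[[r,r]] = 0$ for all $x \in \mathfrak{g}$, which is precisely the statement that $[[r,r]]$ is $\mathfrak{g}$-invariant, i.e. (\ref{rl3inv}).

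I expect the main obstacle to be the careful bookkeeping in the intermediate claim: matching the combinatorial definition of ${\rm Alt}$ and of $J_{\delta_r}$ with the honest co-Jacobi identity, and checking that no spurious symmetry factors appear when dualising between the Jacobiator on $\mathfrak{g}^*$ and the expression $\mathrm{Alt}[((\delta_r \otimes {\rm id})\circ\delta_r)(x)]$. Everything else is formal once Lemma~\ref{lemma} is invoked. One should also remark that, because here $r \in \Lambda^2\mathfrak{g}$, the antisymmetry requirement on $\delta_r^*$ (equivalently $\mathfrak{g}$-invariance of the symmetric part of $r$, which is zero) is vacuous, so Jacobi is genuinely the only surviving condition.
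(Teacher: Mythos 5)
Your proposal is correct and follows essentially the same route as the paper: the paper's proof likewise dualises the Jacobi identity for $[\xi,\eta]_{\mathfrak{g}^*}:=\delta_r^*(\xi\wedge\eta)$ to the vanishing of $J_{\delta_r}$ and then invokes Lemma~\ref{lemma} to convert that into the $\mathfrak{g}$-invariance of $[[r,r]]$. Your additional remarks — that the cocycle condition~(\ref{cocycle_cond_anti}) holds automatically by the graded Jacobi identity and that antisymmetry of $\delta_r^*$ is vacuous since $r\in\Lambda^2\mathfrak{g}$ — are points the paper's terse proof leaves implicit, so including them only improves the argument.
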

\begin{proof}

Let us take any linear map $\delta: \mathfrak{g} \to \Lambda^2 \mathfrak{g}$ and define a bilinear map on $\mathfrak{g}^*$ by $[\xi, \eta]_{\mathfrak{g}^*} := \delta^*(\xi \wedge \eta)$. Since $[[\xi, \eta]_{\mathfrak{g}^*}, \zeta]_{\mathfrak{g}^*} = \delta^*(\delta^*(\xi\otimes \eta)\otimes \zeta) = \delta^*(\delta^* \otimes \textrm{id}) (\xi \otimes \eta \otimes \zeta)$, we see that $[\cdot, \cdot]_{\mathfrak{g}^*}$ satisfies Jacobi identity if and only if $J_{\delta}$ vanishes. Thus, by virtue of Lemma \ref{lemma}, an element $[[r, r]]$ has to be $\mathfrak{g}$-invariant.
\end{proof}

The condition (\ref{rl3inv}) is known as the \textit{modified classical Yang-Baxter equation} (mCYBE). Its special case $[[r, r]] = 0$ is called the \textit{classical Yang-Baxter equation} (CYBE). Following the previously introduced nomenclature, the solutions to the mCBYE are called {\it quasitriangular $r$-matrices} (or simply, {\it r-matrices}) while solutions to the CYBE are called {\it triangular} $r$-matrices. 

This thesis is solely devoted to study the triangular r-matrices. We will hereafter refer to such elements shortly as r-matrices, since it will not lead to any confusion later.

\begin{example}
Let us illustrate previous notions by analysing a two-dimensional non-abelian real Lie algebra $\mathfrak{g} = \langle x, y\rangle$ with $[x, y] = x$. Since $\Lambda^2\mathfrak{g}=\langle x\wedge y\rangle$, every cocommutator is of the form $\delta(x) = a x \wedge y$ and $\delta(y) = b x \wedge y$ for certain $a,b \in \mathbb{R}$. The cocycle condition (\ref{cocycle_cond})
is satisfied since
\begin{equation*}
\delta([x,y]) = [x, \delta(y)]_{S} - [y, \delta(x)]_{S}=[x,bx\wedge y]-[y,a x\wedge y]]=ax\wedge y=\delta(x).
\end{equation*}
Since $\mathfrak{g}^*$ is a two-dimensional vector space, $\delta^T$ is an antisymmetric map that becomes automatically a Lie bracket on $\mathfrak{g}^*$. Indeed, by multilinearity of the Jacobi identity, its value on any elements $\theta_1^*,\theta_2^*,\theta_3^*$ reduces to obtaining its value on elements of a dual basis $\{x^*,y^*\}$ to $\{x,y\}$. Among four cases, we get two expressions that require further calculations, namely
$$
[x^*,[x^*,y^*]]+[x^*,[y^*,x^*]]+
[y^*,[x^*,x^*]]=[x^*,[x^*,y^*]]+[x^*,[y^*,x^*]]=[x^*,[x^*,y^*]]-[x^*,[x^*,y^*]]=0
$$
and
$$
[y^*,[y^*,x^*]]+[y^*,[x^*,y^*]]+
[x^*,[y^*,y^*]]=0
$$
Thus, the Jacobi identity holds and it follows that $\delta$ gives a cocommutator with no restrictions on the coefficients $a, b$.

Let us now assume that $(\mathfrak{g}, \delta)$ is a coboundary Lie bialgebra. Since $\dim \mathfrak{g}=2$, any $r$-matrix $r \in \Lambda^2 \mathfrak{g}$ satisfies (\ref{rl3inv}) and thus, defines a cocommutator. Denote $r = c x \wedge y$ for any $c \in \mathbb{R}$. From Definition \ref{coboundary}, it follows that 
\begin{equation*}
\delta(x) = [x, r]_{S}=0=ax\wedge y \quad \iff \quad a = 0
\end{equation*}
and
\begin{equation*}
\delta(y) = [y, r]_{S}=-cx\wedge y=bc\wedge y \quad \iff \quad b = -c.
\end{equation*}
\end{example}
It is important to establish when two $r$-matrices induce the same cocommutator. This is answered by the following proposition.

\begin{proposition}\label{prop:requiv} 
Two $r$-matrices $r_1,r_2\in \Lambda^2\mathfrak{g}$ satisfy that $\delta_{r_1}=\delta_{r_2}$ if and only if  $r_1-r_2\in(\Lambda^2\mathfrak{g})^{\mathfrak{g}}$.
\end{proposition}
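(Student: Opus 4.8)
The plan is to exploit the linearity of the cocommutator assignment $r \mapsto \delta_r$. Recall from Definition \ref{coboundary} (and the subsequent remark) that for $r \in \Lambda^2\mathfrak{g}$ the cocommutator takes the form $\delta_r(x) = [x, r]_S$ for every $x \in \mathfrak{g}$. Since the Schouten bracket is bilinear, the map $r \in \Lambda^2\mathfrak{g} \mapsto \delta_r$ is $\mathbb{R}$-linear; in particular $\delta_{r_1} - \delta_{r_2} = \delta_{r_1 - r_2}$, understood as the map $x \mapsto [x, r_1 - r_2]_S$. Hence the condition $\delta_{r_1} = \delta_{r_2}$ is equivalent to $\delta_{r_1 - r_2} = 0$, i.e. to $[x, r_1 - r_2]_S = 0$ for all $x \in \mathfrak{g}$.

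First I would make this reduction explicit: set $w := r_1 - r_2 \in \Lambda^2\mathfrak{g}$ and observe, using property 3 of Proposition \ref{Pr:PropSchou} if one wants to double-check bilinearity in the second slot, that $\delta_{r_1}(x) - \delta_{r_2}(x) = [x, r_1]_S - [x, r_2]_S = [x, w]_S$ for each $x$. Then $\delta_{r_1} = \delta_{r_2}$ holds if and only if $[x, w]_S = 0$ for every $x \in \mathfrak{g}$. By the Definition of $\mathfrak{g}$-invariance for elements of $\Lambda^k\mathfrak{g}$ stated just before Theorem \ref{Thm:YBE}, the latter is precisely the statement that $w \in (\Lambda^2\mathfrak{g})^{\mathfrak{g}}$. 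Chaining the equivalences gives $\delta_{r_1} = \delta_{r_2} \iff r_1 - r_2 \in (\Lambda^2\mathfrak{g})^{\mathfrak{g}}$, which is the claim.

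There is essentially no obstacle here: the entire content is the bilinearity of the Schouten bracket together with the two definitions being matched up correctly. The only point requiring a modicum of care is to confirm that $\delta_{r_1}$ and $\delta_{r_2}$ are genuinely maps $\mathfrak{g} \to \Lambda^2\mathfrak{g}$ of the stated form — which is guaranteed by the hypothesis that $r_1, r_2$ are $r$-matrices in $\Lambda^2\mathfrak{g}$, so that Definition \ref{coboundary} applies and $\delta_{r_i}(x) = [x, r_i]_S$ — and that the difference of two such maps is evaluated pointwise, so that "$\delta_{r_1} = \delta_{r_2}$" unambiguously means "$\delta_{r_1}(x) = \delta_{r_2}(x)$ for all $x$." After that, reading off $\mathfrak{g}$-invariance from the vanishing of $x \mapsto [x,w]_S$ is immediate from the definition. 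So the proof is a short two-line equivalence chain; no induction, no cohomology, and no appeal to Theorem \ref{Thm:YBE} is needed.
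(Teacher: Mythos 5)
Your proposal is correct and follows essentially the same route as the paper: both reduce $\delta_{r_1}=\delta_{r_2}$ via bilinearity of the Schouten bracket to $[x,r_1-r_2]_S=0$ for all $x\in\mathfrak{g}$ and then invoke the definition of $(\Lambda^2\mathfrak{g})^{\mathfrak{g}}$. Your version is slightly more explicit about the converse direction and about what equality of the maps means, but there is no substantive difference.
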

\begin{proof} 
Let $\delta_{r_1}$ and $\delta_{r_2}$ be the cocommutators respectively induced by elements $r_1,r_2 \in \lambda^2 \mathfrak{g}$. Since $\delta_{r_1}=\delta_{r_2}$, one obtains that $[r_1,v]_S=[r_2,v]_S$ for every $v\in \mathfrak{g}$. Hence, $[r_1 - r_2, v]_S = 0$ for any $v\in \mathfrak{g}$, which implies $r_1-r_2\in (\Lambda^2\mathfrak{g})^{\mathfrak{g}}$. The converse is immediate.
	\end{proof}

In other words, two $r$-matrices give rise to the same cocommutator if and only if their difference is a $\mathfrak{g}$-invariant bivector. Thus in order to determine coboundary Lie bialgebras, one should focus not on $r$-matrices themselves, but on their equivalence classes in $\Lambda^2\mathfrak{g}/(\Lambda^2\mathfrak{g})^\mathfrak{g}$.

It follows from Definition \ref{Def:bialg} that given a Lie bialgebra $(\mathfrak{g}, \delta_{\mathfrak{g}})$, one can construct a Lie bialgebra structure on $\mathfrak{g}^*$ and vice versa. This observation motivates another characterisation of the Lie bialgebra notion, which makes such a symmetry more apparent.

\begin{definition}
A Manin triple consists of a triple of Lie algebras $(\mathfrak{p}, \mathfrak{p}_{1}, \mathfrak{p}_{2})$ and a nondegenerate symmetric bilinear form $(\cdot, \cdot)$ on $\mathfrak{p}$ such that
\begin{itemize}
\item $\mathfrak{p}_1$ and $\mathfrak{p}_2$ are Lie subalgebras of $\mathfrak{p}$, 
\item there is a decomposition $\mathfrak{p} = \mathfrak{p}_1 \oplus \mathfrak{p}_2$ as vector spaces,
\item $\mathfrak{p}_1$ and $\mathfrak{p}_2$ are isotropic relative to the bilinear form $(\cdot, \cdot)$, that is $(v, w) = 0$ for all $w \in \mathfrak{p}$ and any $v \in \mathfrak{p}_1$ (and analogously, for $v \in \mathfrak{p}_2$).
\end{itemize}
\end{definition}

For finite-dimensional Lie algebras, the notions of Manin triples and Lie bialgebras are equivalent, as proved by the next proposition.

\begin{proposition}
Let $\mathfrak{g}$ be a finite-dimensional Lie algebra. There is a one-to-one correspondence between Lie bialgebra structures on $\mathfrak{g}$ and Manin triples $(\mathfrak{p}, \mathfrak{p}_{1}, \mathfrak{p}_{2})$ with $\mathfrak{p}_1 = \mathfrak{g}$.
\end{proposition}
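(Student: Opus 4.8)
The plan is to establish the correspondence in both directions explicitly, using the nondegenerate bilinear form to identify $\mathfrak{p}_2$ with $\mathfrak{g}^*$. First I would start from a Lie bialgebra $(\mathfrak{g},\delta)$. Set $\mathfrak{p}:=\mathfrak{g}\oplus\mathfrak{g}^*$ as a vector space, with $\mathfrak{p}_1:=\mathfrak{g}$ and $\mathfrak{p}_2:=\mathfrak{g}^*$. Equip $\mathfrak{p}$ with the canonical pairing $(\,(x,\xi),(y,\eta)\,):=\langle\xi,y\rangle+\langle\eta,x\rangle$, which is visibly nondegenerate, symmetric, and makes $\mathfrak{g}$ and $\mathfrak{g}^*$ isotropic. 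The work is to define the bracket on $\mathfrak{p}$: on $\mathfrak{g}$ it is $[\cdot,\cdot]$, on $\mathfrak{g}^*$ it is $\delta^*$, and the mixed bracket $[x,\xi]$ is dictated by requiring the pairing to be ${\rm ad}$-invariant, i.e. $([x,\xi],z)+(\xi,[x,z])=0$ and similarly pairing against $\mathfrak{g}^*$; this forces $[x,\xi]={\rm ad}^*_x\xi-{\rm ad}^{*}_\xi x$ where ${\rm ad}^*$ denotes the coadjoint-type actions of $\mathfrak{g}$ on $\mathfrak{g}^*$ and of $\mathfrak{g}^*$ on $(\mathfrak{g}^*)^*\simeq\mathfrak{g}$. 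One then checks the Jacobi identity for this bracket on $\mathfrak{p}$; this is where the cocycle condition \eqref{cocycle_cond} enters decisively — the terms of the Jacobiator that mix $\mathfrak{g}$ and $\mathfrak{g}^*$ vanish precisely because $\delta$ is a $1$-cocycle, and the purely-$\mathfrak{g}^*$ terms vanish because $\delta^*$ is a Lie bracket (condition 1 of Definition \ref{Def:bialg}). This yields a Manin triple with $\mathfrak{p}_1=\mathfrak{g}$.

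Conversely, given a Manin triple $(\mathfrak{p},\mathfrak{g},\mathfrak{p}_2)$ with the form $(\cdot,\cdot)$, nondegeneracy of the pairing together with isotropy of $\mathfrak{g}$ and $\mathfrak{p}_2$ gives a canonical vector-space isomorphism $\mathfrak{p}_2\simeq\mathfrak{g}^*$ via $w\mapsto (w,\cdot)|_{\mathfrak{g}}$. Transporting the Lie bracket of $\mathfrak{p}_2$ through this isomorphism gives a Lie bracket on $\mathfrak{g}^*$, hence a linear map $\delta:\mathfrak{g}\to\Lambda^2\mathfrak{g}$ defined as the transpose $\delta:=(\,[\cdot,\cdot]_{\mathfrak{g}^*}\,)^*$; note ${\rm Im}\,\delta\subseteq\Lambda^2\mathfrak{g}$ automatically since the bracket on $\mathfrak{g}^*$ is antisymmetric. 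Condition 1 of Definition \ref{Def:bialg} holds by construction. For condition 2, I would compute the mixed bracket $[x,\xi]$ in $\mathfrak{p}$ using ${\rm ad}$-invariance of $(\cdot,\cdot)$ and the decomposition $\mathfrak{p}=\mathfrak{g}\oplus\mathfrak{p}_2$, extract its $\mathfrak{g}^*$-component, and recognize that the Jacobi identity in $\mathfrak{p}$ applied to two elements of $\mathfrak{g}$ and one element of $\mathfrak{g}^*$ — after pairing against an arbitrary element of $\mathfrak{g}$ — is exactly the statement that $\delta$ satisfies \eqref{cocycle_cond}.

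Finally I would check that the two constructions are mutually inverse: starting from $(\mathfrak{g},\delta)$, forming $\mathfrak{p}=\mathfrak{g}\oplus\mathfrak{g}^*$, and reading off the induced cocommutator returns $\delta$ (immediate, since the bracket on $\mathfrak{g}^*$ was taken to be $\delta^*$ and taking transpose twice is the identity on a finite-dimensional space); and starting from a Manin triple, the canonical identification $\mathfrak{p}_2\simeq\mathfrak{g}^*$ together with the reconstructed bracket recovers the original $\mathfrak{p}$ up to the obvious isomorphism fixing $\mathfrak{g}$. The main obstacle is the bookkeeping in verifying the Jacobi identity on $\mathfrak{p}$ and correctly matching the mixed-bracket terms with the cocycle condition — in particular being careful with signs and with the two distinct coadjoint actions — but no genuinely hard idea is needed beyond the cocycle/Jacobi dictionary, and finite-dimensionality is used only to guarantee $(\mathfrak{g}^*)^*\simeq\mathfrak{g}$ and $(T^2\mathfrak{g})^*\simeq T^2\mathfrak{g}^*$.
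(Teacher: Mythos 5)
Your proposal is correct; note, however, that the paper states this proposition without any proof (it is quoted as a classical fact going back to Drinfeld, cf.\ the references \cite{CP95,Dr83} cited at the head of the section), so there is no in-text argument to compare against. What you describe is the standard Drinfeld-double construction: $\mathfrak{p}=\mathfrak{g}\oplus\mathfrak{g}^*$ with the canonical pairing, the mixed bracket $[x,\xi]={\rm ad}^*_x\xi-{\rm ad}^*_\xi x$ forced by invariance, and the dictionary between the Jacobi identity on $\mathfrak{p}$ and the cocycle condition (\ref{cocycle_cond}). The one place where your sketch compresses a genuine (if standard) verification is the claim that ``the terms of the Jacobiator that mix $\mathfrak{g}$ and $\mathfrak{g}^*$ vanish precisely because $\delta$ is a $1$-cocycle'': there are two distinct mixed types, $(\mathfrak{g},\mathfrak{g},\mathfrak{g}^*)$ and $(\mathfrak{g},\mathfrak{g}^*,\mathfrak{g}^*)$. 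The first is literally the cocycle condition for $\delta$ paired against $\mathfrak{g}^*\otimes\mathfrak{g}^*$; the second is the cocycle condition for the bracket of $\mathfrak{g}$ viewed as a cobracket on $\mathfrak{g}^*$, and one must observe that this is equivalent to the first by dualization — the compatibility condition of Definition \ref{Def:bialg} is self-dual under exchanging the roles of $\mathfrak{g}$ and $\mathfrak{g}^*$. Spelling that symmetry out (or at least citing it) would close the only real seam in the argument; everything else, including the finite-dimensionality hypothesis entering only through $(\mathfrak{g}^*)^*\simeq\mathfrak{g}$, is as it should be.
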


It is known that Lie algebras are infinitesimal counterparts of Lie groups \cite{Ha15}. Similarly for Lie bialgebras, a cocommutator induces an additional structure at the Lie group level.

\begin{definition}
A Poisson--Lie group is a Lie group $G$ equipped with a Poisson bracket $\{\cdot, \cdot\}$ such that the multiplication in $G$ gives a Poisson map, that is
$$
\{f, g\}(p \cdot q) = \{f \circ L_p, g \circ L_p\}(q) + \{f \circ R_q, g \circ R_q\}(p)
$$
for any $f,g \in C^{\infty}(G)$ and $p, q \in G$. The maps $L_g, R_g$ correspond to the left and right translation maps on $G$, respectively.
\end{definition}
In terms of the Poisson bivector $w \in \Lambda^2 TG$, the above condition reads
\begin{equation}\label{eq:poi_bv}
w_{p \cdot q} = (T_q L_p \otimes T_q L_p)(w_q) + (T_p R_q \otimes T_p R_q)(w_p), \qquad \forall p, q \in G.
\end{equation}
Let us briefly investigate the correspondence between Lie bialgebras and Poisson--Lie groups. Introduce the map $w^R: G \to \mathfrak{g} \otimes \mathfrak{g}$ given by $w^R(g) := (T_g R_{g^{-1}} \otimes T_g R_{g^{-1}})(w_g)$, the right translation of the Poisson bivector $w$ to the identity element $e$ of $G$. Its derivative at $e$ yields a map $\delta: \mathfrak{g} \to \mathfrak{g} \otimes \mathfrak{g}$. Using $w^R$, one can rewrite condition (\ref{eq:poi_bv}) as
$$
w^R(p \cdot q) = ({\rm Ad}_p \otimes {\rm Ad}_p)(w^R(q)) + w^R(p)
$$
Differentiating the above equality at $e$ gives
$$
\delta([v, w]) = \otimes^2 {\rm ad}_v \delta(w) - \otimes^2 {\rm ad}_w \delta(v)
$$
This shows that the Lie bialgebra structure on a Lie algebra $\mathfrak{g}$ can be derived from the Poisson--Lie groups structure on the corresponding Lie group $G$. Conversely, any Lie bialgebra  on $\mathfrak{g}$ gives rise to the Poisson--Lie group structure on the connected and simply-connected Lie group $G$ associated with $\mathfrak{g}$. We shall not discuss the converse result (see \cite{CP95} and references therein for details).

\section{Lie systems}\label{Sec:LS_intro}

Among first-order ordinary differential equations, Lie systems constitute a particular class, as their general solutions can be expressed in a special way. In this section, we summarise necessary results on such systems. If not otherwise stated, every system of differential equations considered in this Section is non-autonomous and moreover, $N$ is an $n$-dimensional manifold.
	
First, let us recall the correspondence between first-order ODEs and $t$-dependent vector fields. Define $\pi_2:(t,x)\in \mathbb{R}\times N\mapsto x\in N$ and let  $\tau_N:TN\rightarrow N$ be the tangent bundle projection.
	A $t$-dependent vector field on $N$ is a mapping $X:(t,x)\in \mathbb{R}\times N\mapsto X(t,x)\in TN$ such that $\tau_N\circ X=\pi_2$.  An {\it integral curve} of $X$ is a mapping $\gamma:\mathbb{R}\rightarrow N$ such that
	\begin{equation}\label{Asso}
		\frac{d\gamma}{dt}(t)=X(t,\gamma(t)),\qquad \forall t\in \mathbb{R}.
	\end{equation}
	Given a $t$-dependent vector field $X$, one can construct \cite{AM87,Dissertationes} a vector field $\tilde{X}$ on $\mathbb{R}\times N$, called an {\it autonomisation of $X$}, of the form $\widetilde{X}=\partial_t+X$ with an integral curve $\widetilde{\gamma}:t\in \mathbb{R}\mapsto (t,\gamma(t))\in \mathbb{R}\times N$. Conversely, if $\widetilde{\gamma}:\mathbb{R}\rightarrow \mathbb{R}\times N$ is an integral curve of $\widetilde{X}$ and a section of the bundle $\pi_1:(t,x)\in \mathbb{R}\times N\mapsto t\in \mathbb{R}$, then $\pi_2\circ \widetilde{\gamma}$ is a solution to (\ref{Asso}). In consequence, we can identify a system  (\ref{Asso}) and its associated $t$-dependent vector field $X$. 
		
	A {\it superposition rule} \cite{CGM07,Dissertationes,PW} for a system $X$ on a manifold $N$ is a map $\Psi:N^m\times N\rightarrow N$ satisfying that the general solution, $x(t)$, to $X$ can be written as
	$$
	x(t)=\Psi(x_{(1)}(t),\ldots ,x_{(m)}(t),k),
	$$
	for a generic family of particular solutions $x_{(1)}(t),\ldots,x_{(m)}(t)$  of $X$ and a parameter $k\in N$ to be related to the initial condition of $x(t)$. We call {\it Lie system} a system of first-order ODEs admitting a superposition rule.

Given a system of first-order ODEs, one may ask whether it forms a Lie system. The following theorem provides a suitable geometric condition.
 
	\begin{theorem}{\bf (The Lie--Scheffers theorem \cite{CGM00,CGM07,LS,PW})} A system $X$ on $N$ admits a
		superposition rule if and only if 
		$
		X={\displaystyle \sum_{\alpha=1}^r}b_\alpha(t)X_\alpha
		$
		for a certain family $X_1,\ldots,X_r$ of vector fields on $N$ spanning an $r$-dimensional Lie algebra of vector fields, a so-called {\it Vessiot-Guldberg Lie algebra} of $X$, and
		a family $b_1(t),\ldots,b_r(t)$  of $t$-dependent functions.
	\end{theorem}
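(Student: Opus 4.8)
The plan is to prove both implications, treating the "only if" direction via the standard geometric characterization of superposition rules and the "if" direction by an explicit construction. For the "if" direction (the easier one), I would start from a system $X = \sum_{\alpha=1}^r b_\alpha(t) X_\alpha$ with the $X_\alpha$ spanning an $r$-dimensional Lie algebra $V$ of vector fields on $N$. By the integrability results recalled earlier (Theorem~\ref{Th:StSus}), $V$ integrates to a local Lie group action $\varphi: G \times N \to N$ whose fundamental vector fields are the $X_\alpha$. One then considers the "diagonal-type" prolongation of $X$ to $N^{m+1}$, i.e. the $t$-dependent vector field $\widehat X = \sum_\alpha b_\alpha(t)(X_\alpha^{(0)} + X_\alpha^{(1)} + \cdots + X_\alpha^{(m)})$ obtained by replicating each $X_\alpha$ on each factor. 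The orbits of the induced $G$-action on $N^{m+1}$ (for $m$ large enough, namely $m$ such that $mn \ge r$, so that the action has an open orbit on $N^m$) are invariant submanifolds, and the common first integrals $I_1, \dots, I_n$ of $\widehat X$ that are functionally independent on the last factor give, via the implicit function theorem, a map $\Psi: N^m \times N \to N$ expressing $x(t)$ in terms of $m$ generic particular solutions and a constant $k$. This is the superposition rule.

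For the "only if" direction, suppose $X$ admits a superposition rule $\Psi: N^m \times N \to N$. The key step is to translate the existence of $\Psi$ into a statement about the prolonged system on $N^{m+1}$: the function $F: N^{m+1} \to N$, $F(x_{(1)},\dots,x_{(m)},x) = $ (the value of $k$ solving $x = \Psi(x_{(1)},\dots,x_{(m)},k)$), is a common constant of motion for every diagonal prolongation $\widehat X(t)$, so the autonomization $\partial_t + \widehat X$ is tangent to the level sets of $F$, which are $n$-codimensional. Differentiating the superposition identity and using that $m$ generic solutions can be chosen independently, one deduces that the $t$-dependent vector field $X(t,\cdot)$ at each $t$ lies, pointwise, in a fixed finite-dimensional space: concretely, the values $X(t_1,\cdot), X(t_2,\cdot), \dots$ for enough instants $t_i$ span the whole relevant space, and there exist finitely many vector fields $X_1,\dots,X_r$ with $X(t,\cdot) = \sum_\alpha b_\alpha(t) X_\alpha$. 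One then shows these $X_\alpha$ can be chosen inside a Lie algebra: the Lie brackets $[X_\alpha, X_\beta]$ must again preserve the same foliation coming from $F$ (because $F$ is invariant under the flows of all the $X_\alpha$, hence under their brackets), which forces $\langle X_1, \dots, X_r\rangle$ to close into a finite-dimensional Lie algebra of vector fields — the Vessiot--Guldberg Lie algebra.

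The main obstacle is the "only if" direction, specifically making rigorous the passage from "a superposition rule exists" to "the $X(t,\cdot)$ span a finite-dimensional space closed under Lie bracket." One has to argue carefully that the level-set foliation of $F$ on $N^{m+1}$ is regular of the right dimension on a suitable open set, that genericity of the $m$ chosen particular solutions lets us vary their initial conditions freely, and that the resulting distribution on $N$ generated by all the $X(t,\cdot)$ is finite-dimensional and involutive. This is essentially the content of the classical Lie--Scheffers analysis, and a complete treatment uses the theory of the "generalized flag" or the reconstruction of the Lie algebra from its action; here I would cite \cite{CGM07,LS,PW} for the technical details and present only the conceptual skeleton, since a fully self-contained proof is lengthy and orthogonal to the thesis's main line.
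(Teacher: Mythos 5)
The paper states this theorem as a classical result and cites \cite{CGM00,CGM07,LS,PW} without reproducing a proof, so there is no internal argument to compare yours against; I assess your sketch on its own terms. Your ``if'' direction follows the standard construction and is essentially the same machinery the thesis itself deploys for foliated superposition rules in Theorem \ref{FLST}: prolong diagonally, find enough common first integrals, invert via the implicit function theorem. Two imprecisions there: the criterion for choosing $m$ is not ``$mn\ge r$'' but that the diagonal prolongations $X_1^{[m]},\ldots,X_r^{[m]}$ be linearly independent at a generic point of $N^m$ (the inequality is a necessary consequence of that condition, not a sufficient substitute for it); and the functions $I_1,\ldots,I_n$ must be common first integrals of all the prolonged vector fields $X_\alpha^{[m+1]}$, not merely of the single $t$-dependent field $\widehat X$ --- otherwise the resulting $\Psi$ need not be $t$-independent, nor serve as a superposition rule for every system with the same Vessiot--Guldberg Lie algebra.

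The genuine gap is in the ``only if'' direction, at the step ``the brackets $[X_\alpha,X_\beta]$ preserve the foliation coming from $F$, which forces $\langle X_1,\ldots,X_r\rangle$ to close into a finite-dimensional Lie algebra.'' Preserving a foliation does not by itself bound dimension: the vector fields tangent to the leaves of any foliation already form an infinite-dimensional Lie algebra. What actually yields finite-dimensionality is the special structure of the horizontal foliation induced by $\Psi$ on $N^{m+1}$: its leaves project diffeomorphically onto the last $m$ factors, so a diagonal prolongation $Y^{[m+1]}$ tangent to the leaves is determined by its projection $Y^{[m]}$ onto $N^m$, and the space of vector fields $Y$ on $N$ whose diagonal prolongation takes values in a fixed distribution of rank $mn$ is finite-dimensional. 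Closure under the bracket then follows because the bracket of diagonal prolongations is the diagonal prolongation of the bracket, combined with an argument of the type of Lemma \ref{Lem:LS} to conclude that the structure functions are constants. Your sketch names none of this and defers entirely to the references, which is defensible for a cited classical theorem, but it means the proposal does not constitute a proof of the nontrivial implication.
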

	
	As one of the simplest non-trivial examples of Lie systems, let us discuss the Riccati equation \cite{LS}, that is 
 $$
 \frac{{\rm d}x}{{\rm d}t} = a_1(t) + a_2(t) x + a_3(t) x^2
 $$ 
 for certain $t$-dependent functions $a_1(t),a_2(t),$ and $a_3(t)$. The associated $t$-dependent
	vector field on $\mathbb{R}$ is of the form 
	$$
	X^{Ric}=(a_1(t)+a_2(t)x+a_3(t)x^2)\frac{\partial}{\partial x}.
	$$
	 Then, $
	X^{Ric}=\displaystyle{\sum_{\alpha=1}^3} a_\alpha (t)X_\alpha,
	$
	where $X_1, X_2, X_3$ are vector fields on $\mathbb{R}$ satisfying the commutation relations
	$$
	[X_1, X_2] = X_1,\qquad [X_1,X_3]=2X_2,\qquad [X_2,X_3]=X_3,
	$$
	and thus, spanning a Lie algebra of vector fields isomorphic to $\mathfrak{sl}_2$ \cite{LS,PW}. According  to the Lie--Scheffers theorem, Riccati equations must admit a superposition rule. Indeed, it is known \cite{CGM00,In44,LS} that the general solution to a Riccati equation, $x(t)$, can be written in terms of a function $\Psi:\mathbb{R}^3\times \mathbb{R}\rightarrow \mathbb{R}$ in the form
	$$
	x(t)=\Psi(x_{(1)}(t),x_{(2)}(t),x_{(3)}(t),k),\qquad k\in \mathbb{R},
	$$
	where $x_{(1)}(t),x_{(2)}(t),x_{(3)}(t)$ are three different particular solutions to $X$ and 
	$$
	\Psi(u_1,u_2,u_3;k)=\frac{u_1(u_3-u_2)-ku_2(u_3-u_1)}{(u_3-u_2)-k(u_3-u_1)} ,
	$$
	where the limit $k\to \infty$ should be admitted to find the solution $x_{(2)}$. 
	
	Another relevant example of Lie system (see \cite{CGM00,Dissertationes,LS20}) is given by the system on an $r$-dimensional Lie group $G$ of the form
	\begin{equation}\label{Aut}
		X^G(t,g)=\sum_{\alpha=1}^rb_\alpha(t)X^R_\alpha(g),\qquad g\in G,
	\end{equation}
	where $X^R_1,\ldots,X_r^R$ stand for a basis of right-invariant vector fields
	on $G$ and $b_1(t),\ldots,b_r(t)$ are arbitrary $t$-dependent functions. Indeed,  if $R_g$ is the right-translation map $R_g:h\in G\mapsto hg\in G$ and $\{e_\alpha\mid \alpha=1,\ldots,r\}$ 
	is a basis of $T_eG$, then the right-invariant vector fields $X^R_1,\ldots,X^R_r$, defined by  $X_\alpha ^R(g)=R_{g*e}e_\alpha $ span an $r$-dimensional Lie algebra of vector fields on $G$. Consequently,  the $t$-dependent vector field (\ref{Aut}) defines  a Lie system. The  Lie--Scheffers theorem states that  the differential equation on $G$ determining its integral curves,  
	\begin{equation}\label{Aut2}
 \frac{dg }{dt}=X^G(t,g),
 \end{equation}  
 admits a superposition rule.
	A simple application of the right-translation $R_{g(t)^{-1}*g(t)}$ to both sides leads to an  equivalent equation for $g(t)$, 
	\begin{equation}\label{Aut3}
 R_{g^{-1}*g}\frac{dg}{dt} =  \sum_{\alpha=1}^rb_\alpha(t)e_\alpha\in T_eG.
	\end{equation}
	The right-invariance of the $t$-dependent vector field (\ref{Aut}) relative to the action of $G$ on  itself from the right shows the right-invariance of equation (\ref{Aut2}), or its equivalent (\ref{Aut3}), i.e.     that for every $h\in G$ a particular solution $g_0(t)$ to (\ref{Aut2}) gives rise to new particular solutions $R_h g_0(t)$ of (\ref{Aut2}).  As initial conditions at $t=0$ determine univocally the particular solutions, $g(t)$, of  (\ref{Aut2}), the general solution to  (\ref{Aut2}) can be
	brought into the form
	$$
	g(t)=R_hg_0(t),
	$$
	where $g_0(t)$ is a particular solution to (\ref{Aut2}) and $h\in G$. 
	Then, $X^{Ric}$ admits a superposition rule  involving one particular solution given by  $\Psi:(g,h)\in G\times G\mapsto R_hg\in G$.
	
	Lie systems of the form (\ref{Aut}) are called {\it automorphic Lie
		systems} \cite{Dissertationes}. Their special role on the theory of Lie systems is explained
	by the following theorem, which states that the general solution to every Lie system
	can be obtained from the knowledge of a particular solution of a related automorphic Lie system \cite{CGM00,LS20,Ve93,Ve99}.
	
	\begin{theorem}
		Let $X$ be a Lie system on  $N$ of the form $X={\displaystyle\sum_{\alpha=1}^r}b_\alpha(t)X_\alpha$ for certain $t$-dependent functions $b_1(t),\ldots,b_r(t)$ and an $r$-dimensional Vessiot--Guldberg Lie algebra $V=\langle X_1,\ldots,X_r\rangle$. Let
		$G$ be the unique connected and simply connected Lie group with Lie algebra 
		isomorphic to $V$. Let $\varphi:G\times N\rightarrow N$ be the local Lie group action
		whose fundamental vector fields are spanned by $X_1,\ldots,X_r$. Then, the general  form of the integral curves,   $x(t)$, of $X$ can be written as $x(t)=\varphi(g(t),x_0)$, where $x_0\in N$ and $g(t)$ is  the particular solution of the differential equation (\ref{Aut2}) associated with the automorphic Lie system on $G$ of the form
		$
		X^G(t,g)=-{\displaystyle \sum_{\alpha=1}^r}b_\alpha (t)X_\alpha^R(g).
		$
	\end{theorem}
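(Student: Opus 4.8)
The plan is to show directly that $t\mapsto \varphi(g(t),x_0)$ is an integral curve of $X$ whenever $g(t)$ solves the stated automorphic system on $G$, and then to invoke uniqueness of solutions of first-order ODEs to conclude that \emph{every} integral curve of $X$ has this form. First I would reduce to the case in which each $X_\alpha$ equals the fundamental vector field $X_{e_\alpha}$ of $\varphi$ associated with an element $e_\alpha$ of a basis $\{e_1,\dots,e_r\}$ of $\mathfrak{g}:=T_eG$: by hypothesis $\langle X_1,\dots,X_r\rangle$ is precisely the space of fundamental vector fields of $\varphi$, and $v\mapsto X_v$ is a linear isomorphism from $\mathfrak{g}$ onto it, so such a basis exists. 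Throughout I use the convention of Corollary~\ref{Re:DerAlg}, namely $X_v(x)=\frac{d}{dt}\big|_{t=0}\varphi(\exp(-tv),x)$.

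The computational core rests on two elementary facts about the maps $\psi_x\colon g\in G\mapsto \varphi(g,x)\in N$. From the action property $\varphi(g_1g_2,x)=\varphi(g_1,\varphi(g_2,x))$ one gets $\psi_x\circ R_h=\psi_{\varphi(h,x)}$ for all $h\in G$; and from the definition of $X_v$ one gets $T_e\psi_x(v)=-X_v(x)$ for all $v\in\mathfrak{g}$. Now fix $x_0\in N$ and an initial time $t_0$, let $g(t)$ be the local solution of $\dot g=X^G(t,g)=-\sum_{\alpha}b_\alpha(t)X^R_\alpha(g)$ with $g(t_0)=e$, and put $x(t):=\varphi(g(t),x_0)=\psi_{x_0}(g(t))$. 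Since $X^R_\alpha(g)=T_eR_g(e_\alpha)$, we have $\dot g(t)=T_eR_{g(t)}\big(-\sum_\alpha b_\alpha(t)e_\alpha\big)$, and combining the chain rule with $\psi_{x_0}\circ R_{g(t)}=\psi_{x(t)}$ gives
$$
\dot x(t)=T_{g(t)}\psi_{x_0}\big(\dot g(t)\big)=T_e\psi_{x(t)}\Big(-\sum_\alpha b_\alpha(t)e_\alpha\Big)=\sum_\alpha b_\alpha(t)X_{e_\alpha}(x(t))=X(t,x(t)),
$$
while $x(t_0)=\varphi(e,x_0)=x_0$. Hence $x(t)=\varphi(g(t),x_0)$ is an integral curve of $X$ through $x_0$ at time $t_0$.

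To obtain the general-form conclusion, I would take an arbitrary integral curve $y(t)$ of $X$ with $y(t_0)=y_0$ and form $\varphi(g(t),y_0)$ with $g(t)$ the solution of the automorphic system normalised by $g(t_0)=e$; by the previous step this is again an integral curve of $X$ with value $y_0$ at $t_0$, so uniqueness of solutions of first-order ODEs forces $y(t)=\varphi(g(t),y_0)$ on the common interval of definition. Thus every integral curve of $X$ is of the asserted form $x(t)=\varphi(g(t),x_0)$, with $g(t)$ the corresponding solution of $\dot g=-\sum_\alpha b_\alpha(t)X^R_\alpha(g)$.

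The step I expect to demand the most care is the bookkeeping of signs and sides rather than any genuine difficulty: the convention $X_v(x)=\frac{d}{dt}\big|_{0}\varphi(\exp(-tv),x)$ is exactly what forces the minus sign in $X^G=-\sum_\alpha b_\alpha X^R_\alpha$, via the identity $T_e\psi_x=-X_{(\cdot)}(x)$, and the use of \emph{right}-invariant generators $X^R_\alpha$ is what makes $\psi_{x_0}\circ R_{g(t)}=\psi_{x(t)}$ the pertinent identity; reversing either choice flips a sign. I would also note explicitly that, because $\varphi$ is only a local action and the $X_\alpha$ need not be complete, all the curves involved are a priori defined only on an open time interval about $t_0$, which is all that is needed for the statement.
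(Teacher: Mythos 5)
Your proof is correct. Note that the thesis states this theorem without proof, citing \cite{CGM00,LS20,Ve93,Ve99}, so there is no in-paper argument to compare against; your derivation via the two identities $\psi_x\circ R_h=\psi_{\varphi(h,x)}$ and $T_e\psi_x(v)=-X_v(x)$, followed by uniqueness of solutions of first-order ODEs, is precisely the standard argument from those references, and your sign bookkeeping (the minus in $X^G$ arising from the convention $X_v(x)=\frac{d}{dt}\big|_{t=0}\varphi(\exp(-tv),x)$ together with the use of right-invariant generators) is consistent with the conventions fixed in Corollary \ref{Re:DerAlg} and equation (\ref{Aut}).
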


\section{Jacobi structures}\label{Sec:JS}
	
	This section briefly introduces the basic theory on Jacobi structures, independently introduced by Kirillov and Lichnerowicz(see  \cite{HLS15, Ki76,Li78,Va94} for further details). If not otherwise stated, it is assumed that all results and structures are real and globally defined. Moreover, manifolds are assumed to be connected. 
	
	\begin{definition}
 A {\it Jacobi manifold} is a triple $(M,\Lambda,R)$, where $M$ is a manifold, $\Lambda$ is a bivector field on $M$, and $R$ is a vector field, the so-called {\it Reeb vector field}, satisfying
		$$
		[\Lambda,\Lambda]_{SN}=-2R\wedge \Lambda,\qquad [R,\Lambda]_{SN}=0,
		$$
		where $[\cdot,\cdot]_{SN}$ is the \textit{Schouten-Nijenhuis bracket} on $M$ (cf.  \cite{Va94}).
	\end{definition}
	To understand the relation of our results with the formalism in other works, e.g. \cite{LMP97}, it is convenient to clarify that it is commonly assumed that $[\Lambda,\Lambda]'=2R\wedge \Lambda$ and $[\Lambda,R]'=0$ by using the original Schouten bracket $[\cdot,\cdot]'$ given in \cite{Sc40,Ni55,Va94}. 
 
	\begin{example} Since a Poisson manifold is a pair $(M,\Lambda)$ where $\Lambda$ is a bivector field on $M$ such that $[\Lambda,\Lambda]_{SN} = 0$ \cite{Va94}, every Poisson manifold can be understood as a Jacobi manifold of the form $(M,\Lambda,R=0)$.
	\end{example}
	
	\begin{example}
		The {\it continuous Heisenberg group} \cite{We00} is the matrix Lie group
		\begin{equation}
		\mathbb{H}=\left\{\left(\begin{array}{ccc}1&x&z\\0&1&y\\0&0&1\end{array}
		\right)\bigg|\,x,y,z\in\mathbb{R}\right\}.
		\end{equation}
		Then, $\{x,y,z\}$ is a natural global coordinate system on $\mathbb{H}$. Consider the bivector field on $\mathbb{H}$ given by
		\begin{equation}
		\Lambda_\mathbb{H}= -y\frac{\partial}{\partial y}\wedge\frac{\partial}{\partial z}+\frac{\partial}{\partial x}\wedge\frac{\partial}{\partial y}
		\end{equation}
		and the vector field $R_\mathbb{H} := \partial/\partial z$. Thus,
		$$
		[\Lambda_\mathbb{H},\Lambda_\mathbb{H}]_{SN}=-2\frac{\partial}{\partial x}\wedge\frac{\partial}{\partial y}\wedge\frac{\partial}{\partial z}=- 2R_\mathbb{H}\wedge \Lambda_\mathbb{H},\qquad [R_\mathbb{H},\Lambda_\mathbb{H}]_{SN}=0.
		$$
		Hence, $(\mathbb{H},\Lambda_\mathbb{H},R_\mathbb{H})$ is a Jacobi manifold.
	\end{example}
	
	\begin{example}\label{SL2}	
		The Lie group $SL_2$ is a  matrix Lie group of the form 
		\begin{equation}\label{Hei}
		SL_2:=\left\{\left(\begin{array}{cc}a&b\\c&d\end{array}
		\right)\bigg|ad-cb=1,a,b,c,d\in\mathbb{R}\right\}.
		\end{equation}
		The elements of $SL_2$ on the open neighbourhood 
		$$U=\left\{\left(\begin{array}{cc}a&b\\c&d\end{array}
		\right)\in SL_2\bigg|a\neq 0\right\}
		$$ of the neutral element of $ SL_2$ can be parametrised via $\{a,b,c\}$, which becomes a  coordinate system on $U$. A short calculation shows that a basis of left-invariant vector fields on $U$ read
		$$
		X^L_1 := a\partial_a - b \partial_b + c \partial_c, \quad X^L_2 := a \partial_b, \quad X^L_3 := b \partial_a + \left(\frac{1+bc}{a}\right) \partial_c.
		$$
		These vector fields satisfy the standard commutation relations for $SL_2$, namely
		\begin{equation}\label{ConRel}
		[X^L_1,X^L_2]=2X^L_2,\qquad [X^L_1,X^L_3]=-2X^L_3,\qquad [X^L_2,X^L_3]=X^L_1.
		\end{equation}
		Then,
		$$
		\Lambda_{\rm SL} := X^L_3 \wedge X^L_2 = ab \partial_a \wedge \partial_b - (1+bc) \partial_b \wedge \partial_c, \quad R_{\rm SL} := X^L_1 = a \partial_a-b\partial_b+c\partial_c,
		$$
		give rise to  a Jacobi manifold $(SL_2,\Lambda_{\rm SL_2},R_{\rm SL_2})$. 
	\end{example}
	
Every bivector field $\Lambda$ on a manifold $N$ leads to a mapping $\Lambda^\sharp:T^*M\rightarrow TM$ of the form $\Lambda^\sharp(\theta_p)=\Lambda_p(\theta_p,\cdot)\in T_p^*M$, where $\Lambda_p$  is the value of $\Lambda$ at an arbitrary $p\in M$, for every $\theta_p\in T^*_pM$.
	
	\begin{definition} \label{jacobi_vect_field}
		Let $(M,\Lambda,R)$ be a Jacobi manifold. A vector field on $M$ is called {\it Hamiltonian} if there exists a function $f\in C^\infty(M)$ so that$$
		X=[\Lambda,f]_{SN}+fR=\Lambda^\sharp({\rm d}f)+fR,
		$$
		where $f$ is called a {\it Hamiltonian function} of $f$. If $f$ is additionally a first-integral of $R$, then $f$ is called a {\it good Hamiltonian function} and $X$ is called a {\it good Hamiltonian vector field} \cite{HLS15}.
	\end{definition}
	
	We hereafter write ${\rm Ham}(M,\Lambda,R)$ for the space of Hamiltonian vector fields relative to the Jacobi manifold $(M,\Lambda,R)$. Note that the Reeb vector field of $(M,\Lambda,R)$ is the Hamiltonian vector field relative to the constant function $f=1$.

 The notion of a Hamiltonian vector field relative to a Poisson manifold $(M,\Lambda)$ is retrieved from the above definition as, in such a case, the Poisson manifold is a Jacobi manifold $(M,\Lambda,R=0)$ and then the Hamiltonian vector field related to a function $f\in C^\infty(M)$ is  of the form $X_f=\Lambda^\sharp({\rm d}f)$. 

A Hamiltonian vector field relative to a Jacobi manifold $(M,\Lambda,R)$ may have different Hamiltonian functions. 
 Finally, it is worth noting that a vector field $X$ on $M$ is Hamiltonian relative to a Jacobian manifold $(M,\Lambda,R)$ if and only if $\mathcal{L}_X\Lambda=-(Rf)\Lambda$ for some function $f\in C^\infty(M)$.	
	
	\begin{example} Given the Jacobi manifold $(\mathbb{H},\Lambda_\mathbb{H},R_\mathbb{H})$ and the vector field $X_1^L= \partial/\partial x$, one has that
		$$
		X_1^L=[\Lambda_\mathbb{H},-y]_{SN}-yR_\mathbb{H}=\Lambda^\sharp_\mathbb{H}(-{\rm d}y)-yR_\mathbb{H}.
		$$
		Hence, $X_1^L$ is a Hamiltonian vector field with Hamiltonian function $h_1^{L}=-y$ relative to $(\mathbb{H},\Lambda_\mathbb{H},R_\mathbb{H})$.
	\end{example}
	
	\begin{example} 
		Consider again the Jacobi manifold $(SL_2,\Lambda_{\rm SL},R_{\rm SL})$ on $SL_2$ given in Example \ref{SL2}. Then, the function on $SL_2$ given by $f(a,b,c)=a$ on $U$ and vanishing off $U$ has a Hamiltonian vector field 
		$$
		X_f = [\Lambda_{\rm SL},a]_{SN}+aR_{\rm SL}=a^2\partial_a +ac \partial_c.
		$$ 
	\end{example}
	
	It stems from Definition \ref{jacobi_vect_field} that every  $f \in C^{\infty}(M)$ admits a unique Hamiltonian vector field relative to $(M,\Lambda,R)$. Meanwhile, a Hamiltonian vector field may have several Hamiltonian functions. A condition ensuring the uniqueness of the associated Hamiltonian function is given in the following proposition.
	
	\begin{proposition}
		Let $(M, \Lambda,R)$ be a Jacobi manifold such that the distribution $\mathcal{D}^R$ spanned by $R$ satisfies $\mathcal{D}^R\cap {\rm Im}\, \Lambda^\sharp=0$ for ${\rm Im} \,\Lambda^\sharp := \{\Lambda^\sharp({\rm d}f): f \in C^{\infty}(M)\}$. Then, every Hamiltonian vector field admits a unique Hamiltonian function.
	\end{proposition}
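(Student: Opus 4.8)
The plan is to reduce the uniqueness claim to a statement about the difference of two candidate Hamiltonian functions and then invoke the transversality hypothesis fibrewise. Suppose a Hamiltonian vector field $X$ admits two Hamiltonian functions $f_1,f_2\in C^\infty(M)$, so by Definition \ref{jacobi_vect_field} one has $X=\Lambda^\sharp(\mathrm{d}f_1)+f_1R=\Lambda^\sharp(\mathrm{d}f_2)+f_2R$. Put $g:=f_1-f_2$. Using the linearity of the exterior differential and of the bundle map $\Lambda^\sharp:T^*M\to TM$, subtracting the two expressions for $X$ gives the identity
\begin{equation*}
\Lambda^\sharp(\mathrm{d}g)=-gR .
\end{equation*}
Thus everything is reduced to showing that this forces $g\equiv 0$, which yields $f_1=f_2$ and hence the uniqueness of the Hamiltonian function.

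Next I would read the displayed identity pointwise. Fix $p\in M$. The left-hand side at $p$ lies in $\Lambda^\sharp_p(T^*_pM)$, i.e. in the fibre of $\mathrm{Im}\,\Lambda^\sharp$ at $p$, while the right-hand side $-g(p)R(p)$ lies in $\mathcal{D}^R_p$. Since the two vectors agree and, by hypothesis, $\mathcal{D}^R\cap\mathrm{Im}\,\Lambda^\sharp=0$ as generalised distributions, i.e. $\mathcal{D}^R_p\cap\Lambda^\sharp_p(T^*_pM)=\{0\}$ at every point, both vectors must vanish. In particular $g(p)R(p)=0$ for all $p\in M$ (and incidentally $\Lambda^\sharp(\mathrm{d}g)=0$, so $g$ is a Casimir-type function, though this will not be needed).

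Finally I would conclude by using that $\mathcal{D}^R$ is a genuine (regular) distribution, hence of constant rank; being spanned by the single vector field $R$, it is either of rank $0$, i.e. $R\equiv 0$, or of rank $1$, i.e. $R$ is nowhere vanishing. The first alternative is incompatible with the assertion itself — if $R\equiv 0$ then $X=\Lambda^\sharp(\mathrm{d}f)$ would also be the Hamiltonian field of $f+1$ — so under the proposition's hypotheses $R$ vanishes nowhere. Then $g(p)R(p)=0$ together with $R(p)\neq 0$ forces $g(p)=0$ for every $p\in M$, whence $f_1=f_2$.

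The computations involved are entirely routine; the only genuinely load-bearing step — and the sole place the hypothesis enters in an essential way — is the passage from the global identity $\Lambda^\sharp(\mathrm{d}g)=-gR$ to its fibrewise reading, where the assumption on $\mathcal{D}^R$ is precisely the transversality needed to annihilate both sides simultaneously. The only subtlety is excluding the degenerate Poisson situation $R\equiv 0$, for which the statement would otherwise fail, and this is taken care of by the regularity built into the word ``distribution'' as just explained.
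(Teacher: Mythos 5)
Your argument is correct and is essentially the paper's proof: subtract the two Hamiltonian expressions to obtain $\Lambda^\sharp(\mathrm{d}(f_1-f_2))=-(f_1-f_2)R$, invoke the transversality hypothesis pointwise to force $(f_1-f_2)R=0$, and conclude from the non-vanishing of $R$ that $f_1=f_2$. The only divergence is how the non-vanishing of $R$ is obtained: the paper reads the hypothesis as ``$R$ never takes values in $\mathrm{Im}\,\Lambda^\sharp$'', which, since the zero vector always lies in $(\mathrm{Im}\,\Lambda^\sharp)_p$, immediately gives $R_p\neq 0$ everywhere, whereas your exclusion of the case $R\equiv 0$ on the grounds that it is ``incompatible with the assertion itself'' is, as stated, circular — the hypotheses of a proposition are not constrained by its conclusion — although your appeal to the constancy of the rank of $\mathcal{D}^R$ does correctly reduce the issue to this single degenerate alternative.
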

	\begin{proof}
		If $f_1$ and $f_2$ are two different Hamiltonian functions of a vector field $X$ on $M$, then
		$$
		X =\Lambda^\sharp(\textrm{d}f_1) + f_1 R = \Lambda^\sharp(\textrm{d}f_2) + f_2 R\Longrightarrow
		0 = \Lambda^\sharp(\textrm{d}(f_1 - f_2)) + (f_1 - f_2)R.
		$$
		By assumption, $R \notin \textrm{Im}\,\Lambda^\sharp$. Thus, the above equality implies that $(f_1 - f_2)R = 0$. Since $R$ does not take values in ${\rm Im}\,\Lambda^\sharp$, it does not vanish and $f_1 = f_2$.
	\end{proof}

 \begin{definition} Every Jacobi manifold $(M,\Lambda,R)$ gives rise to a bracket on $\{\cdot,\cdot\}_{\Lambda,R}:C^\infty(M)\times C^\infty(M)\rightarrow C^\infty(M)$ of the form
 $$
 \{f,g\}_{\Lambda,R}:=\Lambda^\sharp(df,dg)+f Rg-gRf,\qquad \forall f,g\in C^\infty(M).
 $$
 \end{definition}
	The following proposition, whose proof can be found in \cite{Va02} or derived from the results in \cite{Va94}, will be of interest.
	
	\begin{proposition}\label{Prop:Use} Let $(M,\Lambda,R)$ be a Jacobi manifold and let $f\in C^\infty(M)$. Then,
		\begin{itemize}
			\item $X_{\{f,g\}}=[X_f,X_g]$,
			\item $[X_f,R]=-(Rf)R-\Lambda^\sharp({\rm d}(Rf))$,
			\item $\mathcal{L}_{X_f}\Lambda=-(Rf)\Lambda$.
		\end{itemize}
	\end{proposition}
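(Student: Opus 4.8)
The plan is to establish all three identities by direct computation in the Schouten--Nijenhuis calculus, using only the graded Leibniz and graded Jacobi rules of Proposition \ref{Pr:PropSchou} together with the two defining relations $[\Lambda,\Lambda]_{SN}=-2R\wedge\Lambda$ and $[R,\Lambda]_{SN}=0$. I abbreviate $H_f:=\Lambda^\sharp({\rm d}f)=[\Lambda,f]_{SN}$, so that $X_f=H_f+fR$, and I first record two auxiliary formulas coming from the graded Jacobi identity. Applied to $\Lambda,\Lambda,f$ it gives $[\Lambda,H_f]_{SN}=\frac12[[\Lambda,\Lambda]_{SN},f]_{SN}=-[R\wedge\Lambda,f]_{SN}$, which the graded Leibniz rule expands into a fixed combination of $(Rf)\Lambda$ and $R\wedge H_f$; applied to $R,\Lambda,f$, and using $[R,\Lambda]_{SN}=0$ together with $[R,f]_{SN}=Rf$, it gives $[R,H_f]_{SN}=[\Lambda,Rf]_{SN}=H_{Rf}=\Lambda^\sharp({\rm d}(Rf))$. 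I will also freely use the elementary Lie-bracket identities $[fR,gR]=f(Rg)R-g(Rf)R$ and $[H_f,gR]=(H_fg)R+g[H_f,R]$ for vector fields, as well as $\mathcal{L}_{X}W=[X,W]_{SN}$ for a vector field $X$ and a multivector $W$, and the fact that $[H_f,g]_{SN}=H_fg=:\{f,g\}_\Lambda$ is the first term of $\{f,g\}$.

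\emph{Second and third identities.} For the third, $\mathcal{L}_{X_f}\Lambda=[X_f,\Lambda]_{SN}=[H_f,\Lambda]_{SN}+[fR,\Lambda]_{SN}$: the first summand is supplied by the first auxiliary formula and graded antisymmetry, the second by the graded Leibniz rule applied to $[fR,\Lambda]_{SN}=-[\Lambda,fR]_{SN}$, in which the term $f[\Lambda,R]_{SN}$ vanishes because $[\Lambda,R]_{SN}=0$; adding the two, the $R\wedge H_f$ contributions cancel and one is left with $\mathcal{L}_{X_f}\Lambda=-(Rf)\Lambda$. For the second, $[X_f,R]_{SN}=[H_f,R]_{SN}+[fR,R]_{SN}$: the first summand equals $-[R,H_f]_{SN}=-\Lambda^\sharp({\rm d}(Rf))$ by the second auxiliary formula, and the second is the ordinary Lie bracket $[fR,R]=-(Rf)R$, whence $[X_f,R]_{SN}=-(Rf)R-\Lambda^\sharp({\rm d}(Rf))$. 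In particular $[H_f,R]=-\Lambda^\sharp({\rm d}(Rf))$, which is needed below.

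\emph{First identity.} Expand $[X_f,X_g]=[H_f,H_g]_{SN}+[H_f,gR]+[fR,H_g]+[fR,gR]$. The last three brackets are handled by the elementary identities above together with the value of $[H_f,R]$ just obtained, and they produce the ``$R$''-part $\{f,g\}R$ plus terms proportional to $(Rf)H_g$, $(Rg)H_f$ and to $\Lambda^\sharp({\rm d}(Rf))$, $\Lambda^\sharp({\rm d}(Rg))$. For the crucial bracket $[H_f,H_g]_{SN}=[[\Lambda,f]_{SN},[\Lambda,g]_{SN}]_{SN}$ I apply the graded Jacobi identity (to $[\Lambda,f]_{SN},\Lambda,g$) to split it as $[[H_f,\Lambda]_{SN},g]_{SN}+[\Lambda,[H_f,g]_{SN}]_{SN}$; since $[H_f,g]_{SN}=\{f,g\}_\Lambda$ is a function, the second summand is $[\Lambda,\{f,g\}_\Lambda]_{SN}=\Lambda^\sharp({\rm d}\{f,g\}_\Lambda)$, while the first summand is rewritten through the expression for $[H_f,\Lambda]_{SN}$ found in the previous paragraph, producing, after contracting with $g$, further terms proportional to $(Rf)H_g$, $(Rg)H_f$ and $\{f,g\}_\Lambda R$. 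Collecting everything and using $\{f,g\}=\Lambda^\sharp({\rm d}f,{\rm d}g)+fRg-gRf$ — so that $\Lambda^\sharp({\rm d}\{f,g\})=\Lambda^\sharp({\rm d}\{f,g\}_\Lambda)+\Lambda^\sharp({\rm d}(fRg-gRf))$ and, by $[R,\Lambda]_{SN}=0$, $R\{f,g\}_\Lambda=H_f(Rg)-H_g(Rf)$ — one checks that all the $H_f,H_g$- and $\Lambda^\sharp({\rm d}(R\cdot))$-corrections reorganise, leaving exactly $\Lambda^\sharp({\rm d}\{f,g\})+\{f,g\}R=X_{\{f,g\}}$.

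The serious work is this last identity: the structural reduction via the graded Jacobi identity is immediate, but one must keep careful track of every $R$-dependent correction term generated by $[\Lambda,\Lambda]_{SN}=-2R\wedge\Lambda$ — precisely the terms absent in the Poisson case — and confirm that they reassemble into $\{f,g\}R$ with no residue, which forces one to be scrupulously consistent about the sign convention for $[\cdot,\cdot]_{SN}$ at each step. A possible shortcut: the third identity together with $\mathcal{L}_{[X,Y]}=[\mathcal{L}_X,\mathcal{L}_Y]$ yields $\mathcal{L}_{[X_f,X_g]}\Lambda=\mathcal{L}_{X_{\{f,g\}}}\Lambda$, and similarly for $R$ via the second identity, which pins down $[X_f,X_g]=X_{\{f,g\}}$ whenever the nondegeneracy hypothesis of the preceding proposition holds; since that hypothesis is not assumed here, I would keep the direct computation as the main argument.
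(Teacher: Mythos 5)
The paper does not actually prove Proposition \ref{Prop:Use}: it delegates the proof to \cite{Va02} and \cite{Va94}, and the direct Schouten--Nijenhuis computation you outline is precisely the derivation found there, so your route is the intended one. Your two auxiliary formulas are correct consequences of the graded Jacobi identity together with $[\Lambda,\Lambda]_{SN}=-2R\wedge\Lambda$ and $[R,\Lambda]_{SN}=0$; the four-term expansion of $[X_f,X_g]$ is the right skeleton; and the cancellations you assert do occur. Concretely, in a self-consistent convention one gets $[H_f,\Lambda]_{SN}=-(Rf)\Lambda+R\wedge H_f$ and $[fR,\Lambda]_{SN}=-R\wedge H_f$, which gives the third identity, and $[H_f,H_g]=H_{\{f,g\}_\Lambda}-(Rf)H_g+(Rg)H_f-\{f,g\}_\Lambda R$, which, fed into your expansion together with $[H_f,gR]=\{f,g\}_\Lambda R-gH_{Rf}$, $[fR,H_g]=\{f,g\}_\Lambda R+fH_{Rg}$ and $[fR,gR]=(fRg-gRf)R$, reassembles exactly into $H_{\{f,g\}}+\{f,g\}R=X_{\{f,g\}}$.

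Two caveats. First, for the first identity your argument is a plan rather than a proof: ``one checks that all the corrections reorganise'' is the entire content of the hardest item and should be written out as above. Second, the sign issue you flag is sharper than you acknowledge, because the paper's own conventions are mutually inconsistent: the Leibniz and antisymmetry signs of Proposition \ref{Pr:PropSchou} force $[\Lambda,f]_{SN}=-\Lambda^\sharp({\rm d}f)$, whereas Definition \ref{jacobi_vect_field} identifies $[\Lambda,f]_{SN}$ with $+\Lambda^\sharp({\rm d}f)$. If you take both $H_f=[\Lambda,f]_{SN}=\Lambda^\sharp({\rm d}f)$ and $[\Lambda,\Lambda]_{SN}=-2R\wedge\Lambda$ literally with the stated Leibniz signs, your first auxiliary formula yields $[H_f,\Lambda]_{SN}=(Rf)\Lambda+R\wedge H_f$ and the third identity comes out as $+(Rf)\Lambda+2R\wedge H_f$, which is false. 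So ``being scrupulously consistent at each step'' is not enough: you must first repair one sign, either in the identification of $[\Lambda,f]_{SN}$ with $\Lambda^\sharp({\rm d}f)$ or in the sign of $[\Lambda,\Lambda]_{SN}=\pm2R\wedge\Lambda$ (the ambiguity the paper itself points to when comparing with \cite{LMP97}); with that single correction your computation closes. Your closing remark --- that the shortcut via $\mathcal{L}_{[X,Y]}=[\mathcal{L}_X,\mathcal{L}_Y]$ only pins down $[X_f,X_g]=X_{\{f,g\}}$ under an injectivity hypothesis not assumed here, so the direct computation must remain the main argument --- is correct and well judged.
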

	
	It stems from Proposition \ref{Prop:Use} that ${\rm Ham}(M,\Lambda,R)$ is a Lie algebra with respect to the standard Lie bracket of vector fields \cite{Va94}. Additionally, $C^\infty(M)$ can be endowed with a Lie algebra structure as shown below.
	
	\begin{proposition}\label{Prop:Inv}
		Let $(M, \Lambda,R)$ be a Jacobi manifold. The map $\{\cdot, \cdot\}_{\Lambda,R}$ is a Lie bracket on $C^{\infty}(M)$. This Lie bracket becomes a Poisson bracket if and only if $R=0$. Moreover, the morphism $\phi_{\Lambda,R}:f\in C^\infty(M)\mapsto X_f\in {\rm Ham}(\Lambda,R)$ is a Lie algebra morphism.
	\end{proposition}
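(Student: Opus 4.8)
The plan is to verify the three assertions in turn; only the Jacobi identity requires real work, and it reduces quickly to the first bullet of Proposition~\ref{Prop:Use}. Bilinearity and skew-symmetry of $\{\cdot,\cdot\}_{\Lambda,R}$ are immediate, since $\Lambda$ is a bivector and $fRg-gRf$ is manifestly antisymmetric in $f,g$. The first thing I would record is the elementary identity $\{f,g\}_{\Lambda,R}=X_f(g)-g\,Rf$ for all $f,g\in C^\infty(M)$: from $X_f=\Lambda^\sharp({\rm d}f)+fR$ one gets $X_f(g)={\rm d}g\big(\Lambda^\sharp({\rm d}f)\big)+fRg=\Lambda({\rm d}f,{\rm d}g)+fRg$, and subtracting $gRf$ reproduces the definition of $\{f,g\}_{\Lambda,R}$. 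Equivalently $X_f(k)=\{f,k\}_{\Lambda,R}+k\,Rf$ for every $k\in C^\infty(M)$, an identity I will use to pull Hamiltonian vector fields out of the bracket.

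Next I would establish the auxiliary relation $R\big(\{f,g\}_{\Lambda,R}\big)=X_f(Rg)-X_g(Rf)$. Expanding the left-hand side gives $R\big(\Lambda({\rm d}f,{\rm d}g)\big)+fR^2g-gR^2f$, and since $R\big(\Lambda({\rm d}f,{\rm d}g)\big)=(\mathcal{L}_R\Lambda)({\rm d}f,{\rm d}g)+\Lambda({\rm d}(Rf),{\rm d}g)+\Lambda({\rm d}f,{\rm d}(Rg))$ with $\mathcal{L}_R\Lambda=[R,\Lambda]_{SN}=0$ by the second defining condition of a Jacobi manifold, the left-hand side equals $\Lambda({\rm d}f,{\rm d}(Rg))+\Lambda({\rm d}(Rf),{\rm d}g)+fR^2g-gR^2f$, which is exactly $X_f(Rg)-X_g(Rf)$.

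For the Jacobi identity I would then write, using the pull-out identity together with the first bullet of Proposition~\ref{Prop:Use}, $\{\{f,g\}_{\Lambda,R},h\}_{\Lambda,R}=X_{\{f,g\}}(h)-h\,R(\{f,g\})=[X_f,X_g](h)-h\,R(\{f,g\})$; here the condition $[\Lambda,\Lambda]_{SN}=-2R\wedge\Lambda$ is what underlies the cited identity $X_{\{f,g\}}=[X_f,X_g]$ (alternatively the Jacobi identity can be checked by a direct Schouten-bracket computation, avoiding any appeal to that proposition). Expanding $[X_f,X_g](h)=X_f(X_g(h))-X_g(X_f(h))$, substituting $X_g(h)=\{g,h\}+hRg$ and applying the Leibniz rule for the derivation $X_f$, the two products $-hRgRf$ and $hRfRg$ cancel and one is left with $[X_f,X_g](h)=\{f,\{g,h\}\}-\{g,\{f,h\}\}+h\big(X_f(Rg)-X_g(Rf)\big)$. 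Inserting the auxiliary relation kills the last term against $h\,R(\{f,g\})$, yielding $\{\{f,g\},h\}=\{f,\{g,h\}\}-\{g,\{f,h\}\}$, which rearranges by skew-symmetry to $\{f,\{g,h\}\}+\{g,\{h,f\}\}+\{h,\{f,g\}\}=0$. The only delicate point is the bookkeeping in this last expansion, and it is purely mechanical.

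It remains to treat the two side claims. If $R=0$ then $\{f,g\}_{\Lambda,R}=\Lambda({\rm d}f,{\rm d}g)$ is a derivation in each slot, so together with the Jacobi identity just proved it is a Poisson bracket; conversely, a direct computation from the definition gives $\{f,gh\}_{\Lambda,R}-\big(\{f,g\}_{\Lambda,R}\,h+g\,\{f,h\}_{\Lambda,R}\big)=gh\,Rf$, so if the Leibniz rule holds then $gh\,Rf=0$ for all $f,g,h$, and putting $g=h=1$ forces $Rf=0$ for every $f$, i.e. $R=0$. Finally, $\phi_{\Lambda,R}:f\mapsto X_f=\Lambda^\sharp({\rm d}f)+fR$ is $\mathbb{R}$-linear, has image in ${\rm Ham}(\Lambda,R)$ by the definition of a Hamiltonian vector field, and satisfies $\phi_{\Lambda,R}\big(\{f,g\}_{\Lambda,R}\big)=X_{\{f,g\}}=[X_f,X_g]=[\phi_{\Lambda,R}(f),\phi_{\Lambda,R}(g)]$ by Proposition~\ref{Prop:Use}; since $({\rm Ham}(\Lambda,R),[\cdot,\cdot])$ is a Lie algebra, this shows $\phi_{\Lambda,R}$ is a Lie algebra morphism.
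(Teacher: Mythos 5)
Your proof is correct. Note that the paper itself states Proposition \ref{Prop:Inv} without proof, deferring (as with Proposition \ref{Prop:Use}) to the standard literature on Jacobi manifolds, so there is no in-text argument to compare against; your write-up supplies the missing details along the expected lines. The computations check out: the identity $\{f,g\}_{\Lambda,R}=X_f(g)-gRf$, the auxiliary relation $R(\{f,g\}_{\Lambda,R})=X_f(Rg)-X_g(Rf)$ obtained from $[R,\Lambda]_{SN}=0$, and the cancellation in the expansion of $[X_f,X_g](h)$ all work as you describe, and the Leibniz-defect computation $\{f,gh\}_{\Lambda,R}-\{f,g\}_{\Lambda,R}h-g\{f,h\}_{\Lambda,R}=ghRf$ gives the clean characterisation of the Poisson case. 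The only point worth flagging is the one you already anticipated: your Jacobi identity leans on the first bullet of Proposition \ref{Prop:Use}, which is legitimate here because that proposition is established independently of the Jacobi identity for $\{\cdot,\cdot\}_{\Lambda,R}$ (it follows directly from the Schouten-bracket conditions defining the Jacobi structure), so no circularity arises; your parenthetical offering a direct Schouten-bracket verification as a fallback is a sensible safeguard.
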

	
	While Proposition \ref{Prop:Inv} ensures that every Jacobi manifold gives rise to a Lie bracket that acts as a derivation of up to order one in each entry, the following proposition, whose proof can be found in \cite{Va94}, ensures the converse of this result.  
	\begin{proposition} A Jacobi manifold on $M$ amounts to a Lie bracket $\{\cdot,\cdot\}:C^\infty(M)\otimes C^\infty(M)\rightarrow C^\infty(M)$ which is a derivation of up to order one in each entry.
	\end{proposition}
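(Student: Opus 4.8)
The statement is an equivalence and one direction is already at hand: the discussion culminating in Proposition~\ref{Prop:Inv} shows that any Jacobi manifold $(M,\Lambda,R)$ gives the Lie bracket $\{f,g\}_{\Lambda,R}=\Lambda^\sharp({\rm d}f,{\rm d}g)+fRg-gRf$, and this bracket is manifestly a differential operator of order at most one in each argument (a sum of a biderivation and the order-zero piece $fRg-gRf$). So the real content is the converse: starting from a Lie bracket $\{\cdot,\cdot\}$ on $C^\infty(M)$ that is a derivation of up to order one in each entry, I would reconstruct $R$ and $\Lambda$ and then check the two structure equations $[\Lambda,\Lambda]_{SN}=-2R\wedge\Lambda$ and $[R,\Lambda]_{SN}=0$.

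First I would extract the Reeb field. Put $Rf:=\{1,f\}$. Since $\{\cdot,\cdot\}$ is skew one has $\{1,1\}=0$, and since $g\mapsto\{1,g\}$ has order at most one the Leibniz-up-to-order-one identity reads $\{1,fg\}=f\{1,g\}+g\{1,f\}-fg\{1,1\}=f\,Rg+g\,Rf$; hence $R$ is a genuine derivation of $C^\infty(M)$, i.e. a vector field. Next I would extract the bivector. Set $B(f,g):=\{f,g\}-fRg+gRf$. It is skew by construction, and using that $g\mapsto\{f,g\}$ has order at most one together with $\{f,1\}=-\{1,f\}=-Rf$ one checks $B(f,gh)=gB(f,h)+hB(f,g)$; thus $B$ is a biderivation and defines a unique bivector field $\Lambda$ with $\Lambda^\sharp({\rm d}f,{\rm d}g)=B(f,g)$. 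By the very definition of $B$ this gives $\{f,g\}=\Lambda^\sharp({\rm d}f,{\rm d}g)+fRg-gRf=\{f,g\}_{\Lambda,R}$. (The pair $(\Lambda,R)$ is forced: applying the same recipe to a bracket of the form $\{\cdot,\cdot\}_{\Lambda,R}$ returns the same $\Lambda$ and $R$, so the two constructions are mutually inverse.)

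It then remains to translate the Jacobi identity for $\{\cdot,\cdot\}=\{\cdot,\cdot\}_{\Lambda,R}$ into the two tensorial conditions, and I would do this in two stages. Stage one: evaluate the Jacobi identity at $f=1$. Using $\{1,\cdot\}=R(\cdot)$, $\{h,1\}=-Rh$ and $\{\cdot,\cdot\}=\{\cdot,\cdot\}_{\Lambda,R}$, a short substitution in which all $R^2$-terms and all $(Rg)(Rh)$-terms cancel collapses $\{1,\{g,h\}\}+\{g,\{h,1\}\}+\{h,\{1,g\}\}=0$ to $R(\Lambda^\sharp({\rm d}g,{\rm d}h))-\Lambda^\sharp({\rm d}(Rg),{\rm d}h)-\Lambda^\sharp({\rm d}g,{\rm d}(Rh))=0$, which is precisely $(\mathcal{L}_R\Lambda)({\rm d}g,{\rm d}h)=0$, i.e. $[R,\Lambda]_{SN}=0$. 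Stage two: with this in hand, expand the full Jacobiator $\{f,\{g,h\}_{\Lambda,R}\}_{\Lambda,R}+{\rm c.p.}$, splitting $\{g,h\}_{\Lambda,R}$ into its $\Lambda^\sharp$-part and its $R$-part. One then finds that (a) the purely $R$-dependent contributions cancel after cyclic summation, (b) the contributions quadratic in $\Lambda$ assemble, up to the standard normalisation, into $\tfrac12[\Lambda,\Lambda]_{SN}({\rm d}f,{\rm d}g,{\rm d}h)$ — the usual formula expressing the Schouten square of a bivector through its bracket — and (c) the terms linear in $R$ and linear in $\Lambda$ reduce, modulo $\mathcal{L}_R\Lambda$ (which now vanishes), to $(R\wedge\Lambda)({\rm d}f,{\rm d}g,{\rm d}h)$. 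Hence the Jacobi identity forces $\tfrac12[\Lambda,\Lambda]_{SN}+R\wedge\Lambda=0$, i.e. $[\Lambda,\Lambda]_{SN}=-2R\wedge\Lambda$, so $(M,\Lambda,R)$ is a Jacobi manifold; the reverse implication of this last equivalence is exactly the content of Proposition~\ref{Prop:Inv}.

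The genuinely delicate part is stage two: the honest bookkeeping of the cyclic expansion of the Jacobiator — tracking signs, pinning down the normalisation that identifies $\sum_{\rm cyc}\Lambda^\sharp({\rm d}f,{\rm d}(\Lambda^\sharp({\rm d}g,{\rm d}h)))$ with $\tfrac12[\Lambda,\Lambda]_{SN}({\rm d}f,{\rm d}g,{\rm d}h)$, and confirming that the mixed $R$--$\Lambda$ terms really collapse to $R\wedge\Lambda$ once $\mathcal{L}_R\Lambda=0$ is available. Everything else — producing $R$, producing $\Lambda$, and the $f=1$ specialisation — is a routine exploitation of the order-one (Leibniz-up-to-order-one) hypothesis, which is precisely where that hypothesis enters in an essential way.
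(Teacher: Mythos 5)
Your argument is correct and is essentially the classical Kirillov--Lichnerowicz proof: note, though, that the paper does not prove this proposition at all but defers it to \cite{Va94}, so there is no in-text argument to compare against. Your reconstruction of $R$ from $\{1,\cdot\}$, of $\Lambda$ from the biderivation remainder $B$, the specialisation of the Jacobi identity at $f=1$ to get $[R,\Lambda]_{SN}=0$, and the identification of the residual Jacobiator with $\tfrac12[\Lambda,\Lambda]_{SN}+R\wedge\Lambda$ all check out (in particular the purely-$R$ and mixed cancellations you describe do occur, and the surviving mixed terms are exactly $(Rf)\Lambda({\rm d}g,{\rm d}h)+(Rg)\Lambda({\rm d}h,{\rm d}f)+(Rh)\Lambda({\rm d}f,{\rm d}g)=(R\wedge\Lambda)({\rm d}f,{\rm d}g,{\rm d}h)$); the only part you leave as a description rather than a computation is the stage-two bookkeeping, which you flag honestly and which is routine.
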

	
	\begin{definition} Let $(M,\Lambda,R)$ be a Jacobi manifold. The {\it characteristic distribution}, $\mathcal{D}^{\Lambda,R}$, of $(M,\Lambda,R)$  is the distribution on $M$ spanned by the elements of ${\rm Ham}(M,\Lambda,R)$. 
	\end{definition}
	
	It stems form Proposition \ref{Prop:Use} that the characteristic distribution $\mathcal{D}^{\Lambda,R}$ of a Jacobi manifold $(M,\Lambda,R)$ is invariant under the flows of Hamiltonian vector fields. 
  Therefore, one obtains the following result (see \cite[Theorem 2.9'']{Va94} for details).
	
	\begin{theorem} The characteristic distribution of a Jacobi manifold $(M,\Lambda,R)$ is integrable.
	\end{theorem}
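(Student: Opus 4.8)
The plan is to verify the hypotheses of the Stefan--Sussmann integrability criterion for $\mathcal{D}^{\Lambda,R}$: that it is a differentiable generalised distribution in the sense of Section~\ref{Sec:StSus}, and that it is preserved by the flows of a family of vector fields spanning it, namely the Hamiltonian vector fields. The identity $X_f=\Lambda^\sharp({\rm d}f)+fR$ and the fact that $R=X_1$ is itself Hamiltonian give, at every $x\in M$, the pointwise description $\mathcal{D}^{\Lambda,R}_x=\Lambda^\sharp_x({\rm T}^*_xM)+\langle R_x\rangle$. Choosing a coordinate chart $(x^1,\dots,x^n)$ about a point and extending the coordinate functions to globally defined ones, the relation $\Lambda^\sharp({\rm d}x^i)=X_{x^i}-x^iR$ shows that the finitely many Hamiltonian vector fields $X_{x^1},\dots,X_{x^n},R$ span $\mathcal{D}^{\Lambda,R}$ on a neighbourhood of that point. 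Hence $\mathcal{D}^{\Lambda,R}$ is differentiable and is generated by the family ${\rm Ham}(M,\Lambda,R)$.

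Next I would show that $\mathcal{D}^{\Lambda,R}$ is closed under Lie brackets with Hamiltonian vector fields, which is where Proposition~\ref{Prop:Use} does the work. For all $f,g\in C^\infty(M)$ one has $[X_f,X_g]=X_{\{f,g\}}\in{\rm Ham}(M,\Lambda,R)$, while $[X_f,R]=-(Rf)R-\Lambda^\sharp({\rm d}(Rf))$ is again a local section of $\mathcal{D}^{\Lambda,R}$. Since the $X_{x^i}$ and $R$ locally span $\mathcal{D}^{\Lambda,R}$, it follows that $[X_f,Y]$ takes values in $\mathcal{D}^{\Lambda,R}$ for every Hamiltonian vector field $X_f$ and every vector field $Y$ taking values in $\mathcal{D}^{\Lambda,R}$. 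Differentiating the curve of pushforwards of such a $Y$ by the local flow of $X_f$ and invoking uniqueness for linear ODEs then yields that this flow maps $\mathcal{D}^{\Lambda,R}$ into itself; this is exactly the invariance under the flows of Hamiltonian vector fields already recorded after the definition of the characteristic distribution.

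With both facts in hand, I would conclude by invoking the Stefan--Sussmann integrability theorem in its flow-invariance form (cf. \cite{La18} and \cite[Thm.~2.9$''$]{Va94}): a differentiable generalised distribution spanned by a family $\mathcal{C}$ of vector fields and invariant under the flows of the elements of $\mathcal{C}$ is integrable, its strata through $x\in M$ being the orbits $\{\exp(t_1X_{f_1})\circ\cdots\circ\exp(t_sX_{f_s})(x) : s\in\mathbb{N},\ f_i\in C^\infty(M),\ t_i\in\mathbb{R}\}$ of the pseudogroup generated by the flows of $\mathcal{C}$, in complete analogy with \eqref{eq:dec}. Applying this with $\mathcal{C}={\rm Ham}(M,\Lambda,R)$ gives the claim.

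The main obstacle is that, unlike in Theorem~\ref{Th:StSus}, the space ${\rm Ham}(M,\Lambda,R)$ is in general infinite-dimensional, so Hermann's finite-dimensional statement does not apply directly and one genuinely needs the flow-invariance version of the Stefan--Sussmann theorem; as the non-integrable singular distribution discussed in Section~\ref{Sec:StSus} shows, the mere bracket (involutivity) condition is not sufficient in the singular setting. The technical heart is therefore the verification that the flows of Hamiltonian vector fields preserve $\mathcal{D}^{\Lambda,R}$, and this is delivered cleanly by the three identities of Proposition~\ref{Prop:Use}, which are tailored precisely so that $\mathcal{D}^{\Lambda,R}$ is stable under brackets with Hamiltonian vector fields.
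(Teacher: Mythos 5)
Your argument is correct and follows the same route the paper takes: establish that $\mathcal{D}^{\Lambda,R}$ is a differentiable generalised distribution locally spanned by finitely many Hamiltonian vector fields, use the identities of Proposition \ref{Prop:Use} to show it is preserved by the flows of Hamiltonian vector fields, and then invoke the flow-invariance form of the Stefan--Sussmann theorem (the paper simply delegates this last step to \cite[Theorem 2.9'']{Va94}). The only slight overstatement is the claim that $[X_f,Y]$ lies in $\mathcal{D}^{\Lambda,R}$ for \emph{every} section $Y$ of the distribution --- for a singular distribution an arbitrary section need not be a smooth combination of the local spanning fields --- but this is harmless, since your actual flow argument only requires the brackets $[X_f,X_{x^i}]$ and $[X_f,R]$, which you correctly express as smooth combinations of the spanning fields.
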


	The integral leaves of the characteristic distribution of a Jacobi manifold can be divided into two types: even- and odd-dimensional. Since ${\rm Im}\,\Lambda^\sharp$ is an even-dimensional distribution,  even-dimensional characteristic distributions are those in which $R\in {\rm Im}\,\Lambda^\sharp$. Meanwhile, odd-dimensional ones are those one where $R\notin {\rm Im}\,\Lambda^\sharp$. A characteristic distribution $\mathcal{D}^{\Lambda,R}$ on $M$ is said to be {\it transitive} when $\mathcal{D}^{\Lambda,R}=TM$. In the case of a Poisson manifold $(M,\Lambda)$, the characteristic distribution in the Jacobi sense induced by $(M,\Lambda,R=0)$ retrieves the characteristic distribution of the Poisson manifold, which is given by ${\rm Im }\, \Lambda^\sharp$.

	\subsection{Jacobi manifolds and contact manifolds}\label{La:JacCon}
	
	This section  recalls that every transitive Jacobi manifold on an odd-dimensional manifold amounts to a so-called {\it co-orientable contact manifold}. In particular, this applies to the odd-dimensional leaves of the characteristic distribution of a Jacobi manifold. Let us recall the definition of a co-orientable contact manifold. For simplicity, and according to the standard convention in the literature, co-orientable contact manifolds will be simply called contact manifolds \cite{BW58}. 
	
	\begin{definition} A {\it contact manifold} is a pair $(M,\theta)$, where $M$ is a $(2m+1)$-dimensional manifold and $\theta$ is a differential one-form on $M$ satisfying that $\theta\wedge ({\rm d}\theta)^m\neq 0$. We call {\it Reeb vector field} of $(M,\theta)$ the unique vector field $R$ on $M$ such that $\iota_R\theta=1$ and $\iota_R{\rm d}\theta=0$.
	\end{definition}

 A Hamiltonian vector field on a contact manifold $(M,\theta)$ is a vector field on $M$ such that there exists a function $f\in C^\infty(M)$ so that $f=-\iota_X\theta$ and 
 $$
 df=\iota_Xd\theta+(Rf)\theta.
 $$
 It can proved that a vector field $X$ on $M$ is Hamiltonian relative to the contact form $\theta$ if and only if $\mathcal{L}_X\theta=-h\theta$ for some $h\in C^\infty(M)$. In such a case, $h=Rf$. 
 
	Let us show that every odd-dimensional leaf $\mathcal{F}$ of $\mathcal{D}^{\Lambda,R}$, the characteristic distribution of a Jacobi manifold $(M, \Lambda, R)$, amounts to a contact manifold. In this case, $R_x\notin ({\rm Im}\,\Lambda^\sharp)_x$ at every point $x \in \mathcal{F}$ and thus, $\mathcal{D}^{\Lambda,R}_x=({\rm Im}\, \Lambda^\sharp)_x\oplus \langle R_x\rangle$. Consequently,  there exists a unique differential one-form $\theta$ on $\mathcal{F}$, such that 
	$$
	\iota_R\theta=-1,\qquad \iota_{\Lambda^\sharp (\xi)}\theta=0,\qquad \forall \xi\in \Omega^1(\mathcal{F}).
	$$
It stems from Proposition \ref{Prop:Use} that $\iota_R{\rm d}\theta=0$. 
	Finally, one has to show that $\theta \wedge ({\rm d}\theta)^m \neq 0$. To prove it, one can use that  
	\begin{equation}\label{Eq:Exp}
	\iota_{\Lambda^\sharp(\eta)}{\rm d}\theta=-(\iota_R\eta)\theta+\eta,\qquad \forall \eta \in \Omega^1(\mathcal{F}).
	\end{equation}
	The above formula yields  that $\theta$ does not belong to the image of the mapping ${\rm d}\theta^{\flat}: v_p \in TM \mapsto {\rm d}\theta_p^{\flat}(v_p, \cdot) \in T^*M$. In fact, if $\theta=\iota_{\Upsilon}{\rm d}\theta$, for $\Upsilon\in \Omega^1(\mathcal{F})$, then (\ref{Eq:Exp}) shows that every $\eta\in \Omega^1(\mathcal{F})$ belongs to the image of ${\rm d}\theta^{\flat}$. This is impossible as the image of every two-form has even rank while  $\mathcal{F}$ is odd-dimensional. Moreover, (\ref{Eq:Exp}) shows that ${\rm d}\theta^{\flat}$ has  rank $2m$. Therefore, $\theta\wedge ({\rm d}\theta)^m\neq 0$.
	
	Conversely,  a contact manifold gives rise to a Jacobi manifold as follows. The Reeb vector field of the Jacobi manifold is assumed to be minus the Reeb vector field, $R$, of the contact one. Next, $\Lambda$ is defined to be  the mapping $\Lambda^\sharp:T^*M\rightarrow TM$ determined by the conditions
	$$
	\iota_{\Lambda^\sharp(\eta)}\theta=0,\qquad \iota_{\Lambda^\sharp(\eta)}{\rm d}\theta=\eta-(\iota_R\eta)\theta.
	$$
	In fact, since ${\rm d}\theta^{\flat}:TM\rightarrow T^*M$ is a $2m$-dimensional rank map, the second condition determines the value of $\Lambda^\sharp (\eta)$ up to an element of $\ker {\rm d}\theta$. Since $\iota_{\Lambda^\sharp(\eta)}\theta=0$, this fixes exactly the value of $\Lambda^\sharp$ and, therefore, $\Lambda$. From these conditions, it follows that $[\Lambda,\Lambda]_{SN}=-R\wedge \Lambda$ and $[R,\Lambda]_{SN}=0$. A simple way to prove the latter facts is to write $\theta={\rm d}z-\sum_{i=1}^np_i {\rm d}x^i$ using local  canonical coordinates $\{z,x^i,p_i\}$ via the Darboux theorem for contact forms. Then, $\Lambda=-\sum_{i=1}^n(\partial_{x_i}+p_i\partial_{z})\wedge \partial_{p_i}$ and $R=-\partial_z$. It is immediate that $[R,\Lambda]_{SN}=0$, whilst
	$$
	[\Lambda,\Lambda]_{SN}=-2\partial_{z}\wedge \sum_{i=1}^n\left[(\partial_{x_i}+p_i\partial_{z})\wedge \partial_{p_i}\right]=-2R\wedge \Lambda.
	$$
	
	\subsection{Jacobi manifolds and locally conformal symplectic manifolds}
	This section shows that every transitive Jacobi manifold $(M,\Lambda, R)$ on an even-dimensional $M$ amounts to a  so-called locally conformal symplectic manifold (see \cite{Va84,Va13} for details).  In particular, this applies to the even-dimensional leaves of the characteristic distribution of a Jacobi manifold $(M,\Lambda,R)$. There exists a plethora of works in the literature dealing with locally conformally  symplectic manifolds, in particular, they appear in the study of Hamilton-Jacobi theories for mechanical systems \cite{Ba02,ELSZ21}. 
	
	\begin{definition}A {\it locally conformal symplectic (l.c.s) form}  on $M$ is a non-degenerate differential two-form $\Theta$ on $M$ such that 
		$$
		{\rm d}\Theta=\eta\wedge \Theta
		$$
		for a closed-one form $\eta$ on $M$, the so-called {\it Lee form} \cite{Ba02}. The triple $(M,\Theta,\eta)$ is called a {\it l.c.s. manifold}. 
	\end{definition}
	
	It is worth noting that $\eta$ is unique. A l.c.s. structure $(M,\Theta,\eta)$ allows us to define a new cohomology by means of the operator ${\rm d}_\eta:\Upsilon\in \Omega(M)\mapsto {\rm d}\Upsilon-\eta\wedge \Upsilon\in \Omega(M)$, i.e. ${\rm d}_\eta^2=0$. 
	
	\begin{definition} Given a l.c.s. manifold $(M,\Theta,\eta)$, its Reeb vector field is the unique vector field $R$ on $M$ such that
	$$
	\iota_R\Theta=\eta.
	$$
	\end{definition}
	
	\begin{definition}A {\it Hamiltonian vector field} relative to a l.c.s. manifold $(M,\Theta,\eta)$ is a vector field $X$ on $M$ satisfying that there exists $f\in C^\infty(M)$ such that ${\rm d}_\eta f=\iota_{X}\Theta$.
	\end{definition}
	
	Since $\Theta$ is non-degenerate, every $f\in C^\infty(M)$ gives rise to an associated Hamiltonian vector field $X_f$ on $M$ relative to $(M,\Theta,\eta)$. Moreover,
	one can endow $C^\infty(M)$ with a Poisson structure $\{\cdot,\cdot\}_{\Theta}:C^\infty(N)\times C^\infty(N)\rightarrow C^\infty(N)$ by defining
	$$
	\{f,g\}_{\Theta}:=\Theta(X_g,X_f)=\iota_{X_f}{\rm d}_\eta g,\qquad \forall f,g\in C^\infty(M).
	$$
	It is worth noting that Hamiltonian vector fields do not need to be Lie symmetries of $\Theta$. Instead,
	$$
	\mathcal{L}_{X_f}\Theta=(\iota_{X_f} \eta)\Theta,\qquad \forall f\in C^\infty(M).
	$$
	If $\iota_{X_f}\eta=0$, we say that $X_f$ is horizontal. Moreover, $\iota_{X_f}\eta = 0$ if and only if $Rf = 0$. Indeed, since $\iota_R\eta = 0$, it follows that
	$$
	Rf = \iota_R {\rm d}f = \iota_R {\rm d}_{\eta} f = \iota_R \iota_{X_f} \Theta = - \iota_{X_f} \iota_R \Theta = - \iota_{X_f} \eta. 
	$$
	
	 Assume that the bivector field $\Lambda$ of a Jacobi manifold $(M,\Lambda,R)$ such that $\Lambda^\sharp: T^*M \to TM$ is invertible and it gives rise to a non-degenerate two-form $\Theta$ on $M$ such that $\Theta^\flat = (\Lambda^\sharp)^{-1}$. The following proposition shows how $\Theta$ gives rise to a locally conformal symplectic manifold on $M$ (see \cite{Va02}).
	
	\begin{proposition}
 Let $\Theta^\flat=\Lambda^{\sharp -1}$, where $\Lambda$ is  the bivector field of a Jacobi manifold $(M,\Lambda,R)$ and $\iota_{-R}\Theta=\eta$. Then, 
		$$
		{{\rm d}\Theta}=[{(\Lambda^\sharp)^{-1}\otimes (\Lambda^\sharp)^{-1}]([\Lambda,\Lambda]_{SN})}=\eta\wedge \Theta,\qquad \mathcal{L}_R\Theta=0.
		$$
	\end{proposition}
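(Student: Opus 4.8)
The plan is to prove first the general tensorial identity ${\rm d}\Theta=[(\Lambda^\sharp)^{-1}\otimes(\Lambda^\sharp)^{-1}\otimes(\Lambda^\sharp)^{-1}]([\Lambda,\Lambda]_{SN})$ for an \emph{arbitrary} non-degenerate bivector field $\Lambda$ together with its associated two-form $\Theta$ (defined by $\Theta^\flat=(\Lambda^\sharp)^{-1}$), and then to specialise it by means of the two defining relations of a Jacobi manifold, $[\Lambda,\Lambda]_{SN}=-2R\wedge\Lambda$ and $[R,\Lambda]_{SN}=0$. The identity $\mathcal{L}_R\Theta=0$ will be derived separately and quickly from $[R,\Lambda]_{SN}=0$.

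For the general identity, observe that both sides are $C^\infty(M)$-multilinear three-forms, so to check their equality it is enough to evaluate them on any local frame of vector fields. Since $\Lambda^\sharp$ is a fibrewise isomorphism, the vector fields $\Lambda^\sharp({\rm d}f)$ with $f\in C^\infty(M)$ already span $T_pM$ at every $p\in M$, so it suffices to work with $X_i:=\Lambda^\sharp({\rm d}f_i)$ for $i=1,2,3$. Writing $\{f,g\}:=\Lambda({\rm d}f,{\rm d}g)$, one has $\Theta(X_i,X_j)=\langle{\rm d}f_i,X_j\rangle=\{f_j,f_i\}$ and $X_ig=\{g,f_i\}$. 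Expanding ${\rm d}\Theta(X_1,X_2,X_3)$ by the Cartan formula and rewriting each Lie-bracket term through $[X_i,X_j]g=X_i(X_jg)-X_j(X_ig)$ turns the whole expression into nested brackets of $f_1,f_2,f_3$, which after collecting terms equals a fixed multiple of the Jacobiator $\{f_1,\{f_2,f_3\}\}+\{f_2,\{f_3,f_1\}\}+\{f_3,\{f_1,f_2\}\}$. By the properties of the Schouten--Nijenhuis bracket (Definition \ref{SNbracket} and Proposition \ref{Pr:PropSchou}), and as already recalled for the Poisson case, this Jacobiator is, up to the very same universal constant, the value of $[\Lambda,\Lambda]_{SN}$ on $({\rm d}f_1,{\rm d}f_2,{\rm d}f_3)=(\Theta^\flat X_1,\Theta^\flat X_2,\Theta^\flat X_3)$, i.e. exactly the right-hand side of the claimed identity evaluated on that triple. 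Matching the two computations establishes the general identity. I expect the comparison of these two expansions --- keeping track of all the signs and of the chosen normalisations of $\wedge$ and of $\Theta^\flat$ relative to $\Lambda^\sharp$ --- to be the only genuinely delicate point.

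With the general identity in hand, substitute the first Jacobi relation $[\Lambda,\Lambda]_{SN}=-2R\wedge\Lambda$ into its right-hand side. Lowering the three indices with $\Theta^\flat=(\Lambda^\sharp)^{-1}$, the slot carrying $R$ becomes $\iota_R\Theta=-\eta$, while lowering the two indices of $\Lambda$ with its own inverse recovers $\Theta$ up to sign (a linear-algebra fact about inverse skew-symmetric matrices); hence the three-form attached to $-2R\wedge\Lambda$ is a scalar multiple of $\eta\wedge\Theta$, and the numerical constants combine so that ${\rm d}\Theta=\eta\wedge\Theta$, which is the second asserted equality.

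Finally, for $\mathcal{L}_R\Theta=0$: since $\Lambda$ is a bivector field, $[R,\Lambda]_{SN}=\mathcal{L}_R\Lambda$, so the second Jacobi relation gives $\mathcal{L}_R\Lambda=0$ and therefore $\mathcal{L}_R\Lambda^\sharp=0$ as a bundle map $T^*M\to TM$. Applying $\mathcal{L}_R$ to the identity $\Lambda^\sharp\circ\Theta^\flat={\rm id}_{TM}$ yields $\Lambda^\sharp\circ(\mathcal{L}_R\Theta^\flat)=0$, and since $\Lambda^\sharp$ is injective we obtain $\mathcal{L}_R\Theta^\flat=0$, that is $\mathcal{L}_R\Theta=0$. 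No real difficulty arises in this last step.
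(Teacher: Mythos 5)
The paper does not actually prove this proposition: it states it and refers the reader to Vaisman's work, so there is no in-paper argument to compare against. Your proposal supplies the standard proof, and its architecture is sound: the identity ${\rm d}\Theta=c\,[(\Lambda^\sharp)^{-1}]^{\otimes 3}([\Lambda,\Lambda]_{SN})$ for a non-degenerate bivector is indeed proved exactly as you describe (evaluate on the frame $X_i=\Lambda^\sharp({\rm d}f_i)$, use Cartan's formula, note that $\Theta(Y,X_3)=-Yf_3$ even when $Y$ is not of Hamiltonian type, and reduce everything to the Jacobiator of $\{f,g\}=\Lambda({\rm d}f,{\rm d}g)$, which is a universal multiple of $[\Lambda,\Lambda]_{SN}({\rm d}f_1,{\rm d}f_2,{\rm d}f_3)$); the substitution of $[\Lambda,\Lambda]_{SN}=-2R\wedge\Lambda$ and the index-lowering then give a multiple of $\eta\wedge\Theta$; and your derivation of $\mathcal{L}_R\Theta=0$ from $[R,\Lambda]_{SN}=\mathcal{L}_R\Lambda=0$ via $\Lambda^\sharp\circ\Theta^\flat={\rm id}$ is complete and correct. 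You also silently and correctly repair the statement, which as printed applies only two copies of $(\Lambda^\sharp)^{-1}$ to a trivector.

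The one substantive reservation is that the entire quantitative content of the first two equalities --- that the universal constant is exactly $1$, so that no stray factor of $2$ or sign survives in ${\rm d}\Theta=\eta\wedge\Theta$ --- is asserted twice ("up to the very same universal constant", "the numerical constants combine") rather than computed. This is not an idle worry here: the paper itself points out that its convention differs from the "original" Schouten bracket by precisely such normalisations ($[\Lambda,\Lambda]'=2R\wedge\Lambda$ versus $[\Lambda,\Lambda]_{SN}=-2R\wedge\Lambda$), the factor $-2$ in the Jacobi condition must cancel against the $\tfrac12$ coming from the Jacobiator identity and against the sign in $\iota_{-R}\Theta=\eta$, and the wedge normalisation of $R\wedge\Lambda$ versus $\eta\wedge\Theta$ contributes its own combinatorial factor. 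A complete proof should carry out this bookkeeping once, e.g. in the Darboux-type local frame the paper uses at the end of Section \ref{La:JacCon}, or by the coordinate computation $\Theta_{ia}\Theta_{jb}\Lambda^{ab}=-\Theta_{ij}$. As it stands your argument establishes the identities up to a nonzero universal constant, which is the right skeleton but leaves the only delicate step unverified.
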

	
	Conversely, a l.c.s. manifold $(M,\Theta,\eta)$ gives rise to a transitive Jacobi manifold on $M$. In fact, the Reeb vector field $R$ of the Jacobi manifold is the only vector field on $M$ satisfying that 
	$
	\iota_R\Theta=\eta$, while $\Lambda$ is the only bivector field on $M$ whose induced $\Lambda^\sharp:T^*M\rightarrow TM$ satisfies that
	$$
	\iota_{\Lambda^\sharp (\beta)}\Theta=\beta,\qquad \forall \beta \in \Omega^1(M).
	$$
	
	\begin{definition} Let $(M,\Theta,\eta)$ be a locally conformal symplectic manifold.  A {\it good Hamiltonian} vector field relative to $(M,\Theta,\eta)$ is a vector field on $M$ such that
		$$
		\iota_{X}\Theta={\rm d}_\eta f,\qquad \mathcal{L}_X\Theta=0.
		$$
	\end{definition}

\chapter{Algebraic methods in the classification of three-dimensional Lie bialgebras}\label{Ch:alg_met}

In this Chapter, based on \cite{LW20}, we present a novel algebraic approach to the classification of coboundary Lie biaglebras. It relies on the so-called $\mathfrak{g}$-invariant maps, which allow to identify equivalent r-matrices. This notion and associated results are discussed at the beginning of the chapter. A separate part of this chapter is devoted to the analysis of a modified Yang--Baxter equation (mCYBE, for short). We show that it is sufficient to study this equation in the reduced Grassmann algebra $\Lambda_R \mathfrak{g} := \bigoplus_{n \in \mathbb{N}} \Lambda^n \mathfrak{g} / (\Lambda^n \mathfrak{g})^{\mathfrak{g}}$. We also demonstrate how other structures can be reduced to this space. Since $\mathfrak{g}$-invariant elements play a notable role in this procedure, we discuss several methods to obtain them. Among others, we focus on a technique which employs a Lie algebra gradation. Finally, all these tools are tested in the classification of equivalent r-matrices for three-dimensional Lie algebras. Our results are summarised in Table. We retrieve the classification presented in \cite{FJ15,Go00}.

	\section{The \texorpdfstring{$\mathfrak{g}$}{}-invariant maps on Grassmann algebras \texorpdfstring{$\Lambda\mathfrak{g}$}{}} \label{Ch:alg_Sec:inv}

Many notions introduced for Lie algebras can be extended to Grassmann algebras and even for general $\mathfrak{g}$-modules. Let us focus particularly on the ad-invariance.
	
	\begin{definition} \label{ginv}
		A $k$-linear map $b:V^{\otimes k}\rightarrow \mathbb{R}$ is \textit{$GL(\rho)$-invariant} relative to a $\mathfrak{g}$-module $(V,\rho)$ if $T^* b = b$ for every $T \in GL(\rho)$, i.e. $b(Tx_1, \ldots, Tx_k) = b(x_1, \ldots, x_k)$ for all $x_1, \ldots, x_k \in V$. Moreover, $b$ is {\it $\mathfrak{g}$-invariant} relative to  $(V, \rho)$ if 
		\begin{equation}\label{ginva}
			b(\rho_v(x_1), \ldots, x_k) + \ldots+b(x_1,\ldots,\rho_v(x_k))= 0,\qquad \forall v\in \mathfrak{g},\quad \forall x_1,\ldots,x_k\in V.
		\end{equation}
	\end{definition}

As explained next, $\mathfrak{g}$-invariance can be interpreted as an extension of ad-invariance to arbitrary $\mathfrak{g}$-modules. 

	\begin{example} 
Let us consider the bilinear form $\kappa_{\rho}$ on $\mathfrak{g}$ induced by a $\mathfrak{g}$-module $(V,\rho)$ of the form $\kappa_{\rho}(v_1, v_2) := \textrm{tr}(\rho_{v_1} \circ \rho_{v_2})$ for every $v_1,v_2 \in \mathfrak{g}$. It satisfies that $\kappa_{\rho}({\rm ad}_{v}(v_1), v_2) + \kappa_{\rho}(v_1, {\rm ad}_{v}( v_2)) = 0$ for all $v, v_1, v_2 \in \mathfrak{g}$ (cf. \cite{SW73}). Note that the above notion reduces to the usual Killing form for the $\mathfrak{g}$-module $(\mathfrak{g},{\rm ad})$. Thus, the Killing form $\kappa_\mathfrak{g}$ is $\mathfrak{g}$-invariant relative to $(\mathfrak{g}, \textrm{ad})$. 
	\end{example}
	
Given a $\mathfrak{g}$-module $(V,\rho)$, both introduced notions of invariance for $k$-linear maps, the invariance relative to $GL(\rho)$ and $\mathfrak{g}$-invariance, are related, as shown in Proposition \ref{prop:glrho_ginv}. The proof, based on the use of the exponential map, is quite immediate.
	
	\begin{proposition}\label{prop:glrho_ginv}
		A $k$-linear map $b:V^{\otimes k} \to \mathbb{R}$ is $GL(\rho)$-invariant relative to a $\mathfrak{g}$-module $(V,\rho)$ if and only if $b$ is $\mathfrak{g}$-invariant relative to $(V,\rho)$.
	\end{proposition}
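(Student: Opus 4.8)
The plan is to use the exponential map to translate between the infinitesimal condition ($\mathfrak{g}$-invariance) and the group-level condition ($GL(\rho)$-invariance), exploiting the fact established in Lemma \ref{Lio} that $GL(\rho)$ is generated by the elements $\exp(\rho_v)$ with $v \in \mathfrak{g}$.

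\textbf{From $\mathfrak{g}$-invariance to $GL(\rho)$-invariance.} Suppose $b$ is $\mathfrak{g}$-invariant. Fix $v \in \mathfrak{g}$ and $x_1,\ldots,x_k \in V$, and consider the function $\beta(t) := b(\exp(t\rho_v)x_1,\ldots,\exp(t\rho_v)x_k)$ for $t \in \mathbb{R}$. First I would differentiate $\beta$ with respect to $t$: by the product rule and linearity of $b$ in each slot, $\beta'(t)$ is a sum of $k$ terms, the $i$-th being $b(\exp(t\rho_v)x_1,\ldots,\rho_v\exp(t\rho_v)x_i,\ldots,\exp(t\rho_v)x_k)$. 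Applying the $\mathfrak{g}$-invariance identity \eqref{ginva} with the vectors $\exp(t\rho_v)x_j$ shows $\beta'(t) = 0$ for all $t$, hence $\beta$ is constant, so $\beta(1) = \beta(0)$, i.e. $b(\exp(\rho_v)x_1,\ldots,\exp(\rho_v)x_k) = b(x_1,\ldots,x_k)$. Thus $b$ is invariant under each generator $\exp(\rho_v)$. Since every $T \in GL(\rho)$ is a finite product of such generators (Lemma \ref{Lio}), and invariance under each factor of a product composes, $b$ is $GL(\rho)$-invariant.

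\textbf{From $GL(\rho)$-invariance to $\mathfrak{g}$-invariance.} Conversely, suppose $T^*b = b$ for all $T \in GL(\rho)$. For fixed $v \in \mathfrak{g}$, apply this with $T = \exp(t\rho_v) \in GL(\rho)$ for each $t \in \mathbb{R}$, obtaining $b(\exp(t\rho_v)x_1,\ldots,\exp(t\rho_v)x_k) = b(x_1,\ldots,x_k)$ identically in $t$. Differentiating both sides at $t = 0$ and using the product rule as above yields exactly $\sum_{i=1}^k b(x_1,\ldots,\rho_v(x_i),\ldots,x_k) = 0$, which is \eqref{ginva}. Since $v$ and the $x_i$ were arbitrary, $b$ is $\mathfrak{g}$-invariant.

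\textbf{Main obstacle.} There is no deep obstacle here; the content is bookkeeping. The one point requiring a little care is the justification that $\beta$ is differentiable and that differentiation passes through the multilinear map $b$ — this is immediate because $b$ is a fixed multilinear (hence smooth) map on a finite-dimensional space and $t \mapsto \exp(t\rho_v)x_i$ is smooth. The only genuinely non-elementary input is Lemma \ref{Lio}, guaranteeing that $\exp(\rho(\mathfrak{g}))$ generates $GL(\rho)$, which is what lets the one-parameter-subgroup argument upgrade to all of $GL(\rho)$; without connectedness of the relevant group this step would fail, but that is precisely what Lemma \ref{Lio} supplies.
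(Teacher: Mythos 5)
Your argument is correct and is exactly the exponential-map argument the paper alludes to when it says the proof "based on the use of the exponential map, is quite immediate": differentiate $b$ along the one-parameter subgroups $t\mapsto\exp(t\rho_v)$ to pass between the infinitesimal identity (\ref{ginva}) and invariance under the generators $\exp(\rho_v)$, then invoke Lemma \ref{Lio} to upgrade to all of $GL(\rho)$. Nothing is missing; you have simply written out the details the paper omits.
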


Given $\mathfrak{g}$-invariant map on a vector space $V$, one can obtain an associated $\mathfrak{g}$-invariant map on a Grassmann algebra $\Lambda V$. 

Before discussing this fact, let us introduce useful notation. If $\{v_1,\ldots,v_r\}$ is a basis of $V$, we define $v_{J}:=v_{J(1)}\wedge \ldots \wedge v_{J(m)}$, where $J:=(J(1),\ldots,J(m))$ with $J(1), \ldots, J(m) \in \{1,\ldots, r\}$ represents a multi-index of length $|J|=m$. Moreover, we denote the the permutation group of $m$ elements by $S_m$ and ${\rm sg}(\sigma)$ stands for the sign of a permutation $\sigma \in S_m$.  
	
	\begin{theorem}\label{extension}
		Every $\mathfrak{g}$-invariant $k$-linear map $b:V^{\otimes k}\rightarrow \mathbb{R}$ relative to a $\mathfrak{g}$-module $V$ induces a $\mathfrak{g}$-invariant $k$-linear map, $b_{\Lambda V}: (\Lambda V)^{\otimes k} \to \mathbb{R}$, relative to the induced $\mathfrak{g}$-module on $\Lambda V$ such that:	
  \begin{itemize}
\item[1)] the spaces $\Lambda^mV$, with $m\in\mathbb{Z}$, are orthogonal between themselves relative to $b_{\Lambda V}$; 
\item[2)] $b_{\Lambda V}(1,\ldots,1)=1$; 
 \item[3)]  the restriction, $b_{\Lambda^mV}$, of $b_{\Lambda V}$ to $\Lambda^m V$, with $m\in\mathbb{N}$, satisfies
			
				\begin{equation}\label{Lambdag}
					b_{\Lambda^{m} V}(v_{J_1},\ldots ,v_{J_k}):=\!\!\!\!\!\!\!\sum_{\sigma_1,\ldots,\sigma_{k} \in S_m}\!\!\!\!\!\!{\rm sg }(\sigma_1\ldots\sigma_k)\frac{1}{m!}\prod_{r=1}^m b\left( v_{J_1(\sigma_1^{-1}(r))},\ldots ,v_{J_k(\sigma_{k}^{-1}(r))}\right).
			\end{equation}
   \end{itemize}
	\end{theorem}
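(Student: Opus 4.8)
The plan is to define $b_{\Lambda V}$ by the formula in 3) (extended by 1) and 2) to the whole graded algebra) and then to verify that this is well-defined and $\mathfrak{g}$-invariant. First I would check that the right-hand side of \eqref{Lambdag} descends to a well-defined multilinear map on $(\Lambda^m V)^{\otimes k}$: the expression $v_{J}=v_{J(1)}\wedge\cdots\wedge v_{J(m)}$ is only well-defined up to sign under permutation of the index entries and vanishes when two entries coincide, so I must show the displayed sum transforms by the corresponding sign and vanishes accordingly. This follows by reindexing the $S_m$-sums: replacing $J_1$ by $J_1\circ\tau$ for $\tau\in S_m$ can be absorbed into a relabelling $\sigma_1\mapsto\sigma_1\circ\tau$, which multiplies ${\rm sg}(\sigma_1)$ by ${\rm sg}(\tau)$ and leaves the product over $r$ unchanged; the vanishing on repeated entries is similar (a transposition fixing $J_1$ gives a sign $-1$ but leaves the sum invariant, forcing it to be zero). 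Linear extension in each slot then gives a genuine element $b_{\Lambda^m V}\in ((\Lambda^m V)^{\otimes k})^*$, and setting different graded pieces orthogonal plus $b_{\Lambda V}(1,\dots,1)=1$ assembles $b_{\Lambda V}$ on $\Lambda V$.

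Next I would establish $\mathfrak{g}$-invariance, i.e. condition \eqref{ginva} for $b_{\Lambda V}$ relative to the induced module $(\Lambda V,\Lambda\rho)$ from Proposition \ref{Prop:grass_gmod}. By Proposition \ref{prop:glrho_ginv} it is equivalent, and often cleaner, to prove the finite form: $T^*b_{\Lambda V}=b_{\Lambda V}$ for every $T\in GL(\rho)$. Since each such $T$ is $\Lambda\rho$ applied through the exponential, and by Proposition \ref{Prop:invk} the induced operator on $\Lambda^m V$ is exactly $T^{\otimes m}$ restricted to $\Lambda^m V$, it suffices to show that the formula \eqref{Lambdag} is invariant under replacing every $v_{J_i}$ by $(\Lambda^m T)(v_{J_i})$ whenever $b$ itself is $GL(\rho)$-invariant on $V$. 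Expanding $(\Lambda^m T)(v_{J})$ as a sum over basis multi-indices and using multilinearity, the key observation is that the structure of \eqref{Lambdag} — a signed sum over $k$ independent copies of $S_m$ of a product of $m$ evaluations of $b$ — is precisely the $m$-th "antisymmetrised tensor power" of the bilinear-in-disguise pairing $b$, so that $b_{\Lambda^m V}$ is, up to the combinatorial normalisation $1/m!$, the restriction to $\Lambda^m V$ of $b^{\otimes m}$ on $V^{\otimes m}$; invariance of $b$ under $T$ on each factor then gives invariance of $b^{\otimes m}$ under $T^{\otimes m}$, which restricts to the claim. I would make this precise by introducing the antisymmetriser ${\rm Asym}_m=\frac{1}{m!}\sum_{\sigma\in S_m}{\rm sg}(\sigma)\sigma$ on $V^{\otimes m}$, noting $v_{J}=m!\,{\rm Asym}_m(v_{J(1)}\otimes\cdots\otimes v_{J(m)})$ under the usual identification, and checking that \eqref{Lambdag} equals $b^{\otimes m}$ evaluated on these antisymmetrised tensors; since ${\rm Asym}_m$ commutes with $T^{\otimes m}$ and $b^{\otimes m}$ is $T^{\otimes m}$-invariant, we are done, and differentiating the one-parameter subgroups $\exp(t\rho_v)$ recovers the infinitesimal form \eqref{ginva}.

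Finally, properties 1) and 2) are immediate from the construction, and I would record the normalisation check that for $m=0$ the space $\Lambda^0 V=\mathbb{R}$ pairs as $b_{\Lambda V}(1,\dots,1)=1$, consistent with the empty product in \eqref{Lambdag}. The main obstacle I anticipate is the well-definedness/combinatorial bookkeeping in the first paragraph: one must be careful that the independent sums over $\sigma_1,\dots,\sigma_k$ together with the single $\frac{1}{m!}$ and the product over $r$ really do produce a map that (a) is alternating in the entries of each multi-index and (b) agrees with the clean "restriction of $b^{\otimes m}$ to $\Lambda^m V$" description; once that identification is in hand, $\mathfrak{g}$-invariance and the orthogonality/normalisation properties follow with little further work. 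A secondary, purely notational subtlety is handling $m\le 0$ (where $\Lambda^m V$ is $0$ or $\mathbb{R}$ and all maps are declared trivially), which I would dispatch in a line.
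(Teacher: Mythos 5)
Your proposal is correct and follows essentially the same route as the paper's proof: well-definedness is checked by absorbing a permutation of each multi-index into the corresponding $S_m$-sum (picking up the sign ${\rm sg}(\tau)$), and $\mathfrak{g}$-invariance is reduced via Proposition \ref{prop:glrho_ginv} to $GL(\Lambda\rho)$-invariance, which follows by applying the $GL(\rho)$-invariance of $b$ factor by factor inside the product over $r$. Your repackaging of the formula as the restriction of $b^{\otimes m}$ to antisymmetrised tensors is only a cosmetic reformulation of that same term-by-term computation.
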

	\begin{proof} 
 Recall that $\Lambda V$ is spanned by $1\in \Lambda^0V$ and the elements $v_J=v_{J(1)}\wedge\ldots\wedge v_{J(m)}$. Since $b_{\Lambda V}$ is $k$-linear, conditions 1, 2, and 3 establish $b_{\Lambda V}$ completely. 
 
 Let us notice that the restricted form $b_{\Lambda^m V}$ given in Condition 3 is defined on the representative elements $v_{J_s} \in \Lambda^m V$, with $s\in \{1,\ldots,k\}$. However, it is not obvious whether the value of $b_{\Lambda^m V}$ depend on the order of elements within each $v_{J_s}$. Let us show that it is indeed well defined, namely it is independent of the representative for each $v_{J_s}$, with $s\in \{1,\ldots,k\}$. Define $\sigma v_{J}:=v_{J(\sigma^{-1}(1))}\wedge\ldots\wedge v_{J(\sigma^{-1}(m))}$ and $\tilde{\tilde{\sigma}}_j :=\tilde\sigma_j\cdot \sigma_j $. Then,
		\vskip -0.4cm
		\begin{multline*}
		b_{\Lambda^m V}(\tilde{\sigma}_1v_{J_1},\ldots, \tilde{\sigma}_kv_{J_k})=\!\!\!\!\!\!\!\!\sum_{\sigma_1,\ldots,\sigma_k\in S_m}\!\!\!\!\!\!{\rm sg }(\sigma_1\ldots\sigma_k)\frac{1}{m!}\prod_{r=1}^mb\left(v_{J_1( \sigma_1^{-1}\tilde{\sigma}_1^{-1} (r))},\ldots,v_{J_k( \sigma_k^{-1}\tilde{\sigma}_k^{-1} (r))}\right)\\
		=\!\!\!\!\!\sum_{\tilde{\tilde{\sigma}}_1,\ldots,\tilde{\tilde{\sigma}}_k\in S_m}\!\!\!\!\!{\rm sg }(\tilde{\tilde{\sigma}}_1 \ldots\tilde{\tilde{\sigma}}_k){\rm sg }(\tilde\sigma_1\ldots\tilde\sigma_k)\frac{1}{m!}\prod_{r=1}^mb\left(v_{J_1 (\tilde{\tilde{\sigma}}_1^{-1}(r))},\ldots,v_{J_k (\tilde{\tilde{\sigma}}_k^{-1}(r))}\right).
		\end{multline*}
		Hence, $b_{\Lambda^m V}(\tilde{\sigma}_1v_{J_1},\ldots, \tilde{\sigma}_kv_{J_k})\!\!=\!\!{\rm sg }(\tilde\sigma_1\ldots\tilde \sigma_k)b_{\Lambda^m V}(v_{J_1},\ldots,v_{J_k})$  and $b_{\Lambda V}$ is well-defined.
		
		Let us prove that $b_{\Lambda V}$ is $\mathfrak{g}$-invariant relative to the $\mathfrak{g}$-module  on $\Lambda V$ induced by the $\mathfrak{g}$-module on $V$. By Proposition \ref{prop:glrho_ginv},  the $\mathfrak{g}$-invariance of $b_{\Lambda V}$ follows from its $GL(\Lambda\rho)$-invariance. Consequently, the restrictions $b_{\Lambda^mV}$ for $m\in \overline{\mathbb{N}}=\mathbb{N}\cup \{0\}$ are $GL(\Lambda^m\rho)$-invariant. Set $e^{\rho_v}:=\exp(\rho_v)$ for every $v\in \mathfrak{g}$ and $m\in \mathbb{N}$, Since $b$ is $GL(\rho)$-invariant, we get
$$		
\begin{aligned}
&b_{\Lambda^mV}(\Lambda^m e^{\rho_v}(v_{J_1}),\ldots,\Lambda^m e^{\rho_v}(v_{J_k}))=\sum_{\sigma_1,\ldots,\sigma_{k} \in S_m} \!\!\!\!\!\!\!\!{\rm sg }(\sigma_1\ldots\sigma_k)\frac{1}{m!}\prod_{r=1}^m b\left( e^{\rho_v}v_{J_1(\sigma_1^{-1}(r))},\ldots , e^{\rho_v}v_{J_k(\sigma_{k}^{-1}(r))}\right)\\
 &=\!\!\!\!\!\!\sum_{\sigma_1,\ldots,\sigma_{k} \in S_m} \!\!\!\!\!\!\!\!{\rm sg }(\sigma_1 \ldots\sigma_k)\frac{1}{m!}\prod_{r=1}^m b\left(v_{J_1(\sigma_1^{-1}(r))},\ldots , v_{J_k(\sigma_{k}^{-1}(r))}\right)= b_{\Lambda^mV}(v_{J_1},\ldots, v_{J_k}).
		\end{aligned}
		$$
		As the invariance of $b_{\Lambda^0V}$ is obvious, $b_{\Lambda V}$ is $GL(\Lambda\rho)$-invariant and Proposition \ref{prop:glrho_ginv} ensures that is $\mathfrak{g}$-invariant.
	\end{proof}
		
	Next corollary gives an immediate consequence of Proposition \ref{prop:glrho_ginv} and Theorem \ref{extension}. 
	
	\begin{corollary}\label{ExtInvRep} If $b$ is a $\mathfrak{g}$-invariant $k$-linear map on $V$, then  $b_{\Lambda^mV}$ is $GL(\Lambda^m \rho)$-invariant, i.e. $b_{\Lambda^mV}(\Lambda^m T\cdot , \ldots, \Lambda^m T\cdot )=b_{\Lambda^mV}( \cdot ,\ldots, \cdot )$ for every  $T\in GL(\rho)$.
	\end{corollary}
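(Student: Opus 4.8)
The plan is to deduce the statement directly from Theorem \ref{extension} together with Proposition \ref{prop:glrho_ginv}, essentially by unwinding definitions; there should be no real obstacle here, since all the work was already done in the proof of Theorem \ref{extension}.

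First I would recall what Theorem \ref{extension} gives us: a $\mathfrak{g}$-invariant $k$-linear map $b$ on $V$ (relative to the $\mathfrak{g}$-module $(V,\rho)$) induces a $k$-linear map $b_{\Lambda V}$ on $\Lambda V$ that is $\mathfrak{g}$-invariant relative to the induced $\mathfrak{g}$-module $(\Lambda V,\Lambda\rho)$, and whose restriction to each $\Lambda^m V$ is the map $b_{\Lambda^m V}$ of formula (\ref{Lambdag}). Since the subspaces $\Lambda^m V$ are mutually orthogonal for $b_{\Lambda V}$ and are $\Lambda\rho$-submodules, the $\mathfrak{g}$-invariance of $b_{\Lambda V}$ restricts to $\mathfrak{g}$-invariance of each $b_{\Lambda^m V}$ relative to the $\mathfrak{g}$-module $(\Lambda^m V,\Lambda^m\rho)$; this is exactly the point already noted inside the proof of Theorem \ref{extension}.

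Next I would apply Proposition \ref{prop:glrho_ginv} to the $\mathfrak{g}$-module $(\Lambda^m V,\Lambda^m\rho)$: a $k$-linear map on $\Lambda^m V$ is $\mathfrak{g}$-invariant relative to $(\Lambda^m V,\Lambda^m\rho)$ if and only if it is $GL(\Lambda^m\rho)$-invariant. Hence $b_{\Lambda^m V}$ is $GL(\Lambda^m\rho)$-invariant, i.e.
\[
b_{\Lambda^m V}(S\,\cdot\,,\ldots,S\,\cdot\,)=b_{\Lambda^m V}(\,\cdot\,,\ldots,\,\cdot\,)\qquad\text{for every }S\in GL(\Lambda^m\rho).
\]
Finally, by Proposition \ref{Prop:invk} the Lie group $GL(\Lambda^m\rho)$ consists precisely of the operators $T^{\otimes m}=\Lambda^m T$ with $T\in GL(\rho)$, so the displayed identity reads $b_{\Lambda^m V}(\Lambda^m T\,\cdot\,,\ldots,\Lambda^m T\,\cdot\,)=b_{\Lambda^m V}(\,\cdot\,,\ldots,\,\cdot\,)$ for every $T\in GL(\rho)$, which is the claim. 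The only mild subtlety worth flagging is the degenerate range $m\le 0$, where $\Lambda^m\rho\equiv 0$ and $\Lambda^m T$ is the identity, so the statement is trivially true; otherwise everything is a formal consequence of results already proved.
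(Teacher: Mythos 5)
Your proposal is correct and follows essentially the same route the paper intends: the corollary is presented there as an immediate consequence of Theorem \ref{extension} (whose proof already establishes the $GL(\Lambda^m\rho)$-invariance via the exponentials $\exp(\rho_v)$) combined with Proposition \ref{prop:glrho_ginv}, and your invocation of Proposition \ref{Prop:invk} to identify $GL(\Lambda^m\rho)$ with the operators $\Lambda^m T$, $T\in GL(\rho)$, is exactly the right way to make the identification explicit.
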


 Not all extensions of $\mathfrak{g}$-invariant $k$-linear symmetric maps described in Theorem \ref{extension} are nontrivial. Under certain conditions, discussed in the following proposition, we obtain zero maps.
 
	\begin{proposition} If $b$ is a $\mathfrak{g}$-invariant $k$-linear map on $V$, then $b_{\Lambda^mV}=0$ for $m >1$ and odd $k>1$. 
	\end{proposition}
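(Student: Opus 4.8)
The plan is to exploit the explicit formula (\ref{Lambdag}) for $b_{\Lambda^m V}$ together with a symmetry/antisymmetry clash. Recall that $b_{\Lambda^m V}(v_{J_1},\ldots,v_{J_k})$ is a sum over $(\sigma_1,\ldots,\sigma_k)\in S_m^{\times k}$ of ${\rm sg}(\sigma_1\cdots\sigma_k)\frac{1}{m!}\prod_{r=1}^m b(v_{J_1(\sigma_1^{-1}(r))},\ldots,v_{J_k(\sigma_k^{-1}(r))})$. Since $k$-linearity means it suffices to evaluate $b_{\Lambda^m V}$ on basis $m$-vectors $v_{J_1},\ldots,v_{J_k}$, I only need to show the right-hand side of (\ref{Lambdag}) vanishes whenever $m>1$ and $k>1$ is odd.

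The key step: reparametrize the sum by a global left-translation. Fix any transposition $\tau\in S_m$ (possible since $m>1$) and, in the summation, replace each $\sigma_i$ by $\tau\sigma_i$ simultaneously for $i=1,\ldots,k$. This is a bijection of $S_m^{\times k}$ onto itself, so the value of the sum is unchanged. Now examine how the summand transforms. First, ${\rm sg}(\tau\sigma_1\cdots\tau\sigma_k)={\rm sg}(\tau)^k\,{\rm sg}(\sigma_1\cdots\sigma_k)=(-1)^k{\rm sg}(\sigma_1\cdots\sigma_k)=-{\rm sg}(\sigma_1\cdots\sigma_k)$ because $k$ is odd. Second, the product $\prod_{r=1}^m b(v_{J_1((\tau\sigma_1)^{-1}(r))},\ldots)=\prod_{r=1}^m b(v_{J_1(\sigma_1^{-1}\tau^{-1}(r))},\ldots)=\prod_{r=1}^m b(v_{J_1(\sigma_1^{-1}\tau(r))},\ldots)$, and since $r\mapsto \tau(r)$ is a permutation of $\{1,\ldots,m\}$, this is the same product with its factors reordered — hence equal to $\prod_{r=1}^m b(v_{J_1(\sigma_1^{-1}(r))},\ldots)$. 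So the summand picks up exactly the factor $-1$ under this substitution, which forces the whole sum to equal its own negative, hence to vanish. (The same reparametrization argument is essentially the one used in the proof of Theorem \ref{extension} to show well-definedness, so it is already familiar in this context.)

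I would then remark that the hypothesis $k>1$ is only needed so that $b_{\Lambda^m V}$ is genuinely multilinear in more than one slot and the statement is about the extended map rather than the $k=1$ degenerate case; the substance is the parity computation. I would also note in passing that the symmetry hypothesis on $b$ is not actually used in this particular argument — the vanishing is purely a consequence of the sign ${\rm sg}(\sigma_1\cdots\sigma_k)$ being odd-homogeneous under a common left-translation — though it is natural to state the proposition in the symmetric setting consistent with the preceding results.

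The main obstacle, such as it is, is bookkeeping: one must be careful that the substitution $\sigma_i\mapsto\tau\sigma_i$ is applied to \emph{all} $k$ indices at once (applying it to a single index would change the product in an uncontrolled way), and that the index shift inside the product $\prod_{r=1}^m$ really is just a reindexing $r\mapsto\tau(r)$ of a product over a finite set. Once those two points are handled cleanly, the proof is a two-line sign argument.
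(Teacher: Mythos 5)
Your proof is correct and uses essentially the same mechanism as the paper's: a simultaneous left-translation of all $k$ permutation indices multiplies the sign by ${\rm sg}(\tau)^k=-1$ (since $k$ is odd) while leaving the product $\prod_{r=1}^m b(\cdot,\ldots,\cdot)$ invariant under the reindexing $r\mapsto\tau(r)$, which requires $m>1$ for a transposition to exist. The paper phrases this by partitioning $S_m^k$ into orbits of the diagonal left $S_m$-action and noting that half the signs in each orbit are negative, whereas you pair terms via a single fixed transposition; this is the same argument in a slightly more economical form, and your side remarks (the symmetry of $b$ is not needed, and the substitution must be applied to all $k$ slots at once) are accurate.
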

	\begin{proof} 	
 Firstly, let us introduce the equivalence relation in $S_m^k := {S_m \times \stackrel{k\,\,{\rm times}}{\ldots} \times S_m}$. For any $k$-tuple $(\sigma_1, \ldots, \sigma_k), (\tilde\sigma_1, \ldots, \tilde\sigma_k) \in S_m^k$, we set
		$
		(\sigma_1, \ldots, \sigma_k) \equiv (\tilde\sigma_1, \ldots, \tilde\sigma_k)$ if and only if there exists $ \sigma \in S_m$ such that $(\tilde\sigma_1, \ldots, \tilde\sigma_k) =  (\sigma \sigma_1, \ldots, \sigma \sigma_k)$. It is immediate to verify that such a relation on $S_m^k$ is an equivalence relation. We denote the space of equivalence classes in $S_m^k$ by $\mathcal{R}$.
  
  Let $[(\sigma_1,\ldots,\sigma_k)]$ be the equivalence class of $(\sigma_1,\ldots,\sigma_k)\in S_m^k$. For a generic element ${\bf w}$ of $S^k_m$, the map $b_{\Lambda^mV}$ defined by (\ref{Lambdag}) satisfies
		
		\begin{equation*}
			b_{\Lambda^{m}V}(v_{J_1},\ldots ,v_{J_k}):=\sum_{\substack{[{\bf w}]\in \mathcal{R}\\(\sigma_1,\ldots,\sigma_{k}) \in [{\bf w}]}}\!\!\!\!\!\!\!\!\!\!{\rm sg }(\sigma_1\ldots\sigma_k)\frac{1}{m!}\prod_{r=1}^m b\left( v_{J_1(\sigma_1^{-1}(r))},\ldots ,v_{J_k(\sigma_{k}^{-1}(r))}\right).
		\end{equation*}
Since every equivalence class  reads $[(\sigma_1,\ldots,\sigma_k)]=\{(\sigma\sigma_1,\ldots,\sigma\sigma_k):\sigma \in S_m\}$, we obtain

		\begin{equation*}
			b_{\Lambda^{m} V}(v_{J_1},\ldots ,v_{J_k})=\!\!\!\!\!\!\sum_{\substack{[(\sigma_1 ,\ldots,\sigma_k)]\in \mathcal{R}\\\sigma\in S_m}}\!\!\!\!\!\!{\rm sg }(\sigma\sigma_1 \ldots\sigma\sigma_k)\frac{1}{m!}\prod_{r=1}^m b\left( v_{J_1(\sigma_1^{-1}\sigma^{-1}(r))},\ldots ,v_{J_k(\sigma_{k}^{-1}\sigma^{-1}(r))}\right).
		\end{equation*}
		
		Let us show that the above sum vanishes for every equivalence class of $\mathcal{R}$.
		First, notice that
		$$
		\prod_{r=1}^mb\left(v_{J_1({\sigma}_1^{-1} \sigma^{-1} (r))},\ldots,v_{J_k({\sigma}_k^{-1} \sigma^{-1} (r))}\right)=\prod_{r=1}^mb\left(v_{J_1({\sigma}_1^{-1} (r))},\ldots,v_{J_k({\sigma}_k^{-1} (r))}\right).
		$$
  since $\sigma^{-1} \in S_m$ is a bijection and thus, the index $\sigma^{-1}(r)$ in the right-hand product runs over a set $\{1, \ldots, m\}$ as well.
  
	Define ${\rm sg}({\bf w}):={\rm sg}(\widehat{\sigma}_1\ldots\widehat{\sigma}_k)$ and $\sigma {\bf w}:=(\sigma\widehat{\sigma}_1,\ldots,\sigma\widehat{\sigma}_k)$ for ${\bf w}=(\widehat{\sigma}_1,\ldots,\widehat{\sigma}_k)$ and $\sigma\in S_m$. As $k$ is odd, one has ${\rm sg}(\sigma {\bf w})={\rm sg}(\sigma)^{k}{\rm sg}({\bf w})={\rm sg}(\sigma){\rm sg}(\bf w)$. 
	
		Since $|S_m| = m!$, every equivalence class of $\mathcal{R}$ has $m!$ elements having the same absolute value. Half of them are odd for $m>1$ and the elements of the other half are even. Hence, 
		$$
		b_{\Lambda^{m} V}(v_{J_1},\ldots ,v_{J_k})=\sum_{\sigma\in S_m}\!\!{\rm sg }(\sigma {\bf w})\frac{1}{m!}\prod_{r=1}^m b\left(v_{J_1(\sigma_1^{-1}(r))},\ldots , v_{J_k(\sigma_{k}^{-1}(r))}\right)=0.
		$$
  This yields $b_{\Lambda^mV}=0$.
	\end{proof}

	\begin{example} \label{Ex:su2}
		Consider the Lie algebra $\mathfrak{su}_2$ and its Killing form $\kappa_{\mathfrak{su}_2}$, which is a $\mathfrak{su}_2$-invariant, bilinear, symmetric map on $\mathfrak{su}_2$. Take the basis $\{e_1, e_2, e_3\}$ of $\mathfrak{su}_2$ given in Table \ref{tabela3w}. Theorem \ref{extension}  extends $\kappa_{\mathfrak{su}_2}$ to the  forms $\kappa_{\Lambda^2 \mathfrak{su}_2}$, $\kappa_{\Lambda^3 \mathfrak{su}_2}$ on $\Lambda^2\mathfrak{su}_2$ and $\Lambda^3\mathfrak{su}_2$, respectively. In the bases $\{e_{12}, e_{13}, e_{23}\}$, $\{e_{123}\}$ for the spaces $\Lambda^2\mathfrak{su}_2, \Lambda^3 \mathfrak{su}_2$ (see Table \ref{tabela3w}), we obtain 
		$
		[\kappa_{\mathfrak{su}_2}]=-2{\rm Id}_{3\times 3}$, 
		$[\kappa_{\Lambda^2\mathfrak{su}_2}]=4{\rm Id}_{3\times 3}$, 
		$[\kappa_{\Lambda^3\mathfrak{su}_2}]=\left(-8\right).$
	\end{example}
	
	We observe in Example \ref{Ex:su2} that $\kappa_{\mathfrak{su}_2}$ and its extensions to $\Lambda^2\mathfrak{su}_2$ and $\Lambda^3\mathfrak{su}_2$ are simultaneously diagonal and non-degenerate. This fact is explained by the corollary below.
	
	\begin{corollary}\label{Cor:Diatwoform} 
 If $b$ is a symmetric $\mathfrak{g}$-invariant $k$-linear mapping on an $r$-dimensional $\mathfrak{g}$-module $V$, then $b_{\Lambda V}$ is symmetric. If $b$ is bilinear, then: a) it diagonalises in a basis $\{e_1,\ldots,e_r\}$ and  $b_{\Lambda^mV}$ diagonalises in the basis $\{e_J\}_{|J|=m}$; b) $b$ is non-degenerate if and only if $b_{\Lambda V}$ is non-degenerate as well. 
	\end{corollary}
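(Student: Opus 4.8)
The plan is to reduce every claim to the explicit formula~(\ref{Lambdag}) for $b_{\Lambda^m V}$ together with part~1) of Theorem~\ref{extension}, which says that $b_{\Lambda V}$ vanishes whenever its $k$ arguments do not all lie in a single graded piece $\Lambda^m V$; thus it is enough to analyse each restriction $b_{\Lambda^m V}$ separately, and the assertion for $b_{\Lambda V}$ follows by taking the orthogonal direct sum over $m$. I would then treat the three statements in turn.

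\emph{Symmetry.} To see that $b_{\Lambda V}$ is symmetric it suffices, by the above, to show that each $b_{\Lambda^m V}$ is invariant under an arbitrary permutation $\pi\in S_k$ of its arguments. In~(\ref{Lambdag}) every inner factor $b(v_{J_1(\sigma_1^{-1}(r))},\ldots,v_{J_k(\sigma_k^{-1}(r))})$ is already invariant under permuting its $k$ slots, because $b$ is symmetric; applying $\pi$ to the outer arguments therefore only relabels the summation multi-index $(\sigma_1,\ldots,\sigma_k)\mapsto(\sigma_{\pi^{-1}(1)},\ldots,\sigma_{\pi^{-1}(k)})$, which is a bijection of $S_m\times\cdots\times S_m$, and leaves ${\rm sg}(\sigma_1\cdots\sigma_k)$ unchanged since the sign of a permutation is multiplicative, so ${\rm sg}(\sigma_1\cdots\sigma_k)$ is a symmetric function of $\sigma_1,\ldots,\sigma_k$. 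Hence the whole sum is unchanged.

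\emph{Diagonalisation.} Now let $b$ be bilinear and symmetric. Over $\mathbb{R}$ such a form diagonalises, so I would fix a basis $\{e_1,\ldots,e_r\}$ with $b(e_i,e_j)=\lambda_i\delta_{ij}$. Evaluating~(\ref{Lambdag}) with $k=2$ on two basis vectors $e_{J_1},e_{J_2}$ of $\Lambda^m V$, with $J_1,J_2$ strictly increasing, a summand survives only if $b(e_{J_1(\sigma_1^{-1}(r))},e_{J_2(\sigma_2^{-1}(r))})\neq 0$ for every $r$, i.e. $J_1(\sigma_1^{-1}(r))=J_2(\sigma_2^{-1}(r))$ for all $r$; letting $r$ range over $\{1,\ldots,m\}$, this forces the set of entries appearing in $J_1$ to coincide with the set of entries appearing in $J_2$, hence $J_1=J_2$. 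So $b_{\Lambda^m V}$ is diagonal in the basis $\{e_J\}_{|J|=m}$. On the diagonal, where $J_1=J_2=J$, injectivity of $J$ as a map $\{1,\ldots,m\}\to\{1,\ldots,r\}$ forces $\sigma_1=\sigma_2$ on the surviving summands, each of which contributes ${\rm sg}(\sigma_1^2)\,\tfrac1{m!}\prod_{j\in J}\lambda_j=\tfrac1{m!}\prod_{j\in J}\lambda_j$, so summing over $\sigma_1\in S_m$ gives $b_{\Lambda^m V}(e_J,e_J)=\prod_{j\in J}\lambda_j$.

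\emph{Non-degeneracy.} From the diagonal values just computed, $b$ is non-degenerate, i.e. all $\lambda_i\neq 0$, if and only if every entry $\prod_{j\in J}\lambda_j$ of every $b_{\Lambda^m V}$ is nonzero; since moreover $b_{\Lambda^0 V}(1,1)=1$ and the graded pieces are mutually orthogonal, this is equivalent to $b_{\Lambda V}=\bigoplus_m b_{\Lambda^m V}$ being non-degenerate. For the converse direction one may alternatively just restrict the non-degenerate $b_{\Lambda V}$ to the orthogonal summand $\Lambda^1 V=V$, recovering $b$. I do not anticipate a genuine obstacle; the only mildly delicate point is the bookkeeping identifying the diagonal entries $\prod_{j\in J}\lambda_j$ and, in particular, verifying that on the diagonal only the summands with $\sigma_1=\sigma_2$ survive, which rests on the injectivity of $J$. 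Everything else is a direct manipulation of~(\ref{Lambdag}).
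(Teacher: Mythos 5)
Your proposal is correct and follows essentially the same route as the paper: symmetry is read off from formula (\ref{Lambdag}) (the paper simply cites condition 3 of Theorem \ref{extension}, while you spell out the relabelling of the summation indices), and the diagonalisation and non-degeneracy claims are obtained exactly as in the paper by evaluating (\ref{Lambdag}) in a diagonalising basis and identifying the diagonal entries $\prod_{j\in J} b(e_{J(j)},e_{J(j)})$. Your extra verification that only the summands with $\sigma_1=\sigma_2$ survive on the diagonal is a detail the paper leaves implicit, but it is the same computation.
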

	\begin{proof} 
 If $b$ is a symmetric $\mathfrak{g}$-invariant $k$-linear mapping on $V$, then the symmetry of $b_{\Lambda^m V}$ on decomposable elements of $\Lambda^m V$, $m\in \mathbb{N}$, is guaranteed by condition 3 of Theorem \ref{extension}, whereas condition 2 ensures the same for $m=0$.  Since $b_{\Lambda V}$ is multilinear, it is symmetric on  $\Lambda V$.
		
		If $b$ is bilinear and symmetric, it can always be put into diagonal form in a certain basis $\{e_1,\ldots, e_r\}$ for $V$. This gives rise to the bases $\{e_J\}_{m\geq |J|\geq 0}$ of $\Lambda^m V$, for $m \in \{1, \ldots r\}$, and $\{e_J\}_{r\geq |J|\geq 0}$ of $\Lambda V$. The expression (\ref{Lambdag}) for $b_{\Lambda^m V}$ implies that $b_{\Lambda^m V}$ is also diagonal. In consequence, $b_{\Lambda V}$ is diagonal as well. The element $b_{\Lambda V}(e_J,e_J)$ lying in its diagonal, for $|J|\geq 1$, reads
		$
		\prod_{j=1}^{|J|} b(e_{J(j)},e_{J(j)})$. Thus, we conclude that $b$ is non-degenerate if and only if the induced symmetric $b_{\Lambda^mV}$ on each $\Lambda^mV$ is non-degenerate as well. 
	\end{proof}
	\begin{example}\label{ExExsl2}Consider the Lie algebra $\mathfrak{sl}_2$ and a basis $\{e_1,e_2,e_3\}$ satisfying the commutation relations in Table \ref{tabela3w}. In the induced bases $\{e_{12},e_{13},e_{23}\}$ and $\{e_{123}\}$ in $\Lambda^2\mathfrak{sl}_2$ and $\Lambda^3\mathfrak{sl}_{2}$, respectively, one has

{\small \begin{equation}\label{sl2A}
			[\kappa_{\mathfrak{sl}_{2}}] \!=\! \left(\begin{array}{ccc}
				2&0&0\\
				0&0&2\\
				0&2&0\\
			\end{array}\right),\qquad
			[\kappa_{\Lambda^2 \mathfrak{sl}_{2}}]\! =\!\left(\begin{array}{ccc}
				0&4&0\\
				4&0&0\\
				0&0&-4\\
			\end{array}\right),\qquad\, [\kappa_{\Lambda^3\mathfrak{sl}_{2}}]\!=\!(-8).
		\end{equation}}

		Since $\mathfrak{sl}_{2}$ is simple, the Cartan criterion states that $\kappa_{\mathfrak{sl}_2}$ is non-degenerate \cite{FH91}. Then, Corollary \ref{Cor:Diatwoform} ensures that $\kappa_{\Lambda^2 \mathfrak{sl}_{2}}$ and $\kappa_{\Lambda^3 \mathfrak{sl}_{2}}$ must be non-degenerate. Their expressions showed in (\ref{sl2A}) confirm that observation.
	\end{example}
	
	\section{Killing-type forms}\label{Ch:alg_Sec:kil}
	The Killing form \cite{FH91} gives the most crucial example of a symmetric bilinear form, playing an indisputable role in the structure theory of Lie algebras. In this section we focus on certain multilinear symmetric maps on the spaces $\Lambda^m\mathfrak{g}$ induced by Killing forms and study their invariance properties. These maps will be of interest in the description of coboundary cocommutators in further sections.

As shown by Theorem \ref{extension}, any $\mathfrak{g}$-invariant k-linear symmetric map on $\mathfrak{g}$ gives rise to the $\mathfrak{g}$-invariant k-linear symmetric map on the Grassmann algebra $\Lambda \mathfrak{g}$. For brevity, extended maps obtained via the method described in Theorem \ref{extension} from the Killing form will be called {\it Killing-type forms}. Such forms share similar properties as the Killing form. The most important one is the invariance relative to the automorphism group ${\rm Aut}(\mathfrak{g})$, as illustrated by the following proposition. 
 
	\begin{proposition}\label{ExtKil} 
 The Killing-type form $\kappa_{\Lambda \mathfrak{g}}$ is invariant relative to the action of ${\rm Aut}(\mathfrak{g})$ on $\Lambda \mathfrak{g}$. In particular, $\kappa_{\Lambda\mathfrak{g}}$ is $\mathfrak{aut}(\mathfrak{g})$-invariant relative to the $\mathfrak{aut}(\mathfrak{g})$-module $(\Lambda\mathfrak{g},\Lambda\widehat{\rm ad})$.
	\end{proposition}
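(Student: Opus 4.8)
The strategy is to reduce the statement to the classical invariance of the Killing form and then transport it along the extension formula (\ref{Lambdag}).

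First I would record the standard fact that $\kappa_{\mathfrak g}$ is already invariant under the \emph{full} automorphism group: if $T\in{\rm Aut}(\mathfrak g)$ then $T[v,w]=[Tv,Tw]$ gives ${\rm ad}_{Tv}=T\circ{\rm ad}_v\circ T^{-1}$, whence by cyclicity of the trace
$$\kappa_{\mathfrak g}(Tv_1,Tv_2)={\rm tr}({\rm ad}_{Tv_1}\circ{\rm ad}_{Tv_2})={\rm tr}(T\circ{\rm ad}_{v_1}\circ{\rm ad}_{v_2}\circ T^{-1})=\kappa_{\mathfrak g}(v_1,v_2)$$
for all $v_1,v_2\in\mathfrak g$. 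I would stress that this is strictly stronger than the $\mathfrak g$-invariance of $\kappa_{\mathfrak g}$ (which, by Corollary \ref{ExtInvRep} and Proposition \ref{Inn}, would only give invariance under ${\rm Inn}(\mathfrak g)=GL({\rm ad})$): the possible outer automorphisms are precisely what make the claim non-trivial.

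Next I would transport this to $\Lambda\mathfrak g$. The action of $T\in{\rm Aut}(\mathfrak g)$ on $\Lambda\mathfrak g$ is by the natural lift $\Lambda T=\bigoplus_m\Lambda^m T$, with $\Lambda^m T(v_{J(1)}\wedge\cdots\wedge v_{J(m)})=Tv_{J(1)}\wedge\cdots\wedge Tv_{J(m)}$. The key remark is that the right-hand side of (\ref{Lambdag}), being $k$-linear and alternating in each of its arguments (the alternating property being exactly the well-definedness check performed in the proof of Theorem \ref{extension}), computes $\kappa_{\Lambda^m\mathfrak g}$ not only on basis-indexed wedges but on \emph{all} decomposable elements, in particular on $\Lambda^m T(v_J)$. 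Plugging $\Lambda^m T(v_{J_1}),\Lambda^m T(v_{J_2})$ into (\ref{Lambdag}) — this is verbatim the computation in the proof of Theorem \ref{extension} with the operator $e^{\rho_v}$ there replaced by $T$ — produces a sum of products of factors $\kappa_{\mathfrak g}(Tv,Tw)$, each of which collapses to $\kappa_{\mathfrak g}(v,w)$ by the first step; hence $\kappa_{\Lambda^m\mathfrak g}(\Lambda^m T\,\cdot,\Lambda^m T\,\cdot)=\kappa_{\Lambda^m\mathfrak g}(\cdot,\cdot)$ for every $m\geq1$, the case $m=0$ being immediate since $\kappa_{\Lambda^0\mathfrak g}\equiv1$ and $\Lambda^0 T={\rm id}$. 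Because $\Lambda T$ preserves the grading while, by condition 1 of Theorem \ref{extension}, the $\Lambda^m\mathfrak g$ are mutually $\kappa_{\Lambda\mathfrak g}$-orthogonal, bilinearity then yields $\kappa_{\Lambda\mathfrak g}(\Lambda T\,\cdot,\Lambda T\,\cdot)=\kappa_{\Lambda\mathfrak g}(\cdot,\cdot)$ for all $T\in{\rm Aut}(\mathfrak g)$, which is the first assertion.

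For the ``in particular'' clause I would invoke Proposition \ref{Out} to identify $GL(\widehat{\rm ad})={\rm Aut}_c(\mathfrak g)\subseteq{\rm Aut}(\mathfrak g)$ and Proposition \ref{Prop:invk} to get $GL(\Lambda\widehat{\rm ad})=\{\Lambda T:T\in{\rm Aut}_c(\mathfrak g)\}$; the previous paragraph shows $\kappa_{\Lambda\mathfrak g}$ is invariant under every such $\Lambda T$, hence $GL(\Lambda\widehat{\rm ad})$-invariant, and Proposition \ref{prop:glrho_ginv} upgrades this to $\mathfrak{aut}(\mathfrak g)$-invariance relative to $(\Lambda\mathfrak g,\Lambda\widehat{\rm ad})$. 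The only genuinely delicate point in the argument — and the main thing to justify carefully — is the passage from basis-indexed wedges to the arbitrary decomposable elements $Tv_{J(1)}\wedge\cdots\wedge Tv_{J(m)}$ in (\ref{Lambdag}); everything else is bookkeeping, as the ${\rm Aut}(\mathfrak g)$-invariance of $\kappa_{\mathfrak g}$ is classical and the grading-compatibility of $\Lambda T$ is immediate.
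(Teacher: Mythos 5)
Your proposal is correct and follows essentially the same route as the paper's proof: apply the ${\rm Aut}(\mathfrak g)$-invariance of $\kappa_{\mathfrak g}$ inside formula (\ref{Lambdag}) evaluated on decomposable elements, extend by bilinearity and the orthogonality of the graded pieces, and then deduce the $\mathfrak{aut}(\mathfrak g)$-invariance via Proposition \ref{prop:glrho_ginv}. The only cosmetic difference is that you prove the ${\rm Aut}(\mathfrak g)$-invariance of the Killing form from cyclicity of the trace, whereas the paper simply cites it.
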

	\begin{proof}
 Denote by $v_{J_1},v_{J_2}$ the decomposable elements of $\Lambda^m \mathfrak{g}$, for $m\geq 1$ and set $T\in {\rm Aut}(\mathfrak{g})$. Since the Killing form $\kappa_\mathfrak{g}$ is invariant relative to the action of ${\rm Aut}(\mathfrak{g})$ \cite{Ha15}, we obtain
		{
			\begin{multline}\label{relatt}
				\kappa_{\Lambda \mathfrak{g}}(\Lambda^m Tv_{J_1},\Lambda^m Tv_{J_2}):=\!\!\!\!\!\!\sum_{\sigma_1, \sigma_2\in S_m}\!\!\!\!{\rm sg }(\sigma_1\sigma_2)\frac1{m!}\prod_{r=1}^m \kappa_\mathfrak{g}\left( Tv_{J_1(\sigma_1^{-1}(r))},Tv_{J_2(\sigma_{2}^{-1}(r))}\right)\\
				=\!\!\!\!\!\!\sum_{\sigma_1, \sigma_2\in S_m}\!\!\!\!{\rm sg }(\sigma_1\sigma_2)\frac 1{m!}\prod_{r=1}^m \kappa_\mathfrak{g}\left( v_{J_1(\sigma_1^{-1}(r))},v_{J_2(\sigma_{2}^{-1}(r))}\right)=\kappa_{\Lambda \mathfrak{g}}(v_{J_1},v_{J_2}).
		\end{multline}}
  Let us note that the expression (\ref{relatt}) is satisfied for decomposable elements of $\Lambda^m\mathfrak{g}$, which span $\Lambda^m\mathfrak{g}$. Thus, the bilinearity of $\kappa_{\Lambda \mathfrak{g}}$ implies that for any $m \in \{1 ,\ldots, r \}$, the induced form $\kappa_{\Lambda^m \mathfrak{g}}$ is invariant relative to the action of ${\rm Aut}(\mathfrak{g})$ on $\Lambda^m \mathfrak{g}$. Moreover, since the spaces $\Lambda^m\mathfrak{g}$ for different $m$ are orthogonal relative to $\kappa_{\Lambda\mathfrak{g}}$ by condition 1 of Theorem \ref{extension}, the whole $\kappa_{\Lambda\mathfrak{g}}$ is invariant relative to the action of ${\rm Aut}(\mathfrak{g})$. 
  
  Let us prove the second part of this proposition. In view of Proposition \ref{prop:glrho_ginv}, the map $\kappa_{\Lambda\mathfrak{g}}$ is $\mathfrak{aut}(\mathfrak{g})$-invariant if and only if $\kappa_{\Lambda\mathfrak{g}}$ is $GL(\Lambda\widehat{{\rm ad}})$-invariant. This in turn means that $\kappa_{\Lambda\mathfrak{g}}(\Lambda T\cdot ,\Lambda T\cdot )=\kappa_{\Lambda\mathfrak{g}}(\cdot ,\cdot)$ for every $T\in {\rm Aut}_c(\mathfrak{g})$, where ${\rm Aut}_c(\mathfrak{g})$ stands for the connected part of ${\rm Aut}(\mathfrak{g})$ containing its neutral element. This equality is an immediate consequence of (\ref{relatt}).
	\end{proof}
	
	Since $\kappa_{\Lambda\mathfrak{g}}$ is invariant under the maps $\Lambda T$, with $T\in {\rm Aut}(\mathfrak{g})$, it is therefore invariant under $\Lambda T$ with $T\in {\rm Inn}(\mathfrak{g})$. In view of Proposition \ref{prop:glrho_ginv}, the $\kappa_{\Lambda\mathfrak{g}}$ is also $\mathfrak{g}$-invariant relative to $(\Lambda\mathfrak{g},{\rm ad})$.
	
Inspired by the structure of the Killing form, we define its $k$-linear version and show in the proposition below that it enjoys similar invariance properties. 
 
	\begin{proposition}\label{Previo}
		The map $b(v_1, \ldots, v_k) := \sum_{\sigma \in S_k} {\rm Tr}\left({\rm ad}_{v_{\sigma (1)}} \circ \ldots \circ {\rm ad}_{v_{\sigma (k)}}\right)$, where $v_1, \ldots, v_k \in \mathfrak{g}$, is invariant under the action of ${\rm Aut}(\mathfrak{g})$ and $\mathfrak{aut}(\mathfrak{g})$-invariant relative to $(\mathfrak{g},\widehat{\rm ad})$. The $k$-linear anti-symmetric map  given by
		$
		b(v_1, \ldots, v_k) := \sum_{\sigma \in S_k} {\rm sg}(\sigma){\rm Tr}({\rm ad}_{v_{\sigma (1)}} \circ \ldots \circ {\rm ad}_{v_{\sigma (k)}})$, for all $v_1, \ldots, v_k \in \mathfrak{g},
		$ 
		is $\mathfrak{aut}(\mathfrak{g})$-invariant with respect to the $\mathfrak{aut}(\mathfrak{g})$-module $(\mathfrak{g}, \widehat{{\rm ad}})$.
	\end{proposition}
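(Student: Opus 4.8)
The plan is to reduce both statements to one elementary trace computation, using the conjugation formula ${\rm ad}_{T(v)} = T\circ {\rm ad}_v\circ T^{-1}$ available for every automorphism $T$ of $\mathfrak g$.

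First I would record this formula: if $T\in {\rm Aut}(\mathfrak g)$ and $v,w\in\mathfrak g$, then ${\rm ad}_{T(v)}(w)=[T(v),w]=T\big([v,T^{-1}(w)]\big)=(T\circ{\rm ad}_v\circ T^{-1})(w)$, so ${\rm ad}_{T(v)}=T\circ{\rm ad}_v\circ T^{-1}$. Applying this in each factor of a product, all the inner $T^{-1}\circ T$ cancel, giving for every $\sigma\in S_k$ the identity ${\rm ad}_{T(v_{\sigma(1)})}\circ\cdots\circ{\rm ad}_{T(v_{\sigma(k)})}=T\circ\big({\rm ad}_{v_{\sigma(1)}}\circ\cdots\circ{\rm ad}_{v_{\sigma(k)}}\big)\circ T^{-1}$. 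By the cyclicity of the trace, ${\rm Tr}\big({\rm ad}_{T(v_{\sigma(1)})}\circ\cdots\circ{\rm ad}_{T(v_{\sigma(k)})}\big)={\rm Tr}\big({\rm ad}_{v_{\sigma(1)}}\circ\cdots\circ{\rm ad}_{v_{\sigma(k)}}\big)$. Summing this identity over $\sigma\in S_k$, respectively over $\sigma\in S_k$ with the weights ${\rm sg}(\sigma)$, shows that $b(T(v_1),\ldots,T(v_k))=b(v_1,\ldots,v_k)$ for both the symmetrised and the antisymmetrised map and every $T\in{\rm Aut}(\mathfrak g)$. This already proves ${\rm Aut}(\mathfrak g)$-invariance of the first map (and, incidentally, also of the second, although only $\mathfrak{aut}(\mathfrak g)$-invariance is claimed there).

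To pass from the group statement to the Lie-algebra statement, I would invoke Proposition \ref{Out}, which identifies the Lie group $GL(\widehat{\rm ad})$ of the $\mathfrak{aut}(\mathfrak g)$-module $(\mathfrak g,\widehat{\rm ad})$ with the connected component ${\rm Aut}_c(\mathfrak g)\subseteq{\rm Aut}(\mathfrak g)$. The invariance under all of ${\rm Aut}(\mathfrak g)$ just established restricts in particular to invariance under ${\rm Aut}_c(\mathfrak g)=GL(\widehat{\rm ad})$, and Proposition \ref{prop:glrho_ginv} converts this $GL(\widehat{\rm ad})$-invariance into $\mathfrak{aut}(\mathfrak g)$-invariance relative to $(\mathfrak g,\widehat{\rm ad})$. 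The same reasoning applies verbatim to the antisymmetrised map, which finishes both parts.

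There is no genuine obstacle here; the argument is routine. The only care needed is bookkeeping: making the cancellation of the $T^{-1}\circ T$ factors explicit so that the (anti)symmetrised sum transforms termwise, and keeping in mind that $\mathfrak{aut}(\mathfrak g)$-invariance only detects the connected component ${\rm Aut}_c(\mathfrak g)$, so Propositions \ref{Out} and \ref{prop:glrho_ginv} are exactly what bridge the group and algebra formulations. If one prefers a direct infinitesimal check, it is equally short: for a derivation $D\in\mathfrak{der}(\mathfrak g)$ one has ${\rm ad}_{D(v)}=[D,{\rm ad}_v]$ as endomorphisms of $\mathfrak g$, hence for each fixed $\sigma$ the sum $\sum_{j=1}^k{\rm Tr}\big({\rm ad}_{v_{\sigma(1)}}\circ\cdots\circ{\rm ad}_{D(v_{\sigma(j)})}\circ\cdots\circ{\rm ad}_{v_{\sigma(k)}}\big)$ telescopes to ${\rm Tr}\big([D,\,{\rm ad}_{v_{\sigma(1)}}\circ\cdots\circ{\rm ad}_{v_{\sigma(k)}}]\big)=0$, and summing over $S_k$ with or without signs yields the $\mathfrak{aut}(\mathfrak g)$-invariance condition (\ref{ginva}) directly.
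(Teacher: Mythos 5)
Your argument is correct and is essentially the paper's own proof: both establish ${\rm ad}_{T(v)}=T\circ{\rm ad}_v\circ T^{-1}$, cancel the inner factors, apply cyclicity of the trace termwise, and then pass to $\mathfrak{aut}(\mathfrak{g})$-invariance via Proposition \ref{prop:glrho_ginv}. The optional infinitesimal check you sketch is a nice bonus but not needed.
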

\begin{proof}
First, notice that for any $v \in \mathfrak{g}$ and $A \in {\rm Aut}(\mathfrak{g})$, we have ${\rm ad}_{Av} = A \circ {\rm ad}_v \circ A^{-1}$. Indeed, given $z \in \mathfrak{g}$, we obtain $[Av, z] = [Av, AA^{-1}z] = A[v, A^{-1}z]$, which yields the desired result. Thus,
\begin{equation*}
\begin{split}
b(Av_1, \ldots, Av_k) &= \sum_{\sigma \in S_k} {\rm sg}(\sigma){\rm Tr}({\rm ad}_{Av_{\sigma (1)}} \circ \ldots \circ {\rm ad}_{Av_{\sigma (k)}}) \\
&= \sum_{\sigma \in S_k} {\rm sg}(\sigma){\rm Tr}(A \circ {\rm ad}_{v_{\sigma (1)}} \circ A^{-1} \circ \ldots \circ A \circ {\rm ad}_{v_{\sigma (k)}} \circ A^{-1}) \\
&= \sum_{\sigma \in S_k} {\rm sg}(\sigma){\rm Tr}(A \circ {\rm ad}_{v_{\sigma (1)}} \circ \ldots \circ {\rm ad}_{v_{\sigma (k)}} \circ A^{-1}) \\
&= \sum_{\sigma \in S_k} {\rm sg}(\sigma){\rm Tr}(A^{-1} \circ A \circ {\rm ad}_{v_{\sigma (1)}} \circ \ldots \circ {\rm ad}_{v_{\sigma (k)}}) \\
&= \sum_{\sigma \in S_k} {\rm sg}(\sigma){\rm Tr}({\rm ad}_{v_{\sigma (1)}} \circ \ldots \circ {\rm ad}_{v_{\sigma (k)}}) = b(v_1, \ldots, v_k)
\end{split}
\end{equation*}
By virtue of Proposition \ref{prop:glrho_ginv}, the map $b$ is also $\mathfrak{aut}(\mathfrak{g})$-invariant. One proves the second statement in a similar manner.
\end{proof}

 Another way of obtaining $\mathfrak{g}$-invariant $k$-linear symmetric maps on $\mathfrak{g}$ comes from a polynomial Casimir element $C$ of order $k$, i.e. a symmetric element of $\mathfrak{g}^{\otimes k}$ satisfying that $\otimes^k {\rm ad}_v C = 0$ for every $v\in \mathfrak{g}$. Recall that $\otimes^k {\rm ad}_v: \mathfrak{g}^{\otimes k} \to \mathfrak{g}^{\otimes k}$ given by 
 $$
 \otimes^k {\rm ad}_v (w_1 \otimes \ldots \otimes w_k) := \sum_{j=1}^k w_1 \otimes \ldots \otimes {\rm ad}_v w_j \otimes \ldots \otimes w_k
 $$
for any $w_1, \ldots, w_k \in \mathfrak{g}$.

	\begin{theorem}\label{PolPol}
		Every polynomial Casimir element $C$ of order $k$ on a Lie algebra $\mathfrak{g}$ induces a $\mathfrak{g}$-invariant $k$-linear symmetric map on $\mathfrak{g}$ given by $b(v_1,\ldots,v_k):=C(\widetilde{\kappa}_\mathfrak{g}(v_1),\ldots,\widetilde{\kappa}_\mathfrak{g}(v_k))$ for every $v_1, \ldots, v_k \in \mathfrak{g}$, where $\widetilde{\kappa}_\mathfrak{g}: v\in \mathfrak{g} \mapsto \kappa_{\mathfrak{g}}(v,\cdot)\in \mathfrak{g}^*$ is a map induced by a Killing form $\kappa_{\mathfrak{g}}$ on $\mathfrak{g}$.
	\end{theorem}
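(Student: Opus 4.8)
The plan is to exhibit $b$ as the composition of the linear map $\widetilde\kappa_\mathfrak{g}:\mathfrak{g}\to\mathfrak{g}^*$ with the $k$-linear form on $\mathfrak{g}^*$ obtained by pairing with $C$, and then to track how the Casimir condition on $C$ and the ${\rm ad}$-invariance of $\kappa_\mathfrak{g}$ interact. Multilinearity and symmetry of $b$ are immediate: $\widetilde\kappa_\mathfrak{g}$ is linear, $C$ viewed as a map $(\mathfrak{g}^*)^{\otimes k}\to\mathbb{R}$ via the natural pairing $\langle\cdot,\cdot\rangle$ between $\mathfrak{g}^{\otimes k}$ and $(\mathfrak{g}^*)^{\otimes k}$ is $k$-linear, and it is symmetric because $C$ is a symmetric element of $\mathfrak{g}^{\otimes k}$; precomposing every argument with the single map $\widetilde\kappa_\mathfrak{g}$ preserves symmetry. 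So only the $\mathfrak{g}$-invariance condition (\ref{ginva}) for the $\mathfrak{g}$-module $(\mathfrak{g},{\rm ad})$ requires work.

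First I would record two facts. (i) The ${\rm ad}$-invariance of the Killing form, $\kappa_\mathfrak{g}({\rm ad}_vw,u)=-\kappa_\mathfrak{g}(w,{\rm ad}_vu)$, says exactly that $\widetilde\kappa_\mathfrak{g}({\rm ad}_vw)={\rm ad}^*_v(\widetilde\kappa_\mathfrak{g}(w))$, where $(\mathfrak{g}^*,{\rm ad}^*)$ denotes the dual $\mathfrak{g}$-module, ${\rm ad}^*_v\xi:=-\xi\circ{\rm ad}_v$; in other words $\widetilde\kappa_\mathfrak{g}$ is a morphism of $\mathfrak{g}$-modules $(\mathfrak{g},{\rm ad})\to(\mathfrak{g}^*,{\rm ad}^*)$. (ii) Writing $\otimes^k{\rm ad}^*_v$ for the operator on $(\mathfrak{g}^*)^{\otimes k}$ defined exactly as $\otimes^k{\rm ad}_v$ but using ${\rm ad}^*$, one checks on decomposable tensors that
\[
\langle \otimes^k{\rm ad}_v\,\omega,\;\Xi\rangle = -\langle\omega,\;\otimes^k{\rm ad}^*_v\,\Xi\rangle,\qquad \forall v\in\mathfrak{g},\ \forall\omega\in\mathfrak{g}^{\otimes k},\ \forall\Xi\in(\mathfrak{g}^*)^{\otimes k}.
\]
Consequently the Casimir condition $\otimes^k{\rm ad}_vC=0$ is equivalent to $\langle C,\otimes^k{\rm ad}^*_v\Xi\rangle=0$ for every $\Xi$, that is, viewing $C$ as a $k$-linear form on $\mathfrak{g}^*$,
\[
\sum_{i=1}^k C(\xi_1,\ldots,{\rm ad}^*_v\xi_i,\ldots,\xi_k)=0,\qquad \forall v\in\mathfrak{g},\ \forall\xi_1,\ldots,\xi_k\in\mathfrak{g}^*,
\]
i.e. $C$ is $\mathfrak{g}$-invariant relative to $(\mathfrak{g}^*,{\rm ad}^*)$ in the sense of (\ref{ginva}).

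Then the theorem drops out: by the definition of $b$ and fact (i),
\[
\sum_{i=1}^k b(v_1,\ldots,{\rm ad}_v v_i,\ldots,v_k)=\sum_{i=1}^k C\big(\widetilde\kappa_\mathfrak{g}(v_1),\ldots,{\rm ad}^*_v\widetilde\kappa_\mathfrak{g}(v_i),\ldots,\widetilde\kappa_\mathfrak{g}(v_k)\big)=0,
\]
where the last equality is the $\mathfrak{g}$-invariance of $C$ on $\mathfrak{g}^*$ applied with $\xi_i=\widetilde\kappa_\mathfrak{g}(v_i)$. Hence $b$ satisfies (\ref{ginva}) and is a $\mathfrak{g}$-invariant $k$-linear symmetric map on $\mathfrak{g}$.

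I expect the only delicate point to be the sign/transpose bookkeeping in fact (ii): one must identify the transpose of $\otimes^k{\rm ad}_v$ on $\mathfrak{g}^{\otimes k}$ with $-\otimes^k{\rm ad}^*_v$ on $(\mathfrak{g}^*)^{\otimes k}$, and ensure that the coadjoint action appearing there is the \emph{same} dual representation ${\rm ad}^*$ used in fact (i); once conventions are fixed consistently this is routine, but a sign is easy to lose. If one prefers to avoid the abstract pairing, the same computation can be carried out in a basis $\{e_i\}$ of $\mathfrak{g}$ with dual basis $\{e^i\}$: writing $C=\sum c^{i_1\cdots i_k}\,e_{i_1}\otimes\cdots\otimes e_{i_k}$ and expanding both $\otimes^k{\rm ad}_vC=0$ and $\sum_i C(\ldots,{\rm ad}^*_v e^{j_i},\ldots)$, one sees these are literally the same family of linear relations on the coefficients $c^{i_1\cdots i_k}$ after relabeling summation indices, which makes the equivalence in (ii) transparent.
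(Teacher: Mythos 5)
Your proposal is correct and follows essentially the same route as the paper's proof: both reduce the claim to the identity $\widetilde\kappa_\mathfrak{g}\circ{\rm ad}_v=\pm\,{\rm ad}^*_v\circ\widetilde\kappa_\mathfrak{g}$ coming from the ${\rm ad}$-invariance of $\kappa_\mathfrak{g}$, and then identify $\sum_j C(\ldots,{\rm ad}^*_v\xi_j,\ldots)$ with the pairing of $\otimes^k{\rm ad}_vC=0$ against $\xi_1\otimes\cdots\otimes\xi_k$. The only difference is your sign convention for the dual action (coadjoint ${\rm ad}^*_v\xi=-\xi\circ{\rm ad}_v$ versus the paper's plain transpose), which you track more carefully than the paper does; the conclusion is unaffected either way since the relevant sum vanishes.
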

	\begin{proof} 
 Since $\kappa_\mathfrak{g}$ is $\mathfrak{g}$-invariant, we have 
		$$
		[{\rm ad}_v^*\circ\widetilde{\kappa}_\mathfrak{g}(v_1)](v_2)\!=\!\widetilde{\kappa}_\mathfrak{g}(v_1)({\rm ad}_vv_2)\!=\!\kappa_{\mathfrak{g}}(v_1,{\rm ad}_vv_2)\!=\!-\kappa_{\mathfrak{g}}({\rm ad}_vv_1,v_2)\!=\!-[\widetilde{\kappa}_\mathfrak{g}\circ {\rm ad}_v(v_1)](v_2),$$
		for all $v,v_1,v_2\in\mathfrak{g}$.	Hence,
		$\widetilde\kappa_\mathfrak{g}\circ {\rm ad}_v=-{\rm ad}_v^*\circ \widetilde\kappa_\mathfrak{g}$ for every $v\in \mathfrak{g}$. As $C$ is a Casimir element, $\otimes^k {\rm ad}_v C=0$. Thus for all $ v_1,\ldots,v_k,v\in\mathfrak{g}$, we get
		\begin{equation*}
  \begin{split}
&\sum_{j=1}^{k} b(v_1, \ldots,\textrm{ad}_v v_j,\ldots, v_k) = \sum_{j=1}^{k} C(\widetilde \kappa_\mathfrak{g}(v_1), \ldots, \widetilde \kappa_\mathfrak{g}(\textrm{ad}_v v_j),  \ldots, \widetilde \kappa_\mathfrak{g}(v_k)) \\
  &= \sum_{j=1}^{k} \!C(\widetilde\kappa_\mathfrak{g}(v_1), \ldots, {\rm ad}^*_v\widetilde\kappa_\mathfrak{g}(v_j), \ldots, \widetilde\kappa_\mathfrak{g}(v_k)) 
= \sum_{j=1}^{k} [(1^{\otimes j-1} \otimes {\rm ad}_v \otimes 1^{\otimes k-j}) C] (\widetilde\kappa_\mathfrak{g}(v_1), \ldots, \widetilde\kappa_\mathfrak{g}(v_k)) \\
   &=(\otimes^k {\rm ad}_v C)(\widetilde\kappa_\mathfrak{g}(v_1), \ldots, \widetilde\kappa_\mathfrak{g}(v_k)) \!=\! 0.
   \end{split}
		\end{equation*}
  Hence, the map $b$ is $\mathfrak{g}$-invariant.
	\end{proof}
	
	If $\mathfrak{g}$ is semi-simple, then the converse of Theorem \ref{PolPol} holds and any $\mathfrak{g}$-invariant $k$-linear symmetric form on $\mathfrak{g}$ amounts to a Casimir element. Indeed, it is known cite{FH91} that the map $\widetilde \kappa_{\mathfrak{g}}$ is a vector space isomorphism for a semi-simple Lie algebra $\mathfrak{g}$ over a field of characteristic zero. Thus, any map $b: \mathfrak{g}^{\otimes k} \to \mathbb{R}$ can be understood as a mapping $C: (\mathfrak{g}^*)^{\otimes k} \to \mathbb{R}$. Then, the computation presented in the proof of Theorem \ref{PolPol} ensures that $C$ is a Casimir element. 
 
 If $\mathfrak{g}$ is not semi-simple,  $\widetilde\kappa_\mathfrak{g}$ is not invertible and the set of $\mathfrak{g}$-invariant multilinear symmetric maps is usually bigger than the set of Casimir elements. In consequence, such maps allow a more detailed study of r-matrix equivalence in the future.
	
\section{Obtaining \texorpdfstring{$\mathfrak{g}$}{}-invariant bilinear maps}\label{Invgbil}\label{Ch:alg_Sec:invcalc}
	
Although the calculation of $\mathfrak{g}$-invariant maps seems to be an easy linear algebra exercise, it often happens to be computationally time-consuming, especially when $\dim\mathfrak{g}$ is large. In this section, we present several methods to simplify this task.

	\begin{proposition}\label{prop:sym_form}
		Let $b$ be a $\mathfrak{g}$-invariant bilinear map on a $\mathfrak{g}$-module $V$. Then, $
		b({\rm Im}\, \rho_v, {\rm ker}\, \rho_v) = 0$ for every $v\in \mathfrak{g}$ and $x\in V$.  Moreover, $b({\rm ad}_v(w),w)=0$ for every $v, w\in \mathfrak{g}$. If $b$ is additionally symmetric, then 
		$b(\rho_v(x),x)=0$ for any $v\in \mathfrak{g}$ and $x\in V$.
	\end{proposition}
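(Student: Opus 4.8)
The plan is to exploit the defining $\mathfrak{g}$-invariance identity from Definition~\ref{ginv}, namely $\sum_i b(x_1,\ldots,\rho_v(x_i),\ldots,x_k)=0$, specialised to the bilinear case $k=2$: for all $v\in\mathfrak{g}$ and $x,y\in V$ one has $b(\rho_v(x),y)+b(x,\rho_v(y))=0$. Every assertion in the proposition is a direct consequence of this single relation, so the proof is short and the main work is just choosing the right substitutions.

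First I would prove $b(\operatorname{Im}\rho_v,\ker\rho_v)=0$. Take $x\in V$ arbitrary and $y\in\ker\rho_v$. Apply the bilinear invariance identity to the pair $(x,y)$: $b(\rho_v(x),y)+b(x,\rho_v(y))=0$. Since $\rho_v(y)=0$, the second term vanishes, giving $b(\rho_v(x),y)=0$. As $x$ ranges over $V$ and $y$ over $\ker\rho_v$, this says precisely $b(\operatorname{Im}\rho_v,\ker\rho_v)=0$.

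Next I would handle $b(\operatorname{ad}_v(w),w)=0$ for $v,w\in\mathfrak{g}$. Here the relevant module is $(\mathfrak{g},\operatorname{ad})$, so the invariance relation reads $b([v,x],y)+b(x,[v,y])=0$. Put $x=y=w$: then $b([v,w],w)+b(w,[v,w])=0$, i.e. $2b([v,w],w)=0$, hence $b(\operatorname{ad}_v(w),w)=0$ (the field has characteristic zero throughout this work, so dividing by $2$ is legitimate). Note this step does \emph{not} require $b$ to be symmetric, since in the $\mathfrak{g}$ case both arguments already sit in $\mathfrak{g}$ and the invariance identity is genuinely symmetric in how the two slots are differentiated. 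Finally, for the last claim, assume $b$ is symmetric and take $v\in\mathfrak{g}$, $x\in V$. The invariance relation with $y=x$ gives $b(\rho_v(x),x)+b(x,\rho_v(x))=0$; symmetry of $b$ makes the two summands equal, so $2b(\rho_v(x),x)=0$ and thus $b(\rho_v(x),x)=0$.

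There is essentially no obstacle here: the only point requiring a word of care is that the middle statement $b(\operatorname{ad}_v(w),w)=0$ holds \emph{without} symmetry, whereas the analogous statement $b(\rho_v(x),x)=0$ on a general module $V$ genuinely needs symmetry — the asymmetry being that in $\mathfrak{g}$ the invariance identity $b([v,w],w)+b(w,[v,w])=0$ already produces the two terms we want to combine, while on an arbitrary $V$ the identity $b(\rho_v(x),y)+b(x,\rho_v(y))=0$ with $y=x$ yields $b(\rho_v(x),x)+b(x,\rho_v(x))$, which only collapses to $2b(\rho_v(x),x)$ after invoking symmetry. I would present the three parts in this order so the reader sees the common mechanism, and I would explicitly flag that $2$ is invertible in the ground field (characteristic zero) so that each $2b(\cdots)=0$ really gives $b(\cdots)=0$.
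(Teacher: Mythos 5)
Your first and third claims are handled exactly as in the paper and are correct: $b(\rho_v(x),y)+b(x,\rho_v(y))=0$ with $y\in\ker\rho_v$ kills the second term, and with $y=x$ plus symmetry it gives $2b(\rho_v(x),x)=0$.

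However, your treatment of the middle claim $b(\operatorname{ad}_v(w),w)=0$ contains a genuine gap, and precisely at the point you flag as requiring ``a word of care.'' From the invariance identity with $x=y=w$ you obtain $b([v,w],w)+b(w,[v,w])=0$, and you then assert this equals $2b([v,w],w)$. But $b(w,[v,w])$ and $b([v,w],w)$ are the values of $b$ with its two arguments swapped; identifying them is exactly an appeal to the symmetry of $b$, which the proposition deliberately does not assume for this part (and which you explicitly claim not to need). Without symmetry, $b([v,w],w)+b(w,[v,w])=0$ gives no information about $b([v,w],w)$ alone — for instance, any antisymmetric invariant $b$ satisfies that sum identically while $b([v,w],w)$ need not vanish a priori from that relation. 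The correct mechanism, which the paper uses, is the antisymmetry of the Lie bracket rather than of $b$: write $b(\operatorname{ad}_{v}(w),w)=-b(\operatorname{ad}_{w}(v),w)$ using $[v,w]=-[w,v]$, and then apply the invariance identity differentiating with respect to $w$ to get $-b(\operatorname{ad}_{w}(v),w)=b(v,\operatorname{ad}_{w}(w))=b(v,[w,w])=0$. You should replace your argument for the middle claim by this chain; the rest of your proof stands.
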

	\begin{proof}
	Every $x_1\in {\rm Im}\rho_{v}$ can be written as $x_1 :=\rho_{v}(x_3)$ for some $x_3\in V$. Assume that $x_2 \in {\rm ker}\, \rho_{v}$. As $b$ is $\mathfrak{g}$-invariant, $b(x_1, x_2) = b(\rho_{v}(x_3), x_2) = -b(x_3, \rho_{v}(x_2)) = 0$ and $b({\rm Im}\,\rho_v,\ker \rho_v)=0$. 
 For any $v_1,v_2 \in \mathfrak{g}$, we get $b(\textrm{ad}_{v_1}(v_2), v_2) = -b(\textrm{ad}_{v_2}(v_1), v_2) = b(v_1, \textrm{ad}_{v_2}(v_2)) = 0$.
Finally by the symmetricity of $b$, we obtain $b(\rho_v(x), x) = -b(x, \rho_v(x)) = -b(\rho_v(x),x)$ for all $v \in \mathfrak{g}$ and $x\in V$. Therefore, $b(\rho_v(x), x) = 0$ for every $x\in V$ and $v\in \mathfrak{g}$. 
	\end{proof}
	
	\begin{proposition}\label{prop:form_cond}
		Let $b:V^{\otimes 2} \to \mathbb{R}$ be a $\mathfrak{g}$-invariant bilinear map relative to the $\mathfrak{g}$-module $(V, \rho)$. If $W$ is a two-dimensional linear subspace of $V$ satisfying that $\rho_v W\subset W$ for any $v \in \mathfrak{g}$, then for any linearly independent $f_s, f_t \in W$ one has
		\begin{equation}\label{exp}
			{\rm Tr}(\rho_v|_W)b(f_s, f_t)f_t\wedge f_s =b(f_s, f_s)(\rho_v f_t)\wedge f_t + b(f_t, f_t) f_s\wedge(\rho_v f_s).
		\end{equation}
	\end{proposition}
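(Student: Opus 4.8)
The plan is to exploit that $W$ is a two-dimensional $\rho(\mathfrak{g})$-invariant subspace, so $\{f_s, f_t\}$ is a basis of $W$ and each $\rho_v$ acts on $W$ by a $2\times 2$ matrix in this basis. First I would write, for a fixed $v \in \mathfrak{g}$,
\begin{equation*}
\rho_v f_s = \alpha f_s + \beta f_t, \qquad \rho_v f_t = \gamma f_s + \delta f_t,
\end{equation*}
for suitable scalars $\alpha,\beta,\gamma,\delta \in \mathbb{R}$ depending on $v$, so that ${\rm Tr}(\rho_v|_W) = \alpha + \delta$. The goal is then to expand both sides of \eqref{exp} in the basis $\{f_t \wedge f_s\}$ of $\Lambda^2 W$ and check the scalar identity that results.

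Next I would use the consequences of $\mathfrak{g}$-invariance provided by Proposition \ref{prop:sym_form}: since $b$ is $\mathfrak{g}$-invariant and $W$ is $\rho(\mathfrak{g})$-invariant, applying $b(\rho_v x, y) + b(x, \rho_v y) = 0$ to the pairs $(f_s,f_s)$, $(f_t,f_t)$, $(f_s,f_t)$ yields three relations. Concretely, $b(\rho_v f_s, f_s) = 0$ gives $\alpha\, b(f_s,f_s) + \beta\, b(f_t,f_s) = 0$; similarly $b(\rho_v f_t, f_t) = 0$ gives $\gamma\, b(f_s,f_t) + \delta\, b(f_t,f_t) = 0$; and $b(\rho_v f_s, f_t) + b(f_s, \rho_v f_t) = 0$ gives $(\alpha+\delta)\, b(f_s,f_t) + \beta\, b(f_t,f_t) + \gamma\, b(f_s,f_s) = 0$. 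Here I use that $b$ is symmetric where needed (the first two relations already use the symmetric form of Proposition \ref{prop:sym_form}). These are exactly the algebraic identities that will make the two sides of \eqref{exp} agree.

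Finally, I would substitute the expansions of $\rho_v f_s$ and $\rho_v f_t$ into the right-hand side of \eqref{exp}:
\begin{equation*}
b(f_s,f_s)(\gamma f_s + \delta f_t)\wedge f_t + b(f_t,f_t) f_s \wedge (\alpha f_s + \beta f_t) = \big(\gamma\, b(f_s,f_s) + \beta\, b(f_t,f_t)\big)\, f_s \wedge f_t,
\end{equation*}
using $f_s \wedge f_s = f_t \wedge f_t = 0$, and the left-hand side is $(\alpha+\delta)\, b(f_s,f_t)\, f_t \wedge f_s = -(\alpha+\delta)\, b(f_s,f_t)\, f_s \wedge f_t$. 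Comparing coefficients of $f_s \wedge f_t$, the identity \eqref{exp} becomes precisely the third relation above, $(\alpha+\delta)\, b(f_s,f_t) + \beta\, b(f_t,f_t) + \gamma\, b(f_s,f_s) = 0$, which holds by $\mathfrak{g}$-invariance. I do not anticipate a genuine obstacle here; the only point requiring mild care is bookkeeping the sign from $f_t \wedge f_s = -f_s \wedge f_t$ and making sure the symmetry of $b$ is invoked correctly when collapsing $b(f_t, f_s)$ to $b(f_s, f_t)$. The argument is uniform in $v$, so it establishes \eqref{exp} for all $v \in \mathfrak{g}$.
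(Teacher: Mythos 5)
Your proof is correct and follows essentially the same route as the paper: expand $\rho_v|_W$ in the basis $\{f_s,f_t\}$ and observe that \eqref{exp}, after expanding the wedges and comparing coefficients of $f_s\wedge f_t$, is exactly the invariance identity $b(\rho_v f_s,f_t)+b(f_s,\rho_v f_t)=0$. Only that third relation is actually needed; the first two (which require symmetry of $b$, not assumed here) are superfluous and can be dropped.
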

	\begin{proof} 
 Since $\rho_v W\subset W$, there exists constants $\alpha_1,\alpha_2,\beta_1,\beta_2\in \mathbb{R}$ such that
		$\rho_v f_s=\alpha_1f_s+\alpha_2f_t$ and $\rho_v f_t=\beta_1f_s+\beta_2f_t$.
		The $\mathfrak{g}$-invariance of $b$ ensures that
		\begin{equation*}
			\alpha_1 b(f_s, f_t) + \alpha_2 b(f_t, f_t) = b(vf_s, f_t)
			= -b(f_s, vf_t) = -\beta_1 b(f_s, f_s) - \beta_2 b(f_s, f_t).
		\end{equation*}
  The above expression gives (\ref{exp}) after rearranging.
	\end{proof}
	\begin{example}\label{ex:heisenberg} 
		Consider the 3D Heisenberg Lie algebra $\mathfrak{h}$. Take a basis $\{e_1,e_2,e_3\}$ of $\mathfrak{h}$ as in Table \ref{tabela3w}. Since $\mathfrak{h}$ is nilpotent, its Killing form $\kappa_\mathfrak{h}$ vanishes \cite[pg. 480]{Ha15}. In the bases $\{e_{12}, e_{13}, e_{23}\}$ and $\{e_{123}\}$ of $\Lambda^2 \mathfrak{h}$ and $\Lambda^3 \mathfrak{h}$, respectively, a $\mathfrak{h}$-invariant form $b$ and its extensions to $\Lambda^2 \mathfrak{h}$ and $\Lambda^3 \mathfrak{h}$ given by Propositions \ref{extension} and \ref{prop:sym_form} read
		$$
		[b]:=\left(\begin{array}{ccc}
		\alpha_1&\alpha_2&0\\
		\alpha_3&\alpha_4&0\\
		0&0&0
		\end{array}\right), \quad
		[b_{\Lambda^2\mathfrak{h}}]:=\left(\begin{array}{ccc}
		\alpha_1 \alpha_4 - \alpha_2 \alpha_3&0&0\\
		0&0&0\\
		0&0&0\\
		\end{array}\right),\quad 
		[b_{\Lambda^3\mathfrak{g}}]:=\left(0\right), 
		$$
		for $\alpha_1,\ldots,\alpha_4\in \mathbb{R}.$ Using again Proposition \ref{prop:sym_form}, we can compute the general $\mathfrak{h}$-invariant bilinear map $\widetilde{b}$ on $\Lambda^2\mathfrak{h}$, which reads
		$$
		[\widetilde{b}]:=\left(\begin{array}{ccc}
		\beta_3&\beta_2&\beta_1\\
		\beta_4&0&0\\
		\beta_5&0&0\\
		\end{array}\right).
		$$
		\vskip -0.2cm
		If $\widetilde{b}$ is symmetric, Proposition \ref{prop:sym_form} yields $\beta_1 = \beta_5 = 0$ and $\beta_2 = \beta_4 = 0$. If $\widetilde{b}$ is anti-symmetric, then $\beta_5 = -\beta_1$, $\beta_4 = -\beta_2$ and $\beta_3 = 0$. All given $\widetilde{b}$ are $\mathfrak{h}$-invariant.
	\end{example}
 
	\begin{example}\label{Ex:r31}
		Let us consider the Lie algebra $\mathfrak{r}_{3,1} := \langle e_1, e_2, e_3 \rangle$ with commutation relations given in Table \ref{tabela3w}. Using Proposition \ref{prop:sym_form}, we obtain the following form of a bilinear $\mathfrak{r}_{3,1}$-invariant map $\omega_{\Lambda^2\mathfrak{r}_{3,1}}$ on $\Lambda^2\mathfrak{r}_{3,1}$ in the basis $\{e_{12}, e_{13}, e_{23}\}$ of $\Lambda^2 \mathfrak{r}_{3,1}$:
		$$
		[\omega_{\Lambda^2\mathfrak{r}_{3,1}}]:=\left(\begin{array}{ccc}
		0&a&0\\
		b&0&0\\
		0&0&0\\
		\end{array}\right)
		$$
		for some $a,b \in \mathbb{R}$. With the help of Proposition \ref{prop:form_cond}, let us find the admissible values of coefficients $a, b$. Assuming $v:=e_1$ and $W := {\rm span}(e_{12}, e_{13})$, we obtain $\rho_v e_{12} = e_{12}$, $\rho_v e_{13} = e_{13}$. Thus, ${\rm Tr} \rho_v \vert_{W} = 2$. It is then easy to verify that $b(e_{12}, e_{13}) = 0$ and $b(e_{13}, e_{12}) = 0$. Therefore, $a=b=0$ and there are no $\mathfrak{r}_{3,1}$-invariant forms on $\Lambda^2\mathfrak{r}_{3,1}$.
	\end{example}
	
	\section{\texorpdfstring{$G$}{}-Gradations, Grassmann algebras, and CYBEs}\label{Ch:alg_Sec:grad}
	
In this section, we discuss another method for an efficient study of mCYBE solutions. It is based on a particular type of gradation admitted by a Lie algebra $\mathfrak{g}$. We show that it induces a decomposition of $\Lambda \mathfrak{g}$ which is compatible, in a way to be stated, with the algebraic Schouten bracket.
	
	\begin{definition} \label{Def:Ggrad}
 We say that $\mathfrak{g}$ admits a $G$-{\it gradation} if
		$
		\mathfrak{g}= \bigoplus_{\alpha \in G\subset \mathbb{R}^n} \mathfrak{g}^{(\alpha)},$
		where $(G\subset \mathbb{R}^n,\star)$ is a commutative group, the $\mathfrak{g}^{(\alpha)}$ are subspaces  of $\mathfrak{g}$, and $[\mathfrak{g}^{(\alpha)},\mathfrak{g}^{(\beta)}]\subset \mathfrak{g}^{(\alpha\star\beta)}$ for all $\alpha, \beta \in G$. The spaces $\mathfrak{g}^{(\alpha)}$, for $\alpha\in G$, are called the {\it homogeneous spaces} of the gradation and $\alpha$ is the {\it degree} of $\mathfrak{g}^{(\alpha)}$.
	\end{definition}

A $G$-gradation is a type of graded Lie algebra \cite{Va94}. Obviously, every $\mathfrak{g}$ admits a trivial $\mathbb{Z}$-gradation with $\mathfrak{g}^{(0)}=\mathfrak{g}$ and  $\mathfrak{g}^{(\alpha)}=\langle 0\rangle$ with $\alpha\neq 0$. If $\mathfrak{g}$ admits a $G$-gradation and $G$ is known from context or of minor importance, we will simply say that $\mathfrak{g}$ admits a gradation. It is worth noticing that several results presented in this section can be extended for $G$ being a semigroup. However, such a generalisation will not be necessary for our purposes.
	
	\begin{example} \label{Gradsu2} 
 It stems from  the commutation relations in Table \ref{tabela3w} that $\mathfrak{su}_2$ is isomorphic to the Lie algebra on $\mathbb{R}^3$ with the Lie bracket given by the vector product in $\mathbb{R}^3$, denoted by $\times$. Consider a basis $\{e_a, e_b, e_c\}$ of $\mathbb{R}^3$, where $e_a\in \mathbb{R}^3\backslash\{0\}$, the vector $e_b$ is perpendicular to $e_a$, and $e_c :=e_a\times e_b$. Then, $\mathfrak{su}_2$ admits a $\mathbb{Z}_2$-gradation $\mathfrak{g}^{(0)}=\langle e_a\rangle$ and $\mathfrak{g}^{(1)}=\langle e_b, e_c\rangle$. Since $e_a$ is not unique, $\mathfrak{su}_2$ admits several $\mathbb{Z}_2$-gradations.
	\end{example}

\begin{example}
Consider the Lie algebra $\mathfrak{sl}_2 = {\rm span}(e_1, e_2, e_3)$ with commutation relations
$$
[e_1, e_2] = e_2, \quad [e_1, e_3] = -e_3, \quad [e_2, e_3] = e_1
$$
One can immediately verify that $\mathfrak{sl}_2$ admits a $\mathbb{Z}$-gradation with $\mathfrak{g}_{1} = \langle e_2 \rangle$, $\mathfrak{g}_{-1} = \langle e_3 \rangle$, $\mathfrak{g}_{0} = \langle e_1 \rangle$ and $\mathfrak{g}_{\alpha} = \{0 \}$ for any $\alpha \in \mathbb{Z} \backslash \{-1, 0, 1\}$.
\end{example}

\begin{example}\label{Ex:rootdec}
Any complex semisimple Lie algebra $\mathfrak{g}$ over an algebraically closed field of characteristic zero admits the so-called root space decomposition \cite{Jacobson}, namely
$$
\mathfrak{g} = \mathfrak{h} \oplus \bigoplus_{\alpha \in \Phi} \mathfrak{g}_{\alpha}
$$
for a Cartan subalgebra $\mathfrak{h}$ and a root system $\Phi$. Among other properties, the subspaces $\mathfrak{g}_{\alpha}$ satisfy $[\mathfrak{h}, \mathfrak{g}_{\alpha}] \subseteq \mathfrak{g}_{\alpha}$ and $[\mathfrak{g}_{\alpha}, \mathfrak{g}_{\beta}] \subseteq \mathfrak{g}_{\alpha + \beta}$, whenever $\alpha + \beta \in \Phi$. It can be shown that the root system gives rise to the lattice $\Pi$ in $\mathbb{R}^n$ (see \cite{FH91} for details). In consequence, we can extend the root system to the whole lattice $\Pi$ and construct a generalised root space decomposition by associating previously introduces subspaces $\mathfrak{g}_{\alpha}$ for $\alpha \in \Phi$ and $\mathfrak{g}_{\beta} = \{0\}$ for $\beta \in \Pi \backslash \Phi$. Thus, a root space decomposition gives rise to a $\mathbb{Z}^k$-gradation, where $k = {\rm dim}(\mathfrak{h})$.
\end{example}
 
Further examples of gradations are presented for all three-dimensional Lie algebras in Table \ref{tabela3w}. Additionally, Figures \ref{so22_diags} and \ref{so32_diags} give $\mathbb{Z}^2$-gradations for the special pseudo-orthogonal Lie algebras $\mathfrak{so}(3,2)$ and $\mathfrak{so}(2,2)$ \cite{CO13,HWZ92}. 

Example \ref{Ex:rootdec} details a type of gradation that stems from a specific structure of semisimple Lie algebras given by the root decomposition. Although more general Lie algebras do not admit such a decomposition, it is sometimes possible to introduce a $G$-gradation whose properties are close to standard root decompositions. This kind of gradation will be useful for deriving $\mathfrak{g}$-invariant subspaces in $\Lambda\mathfrak{g}$.
	
	\begin{definition}\label{Def:rootgrad}
 A $\mathbb{Z}^k$-gradation on a Lie algebra $\mathfrak{g}$ is called a {\it root $\mathbb{Z}^k$-gradation} if there exist an abelian Lie subalgebra $\mathfrak{g}^{(0)} \subseteq \mathfrak{g}$ and an injective group homomorphism $\Xi:\alpha\in \mathbb{Z}^k \mapsto \widetilde{\alpha}\in \mathfrak{g}^{(0)*}$ such that $\dim \mathfrak{g}^{(0)}=k$ and $[e,e^{(\alpha)}]=\widetilde{\alpha}(e)e^{(\alpha)}$ for every $e\in \mathfrak{g}^{(0)}$ and $e^{(\alpha)}\in \mathfrak{g}^{(\alpha)}$.
	\end{definition}
	
	Since $\mathfrak{g}^{(0)}$ is  abelian, every root decomposition gives rise to a root $\mathbb{Z}^k$-gradation. 
 
 \begin{example}\label{Ex:so22}
 Consider the basis $\{e_{-}, e_0, e_{+}, f_{-}, f_0, f_{+}\}$ of $\mathfrak{so}(2,2)\simeq \mathfrak{sl}_2\oplus\mathfrak{sl}_2$, where $\{e_{-}, e_0, e_{+}\}$, $\{f_{-}, f_0, f_{+}\}$ are bases of each copy of $\mathfrak{sl}_2$ within $\mathfrak{so}(2,2)$. The first diagram of Figure \ref{so22_diags} shows a root decomposition for $\mathfrak{so}(2,2)$ leading to a root $\mathbb{Z}^2$-gradation given by $\mathfrak{so}(2,2)^{(0)} = {\rm span}(e_0,f_0)$ and $\Xi:(i,j)\in \mathbb{Z}^2\mapsto ie^0+jh^0\in (\mathfrak{so}(2,2)^{(0)})^*$, where $\{e^0,f^0\}$ denote the elements dual to $\{e_0,f_0\}$, accordingly.

 		\begin{figure}[h!]
			\begin{center}
				\begin{minipage}{\textwidth}
					\begin{center}
						\begin{tikzpicture}[scale=0.7]
						{\tiny 
							\draw [help lines,dotted] (-2,-2) grid (2,2);
							\draw [-] (-2,0)--(2,0) node[right] {$e^0$};
							\draw [-] (0,-2)--(0,2) node[above] {$f^0$};
							\color{blue}
							\filldraw (1,0) circle (2pt) node[above] {$e_{+}$};
							\filldraw (0,1) circle (2pt) node[left] {$f_{+}$};
							\filldraw (-1,0) circle (2pt) node[below] {$e_{-}$};
							\filldraw (0,-1) circle (2pt) node[right] {$f_{-}$};
							\filldraw (0,0) circle (2pt) node[below right] {$e_0$} node[above right] {$f_0$};\color{black}}
						\end{tikzpicture}$\quad\quad$
						\begin{tikzpicture}[scale=0.7]
						{\tiny
							\draw [help lines,dotted] (-2,-2) grid (2,2);
							\draw [-] (-2,0)--(2,0) node[right] {$e^0$};
							\draw [-] (0,-2)--(0,2) node[above] {$f^0$};
							\color{blue}
							\filldraw (1,0) circle (2pt) node[below] {\color{black}$1$\color{blue}};
							\filldraw (1,1) circle (2pt);
							\filldraw (0,1) circle (2pt) node[above right] {\color{black}$1$\color{blue}};
							\filldraw (-1,0) circle (2pt) node[below] {\color{black}$-1$\color{blue}};
							\filldraw (-1,-1) circle (2pt);
							\filldraw (0,-1) circle (2pt) node[below] {\color{black}$-1$\color{blue}};
							\filldraw (1,-1) circle (2pt);
							\filldraw (-1,1) circle (2pt);
							\filldraw (0,0) circle (2pt) node[below right] {\color{black}$0$\color{blue}};
							\draw[red, dashed] (-1,-1)--(-1,1);
							\draw[red, dashed] (-1,1)--(1,1);
							\draw[red, dashed] (1,1)--(1,-1);
							\draw[red, dashed] (1,-1)--(-1,-1);
							\color{black}}
						\end{tikzpicture}$\quad\quad$
						\begin{tikzpicture}[scale=0.7]
						{\tiny 
							\draw [help lines,dotted] (-2,-2) grid (2,2);
							\draw [-] (-2,0)--(2,0) node[right] {$e^0$};
							\draw [-] (0,-2)--(0,2) node[above] {$f^0$};
							\color{blue}
							\filldraw (1,0) circle (2pt) node[below] {\color{black}$1$\color{blue}};
							\filldraw (1,1) circle (2pt);
							\filldraw (0,1) circle (2pt) node[above right] {\color{black}$1$\color{blue}};
							\filldraw (-1,0) circle (2pt) node[below] {\color{black}$-1$\color{blue}};
							\filldraw (-1,-1) circle (2pt);
							\filldraw (0,-1) circle (2pt) node[below] {\color{black}$-1$\color{blue}};
							\filldraw (1,-1) circle (2pt);
							\filldraw (-1,1) circle (2pt);
							\filldraw (0,0) circle (2pt) node[below right] {\color{black}$0$\color{blue}};
							
							\color{black}}
						\end{tikzpicture}
					\end{center}
					\caption{Root decomposition of $\mathfrak{so}(2,2)$ (left), and the induced decompositions on $\Lambda^2 \mathfrak{so}(2,2)$ (center) and $\Lambda^3 \mathfrak{so}(2,2)$ (right) given by Theorem \ref{thm:root}. Non-zero homogeneous spaces are indicated by bold points. A basis for each homogeneous subspace of $\mathfrak{so}(2,2)$ is detailed in the first diagram. Limit homogeneous spaces are represented by points over a dashed line.}\label{so22_diags}
				\end{minipage}
			\end{center}
		\end{figure}
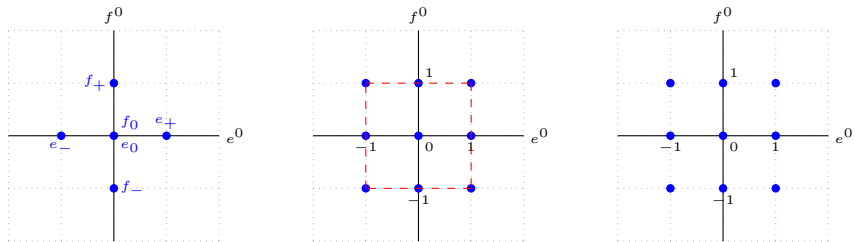
 \end{example}

As stated in Definition \ref{Def:Ggrad}, homogeneous spaces $\mathfrak{g}_{\alpha}$, $\alpha \in G$, of a $G$-gradation on a Lie algebra $\mathfrak{g}$ satisfy $[\mathfrak{g}_{\alpha}, \mathfrak{g}_{\beta}] \subseteq \mathfrak{g}_{\alpha \star \beta}$. Thus, we say that the Lie bracket on $\mathfrak{g}$ is compatible with the decomposition into homogeneous spaces. A $G$-gradation on $\mathfrak{g}$ induces a decomposition of the Grassmann algebra $\Lambda \mathfrak{g}$ and its associated subspaces also enjoy the similar compatibility property relative to the Schouten bracket. The details of this construction are presented in the following theorem.
 	
	\begin{theorem}\label{thm:root}
Let $\mathfrak{g}$ admit a $G$-gradation  $
		\mathfrak{g}=\bigoplus_{\alpha \in G\subset \mathbb{R}^n} \mathfrak{g}^{(\alpha)}$. Then, each subspace $\Lambda^m\mathfrak{g}$ of a Grassmann algebra $\Lambda \mathfrak{g} := \bigoplus _{m \in \mathbb{N}} \Lambda^m \mathfrak{g}$ admits a decomposition $\Lambda^m\mathfrak{g}=\bigoplus_{\alpha\in G}(\Lambda^m\mathfrak{g})^{(\alpha)}$ into so-called {\it homogeneous spaces} of the form
	\begin{equation}\label{dec}
(\Lambda^m\mathfrak{g})^{(\alpha)}:=\!\!\!\!\!\bigoplus_{\stackrel{\{\alpha_1,\ldots,\alpha_m\}\subset G}{\alpha_1\star\ldots\star\alpha_m=\alpha}}\!\!\!\!\!\mathfrak{g}^{(\alpha_1)}\wedge\ldots\wedge \mathfrak{g}^{(\alpha_m)}, \, m>0,\qquad (\Lambda^0\mathfrak{g})^{(0)}=\mathbb{R}.
		\end{equation}
 Moreover, the homogeneous spaces are compatible with the algebraic Schouten bracket on $\Lambda\mathfrak{g}$, namely $[(\Lambda^p\mathfrak{g})^{(\alpha)},(\Lambda^q\mathfrak{g})^{(\beta)}]_{S} \subset \Lambda^{p+q-1}\mathfrak{g}^{(\alpha\star\beta)}$, for $p,q\in \mathbb{Z}$, $\alpha,\beta\in G.$
	\end{theorem}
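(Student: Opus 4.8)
The plan is to establish the statement in two parts: first the vector-space decomposition \eqref{dec}, then the compatibility of the homogeneous spaces with the algebraic Schouten bracket.

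\textbf{Step 1: the decomposition of $\Lambda^m\mathfrak{g}$.} First I would fix, for each $\alpha\in G$, a basis of the homogeneous space $\mathfrak{g}^{(\alpha)}$; concatenating these over all $\alpha$ yields a basis $\mathcal{B}$ of $\mathfrak{g}$ (only finitely many $\mathfrak{g}^{(\alpha)}$ are nonzero since $\dim\mathfrak{g}<\infty$). The wedge products of $m$ distinct elements of $\mathcal{B}$ then form a basis of $\Lambda^m\mathfrak{g}$. Each such basis $m$-vector is of the form $v_1\wedge\dots\wedge v_m$ with $v_i\in\mathfrak{g}^{(\alpha_i)}$, hence lies in $\mathfrak{g}^{(\alpha_1)}\wedge\dots\wedge\mathfrak{g}^{(\alpha_m)}\subset(\Lambda^m\mathfrak{g})^{(\alpha_1\star\dots\star\alpha_m)}$. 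Conversely the spaces $(\Lambda^m\mathfrak{g})^{(\alpha)}$ for distinct $\alpha$ are spanned by disjoint subsets of this basis — because the degree $\alpha_1\star\dots\star\alpha_m$ of each basis $m$-vector is determined by which homogeneous pieces its factors come from — so their sum is direct and equals $\Lambda^m\mathfrak{g}$. For $m=0$ the claim $(\Lambda^0\mathfrak{g})^{(0)}=\mathbb{R}$ is the convention. This part is routine bookkeeping.

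\textbf{Step 2: compatibility with the Schouten bracket.} By bilinearity of $[\cdot,\cdot]_S$ it suffices to check the inclusion on decomposable generators. Take $P=v_1\wedge\dots\wedge v_p\in\mathfrak{g}^{(\alpha_1)}\wedge\dots\wedge\mathfrak{g}^{(\alpha_p)}$ with $\alpha_1\star\dots\star\alpha_p=\alpha$, and $Q=w_1\wedge\dots\wedge w_q\in\mathfrak{g}^{(\beta_1)}\wedge\dots\wedge\mathfrak{g}^{(\beta_q)}$ with $\beta_1\star\dots\star\beta_q=\beta$. Applying the defining formula for the algebraic Schouten bracket (Definition \ref{SNbracket}, clause c), reduced to $\Lambda\mathfrak{g}$), one gets
\[
[P,Q]_S=\sum_{i=1}^p\sum_{j=1}^q(-1)^{i+j}[v_i,w_j]_{\mathfrak{g}}\wedge v_1\wedge\dots\widehat{v_i}\dots\wedge v_p\wedge w_1\wedge\dots\widehat{w_j}\dots\wedge w_q .
\]
Now I use the gradation property $[\mathfrak{g}^{(\alpha_i)},\mathfrak{g}^{(\beta_j)}]\subset\mathfrak{g}^{(\alpha_i\star\beta_j)}$: the $(i,j)$-summand lies in
\[
\mathfrak{g}^{(\alpha_i\star\beta_j)}\wedge\Big(\bigwedge_{k\neq i}\mathfrak{g}^{(\alpha_k)}\Big)\wedge\Big(\bigwedge_{l\neq j}\mathfrak{g}^{(\beta_l)}\Big),
\]
whose total degree, using commutativity and associativity of $\star$, is $(\alpha_i\star\beta_j)\star\bigl(\prod_{k\neq i}\alpha_k\bigr)\star\bigl(\prod_{l\neq j}\beta_l\bigr)=\bigl(\prod_k\alpha_k\bigr)\star\bigl(\prod_l\beta_l\bigr)=\alpha\star\beta$. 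Hence every summand lies in $\Lambda^{p+q-1}\mathfrak{g}^{(\alpha\star\beta)}$ (here if any of the factors $\mathfrak{g}^{(\alpha_i\star\beta_j)}$ is zero the summand simply vanishes), and so does $[P,Q]_S$. This proves $[(\Lambda^p\mathfrak{g})^{(\alpha)},(\Lambda^q\mathfrak{g})^{(\beta)}]_S\subset\Lambda^{p+q-1}\mathfrak{g}^{(\alpha\star\beta)}$ for $p,q>0$; the edge cases $p\le 0$ or $q\le 0$ follow from clauses a) and b) of Definition \ref{SNbracket} (the bracket of two $0$-vectors vanishes, and $[X,f]_S$ for $X\in\mathcal{V}^1$, $f$ constant is zero on a Lie group since constants are the left-invariant $0$-vectors).

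\textbf{Main obstacle.} There is no deep obstacle here; the only points requiring care are (i) making the direct-sum claim in Step 1 rigorous — i.e. verifying that the natural map $\bigoplus_\alpha(\Lambda^m\mathfrak{g})^{(\alpha)}\to\Lambda^m\mathfrak{g}$ is an isomorphism rather than merely surjective, which is where the choice of a homogeneous basis does the work — and (ii) keeping track of signs and of possibly-zero homogeneous components when expanding $[P,Q]_S$, so that the argument covers the case where some $\alpha_i\star\beta_j$ indexes a trivial summand. I would therefore spell out the homogeneous-basis argument carefully and then let the gradation inclusion $[\mathfrak{g}^{(\alpha)},\mathfrak{g}^{(\beta)}]\subset\mathfrak{g}^{(\alpha\star\beta)}$ propagate termwise through the Schouten-bracket formula.
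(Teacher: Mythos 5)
Your proposal is correct and follows essentially the same route as the paper: establish the direct-sum decomposition of $\Lambda^m\mathfrak{g}$ into homogeneous pieces, then verify the bracket inclusion termwise on decomposable homogeneous generators using $[\mathfrak{g}^{(\alpha_i)},\mathfrak{g}^{(\beta_j)}]\subset\mathfrak{g}^{(\alpha_i\star\beta_j)}$ and the commutativity of $\star$. Your Step 1 via a homogeneous basis is in fact a tighter justification of the directness than the paper's argument (which reasons about indecomposable elements as if they were automatically wedges of homogeneous vectors), so no gap remains.
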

	\begin{proof} 
First, let us show that the homogeneous subspaces are well defined. Given any $m \geq 1$ and $\alpha \in G$, let us consider the exterior product  $e^{(\alpha_1)}\wedge\ldots\wedge e^{(\alpha_m)}$ of elements $e^{(\alpha_i)}\in \mathfrak{g}^{(\alpha_i)}$ for every $i=1,\ldots,m$, such that $\alpha=\alpha_1\star\ldots\star\alpha_m$. Since $G$ is commutative, such elements belong to $(\Lambda^m\mathfrak{g})^{(\alpha)}$ regardless of the order of $e^{(\alpha_i)}$ in the exterior product. 

Let us assume $w \in (\Lambda^m \mathfrak{g})^{(\beta_1)}$ and $w \in (\Lambda^m \mathfrak{g})^{(\beta_2)}$ for $\beta_1 \neq \beta_2$. For simplicity, we can take $w$ to be indecomposable. Recall that any indecomposable $w \in (\Lambda^m \mathfrak{g})^{(\alpha)}$ is of the form $e^{\alpha_1} \wedge \ldots \wedge e^{\alpha_m}$, where $\alpha_1 \star \ldots \star \alpha_m = \alpha$. Since $w \in (\Lambda^m \mathfrak{g})^{(\alpha)} \cap (\Lambda^m \mathfrak{g})^{(\beta)}$, we get that $\alpha_1 \star \ldots \alpha_m = \alpha$ and simultaneously $\alpha_1 \star \ldots \alpha_m = \beta$. This contradicts the assumption $\alpha \neq \beta$. 

Moreover, any element of $\Lambda^m \mathfrak{g}$ is a finite sum of indecomposable elements of $\Lambda^m \mathfrak{g}$, each of which were belong to a given $(\Lambda^m \mathfrak{g})^{(\alpha)}$. Hence, $\Lambda^m \mathfrak{g} \subseteq \bigoplus_{\alpha\in G}(\Lambda^m\mathfrak{g})^{(\alpha)}$. Conversely, any element of $(\Lambda^m \mathfrak{g})^{(\alpha)}$ for arbitrary $\alpha \in G$ belong to $\Lambda^m \mathfrak{g}$ and thus, any finite sum of such elements belong to $\Lambda^m \mathfrak{g}$ as well. Thus, $\bigoplus_{\alpha\in G}(\Lambda^m\mathfrak{g})^{(\alpha)} \subseteq \Lambda^m \mathfrak{g}$ and in consequence, $\Lambda^m \mathfrak{g} = \bigoplus_{\alpha\in G}(\Lambda^m\mathfrak{g})^{(\alpha)}$. 

The above argument justifies that there exists a decomposition of $\Lambda^m \mathfrak{g}$ into homogeneous spaces that is indeed a direct sum of $(\Lambda^m \mathfrak{g})^{\alpha}$.
		
Finally, let us take the indecomposable elements $v \in (\Lambda^p\mathfrak{g})^{(\alpha)}$ and $w \in (\Lambda^q\mathfrak{g})^{(\beta)}$ and consider their Schouten bracket. Since both elements can be written as $v = e^{\alpha_1} \wedge \ldots \wedge e^{\alpha_p}$ for $\alpha_1 \star \ldots \star \alpha_p = \alpha$ and $w = e^{\beta_1} \wedge \ldots \wedge e^{\beta_q}$ for $\beta_1 \star \ldots \beta_q = \beta$, the Schouten bracket $[v, w]_S$ reads
$$
[v, w]_S = \sum_{i=1}^p \sum_{j =1}^q (-1)^{i+j} [e^{\alpha_i}, e^{\beta_j}] \wedge e^{\alpha_1} \wedge \ldots \wedge \widehat{e^{\alpha_i}} \wedge \ldots e^{\alpha_p} \wedge e^{\beta_1} \wedge \ldots \wedge \widehat{e^{\beta_j}} \wedge \ldots \wedge e^{\beta_q}
$$
Obviously, each term of the above sum belongs to $\Lambda^{p+q-1}$. Let us analyse their degrees. Since $G$ is commutative and $[e^{\alpha_i}, e^{\beta_j}] \subseteq \mathfrak{g}^{(\alpha_i \star \beta_j)}$, it follows that the degree of each summand equals 
\begin{equation*}
\begin{split}
&(\alpha_i \star \beta_j) \star \alpha_1 \star \ldots \star \alpha_{i-1} \star \alpha_{i+1} \star \ldots \star \alpha_p \star \beta_1 \star \ldots \star \beta_{j-1} \star \beta_{j+1} \star \ldots \star \beta_q \\
&= (\alpha_1 \star \ldots \star \alpha_p) \star (\beta_1 \star \ldots \star \beta_q) = \alpha \star \beta.
\end{split}
\end{equation*}
Therefore, the Schouten bracket $[(\Lambda^p\mathfrak{g})^{(\alpha)}, (\Lambda^q\mathfrak{g})^{(\beta)}]_S \subseteq (\Lambda^{p+q}\mathfrak{g})^{(\alpha \star \beta)}$.
	\end{proof}

	\begin{example}\label{ex:decomp_so22} 
Consider a root $\mathbb{Z}^2$-gradation on $\mathfrak{so}(2,2)$, described in Example \ref{Ex:so22} and depicted in the first diagram of Figure \ref{so22_diags}. By Theorem \ref{thm:root}, it gives rise to induced decompositions on $\Lambda^2\mathfrak{so}(2,2)$ and $\Lambda^3(\mathfrak{so}(2,2))$. Their non-zero homogeneous spaces are indicated by bold points in the central and right diagrams of Figure \ref{so22_diags}. For instance, the homogeneous space $(\Lambda^2\mathfrak{so}(2,2))^{(1,1)} = \langle e_{+} \wedge f_{+}\rangle$ corresponds to the upper-right point in the second diagram of Figure \ref{so22_diags}.
	\end{example}

 \begin{example}
	Let $\{j_{\pm},j_3,k_{\pm},k_3,s_{\pm},r_{\pm}\}$ be a basis of $\mathfrak{so}(3,2)$ satisfying the non-vanishing commutation relations \cite{CO13}
\begin{align*}
			&[j_{\pm},k_{\pm}]=\pm r_{\pm},&[j_{\mp},k_{\pm}]&=\pm s_{\pm},
			&[j_{\mp},r_{\pm}]&=\pm 2k_{\pm},&[j_3,r_{\pm}]&=\pm r_{\pm},\\
			&[j_{\pm},s_{\pm}]=\pm 2k_{\pm},
			&[j_3,s_{\pm}]&=\mp s_{\pm},
			&[k_{\mp},r_{\pm}]&=\pm 2j_{\pm},& [k_3,r_{\pm}]&=\pm r_{\pm},\\
			&[s_+,s_-]=-4(k_3-j_3),
			&[k_{\mp},s_{\pm}]&=\pm 2j_{\mp},&[k_3,s_{\pm}]&=\pm s_{\pm},&[r_+,r_-]&=-4(k_3+j_3),\\
			&&[k_-,k_+]&=2k_3,&[j_-,j_+]&=-2j_3.
	\end{align*}
The diagrams for the homogeneous spaces of $\Lambda \mathfrak{so}(3,2)$ are presented in Figure \ref{so32_diags}.

		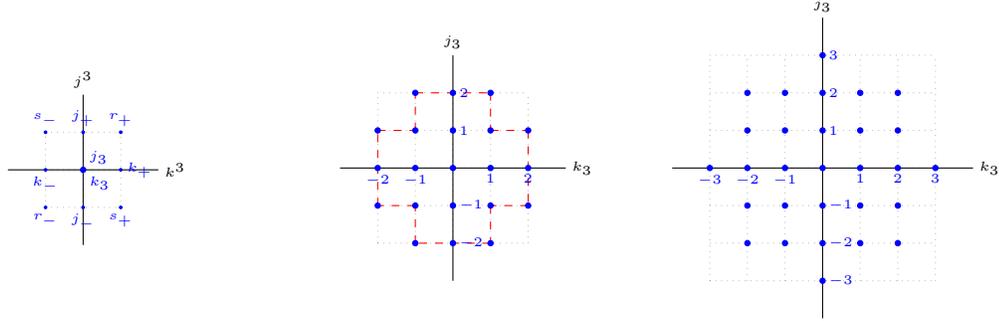
\begin{figure}[h!]
		\centering
		\begin{minipage}{0.3\textwidth}
			\begin{center}
				\begin{tikzpicture}[scale=0.5]
				{\tiny 
					\draw [help lines,dotted] (-1,-1) grid (1,1);
					\draw [-] (-2,0)--(2,0) node[right] {$k^3$};
					\draw [-] (0,-2)--(0,2) node[above] {$j^3$};
					\color{blue}
					\filldraw (1,0) circle (1pt) node[right] {$k_{+}$};
					\filldraw (1,1) circle (1pt) node[above] {$r_{+}$};
					\filldraw (0,1) circle (1pt) node[above] {$j_{+}$};
					\filldraw (-1,0) circle (1pt) node[below] {$k_{-}$};
					\filldraw (-1,-1) circle (1pt) node[below] {$r_{-}$};
					\filldraw (0,-1) circle (1pt) node[below] {$j_{-}$};
					\filldraw (-1,1) circle (1pt) node[above] {$s_{-}$};
					\filldraw (1,-1) circle (1pt) node[below] {$s_{+}$};
					\filldraw (0,0) circle (2pt) node[below right] {$k_3$} node[above right] {$j_3$};\color{black}}
				\end{tikzpicture}
			\end{center}
		\end{minipage}
		\begin{minipage}{0.3\textwidth}
			\begin{center}
				\begin{tikzpicture}[scale=0.5]
				{\tiny
					\draw [help lines,dotted] (-2,-2) grid (2,2);
					\draw [-] (-3,0)--(3,0) node[right] {$k_3$};
					\draw [-] (0,-3)--(0,3) node[above] {$j_3$};
					\color{blue}
					\draw[red, dashed] (-1,2)--(1,2);
					\draw[red, dashed] (1,2)--(1,1);
					\draw[red, dashed] (1,1)--(2,1);
					\draw[red, dashed] (2,1)--(2,-1);
					\draw[red, dashed] (2,-1)--(1,-1);
					\draw[red, dashed] (1,-1)--(1,-2);
					\draw[red, dashed] (1,-2)--(-1,-2);
					\draw[red, dashed] (-1,-2)--(-1,-1);
					\draw[red, dashed] (-1,-1)--(-2,-1);
					\draw[red, dashed] (-2,-1)--(-2,1);
					\draw[red, dashed] (-2,1)--(-1,1);
					\draw[red, dashed] (-1,1)--(-1,2);
					\filldraw (1,0) circle (2pt) node[below] {$1$};
					\filldraw (2,0) circle (2pt) node[below] {$2$};
					\filldraw (2,1) circle (2pt);
					\filldraw (1,1) circle (2pt);
					\filldraw (1,2) circle (2pt);
					\filldraw (0,1) circle (2pt) node[right] {$1$};
					\filldraw (0,2) circle (2pt) node[right] {$2$};
					\filldraw (-1,1) circle (2pt);
					\filldraw (-1,2) circle (2pt);
					\filldraw (-1,0) circle (2pt) node[below] {$-1$};
					\filldraw (-2,0) circle (2pt) node[below] {$-2$};
					\filldraw (-1,-2) circle (2pt);
					\filldraw (-1,-1) circle (2pt);
					\filldraw (-2,-1) circle (2pt);
					\filldraw (0,-1) circle (2pt) node[right] {$-1$};
					\filldraw (0,-2) circle (2pt) node[right] {$-2$};
					\filldraw (-2,1) circle (2pt);
					\filldraw (1,0-2) circle (2pt);
					\filldraw (2,-1) circle (2pt);
					\filldraw (1,-1) circle (2pt);
					\filldraw (0,0) circle (2pt);\color{black}}
				\end{tikzpicture}
			\end{center}
		\end{minipage}
		\begin{minipage}{0.3\textwidth}
			\begin{center}
				\begin{tikzpicture}[scale=0.5]
				{\tiny 
					\draw [help lines,dotted] (-3,-3) grid (3,3);
					\draw [-] (-4,0)--(4,0) node[right] {$k_3$};
					\draw [-] (0,-4)--(0,4) node[above] {$j_3$};
					\color{blue}
					\filldraw (1,0) circle (2pt) node[below] {$1$};
					\filldraw (2,0) circle (2pt) node[below] {$2$};
					\filldraw (3,0) circle (2pt) node[below] {$3$};
					\filldraw (1,1) circle (2pt);
					\filldraw (2,2) circle (2pt);
					\filldraw (2,1) circle (2pt);
					\filldraw (1,2) circle (2pt);
					\filldraw (0,1) circle (2pt) node[right] {$1$};
					\filldraw (0,2) circle (2pt) node[right] {$2$};
					\filldraw (0,3) circle (2pt) node[right] {$3$};
					\filldraw (-1,1) circle (2pt);
					\filldraw (-1,2) circle (2pt);
					\filldraw (-2,1) circle (2pt);
					\filldraw (-2,2) circle (2pt);
					\filldraw (-1,0) circle (2pt) node[below] {$-1$};
					\filldraw (-2,0) circle (2pt) node[below] {$-2$};
					\filldraw (-3,0) circle (2pt) node[below] {$-3$};
					\filldraw (-1,-1) circle (2pt);
					\filldraw (-2,-1) circle (2pt);
					\filldraw (-1,-2) circle (2pt);
					\filldraw (-2,-2) circle (2pt);
					\filldraw (0,-1) circle (2pt) node[right] {$-1$};
					\filldraw (0,-2) circle (2pt) node[right] {$-2$};
					\filldraw (0,-3) circle (2pt) node[right] {$-3$};
					\filldraw (1,-1) circle (2pt);
					\filldraw (2,-1) circle (2pt);
					\filldraw (1,-2) circle (2pt);
					\filldraw (2,-2) circle (2pt);
					\filldraw (0,0) circle (2pt);\color{black}}
				\end{tikzpicture}
			\end{center}
		\end{minipage}
		\caption{Graded and induced decompositions of $\mathfrak{so}(3,2)$ (left), $\Lambda^2\mathfrak{so}(3,2)$ (center), $\Lambda^3\mathfrak{so}(3,2)$ (right) and limit homogeneous spaces (subspaces over the dashed line).}\label{so32_diags}
	\end{figure}
	
 \end{example}

Apart from a Grassmann algebra, a $G$-gradation on a Lie algebra $\mathfrak{g}$ also induces a decomposition on the space $(\Lambda\mathfrak{g})^{\mathfrak{g}}$ of $\mathfrak{g}$-invariant elements and on the quotient space $\Lambda_R \mathfrak{g} := \bigoplus_{m \in \mathbb{N}} (\Lambda^m \mathfrak{g}) \backslash (\Lambda^m \mathfrak{g})^{\mathfrak{g}}$. The latter space will be studied in further sections, as it plays an important role in the analysis of CYBE solutions.
 
	\begin{proposition}\label{LemgDe} 
 Let $\mathfrak{g}$ admit a $G$-gradation $\mathfrak{g} = \bigoplus_{\alpha \in g} \mathfrak{g}^{(\alpha)}$. Then, the space $(\Lambda^m\mathfrak{g})^\mathfrak{g}$ of $\mathfrak{g}$-invariant elements and $\Lambda_R^m \mathfrak{g} := (\Lambda^m \mathfrak{g}) / (\Lambda^m \mathfrak{g})^{\mathfrak{g}}$ admit decompositions of the form, accordingly, 
 \begin{equation}\label{inv_decomp}
 (\Lambda^m\mathfrak{g})^\mathfrak{g}=\bigoplus_{\alpha\in G}(\Lambda^m\mathfrak{g})^{(\alpha)}\cap (\Lambda^m\mathfrak{g})^\mathfrak{g}
 \end{equation}
 and
 \begin{equation}\label{red_decomp}
\Lambda^{m}_R\mathfrak{g}=\bigoplus_{\alpha\in G} (\Lambda^m\mathfrak{g})^{(\alpha)} / ((\Lambda^m\mathfrak{g})^\mathfrak{g}\cap (\Lambda^m\mathfrak{g})^{(\alpha)}).
\end{equation}
	\end{proposition}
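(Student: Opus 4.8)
The plan is to reduce both identities to a single structural fact: the subspace $(\Lambda^m\mathfrak{g})^{\mathfrak{g}}\subseteq\Lambda^m\mathfrak{g}$ is \emph{graded} with respect to the decomposition $\Lambda^m\mathfrak{g}=\bigoplus_{\alpha\in G}(\Lambda^m\mathfrak{g})^{(\alpha)}$ furnished by Theorem \ref{thm:root}; that is, an element $w\in\Lambda^m\mathfrak{g}$ is $\mathfrak{g}$-invariant if and only if each of its homogeneous components $w^{(\alpha)}\in(\Lambda^m\mathfrak{g})^{(\alpha)}$ is $\mathfrak{g}$-invariant. Once this is in hand, (\ref{inv_decomp}) is the standard observation that a graded subspace is the direct sum of its intersections with the homogeneous pieces, and (\ref{red_decomp}) is the equally standard observation that the quotient of a graded space by a graded subspace decomposes, grading by grading, as the direct sum of the quotients. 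Note that, since $\mathfrak{g}$ is finite-dimensional, only finitely many $\mathfrak{g}^{(\alpha)}$ are nonzero, hence only finitely many $(\Lambda^m\mathfrak{g})^{(\alpha)}$ are nonzero, so all direct sums over $G$ appearing here are finite and no convergence subtlety arises; the case $m=0$ is trivial since the grading is then concentrated in degree $0$.

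To establish the grading of $(\Lambda^m\mathfrak{g})^{\mathfrak{g}}$, I would first note that, because $\mathfrak{g}=\bigoplus_{\gamma\in G}\mathfrak{g}^{(\gamma)}$ and $[\,\cdot\,,w]_S$ is linear in its first argument, the condition $[v,w]_S=0$ for all $v\in\mathfrak{g}$ is equivalent to $[v_\gamma,w]_S=0$ for every homogeneous $v_\gamma\in\mathfrak{g}^{(\gamma)}$ and every $\gamma\in G$. Fixing such a $v_\gamma$ and writing $w=\sum_{\alpha\in G}w^{(\alpha)}$, the compatibility of the algebraic Schouten bracket with the grading (Theorem \ref{thm:root}, applied with $p=1$, $q=m$) gives
\begin{equation*}
[v_\gamma,w]_S=\sum_{\alpha\in G}[v_\gamma,w^{(\alpha)}]_S,\qquad [v_\gamma,w^{(\alpha)}]_S\in(\Lambda^{m}\mathfrak{g})^{(\gamma\star\alpha)}.
\end{equation*}
The load-bearing step is here: because $(G,\star)$ is a group, the translation $\alpha\mapsto\gamma\star\alpha$ is a bijection of $G$, so the summands $[v_\gamma,w^{(\alpha)}]_S$ sit in pairwise distinct homogeneous subspaces of $\Lambda^m\mathfrak{g}$; as the decomposition $\Lambda^m\mathfrak{g}=\bigoplus_\alpha(\Lambda^m\mathfrak{g})^{(\alpha)}$ is direct, $[v_\gamma,w]_S=0$ forces $[v_\gamma,w^{(\alpha)}]_S=0$ for every $\alpha\in G$. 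Letting $\gamma$ and $v_\gamma$ range over all homogeneous elements of $\mathfrak{g}$ and invoking linearity again, this shows $w$ is $\mathfrak{g}$-invariant if and only if every $w^{(\alpha)}$ is; the converse implication is immediate since $[v,w]_S=\sum_\alpha[v,w^{(\alpha)}]_S$.

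Finally I would assemble the two displays. For (\ref{inv_decomp}): the inclusion $\supseteq$ is clear, and for $\subseteq$ any $w\in(\Lambda^m\mathfrak{g})^{\mathfrak{g}}$ decomposes as $w=\sum_\alpha w^{(\alpha)}$ with each $w^{(\alpha)}\in(\Lambda^m\mathfrak{g})^{(\alpha)}\cap(\Lambda^m\mathfrak{g})^{\mathfrak{g}}$ by the grading property, and the sum is automatically direct since it lies inside $\bigoplus_\alpha(\Lambda^m\mathfrak{g})^{(\alpha)}$. For (\ref{red_decomp}): writing $V_\alpha:=(\Lambda^m\mathfrak{g})^{(\alpha)}$ and $W_\alpha:=V_\alpha\cap(\Lambda^m\mathfrak{g})^{\mathfrak{g}}\subseteq V_\alpha$, we have $\Lambda^m\mathfrak{g}=\bigoplus_\alpha V_\alpha$ and, by (\ref{inv_decomp}), $(\Lambda^m\mathfrak{g})^{\mathfrak{g}}=\bigoplus_\alpha W_\alpha$; the natural surjection $\bigoplus_\alpha V_\alpha\to\bigoplus_\alpha(V_\alpha/W_\alpha)$ has kernel exactly $\bigoplus_\alpha W_\alpha$, whence $\Lambda^m_R\mathfrak{g}\cong\bigoplus_\alpha V_\alpha/W_\alpha$, which is the asserted decomposition. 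I do not foresee any real obstacle beyond the bijectivity-of-translation argument in the middle paragraph; the rest is bookkeeping with finite direct sums.
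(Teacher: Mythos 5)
Your proposal is correct and follows essentially the same route as the paper's proof: the key step in both is that, since $(G,\star)$ is a group, translation by $\gamma$ is injective, so the homogeneous components $[v_\gamma,w^{(\alpha)}]_S\in(\Lambda^m\mathfrak{g})^{(\gamma\star\alpha)}$ of $[v_\gamma,w]_S$ lie in pairwise distinct summands and must vanish separately, after which both (\ref{inv_decomp}) and (\ref{red_decomp}) are routine direct-sum bookkeeping. Your phrasing of (\ref{red_decomp}) via the kernel of the natural surjection $\bigoplus_\alpha V_\alpha\to\bigoplus_\alpha(V_\alpha/W_\alpha)$ is a slightly cleaner packaging of the same representative-chasing argument the paper carries out explicitly.
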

	\begin{proof}  
 By Theorem \ref{thm:root}, every $w\in \Lambda^m\mathfrak{g}$ can be written in a unique way as $w=\sum_{\alpha\in G}w^{(\alpha)}$ for some $w^{(\alpha)}\in (\Lambda^m\mathfrak{g})^{(\alpha)}$. In particular, let us take $w \in (\Lambda^m\mathfrak{g})^{\mathfrak{g}}$ As it is $\mathfrak{g}$-invariant, 
		$0=\sum_{\alpha\in G}[v^{(\beta)},w^{(\alpha)}]_S$ for every $v^{(\beta)}\in\mathfrak{g}^{(\beta)}$.
Consider the elements $\beta, \alpha_1, \alpha_2 \in G$ with $\alpha_1 \neq \alpha_2$. Since $G$ is a group, the condition $\alpha_1 \neq \alpha_2$ implies $\beta + \alpha_1 \neq \beta + \alpha_2$. In consequence, the $m$-vectors $[v^{(\beta)},w^{(\alpha)}]$ belong to different homogeneous spaces of $\Lambda^m\mathfrak{g}$. Hence, they vanish separately and $w^{(\alpha)}\in ((\Lambda^m\mathfrak{g})^{(\alpha)})^{\mathfrak{g}}$. Therefore, $(\Lambda^m\mathfrak{g})^\mathfrak{g}\subset \bigoplus_{\alpha\in G}(\Lambda^m\mathfrak{g})^{(\alpha)}\cap (\Lambda^m\mathfrak{g})^{\mathfrak{g}}$. 
		Since the inclusion $\bigoplus_{\alpha\in G}(\Lambda^m\mathfrak{g})^{(\alpha)}\cap (\Lambda^m\mathfrak{g})^{\mathfrak{g}} \subset (\Lambda^m\mathfrak{g})^\mathfrak{g}$ is immediate, we get the decomposition $(\Lambda^m\mathfrak{g})^\mathfrak{g} = \bigoplus_{\alpha\in G}(\Lambda^m\mathfrak{g})^{(\alpha)}\cap (\Lambda^m\mathfrak{g})^{\mathfrak{g}}$.

Let $p, q \in \Lambda^m \mathfrak{g}$ such that $p - q \in (\Lambda^m \mathfrak{g})^{\mathfrak{g}}$. Since $\Lambda^m \mathfrak{g} = \bigoplus_{\alpha \in G} (\Lambda^m \mathfrak{g})^{(\alpha)}$ by Theorem \ref{thm:root}, we can write $p = \sum_{\alpha \in G} p_{(\alpha)}$ and $q = \sum_{\alpha \in G} q_{(\alpha)}$ for $p_{\alpha}, q_{\alpha} \in (\Lambda^m \mathfrak{g})^{(\alpha)}$. Using the decomposition (\ref{inv_decomp}) of $(\Lambda^m \mathfrak{g})^{\mathfrak{g}}$, we get
$$
p - q = \sum_{\alpha \in G} p_{\alpha} - q_{\alpha} \in (\Lambda^m \mathfrak{g})^{\mathfrak{g}} = \bigoplus_{\alpha\in G}(\Lambda^m\mathfrak{g})^{(\alpha)}\cap (\Lambda^m\mathfrak{g})^{\mathfrak{g}}
$$
As $p_{\alpha} - q_{\alpha} \in (\Lambda^m \mathfrak{g})^{(\alpha)}$ for each $\alpha \in G$, it follows that $p_{\alpha} - q_{\alpha} \in (\Lambda^m\mathfrak{g})^{(\alpha)}\cap (\Lambda^m\mathfrak{g})^{\mathfrak{g}}$. Thus, any equivalence class $[p] \in \Lambda^m_R \mathfrak{g}$ can be written as $[p] = \sum_{\alpha \in G} [p_{\alpha}]$, where $[p_{\alpha}] \in (\Lambda^m\mathfrak{g})^{(\alpha)} / ((\Lambda^m\mathfrak{g})^\mathfrak{g}\cap (\Lambda^m\mathfrak{g})^{(\alpha)})$ and in consequence, we obtain (\ref{red_decomp}).
	\end{proof}
	
	The decomposition on a Grassmann algebra $\Lambda \mathfrak{g}$, detailed in Theorem \ref{thm:root}, allow us to obtain certain solutions to CYBE in the following manner. Given a $G$-gradation of $\mathfrak{g}$, consider an indecomposable element $r \in (\Lambda^2 \mathfrak{g})^{(\alpha)}$ for a certain $\alpha \in G$. Since in that case $[r, r]_S \in (\Lambda^3 \mathfrak{g})^{(\alpha \star \alpha)}$, it follows that $r$ gives an $r$-matrix if $(\Lambda^3 \mathfrak{g})^{(\alpha \star \alpha)} = 0$. The following definition distinguishes the homogeneous spaces of $\Lambda^2 \mathfrak{g}$ that satisfy this property.
	
	\begin{definition}\label{limitspace}
		A {\it limit homogeneous space} of a graded decomposition of $\Lambda^2\mathfrak{g}$ is a homogeneous subspace $(\Lambda^2 \mathfrak{g})^{(\alpha)} \subset \Lambda^2\mathfrak{g}$ such that $(\Lambda^3\mathfrak{g})^{(\alpha\star \alpha)}=0$.
	\end{definition}

 Thus, elements of limit homogeneous spaces are solutions of the CYBE.

\begin{example}
Consider the diagram of $\Lambda^2\mathfrak{so}(2,2)$ in Figure \ref{so22_diags} showing its limit homogeneous subspaces. By direct computation, one can verify that all the elements of these subspaces are solutions of the CYBE. For instance, consider the middle-right point corresponding to the homogeneous space $(\Lambda^2 \mathfrak{so}(2,2))^{(0,1)} = \langle e_0 \wedge e_{+}, f_0 \wedge e_{+}\rangle$. Indeed, 
$$
[e_0 \wedge e_{+}, e_0 \wedge e_{+}]_S = [e_0, e_0] \wedge e_{+} \wedge e_{+} - [e_0, e_{+}] \wedge e_{+} \wedge e_0 - [e_{+}, e_0] \wedge e_0 \wedge e_{+} + [e_{+}, e_{+}] \wedge e_0 \wedge e_0 = 0
$$
and similarly for the second element of that space.
\end{example}
 
Let us consider $r_1\in (\Lambda^2\mathfrak{g})^{(\alpha)}$ and $r_2\in (\Lambda^2\mathfrak{g})^{(\beta)}$. If $(\Lambda^3\mathfrak{g})^{(\alpha \star \beta)}=\{0\}$, then any linear combination of elements $r_1,r_2$ is an $r$-matrix. One can verify this condition easily by examining the decomposition diagram of $\Lambda^2 \mathfrak{g}$.

\begin{example}
Let us consider the homogeneous spaces of $\Lambda^2 \mathfrak{so}(2,2)$ induced by the root $\mathbb{Z}^2$-gradation on $\mathfrak{so}(2,2)$ that we mentioned in previous examples, namely $(\Lambda^2 \mathfrak{so}(2,2))^{(0,1)} = \langle e_0 \wedge e_{+}, f_0 \wedge e_{+}\rangle$ and $(\Lambda^2\mathfrak{so}(2,2))^{(1,1)} = \langle e_{+} \wedge f_{+}\rangle$. Since $(1,1) + (0,1) = (1,2)$ and the element $(1,2) \in \mathbb{Z}^2$ corresponds to an empty homogeneous space of $\Lambda^2 \mathfrak{so}(2,2)$, any element of $(\Lambda^2 \mathfrak{so}(2,2))^{(0,1)} \oplus (\Lambda^2 \mathfrak{so}(2,2))^{(1,1)}$ yields an $r$-matrix.
\end{example}
		
	A short calculation shows that $f_0\wedge e_0$ is an $r$-matrix for $\mathfrak{so}(2,2)$. This is a particular case of the following elementary result.
	
\begin{proposition} 
 If $\mathfrak{g}$ admits a root $\mathbb{Z}^k$-gradation, then $\Lambda^2(\mathfrak{g}^{(0)})$ yields a CYBE solution. 
\end{proposition}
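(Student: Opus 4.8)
The plan is to prove the statement by a direct computation showing that $[r,r]_S=0$ for every $r\in\Lambda^2(\mathfrak{g}^{(0)})$. This is precisely the CYBE for such an $r$: since $r\in\Lambda^2\mathfrak{g}$, the element $[[r,r]]\in T^3\mathfrak{g}$ reduces to the algebraic Schouten bracket $[r,r]_S$, as noted right after the definition of $[[\cdot,\cdot]]$, and the CYBE is the condition $[[r,r]]=0$.

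First I would invoke Definition \ref{Def:rootgrad}: in a root $\mathbb{Z}^k$-gradation the degree-zero component $\mathfrak{g}^{(0)}$ is an \emph{abelian} Lie subalgebra of $\mathfrak{g}$, so $[x,y]_\mathfrak{g}=0$ for all $x,y\in\mathfrak{g}^{(0)}$. Choosing a basis $\{h_1,\ldots,h_k\}$ of $\mathfrak{g}^{(0)}$, the space $\Lambda^2(\mathfrak{g}^{(0)})$ is spanned by the decomposable bivectors $h_a\wedge h_b$, and by bilinearity of the algebraic Schouten bracket it suffices to check that $[h_a\wedge h_b,\,h_c\wedge h_d]_S=0$ for all indices $a,b,c,d$. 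Expanding this bracket with the defining formula for the Schouten--Nijenhuis bracket on decomposable multivectors (Definition \ref{SNbracket}), equivalently using the Leibniz property of item~3 in Proposition \ref{Pr:PropSchou}, every summand carries a factor $[h_i,h_j]_\mathfrak{g}$ with $h_i,h_j\in\mathfrak{g}^{(0)}$; each such factor vanishes by abelianness. Hence $[h_a\wedge h_b,\,h_c\wedge h_d]_S=0$, and summing gives $[r,r]_S=0$ for every $r\in\Lambda^2(\mathfrak{g}^{(0)})$, i.e. every such $r$ solves the CYBE. (In particular, by Theorem \ref{Thm:YBE}, $[[r,r]]=0\in(T^3\mathfrak{g})^{\mathfrak{g}}$, so $r$ is a triangular $r$-matrix defining a coboundary Lie bialgebra structure on $\mathfrak{g}$.)

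I do not expect any genuine obstacle: the argument is a two-line consequence of the abelianness of $\mathfrak{g}^{(0)}$ together with the Leibniz rule for the Schouten bracket. The only subtlety worth flagging is that one really must use that $\mathfrak{g}^{(0)}$ is abelian, which is part of the \emph{definition} of a root gradation rather than of an arbitrary gradation; by contrast, the grading-compatibility of the Schouten bracket from Theorem \ref{thm:root} alone only shows $[r,r]_S\in(\Lambda^3\mathfrak{g})^{(0)}$, and this homogeneous space need not be trivial in general.
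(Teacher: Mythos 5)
Your proof is correct and follows essentially the same route as the paper: both arguments reduce to the abelianness of $\mathfrak{g}^{(0)}$ (from the definition of a root gradation) together with the Leibniz property of the algebraic Schouten bracket, so that every term in the expansion of the relevant brackets carries a vanishing factor $[h_i,h_j]_{\mathfrak{g}}$. The paper phrases the computation as $[w, v_i\wedge v_j]_S=0$ for $w\in\mathfrak{g}^{(0)}$ rather than expanding $[h_a\wedge h_b, h_c\wedge h_d]_S$ directly, but this is an inessential difference.
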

\begin{proof}
Let us write $\mathfrak{g}^{(0)} := \langle v_1, \ldots, v_k\rangle$ for any linearly independent elements $v_1, \ldots, v_k \in \mathfrak{g}^{(0)}$. As $\mathfrak{g}^{(0)}$ is abelian, $[w, w'] = 0$ for any two elements $w, w' \in \mathfrak{g}^{(0)}$. Thus, $[w, v_i  \wedge v_j]_S = [w, v_i] \wedge v_j + v_i \wedge [w, v_j] = 0$ for any $i,j \in \{1, \ldots, k\}$ and $w \in \mathfrak{g}^{(0)}$. Consequently, elements of $\Lambda^2 \mathfrak{g}^{(0)}$ are CYBE solutions.
\end{proof}

	\section{Geometry of \texorpdfstring{$\mathfrak{g}$}{}-invariant elements}\label{Ch:alg_Sec:gInv}
	
As indicated previously, the determination of $\mathfrak{g}$-invariant elements $(\Lambda\mathfrak{g})^{\mathfrak{g}}$ for a Lie algebra $\mathfrak{g}$ is necessary for further analysis of CYBE solutions. In this section, we focus our attention on the spaces $(\Lambda^m \mathfrak{g})^\mathfrak{g}$ of $\mathfrak{g}$-invariant $m$-vectors. It is mainly devoted to study in more detail the structure of ($\Lambda \mathfrak{g})^{\mathfrak{g}}$ for general Lie algebras. However, we present some useful observations in two special cases: Lie algebras with a root gradation and nilpotent Lie algebras. 
 Although these results do not characterise $(\Lambda^m\mathfrak{g})^{\mathfrak{g}}$ completely, they help obtain many of its elements and might simplify noticeably the determination the whole $(\Lambda^m\mathfrak{g})^{\mathfrak{g}}$.

Let us begin with a simple fact concerning the structure of $(\Lambda \mathfrak{g})^{\mathfrak{g}}$.
	
	\begin{proposition}\label{prop:lgg}
		The space $(\Lambda \mathfrak{g})^\mathfrak{g}$ is an $\mathbb{R}$-algebra relative to the exterior product and each space $(\Lambda^m \mathfrak{g})^\mathfrak{g}$ is $\mathfrak{aut}(\mathfrak{g})$-invariant.
	\end{proposition}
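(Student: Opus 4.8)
The plan is to prove the two assertions separately, both by direct verification using the graded-Leibniz property of the algebraic Schouten bracket (property 3 of Proposition \ref{Pr:PropSchou}) and the fact that $\mathrm{ad}_v$ acts on $\Lambda\mathfrak{g}$ as a derivation of the exterior product.

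First I would show $(\Lambda\mathfrak{g})^\mathfrak{g}$ is a subalgebra of $(\Lambda\mathfrak{g},\wedge)$. It is clearly a linear subspace, and $1\in\Lambda^0\mathfrak{g}$ is $\mathfrak{g}$-invariant, so it only remains to check closure under $\wedge$. Take $w_1\in(\Lambda^p\mathfrak{g})^\mathfrak{g}$ and $w_2\in(\Lambda^q\mathfrak{g})^\mathfrak{g}$. For every $v\in\mathfrak{g}$, property 3 of Proposition \ref{Pr:PropSchou} gives
\begin{equation*}
[v,w_1\wedge w_2]_S=[v,w_1]_S\wedge w_2+(-1)^{(1+1)p}\,w_1\wedge[v,w_2]_S=[v,w_1]_S\wedge w_2+w_1\wedge[v,w_2]_S,
\end{equation*}
and since $[v,w_1]_S=0$ and $[v,w_2]_S=0$ by hypothesis, the right-hand side vanishes. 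Hence $w_1\wedge w_2\in(\Lambda^{p+q}\mathfrak{g})^\mathfrak{g}$, so $(\Lambda\mathfrak{g})^\mathfrak{g}$ is closed under $\wedge$ and is therefore an $\mathbb{R}$-algebra. (One may also note this is essentially the statement that the $\mathfrak{g}$-invariants form a subalgebra of a Gerstenhaber algebra, but the one-line derivation above suffices.)

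Next I would prove that each $(\Lambda^m\mathfrak{g})^\mathfrak{g}$ is $\mathfrak{aut}(\mathfrak{g})$-invariant, i.e. invariant under the action of $\mathrm{Aut}_c(\mathfrak{g})$ — equivalently, by Proposition \ref{prop:glrho_ginv}, that it is preserved by the infinitesimal action $\Lambda^m\widehat{\mathrm{ad}}$ of $\mathfrak{aut}(\mathfrak{g})\simeq\mathfrak{der}(\mathfrak{g})$. Let $w\in(\Lambda^m\mathfrak{g})^\mathfrak{g}$, let $d\in\mathfrak{der}(\mathfrak{g})$, and write $\delta:=\Lambda^m d$ for the natural extension of $d$ to a derivation of $(\Lambda\mathfrak{g},\wedge)$. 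The goal is to show $\delta(w)\in(\Lambda^m\mathfrak{g})^\mathfrak{g}$, that is $[v,\delta(w)]_S=0$ for all $v\in\mathfrak{g}$. The key identity is the compatibility of any derivation $d$ of $\mathfrak{g}$ with the Schouten bracket: for $x\in\mathfrak{g}$ and $u\in\Lambda\mathfrak{g}$ one has, on decomposable elements and then by linearity, the derivation rule
\begin{equation*}
\delta([x,u]_S)=[d(x),u]_S+[x,\delta(u)]_S,
\end{equation*}
which follows from the Leibniz identity $d([x,y])=[d(x),y]+[x,d(y)]$ defining a derivation together with the recursive definition of $[\cdot,\cdot]_S$ in Definition \ref{SNbracket} and property 3 of Proposition \ref{Pr:PropSchou}. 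Applying this with $u=w$ and using $[x,w]_S=0$ gives $0=[d(x),w]_S+[x,\delta(w)]_S$, hence $[x,\delta(w)]_S=-[d(x),w]_S=0$ since $w$ is $\mathfrak{g}$-invariant and $d(x)\in\mathfrak{g}$. As $x\in\mathfrak{g}$ was arbitrary, $\delta(w)$ is $\mathfrak{g}$-invariant. Since $\mathrm{Aut}_c(\mathfrak{g})$ is generated by $\exp(\widehat{\mathrm{ad}}(\mathfrak{aut}(\mathfrak{g})))$ (Proposition \ref{Out} and Lemma \ref{Lio}), exponentiating this infinitesimal invariance yields invariance of $(\Lambda^m\mathfrak{g})^\mathfrak{g}$ under all of $\mathrm{Aut}_c(\mathfrak{g})$, which is what $\mathfrak{aut}(\mathfrak{g})$-invariance means.

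The only mild obstacle is establishing the derivation identity $\delta([x,u]_S)=[d(x),u]_S+[x,\delta(u)]_S$ cleanly; I would do this by induction on the exterior degree of $u$, using the base cases from parts a) and b) of Definition \ref{SNbracket} and the graded Leibniz rule (property 3) to pass from degree $q$ to degree $q+1$. Everything else is bookkeeping with signs, which are harmless here because the relevant parities $(k-1)(l-1)$ vanish or because they cancel on the two sides of the identity. An alternative, even shorter route for the second part is to invoke Corollary \ref{ExtInvRep}-style reasoning or Proposition \ref{prop:derbiv}: the tangent space to the $\mathrm{Aut}(\mathfrak{g})$-orbit through $w$ is $\mathrm{Im}\,\Upsilon^m_w$, and $w\in(\Lambda^m\mathfrak{g})^\mathfrak{g}$ together with the derivation identity forces $\Upsilon^m_w(d)=(\Lambda^m d)(w)$ to again lie in $(\Lambda^m\mathfrak{g})^\mathfrak{g}$; but the direct computation above is self-contained and is the cleanest to present.
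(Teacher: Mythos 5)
Your proof is correct, but for the second assertion you take a genuinely different route from the paper. The paper argues at the group level in one line: for $T\in{\rm Aut}_c(\mathfrak{g})$ and $w\in(\Lambda^m\mathfrak{g})^{\mathfrak{g}}$ it writes $[v,\Lambda^mTw]_S=[TT^{-1}v,\Lambda^mTw]_S=\Lambda^mT[T^{-1}v,w]_S=0$, using the equivariance of the algebraic Schouten bracket under Lie algebra automorphisms; this needs no induction and in fact proves invariance under all of ${\rm Aut}(\mathfrak{g})$, not merely the connected component. You instead work infinitesimally, establishing the derivation identity $\Lambda^md([x,u]_S)=[d(x),u]_S+[x,\Lambda^md(u)]_S$ by induction on the exterior degree and then exponentiating; this is sound (and your induction does go through, since the extension $\Lambda d$ is a derivation of $\wedge$ and the signs in degree one are trivial), but it costs you the small lemma and, on its own, only yields ${\rm Aut}_c(\mathfrak{g})$-invariance — which is exactly what $\mathfrak{aut}(\mathfrak{g})$-invariance requires via Proposition \ref{prop:glrho_ginv}, so nothing is lost for the statement as written. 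Your treatment of the first assertion via property 3 of Proposition \ref{Pr:PropSchou} fills in the detail the paper dismisses as immediate, and the sign $(-1)^{(1+1)p}=1$ is handled correctly.
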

	\begin{proof}  
		It is immediate that $(\Lambda \mathfrak{g})^{\mathfrak{g}}$ is a subalgebra of $\Lambda \mathfrak{g}$. By Proposition \ref{prop:glrho_ginv}, the second part of the proposition amounts to the fact that $(\Lambda^m\mathfrak{g})^\mathfrak{g}$ is invariant relative to $\Lambda^mT$ for every  $T\in {\rm Aut}_c(\mathfrak{g})$. Take $w\in (\Lambda^m\mathfrak{g})^{\mathfrak{g}}$ and $T\in {\rm Aut}_c(\mathfrak{g})$. Then for every $v \in \mathfrak{g}$,
		$$
		[v,\Lambda^mTw]_S =[TT^{-1}v,\Lambda^mTw]_S =\Lambda^mT[T^{-1}v,w]_S =0
		$$
		and in consequence, $\Lambda^m Tw \in (\Lambda^m\mathfrak{g})^{\mathfrak{g}}$. Hence, $(\Lambda^m\mathfrak{g})^{\mathfrak{g}}$ is ${\rm Aut}_c(\mathfrak{g})$-invariant.
	\end{proof}
	
The next notion gives further possibilities to study $\mathfrak{g}$-invariant elements.
	
	\begin{definition}
	A {\it traceless ideal} of $\mathfrak{g}$ is a nonzero ideal $\mathfrak{h}\subset \mathfrak{g}$ such that the restriction ${\rm ad}_v|_{\mathfrak{h}}$ to $\mathfrak{h}$ of each ${\rm ad}_v$, for $v\in\mathfrak{g}$, is traceless. If the elements ${\rm ad}(v)$ are traceless for all $v \in \mathfrak{g}$, then $\mathfrak{g}$ is called {\it unimodular}.
	\end{definition}
	
The Lie algebras of abelian, compact, semi-simple, or nilpotent groups are unimodular \cite{ACK99,BR86,Ha15}. In the next proposition, we observe that traceless ideals give rise to some $\mathfrak{g}$-invariant elements.
	
	\begin{proposition}\label{prop:inv_trlsub}
		Every traceless ideal $\mathfrak{h}\subset\mathfrak{g}$ is such that $\Lambda^{\dim\mathfrak{h}}\mathfrak{h}\subset(\Lambda^{\dim\mathfrak{h}}\mathfrak{g})^{\mathfrak{g}}$.
	\end{proposition}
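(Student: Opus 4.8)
The plan is to reduce the statement to a one-line trace computation. Write $k:=\dim\mathfrak{h}$ and fix a basis $\{h_1,\dots,h_k\}$ of $\mathfrak{h}$. Then $\Lambda^k\mathfrak{h}$ is one-dimensional, spanned by $\omega:=h_1\wedge\dots\wedge h_k$, so it suffices to show that $\omega$ is $\mathfrak{g}$-invariant, i.e. $[v,\omega]_S=0$ for every $v\in\mathfrak{g}$; this gives $\Lambda^{\dim\mathfrak{h}}\mathfrak{h}=\langle\omega\rangle\subset(\Lambda^{\dim\mathfrak{h}}\mathfrak{g})^{\mathfrak{g}}$.

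First I would expand $[v,\omega]_S$ using that $[v,\cdot]_S$ is a derivation of the exterior product: since $v\in\mathfrak{g}=\mathcal{V}^1$, the sign $(-1)^{(k+1)l}$ in property 3 of Proposition \ref{Pr:PropSchou} equals $1$ at every step, and parts a)--c) of Definition \ref{SNbracket} identify $[v,h_i]_S$ with the Lie bracket $[v,h_i]$ in $\mathfrak{g}$ (cf. Example \ref{Ex:sl2}). Hence, by induction on $k$,
\[
[v,\omega]_S=\sum_{i=1}^{k}h_1\wedge\dots\wedge[v,h_i]\wedge\dots\wedge h_k .
\]
Next, because $\mathfrak{h}$ is an ideal, $[v,h_i]\in\mathfrak{h}$, so write $[v,h_i]=\sum_{j=1}^{k}A_{ij}h_j$, where $A=(A_{ij})$ is the matrix of $\mathrm{ad}_v|_{\mathfrak{h}}$ in the basis $\{h_j\}$. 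Substituting into the $i$-th slot, every summand with $j\neq i$ repeats a factor already occurring in the remaining slots and therefore vanishes, leaving only $j=i$; thus
\[
[v,\omega]_S=\sum_{i=1}^{k}A_{ii}\,h_1\wedge\dots\wedge h_k=\mathrm{tr}\!\bigl(\mathrm{ad}_v|_{\mathfrak{h}}\bigr)\,\omega .
\]
Since $\mathfrak{h}$ is a traceless ideal, $\mathrm{tr}(\mathrm{ad}_v|_{\mathfrak{h}})=0$ for all $v\in\mathfrak{g}$, so $[v,\omega]_S=0$ and $\omega\in(\Lambda^{k}\mathfrak{g})^{\mathfrak{g}}$, which is what we wanted.

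There is no genuine obstacle: the argument is essentially the classical fact that a derivation acts on a top exterior power by multiplication by its trace. The only point deserving a line of care is the vanishing of the off-diagonal contributions in the last display, which rests on the wedge of a repeated basis vector being zero; everything else is immediate from the derivation property of the algebraic Schouten bracket and the ideal hypothesis.
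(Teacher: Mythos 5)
Your argument is correct and coincides with the paper's own proof: both expand $[v,\omega]_S$ slot by slot using the derivation property, use the ideal hypothesis to write $[v,h_i]$ in the basis of $\mathfrak{h}$, observe that only the diagonal coefficients survive the wedge, and conclude that $[v,\omega]_S=\mathrm{tr}(\mathrm{ad}_v|_{\mathfrak{h}})\,\omega=0$. No gaps.
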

	\begin{proof} 
 The ideal $\mathfrak{h}$ gives a one-dimensional space $\Lambda^{\dim \mathfrak{h}}\mathfrak{h}$. Let $\mathfrak{h} = \langle v_1, \ldots, v_m\rangle$ for some linearly independent $v_1, \ldots v_m \in \mathfrak{g}$. Then, $r\in \Lambda^{\dim \mathfrak{h}}\mathfrak{h}$ is of the form $r = v_1 \wedge \ldots \wedge v_m$. As $\mathfrak{h}$ is an ideal, each $[v, v_j] \in \mathfrak{h}$ for any $v \in \mathfrak{g}$. Let us write $[v, v_j] = \sum_{k = 1}^m a_{jk} v_k$ for certain coefficients $a_{jk} \in \mathbb{R}$. It is immediate to notice that $v_1 \wedge \ldots [v, v_j] \wedge \ldots v_m = a_{jj} v_1 \wedge \ldots \wedge v_m$ and hence, 
 $$
 [v,r]_S = \sum_{j=1}^m v_1 \wedge \ldots \wedge [v, v_j] \wedge \ldots \wedge v_m = {\rm Tr} ({\rm ad}_v|_{\mathfrak{h}})r = 0
 $$
 for every $v\in \mathfrak{g}$. Thus, $r\in (\Lambda^{\dim \mathfrak{h}}\mathfrak{g})^\mathfrak{g}$. 
	\end{proof}
	
	By Proposition \ref{prop:inv_trlsub}, any traceless ideal in $\mathfrak{g}$ induces a particular $\mathfrak{g}$-invariant element. The following theorem gives more details on the relation between traceless ideals and elements of $(\Lambda \mathfrak{g})^\mathfrak{g}$.
 
	\begin{theorem}\label{Theo:InvLk} 
The space spanned by a decomposable $ \Omega\in (\Lambda\mathfrak{g})^\mathfrak{g}$ is in one-to-one correspondence with a traceless ideal $\mathfrak{h}\subset \mathfrak{g}$ such that $\langle\Omega\rangle=\Lambda^{\dim\mathfrak{h}}\mathfrak{h}$.
	\end{theorem}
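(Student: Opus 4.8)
The plan is to exploit the classical fact that a nonzero decomposable $m$-vector determines, and is determined up to a nonzero scalar by, the $m$-dimensional subspace it spans. I would start by writing $\Omega=v_1\wedge\cdots\wedge v_m$ with $v_1,\dots,v_m\in\mathfrak g$ linearly independent and setting $\mathfrak h:=\langle v_1,\dots,v_m\rangle$; then $\langle\Omega\rangle=\Lambda^{\dim\mathfrak h}\mathfrak h$ since $\Lambda^m\mathfrak h$ is one-dimensional, and $\mathfrak h$ is recovered from $\langle\Omega\rangle$ as $\mathfrak h=\{v\in\mathfrak g:v\wedge\Omega=0\}$. Hence $\mathfrak h\mapsto\Lambda^{\dim\mathfrak h}\mathfrak h$ is injective, and the theorem reduces to the equivalence: a decomposable $\Omega$ with associated subspace $\mathfrak h$ lies in $(\Lambda\mathfrak g)^{\mathfrak g}$ if and only if $\mathfrak h$ is a traceless ideal.

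The implication ``$\mathfrak h$ traceless ideal $\Rightarrow$ $\Omega$ invariant'' is precisely Proposition~\ref{prop:inv_trlsub}, so the real work is the converse. I would complete $\{v_1,\dots,v_m\}$ to a basis $\{v_1,\dots,v_n\}$ of $\mathfrak g$ and, for a fixed $v\in\mathfrak g$, expand $[v,v_j]=\sum_{k=1}^n a_{jk}v_k$ for $j\le m$. Property~3 of Proposition~\ref{Pr:PropSchou} (equivalently the defining formula of the algebraic Schouten bracket) then gives
\[
[v,\Omega]_S=\sum_{j=1}^m v_1\wedge\cdots\wedge[v,v_j]\wedge\cdots\wedge v_m=\Big(\sum_{j=1}^m a_{jj}\Big)\Omega+\sum_{j=1}^m\sum_{k=m+1}^n a_{jk}\,w_{j,k},
\]
where $w_{j,k}:=v_1\wedge\cdots\wedge v_{j-1}\wedge v_k\wedge v_{j+1}\wedge\cdots\wedge v_m$, because every term with $k\le m$ and $k\ne j$ has a repeated factor and vanishes. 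The $m$-vectors $\Omega$ and the $w_{j,k}$ (for $1\le j\le m<k\le n$) are distinct members of the standard basis $\{e_J\}_{|J|=m}$ of $\Lambda^m\mathfrak g$, hence linearly independent, so $[v,\Omega]_S=0$ forces $a_{jk}=0$ whenever $j\le m<k$, and $\sum_{j=1}^m a_{jj}=0$.

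The vanishing of the $a_{jk}$ with $k>m$ says exactly that $[v,v_j]\in\mathfrak h$ for every $j$ and every $v\in\mathfrak g$, i.e. $\mathfrak h$ is an ideal; the relation $\sum_j a_{jj}=0$ says $\operatorname{Tr}(\operatorname{ad}_v|_{\mathfrak h})=0$ for all $v$, i.e. $\mathfrak h$ is traceless. Since a nonzero decomposable $\Omega$ has $m\ge1$, the subspace $\mathfrak h$ is a genuine (nonzero) traceless ideal. Putting the two directions together with the recovery of $\mathfrak h$ from $\langle\Omega\rangle$ shows that $\langle\Omega\rangle\mapsto\{v:v\wedge\Omega=0\}$ and $\mathfrak h\mapsto\Lambda^{\dim\mathfrak h}\mathfrak h$ are mutually inverse bijections between the lines spanned by decomposable $\mathfrak g$-invariant multivectors and the traceless ideals of $\mathfrak g$. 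The only mildly delicate point is the bookkeeping in the displayed expansion together with the linear independence of $\Omega$ and the $w_{j,k}$; the rest is formal.
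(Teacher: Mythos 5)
Your proof is correct, and for the substantive direction it takes a genuinely different route from the paper. The paper proves that an invariant decomposable $\Omega$ yields a traceless ideal coordinate-freely: it introduces the contraction map $\widehat\Omega:\beta\in\mathfrak g^*\mapsto\iota_\beta\Omega$, identifies $\ker\widehat\Omega$ with the annihilator $\mathfrak h^\circ$, and uses the identity $\iota_{{\rm ad}_v^*\theta}\Omega=\iota_\theta[v,\Omega]_S=0$ to show $\mathfrak h^\circ$ is ${\rm ad}^*$-invariant, hence $\mathfrak h$ is an ideal; only then does it invoke $[v,\Omega]_S=({\rm Tr}\,{\rm ad}_v|_{\mathfrak h})\Omega$ to get tracelessness. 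You instead complete $\{v_1,\dots,v_m\}$ to a basis of $\mathfrak g$ and read off both conclusions simultaneously from the expansion of $[v,\Omega]_S$ in the standard basis of $\Lambda^m\mathfrak g$: the off-block coefficients $a_{jk}$ ($j\le m<k$) must vanish (ideal condition) and the diagonal sum $\sum_j a_{jj}$ must vanish (tracelessness). Your bookkeeping is sound --- the $w_{j,k}$ correspond to pairwise distinct index sets, all distinct from $\{1,\dots,m\}$, so the linear-independence step holds. The converse direction is the same in both treatments (you correctly delegate it to Proposition~\ref{prop:inv_trlsub}, which the paper essentially re-derives inline), as is the recovery of $\mathfrak h$ from $\langle\Omega\rangle$ via $\mathfrak h=\{v:v\wedge\Omega=0\}$. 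Your version is more elementary and self-contained; the paper's is basis-free and makes the duality with the annihilator explicit, which is in the spirit of the invariant-form machinery developed elsewhere in the chapter. Either argument is acceptable.
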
 
	\begin{proof}
For any value of $m \in \mathbb{N}$, a decomposable element $\Omega \in (\Lambda^m\mathfrak{g})^{\mathfrak{g}}$ takes the form $\Omega = v_1 \wedge \ldots \wedge v_m$ for some linearly independent $v_1,\ldots,v_m\in \mathfrak{g}$. This defines a subspace $\mathfrak{h} := \langle v_1, \ldots, v_m\rangle\subset\mathfrak{g}$ that is independent of the chosen $v_1,\ldots,v_m$, and $\langle \Omega\rangle= \Lambda^{\dim\mathfrak{h}}\mathfrak{h}$. Obviously, all elements of $\langle \Omega\rangle$ give rise to the same $\mathfrak{h}$. 

Let us prove that $\mathfrak{h}$ is a traceless ideal. If $\widehat \Omega: \beta \in \mathfrak{g^*} \mapsto \iota_{\beta} \Omega \in \Lambda^{m-1} \mathfrak{g}$, then $\ker \widehat\Omega=\mathfrak{h}^\circ$. For $v\in \mathfrak{g}$ and $\theta\in \mathfrak{h}^\circ$, we obtain
		\begin{multline*}
			\iota_{{\rm ad}_v^*\theta}\Omega
			= \sum_{j=1}^{m} (-1)^j v_1 \wedge \ldots \wedge {\rm ad}_v^*\theta (v_j) \wedge \ldots \wedge v_m = \iota_\theta \left(\sum_{j=1}^m (-1)^j v_1 \wedge \ldots \wedge [v, v_j] \wedge \ldots \wedge v_m \right)=\iota_\theta[v,\Omega]_S =0
		\end{multline*}
		Thus, ${\rm ad}^*_v\theta\in \mathfrak{h}^\circ$ for every $\theta\in \mathfrak{h}^\circ$. Consequently, ${\rm ad}_v\mathfrak{h}\subset\mathfrak{h}$ for every $v\in \mathfrak{g}$, which means that $\mathfrak{h}$ is an ideal of $\mathfrak{g}$. Since $[v,\Omega]_S =({\rm Tr}\, {\rm ad}_v|_\mathfrak{h})\Omega=0$, one gets that ${\rm ad}_v|_{\mathfrak{h}}$ is traceless.
		
		Conversely, let $\mathfrak{h}$ be a non-zero traceless ideal. Then, $ \Lambda^{\dim \mathfrak{h}}\mathfrak{h}$ is one-dimensional and it admits a basis, $\Omega$, given by the exterior product of the elements of a basis of $\mathfrak{h}$.  Since every ${\rm ad}_v$, with $v\in\mathfrak{g}$, acts on $\mathfrak{h}$ tracelessly by assumption, $[v, \Omega]  = (\textrm{Tr ad}_v|_\mathfrak{h})\Omega = 0$ and thus,  $\Omega \in (\Lambda^{\dim \mathfrak{h}} \mathfrak{g})^{\mathfrak{g}}$. Obviously, all elements in  $\langle \Omega\rangle$ belong to $(\Lambda^{\dim\mathfrak{h}}\mathfrak{g})^\mathfrak{g}$. \vskip-0.4cm
	\end{proof}
		
If a Lie algebra $\mathfrak{g}$ admits a root gradation, one can restrict the search of $\mathfrak{g}$-invariant elements to the specific homogeneous space of the Grassmann algebra decomposition.

	\begin{proposition}\label{gin0}
		If $\mathfrak{g}$ admits a root gradation, then  $(\Lambda^m\mathfrak{g})^\mathfrak{g}\subset (\Lambda^m\mathfrak{g})^{(0)}$.
	\end{proposition}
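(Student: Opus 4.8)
The plan is to exploit the decomposition of $\mathfrak{g}$-invariant elements over the homogeneous spaces of the root gradation, established in Proposition \ref{LemgDe}, together with the characterisation of a root $\mathbb{Z}^k$-gradation in Definition \ref{Def:rootgrad}. Recall that Proposition \ref{LemgDe} gives $(\Lambda^m\mathfrak{g})^{\mathfrak{g}} = \bigoplus_{\alpha\in G}(\Lambda^m\mathfrak{g})^{(\alpha)}\cap(\Lambda^m\mathfrak{g})^{\mathfrak{g}}$, so it suffices to show that $(\Lambda^m\mathfrak{g})^{(\alpha)}\cap(\Lambda^m\mathfrak{g})^{\mathfrak{g}} = \{0\}$ for every nonzero $\alpha\in\mathbb{Z}^k$. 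Thus the whole argument reduces to the following claim: a homogeneous $\mathfrak{g}$-invariant $m$-vector of nonzero degree must vanish.

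First I would fix a nonzero $\alpha\in\mathbb{Z}^k$ and take $w\in(\Lambda^m\mathfrak{g})^{(\alpha)}\cap(\Lambda^m\mathfrak{g})^{\mathfrak{g}}$. The key is to test $\mathfrak{g}$-invariance against the abelian subalgebra $\mathfrak{g}^{(0)}$. By Definition \ref{Def:rootgrad}, for $e\in\mathfrak{g}^{(0)}$ and a homogeneous element $e^{(\beta)}\in\mathfrak{g}^{(\beta)}$ one has $[e,e^{(\beta)}]_S = \widetilde{\beta}(e)\,e^{(\beta)}$, i.e. $\mathfrak{g}^{(\beta)}$ is an eigenspace of $\mathrm{ad}_e$ with eigenvalue $\widetilde{\beta}(e)$. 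Applying the Leibniz-type property (property 3 of Proposition \ref{Pr:PropSchou}) to a decomposable homogeneous element $e^{(\alpha_1)}\wedge\ldots\wedge e^{(\alpha_m)}$ with $\alpha_1\star\ldots\star\alpha_m = \alpha$ shows that
\begin{equation*}
[e, e^{(\alpha_1)}\wedge\ldots\wedge e^{(\alpha_m)}]_S = \Big(\sum_{i=1}^m \widetilde{\alpha_i}(e)\Big) e^{(\alpha_1)}\wedge\ldots\wedge e^{(\alpha_m)} = \widetilde{\alpha}(e)\, e^{(\alpha_1)}\wedge\ldots\wedge e^{(\alpha_m)},
\end{equation*}
using that $\Xi$ is a group homomorphism, so $\widetilde{\alpha_1}+\ldots+\widetilde{\alpha_m} = \widetilde{\alpha_1\star\ldots\star\alpha_m} = \widetilde{\alpha}$. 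By linearity this extends to all of $(\Lambda^m\mathfrak{g})^{(\alpha)}$: the whole homogeneous space is an eigenspace of $\mathrm{ad}_e$ (acting via the Schouten bracket) with eigenvalue $\widetilde{\alpha}(e)$. Hence $[e,w]_S = \widetilde{\alpha}(e)\,w$ for every $e\in\mathfrak{g}^{(0)}$.

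Now $\mathfrak{g}$-invariance of $w$ forces $[e,w]_S = 0$ for all $e\in\mathfrak{g}$, in particular for all $e\in\mathfrak{g}^{(0)}$, so $\widetilde{\alpha}(e)\,w = 0$ for every $e\in\mathfrak{g}^{(0)}$. Since $\alpha\neq 0$ and $\Xi:\alpha\mapsto\widetilde{\alpha}$ is injective, $\widetilde{\alpha}\neq 0\in\mathfrak{g}^{(0)*}$, so there exists $e\in\mathfrak{g}^{(0)}$ with $\widetilde{\alpha}(e)\neq 0$; this yields $w = 0$. Therefore $(\Lambda^m\mathfrak{g})^{(\alpha)}\cap(\Lambda^m\mathfrak{g})^{\mathfrak{g}} = \{0\}$ for all nonzero $\alpha$, and feeding this back into the decomposition \eqref{inv_decomp} leaves only the $\alpha = 0$ summand, giving $(\Lambda^m\mathfrak{g})^{\mathfrak{g}}\subseteq(\Lambda^m\mathfrak{g})^{(0)}$. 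The only mildly delicate point is the bookkeeping in the displayed eigenvalue computation — making sure the Schouten-bracket derivation over a wedge of homogeneous factors genuinely produces the sum of the weights — but this is a direct application of property 3 of Proposition \ref{Pr:PropSchou} and the homomorphism property of $\Xi$, so no real obstacle is expected.
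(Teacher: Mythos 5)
Your proof is correct and follows essentially the same route as the paper's: decompose an invariant element into homogeneous components via Proposition \ref{LemgDe}, act by elements of the abelian subalgebra $\mathfrak{g}^{(0)}$ to see that each nonzero-degree component is an eigenvector with eigenvalue $\widetilde{\alpha}(e)$, and use the injectivity of $\Xi$ to conclude that only the degree-zero component survives. The only difference is that you explicitly verify, via the Leibniz property of the Schouten bracket, that the eigenvalue relation $[e,w^{(\alpha)}]_S=\widetilde{\alpha}(e)\,w^{(\alpha)}$ extends from $\mathfrak{g}^{(\alpha)}$ to $(\Lambda^m\mathfrak{g})^{(\alpha)}$ — a step the paper asserts without comment — which is a welcome addition rather than a departure.
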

	\begin{proof} 
 Let $w\in (\Lambda^m\mathfrak{g})^{\mathfrak{g}}$. By Proposition \ref{LemgDe}, any $w$ can be written as $w=\sum_{\alpha \in G}w^{(\alpha)}$, where each $w^{(\alpha)}\in (\Lambda^m\mathfrak{g})^{(\alpha)} \cap (\Lambda^m \mathfrak{g})^{\mathfrak{g}}$ for $\alpha\in G$. For every $e\in \mathfrak{g}^{(0)}$, one has that $[e,w^{(\alpha)}]=\widetilde{\alpha}(e)w^{(\alpha)}=0$ and therefore, $\pmb{\alpha}=0$. Since the mapping $\Xi: \alpha \in G \mapsto \widetilde{\alpha} \in (\mathfrak{g}^{(0)})^\star$ of the root gradation is injective, $\alpha=0$. Thus, $w\subset (\Lambda^m\mathfrak{g})^{(0)}$.
	\end{proof}

 In Section \ref{Invgbil}, we described several methods that simplify the determination of $\mathfrak{g}$-invariant forms. When $\mathfrak{g}$ admits a root gradation, we obtain new results, detailed below, that help in further analysis of these forms. 
	
	\begin{theorem}\label{PerpL} 
Let a Lie algebra $\mathfrak{g}$ admit a root $G$-gradation. If $b: \mathfrak{g}^{\otimes 2} \to \mathbb{R}$ is a $\mathfrak{g}$-invariant bilinear symmetric map on $\mathfrak{g}$, then its extension $b_{\Lambda^m\mathfrak{g}}: (\Lambda^m\mathfrak{g})^{\otimes 2} \to \mathbb{R}$ satisfies $b_{\Lambda^m\mathfrak{g}}(v_{(\alpha)}, v_{(\beta)})=0$ for every $v_{(\alpha)} \in \Lambda^m\mathfrak{g}^{(\alpha)}$ and $v_{(\beta)} \in \Lambda^m\mathfrak{g}^{(\beta)}$ with $\alpha+\beta\neq 0 \in G$. 
	\end{theorem}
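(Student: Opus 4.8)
The plan is to reduce the statement to the case $m=1$ and then propagate the resulting vanishing through the explicit formula~(\ref{Lambdag}) for the extension $b_{\Lambda^m\mathfrak{g}}$.

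\emph{Step 1 (the case $m=1$).} First I would show that $b(v_\alpha,v_\beta)=0$ for homogeneous $v_\alpha\in\mathfrak{g}^{(\alpha)}$ and $v_\beta\in\mathfrak{g}^{(\beta)}$ whenever $\alpha+\beta\neq 0$. Pick any $e\in\mathfrak{g}^{(0)}$. Since the gradation is a root gradation (Definition~\ref{Def:rootgrad}), $[e,v_\alpha]=\widetilde\alpha(e)v_\alpha$ and $[e,v_\beta]=\widetilde\beta(e)v_\beta$, so $\mathfrak{g}$-invariance of $b$ gives
\[
0=b([e,v_\alpha],v_\beta)+b(v_\alpha,[e,v_\beta])=\bigl(\widetilde\alpha(e)+\widetilde\beta(e)\bigr)b(v_\alpha,v_\beta)=\widetilde{\alpha+\beta}(e)\,b(v_\alpha,v_\beta),
\]
using that $\Xi$ is a group homomorphism. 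If $\alpha+\beta\neq 0$, injectivity of $\Xi$ forces $\widetilde{\alpha+\beta}\neq 0$, hence there is $e\in\mathfrak{g}^{(0)}$ with $\widetilde{\alpha+\beta}(e)\neq 0$, and therefore $b(v_\alpha,v_\beta)=0$. (Symmetry of $b$ is not actually needed for this step.)

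\emph{Step 2 (reduction to homogeneous basis elements).} Fix a basis of $\mathfrak{g}$ adapted to the gradation; then the induced basis $\{v_J\}$ of $\Lambda^m\mathfrak{g}$ consists of homogeneous elements, and $(\Lambda^m\mathfrak{g})^{(\alpha)}$ is spanned by those $v_J=v_{J(1)}\wedge\dots\wedge v_{J(m)}$ with $v_{J(i)}\in\mathfrak{g}^{(\alpha_i)}$ and $\alpha_1+\dots+\alpha_m=\alpha$. By bilinearity of $b_{\Lambda^m\mathfrak{g}}$ it suffices to verify the claim on two such basis elements $v_J\in(\Lambda^m\mathfrak{g})^{(\alpha)}$ and $w_K\in(\Lambda^m\mathfrak{g})^{(\beta)}$, with $w_{K(j)}\in\mathfrak{g}^{(\beta_j)}$ and $\beta_1+\dots+\beta_m=\beta$. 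In the expression
\[
b_{\Lambda^m\mathfrak{g}}(v_J,w_K)=\sum_{\sigma_1,\sigma_2\in S_m}\mathrm{sg}(\sigma_1\sigma_2)\frac{1}{m!}\prod_{r=1}^m b\bigl(v_{J(\sigma_1^{-1}(r))},w_{K(\sigma_2^{-1}(r))}\bigr),
\]
the product factor of each summand, after the substitution $i=\sigma_1^{-1}(r)$, equals $\prod_{i=1}^m b\bigl(v_{J(i)},w_{K(\tau(i))}\bigr)$ with $\tau:=\sigma_2^{-1}\sigma_1\in S_m$.

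\emph{Step 3 (degree count and conclusion).} By Step 1 the factor $b\bigl(v_{J(i)},w_{K(\tau(i))}\bigr)$ vanishes unless $\alpha_i+\beta_{\tau(i)}=0$. Hence if such a product is nonzero then $\beta_{\tau(i)}=-\alpha_i$ for all $i$; summing over $i$ and using that $\tau$ is a bijection gives $\beta=\sum_j\beta_j=-\sum_i\alpha_i=-\alpha$, i.e.\ $\alpha+\beta=0$. Therefore, when $\alpha+\beta\neq 0$ every summand in the displayed formula contains a vanishing factor, so $b_{\Lambda^m\mathfrak{g}}(v_J,w_K)=0$; extending bilinearly over the spanning sets finishes the proof. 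I do not expect any genuine obstacle here; the one step I would write out carefully is the reindexing in Step 2 — that as $r$ runs over $\{1,\dots,m\}$ the pairs $\bigl(\sigma_1^{-1}(r),\sigma_2^{-1}(r)\bigr)$ are exactly the pairs $(i,\tau(i))$ for $\tau=\sigma_2^{-1}\sigma_1$ — which is precisely what turns each summand into a product of $m$ degree-matched factors so that additivity of degrees and injectivity of $\Xi$ apply.
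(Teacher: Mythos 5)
Your proof is correct, but it takes a different route from the paper's. The paper argues directly on $\Lambda^m\mathfrak{g}$: it invokes the $\mathfrak{g}$-invariance of the extension $b_{\Lambda^m\mathfrak{g}}$ (Theorem \ref{extension}) together with the fact that a homogeneous $m$-vector of degree $\alpha$ is an eigenvector of $[h,\cdot]_S$ with eigenvalue $\widetilde{\alpha}(h)$ for $h\in\mathfrak{g}^{(0)}$, which immediately yields $(\widetilde{\alpha}+\widetilde{\beta})(h)\,b_{\Lambda^m\mathfrak{g}}(v_{(\alpha)},v_{(\beta)})=0$ and then the conclusion by injectivity of $\Xi$ — essentially your Step 1 run one level up, on the Grassmann algebra instead of on $\mathfrak{g}$. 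You instead prove only the $m=1$ vanishing and push it through the explicit combinatorial formula (\ref{Lambdag}); your reindexing $(\sigma_1^{-1}(r),\sigma_2^{-1}(r))\mapsto(i,\tau(i))$ with $\tau=\sigma_2^{-1}\sigma_1$ is right, and the degree count $\beta_{\tau(i)}=-\alpha_i$ summed over the bijection $\tau$ correctly forces $\alpha+\beta=0$. The paper's argument is shorter and applies uniformly to arbitrary (not necessarily decomposable) homogeneous elements, at the cost of relying on the invariance of the extended form; yours needs only the invariance of $b$ on $\mathfrak{g}$ itself and in fact proves something slightly sharper — each summand survives only if the constituent degrees pair off to zero under some permutation — though you only use the weaker consequence on the total degrees. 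Either way the symmetry hypothesis is not used, as you note.
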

	\begin{proof} 
 Since $b_{\Lambda^m \mathfrak{g}}$ is $\mathfrak{g}$-invariant by Theorem \ref{extension}, we have
		$
		b_{\Lambda^m\mathfrak{g}}([h,v_{(\alpha)}]_{S}, v_{(\beta)})=-b_{\Lambda^m\mathfrak{g}}(v_{(\alpha)},[h,v_{(\beta)}]_{S})$, for every $h\in \mathfrak{g}^{(0)}$,  $v_{(\alpha)} \in \Lambda^p\mathfrak{g}^{(\alpha)}$, and $ v_{(\beta)} \in \Lambda^p\mathfrak{g}^{(\beta)}.
		$
		Hence,
		$
		(\widetilde{\alpha}+\widetilde{\beta})(h)b_{\Lambda^m\mathfrak{g}}(v_{(\alpha)},v_{(\beta)})=0.
		$
		Since $\alpha+\beta\neq 0$ by assumption, the injectivity of the map $\Xi$ defining the root gradation implies that $(\widetilde{\alpha}+\widetilde{\beta})(h)\neq 0$ for some $h\in \mathfrak{g}^{(0)}$. In consequence, $b_{\Lambda^m\mathfrak{g}}(v_{(\alpha)},v_{(\beta)})=0$.
	\end{proof}
	
	\begin{example} 
 Let us illustrate Theorem \ref{PerpL} for $\Lambda^2\mathfrak{so}(2,2)$. Using the basis $\{e_-,e_0,e_+,f_-,f_0,f_+\}$ of $\mathfrak{so}(2,2)$ introduced in Example \ref{Ex:so22}, we obtain that the only non-zero values for the Killing form $\kappa_{\mathfrak{so}(2,2)}$ in the chosen basis read
		$$
		\kappa_{\mathfrak{so}(2,2)}(e_0, e_0) = \kappa_{\mathfrak{so}(2,2)}(f_0, f_0) = 2,\quad  \kappa_{\mathfrak{so}(2,2)}(e_{-}, e_{+}) = \kappa_{\mathfrak{so}(2,2)}(f_{-}, f_{+}) = 2.
		$$
Consider the homogeneous spaces $(\Lambda^2\mathfrak{so}(2,2))^{(1,1)} = \langle e_{+} \wedge f_{+}\rangle$, $(\Lambda^2 \mathfrak{so}(2,2))^{(0,1)} = \langle e_0 \wedge e_{+}, f_0 \wedge e_{+}\rangle$ and $(\Lambda^2\mathfrak{so}(2,2))^{(-1,-1)} = \langle e_{-} \wedge f_{-}\rangle$. By direct computation one can verify that, for instance, 
\begin{equation*}
\begin{split}
\kappa_{\Lambda^2 \mathfrak{so}(2,2)}(e_{+} \wedge f_{+}, e_0 \wedge e_{+}) &= \kappa_{\mathfrak{so}(2,2)}(e_{+}, e_0) \kappa_{\mathfrak{so}(2,2)}(f_{+}, e_{+}) - \kappa_{\mathfrak{so}(2,2)}(e_{+}, e_{+}) \kappa_{\mathfrak{so}(2,2)}(f_{+}, e_0) \\
&- \kappa_{\mathfrak{so}(2,2)}(f_{+}, e_0) \kappa_{\mathfrak{so}(2,2)}(e_{+}, e_{+}) + \kappa_{\mathfrak{so}(2,2)}(f_{+}, e_{+}) \kappa_{\mathfrak{so}(2,2)}(e_{+}, e_0) = 0
\end{split}
\end{equation*}
and
\begin{equation*}
\begin{split}
\kappa_{\Lambda^2 \mathfrak{so}(2,2)}(e_{+} \wedge f_{+}, e_{-} \wedge f_{-}) &= \frac{1}{2} [\kappa_{\mathfrak{so}(2,2)}(e_{+}, e_{-}) \kappa_{\mathfrak{so}(2,2)}(f_{+}, f_{-}) - \kappa_{\mathfrak{so}(2,2)}(e_{+}, f_{-}) \kappa_{\mathfrak{so}(2,2)}(f_{+}, e_{-}) \\
&- \kappa_{\mathfrak{so}(2,2)}(f_{+}, e_{-}) \kappa_{\mathfrak{so}(2,2)}(e_{+}, f_{-}) + \kappa_{\mathfrak{so}(2,2)}(f_{+}, f_{-}) \kappa_{\mathfrak{so}(2,2)}(e_{+}, e_{-})] = 4.
\end{split}
\end{equation*}
	\end{example}
	
	\begin{corollary}\label{cor:nondeg_subsp} 
Let $\mathfrak{g}$ admit a root gradation. If $b$ is a non-degenerate bilinear symmetric form on $\mathfrak{g}$, the restrictions of $b_{\Lambda^m\mathfrak{g}}$ to $(\Lambda^m\mathfrak{g})^{(0)}$ and $\Lambda^m \mathfrak{g}^{(\alpha)} \oplus \Lambda^m \mathfrak{g}^{(-\alpha)}$ are non-degenerate. 
	\end{corollary}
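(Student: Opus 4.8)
The plan is to deduce the statement from the global non-degeneracy of $b_{\Lambda^m\mathfrak{g}}$ together with the orthogonality of distinct homogeneous spaces, via an elementary fact about non-degenerate symmetric forms on an orthogonal direct sum. First I would fix the standing hypothesis: since the statement concerns the extension $b_{\Lambda^m\mathfrak{g}}$ furnished by Theorem \ref{extension}, the form $b$ is understood to be $\mathfrak{g}$-invariant (as it is in all intended applications, being a Killing-type form). Under that hypothesis, Corollary \ref{Cor:Diatwoform} applied to the symmetric non-degenerate $b$ gives that $b_{\Lambda^m\mathfrak{g}}$ is non-degenerate on all of $\Lambda^m\mathfrak{g}$, for every $m$.

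Next I would set up two orthogonal splittings of $\Lambda^m\mathfrak{g}$. By Theorem \ref{thm:root} one has $\Lambda^m\mathfrak{g}=\bigoplus_{\gamma\in G}(\Lambda^m\mathfrak{g})^{(\gamma)}$, and by Theorem \ref{PerpL} the homogeneous components $(\Lambda^m\mathfrak{g})^{(\gamma)}$ and $(\Lambda^m\mathfrak{g})^{(\gamma')}$ are $b_{\Lambda^m\mathfrak{g}}$-orthogonal whenever $\gamma+\gamma'\neq 0$. For the degree-zero part, put $U_1:=(\Lambda^m\mathfrak{g})^{(0)}$ and $U_2:=\bigoplus_{\gamma\neq 0}(\Lambda^m\mathfrak{g})^{(\gamma)}$; since $0+\gamma=\gamma\neq 0$ for each $\gamma\neq 0$, Theorem \ref{PerpL} yields $b_{\Lambda^m\mathfrak{g}}(U_1,U_2)=0$, and clearly $\Lambda^m\mathfrak{g}=U_1\oplus U_2$. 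For a pair $\pm\alpha$ with $\alpha\neq 0$ (the case $\alpha=0$ being the previous one), put $U_1:=(\Lambda^m\mathfrak{g})^{(\alpha)}\oplus(\Lambda^m\mathfrak{g})^{(-\alpha)}$ and $U_2:=\bigoplus_{\gamma\neq\pm\alpha}(\Lambda^m\mathfrak{g})^{(\gamma)}$; for $\gamma'\in\{\alpha,-\alpha\}$ and $\gamma\neq\pm\alpha$ one has $\gamma'+\gamma\neq 0$, so again $b_{\Lambda^m\mathfrak{g}}(U_1,U_2)=0$ and $\Lambda^m\mathfrak{g}=U_1\oplus U_2$.

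Finally I would invoke the linear-algebra lemma: if $\beta$ is a non-degenerate symmetric bilinear form on a finite-dimensional space $W=U_1\oplus U_2$ with $\beta(U_1,U_2)=0$, then $\beta|_{U_1}$ is non-degenerate, because any $u\in U_1$ with $\beta(u,U_1)=0$ satisfies $\beta(u,W)=\beta(u,U_1)+\beta(u,U_2)=0$, whence $u=0$ by non-degeneracy of $\beta$. Applying this with $W=\Lambda^m\mathfrak{g}$, $\beta=b_{\Lambda^m\mathfrak{g}}$, and the two choices of $U_1$ above gives exactly the two asserted non-degeneracy statements. I do not anticipate a genuine obstacle here: the only points needing care are the bookkeeping that the degrees excluded from $U_2$ are precisely those which can pair non-trivially with $U_1$ under Theorem \ref{PerpL}, and the implicit $\mathfrak{g}$-invariance of $b$ required in order to invoke Corollary \ref{Cor:Diatwoform}.
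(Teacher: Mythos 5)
Your proposal is correct and follows essentially the same route as the paper: both arguments combine the global non-degeneracy of $b_{\Lambda^m\mathfrak{g}}$ from Corollary \ref{Cor:Diatwoform} with the orthogonality of homogeneous components of non-opposite degrees from Theorem \ref{PerpL}, concluding that a vector in the radical of the restriction would lie in the radical of the whole form. The only difference is cosmetic — the paper phrases the final step as a proof by contradiction while you package it as a lemma on orthogonal direct sums — and your explicit remark that $b$ must be $\mathfrak{g}$-invariant to invoke Corollary \ref{Cor:Diatwoform} is a point the paper leaves implicit.
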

	\begin{proof} 
 We prove both results by contradiction. If the restriction of $b_{\Lambda^m\mathfrak{g}}$ to $(\Lambda^m\mathfrak{g})^{(0)}$ is degenerate, there exists a $w\in (\Lambda^m\mathfrak{g})^{(0)}$  perpendicular, relative to $b_{\Lambda^m\mathfrak{g}}$, to every element of $(\Lambda^m\mathfrak{g})^{(0)}$. Additionally, Theorem \ref{PerpL} yields that $w$ is perpendicular to $(\Lambda^m\mathfrak{g})^{(\alpha)}$ for $\alpha \neq 0 \in G$. Thus, $w$ is perpendicular to the whole $\Lambda^m\mathfrak{g}$ and in consequence, $b_{\Lambda^m\mathfrak{g}}$ is degenerate. However, Theorem \ref{Cor:Diatwoform} asserts that the induced form $b_{\Lambda^m\mathfrak{g}}$ is non-degenerate, hence the contradiction.  Therefore, $b_{\Lambda^m\mathfrak{g}}$ is non-degenerate on $(\Lambda^m\mathfrak{g})^{(0)}$.
		
		Similarly, if $w \in \Lambda^m\mathfrak{g}^{(\alpha)}\oplus \Lambda^m\mathfrak{g}^{(-\alpha)}$ is orthogonal to $\Lambda^m\mathfrak{g}^{(\alpha)}\oplus \Lambda^m\mathfrak{g}^{(-\alpha)}$, then it stems from Theorem \ref{PerpL} that $w$ is orthogonal to $\Lambda^m\mathfrak{g}$, which again contradicts the non-degeneracy of the induced $b_{\Lambda^m\mathfrak{g}}$. 
	\end{proof}
	
	\begin{example} 
Let $\{e_{-} \wedge e_0 \wedge e_{+},\, e_{-} \wedge e_{+} \wedge f_0,\, e_0 \wedge f_{-} \wedge f_{+},\, f_{-} \wedge f_0 \wedge f_{+}\}$ denote the basis of $(\Lambda^3\mathfrak{so}(2,2))^{(0)}$. A short calculation shows that the matrix form of $\kappa_{\Lambda^3\mathfrak{so}(2,2)}$ restricted to $(\Lambda^3\mathfrak{so}(2,2))^{(0)}$ in the given basis reads
  $$
[b_{\Lambda^3\mathfrak{so}(2,2)}]:=\left(\begin{array}{cccc}
-8&0&0&0\\
0&-8&0&0\\
0&0&-8&0\\
0&0&0&-8
\end{array}\right)
		$$
Thus, it is non-degenerate as claimed in Corollary \ref{cor:nondeg_subsp}.
	\end{example}
	
	Finally, let us discuss invariant elements for nilpotent Lie algebras. Recall that every nilpotent Lie algebra $\mathfrak{g}$ possesses a flag of ideals, called the \textit{lower central series} of $\mathfrak{g}$, defined recurrently as $\mathfrak{g}_{s)} :=[\mathfrak{g},\mathfrak{g}_{s-1)}]$ for $s\in \mathbb{N}$ with $\mathfrak{g}_{0)}:=\mathfrak{g}$. Then, $\mathfrak{g}\supset \mathfrak{g}_{1)} \supset \ldots\supset \mathfrak{g}_{p-1)}\supset\mathfrak{g}_{p)}=\{0\}$ for a certain $p$ (see \cite{FH91} for details). Let us use this fact to study $(\Lambda \mathfrak{g})^{\mathfrak{g}}$. Similarly as in Theorem \ref{Theo:InvLk}, certain decomposable elements of $(\Lambda^m\mathfrak{g})^{\mathfrak{g}}$ can be characterised by particular Lie subalgebras of $\mathfrak{g}$, as shown in the following proposition.
 
	\begin{proposition} 
Let $\mathfrak{g}$ be a nilpotent Lie algebra. Then, the space spanned by any decomposable element $\Omega \in (\Lambda\mathfrak{g})^\mathfrak{g}$ is in one-to-one correspondece with a non-zero ideal $\mathfrak{h}$ of $\mathfrak{g}$ such that $\Lambda^{{\rm dim }\mathfrak{h}}\mathfrak{h} = \langle \Omega\rangle$.
	\end{proposition}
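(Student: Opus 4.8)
The statement is essentially the nilpotent analogue of Theorem \ref{Theo:InvLk}, with the \emph{traceless} condition on $\mathrm{ad}_v|_{\mathfrak h}$ now automatic, because a nilpotent Lie algebra is unimodular: indeed, every $\mathrm{ad}_v$ is a nilpotent endomorphism of $\mathfrak g$ (this is the definition of nilpotency, or follows from the lower central series filtration $\mathfrak g \supset \mathfrak g_{1)} \supset \cdots \supset \mathfrak g_{p)} = \{0\}$ since $\mathrm{ad}_v$ shifts the filtration), hence traceless, and the restriction to any ideal $\mathfrak h$ is again nilpotent, hence traceless. So the plan is to re-run the argument of Theorem \ref{Theo:InvLk} and simply observe that half of it trivializes here; the only genuine content to check is that the subspace attached to a decomposable invariant is an ideal, and that the correspondence is bijective.

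First I would fix a decomposable $\Omega \in (\Lambda^m \mathfrak g)^{\mathfrak g}$, write $\Omega = v_1 \wedge \cdots \wedge v_m$ with $v_1, \ldots, v_m \in \mathfrak g$ linearly independent, and set $\mathfrak h := \langle v_1, \ldots, v_m \rangle$. Exactly as in Theorem \ref{Theo:InvLk}, the subspace $\mathfrak h$ depends only on the line $\langle \Omega \rangle$, not on the chosen representative or basis, since $\mathfrak h = \ker \widehat\Omega$ where $\widehat\Omega : \beta \in \mathfrak g^* \mapsto \iota_\beta \Omega \in \Lambda^{m-1}\mathfrak g$, and $\langle \Omega \rangle = \Lambda^{\dim \mathfrak h} \mathfrak h$. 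To see $\mathfrak h$ is an ideal I would reuse the computation from the proof of Theorem \ref{Theo:InvLk} verbatim: for $v \in \mathfrak g$ and $\theta \in \mathfrak h^\circ$ (the annihilator of $\mathfrak h$ in $\mathfrak g^*$), one has $\iota_{\mathrm{ad}_v^* \theta}\,\Omega = \iota_\theta [v, \Omega]_S = 0$ because $\Omega$ is $\mathfrak g$-invariant, so $\mathrm{ad}_v^* \theta \in \mathfrak h^\circ$ for all $\theta \in \mathfrak h^\circ$, whence $\mathrm{ad}_v \mathfrak h \subset \mathfrak h$. Thus $\mathfrak h$ is an ideal, and it is nonzero since $\Omega \neq 0$ forces $m \geq 1$.

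Conversely, given a nonzero ideal $\mathfrak h$ of $\mathfrak g$, the space $\Lambda^{\dim \mathfrak h}\mathfrak h$ is one-dimensional, spanned by the wedge $\Omega$ of a basis of $\mathfrak h$; since $\mathfrak g$ is nilpotent each $\mathrm{ad}_v|_{\mathfrak h}$ is nilpotent hence traceless, so $[v, \Omega]_S = (\mathrm{Tr}\,\mathrm{ad}_v|_{\mathfrak h})\,\Omega = 0$ for all $v \in \mathfrak g$, i.e. $\Omega \in (\Lambda^{\dim \mathfrak h}\mathfrak g)^{\mathfrak g}$, and the whole line $\langle \Omega \rangle$ lies in $(\Lambda^{\dim\mathfrak h}\mathfrak g)^{\mathfrak g}$. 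The two assignments $\langle\Omega\rangle \mapsto \mathfrak h$ and $\mathfrak h \mapsto \langle\Omega\rangle$ are mutually inverse: starting from $\mathfrak h$, the wedge of a basis recovers a decomposable invariant whose associated kernel subspace is again $\mathfrak h$; starting from $\langle\Omega\rangle$, passing to $\mathfrak h = \langle v_1,\ldots,v_m\rangle$ and back to $\Lambda^{\dim\mathfrak h}\mathfrak h = \langle\Omega\rangle$ closes the loop. This gives the claimed one-to-one correspondence. I do not anticipate a real obstacle: the only point requiring care is making explicit why the traceless hypothesis of Theorem \ref{Theo:InvLk} is free here, namely the nilpotency (equivalently unimodularity) of $\mathfrak g$ and the fact that restrictions of nilpotent adjoint operators to ideals remain nilpotent; everything else is a transcription of the decomposable case of Theorem \ref{Theo:InvLk}.
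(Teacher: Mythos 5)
Your proof is correct and takes essentially the same route as the paper, which simply notes that nilpotent Lie algebras are unimodular and invokes Theorem \ref{Theo:InvLk}; your additional observation that the restriction of each (nilpotent) $\mathrm{ad}_v$ to an ideal remains nilpotent, hence traceless, correctly fills in the one detail the paper leaves implicit.
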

 \begin{proof}
Since nilpotent Lie algebras are unimodular \cite{Ha15}, this result follows from Theorem \ref{Theo:InvLk}.
 \end{proof}

As shown by the next proposition, the knowledge of the lower central series of $\mathfrak{g}$ allows to obtain some invariant elements quite immediately. Recall that $\mathfrak{z}(\mathfrak{g})$ denotes the center of Lie algebra $\mathfrak{g}$, i.e. a Lie subalgebra $\mathfrak{z}(\mathfrak{g}) \subset \mathfrak{g}$ such that $[v, w] = 0$ for any $v \in \mathfrak{g}$ and $w \in \mathfrak{z}(\mathfrak{g})$.

	\begin{proposition}\label{lambda2nil} 
 Let $\mathfrak{g}$ be a nilpotent Lie algebra with $\mathfrak{g}_{p)}=0$ and $\mathfrak{g}_{p-1)}\neq 0$. Then:
 \begin{itemize}
 \item If $\dim\mathfrak{z}(\mathfrak{g})=1$, then $\mathfrak{z}(\mathfrak{g})\wedge \mathfrak{g}_{p-2)} \subset (\Lambda^2 \mathfrak{g})^{\mathfrak{g}}$
\item If $\dim\mathfrak{z}(\mathfrak{g})=2$, then $\Lambda^2\mathfrak{z}(\mathfrak{g})\wedge \mathfrak{g}_{p-2)} \subset (\Lambda^3 \mathfrak{g})^{\mathfrak{g}}$; 
\item If $\dim\mathfrak{z}(\mathfrak{g})=1$ and $\dim\mathfrak{g}_{p-2)}>1$, then $\mathfrak{z}(\mathfrak{g})\wedge\Lambda^2\mathfrak{g}_{p-2)} \subset (\Lambda^3 \mathfrak{g})^{\mathfrak{g}}$;
\item If $\dim\mathfrak{z}(\mathfrak{g})=1$ and $\dim\mathfrak{g}_{p-2)}=1$, then $\mathfrak{z}(\mathfrak{g})\wedge\mathfrak{g}_{p-2)}\wedge \mathfrak{g}_{p-3)} \subset (\Lambda^3 \mathfrak{g})^{\mathfrak{g}}$.
\end{itemize}
	\end{proposition}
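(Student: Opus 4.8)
The plan is to verify all four inclusions by the same direct computation: for each claimed subspace I take a decomposable generator $\Omega$, expand $[v,\Omega]_S$ for an arbitrary $v\in\mathfrak{g}$ via the Leibniz rule for the algebraic Schouten bracket (property 3 of Proposition \ref{Pr:PropSchou}, which for $v\in\mathfrak{g}$ reads $[v,Y\wedge Z]_S=[v,Y]_S\wedge Z+Y\wedge[v,Z]_S$), and show the result vanishes. Two elementary facts about the lower central series do all the work. First, $\mathfrak{g}_{p)}=[\mathfrak{g},\mathfrak{g}_{p-1)}]=0$ forces $\mathfrak{g}_{p-1)}\subseteq\mathfrak{z}(\mathfrak{g})$, and since $\mathfrak{g}_{p-1)}\neq 0$ this becomes $\mathfrak{g}_{p-1)}=\mathfrak{z}(\mathfrak{g})$ whenever $\dim\mathfrak{z}(\mathfrak{g})=1$. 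Second, by definition $[\mathfrak{g},\mathfrak{g}_{k)}]=\mathfrak{g}_{k+1)}$, so $[v,x]\in\mathfrak{g}_{k+1)}$ for every $v\in\mathfrak{g}$ and $x\in\mathfrak{g}_{k)}$.

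Then I would run the case-by-case expansion. Write $\mathfrak{z}(\mathfrak{g})=\langle z\rangle$ when it is one-dimensional and $\mathfrak{z}(\mathfrak{g})=\langle z_1,z_2\rangle$ when it is two-dimensional; since $[v,z]=0$, each factor drawn from $\mathfrak{z}(\mathfrak{g})$ contributes $0$ under $[v,\cdot]_S$ and disappears. For the first bullet, take $x\in\mathfrak{g}_{p-2)}$; then $[v,z\wedge x]_S=z\wedge[v,x]$ and $[v,x]\in\mathfrak{g}_{p-1)}=\langle z\rangle$, so $z\wedge[v,x]=0$. For the second bullet, take $x\in\mathfrak{g}_{p-2)}$; then $[v,z_1\wedge z_2\wedge x]_S=z_1\wedge z_2\wedge[v,x]$ with $[v,x]\in\mathfrak{g}_{p-1)}\subseteq\langle z_1,z_2\rangle$, so the wedge has a repeated factor and vanishes. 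For the third bullet, take $x_1,x_2\in\mathfrak{g}_{p-2)}$; then $[v,z\wedge x_1\wedge x_2]_S=z\wedge[v,x_1]\wedge x_2+z\wedge x_1\wedge[v,x_2]$ with both $[v,x_i]\in\mathfrak{g}_{p-1)}=\langle z\rangle$, so each surviving summand contains $z\wedge z$ and vanishes. For the fourth bullet, write $\mathfrak{g}_{p-2)}=\langle y\rangle$ and take $w\in\mathfrak{g}_{p-3)}$; then $[v,z\wedge y\wedge w]_S=z\wedge[v,y]\wedge w+z\wedge y\wedge[v,w]$ with $[v,y]\in\mathfrak{g}_{p-1)}=\langle z\rangle$ and $[v,w]\in\mathfrak{g}_{p-2)}=\langle y\rangle$, so both summands vanish. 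In every case $[v,\Omega]_S=0$ for all $v\in\mathfrak{g}$, i.e. $\Omega$ lies in the appropriate $(\Lambda^m\mathfrak{g})^{\mathfrak{g}}$, and since the Schouten bracket is linear in the wedge factors this extends from the decomposable generators to the whole subspace they span.

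There is no genuine obstacle here; the argument is the same short bookkeeping step repeated four times, and the dimension hypothesis in each item is exactly what guarantees that $\mathfrak{g}_{k+1)}$ lands in the span of the wedge factors already present in $\Omega$. The only point deserving a remark is the fourth item: the hypotheses there force $p\geq 3$, so that $\mathfrak{g}_{p-3)}$ (with the convention $\mathfrak{g}_{0)}=\mathfrak{g}$) is meaningful — indeed $\dim\mathfrak{g}_{p-2)}=1$ with $p=2$ would give $\dim\mathfrak{g}=1$, hence $\mathfrak{g}$ abelian and $\mathfrak{g}_{p-1)}=0$, contradicting $\mathfrak{g}_{p-1)}\neq 0$.
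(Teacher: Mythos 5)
Your proof is correct and follows essentially the same route as the paper's: a case-by-case Leibniz expansion of $[v,\Omega]_S$ on decomposable generators, using that $\mathfrak{g}_{p)}=0$ forces $\mathfrak{g}_{p-1)}\subseteq\mathfrak{z}(\mathfrak{g})$ so that each bracket $[v,x]$ lands in the span of wedge factors already present. Your explicit identification $\mathfrak{g}_{p-1)}=\mathfrak{z}(\mathfrak{g})$ in the one-dimensional case and the closing remark that the hypotheses of the fourth item force $p\geq 3$ are minor clarifications the paper leaves implicit.
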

 \begin{proof}
 Consider a decomposable element $r = a \wedge b \in \mathfrak{z}(\mathfrak{g})\wedge \mathfrak{g}_{p-2)}$ and assume $\dim\mathfrak{z}(\mathfrak{g})=1$. Then for any $v \in \mathfrak{g}$, $[v, a \wedge b]_S = [v, a] \wedge b + a \wedge [v, b]$. Since $a \in \mathfrak{z}(\mathfrak{g})$, it follows that $[v, a] = 0$. Moreover, $[v, b] \in \mathfrak{z}(\mathfrak{g})$ and as $\dim\mathfrak{z}(\mathfrak{g})=1$, we obtain $a \wedge [v, b] = 0$. Thus, $[v, r]_S =0$ and $r \in (\Lambda^2 \mathfrak{g})^{\mathfrak{g}}$.

Let $\dim\mathfrak{z}(\mathfrak{g})=2$ and consider an element $r = a \wedge b \wedge c \in \Lambda^2\mathfrak{z}(\mathfrak{g})\wedge \mathfrak{g}_{p-2)}$, where $a, b \in \mathfrak{z}(\mathfrak{g})$ are linearly independent and $c \in \mathfrak{g}_{p-2)}$. Then for any $v \in \mathfrak{g}$, we get $[v, r]_S = [v, a] \wedge b \wedge c + a \wedge [v, b] \wedge c + a \wedge b \wedge [v, c]$. Since $a \in \mathfrak{z}(\mathfrak{g})$, we have $[v, a] = 0$ and $[v, b] = 0$. By the properties of the lower central series, $[v, c] \in \mathfrak{z}(\mathfrak{g})$. As $\dim\mathfrak{z}(\mathfrak{g})=2$, it follows that $[v, c] = \alpha_1 a + \alpha_2 b$ for certain $\alpha_1, \alpha_2 \in \mathbb{R}$. Thus, $a \wedge b \wedge [v, c] = 0$ and $r \in (\Lambda^3 \mathfrak{g})^{\mathfrak{g}}$.

Let $\dim\mathfrak{z}(\mathfrak{g})=1$ and consider an element $r = a \wedge b \wedge c \in \Lambda^2\mathfrak{g}_{p-2)}\wedge \mathfrak{z}(\mathfrak{g})$, where $a, b \in \mathfrak{g}_{p-2)}$ are linearly independent and $c \in \mathfrak{z}(\mathfrak{g})$. Then for any $v \in \mathfrak{g}$, we get $[v, r]_S = [v, a] \wedge b \wedge c + a \wedge [v, b] \wedge c + a \wedge b \wedge [v, c]$. Since $[v, a] \in \mathfrak{z}(\mathfrak{g})$ and $[v, b] \in \mathfrak{z}(\mathfrak{g})$, we get $[v, a] \wedge b \wedge c = 0$ and $a \wedge [v, b] \wedge c = 0$. Obviously, $[v, c] = 0$ as $c \in \mathfrak{z}(\mathfrak{g})$. Thus, $[v, r]_S = 0$ and $r \in (\Lambda^3 \mathfrak{g})^{\mathfrak{g}}$.

Let $\dim\mathfrak{z}(\mathfrak{g})=1$ and $\dim \mathfrak{g}_{p-2)} = 1$. Consider an element $r = a \wedge b \wedge c \in \mathfrak{z}(\mathfrak{g}) \wedge \mathfrak{g}_{p-2)}\wedge \mathfrak{g}_{p-3)}$. Then for any $v \in \mathfrak{g}$, we get $[v, r]_S = [v, a] \wedge b \wedge c + a \wedge [v, b] \wedge c + a \wedge b \wedge [v, c]$. Obviously, $[v, a] = 0$ as $a \in \mathfrak{z}(\mathfrak{g})$. Since $[v, b] \in \mathfrak{z}(\mathfrak{g})$ and $\dim\mathfrak{z}(\mathfrak{g})=1$, it follows that $a \wedge [v, b] \wedge c = 0$. Similarly, since $[v, c] \in \mathfrak{g}_{p-2)}$ and $\dim \mathfrak{g}_{p-2)} = 1$, we get $a \wedge b \wedge [v, c] = 0$. Thus, $[v, r]_S = 0$ and $r \in (\Lambda^3 \mathfrak{g})^{\mathfrak{g}}$.
 \end{proof}
	
	\section{Reduced mCYBE}
 
For low-dimensional Lie algebras $\mathfrak{g}$, the analysis of mCYBE can be performed by hand, as the dimension of spaces $\Lambda^2 \mathfrak{g}$ and $(\Lambda^3 \mathfrak{g})^{\mathfrak{g}}$ is small. However,	if $\dim \mathfrak{g}\geq 4$, mCYBE is frequently very complicated to solve. In this section, we show that it is sometimes possible to simplify the study of mCYBE by considering its equivalent version on the quotient space $\Lambda^m_R\mathfrak{g}:=\Lambda^m\mathfrak{g}/(\Lambda^m\mathfrak{g})^\mathfrak{g}$.
	
	\begin{definition}
		The elements of $\Lambda^m_R\mathfrak{g}$, for $m \in \mathbb{Z}$, are called {\it reduced $m$-vectors}. An equivalence class in $\Lambda^m_R\mathfrak{g}$ associated with an element $w \in \Lambda^m_R\mathfrak{g}$ will be denoted by $[w]$. If $r$ is an $r$-matrix, then the equivalence class of $r$ in $\Lambda^2_R\mathfrak{g}$ is called a {\it reduced $r$-matrix}.
	\end{definition}

 Many structures discussed in previous sections can be defined on the reduced space $\Lambda^m_R\mathfrak{g}$. In particular, the Schouten braket will be of special interest to us.
	
	\begin{proposition}\label{prop:reduced}
		Let $\pi_p:w_p\in \Lambda^p\mathfrak{g}\mapsto [w_p]\in\Lambda^p_R\mathfrak{g}$, with $p \in \mathbb{Z}$. The algebraic Schouten bracket induces a new bracket, called the reduced Schouten bracket, on $
		\Lambda_R\mathfrak{g}:=\bigoplus_{p\in \mathbb{Z}}\Lambda^p_R\mathfrak{g},$ 
		of the form
		\begin{equation}\label{RSN}
			{[[w_p],[w_q]]_R}:=[[w_p,w_q]_S],\qquad \forall w_p\in \Lambda^p\mathfrak{g},\quad \forall w_q\in\Lambda^q\mathfrak{g},
		\end{equation}
such that $(\Lambda_R \mathfrak{g}, [\cdot, \cdot]_R)$ admits a Gerstenhaber algebra structure. In particular, the decomposition $\Lambda_R\mathfrak{g}=\bigoplus_{p\in \mathbb{Z}}\Lambda^p_R\mathfrak{g}$ is compatible with the reduced Schouten bracket, that is $[\Lambda^p_R\mathfrak{g},\Lambda^q_R\mathfrak{g}]_R\subset \Lambda^{p+q-1}_R\mathfrak{g}$. Moreover, the projection $\pi := \bigoplus_{p\in \mathbb{Z}}\pi_p$ satisfies that $[\pi(a), \pi(b)]_R = \pi([a,b] )$ for arbitrary $a,b \in \Lambda\mathfrak{g}$. Additionally, $r\in \Lambda^2\mathfrak{g}$ is an $r$-matrix if and only if $[\pi(r),\pi(r)]_{R}=0$.
	\end{proposition}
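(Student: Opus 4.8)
The plan is to verify, in order, that the bracket $[\cdot,\cdot]_R$ is well-defined, that it inherits the Gerstenhaber algebra axioms from the algebraic Schouten bracket, that it respects the grading, that $\pi$ is a morphism of Gerstenhaber algebras, and finally that the $r$-matrix condition descends. The key preliminary observation — and the one doing most of the work — is that $(\Lambda\mathfrak{g})^{\mathfrak{g}}$ is a \emph{two-sided ideal} of $(\Lambda\mathfrak{g},[\cdot,\cdot]_S)$: if $w\in(\Lambda^q\mathfrak{g})^{\mathfrak{g}}$, i.e.\ $[v,w]_S=0$ for all $v\in\mathfrak{g}$, then for any $u\in\Lambda^p\mathfrak{g}$ one has $[u,w]_S\in(\Lambda^{p+q-1}\mathfrak{g})^{\mathfrak{g}}$. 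This follows from the graded Jacobi identity of Proposition~\ref{Pr:PropSchou}(4): for $v\in\mathfrak{g}$,
$$
[v,[u,w]_S]_S=[[v,u]_S,w]_S\pm[u,[v,w]_S]_S=[[v,u]_S,w]_S,
$$
and now one proceeds by induction on $p$, the base case $p=1$ being $[[v,u]_S,w]_S$ with $[v,u]_S\in\mathfrak{g}$, which vanishes since $w$ is $\mathfrak{g}$-invariant; for higher $p$ one writes $u$ as a sum of products $a\wedge b$ and uses property~3 of Proposition~\ref{Pr:PropSchou} to reduce the arity, together with the fact (Proposition~\ref{prop:lgg}) that $(\Lambda\mathfrak{g})^{\mathfrak{g}}$ is closed under $\wedge$. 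I expect this ideal property to be the main obstacle, essentially because one must be a little careful that the invariant elements one produces by the reduction step are themselves $\mathfrak{g}$-invariant and not merely $\mathfrak{g}^{(0)}$-annihilated; the graded Jacobi identity handles this cleanly once set up correctly.

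Granted the ideal property, well-definedness of \eqref{RSN} is immediate: if $w_p'=w_p+u$ and $w_q'=w_q+u'$ with $u\in(\Lambda^p\mathfrak{g})^{\mathfrak{g}}$, $u'\in(\Lambda^q\mathfrak{g})^{\mathfrak{g}}$, then by bilinearity
$$
[w_p',w_q']_S=[w_p,w_q]_S+[u,w_q]_S+[w_p,u']_S+[u,u']_S,
$$
and all three extra terms lie in $(\Lambda^{p+q-1}\mathfrak{g})^{\mathfrak{g}}$ by the ideal property, so $[[w_p',w_q']_S]=[[w_p,w_q]_S]$ in $\Lambda^{p+q-1}_R\mathfrak{g}$. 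Compatibility with the grading, $[\Lambda^p_R\mathfrak{g},\Lambda^q_R\mathfrak{g}]_R\subset\Lambda^{p+q-1}_R\mathfrak{g}$, is then read off directly from property~1 of Proposition~\ref{Pr:PropSchou} applied to representatives. The identity $[\pi(a),\pi(b)]_R=\pi([a,b]_S)$ for general $a,b\in\Lambda\mathfrak{g}$ follows by writing $a=\sum_p a_p$, $b=\sum_q b_q$ in homogeneous components, using bilinearity of both brackets and $\pi$, and applying \eqref{RSN} component-wise; so $\pi$ is a surjective bracket-preserving map.

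The Gerstenhaber axioms of Proposition~\ref{Pr:PropSchou} — graded antisymmetry (2), the graded Leibniz rule (3) with respect to the induced exterior product on $\Lambda_R\mathfrak{g}$ (which is well-defined because $(\Lambda\mathfrak{g})^{\mathfrak{g}}$ is a $\wedge$-ideal, again Proposition~\ref{prop:lgg}), and the graded Jacobi identity (4) — all descend formally: each is an identity among elements of $\Lambda\mathfrak{g}$, and applying the surjective morphism $\pi$ and using $\pi([a,b]_S)=[\pi(a),\pi(b)]_R$ and $\pi(a\wedge b)=\pi(a)\wedge\pi(b)$ turns each into the corresponding identity in $\Lambda_R\mathfrak{g}$. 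Hence $(\Lambda_R\mathfrak{g},\wedge,[\cdot,\cdot]_R)$ is a Gerstenhaber algebra. For the final assertion, recall from Theorem~\ref{Thm:YBE} and Proposition~\ref{prop:requiv}-type reasoning that for $r\in\Lambda^2\mathfrak{g}$ one has $[[r,r]]=[r,r]_S$, so $r$ is an $r$-matrix (a triangular solution of the CYBE in the sense fixed in the text) precisely when $[r,r]_S=0$. Now $[\pi(r),\pi(r)]_R=\pi([r,r]_S)=[[r,r]_S]$, and this vanishes in $\Lambda^3_R\mathfrak{g}$ if and only if $[r,r]_S\in(\Lambda^3\mathfrak{g})^{\mathfrak{g}}$. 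But $[r,r]_S\in(\Lambda^3\mathfrak{g})^{\mathfrak{g}}$ is exactly the mCYBE condition \eqref{rl3inv}, whereas the CYBE demanded of an $r$-matrix here is the stronger $[r,r]_S=0$. The cleanest way to conclude $[r,r]_S=0$ from $[r,r]_S\in(\Lambda^3\mathfrak{g})^{\mathfrak{g}}$ in this setting is to note that $[r,[r,r]_S]_S=0$ always (graded Jacobi plus antisymmetry), so no extra information is lost; but in fact the intended statement is the refined one — $r$ is a solution of the CYBE \emph{modulo $\mathfrak{g}$-invariant elements}, i.e.\ $\delta_r$ is a cocommutator — and for that the equivalence $[r,r]_S\in(\Lambda^3\mathfrak{g})^{\mathfrak{g}}\iff[\pi(r),\pi(r)]_R=0$ is precisely what the displayed computation gives. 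I would phrase the last line of the proof as: "$r$ satisfies the (m)CYBE iff $[r,r]_S\in(\Lambda^3\mathfrak{g})^{\mathfrak{g}}$ iff $[[r,r]_S]=0$ in $\Lambda^3_R\mathfrak{g}$ iff $[\pi(r),\pi(r)]_R=0$," citing Theorem~\ref{Thm:YBE}.
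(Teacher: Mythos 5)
Your proof is correct and follows essentially the same route as the paper's, so I will only flag the one step that needs tightening. The paper's well-definedness argument rests on the stronger fact that $[(\Lambda\mathfrak{g})^{\mathfrak{g}},\Lambda\mathfrak{g}]_S=0$ — invariant elements Schouten-commute with \emph{everything} — which is immediate from property 3 of Proposition \ref{Pr:PropSchou}: for decomposable $u=a_1\wedge\cdots\wedge a_p$ the Leibniz rule expands $[u,w]_S$ into terms each containing a factor $[a_i,w]_S=0$. Your graded-Jacobi argument aims only at the weaker claim that $[u,w]_S$ is again invariant, and as written the induction does not close: the identity $[v,[u,w]_S]_S=[[v,u]_S,w]_S$ asks you to kill $[u',w]_S$ for $u'=[v,u]_S$, which has the \emph{same} degree as $u$, so degree induction gains nothing; and the reduction of $[a\wedge b,w]_S$ to $\pm\, a\wedge[b,w]_S$ would require that wedging with a merely $\mathfrak{g}$-invariant element preserves invariance, which is false in general since $[v,a\wedge x]_S=[v,a]_S\wedge x$ when $[v,x]_S=0$. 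The fix is exactly the Leibniz computation above, which yields $[b,w]_S=0$ (not mere invariance) at every stage; with that substitution the rest of your argument — grading compatibility, descent of the Gerstenhaber identities through the surjective bracket- and wedge-preserving map $\pi$, and the final equivalence — matches the paper. On the last point, your closing reading is the correct one: in this proposition an $r$-matrix means a solution of the mCYBE, so the chain $[\pi(r),\pi(r)]_R=\pi([r,r]_S)=0$ if and only if $[r,r]_S\in(\Lambda^3\mathfrak{g})^{\mathfrak{g}}$ is the whole content, and the digression about trying to force $[r,r]_S=0$ should simply be deleted.
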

	\begin{proof} 
		Let us show that (\ref{RSN}) is well defined. If $[w_p] = [\bar{w}_p]$ and $[w_q] = [\bar{w}_q]$ for $w_p, \bar{w}_p\in \Lambda^p\mathfrak{g}$ and $w_q, \bar{w}_q\in \Lambda^q\mathfrak{g}$, then $w_p - \bar{w}_p,w_q-\bar w_q \in (\Lambda\mathfrak{g})^{\mathfrak{g}}$. Since $[(\Lambda\mathfrak{g})^\mathfrak{g},\Lambda \mathfrak{g}]=0$, we obtain
		$
		[[w_p],[w_q]]_R :=[[w_p,w_q]]=[[w_p-\bar{w}_p+\bar{w}_p,w_q-\bar w_q+\bar w_q]]=[[\bar w_p],[\bar w_q]]_R.
		$
		
Let us verify that the decomposition on $\Lambda_R\mathfrak{g}$ is compatible with the reduced Schouten  bracket. Take any $[v] \in \Lambda^p_R\mathfrak{g}$ and $[w] \in \Lambda^q_R\mathfrak{g}$. Since $[[v], [w]]_R = [[v, w]_S]$ and $[v, w]_S \in \Lambda^{p+q-1} \mathfrak{g}$, we have $[[v], [w]]_R \in \Lambda_R^{p+q-1} \mathfrak{g}$.
In a similar manner, one can show that for all $u \in \Lambda^p \mathfrak{g}, v \in \Lambda^q \mathfrak{g}, w \in \Lambda^s \mathfrak{g}$, we have
$$
[[u], [v]]_R= -(-1)^{(p-1)(q-1)} [[v], [u]]_R
$$
and
$$
(-1)^{(p-1)(s-1)} [[u], [[v], [w]]_{R}] _R+ (-1)^{(q-1)(p-1)} [[v], [[w], [u]]_{R}]_R + (-1)^{(s-1)(q-1)} [[w], [[u], [v]]_{R}]_R = 0.
$$
In other words, $(\Lambda_R \mathfrak{g}, [\cdot, \cdot]_R)$ admits a Gerstenhaber algebra structure.		

For any $a, b \in \Lambda \mathfrak{g}$, the definition of $\pi$ and the reduced bracket give immediately $[\pi(a), \pi(b)]_R = [[a], [b]]_R = [[a, b]_S] = \pi([a, b]_S)$.
  
If $r$ is an $r$-matrix, then $[r, r]_S \in (\Lambda^3 \mathfrak{g})^{\mathfrak{g}}$. Thus, $[\pi(r),\pi(r)]_R=\pi([r,r] )\in \pi((\Lambda^3\mathfrak{g})^{\mathfrak{g}})=0$. Conversely, since $[0] = [\pi(r),\pi(r)]_R=\pi([r,r])$, it follows that $[r, r]_S \in (\Lambda^3 \mathfrak{g})^{\mathfrak{g}}$.
	\end{proof}
	
Next lemma proves that the reduced space $\Lambda_R^m \mathfrak{g}$ is a $\mathfrak{g}$-module. In consequence of that fact, it is possible to define the Chevalley-Eilenberg cohomology of $\mathfrak{g}$ with values in $\Lambda_R^m \mathfrak{g}$. Therefore, the cohomological constructions of Lie bialgebra and coboundary Lie bilagebra hold for the reduced space $\Lambda_R^m \mathfrak{g}$ as well.
	
	\begin{lemma}\label{reduced_action}
		The pair  $(\Lambda^m_R\mathfrak{g},\, \sigma: v\in \mathfrak{g} \mapsto \sigma_v:=[[v], \cdot]_R \in \mathfrak{gl}(\Lambda^m_R \mathfrak{g}))$ is a $\mathfrak{g}$-module and $\Psi:T\in {\rm Aut}(\mathfrak{g})\mapsto [\Lambda^mT]\in { GL}(\Lambda^m_R\mathfrak{g})$, with $[\Lambda^mT]([w]):=[\Lambda^mT(w)]$ for every $w\in \Lambda^m\mathfrak{g}$, is a Lie group action.
	\end{lemma}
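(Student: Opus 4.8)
The plan is to verify the two claimed structures separately, since each is essentially a routine transport-of-structure argument that relies on the compatibility facts already established. For the $\mathfrak{g}$-module claim, I would first check that $\sigma$ is well-defined: by Proposition \ref{prop:reduced} the reduced Schouten bracket $[[v],\cdot]_R$ does not depend on the representative and maps $\Lambda^m_R\mathfrak{g}$ to $\Lambda^{1+m-1}_R\mathfrak{g}=\Lambda^m_R\mathfrak{g}$, so each $\sigma_v$ is indeed an endomorphism of $\Lambda^m_R\mathfrak{g}$. Linearity of $v\mapsto\sigma_v$ is inherited from bilinearity of $[\cdot,\cdot]_R$. The only genuine point is the homomorphism property $\sigma_{[v,w]}=[\sigma_v,\sigma_w]$; I would deduce this from the graded Jacobi identity for $[\cdot,\cdot]_R$ stated in Proposition \ref{prop:reduced}, specialised to degrees $p=q=1$, where all the sign factors $(-1)^{(p-1)(q-1)}$ collapse to $1$, so that the graded Jacobi identity becomes the ordinary Jacobi identity. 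Concretely, for $[u]\in\Lambda^m_R\mathfrak{g}$ one writes out $[[[v],[w]]_R,[u]]_R$ and rearranges using Jacobi to get $[[v],[[w],[u]]_R]_R-[[w],[[v],[u]]_R]_R$, i.e. $\sigma_{[v,w]}[u]=\sigma_v\sigma_w[u]-\sigma_w\sigma_v[u]$; alternatively, one can transport this directly from the fact that $(\Lambda\mathfrak{g},\mathrm{ad})$ is already a $\mathfrak{g}$-module (Proposition \ref{Prop:grass_gmod} and the remark following it) via the projection $\pi$, using $\pi([a,b]_S)=[\pi(a),\pi(b)]_R$.

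For the Lie group action claim, I would first show $\Psi$ is well-defined, i.e. that $[\Lambda^mT]([w]):=[\Lambda^mT(w)]$ does not depend on the representative $w$. This is exactly where Proposition \ref{prop:lgg} enters: $(\Lambda^m\mathfrak{g})^{\mathfrak{g}}$ is $\mathrm{Aut}_c(\mathfrak{g})$-invariant, and in fact invariant under all of $\mathrm{Aut}(\mathfrak{g})$ — one checks $\Lambda^mT$ preserves $(\Lambda^m\mathfrak{g})^{\mathfrak{g}}$ for every $T\in\mathrm{Aut}(\mathfrak{g})$ by the same one-line computation $[v,\Lambda^mTw]_S=\Lambda^mT[T^{-1}v,w]_S=0$ used in the proof of Proposition \ref{prop:lgg}, which does not actually use connectedness. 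Hence if $w-w'\in(\Lambda^m\mathfrak{g})^{\mathfrak{g}}$ then $\Lambda^mT(w)-\Lambda^mT(w')\in(\Lambda^m\mathfrak{g})^{\mathfrak{g}}$, so $[\Lambda^mT(w)]=[\Lambda^mT(w')]$ and $[\Lambda^mT]$ is a well-defined linear map on $\Lambda^m_R\mathfrak{g}$. That it is invertible follows from $[\Lambda^mT]\circ[\Lambda^mT^{-1}]=[\Lambda^m(TT^{-1})]=[\mathrm{id}]$, so $[\Lambda^mT]\in GL(\Lambda^m_R\mathfrak{g})$; and $\Psi(TS)=[\Lambda^m(TS)]=[\Lambda^mT\circ\Lambda^mS]=[\Lambda^mT]\circ[\Lambda^mS]=\Psi(T)\Psi(S)$ gives the group homomorphism property, using the functoriality $\Lambda^m(TS)=\Lambda^mT\circ\Lambda^mS$ recorded before Proposition \ref{proporb}. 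Smoothness of the action is inherited from smoothness of $T\mapsto\Lambda^mT$ on $\mathrm{Aut}(\mathfrak{g})$ (itself a Lie group by Example \ref{ex:Lie_alg_aut}) composed with the smooth linear quotient projection $\pi_m$.

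I do not expect a serious obstacle here; the statement is a packaging result. The one place to be slightly careful is making explicit that $(\Lambda^m\mathfrak{g})^{\mathfrak{g}}$ is invariant under the \emph{full} automorphism group $\mathrm{Aut}(\mathfrak{g})$, not merely the identity component $\mathrm{Aut}_c(\mathfrak{g})$ which is all that Proposition \ref{prop:lgg} literally asserts — but as noted the same computation works verbatim for any $T\in\mathrm{Aut}(\mathfrak{g})$ since it only uses that $T$ is a Lie algebra automorphism (so $T$ and $T^{-1}$ intertwine the Schouten bracket). Everything else is bilinearity, well-definedness of the reduced bracket (already in Proposition \ref{prop:reduced}), and functoriality of $\Lambda^m$. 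I would therefore present the proof as: (i) $\sigma$ well-defined and linear; (ii) $\sigma$ a Lie algebra homomorphism via the degree-$(1,1)$ case of graded Jacobi, or by transport along $\pi$; (iii) $\mathrm{Aut}(\mathfrak{g})$-invariance of $(\Lambda^m\mathfrak{g})^{\mathfrak{g}}$, hence $\Psi$ well-defined into $GL(\Lambda^m_R\mathfrak{g})$; (iv) $\Psi$ a homomorphism and smooth, hence a Lie group action.
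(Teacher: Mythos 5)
Your proposal is correct and follows essentially the same route as the paper: well-definedness of $\sigma$ from Proposition \ref{prop:reduced}, the homomorphism property from the degree-$(p,q)=(1,1)$ case of the graded Jacobi identity, well-definedness of $\Psi$ from the invariance of $(\Lambda^m\mathfrak{g})^{\mathfrak{g}}$ under $\Lambda^mT$, and the action property from $\Lambda^m(T_1\circ T_2)=\Lambda^mT_1\circ\Lambda^mT_2$. Your remark that the invariance computation in Proposition \ref{prop:lgg} works verbatim for all of ${\rm Aut}(\mathfrak{g})$, not just ${\rm Aut}_c(\mathfrak{g})$, is a point the paper glosses over and is worth making explicit.
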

	\begin{proof}
		Let us show that $(\Lambda^m_R\mathfrak{g},\sigma)$ is a $\mathfrak{g}$-module. Since the reduced bracket is well defined, $\sigma$ is well defined. As $(\Lambda^m_R\mathfrak{g}, [\cdot, \cdot]_R)$ admits a Gerstenhaber algebra structure, the reduced Schouten bracket satisfies the graded Jacobi identity, namely
  $$
 (-1)^{(p-1)(s-1)} [[u], [[v], [w]]_{R}] _R+ (-1)^{(q-1)(p-1)} [[v], [[w], [u]]_{R}]_R + (-1)^{(s-1)(q-1)} [[w], [[u], [v]]_{R}]_R = 0.
  $$
for any $u \in \Lambda^p \mathfrak{g}, v \in \Lambda^q \mathfrak{g}, w \in \Lambda^s \mathfrak{g}$. If $p = q = 1$, the above identity can be written as 
  $$
  [[[u],[v]]_R, [w]]_R = [[u],[[v],[w]]_R]_R-[[v],[[u],[w]]_R]_R
  $$
Using the map $\sigma$, it reads
		$
		\sigma_{w_1}\sigma_{w_1'}([w_s])-\sigma_{w_1'}\sigma_{w_1}([w_s])=\sigma_{[w_1,w_1']}([w_s]).
		$
  Thus, $\sigma$ is a Lie algebra homomorphism.
		
		Let us show that $T\in{\rm Aut}(\mathfrak{g})$ acts on $\Lambda^m_R\mathfrak{g}$ via $\Psi$. First, let us verify that $\Psi(T)$ is well defined. This amounts to proving that $[\Lambda^mT]([w])=[\Lambda^mT]([w'])$ for all $w,w'\in [w]$. We have $\Lambda^mT(w)=\Lambda^mT(w-w'+w')=\Lambda^mT(w-w')+\Lambda^mT(w')$. Since $w - w' \in (\Lambda^2 \mathfrak{g})^{\mathfrak{g}}$ and $\Lambda^mT(\Lambda^m\mathfrak{g})^\mathfrak{g}\subset (\Lambda^m\mathfrak{g})^\mathfrak{g}$ due to Proposition \ref{prop:lgg}, we get $[\Lambda^mT(w)]=[\Lambda^mT(w')]$. Thus, $[\Lambda^mT]([w])=[\Lambda^mT(w)]=[\Lambda^mT(w')]=[\Lambda^mT]([w'])$.

  Obviously, the identity ${\rm id} \in {\rm Aut}(\mathfrak{g})$ gives $\Lambda^m {\rm id} = {\rm id}_{\Lambda^m \mathfrak{g}}$. Thus, $\Psi({\rm id}) = {\rm id}_{\Lambda_R^m \mathfrak{g}}$. Next, let $T_1, T_2 \in {\rm Aut}(\mathfrak{g})$. We want to show that $\Psi(T_1 \circ T_2) = \Psi(T_1) \circ \Psi(T_2)$. By the definition of $\Lambda^m T_i$ (see proposition \ref{Prop:grass_gmod}), it is immediate to verify that $\Lambda^m (T_1 \circ T_2) = \Lambda^m T_1 \circ \Lambda^m T_2$. Then,
  \begin{equation*}
  \begin{split}
  \Psi(T_1 \circ T_2)([w]) &= [\Lambda^m (T_1 \circ T_2)]([w]) = [\Lambda^m (T_1 \circ T_2)(w)] = [\Lambda^m T_1(\Lambda^m T_2 (w))] = [\Lambda^m T_1]([\Lambda^m T_2 (w)]) \\
  &= ([\Lambda^m T_1] \circ [\Lambda^m T_2])([w])
  \end{split}
  \end{equation*}
  This hshows that $\Psi$ is indeed a Lie group action.
	\end{proof}
	
Under certain conditions, one can obtain $\mathfrak{g}$-invariant maps on $\Lambda^m_R\mathfrak{g}$ out of $\mathfrak{g}$-invariant maps on $\Lambda^m\mathfrak{g}$, as the following proposition shows.
	
	\begin{proposition}\label{RedgInv} If $b:(\Lambda^m\mathfrak{g})^k\rightarrow \mathbb{R}$ is a $\mathfrak{g}$-invariant $k$-linear map and its kernel contains $(\Lambda^m\mathfrak{g})^\mathfrak{g}$, then there is a $\mathfrak{g}$-invariant $k$-linear map $b_R$ on $\Lambda_R^m\mathfrak{g}$ given by
	$b_R([w_1],\ldots,[w_k]):=b(w_1,\ldots,w_k)$, for all $ w_1,\ldots,w_k\in \Lambda^m\mathfrak{g}$.
	\end{proposition}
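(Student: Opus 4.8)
The plan is to verify the two natural requirements for the candidate map $b_R$ to be a well-defined $\mathfrak{g}$-invariant $k$-linear map on $\Lambda^m_R\mathfrak{g}$: first, that the formula $b_R([w_1],\ldots,[w_k]):=b(w_1,\ldots,w_k)$ does not depend on the choice of representatives $w_i\in[w_i]$, and second, that the resulting map is $k$-linear and satisfies the $\mathfrak{g}$-invariance identity \eqref{ginva} relative to the $\mathfrak{g}$-module structure $(\Lambda^m_R\mathfrak{g},\sigma)$ established in Lemma \ref{reduced_action}.

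First I would prove well-definedness. Suppose $w_i,\bar w_i\in\Lambda^m\mathfrak{g}$ with $[w_i]=[\bar w_i]$ for each $i$, i.e. $w_i-\bar w_i\in(\Lambda^m\mathfrak{g})^\mathfrak{g}$. Writing $w_i=\bar w_i+(w_i-\bar w_i)$ and expanding $b(w_1,\ldots,w_k)$ by $k$-linearity, every term other than $b(\bar w_1,\ldots,\bar w_k)$ contains at least one entry lying in $(\Lambda^m\mathfrak{g})^\mathfrak{g}\subseteq\ker b$, hence vanishes. (One should note $\ker b$ here means the set of $u$ such that $b(\ldots,u,\ldots)=0$ in that slot for all choices of the other entries; the hypothesis ``its kernel contains $(\Lambda^m\mathfrak{g})^\mathfrak{g}$'' is meant in this slot-wise sense, which is the natural reading and is exactly what makes the expansion collapse.) Therefore $b(w_1,\ldots,w_k)=b(\bar w_1,\ldots,\bar w_k)$ and $b_R$ is well defined.

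Next I would check $k$-linearity and invariance. Linearity of $b_R$ in each slot is inherited immediately from that of $b$ together with the fact that the projection $\pi_m:\Lambda^m\mathfrak{g}\to\Lambda^m_R\mathfrak{g}$ is linear and surjective: given $[w_i]=\pi_m(w_i)$, linear combinations in $\Lambda^m_R\mathfrak{g}$ lift to linear combinations in $\Lambda^m\mathfrak{g}$, and $b_R$ evaluated on them equals $b$ evaluated on the lifts. For $\mathfrak{g}$-invariance, fix $v\in\mathfrak{g}$ and representatives $w_i\in\Lambda^m\mathfrak{g}$. By definition of $\sigma$ in Lemma \ref{reduced_action}, $\sigma_v([w_i])=[[v],[w_i]]_R=[[v,w_i]_S]$, so that
\begin{equation*}
\sum_{i=1}^k b_R([w_1],\ldots,\sigma_v([w_i]),\ldots,[w_k])=\sum_{i=1}^k b_R([w_1],\ldots,[[v,w_i]_S],\ldots,[w_k])=\sum_{i=1}^k b(w_1,\ldots,[v,w_i]_S,\ldots,w_k).
\end{equation*}
Since $[v,w_i]_S=[v,\cdot]_S(w_i)$ is precisely the action of $v$ on $\Lambda^m\mathfrak{g}$ via the $\mathfrak{g}$-module $(\Lambda^m\mathfrak{g},\mathrm{ad})$, the $\mathfrak{g}$-invariance of $b$ (in the sense of Definition \ref{ginv}, equation \eqref{ginva}) makes this last sum vanish. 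Hence $b_R$ is $\mathfrak{g}$-invariant relative to $(\Lambda^m_R\mathfrak{g},\sigma)$.

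The argument is essentially routine; the only point requiring care — and the place where I expect the ``main obstacle'' to lie, though it is more a matter of bookkeeping than of depth — is making precise the hypothesis on $\ker b$ for $k>1$, since a $k$-linear map does not have a single kernel subspace. The clean way to phrase it, which I would adopt, is to say that $(\Lambda^m\mathfrak{g})^\mathfrak{g}$ lies in the slot-wise radical of $b$, i.e. $b(w_1,\ldots,w_k)=0$ whenever some $w_j\in(\Lambda^m\mathfrak{g})^\mathfrak{g}$; with this reading the expansion in the well-definedness step is immediate. For the symmetric bilinear case ($k=2$), which is the one actually used later in the classification, this reduces to the familiar notion of kernel of a bilinear form and no such subtlety arises. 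One may also remark, as a corollary of Proposition \ref{prop:glrho_ginv}, that $b_R$ is then automatically $GL(\sigma)$-invariant, and by Lemma \ref{reduced_action} invariant under the induced $\mathrm{Aut}(\mathfrak{g})$-action on $\Lambda^m_R\mathfrak{g}$ whenever $b$ was $\mathrm{Aut}(\mathfrak{g})$-invariant on $\Lambda^m\mathfrak{g}$.
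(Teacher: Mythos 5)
Your proof is correct and follows essentially the same route as the paper's: well-definedness via the multilinear expansion collapsing on the kernel condition, followed by transporting the $\mathfrak{g}$-invariance of $b$ through the projection. The paper merely states the invariance step as immediate where you spell it out via the $\sigma$-action of Lemma \ref{reduced_action}, and your slot-wise reading of the kernel hypothesis is the intended one.
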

	\begin{proof} 
 First, let us verify that the map $b_R$ is well defined. If $w'_i\in [w_i]$ for $i=1 ,\ldots,k$, then $w_i-w'_i\in (\Lambda^m\mathfrak{g})^{\mathfrak{g}}$.  Since the kernel of $b$ contains $(\Lambda^m\mathfrak{g})^{\mathfrak{g}}$, one has 
\begin{equation*}
\begin{split}
b_R([w_1],\ldots,[w_k]) &= b(w_1,\ldots,w_k) = b((w_1-w_1')+w_1', \ldots, (w_k-w_k')+w_k') = b(w_1',\ldots,w'_k) \\
&= b_R([w_1'],\ldots,[w_k'])
\end{split}
\end{equation*}
		and  $b_R$ does not depend on the representative of each particular equivalence class of $\Lambda^m_R\mathfrak{g}$. 
		The $\mathfrak{g}$-invariance of $b_R$ stems immediately from the $\mathfrak{g}$-invariance of $b$. 
	\end{proof}

	\section{On the automorphisms of a Lie algebra}\label{R:3DSec:Aut}
 
Further studies of equivalent $r$-matrices and Lie bialgebra structures require the knowledge of the group ${\rm Aut}(\mathfrak{g})$ of automorphisms of a Lie algebra $\mathfrak{g}$. In this section, we briefly investigate $\textrm{Aut}(\mathfrak{g})$ and comment on relevant features of ${\rm Aut}(\mathfrak{g})$-invariant metrics on $\Lambda\mathfrak{g}$. These remarks will be useful in the remaining part of this chapter, which focuses on the classification of $r$-matrices for three-dimensional Lie algebras.

Let us begin our analysis with the study of the Lie algebra $\mathfrak{aut}(\mathfrak{g})$ of the group ${\rm Aut}(\mathfrak{g})$. For that purpose, recall a relevant property of Lie algebra automorphisms \cite{SW73}.
		
	\begin{proposition}\label{aut} 
	If $\mathfrak{der}(\mathfrak{g})$ is the Lie algebra of derivations of $\mathfrak{g}$, then $\mathfrak{der}(\mathfrak{g}) \simeq \mathfrak{aut}(\mathfrak{g})$.
	\end{proposition}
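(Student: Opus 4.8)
The plan is to realize both $\mathfrak{aut}(\mathfrak{g})$ and $\mathfrak{der}(\mathfrak{g})$ as the same subspace of $\mathfrak{gl}(\mathfrak{g})$ and then check that the two bracket structures coincide. Recall from Example \ref{ex:Lie_alg_aut} that ${\rm Aut}(\mathfrak{g})$ is a Lie group embedded in $GL(\mathfrak{g})$, so $\mathfrak{aut}(\mathfrak{g})\simeq {\rm T}_{{\rm id}_\mathfrak{g}}{\rm Aut}(\mathfrak{g})$ may be viewed concretely as the set of $D\in\mathfrak{gl}(\mathfrak{g})$ for which the one-parameter subgroup $t\mapsto \exp(tD)$ takes values in ${\rm Aut}(\mathfrak{g})$.

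First I would prove the inclusion $\mathfrak{aut}(\mathfrak{g})\subseteq\mathfrak{der}(\mathfrak{g})$. If $D\in\mathfrak{aut}(\mathfrak{g})$, then $T_t:=\exp(tD)$ satisfies $T_t([x,y])=[T_tx,T_ty]$ for all $t\in\mathbb{R}$ and $x,y\in\mathfrak{g}$; differentiating at $t=0$ and applying the Leibniz rule for the derivative of the bilinear map $[\cdot,\cdot]$ gives $D([x,y])=[Dx,y]+[x,Dy]$, i.e. $D$ is a derivation. For the reverse inclusion $\mathfrak{der}(\mathfrak{g})\subseteq\mathfrak{aut}(\mathfrak{g})$, I would take a derivation $D$ and establish, by induction on $n$, the finite Leibniz identity $D^n([x,y])=\sum_{k=0}^n\binom{n}{k}[D^kx,D^{n-k}y]$. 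Multiplying by $t^n/n!$, summing over $n$, and recognizing a Cauchy product then yields $\exp(tD)([x,y])=[\exp(tD)x,\exp(tD)y]$ for every $t$, so $\exp(tD)\in{\rm Aut}(\mathfrak{g})$ and hence $D\in\mathfrak{aut}(\mathfrak{g})$.

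These two inclusions identify $\mathfrak{aut}(\mathfrak{g})$ and $\mathfrak{der}(\mathfrak{g})$ as the same linear subspace of $\mathfrak{gl}(\mathfrak{g})$. To conclude, I would observe that $\mathfrak{der}(\mathfrak{g})$ is closed under the commutator of $\mathfrak{gl}(\mathfrak{g})$ — a direct check shows $[D_1,D_2]=D_1D_2-D_2D_1$ is again a derivation whenever $D_1,D_2$ are — and that this commutator is precisely the bracket $\mathfrak{aut}(\mathfrak{g})$ carries as the Lie algebra of a Lie subgroup of $GL(\mathfrak{g})$. Thus the identification of underlying vector spaces is an isomorphism of Lie algebras. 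The only point requiring care is the interchange of summation (the Cauchy product) in the second inclusion, which is justified by absolute convergence of the exponential series in the finite-dimensional algebra $\mathfrak{gl}(\mathfrak{g})$; the remaining steps are formal.
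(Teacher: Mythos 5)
Your proof is correct and complete. The paper itself does not prove this proposition: it is stated as a recalled fact with a citation to the literature, so there is no argument in the text to compare yours against. What you give is the standard argument, and all three steps — differentiating the automorphism identity to get the Leibniz rule, exponentiating a derivation via the binomial Leibniz formula and the Cauchy product, and checking that both brackets are the commutator inherited from $\mathfrak{gl}(\mathfrak{g})$ — are sound. The one point worth making explicit is your opening identification of $\mathfrak{aut}(\mathfrak{g})$ with $\{D\in\mathfrak{gl}(\mathfrak{g}) : \exp(tD)\in{\rm Aut}(\mathfrak{g})\ \forall t\}$: the inclusion of the Lie algebra into this set is automatic, but the reverse inclusion (which you use implicitly when concluding $D\in\mathfrak{aut}(\mathfrak{g})$ from $\exp(tD)\in{\rm Aut}(\mathfrak{g})$) relies on ${\rm Aut}(\mathfrak{g})$ being a closed, hence embedded, subgroup of $GL(\mathfrak{g})$. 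This holds because ${\rm Aut}(\mathfrak{g})$ is cut out by the polynomial conditions $T([e_i,e_j])=[Te_i,Te_j]$ on a basis, so Cartan's closed subgroup theorem applies and the two descriptions of the tangent space at the identity agree.
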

	
 Derivations of $\mathfrak{g}$ are elements $D\in \mathfrak{gl}(\mathfrak{g})$ such that $D([v_1,v_2])\!=\![D(v_1),v_2]+[v_1,D(v_2)]$ for any $v_1,v_2\in \mathfrak{g}$ (see \cite{FH91}). This condition can be solved via computer programs even for relatively high-dimensional Lie algebras and thus, one can easily determine $\mathfrak{aut}(\mathfrak{g})$. In consequence, it is also possible to determine ${\rm Aut}_c (\mathfrak{g})$, the connected part of the neutral element of ${\rm Aut}(\mathfrak{g})$, by using the exponential map. 

  However, the determination of the other connected parts of ${\rm Aut}(\mathfrak{g})$ might not be straightforward, especially when $\mathfrak{g}$ is not complex and semi-simple  \cite{Ha15,Jacobson}. Let us consider the real Lie algebra $\mathfrak{sl}_2$ to illustrate the possible difficulties. The Killing form $\kappa_{\mathfrak{sl}_2}$, given by (\ref{sl2A}) in the basis $\{e_1,e_2,e_3\}$ indicated in Table \ref{tabela3w}, is indefinite with signature $(2,1)$. The $\kappa_{\mathfrak{sl}_2}$ induces a quadratic function  $f(xe_1+ye_2+ze_3)=2x^2+4yz$ on $\mathfrak{sl}_2$. Its level set $S_k$ consists of points $(x, y, z)$, where $2x^2+4yz=k$. If $k<0$, then $S_k$ is a two-sheeted hyperboloid contained in the region of $\mathfrak{sl}_2$ with $z>0$ or in the region of $\mathfrak{sl}_2$ with $z<0$.
	The elements of ${\rm Aut}(\mathfrak{sl}_2)$ are isometries of $\kappa_{\mathfrak{sl}_2}$ \cite{FH91}. Thus, the elements of ${\rm Aut}_c(\mathfrak{sl}_2)$ leave invariant each component of each two-sheeted hyperboloid. However, notice that $T \in {\rm Aut}(\mathfrak{sl}_2)$ given by $T(e_1)=-e_1$, $T(e_2)=-e_3$, and $T(e_3)=-e_2$ does not preserve the sign of  $z$. Consequently, it swaps connected parts of each two-sheeted hyperbolid and $T\notin  {\rm Aut}_c(\mathfrak{sl}_2)$. Hence, ${\rm Aut}(\mathfrak{sl}_2)$ is not connected. 

 The above example shows that the analysis of ${\rm Aut}(\mathfrak{g})$ is a complex task. In consequence, its effectiveness in the classification problem of Lie bialgebras is relatively restricted (e.g. in \cite{FJ15}, the real $\mathfrak{sl}_2$ case is studied only up to inner automorphisms). On the other hand, the ${\rm Aut}(\mathfrak{g})$-invariant bilinear symmetric maps are easier to obtain. For instance, Proposition \ref{prop:glrho_ginv} assures such maps must be $\mathfrak{der}(\mathfrak{g})$-invariant. This in turn is equivalent to the ${\rm Aut}_c (\mathfrak{g})$-invariance of these maps. In practical applications, the Killing forms and Killing-type forms are mostly employed. These forms are known to be invariant relative to the whole group ${\rm Aut}(\mathfrak{g})$. Consequently, the induced forms on $\Lambda \mathfrak{g}$ inherit the invariance property. This fact follows from the next result, which is a straightforward generalisation of Proposition \ref{ExtKil}.
	
	\begin{theorem}\label{ExtInvariance} 
 If $b$ is a $k$-linear map on $\mathfrak{g}$ invariant under ${\rm Aut}(\mathfrak{g})$, $b_{\Lambda^m\mathfrak{g}}$ is invariant under  ${\rm Aut}(\mathfrak{g})$.  
	\end{theorem}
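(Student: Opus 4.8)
The plan is to mirror the proof of Proposition \ref{ExtKil}, simply carrying $k$ arguments through the construction instead of the two arguments of the Killing form. Since both $\Lambda^m T$ and $b_{\Lambda^m\mathfrak{g}}$ are multilinear and $\Lambda^m\mathfrak{g}$ is spanned by the decomposable elements $v_J = v_{J(1)} \wedge \cdots \wedge v_{J(m)}$, it suffices to check the identity $b_{\Lambda^m\mathfrak{g}}(\Lambda^m T v_{J_1}, \ldots, \Lambda^m T v_{J_k}) = b_{\Lambda^m\mathfrak{g}}(v_{J_1}, \ldots, v_{J_k})$ on such elements, for every $T \in {\rm Aut}(\mathfrak{g})$ and $m \in \mathbb{Z}$. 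Recall that $b_{\Lambda^m\mathfrak{g}}$ is already known to be well defined by Theorem \ref{extension}, i.e.\ formula \eqref{Lambdag} is independent of the chosen ordering within each $v_{J_s}$, so this computation is unambiguous; also $\Lambda^m T = T^{\otimes m}|_{\Lambda^m\mathfrak{g}}$ preserves $\Lambda^m\mathfrak{g}$.

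For $m \geq 1$ one has $\Lambda^m T v_{J_s} = (T v_{J_s(1)}) \wedge \cdots \wedge (T v_{J_s(m)})$, so evaluating $b_{\Lambda^m\mathfrak{g}}$ on these elements via \eqref{Lambdag} amounts to replacing each entry $v_{J_s(i)}$ by $T v_{J_s(i)}$, giving
\begin{equation*}
b_{\Lambda^m\mathfrak{g}}(\Lambda^m T v_{J_1}, \ldots, \Lambda^m T v_{J_k}) = \!\!\!\!\!\!\sum_{\sigma_1, \ldots, \sigma_k \in S_m}\!\!\!\!\!\! {\rm sg}(\sigma_1 \cdots \sigma_k) \frac{1}{m!} \prod_{r=1}^m b\big( T v_{J_1(\sigma_1^{-1}(r))}, \ldots, T v_{J_k(\sigma_k^{-1}(r))}\big).
\end{equation*}
Because $b$ is invariant under ${\rm Aut}(\mathfrak{g})$, each factor equals $b\big( v_{J_1(\sigma_1^{-1}(r))}, \ldots, v_{J_k(\sigma_k^{-1}(r))}\big)$, and the right-hand side collapses to the defining expression for $b_{\Lambda^m\mathfrak{g}}(v_{J_1}, \ldots, v_{J_k})$, exactly as in \eqref{relatt}. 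The case $m = 0$ is immediate, since $\Lambda^0 T$ is the identity on $\Lambda^0\mathfrak{g} = \mathbb{R}$ and $b_{\Lambda^0\mathfrak{g}}$ is fixed by condition~2 of Theorem \ref{extension}; for $m < 0$ everything vanishes. Multilinearity of $b_{\Lambda^m\mathfrak{g}}$ together with the spanning property of the decomposable elements then upgrades the identity to all of $\Lambda^m\mathfrak{g}$, which is the claim.

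I do not anticipate a real obstacle: the only point deserving care is the routine bookkeeping that the action of $\Lambda^m T$ on a decomposable element corresponds precisely to the entrywise substitution $v_{J_s(i)} \mapsto T v_{J_s(i)}$ inside \eqref{Lambdag}, combined with the already-established independence of \eqref{Lambdag} from the representatives — this is exactly what makes the statement a straightforward generalisation of Proposition \ref{ExtKil} from $k = 2$ to arbitrary $k$. Note also that, in contrast with the second half of Proposition \ref{ExtKil}, no exponential-map or connectedness argument is required here, since we only assert invariance under the full group ${\rm Aut}(\mathfrak{g})$ rather than the infinitesimal $\mathfrak{aut}(\mathfrak{g})$-invariance; the latter would follow, if needed, from Proposition \ref{prop:glrho_ginv} applied to the connected component ${\rm Aut}_c(\mathfrak{g})$.
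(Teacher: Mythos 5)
Your proposal is correct and follows precisely the route the paper intends: Theorem \ref{ExtInvariance} is stated there without a separate proof, as a ``straightforward generalisation'' of Proposition \ref{ExtKil}, and your argument is exactly that generalisation --- evaluate \eqref{Lambdag} on decomposable elements, substitute $v_{J_s(i)} \mapsto T v_{J_s(i)}$, invoke the ${\rm Aut}(\mathfrak{g})$-invariance of $b$ factorwise, and extend by multilinearity, with the $m \leq 0$ cases handled trivially. Your closing remark that no exponential-map or connectedness argument is needed here is also accurate.
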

	
	The ${\rm Aut}(\mathfrak{g})$-invariance of the induced $b_{\Lambda^m \mathfrak{g}}$ plays a crucial role in the study of equivalent $r$-matrices. Consider the level sets, $S_k := \{v \in \Lambda^m \mathfrak{g}: b_{\Lambda^m\mathfrak{g}}(v,\ldots, v) = k \in \mathbb{R} \}$. Since $b_{\Lambda^m \mathfrak{g}}$ is ${\rm Aut}(\mathfrak{g})$-invariant, $b_{\Lambda^m\mathfrak{g}}(\Lambda^m T v,\ldots, \Lambda^m T v) = b_{\Lambda^m\mathfrak{g}}(v,\ldots, v)$ for any $T \in {\rm Aut}(\mathfrak{g})$ and $v \in S_k$. Thus, a given orbit of ${\rm Aut}(\mathfrak{g})$ on $\Lambda^m\mathfrak{g}$ is contained in a single level set $S_k$. This observation greatly simplifies our analysis of equivalent elements of $\lambda^m \mathfrak{g}$. Although these orbits do not need to be connected, it is ensured that only the orbits within the same level set $S_k$ could contain equivalent elements. In consequence, we can restrict our analysis of ${\rm Aut}(\mathfrak{g})$-action to each $S_k$. 
 
 Given a level set $S_k$, we first investigate the action of the group ${\rm Inn}(\mathfrak{g})$ of inner automorphisms of $\mathfrak{g}$. Inner automorphisms consist a subgroup of ${\rm Aut}(\mathfrak{g})$ that is relatively easy to compute. Moreover, it informs about the connected components of ${\rm Aut}(\mathfrak{g})$. Then, it remains to find such automorphism $T \in {\rm Aut}(\mathfrak{g})$ that identifies different orbits of ${\rm Inn}(\mathfrak{g})$ within the same $S_k$. Since each $S_k$ usually contains a small number of such orbits, this task can be achieved by hand. 
	
In order to perform the last step of the procedure described above, let us now provide a few hints to characterise automorphisms of Lie algebras. 
If $\mathfrak{g}$ is a complex simple or semi-simple Lie algebra, then the quotient group ${\rm Aut}(\mathfrak{g}) / {\rm Inn}(\mathfrak{g})$ is isomorphic to the automorphisms of Dynkin diagrams of $\mathfrak{g}$ \cite{Ha15,Jacobson}. Meanwhile, automorphisms of general Lie algebras cannot be determined so easily. Let us end this section with few remarks on automorphisms of solvable and nilpotent Lie algebras.
	
Recall that the \textit{derived series} of a Lie algebra $\mathfrak{g}$ are the sequence of ideals defined recurrently by
	$
	\mathfrak{g}^{p)}:=[\mathfrak{g}^{p-1)},\mathfrak{g}^{p-1)}]$ for all $p\in \mathbb{N}$  and $ \mathfrak{g}^{0)}=\mathfrak{g}$ (see  \cite{Ha15}). It follows by induction that
$T\mathfrak{g}^{p)}=\mathfrak{g}^{p)}$  for every $p\in \mathbb{N} \cup \{0\}$ and $T\in {\rm Aut}(\mathfrak{g})$. A similar result applies to  lower central series.  If $\mathfrak{g}$ is solvable, the elements of ${\rm Aut}(\mathfrak{g})$ leave invariant the elementary sequence
	$
	\mathfrak{s}_{pq}:=\mathfrak{g}_{p)}\wedge \mathfrak{g}_{q)}$, $ p\leq q,p,q\in \mathbb{N}\cup \{0\}.
	$
	If $\mathfrak{s}_{pq}\neq 0$, then $\mathfrak{s}_{pq} \supset \mathfrak{s}_{lm}$ if and only if $p\leq l$ and $q\leq m$. Similar results apply to 
	$
	\mathfrak{s}^{pq}:=\mathfrak{g}^{p)}\wedge \mathfrak{g}^{q)},  p\leq q,
	$ 
	for $p,q\in \mathbb{N}\cup \{0\}$. 
	
	\section{Study of real three-dimensional coboundary Lie bialgebras}\label{Ch:alg_Sec:3DClass}
	
	In this section, we use previously devised methods to analyse and to classify, up to Lie algebra automorphisms, coboundary Lie bialgebra structures on real three-dimensional Lie algebras. In each case, the following general scheme is used. First, we employ gradations to obtain $\mathfrak{g}$-invariant elements. If possible, we perform further analysis on the reduced space $\Lambda^2_R \mathfrak{g}$. After computing the $\mathfrak{g}$-invariant symmetric forms $b_{\Lambda^2 \mathfrak{g}}$, the orbits of ${\rm Inn}(\mathfrak{g})$-action on level sets $S_k$ of $b_{\Lambda^2 \mathfrak{g}}$ are obtained. Finally, we find automorphisms that connect separate connected orbits within each $S_k$. In the last step, we avoid using the full automorphism group in the classification of Lie bialgebras  (cf. \cite{FJ15}). Our results are gathered in Table \ref{tabela3w}.

For each Lie algebra $\mathfrak{g}$, its basis will be hereafter denoted by $\{e_1,e_2,e_3\}$ with commutation relations given in Table \ref{tabela3w}. We also choose the induced bases $\{e_{12},e_{13},e_{23}\}$ and $\{e_{123}\}$ in $\Lambda^2\mathfrak{sl}_2$ and $\Lambda^3\mathfrak{sl}_{2}$, respectively.
	
\subsection{General properties}

In order to facilitate further analysis of coboundary Lie bialgebra structures on three-dimensional Lie algebras, let us prove a few additional results on the characterisation of $(\Lambda^m\mathfrak{g})^\mathfrak{g}$ and ${\rm Aut}(\mathfrak{g})$. For the remaining part of this section, we assume that $\dim \mathfrak{g}=3$. 
	
	\begin{proposition}\label{Prop:Aut3} 
		Let $\mathfrak{g}$ be such that $\kappa_{\mathfrak{g}}\neq 0$ and $\mathfrak{g}_{1)}\subset \ker \kappa_{\mathfrak{g}}$ is a two-dimensional abelian Lie subalgebra. If $v\notin \mathfrak{g}_{1)}$, then every $T \in {\rm Aut}(\mathfrak{g})$ leaves invariant the set of eigenvectors of ${\rm ad}_v|_{\mathfrak{g}_{1)}}$.
	\end{proposition}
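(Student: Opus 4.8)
The plan is to exploit the ${\rm Aut}(\mathfrak{g})$-invariance of the derived ideal and the fact that $\kappa_{\mathfrak{g}}$ singles out $\mathfrak{g}_{1)}$ as its kernel. First I would note that every $T\in {\rm Aut}(\mathfrak{g})$ satisfies $T\mathfrak{g}_{1)}=\mathfrak{g}_{1)}$, since the derived series is preserved by automorphisms (as recalled in Section~\ref{R:3DSec:Aut}). Hence $T$ restricts to a linear automorphism of the two-dimensional space $\mathfrak{g}_{1)}$, and it also descends to an automorphism of the one-dimensional quotient $\mathfrak{g}/\mathfrak{g}_{1)}$; in particular, for $v\notin\mathfrak{g}_{1)}$ one has $Tv=\lambda v + u$ for some nonzero $\lambda\in\mathbb{R}$ and $u\in\mathfrak{g}_{1)}$.

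Next I would compare the operators ${\rm ad}_v|_{\mathfrak{g}_{1)}}$ and ${\rm ad}_{Tv}|_{\mathfrak{g}_{1)}}$. Using that $T$ is a Lie algebra automorphism, for $w\in\mathfrak{g}_{1)}$ one computes ${\rm ad}_{Tv}(Tw)=T({\rm ad}_v w)$, i.e. ${\rm ad}_{Tv}|_{\mathfrak{g}_{1)}} = T|_{\mathfrak{g}_{1)}}\circ {\rm ad}_v|_{\mathfrak{g}_{1)}}\circ (T|_{\mathfrak{g}_{1)}})^{-1}$, so these two endomorphisms of $\mathfrak{g}_{1)}$ are conjugate and the eigenvectors of ${\rm ad}_{Tv}|_{\mathfrak{g}_{1)}}$ are exactly the $T$-images of the eigenvectors of ${\rm ad}_v|_{\mathfrak{g}_{1)}}$. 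Then I would relate ${\rm ad}_{Tv}|_{\mathfrak{g}_{1)}}$ back to ${\rm ad}_v|_{\mathfrak{g}_{1)}}$ via the decomposition $Tv=\lambda v+u$: since $\mathfrak{g}_{1)}$ is abelian, ${\rm ad}_u|_{\mathfrak{g}_{1)}}=0$, whence ${\rm ad}_{Tv}|_{\mathfrak{g}_{1)}} = \lambda\,{\rm ad}_v|_{\mathfrak{g}_{1)}}$. Scaling an operator by a nonzero constant does not change its eigenvectors, so ${\rm ad}_{Tv}|_{\mathfrak{g}_{1)}}$ and ${\rm ad}_v|_{\mathfrak{g}_{1)}}$ have the same eigenvectors.

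Combining the two observations: $T$ maps the set of eigenvectors of ${\rm ad}_v|_{\mathfrak{g}_{1)}}$ bijectively onto the set of eigenvectors of ${\rm ad}_{Tv}|_{\mathfrak{g}_{1)}}$, which is the same as the set of eigenvectors of ${\rm ad}_v|_{\mathfrak{g}_{1)}}$. Therefore $T$ leaves this set invariant, which is the claim. The one point that needs a little care — and is the main obstacle — is ensuring that "the set of eigenvectors of ${\rm ad}_v|_{\mathfrak{g}_{1)}}$" is well posed and genuinely $T$-stable: one should check it is independent of the choice of $v\notin\mathfrak{g}_{1)}$ (which follows because changing $v$ modifies ${\rm ad}_v|_{\mathfrak{g}_{1)}}$ by a nonzero scalar, exactly as above, using again that $\mathfrak{g}_{1)}$ is abelian), and one should handle the degenerate case where ${\rm ad}_v|_{\mathfrak{g}_{1)}}$ is a scalar multiple of the identity — there every nonzero vector of $\mathfrak{g}_{1)}$ is an eigenvector, so the set is all of $\mathfrak{g}_{1)}\setminus\{0\}$, which is trivially $T$-invariant. (The hypotheses $\kappa_{\mathfrak{g}}\neq 0$ and $\mathfrak{g}_{1)}\subset\ker\kappa_{\mathfrak{g}}$ guarantee we are in the relevant family of three-dimensional solvable algebras where this eigenvector set is a meaningful invariant used in the subsequent classification.)
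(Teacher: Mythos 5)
Your argument is correct, and its skeleton coincides with the paper's: both proofs use that $T$ preserves $\mathfrak{g}_{1)}$, write $Tv=\lambda v+u$ with $\lambda\neq 0$ and $u\in\mathfrak{g}_{1)}$, combine the conjugation identity ${\rm ad}_{Tv}|_{\mathfrak{g}_{1)}}=T|_{\mathfrak{g}_{1)}}\circ{\rm ad}_v|_{\mathfrak{g}_{1)}}\circ (T|_{\mathfrak{g}_{1)}})^{-1}$ with ${\rm ad}_{Tv}|_{\mathfrak{g}_{1)}}=\lambda\,{\rm ad}_v|_{\mathfrak{g}_{1)}}$ (abelianness of $\mathfrak{g}_{1)}$), and conclude that $T$ sends eigenvectors to eigenvectors.

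The one genuine difference is where the Killing-form hypotheses enter. The paper uses $\kappa_{\mathfrak{g}}\neq 0$ and $\mathfrak{g}_{1)}\subset\ker\kappa_{\mathfrak{g}}$ to deduce $\kappa_{\mathfrak{g}}(Tv,Tv)=\lambda^2\kappa_{\mathfrak{g}}(v,v)$ and hence $\lambda\in\{\pm 1\}$ before comparing the two adjoint operators. You bypass this entirely by observing that rescaling an operator by any nonzero constant does not change its eigenvectors, so the precise value of $\lambda$ is irrelevant to the stated conclusion. This makes your proof marginally more general: it shows the eigenvector-set invariance holds for any two-dimensional abelian derived ideal in a three-dimensional Lie algebra, with no condition on $\kappa_{\mathfrak{g}}$. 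What the paper's extra step buys is the dichotomy $Tv\in\pm v+\mathfrak{g}_{1)}$, which is not needed for the proposition itself but is used immediately afterwards (to decide whether $T|_{\mathfrak{g}_{1)}}$ is diagonal or anti-diagonal in an eigenbasis, depending on whether the eigenvalues sum to zero). So if you intend to use the proposition the way the paper does in the classification of $\mathfrak{r}_{3,-1}$ and its relatives, you would still want to record the $\lambda=\pm 1$ refinement separately. Your closing remarks on well-posedness (independence of the choice of $v$ up to scaling, and the scalar case where every nonzero vector is an eigenvector) are correct and harmless, though not required for the statement as given, which fixes $v$.
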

	\begin{proof} 
		Let us prove that if $T\in {\rm Aut}(\mathfrak{g})$, then $Tv\in v+\mathfrak{g}_{1)}$ or $Tv\in -v+\mathfrak{g}_{1)}$ for every $v\notin \mathfrak{g}_{1)}$. Since $T\in {\rm Aut}(\mathfrak{g})$, one has that $\kappa_{\mathfrak{g}}(Tv,Tv)=\kappa_{\mathfrak{g}}(v,v)$. Since $v$ and $\mathfrak{g}_{1)}$ generate $\mathfrak{g}$ and $T$ is  injective, $Tv=\lambda v+h$ for an $h\in \mathfrak{g}_{1)}$ and $\lambda\in \mathbb{R}\backslash\{0\}$. As  $\mathfrak{g}_{1)}\subset \ker\kappa_\mathfrak{g}$, then $\kappa_\mathfrak{g}(Tv,Tv)=\lambda^2\kappa_\mathfrak{g}(v,v)$ for every $v\in \mathfrak{g}$. Since $\kappa_\mathfrak{g}\neq 0$ , one has that $\lambda\in \{\pm 1\}$. Hence, $Tv\in v+\mathfrak{g}_{1)}$ or $Tv\in -v+\mathfrak{g}_{1)}$. Since $\dim\mathfrak{g}_{1)}=2$ and $\mathfrak{g}_{1)}$ is an abelian ideal of $\mathfrak{g}$ invariant under ${\rm Aut}(\mathfrak{g})$, one obtains that 
		$
		{\rm ad}_{v}|_{\mathfrak{g}_{1)}}=\pm{\rm ad}_{Tv}|_{\mathfrak{g}_{1)}}=\pm T|_{\mathfrak{g}_{1)}}\circ{\rm ad}_{v}|_{\mathfrak{g}_{1)}}\circ T^{-1}|_{\mathfrak{g}_{1)}}.
		$
		Consequently, if $e$ is an eigenvector of ${\rm ad}_v|_{\mathfrak{g}_{1)}}$, then $Te$ is a new eigenvector of ${\rm ad}_v|_{\mathfrak{g}_{1)}}$.  
	\end{proof}

	Proposition \ref{Prop:Aut3} can be helpful to study  $T|_{\mathfrak{g}^{(1)}}$ in various scenarios. For instance, if ${\rm ad}_v|_{\mathfrak{g}^{1)}}$ has two eigenvectors $e_1,e_2$ with different eigenvalues $\lambda_1,\lambda_2$ satisfying that $\lambda_1 +\lambda_2 = 0$ and $Tv \in -v + \mathfrak{g}_{1)}$, then $T|_{\mathfrak{g}^{1)}}$ is an anti-diagonal matrix in the basis $\{e_1,e_2\}$. If $\lambda_1+\lambda_2 \neq 0$ and $\lambda_1\neq\lambda_2$, then $Tv \in v + \mathfrak{g}_{1)}$ and $T|_{\mathfrak{g}^{1)}}$ is diagonal. Several variations of this reasoning can be applied, e.g. when ${\rm ad}_v|_{\mathfrak{g}^{1)}}$ is triangular.

	\begin{proposition}\label{Prop:Aut2} 
		Let $\Omega\in (\Lambda^3\mathfrak{g}^*)\backslash \{0\}$ and assume that $\Upsilon:\Lambda^2\mathfrak{g}\ni r\mapsto \Omega ([r,r] )\in \mathbb{R}$ is a semi-definite function (i.e. a non-negative or non-positive function different than zero). Then, every automorphism of $\mathfrak{g}$ has positive determinant.
	\end{proposition}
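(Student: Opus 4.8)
The plan is to exploit the transformation behaviour of the cubic map $\Upsilon$ under automorphisms. First I would observe that for any $T\in {\rm Aut}(\mathfrak{g})$, the induced map $\Lambda^2 T = T^{\otimes 2}$ on $\Lambda^2\mathfrak{g}$ intertwines the Schouten bracket, i.e. $[\Lambda^2 T\,r,\Lambda^2 T\,r]_S = \Lambda^3 T\,[r,r]_S$ for every $r\in\Lambda^2\mathfrak{g}$; this follows from $T$ being a Lie algebra homomorphism and the explicit formula for the algebraic Schouten bracket in Definition \ref{SNbracket}. Next, since $\mathfrak{g}$ is three-dimensional, $\Lambda^3\mathfrak{g}$ is one-dimensional, so $\Lambda^3 T$ acts on it as multiplication by a scalar which is precisely $\det T$. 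Therefore, pulling $\Omega$ back, $\Omega(\Lambda^3 T\, w) = (\det T)\,\Omega(w)$ for all $w\in\Lambda^3\mathfrak{g}$.

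Combining these two facts, for any $r\in\Lambda^2\mathfrak{g}$ I get
\begin{equation*}
\Upsilon(\Lambda^2 T\, r) = \Omega\bigl([\Lambda^2 T\,r,\Lambda^2 T\,r]_S\bigr) = \Omega\bigl(\Lambda^3 T\,[r,r]_S\bigr) = (\det T)\,\Omega\bigl([r,r]_S\bigr) = (\det T)\,\Upsilon(r).
\end{equation*}
Now suppose, for contradiction, that some $T\in {\rm Aut}(\mathfrak{g})$ has $\det T<0$ (it cannot be zero since $T$ is invertible). Since $\Upsilon$ is semi-definite and not identically zero, pick $r_0$ with $\Upsilon(r_0)\neq 0$; WLOG assume $\Upsilon$ is non-negative, so $\Upsilon(r_0)>0$. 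Then $\Upsilon(\Lambda^2 T\, r_0) = (\det T)\Upsilon(r_0) < 0$, contradicting non-negativity of $\Upsilon$. The non-positive case is symmetric. Hence every automorphism of $\mathfrak{g}$ has positive determinant.

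The only genuinely delicate point is verifying the intertwining identity $[\Lambda^2 T\, r,\Lambda^2 T\, r]_S = \Lambda^3 T\, [r,r]_S$; it is a routine but slightly tedious check on decomposable elements $r = a\wedge b$ using that $T[a,b] = [Ta,Tb]$ and bilinearity/antisymmetry of $\wedge$, and then extending by bilinearity. Everything else is immediate from dimension count and the definition of the determinant as the action on the top exterior power. I expect no real obstacle beyond being careful that $\Upsilon$ being "semi-definite and different from zero" is exactly what forces the sign contradiction, so the hypothesis is used in an essential way.
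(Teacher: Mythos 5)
Your proposal is correct and follows essentially the same route as the paper's proof: both rest on the identity $[\Lambda^2T\,r,\Lambda^2T\,r]_S=\Lambda^3T\,[r,r]_S$ together with the fact that $\Lambda^3T$ acts on the one-dimensional space $\Lambda^3\mathfrak{g}$ as multiplication by $\det T$, yielding $\Upsilon(\Lambda^2T\,r)=(\det T)\,\Upsilon(r)$ and hence the sign conclusion from semi-definiteness. The paper phrases the final step directly as $\det T=\Upsilon(\Lambda^2T\,r)/\Upsilon(r)>0$ rather than by contradiction, but this is only a cosmetic difference.
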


\begin{proof} 
 Since $\dim\mathfrak{g}=3$, the $\Omega$ is a basis of $\Lambda^3\mathfrak{g}^*$ and there exists a dual basis $\theta\in \Lambda^3\mathfrak{g}$. As $\Omega([r,r] )=\Upsilon(r)$, then $[r,r]= \Upsilon(r)\theta$. Since $\Upsilon$ is not identically zero, there exists an $r\in \Lambda^2\mathfrak{g}$ such that	$[r,r]=\Upsilon(r)\theta\neq 0$. If $T\in {\rm Aut}(\mathfrak{g})$, then
$\Upsilon(r) \det (T)\,\theta=\det (T)\, [r, r]= \Lambda^3T[r,r]$. By the properties of the Schouten bracket,  $\Lambda^3T[r,r]=[\Lambda^2Tr,\Lambda^2Tr]=\Upsilon(\Lambda^2Tr)\theta$. Hence, $\Upsilon(r)\det (T)=\Upsilon(\Lambda^2Tr).
		$
		As $\Upsilon$ is semi-definite and $\Upsilon(r)\neq 0$, then $\det (T)=\Upsilon(\Lambda^2Tr)/\Upsilon(r)>0$.
	\end{proof}
	
Finally, the derivation of $(\Lambda^3\mathfrak{g})^\mathfrak{g}$ is simplified by the following proposition. 

	\begin{proposition}\label{Prop:Sym}
		Each $G$-gradation on $\mathfrak{g}$ has a unique homogeneous subspace  $(\Lambda^3\mathfrak{g})^{(\alpha)}\neq 0$ and $[\mathfrak{g}^{(\beta)},(\Lambda^3\mathfrak{g})^{(\alpha)}]=0$ for $\beta\neq 0$. If $\mathfrak{g}$ has a root gradation, then $\Lambda^3\mathfrak{g}=(\Lambda^3\mathfrak{g})^\mathfrak{g}$ if and only if $\alpha=0$.
	\end{proposition}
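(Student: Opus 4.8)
The plan is to exploit that $\dim\mathfrak{g}=3$ forces $\Lambda^3\mathfrak{g}$ to be one-dimensional, so the graded decomposition of Theorem~\ref{thm:root} collapses almost entirely. First I would note that $\Lambda^3\mathfrak{g}=\bigoplus_{\alpha\in G}(\Lambda^3\mathfrak{g})^{(\alpha)}$ with $\dim\Lambda^3\mathfrak{g}=1$, so exactly one summand $(\Lambda^3\mathfrak{g})^{(\alpha)}$ is nonzero and it is then one-dimensional; this gives the uniqueness claim. For the vanishing $[\mathfrak{g}^{(\beta)},(\Lambda^3\mathfrak{g})^{(\alpha)}]=0$ when $\beta\neq 0$, I would invoke the compatibility part of Theorem~\ref{thm:root}: since $\mathfrak{g}^{(\beta)}=(\Lambda^1\mathfrak{g})^{(\beta)}$, one has $[\mathfrak{g}^{(\beta)},(\Lambda^3\mathfrak{g})^{(\alpha)}]_S\subset(\Lambda^{3}\mathfrak{g})^{(\beta\star\alpha)}$, and because $G$ is a group and $\beta$ is not the neutral element, $\beta\star\alpha\neq\alpha$, so the target homogeneous subspace is the zero one.

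For the second statement, assume $\mathfrak{g}$ carries a root $\mathbb{Z}^k$-gradation. Choosing a basis of each $\mathfrak{g}^{(\beta)}$ and taking the union produces a homogeneous basis $\{e_{(\beta_1)},e_{(\beta_2)},e_{(\beta_3)}\}$ of $\mathfrak{g}$, and then $\Omega:=e_{(\beta_1)}\wedge e_{(\beta_2)}\wedge e_{(\beta_3)}$ spans $\Lambda^3\mathfrak{g}$ and lies in $(\Lambda^3\mathfrak{g})^{(\beta_1\star\beta_2\star\beta_3)}$, so that $\alpha=\beta_1\star\beta_2\star\beta_3$. For any $e\in\mathfrak{g}^{(0)}$, using $[e,e_{(\beta_i)}]=\widetilde{\beta_i}(e)\,e_{(\beta_i)}$ from Definition~\ref{Def:rootgrad} together with the Leibniz rule for the algebraic Schouten bracket (Proposition~\ref{Pr:PropSchou}), I would compute
\[
[e,\Omega]_S=\bigl(\widetilde{\beta_1}+\widetilde{\beta_2}+\widetilde{\beta_3}\bigr)(e)\,\Omega=\widetilde{\alpha}(e)\,\Omega,
\]
where the last equality uses that $\Xi:\alpha\mapsto\widetilde{\alpha}$ is a group homomorphism. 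Combining this with the first part, any $v\in\mathfrak{g}$ written as $v=v^{(0)}+\sum_{\gamma\neq 0}v^{(\gamma)}$ satisfies $[v,\Omega]_S=[v^{(0)},\Omega]_S=\widetilde{\alpha}(v^{(0)})\,\Omega$.

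Finally I would read off the equivalence: $\Omega\in(\Lambda^3\mathfrak{g})^{\mathfrak{g}}$ if and only if $\widetilde{\alpha}(e)=0$ for all $e\in\mathfrak{g}^{(0)}$, i.e. $\widetilde{\alpha}=0$, i.e. $\alpha=0$ by the injectivity of $\Xi$; and since $\Lambda^3\mathfrak{g}=\langle\Omega\rangle$, this is precisely $\Lambda^3\mathfrak{g}=(\Lambda^3\mathfrak{g})^{\mathfrak{g}}$. The only point requiring a little care is the existence of a homogeneous basis of $\mathfrak{g}$, which is immediate from the graded decomposition, and the observation that in dimension three every nonzero element of $\Lambda^3\mathfrak{g}$ is decomposable, so the Leibniz computation applies verbatim; consequently there is no genuine obstacle here, the argument being a short bookkeeping exercise layered on Theorem~\ref{thm:root} and Definition~\ref{Def:rootgrad}.
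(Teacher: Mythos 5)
Your proof is correct and follows essentially the same route as the paper: one-dimensionality of $\Lambda^3\mathfrak{g}$ gives the unique nonzero homogeneous subspace, the graded compatibility of the Schouten bracket kills $[\mathfrak{g}^{(\beta)},(\Lambda^3\mathfrak{g})^{(\alpha)}]_S$ for $\beta\neq 0$, and the Leibniz computation $[e,\Omega]_S=\widetilde{\alpha}(e)\,\Omega$ together with the injectivity of $\Xi$ settles the equivalence. If anything, your explicit remark that the first part reduces $\mathfrak{g}$-invariance of $\Omega$ to invariance under $\mathfrak{g}^{(0)}$ makes the converse direction slightly more complete than the paper's version, which only checks $v\in\mathfrak{g}^{(0)}$.
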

 \begin{proof}
Since $\dim (\mathfrak{g}) = 3$, the subspace $\Lambda^3 \mathfrak{g}$ is one-dimensional and thus, it yields a unique nonzero homogeneous subspace of a certain degree, say $\alpha$. By Theorem \ref{thm:root}, $[\mathfrak{g}^{(\beta)}, (\Lambda^3 \mathfrak{g})^{(\alpha)}]_S \subset (\Lambda^3 \mathfrak{g})^{(\alpha \star \beta)}$. For $\beta \neq 0$, we get $\alpha \star \beta \neq \alpha$. As the only nonzero homogeneous subspace of $\Lambda^3 \mathfrak{g}$ is of degree $\alpha$, it follows that $[\mathfrak{g}^{(\beta)}, (\Lambda^3 \mathfrak{g})^{(\alpha)}]_S = 0$. Let $\mathfrak{g}$ admit a root gradation. If $\Lambda^3\mathfrak{g}=(\Lambda^3\mathfrak{g})^\mathfrak{g}$, then for any $v \in \mathfrak{g}^{(0)}$ and $w \in \Lambda^3 \mathfrak{g}$, we get $[v, w]_S = [\Xi(\beta_1)(v) + \Xi(\beta_2)(v) + \Xi(\beta_3)(v)] w = \Xi(\beta_1 + \beta_2 + \beta_3)(v) = 0$, where $\beta_i$ denote the (possibly different) degrees of basis elements of $\mathfrak{g}$ such that $\beta_1 + \beta_2 + \beta_3 = \alpha$ and $\Xi$ is the group homomorphism introduced in Definition \ref{Def:rootgrad}. Then, $\Xi(\alpha) = 0$. Since $\Xi$ is an injective map, we obtain $\alpha = 0$. Conversely, let us take $\alpha = 0$. Then, $[v, w]_S = [\Xi(\beta_1)(v) + \Xi(\beta_2)(v) + \Xi(\beta_3)(v)] w = \Xi(\beta_1 + \beta_2 + \beta_3)(v) = \Xi(0)(v)$ for any $v \in \mathfrak{g}^{(0)}$. Since $\Xi$ is a homomorphism, $\Xi(0) = 0$ and $[v, w]_S = 0$ for all $v \in \mathfrak{g}^{(0)}$.
 \end{proof}

\subsection{Lie algebra \texorpdfstring{$\mathfrak{sl}_2$}{}}
	
The Lie algebra $\mathfrak{sl}_2$ admits a root decomposition which gives rise to a root $\mathbb{Z}$-gradation such that the whole $\Lambda^3\mathfrak{sl}_2$ is a homogeneous space of zero degree. Thus, Proposition \ref{Prop:Sym} yields that $\Lambda^3\mathfrak{sl}_2=(\Lambda^3\mathfrak{sl}_2)^{\mathfrak{sl}_2}$ and every element of $\Lambda^2\mathfrak{sl}_2$ satisfies the mCYBE. The root gradation of $\mathfrak{sl}_2$ also implies, by virtue of Proposition \ref{gin0}, that  $(\Lambda^2\mathfrak{sl}_2)^{\mathfrak{sl}_2}\subset (\Lambda^2\mathfrak{sl}_2)^{(0)}$. It is then immediate to verify that $(\Lambda^2\mathfrak{sl}_2)^{\mathfrak{sl}_2}=0$. Thus, every $r\in \Lambda^2\mathfrak{sl}_2$ induces a different cocommutator $\delta_r(\cdot):=[\cdot, r]_{S}$.

 Let us analyse the level sets $S_k$ of $f_{\Lambda^2\mathfrak{sl}_2}:r\in \Lambda^2\mathfrak{sl}_2\mapsto \kappa_{\Lambda^2\mathfrak{sl}_{2}}(r,r)\in \mathbb{R}$. 
	If $r=x e_{12}+ y e_{13} + z e_{23}$, then  $f_{\Lambda^2\mathfrak{sl}_2}(r):=8xy - 4z^2$ and $f_{\Lambda^2\mathfrak{sl}_2}$ admits three types of $S_k$ according to the sign of $k$:
 \begin{itemize}
\item if $k<0$, then $S_k$ is a one-sheeted hyperboloid; 
\item the level set $S_0$ consists of two cones (without their common vertex), one opposite to the other, and the origin of $\Lambda^2\mathfrak{sl}_2$; 
\item if $k>0$, then $S_k$ is a two-sheeted hyperboloid with two parts contained within the region $x>0,y>0$ and $x<0,y<0$, respectively.
 \end{itemize}
 The representative level sets of each type are presented in Figure \ref{sl2_aut}.
	
	As explained in the previous section, each $S_k$ is the union of different orbits $\mathcal{O}_w$, $w \in \Lambda^2\mathfrak{sl}_2$, of the ${\rm Inn}(\mathfrak{sl}_2)$-action on $\Lambda^2\mathfrak{sl}_2$. Using Proposition \ref{proporb}, one gets that the dimension of $\mathcal{O}_w$ is $\dim \Theta_w^2=2$ for $w \in \Lambda^2\mathfrak{sl}_2\backslash\{0\}$ and $\dim \Theta_0^2=0$. Since ${\rm Inn}(\mathfrak{sl}_{2})$ is connected, the $\mathcal{O}_w$ are two- or zero-dimensional connected immersed submanifolds. Then:
 \begin{itemize}
 \item for $k<0$, each $S_k$ is a single orbit; 
 \item each $S_k$ with $k>0$ consists of two connected orbits; 
 \item the $S_0$ has three orbits given by two cones for points with $z>0$ or $z<0$, and the origin $(0,0,0)$. 
 \end{itemize}
 Consequently, there are six families of inequivalent classes of $r$-matrices on $\mathfrak{sl}_{2}$ relative to the action of ${\rm Inn}(\mathfrak{sl}_{2})$ (cf. \cite{FJ15}). The representatives of each class are $r_0 =0, r=ae_{23}$, with $a>0$ (one-sheeted hyperboloids), $r = a (e_{12}+e_{13})$, with $a\neq 0$, (two-sheeted hyperboloids),  and $r = \pm e_{12}$ (cones). 
	
Finally, let us search for the automorphisms that would identify the disconnected orbits within each $S_k$. The $T\in {\rm Aut}(\mathfrak{sl}_2)$ such that $T(e_1) := e_1, T(e_2) := -e_2, T(e_3) := -e_3$ can be extended to $\Lambda^2T$ giving rise to a map satisfying that $\Lambda^2T(e_{12})=- e_{12},\Lambda^2T(e_{13})=- e_{13}, \Lambda^2T(e_{23})=e_{23},
	$
This extended map indetifies two connected parts of the two-sheeted hyperboloids in $S_k$ for each fixed $k> 0$. It also maps the two cones contained in $S_0$. This leads to four separate classes of inequivalent $r$-matrices: $r = 0$, $r = |k| e_{23}$ (for $S_k$ with $k < 0$), $r = k (e_{12} + e_{13})$ (for $S_k$ with $k > 0$) and $r = e_{12}$ (for $S_0$).
	
Our result agrees with \cite{Go00}, but they do not match \cite[p. 56]{FJ15}), since the authors performed the classification only up to ${\rm Inn}(\mathfrak{sl}_2)$. However, as shown in the previous section, ${\rm Inn}(\mathfrak{sl}_2) \neq {\rm Aut}(\mathfrak{sl}_2)$.

\subsection{Lie algebra \texorpdfstring{$\mathfrak{su}_{2}$}{}}
	
	The $\mathbb{Z}_2$-gradations of $\mathfrak{su}_2$ and the associated decomposition of $\Lambda^3\mathfrak{su}_2$, both presented in Table \ref{tabela3w}, allow to show that the space $\Lambda^3\mathfrak{su}_2$ is $\mathfrak{su}_2$-invariant. Thus, every element of $\Lambda^2 \mathfrak{su}_2$ is a solution to the mCYBE.	
	
Although the $\mathbb{Z}_2$-gradation of $\mathfrak{su}_2$ implies the whole space $\Lambda^3 \mathfrak{su}_2$ is a single homogeneous space of degree zero, Proposition \ref{Prop:Sym} cannot be applied to analyse $(\Lambda^2\mathfrak{su}_2)^{\mathfrak{su}_2}$, as the considered gradation is not a root gradation. Since $(\Lambda^2\mathfrak{su}_2)^{\mathfrak{su}_2}=\{0\}$, every $r\in \Lambda^2\mathfrak{su}_2$ induces a different cocomutator and the classification of coboundary cocomutators of $\mathfrak{su}_2$  up to ${\rm Aut}(\mathfrak{su}_2)$ amounts to classifying $r$-matrices.
	
	Let us consider the Killing and Killing-type forms on $\mathfrak{su}_2, \Lambda^2 \mathfrak{su}_2$, and $\Lambda^3 \mathfrak{su}_2$. In the bases given in Table \ref{tabela3w}, these forms read $[\kappa_{\mathfrak{su}_{2}}] = -2\ \mathbb{I}_{3 \times 3},$ $[\kappa_{\Lambda^2 \mathfrak{su}_{2}}] = 4\ \mathbb{I}_{3 \times 3}$,  $[\kappa_{\Lambda^3\mathfrak{su}_{2}}]=-8\ \mathbb{I}_{1\times 1}$, where $\mathbb{I}_{k \times k}$ denotes the identity matrix of size $k$.
Then, the level sets $S_k$ are given $\kappa_{\Lambda^2\mathfrak{su}_{2}}(r, r)=4(x^2 + y^2 + z^2) = k$. Since ${\rm Inn}(\mathfrak{su}_{2})$ is connected, Proposition \ref{proporb} implies that the dimension of the orbits $\mathcal{O}_w$ of the ${\rm Inn}(\mathfrak{su}_{2})$-action on $\Lambda^2\mathfrak{su}_{2}$ satisfies ${\rm dim}\,\mathcal{O}_w = 2$ for $w\in \Lambda^2\mathfrak{su}_2\backslash\{0\}$ and ${\rm dim}\,\mathcal{O}_0 = 0$.  

As the orbits of the ${\rm Inn}(\mathfrak{su}_2)$-action on $\Lambda^2\mathfrak{su}_2$ are connected immersed submanifolds contained in the level sets $S_k$ and must be open relative to the topology of each $S_k$ (with $k\geq 0$), which are connected, each orbit of ${\rm Inn}(\mathfrak{su}_2)$ must be the whole $S_k$ for each $k\geq 0$. 

	Hence, non-equivalent $r \in \Lambda^2 \mathfrak{su}_2$, with respect to the action of ${\rm Inn}(\mathfrak{su}_2)$, are given by elements $r_a=a e_{12}$, with $a\geq 0$. Since the orbits of the action of ${\rm Aut}(\mathfrak{su}_2)$ on $\Lambda^2\mathfrak{su}_2$ are given by the sum of orbits of ${\rm Inn}(\mathfrak{su}_2)$ and they are contained in the surfaces $S_k$, the orbits of the action of ${\rm Aut}(\mathfrak{su}_2)$ in $\Lambda^2\mathfrak{su}_2$ are the spheres $S_k$ with $k>0$ and the point $k=0$ (see Figure \ref{su2_aut}). Therefore, we conclude there are two classes of inequivalent $r$-matrices: $r = 0$ and $r_a = a e_{12}$ for $a > 0$. This agrees with the results given in \cite{FJ15,Go00}.

	\noindent
		\begin{minipage}{0.45\textwidth}
		\begin{center}
			\includegraphics[scale=0.33]{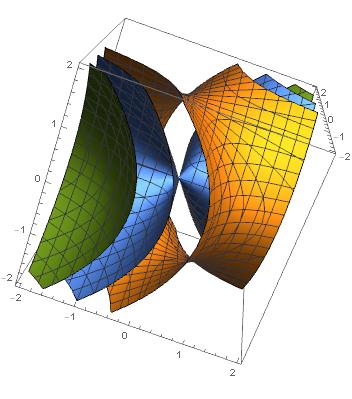}
			\captionof{figure}{Orbits of ${\rm Aut}(\mathfrak{sl}_2)$ acting on $\Lambda^2 \mathfrak{sl}_2$.}\label{sl2_aut}
		\end{center}
	\end{minipage}$\quad$
		\begin{minipage}{0.45\textwidth}
		\begin{center}
			\includegraphics[scale=0.33]{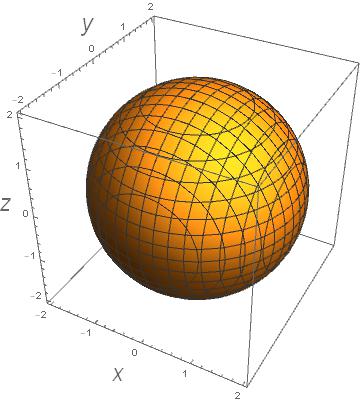}
			\captionof{figure}{An orbit of ${\rm Aut}(\mathfrak{su}_2)$ acting on $\Lambda^2 \mathfrak{su}_2$.}\label{su2_aut}
		\end{center}
	\end{minipage}

\subsection{Lie algebra \texorpdfstring{$\mathfrak{h}$}{}}	
	
	Let us analyse the three-dimensional {\it Heisenberg algebra} $\mathfrak{h}$ \cite{FJ15} described in Table \ref{tabela3w}. This is the only three-dimensional nilpotent Lie algebra \cite{SW14}.
	
With the help of the $\mathbb{Z}$-gradation of $\lambda^3 \mathfrak{h}$ presented in Table \ref{tabela3w}, one can verify that $(\Lambda^3\mathfrak{h})^{\mathfrak{h}}=\Lambda^3\mathfrak{h}$ and thus, every element of $\Lambda^2\mathfrak{h}$ is a solution to the mCYBE. Similarly, using $\mathbb{Z}$-gradation on $\Lambda^2 \mathfrak{h}$, one shows that $(\Lambda^2\mathfrak{h})^\mathfrak{h}=\langle e_{13},e_{12}\rangle$. Therefore, we restrict further analysis to the reduced space $\Lambda^2_R \mathfrak{h}$. 
	
As a consequence of Proposition \ref{prop:requiv}, every class of $\Lambda^2_R\mathfrak{h}$ gives rise to a unique Lie bialgebra. Thus, in order to classify coboundary Lie bialgebras, it is sufficient to study reduced $r$-matrices in $\Lambda^2_R\mathfrak{h}$. In view of Proposition \ref{proporb}, the orbits $\mathcal{O}_w$ of ${\rm Inn}$-action on $\Lambda^2_R\mathfrak{h}$ satisfy ${\rm dim} \mathcal{O}_w = 1$ for $w \in \Lambda^2_R \mathfrak{h} \backslash \{[0]\}$ and $\mathcal{O}_w = 0$ for $w = [0]$. Thus, the reduced $r$-matrices inequivalent relative to ${\rm Inn}(\mathfrak{h})$ read $r = [0]$ and $r_{\pm} = [\pm e_{12}]$. 

Define the maps $T_\alpha\in {\rm Aut}(\mathfrak{g})$, with $\alpha \in \mathbb{R}\backslash\{0\}$, given by
		$
		T_\alpha(e_1):=\alpha e_1, T(e_2):=e_2, T(e_3):= \alpha e_3.
		$ 	Therefore, $\Lambda^2 T_\alpha (e_{12}) = \alpha e_{12}$ for any $\alpha\neq 0$. By Lemma \ref{reduced_action}, the action $\Lambda^2 T$ induces an action $[\Lambda^2 T]$ on $\Lambda^2_R\mathfrak{h}$. Since for $\alpha = (-1)$, $[\Lambda^2 T]$ identifies $r_{\pm}$, we conclude that there are two orbits of ${\rm Aut}(\mathfrak{h})$-action on $\Lambda^2_R \mathfrak{h}$ given by $r_0 = [0]$ and $r = [e_{12}]$. Figure \ref{h_aut} depicts the equivalence classes of $\Lambda^2_R\mathfrak{h}$ in $\Lambda^2\mathfrak{h}$.
		
		\noindent
		\begin{minipage}{0.50\textwidth}
		\begin{center}
			\includegraphics[scale=0.35]{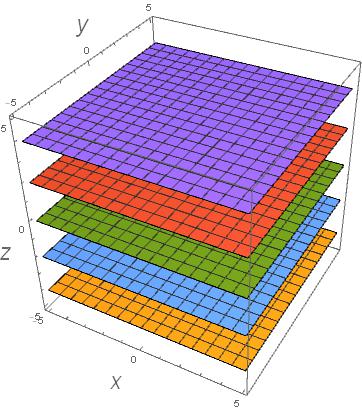}
			\captionof{figure}{Orbits of ${\rm Inn}(\mathfrak{h})$ acting on $\Lambda^2 \mathfrak{h}$.}\label{h_inn}
		\end{center}
	\end{minipage}$\quad$
	\begin{minipage}{0.50\textwidth}
		\begin{center}
			\includegraphics[scale=0.35]{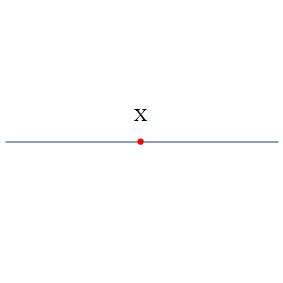}
			\captionof{figure}{Orbits of ${\rm Aut}(\mathfrak{h})$ on $\Lambda^2_R \mathfrak{h}$.}\label{h_aut}
		\end{center}
	\end{minipage}	

\subsection{Lie algebra \texorpdfstring{$\mathfrak{r}'_{3,0}$}{}}
	
In view of gradations presented in Table \ref{tabela3w}, it is immediate to verify that $(\Lambda^3\mathfrak{r}'_{3,0})^{\mathfrak{r}'_{3,0}}=\Lambda^3\mathfrak{r}'_{3,0}$. Thus, all the elements of $\Lambda^2\mathfrak{r}'_{3,0}$ are solutions to mCYBE, and the study of coboundary cocommutators reduces to analysing the equivalent reduced $r$-matrices in $\Lambda^2_R\mathfrak{r}'_{3,0}$.
	
	Recall that $(\Lambda^2\mathfrak{r}'_{3,0})^{\mathfrak{r}'_{3,0}}$ is the sum of homogeneous $\mathfrak{r}'_{3,0}$-invariant elements in $\Lambda^2\mathfrak{r}'_{3,0}$. It easily follows by using the gradations in $\mathfrak{r}'_{3,0}$ that $\langle e_{23}\rangle\subset (\Lambda^2\mathfrak{r}'_{3,0})^{(2)}$ is $\mathfrak{r}'_{3,0}$-invariant. To obtain the $\mathfrak{r}'_{3,0}$-invariant elements within $\Lambda^2(\mathfrak{r}'_{3,0})^{(1)}$, we consider an arbitrary element $e_1\wedge \lambda(e_2,e_3)\in\Lambda^2(\mathfrak{r}'_{3,0})^{(1)}$, where $\lambda(e_2,e_3)$ stands for a linear combination of $e_2$ and $e_3$. Then,
	$
	[e_2,e_1\wedge \lambda(e_2,e_3)] =-e_3\wedge \lambda(e_2,e_3)=0, [e_3,e_1\wedge \lambda(e_2,e_3)] =e_2\wedge \lambda(e_2,e_3)=0.
	$
	Hence, $\lambda(e_2,e_3)=0$ and $(\Lambda^2\mathfrak{r}'_{3,0})^{\mathfrak{r}'_{3,0}}=\langle e_{23}\rangle$. In consequence, we will continue our analysis on the reduced space $\Lambda^2_R \mathfrak{r}'_{3,0}$.
	
 Let us discuss the existence of $\mathfrak{r}'_{3,0}$-invariant metrics on $\Lambda^2_R \mathfrak{r}'_{3,0}$. Consider the basis $\{[e_{12}], [e_{13}]\}$ of $\Lambda^2_R \mathfrak{r}'_{3,0}$. If $\Lambda^2_R{\rm ad}: v\in \mathfrak{r}'_{3,0} \mapsto [[v],\cdot]_R\in \mathfrak{gl}(\Lambda^2_R \mathfrak{r}'_{3,0})$, where $[\cdot,\cdot]_R$ is the bracket on $\Lambda^2_R \mathfrak{r}'_{3,0}$ induced by the algebraic bracket on $\Lambda^2 \mathfrak{r}'_{3,0}$, then
$$
\begin{aligned}
	&{\rm Im}\,\Lambda^2_R{{\rm ad}}_{e_1}=\langle [e_{13}],[e_{12}]\rangle,\,\,&\ker\,\Lambda^2_R{{\rm ad}}_{e_1}&=\langle [0]\rangle,\,\, &{\rm Im}\,\Lambda^2_R{{\rm ad}}_{e_2}=\langle [0]\rangle,\,\,\\ 
	&\ker\,\Lambda^2_R{{\rm ad}}_{e_2}=\langle [e_{13}]\rangle,\,\,
	&{\rm Im}\,\Lambda^2_R{{\rm ad}}_{e_3}&=\langle [e_{23}]\rangle =\langle [0]\rangle,\,\,&\ker \,\Lambda^2_R{{\rm ad}}_{e_3}=\langle [e_{12}]\rangle,
	\end{aligned}
	$$
	\begin{equation*}
		\begin{gathered}
			b^R_{\Lambda^2\mathfrak{r}'_{3,0}}([e_{12}], [e_{12}]) = b^R_{\Lambda^2\mathfrak{r}'_{3,0}}([[e_1], [e_{13}]]_R, [e_{12}]) = -b^R_{\Lambda^2\mathfrak{r}'_{3,0}}([e_{13}], [[e_1], [e_{12}]]_R)=
			b^R_{\Lambda^2\mathfrak{r}'_{3,0}}([e_{13}], [e_{13}]),\\
			b^R_{\Lambda^2\mathfrak{r}'_{3,0}}([e_{13}],[e_{12}]) = -b^R_{\Lambda^2\mathfrak{r}'_{3,0}}([[e_1], [e_{12}]]_R, [e_{12}]) = b^R_{\Lambda^2\mathfrak{r}'_{3,0}}([e_{12}],[[e_1],[e_{12}]]_R)=-b^R_{\Lambda^2\mathfrak{r}'_{3,0}}([e_{12}],[e_{13}]).
		\end{gathered}
	\end{equation*}
	
Then, Propositions \ref{prop:sym_form} and \ref{prop:form_cond} yield  that	
	$[b_{\Lambda^2 \mathfrak{r}'_{3,0}}^R]=a_1{\rm Id}$ for $a_1\in \mathbb{R}$, 
	which is an $\mathfrak{r}'_{3,0}$-invariant metric. For simplicity, we hereafter assume that $a_1=1$.

 Let us study the equivalence of reduced $r$-matrices up to {\rm Inn}$(\mathfrak{r}_{3,0}')$. if we write an element $[r] \in \Lambda^2_R\mathfrak{r}_{3,0}'$ in the given basis as $[r] = x [e_{12}] + y [e_{13}]$ for some $x, y \in \mathbb{R}$, then it follows that $b^R_{\Lambda^2 \mathfrak{r}'_{3,0}} ([r], [r]) = x^2 + y^2$. Thus, the level sets $S_k$ of $b^R_{\Lambda^2 \mathfrak{r}'_{3,0}}$ are circles of radius $k$ centered around the origin and the point $(0,0)$. Since by Proposition \ref{proporb}, orbits $\mathcal{O}_w$ of the ${\rm Inn}(\mathfrak{r}'_{3,0})$-action satisfy $\dim \mathcal{O}_w = 1$ for $[r] \in \Lambda^2_R \mathfrak{r}'_{3,0}$ such that its coordinates $x^2 +y^2 \neq 0$ and $\dim \mathcal{O}_w = 0$ otherwise. Thus, each level set $S_k$ gives a single orbit in $\Lambda_R^2\mathfrak{r}_{3,0}'$ relative to the action of ${\rm Inn}(\mathfrak{r}'_{3,0})$. Therefore, there exist two classes of inequivalent $r$-matrices relative to ${\rm Inn}(\mathfrak{r}'_{3,0})$: $r = 0$ and $r_{\mu} = [\mu e_{12}]$, where $\mu \in \mathbb{R}^+$.
	
Consider the $T_\alpha\in {\rm Aut}(\mathfrak{r}_{3,0}')$, with $\alpha \in \mathbb{R}\backslash\{0\}$, satisfying  $T_\alpha(e_1):=e_1, T_\alpha(e_2):=\alpha e_2, T_\alpha(e_3):=\alpha e_3.$ Then, each $\Lambda^2T_\alpha\in GL(\Lambda^2\mathfrak{r}_{3,0}')$ satisfies that $\Lambda^2T_\alpha(e_{12})=\alpha e_{12}$. Thus, the functions $\Lambda^2_RT_\alpha$ identifies $r_{\mu}$ for all $\mu \in \mathbb{R}_{+}$. Hence, there are two classes of inequivalent reduced $r$-matrices up to the action of ${\rm Aut}(\mathfrak{r}_{3,0}')$, represented by $r = 0$ and $r = e_{12}$ (see Figure \ref{r'30_aut}). This matches the results in \cite{FJ15}. 

	\noindent
	\begin{minipage}{0.45\textwidth}
		\begin{center}
			\includegraphics[scale=0.35]{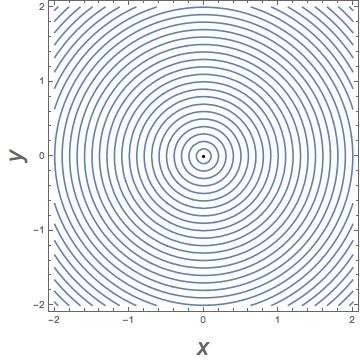}
			\captionof{figure}{Some orbits of ${\rm Inn}(\mathfrak{r}'_{3,0})$ acting on $\Lambda^2_R \mathfrak{r}'_{3,0}$.}\label{r'30_inn}
		\end{center}
	\end{minipage}$\quad$
		\begin{minipage}{0.45\textwidth}
		\begin{center}
			\includegraphics[scale=0.35]{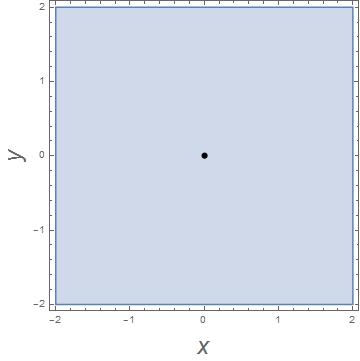}
			\captionof{figure}{Orbits of ${\rm Aut}(\mathfrak{r}'_{3,0})$ acting on $\Lambda^2_R \mathfrak{r}'_{3,0}$.}\label{r'30_aut}
		\end{center}
		\end{minipage}

\subsection{Lie algebra \texorpdfstring{$\mathfrak{r}_{3,-1}$}{}}
	
	Since $\mathfrak{r}_{3,-1}$ admits a root gradation (see Table \ref{tabela3w}) and $\Lambda^3\mathfrak{r}_{3,-1}$ has degree 0, Proposition \ref{Prop:Sym} shows that $\Lambda^3\mathfrak{r}_{3,-1}=(\Lambda^3\mathfrak{r}_{3,-1})^{\mathfrak{r}_{3,-1}}$ and every $r\in \Lambda^2\mathfrak{r}_{3,-1}$ is an $r$-matrix, while Proposition \ref{gin0} yields that $(\Lambda^2\mathfrak{r}_{3,-1})^{\mathfrak{r}_{3,-1}}\subset (\Lambda^2\mathfrak{r}_{3,-1})^{(0)}$. It is then immediate that $(\Lambda^2\mathfrak{r}_{3,-1})^{\mathfrak{r}_{3,-1}}=\langle e_{23}\rangle$ and $\Lambda^2_R\mathfrak{r}_{3,-1}=\langle [e_{13}],[e_{12}]\rangle$.

	Let us classify cocommutators on $\mathfrak{r}_{3,-1}$ via  $\mathfrak{r}_{3,-1}$-invariant metrics, $b^R_{\Lambda^2\mathfrak{r}_{3,-1}}$, on $\Lambda^2_R\mathfrak{r}_{3,-1}$. 
	Define $\Lambda^2_R{\rm ad}:v\in \mathfrak{r}_{3,-1}\mapsto [[v],\cdot]_R\in \mathfrak{gl}(\Lambda^2_R\mathfrak{r}_{3,-1})$. In the basis $\{[e_{12}], [e_{13}]\}$ of $\Lambda^2_R\mathfrak{r}_{3,-1}$, one gets
	$
	b^R_{\Lambda^2\mathfrak{r}_{3,-1}}([[e_1],[e_{12}]]_R,[e_{12}])=b^R_{\Lambda^2\mathfrak{r}_{3,-1}}([e_{12}],[e_{12}]),\,\,$ and $ b^R_{\Lambda^2\mathfrak{r}_{3,-1}}([[e_1],[e_{13}]]_R,[e_{13}])=-b^R_{\Lambda^2\mathfrak{r}_{3,-1}}([e_{13}],[e_{13}]).
	$ Then, a short calculation shows that
	the $\mathfrak{r}_{3,-1}$-invariant metrics on $\Lambda^2_R\mathfrak{r}_{3,-1}$ are
	$$
	[b^R_{\Lambda^2\mathfrak{r}_{3,-1}}] \!=\! \left(\begin{array}{cc}
	0&\beta\\
	\beta&0
	\end{array}\right), \,\, \beta\in \mathbb{R}.
	$$
	
		Let $\{[e_{12}],[e_{13}]\}$ be a basis of $\Lambda^2_R\mathfrak{r}_{3,-1}$. Then, any $r_R \in \Lambda^2_R\mathfrak{r}_{3,-1}$ can be written as $r_R = x [e_{12}] + y[ e_{13}]$ for certain $x,y \in \mathbb{R}$ and $b^R_{\Lambda^2\mathfrak{r}_{3,-1}}(r_R, r_R) = 2xy$. Thus, the level sets of $b^R_{\Lambda^2\mathfrak{r}_{3,-1}}$ are given by the equation $xy = k$. By Proposition \ref{proporb}, the orbits $\mathcal{O}_w$ of ${\rm Inn}(\mathfrak{r}_{3,-1})$-action satisfy $\dim \mathcal{O}_w = 1$ for $w \in \Lambda^2_R \mathfrak{r}_{3,-1}$ such that its coordinates $x^2 + y^2 \neq 0$ and $\dim \mathcal{O}_w = 0$ for $w = [0]$. Chosen orbits are depicted in  Figure \ref{14}.
		
Thus, the representatives of inequivalent reduced $r$-matrices up to the action of ${\rm Inn}(\mathfrak{r}_{3,-1})$ read
		$r_0 = [e_{23}], r_2^{(\pm)} = \pm  [e_{12}], r_3^{(\pm)} = \pm  [e_{13}]$, and $r^{(\pm, \pm)} =a (\pm[e_{12}] \pm [e_{13}])$ for $a \in \mathbb{R}_{+}$. 
   
  Let us search for automorphisms of $\mathfrak{r}_{3,-1}$ that might identify some of the $r$-matrices listed above. Notice that $\mathfrak{r}_{3,-1}$ satisfies the conditions given in Proposition \ref{Aut3}.  Hence, automorphisms of $\mathfrak{r}_{3,-1}$ must match one of the following automorphisms
			
	\begin{equation*}
		\begin{gathered}
			T_{\alpha,\beta}(e_1):=e_1+v,\qquad T_{\alpha,\beta}(e_2):=\alpha e_2,\qquad T_{\alpha,\beta}(e_3):=\beta e_3,\qquad \forall \alpha,\beta\in \mathbb{R}\backslash\{0\},\\
			T_{\alpha,\beta}'(e_1):=-e_1+v, \qquad T_{\alpha,\beta}'(e_2):=\alpha e_3, \qquad T_{\alpha,\beta}'(e_3):=\beta e_2,\qquad \forall \alpha,\beta\in \mathbb{R}\backslash\{0\},
		\end{gathered}
	\end{equation*}
\noindent		for certain $v\in \langle e_2,e_3\rangle$. Then,
	\begin{equation*}
		\begin{gathered}
			\Lambda_R^2T_{\alpha,\beta}([e_{12}])=\alpha [e_{12}],\qquad \Lambda^2_RT_{\alpha,\beta}([e_{13}])=\beta [e_{13}],\qquad \forall \alpha,\beta\in \mathbb{R}\backslash\{0\},\\
			\Lambda_R^2T'_{\alpha,\beta}([e_{12}])=-\alpha [e_{13}],\qquad \Lambda_R^2T'_{\alpha,\beta}([e_{13}])=-\beta [e_{12}],\qquad \forall \alpha,\beta\in \mathbb{R}\backslash\{0\}.
		\end{gathered}
	\end{equation*}

The map $\Lambda_R^2 T_{\alpha,\beta}$ identifies all four elements $r^{(\pm, \pm)}$. For $\alpha = (-1)$ and $\beta = 1$, it identifies both $r_2^{(\pm)}$. Similar identification happens for $r_3^{(\pm)}$ when $\alpha = 1, \beta = (-1)$. Finally, $r_2^{(\pm)}$ and $r_3^{(\pm)}$ are identified by $\Lambda_R^2 T'_{\alpha,\beta}$ for $\alpha = \beta = 1$. Thus, there are only three reduced $r$-matrices on $\Lambda^2 \mathfrak{r}_{3,-1}$ given by $r_0 = 0$, $r_1 = [e_{12}]$ and $r_2 = [e_{12}] + [e_{13}]$.
 
	\noindent
	\begin{minipage}{0.45\textwidth}
		\begin{center}
			\includegraphics[scale=0.35]{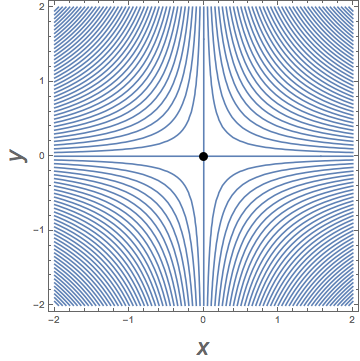}
			\captionof{figure}{Some orbits of ${\rm Inn}(\mathfrak{r}_{3,-1})$ acting on $\Lambda^2_R \mathfrak{r}_{3,-1}$.}\label{14}
		\end{center}
		
	\end{minipage}$\quad$
	\begin{minipage}{0.45\textwidth}
		\begin{center}
			\includegraphics[scale=0.35]{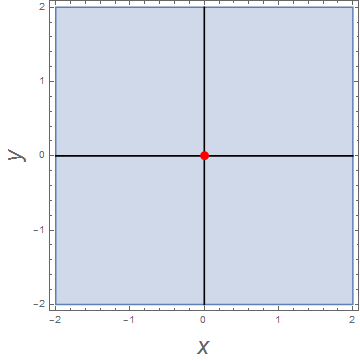}
			\captionof{figure}{Orbits of ${\rm Aut}(\mathfrak{r}_{3,-1})$ acting on $\Lambda^2_R \mathfrak{r}_{3,-1}$}\label{r3-1_aut}
		\end{center}
		
	\end{minipage}

	\subsection{Lie algebra \texorpdfstring{$\mathfrak{r}_{3,1}$}{}}
	
	Since $\mathfrak{r}_{3,1}$ admits a root gradation and the unique homogeneous space in $\Lambda^3\mathfrak{r}_{3,1}$ is not related to the zero element of the group (see Table \ref{tabela3w}), Proposition \ref{Prop:Sym} shows that $(\Lambda^3\mathfrak{r}_{3,1})^{\mathfrak{r}_{3,1}}=\{0\}$. Moreover, the root gradation of $\mathfrak{r}_{3,1}$ yields $(\Lambda^2\mathfrak{r}_{3,1})^{\mathfrak{r}_{3,1}}\subset \Lambda^2(\mathfrak{r}_{3,1})^{(0)}$. It is then immediate to verify that $(\Lambda^2\mathfrak{r}_{3,1})^{\mathfrak{r}_{3,1}}=0$.  Therefore, one cannot use the reduced space to simplify the analysis and additionally, one has to solve the mCYBE directly.
	
	In the coordinates $\{x,y,z\}$  corresponding to the basis $\{e_{12},e_{13},e_{23}\}$ of $\Lambda^2\mathfrak{r}_{3,1}$, any $r\in \Lambda^2\mathfrak{r}_{3,1}$ can be written as $r = x e_{12} + y e_{13} + z e_{23}$. Simple computation shows that $[r,r] =0$. Hence, every element of $\Lambda^2\mathfrak{r}_{3,1}$ is an $r$-matrix. 

 As we have shown in Example \ref{Ex:r31}, the Lie algebra $\mathfrak{r}_{3,1}$ admits no $\mathfrak{r}_{3,1}$-invariant forms. Therefore, previously employed methods does not allow to analyse this case. Nevertheless, it is possible to overcome that difficulty by studying directly the action of ${\rm Inn}(\mathfrak{g})$, as it is relatively easy to determine this group.
	
	The fundamental vector fields of the action of ${\rm Inn}(\mathfrak{r}_{3,1})$ on $\Lambda^2\mathfrak{r}_{3,1}$ are spanned by
	$
	X_1:=x\partial_ x+y\partial_ y+2z\partial_z,\,\, X_2:=-y\partial_z,\,\, X_3:=x\partial_z$. Proposition \ref{proporb} implies that the distribution $\mathcal{D}$ spanned by these vector fields is singular, as its dimension varies depending on the value of $(x,y,z) \in \mathbb{R}^3$ (e.g. for $x=y=0$ and $z \neq 0$, it equals one, whereas apart from the $z$-axis it equals two). By Theorem \ref{Th:StSus}, this singular distribution is integrable. Let us determine its integral manifolds. Notice that in cylindrical coordinates $(r, \phi, z)$, the vector field $X_1$ reads $X_1 = r \partial_r + 2z \partial_z$. This observation helps to conclude that off the line $x = y = 0$, the integral manifolds of $\mathcal{D}$ are two-dimensional semi-planes $\mathcal{I}_{\phi} := \{(r \sin (\phi), r \cos (\phi), z) \in \mathbb{R}^3: r \in \mathbb{R}_{+}, z \in \mathbb{R}\}$ for $\phi \in [0, 2\pi[$, presented in Figure \ref{r31_inn}.

When restricted to the line $x=y=0$, the distribution reads $X_1 = 2z \partial_z, X_2 =0, X_3 = 0$. Then, one immediately obtains three orbits of the ${\rm Inn}(\mathfrak{r}_{3,1})$-action: $\{(x,y,z) \in \mathbb{R}^3: x= 
 0, y = 0, z >0 \}$, $\{(x,y,z) \in \mathbb{R}^3: x= 
 0, y = 0, z <0 \}$ and the origin $\{(0,0,0)\}$ (see Figure \ref{r31_inn}).

 Therefore, we obtain three classes of inequivalent $r$-matrices relative to ${\rm Inn}(\mathfrak{r}_{3,1})$, namely $r = 0$, $r^{(\pm)} = \pm e_{23}$ and $r_{\phi} = \cos (\phi) e_{12} + \sin (\phi) e_{13}$ with $\phi \in [0, 2\pi[$.
		
Let us study now the equivalence of $r$-matrices up to the action of ${\rm Aut}(\mathfrak{r}_{3,1})$. Obviously, the $r=0$ is an orbit of the action of ${\rm Aut}(\mathfrak{r}_{3,1})$ on $\Lambda^2\mathfrak{r}_{3,1}$. As follows from the final observations in Section \ref{R:3DSec:Aut}, elements of ${\rm Aut}(\mathfrak{r}_{3,1})$ leave the first derived ideal $[\mathfrak{r}_{3,1},\mathfrak{r}_{3,1}]=\langle e_2,e_3\rangle$ invariant. Then, the induced action of ${\rm Aut}(\mathfrak{r}_{3,1})$ on $\Lambda^2\mathfrak{r}_{3,1}$ must leave the subspace $\langle e_{23}\rangle$ invariant and every point within it must be contained in an orbit within $\langle e_{23}\rangle$. 
		
Finally, consider the Lie algebra automorphisms $T_{\alpha,\beta,\gamma,\delta}$ given by 
$$
	T_{\alpha,\beta,\gamma,\delta}(e_1):=e_1,\quad T_{\alpha,\beta,\gamma,\delta}(e_2):= \alpha e_2+\beta e_3,\quad
	T_{\alpha,\beta,\gamma,\delta}(e_3)=\gamma e_2+\delta e_3,\quad \alpha\delta-\beta\gamma\neq 0,
	$$
 which gives rise to the induced map $\Lambda^2T_{\alpha,\beta,\gamma,\delta}$ of the form
 $$
\Lambda^2T_{\alpha,\beta,\gamma,\delta}(e_12) = \alpha e_{12} + \beta e_{13}, \quad \Lambda^2T_{\alpha,\beta,\gamma,\delta}(e_{13}) = \gamma e_{12} + \delta e_{13}, \quad \Lambda^2T_{\alpha,\beta,\gamma,\delta}(e_{23}) = (\alpha \delta - \beta \gamma) e_{23}
 $$
Take $T_{\alpha,\beta,\gamma,\delta}$ with $\alpha = \cos(\psi)$, $\delta = \cos(\psi)$, $\gamma = \sin (\psi)$ and $\beta = -\sin (\psi)$ for any $\psi \in [0, 2\pi[$. Then,
\begin{equation*}
 \begin{split}
T_{\alpha,\beta,\gamma,\delta}(r_{\phi}) &= \cos(\phi) [\cos(\psi) e_{12} - \sin(\psi) e_{13}] + \sin(\phi) [\sin(\phi) e_{12} + \cos(\phi) e_{13}] \\
&= [\cos(\phi \cos(\psi) + \sin(\phi) \sin(\psi))] e_{12} + [\sin(\phi) \cos(\psi) - \cos(\phi) \sin(\psi)] e_{13} \\
&= \cos(\phi - \psi) e_{12} + \sin(\phi - \psi) e_{13} = r_{\phi - \psi}.
\end{split}
\end{equation*}
Since $\psi$ can be chosen arbitrarily, it follows that such $T_{\alpha,\beta,\gamma,\delta}$ identifies all semi-planes $\mathcal{I}_{\phi}$. Moreover, $T_{\alpha,\beta,\gamma,\delta}$ also connects the parts $z>0$ and $z<0$ of the line $x=y=0$. 

Hence, there exist three non-equivalent $r$-matrices for $\mathfrak{r}_{3,1}$, namely $r = 0$, $r_1=e_{13}$ and $r_2=e_{23}$. 
	
\noindent
\begin{minipage}{0.45\textwidth}
		\begin{center}
			\includegraphics[scale=0.35]{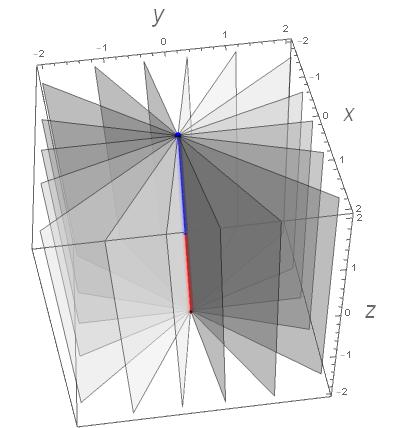}
			\captionof{figure}{Representative orbits of ${\rm Inn}(\mathfrak{r}_{3,1})$ acting on $\Lambda^2 \mathfrak{r}_{3,1}$.}\label{r31_inn}
		\end{center}
	\end{minipage}$\quad$
		\begin{minipage}{0.45\textwidth}
		\begin{center}
			\includegraphics[scale=0.35]{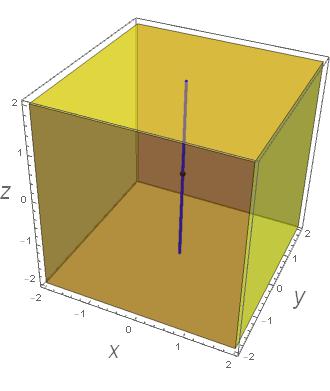}
			\captionof{figure}{Representative orbits of the action of ${\rm Aut}(\mathfrak{r}_{3,1})$ on $\Lambda^2 \mathfrak{r}_{3,1}$.}\label{r31_aut}
		\end{center}
	\end{minipage}$\quad$	
		
	\subsection{Lie algebra \texorpdfstring{$\mathfrak{r}_3$}{}}	
		
Since $\mathbb{Z}$-gradation of $\mathfrak{r}_3$ shown in Table \ref{tabela3w} is not a root gradation and one cannot use Proposition \ref{Prop:Sym} to analyse $(\Lambda^3 \mathfrak{r}_3)^{\mathfrak{r}_3}$, it is necessary to directly verify that $(\Lambda^3 \mathfrak{r}_3)^{\mathfrak{r}_3} = \{0\}$. Thus, the determination of $r$-matrices requires solving the mCYBE. 
	
		Let us determine $(\Lambda^2\mathfrak{r}_3)^{\mathfrak{r}_3}$ to 
	know whether different $r$-matrices induce different coboundary cocommutators. Since $\mathfrak{r}_3$ admits a $\mathbb{Z}$-gradation, $(\Lambda^2\mathfrak{r}_3)^{\mathfrak{r}_3}$ is the sum of the $\mathfrak{r}_3$-invariant elements on each homogeneous subspace of  $\Lambda^2\mathfrak{r}_3$ by Proposition \ref{LemgDe}. Using this gradation of $\mathfrak{r}_3$, one also sees that $[v^{(\alpha)},w^{(\beta)}] =0$, for $v^{(\alpha)}\subset \mathfrak{r}_3^{(\alpha)}, w^{(\beta)}\in (\Lambda^2\mathfrak{r}_3)^{(\beta)}$ when $\alpha+\beta\neq 2$. Inspecting remaining Schouten brackets, one obtains $(\Lambda^2\mathfrak{r}_3)^{\mathfrak{r}_3}=\{0\}$ and every $r$-matrix induces a different coboundary cocommutator. 
	
		Let $\{x,y,z\}$ be the coordinates on $\Lambda^2\mathfrak{r}_3$ induced by the basis $\{e_{12},e_{13},e_{23}\}$. The mCYBE, where $r=xe_{12}+ye_{13}+ze_{23}$, reads $[r,r] =-2z^2e_{123}$. Hence, $YB=\langle e_{12},e_{13}\rangle$ is the space of solutions to the mCYBE, presented in Figure \ref{r3_inn}. 
		
		A long but simple calculation shows that $\Lambda^2\mathfrak{r}_3$ does not admit a non-zero $\mathfrak{r}_3$-invariant metrics. As in the previous case, we need to carry on the analysis of the orbits of ${\rm Aut}(\mathfrak{r}_3)$-action directly. The fundamental vector fields of the action of ${\rm Inn}(\mathfrak{r}_3)$ on $\Lambda^2\mathfrak{r}_3$ are given by
	$
	X_1:=z\partial_x,\,\, X_2:=(-y+z)\partial_x,\,\, X_3:= 2x\partial_x + (y+z)\partial_y + z\partial_z.
	$  
	Since automorphisms preserve the set of mCYBE solutions, the above vector fields are tangent to $YB$ and their restrictions to $YB$ read $X_1|_{YB}=0,\,\, X_2|_{YB}=-y\partial_ x,\,\, X_3|_{YB}=2x\partial_ x+y\partial_y$. By Theorem \ref{Th:StSus}, the distribution spanned by $X_1, X_2, X_3$ is integrable. For $y \neq 0$, the integral manifolds are semi-planes $\mathcal{P}_{+} := \{(x,y,z) \in \mathbb{R}^3: x \in \mathbb{R}, y >0, z =0 \}$ and $\mathcal{P}_{-} := \{(x,y,z) \in \mathbb{R}^3: x \in \mathbb{R}, y <0, z =0 \}$. For $y = 0$, we get $\mathcal{L}_{+} := \{(x,y,z) \in \mathbb{R}^3: x> 0, y = 0, z = 0\}$ and $\mathcal{L}_{-} := \{(x,y,z) \in \mathbb{R}^3: x< 0, y = 0, z = 0\}$. Finally for $z=y=x=0$, we obtain the single-point orbit $\{(0,0,0)\}$. All mentioned orbits of ${\rm Inn}(\mathfrak{r}_3)$ are depicted in Figure \ref{r3_inn}. Representative $r$-matrices for these orbits read $r = 0$, $r_1^{\pm} = \pm e_{12}$ and $r_2^{\pm} = \pm e_{13}$.
 
 Let us classify coboundary cocommutators up to the action of elements of ${\rm Aut}(\mathfrak{r}_{3})$ on $YB$. By the remarks in Section \ref{R:3DSec:Aut}, it follows that the derived ideal $[\mathfrak{r}_3,\mathfrak{r}_3]=\langle e_1,e_2\rangle $ is invariant under the action of ${\rm Aut}(\mathfrak{r}_3)$. Thus, $\langle e_{12}\rangle$ is also invariant under the action of ${\rm Aut}(\mathfrak{r}_3)$ on $\Lambda^2\mathfrak{r}_3$. In consequence, the orbits $\mathcal{L}_{\pm}$ will not be identified with any of the semi-planes $\mathcal{P}_{\pm}$.
 
 Consider $T \in {\rm Aut}(\mathfrak{r}_3)$ given by $T_{\alpha}(e_1) = \alpha e_1$, $T_{\alpha} (e_2) = \alpha e_2$ and $T_{\alpha}(e_3) = e_3$. The induced action $\Lambda^2 T$ on $\Lambda^2 \mathfrak{r}_3$ reads $\Lambda^2 T (e_{12}) = \alpha^2 e_{12}$, $\Lambda^2 T (e_{13}) = \alpha e_{13}$ and $\Lambda^2 T (e_{23}) = \alpha e_{23}$. Consequently, $\Lambda^2 T$ identifies both $r_2^{\pm}$ for $\alpha = (-1)$.

 Finally, one needs to verify whether there exists an automorphism $T$ of $\mathfrak{r}_3$ that identifies both $r_1^{\pm}$. As none of our methods can address this issue, we will try to construct this map explicitly. Denote $T(e_i) = \sum_{j=1}^3 a_{ij} e_j$ for $i \in \{1,2,3\}$ and assume $\Lambda^2 T(e_{12}) = -e_{12}$. This implies the following conditions must hold:
 $$
 a_{11} a_{22} - a_{21} a_{12} = (-1), \quad a_{11} a_{23} = a_{21} a_{13}, \quad a_{12} a_{23} = a_{22} a_{13}
 $$
 If both $a_{13}, a_{23} \neq 0$, then it follows that $a_{11} a_{22} = a_{21} a_{12}$, contradicting the first equation. If $a_{13} = 0$ and $a_{23} \neq 0$, then $a_{11} = a_{12} = 0$, which again contradicts the first equation. Similarly, one gets a contradiction for $a_{23} = 0$ and $a_{13} \neq 0$. Thus, $a_{13} = a_{23} = 0$.

 Since the commutator $[e_1, e_3] = -e_1$ yields $[T(e_1), T(e_3)] = -T(e_1)$, we get
 $$
a_{11} a_{33} + a_{12} a_{33} = a_{11}, \quad a_{12} a_{33} = a_{12}
 $$
 If $a_{12} \neq 0$, then $a_33 = 1$ and the first equation gives $a_{12} = 0$, contrary to the assumption. Thus, $a_{12} = 0$ and we are left with $a_{11} a_{33} = a_{11}$. Since $a_{11} a_{22} - a_{21} a_{12} = (-1) = a_{11} a_{22}$, we conclude that $a_{11} \neq 0$ and $a_{33} = 1$.

 By similar reasoning, the commutator $[e_3, e_2] = e_1 + e_2$ yields the condition $a_{11} = a_{22}$. But then, $a_{11} a_{22} = (-1) = a_{11}^2$. As we work over the reals, this equation has no solution. In consequence, there exists no automorphism of $\mathfrak{r}_3$ such that $r_1^{\pm}$ can be identified.
 	
Hence, we have the $r$-matrices inequivalent relative to ${\rm Aut}(\mathfrak{r}_3)$ read
	$
	r_0 = 0, r_{\pm}=\pm e_{12}, r= e_{13},
	$ 
	as depicted in Figure \ref{r3_aut}. 

	\noindent\begin{minipage}{0.45\textwidth}
		\begin{center}
			\includegraphics[scale=0.4]{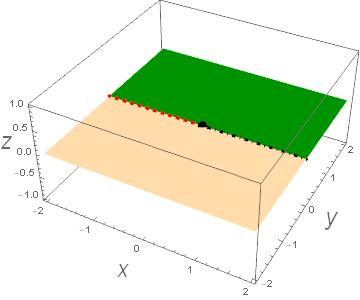}
			\captionof{figure}{\small Orbits of the action of ${\rm Inn}(\mathfrak{r}_3)$ on $YB\subset \Lambda^2 \mathfrak{r}_3$.}\label{r3_inn}
		\end{center}
	\end{minipage}
	$\quad$
		\begin{minipage}{0.45\textwidth}
		\begin{center}
			\includegraphics[scale=0.4]{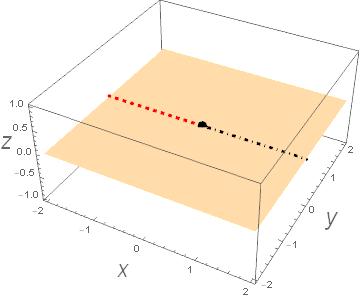}
			\captionof{figure}{\small Orbits of ${\rm Aut}(\mathfrak{r}_3)$ acting on $YB\subset \Lambda^2 \mathfrak{r}_3$.}\label{r3_aut}
		\end{center}
	\end{minipage}
	
	\subsection{Lie algebra \texorpdfstring{$\mathfrak{r}_{3, \lambda}$}{} (\texorpdfstring{$\lambda \in (-1,1)$}{})}
	
	In view of Proposition \ref{Prop:Sym} and the fact that  $\mathfrak{r}_{3,\lambda}$ admits a root gradation and the unique non-zero homogeneous space in $\Lambda^3\mathfrak{r}_{3,\lambda}$ has degree three, one has that  $(\Lambda^3\mathfrak{r}_3)^{\mathfrak{r}_{3,\lambda}}=\{0\}$. The root gradation of $\mathfrak{r}_{3,\lambda}$ also implies that $(\Lambda^2\mathfrak{r}_{3,\lambda})^{\mathfrak{r}_{3,\lambda}}\subset (\Lambda^2\mathfrak{r}_{3,\lambda})^{(0)}=\{0\}$.
 
 If we write an element $r \in \Lambda^2 \mathfrak{r}_{3, \lambda}$ as $r=xe_{12}+ye_{13}+ze_{23}$, then the mCYBE reads $[r,r] =2(\lambda-1)yz \,e_{123}$. Hence, the space of $r$-matrices, denoted by $YB$, consists of tow intersecting planes, one with  $y=0$ and the other with $z=0$. 
	
Similarly to the previous case, $\mathfrak{r}_{3, \lambda}$ does not admit $\mathfrak{r}_{3, \lambda}$-invariant forms. Thus, we will directly inspect the orbits of ${\rm Aut}(\mathfrak{r}_{3, \lambda})$-action on $YB \subset \Lambda^2 \mathfrak{r}_{3, \lambda}$. Let us first classify $r$-matrices up to the action of ${\rm Inn}(\mathfrak{r}_{3,\lambda})$. The fundamental vector fields of the action of  ${\rm Inn}(\mathfrak{r}_{3,\lambda})$ on $\Lambda^2\mathfrak{r}_{3,\lambda}$ are given by
		$
		Z_1 := z{\partial_x}, \,\, Z_2:=-\lambda y{\partial_x},\,\,
		Z_3 := (1+\lambda)x{\partial_x} +y{\partial_ y}+\lambda z{\partial_z}.
		$
We will analyse the distribution $\mathcal{D}$ spanned by $Z_1, Z_2, Z_3$ for three separate subsets of $YB$, namely $YB_1 = \{(x,y,z) \in \mathbb{R}^3: x \in \mathbb{R}, y = 0, z \neq 0\}$ , $YB_2 = \{(x,y,z) \in \mathbb{R}^3: x \in \mathbb{R}, y \neq 0, z = 0\}$ and $YB_3 = \{(x,y,z) \in \mathbb{R}^3: x \in \mathbb{R}, y = 0, z = 0\}$.

Assume $\lambda \neq 0$. On $YB_3$, one immediately notices that $\dim \mathcal{D}|_{YB_3} = 1$ for $x \neq 0$ and $\dim \mathcal{D}|_{YB_3} = 0$ otherwise. Hence, $YB_3$ is divided into three orbits of ${\rm Inn}(\mathfrak{g})$, namely $\{(x,0,0) \in \mathbb{R}^3: x >0\}$, $\{(x,0,0) \in \mathbb{R}^3: x <0\}$ and $\{(0,0,0)\}$. On $YB_1$, the distribution $\mathcal{D}$ is spanned by $Z_1 = z{\partial_x}, Z_3 = (1+\lambda)x{\partial_x} +\lambda z{\partial_z}$. By Proposition \ref{proporb}, one gets that the orbits of ${\rm Inn}(\mathfrak{r}_{3,\lambda})$-action are two-dimensional for $z \neq 0$. Thus, we obtain two separate orbits $\{(x, 0, z) \in \mathbb{R}^3: x \in \mathbb{R}, z > 0\}$ and $\{(x, 0, z) \in \mathbb{R}^3: x \in \mathbb{R}, z < 0\}$. On $YB_2$, the distribution $\mathcal{D}$ is spanned by $Z_2=-\lambda y{\partial_x}, Z_3 := (1+\lambda)x{\partial_x} +y{\partial_ y}$. By Proposition \ref{proporb}, the orbits of ${\rm Inn}(\mathfrak{r}_{3,\lambda})$-action are two-dimensional. Thus, we also obtain two orbits $\{(x, y, 0) \in \mathbb{R}^3: x \in \mathbb{R}, y > 0\}$ and $\{(x, y, 0) \in \mathbb{R}^3: x \in \mathbb{R}, y < 0\}$. 

Therefore, there are seven classes of $r$-matrices inequivalent relative to ${\rm Inn}(\mathfrak{r}_{3,\lambda})$, that is $r = 0$, $r_1^{\pm} = \pm e_{23}$, $r_2^{\pm} = \pm e_{13}$, and $r_3^{\pm} = \pm e_{12}$. 

Let us now classify these $r$-matrices up to the action of ${\rm Aut}(\mathfrak{r}_{3,\lambda})$. Since $[\mathfrak{r}_{3,\lambda},\mathfrak{r}_{3,\lambda}]=\langle e_1,e_2\rangle$ is invariant under ${\rm Aut}(\mathfrak{r}_{3,\lambda})$, the space $\langle e_{12}\rangle $ is also invariant relative to the action of ${\rm Aut}(\mathfrak{r}_{3,\lambda})$. Consider the automorphisms of the form
	$
	T_{\alpha,\beta}(e_1)= \beta e_1,T_{\alpha,\beta}(e_2)= \alpha e_2, T_{\alpha,\beta}(e_3)=e_3,$ for all  $\alpha\in \mathbb{R}\backslash\{0\}$.
The induced map $\Lambda^2 T_{\alpha,\beta}$ reads $\Lambda^2 T_{\alpha, \beta}(e_{12}) = \alpha \beta e_{12}$, $\Lambda^2 T_{\alpha, \beta}(e_{13}) = \beta e_{13}$, $\Lambda^2 T_{\alpha, \beta}(e_{23}) = \alpha e_{23}$. Thus, we conclude that $T_{1, -1}$ identifies both $r_3^{\pm}$. Similarly, $T_{1, -1}$ identifies $r_2^{\pm}$ and $T_{-1,1}$ connects $r_1^{\pm}$. Hence, we obtain four classes of $r$-matrices inequivalent relative to ${\rm Aut}(\mathfrak{r}_{3,\lambda})$, namely $r = 0$, $r_1 = e_{23}$, $r_2 = e_{13}$ and $r_3 = e_{12}$. This is depicted in Figure \ref{r3l_aut}.
		
	Let us now tackle the case $\lambda=0$, namely $\mathfrak{r}_{3,0}$. The analysis of solutions to the mCYBE goes as in the previous case. The fundamental vector fields of the action of  ${\rm Inn}(\mathfrak{r}_{3,0})$ read
	$
	Z_1 := z{\partial_x},  Z_2:=0, Z_3 := x{\partial_x} +y{\partial_y}.
	$
	On $YB_3$, the distribution spanned by $Z_1,Z_2,Z_3$ has rank one for $x\neq 0$ and zero for $x=0$. Therefore, we obtain three orbits: $\{(x, 0, 0) \in \mathbb{R}^3: x > 0\}$, $\{(x, 0, 0) \in \mathbb{R}^3: x < 0\}$, and $\{(0, 0, 0)\}$. In consequence, we get three classes of $r$-matrices: $r = 0$ and $r_1^{\pm} = \pm e_{12}$. 
	
	Restricting the fundamental fields to $YB_1$, we get
	$
	Z_1\vert_{YB_1}=z\partial _x, Z_2\vert_{YB_1}=0, Z_3\vert_{YB_1}=x{\partial _x},
	$
	which span $\langle \partial_x \rangle$. Thus, the orbits of the action of ${\rm Inn}(\mathfrak{r}_{3,0})$ on this space are the lines $\mathcal{L}_{z_0} = \{(x,0,z_0) \in \mathbb{R}^3: x \in \mathbb{R}\}$ with a constant  $z_0\neq 0$. It gives the family of $r$-matrices of the form $r_{a} = a e_{23}$ with $a \in \mathbb{R} \backslash \{0\}$.
 
 Finally, the restriction of $Z_1,Z_2,Z_3$ to $YB_2$ gives
	$
	Z_2\vert_{YB_2} = y\partial_x, Z_3\vert_{YB_2}=x{\partial _x}+y{\partial_y}.
	$ Since $y \neq 0$, Proposition \ref{proporb} yields that the orbits of ${\rm Inn}(\mathfrak{r}_{3,0})$-action are two-dimensional. Thus, there are two such orbits, i.e. $\{(x,y,0) \in \mathbb{R}^3: x \in \mathbb{R}, y > 0 \}$ and $\{(x,y,0) \in \mathbb{R}^3: x in \mathbb{R}, y < 0 \}$. The representative $r$-matrices of these orbits are $r_2^{\pm} = \pm e_{13}$.
	
Consider the maps $T_{\alpha,\beta,\gamma}\in {\rm Aut}(\mathfrak{r}_{3,0})$ of the form $T_{\alpha,\beta,\gamma}(e_1) := \alpha e_1+\gamma e_2, T_{\alpha,\beta,\gamma}(e_2) := \beta e_2,T_{\alpha,\beta,\gamma}(e_3) := e_3, $ with $\alpha,\beta\in \mathbb{R}\backslash \{0\}$ and $\gamma\in \mathbb{R}$. The induced map $\Lambda^2 T_{\alpha,\beta,\gamma}$ read $\Lambda^2 T_{\alpha,\beta,\gamma}(e_{12}) = \alpha \beta e_{12}$, $\Lambda^2 T_{\alpha,\beta,\gamma}(e_{13}) = \alpha e_{13}$ and $\Lambda^2 T_{\alpha,\beta,\gamma}(e_{23}) = \beta e_{23}$. It is easy to verify that $T_{-1,1,1}$ identifies both elements in $r_1^{\pm}$ and $r_2^{\pm}$. Moreover, the maps $T_{1,\beta,1}$ for $\beta \in \mathbb{R} \backslash \{0\}$ identify all the elements in the family $r_a$. Hence, we get the following classes of inequivalent $r$-matrices: $r = 0$, $r_1 = e_{12}$, $r_2 = e_{13}$ and $r_z = e_{23}$, as shown in Figure \ref{r3l_aut}.
		
		\noindent\begin{minipage}{0.45\textwidth}
		\begin{center}
			\includegraphics[scale=0.35]{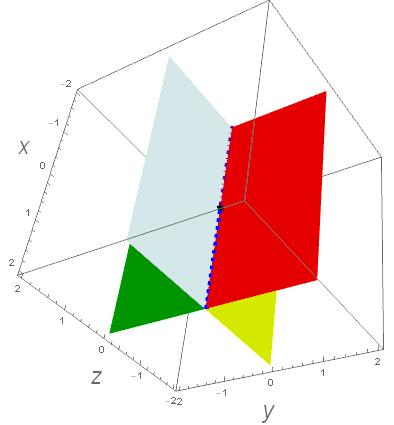}
			\captionof{figure}{Orbits of the action ${\rm Inn}(\mathfrak{r}_{3,\lambda})$ on $YB\subset \Lambda^2 \mathfrak{r}_{3, \lambda}$.}\label{r3l_inn}
		\end{center}
	\end{minipage}$\quad$
		\begin{minipage}{0.45\textwidth}
		\begin{center}
			\includegraphics[scale=0.35]{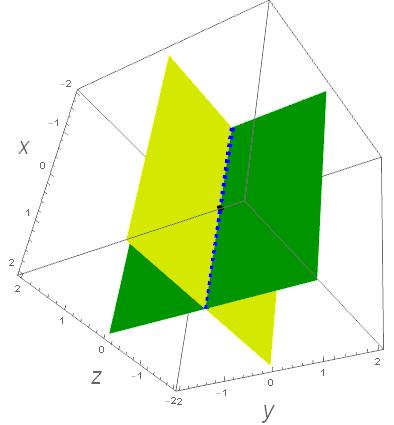}
			\captionof{figure}{\small Orbits of the action of ${\rm Aut}(\mathfrak{r}_{3,\lambda})$ on $YB\subset \Lambda^2 \mathfrak{r}_{3,\lambda}$.}\label{r3l_aut}
		\end{center}
	\end{minipage}

	\subsection{Lie algebra \texorpdfstring{$\mathfrak{r}'_{3, \lambda} (\lambda > 0)$}{}}
	
	It stems from Table \ref{tabela3w} that the unique non-zero homogeneous space in $\Lambda^3{\mathfrak{r}'_{3, \lambda}}$ is not invariant relative to the action of the basis element $e_3$ and hence $(\Lambda^3{\mathfrak{r}'_{3, \lambda}})^{\mathfrak{r}'_{3, \lambda}}=0$.   Due to Proposition \ref{LemgDe}, the space $(\Lambda^2{\mathfrak{r}'_{3, \lambda}})^{\mathfrak{r}'_{3, \lambda}}$ can be easily determined by inspecting elements within each homogeneous subspace in $\Lambda^2{\mathfrak{r}'_{3, \lambda}}$. Using gradation in Table \ref{tabela3w}, one verifies that $(\Lambda^2\mathfrak{r}'_{3,\lambda})^{\mathfrak{r}'_{3,\lambda}}=0$, since $\lambda\neq 0$.

The corresponding mCYBE read $[r,r] =-2(y^2+z^2)e_{123}$, where $r \in \Lambda^2 \mathfrak{r}'_{3, \lambda}$ is written in the basis $\{e_{12}, e_{13}, e_{23}\}$ of $\Lambda^2 \mathfrak{r}'_{3, \lambda}$ as $r = x e_{12} + y e_{13} + z e_{23}$. Thus, the space of mCYBE solutions is $YB:=\{(x,y,z)\in \Lambda^2{\mathfrak{r}'_{3, \lambda}}: y=z=0\}$. 
 
	\noindent
		\begin{minipage}{0.4\textwidth}
		\begin{center}
			\includegraphics[scale=0.4]{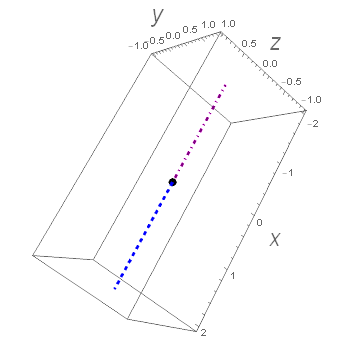}
			\captionof{figure}{\small Orbits of the action of ${\rm Aut}(\mathfrak{r}'_{3,\lambda})$ on $YB\subset \Lambda^2 \mathfrak{r}'_{3,\lambda}$.}\label{r3'l_aut}
		\end{center}
	\end{minipage}$\quad$
	\begin{minipage}{0.55\textwidth}
 \setlength{\baselineskip}{14pt}
 By direct calculation one obtains that there are no $\mathfrak{r}'_{3,\lambda}$-invariant forms on $\mathfrak{r}'_{3,\lambda}$. Then, we resort to the direct analysis of the orbits of ${\rm Aut}(\mathfrak{r}'_{3,\lambda})$. The fundamental vector fields of the ${\rm Inn}(\mathfrak{r}'_{3,\lambda})$-action on $\Lambda^2 \mathfrak{r}'_{3, \lambda}$ are spanned by $Z_1 = (y + \lambda z)\partial_x$, $Z_2 = (z - \lambda y)\partial_x$ and $Z_3 = 2\lambda x \partial_x + (\lambda y + z)\partial_y + (\lambda z - y)\partial_z$. On $YB$, these vector fields reduce to a single nonzero $Z_3\vert_{YB} = 2\lambda x \partial_x$. Thus, we immediately conclude that there are three orbits of ${\rm Inn}(\mathfrak{r}'_{3,\lambda})$-action on $YB$, namely $\{(x, 0, 0) \in \mathbb{R}^3: x > 0\}$, $\{(x, 0, 0) \in \mathbb{R}^3: x < 0\}$ and $\{(0,0,0)\}$. Thus, we obtain two classes of $r$-matrices inequivalent relative to ${\rm Inn}(\mathfrak{r}'_{3,\lambda})$, namely $r = 0$ and $r_{\pm} = \pm e_{12}$. Finally, let us attempt to construct the automorphism $T \in {\rm Aut}(\mathfrak{r}'_{3,\lambda})$ that would connect $r_{\pm}$. Let us assume the induced map $\Lambda^2 T$ satisfies $\Lambda^2 T(e_{12}) = -e_{12}$.
 	\end{minipage}
 
Denote $T(e_i) = \sum_{j=1}^3 a_{ij} e_j$ for $i \in \{1,2,3\}$. Then, the condition $\Lambda^2 T(e_{12}) = -e_{12}$ yields $a_{13} = a_{23} = 0$. One can show by a long computation that $[T(e_1), T(e_3)] = T([e_1, e_3])$ and $[T(e_3), T(e_2)] = T([e_3, e_2])$ result in the following set of equations:
\begin{equation*}
\begin{split}
&(a_{33} - 1)(a_{11} + a_{22}) = 0, \qquad (a_{33} - 1)(a_{21} - a_{12}) = 0, \\
&(1 + \lambda^2) a_{11} a_{33} = \lambda^2 a_{11} - \lambda a_{12} - \lambda a_{21} + a_{22}, \\
&(1 + \lambda^2) a_{12} a_{33} = \lambda a_{11} - a_{12} - \lambda^2 a_{21} - \lambda a_{22}.
\end{split}
\end{equation*}
If $a_{33} \neq 1$, then $a_{11} = -a_{22}$, $a_{12} = a_{21}$ and  we get
\begin{equation*}
(1 + \lambda^2) a_{11} a_{33} = a_{11} (\lambda^1 - 1) - 2\lambda a_{12}, \qquad
(1 + \lambda^2) a_{12} a_{33} = 2\lambda a_{11} + a_{12} (\lambda^2 - 1).
\end{equation*}
It follows that $a_{12} [(1 + \lambda^2) a_{33} - (\lambda^2 - 1)]^2 = -4\lambda^2 a_{12}$. Since we work over reals, $[(1 + \lambda^2) a_{33} - (\lambda^2 - 1)]^2 = -4\lambda^2$ has no solutions for $a_{33}$ and $a_{12} = 0$. In consequence, the third equation gives $a_{11} = 0$, contradicting that $a_{11} a_{22} - a_{12} a_{21} = (-1)$.

Similarly, assume $a_{33} = 1$. Then, we have
\begin{equation*}
(1 + \lambda^2) a_{11} = a_{11} (\lambda^1 - 1) - 2\lambda a_{12}, \qquad
(1 + \lambda^2) a_{12} = 2\lambda a_{11} + a_{12} (\lambda^2 - 1).
\end{equation*}
These equations can be rewritten as $a_{11} = -\lambda a_{12}$ and $a_{12} = \lambda a_{11}$. Then, $a_{11} = -\lambda^2 a_{11}$ and $a_{11} = 0$ as $\lambda > 0$. Consequently, $a_{12} = 0$ and this again contradicts the condition $a_{11} a_{22} - a_{12} a_{21} = (-1)$.

Therefore, there are no automorphisms of $\mathfrak{r}'_{3,\lambda}$ such that $r_{\pm}$ can be identified. Hence, we get the following classes of $r$-matrices inequivalent relative to ${\rm Aut}(\mathfrak{r}'_{3,\lambda})$: $r = 0$ and $r_{\pm} = \pm e_{12}$, illustrated in Figure \ref{r3'l_aut}.

		\begin{landscape}
		\begin{table}
		{\small
			\centering
			\begin{tabular}{|c|c|c|c|c|c|c|c|c|c|}
				\hline
				& $[e_1,e_2]$ &$[e_1,e_3]$&$[e_3,e_2]$ & $\mathcal{S}_k$& $G$ & $\mathfrak{g}$ &$\Lambda^2 \mathfrak{g}$&$\Lambda^3 \mathfrak{g}$ &Root\\
				\hline 
				$\mathfrak{sl}_2$& $e_2$ &$-e_3,$&$ -e_1$& \parbox[c]{4.6cm}{\vspace{3pt}$2xy-z^2 = k$, $k \in \mathbb{R}_+$\\$2xy-z^2=0, z\neq 0$, $k=0$\\$2xy-z^2 = k$, $k \in \mathbb{R}_-$\vspace{3pt}}&
				$\mathbb{Z}$&
				\parbox[c]{2.9cm}{
					\centering
					\begin{tikzpicture}[scale = 0.5]
					{
						\filldraw (0,0) circle (1pt) node[above]{$e_1$} node[below]{$(0)$};
						\draw [-] (0,0)--(2,0) node[above]{$e_2$};
						\filldraw (2,0) circle (1pt) node[below]{$(1)$};
						\draw [-] (0,0)--(-2,0) node[above]{$e_3$};
						\filldraw (-2,0) circle (1pt) node[below]{$(-1)$};
					}
					\end{tikzpicture}
				}
				&
				\parbox[c]{2.9cm}{
					\centering
					\begin{tikzpicture}[scale = 0.5]
					{
						\filldraw (0,0) circle (1pt) node[above]{$e_{23}$} node[below]{$(0)$};
						\draw [-] (0,0)--(2,0) node[above]{$\mathbf{e_{12}}$};
						\filldraw (2,0) circle (1pt) node[below]{$(1)$};
						\draw [-] (0,0)--(-2,0) node[above]{$\mathbf{e_{13}}$};
						\filldraw (-2,0) circle (1pt) node[below]{$(-1)$};
					}
					\end{tikzpicture}
				}
				&
				\parbox[c]{1.5cm}{
					\centering
					\begin{tikzpicture}[scale = 0.5]
					{
						\filldraw (0,0) circle (1pt) node[above]{$e_{123}$} node[below]{$(0)$};
						\draw [-] (0,0)--(1,0);
						\draw [-] (0,0)--(-1,0);
					}
					\end{tikzpicture}
				}&Yes
				\\\hline
				$\mathfrak{su}_2$ &
				$e_3$&$-e_2$&$-e_1$& \parbox[c]{4.6cm}{\vspace{3pt}$x^2 + y^2 + z^2 = k$, $k \in \mathbb{R}_{+}$\vspace{3pt}}&
				$\mathbb{Z}_2$&
				\parbox[c]{2.9cm}{
					\centering
					\begin{tikzpicture}[scale = 0.5]
					{
						\filldraw (0,0) circle (1pt) node[above]{$e_a$} node[below]{$(0)$};
						\draw [-] (0,0)--(3,0) node[above]{$e_b, e_c$};
						\filldraw (3,0) circle (1pt) node[below]{$(1)$};
					}
					\end{tikzpicture}
				}
				&
				\parbox[c]{2.9cm}{
					\centering
					\begin{tikzpicture}[scale = 0.5]
					{
						\filldraw (0,0) circle (1pt) node[above]{$e_{bc}$} node[below]{$(0)$};
						\draw [-] (0,0)--(3,0) node[above]{${ e_{ba}, e_{ac}}$};
						\filldraw (3,0) circle (1pt) node[below]{$(1)$};
					}
					\end{tikzpicture}
				}
				&
				\parbox[c]{1.5cm}{
					\centering
					\begin{tikzpicture}[scale = 0.5]
					{
						\filldraw (0,0) circle (1pt) node[above]{$e_{abc}$} node[below]{$(0)$};
						\draw [-] (0,0)--(1,0);
						\draw [-] (0,0)--(-1,0);
					}
					\end{tikzpicture}
				}&No
				\\ \hline
				$\mathfrak{h}$&
				$e_3$& $0$ &$0$& \parbox[c]{4.6cm}{\vspace{3pt}$z\neq 0$, $k = 1$\vspace{3pt}} &$\mathbb{Z}$&
				\parbox[c]{2.9cm}{
					\centering
					\begin{tikzpicture}[scale = 0.5]
					{
						\filldraw (0,0) circle (1pt) node[above]{$e_{2}$} node[below]{$(2)$};
						\draw [-] (0,0)--(2,0) node[above]{$e_{3}$};
						\filldraw (2,0) circle (1pt) node[below]{$(3)$};
						\draw [-] (0,0)--(-2,0) node[above]{$e_{1}$};
						\filldraw (-2,0) circle (1pt) node[below]{$(1)$};
					}
					\end{tikzpicture}
				}
				&
				\parbox[c]{2.9cm}{
					\centering
					\begin{tikzpicture}[scale = 0.5]
					{
						\filldraw (0,0) circle (1pt) node[above]{$\mathbf{e_{13}}$} node[below]{$(4)$};
						\draw [-] (0,0)--(2,0) node[above]{$\mathbf{e_{23}}$};
						\filldraw (2,0) circle (1pt) node[below]{$(5)$};
						\draw [-] (0,0)--(-2,0) node[above]{$e_{12}$};
						\filldraw (-2,0) circle (1pt) node[below]{$(3)$};
					}
					\end{tikzpicture}
				}
				&
				\parbox[c]{1.5cm}{
					\centering
					\begin{tikzpicture}[scale = 0.5]
					{
						\filldraw (0,0) circle (1pt) node[above]{$e_{123}$} node[below]{$(6)$};
						\draw [-] (0,0)--(1,0);
						\draw [-] (0,0)--(-1,0);
					}
					\end{tikzpicture}
				}&No
				\\ \hline
				$\mathfrak{r}'_{3,0}$ &$
				-e_3$&$e_2$&$0$& \parbox[c]{4.6cm}{\vspace{3pt}$x^2 + y^2 > 0$, $k = 1$\vspace{3pt}}&
				$\mathbb{Z}$&
				\parbox[c]{2.9cm}{
					\centering
					\begin{tikzpicture}[scale = 0.5]
					{
						\filldraw (0,0) circle (1pt) node[above]{$e_{1}$} node[below]{$(0)$};
						\draw [-] (0,0)--(3,0);
						\filldraw (3,0) circle (1pt) node[above]{$e_{2}, e_3$} node[below]{$(1)$};
					}
					\end{tikzpicture}
				}
				&
				\parbox[c]{2.9cm}{
					\centering
					\begin{tikzpicture}[scale = 0.5]
					{
						\filldraw (0,0) circle (1pt) node[above]{$e_{12}, e_{13}$} node[below]{$(1)$};
						\draw [-] (0,0)--(3,0);
						\filldraw (3,0) circle (1pt) node[above]{$\mathbf{e_{23}}$} node[below]{$(2)$};
					}
					\end{tikzpicture}
				}
				&
				\parbox[c]{1.5cm}{
					\centering
					\begin{tikzpicture}[scale = 0.5]
					{
						\filldraw (0,0) circle (1pt) node[above]{$e_{123}$} node[below]{$(2)$};
						\draw [-] (0,0)--(1,0);
						\draw [-] (0,0)--(-1,0);
					}
					\end{tikzpicture}
				}&No
				\\\hline $\mathfrak{r}_{3,-1}$&$e_2$ &$-e_3$ &$0$& \parbox[c]{4.6cm}{\vspace{3pt}$xy=0$, $x^2+y^2\neq 0$, $k = 1$, \\ $xy \neq 0 \neq 0$, $k = 2$\vspace{3pt}}&
				$\mathbb{Z}$&
				\parbox[c]{2.9cm}{
					\centering
					\begin{tikzpicture}[scale = 0.5]
					{
						\draw [-] (0,0)--(-2,0);
						\filldraw (-2,0) circle (1pt) node[above]{$e_{3}$} node[below]{$(-1)$};
						\filldraw (0,0) circle (1pt) node[above]{$e_{1}$} node[below]{$(0)$};
						\draw [-] (0,0)--(2,0);
						\filldraw (2,0) circle (1pt) node[above]{$e_{2}$} node[below]{$(1)$};
					}
					\end{tikzpicture}
				}
				&
				\parbox[c]{2.9cm}{
					\centering
					\begin{tikzpicture}[scale = 0.5]
					{
						\draw [-] (0,0)--(-2,0);
						\filldraw (-2,0) circle (1pt) node[above]{$\mathbf{e_{13}}$} node[below]{$(-1)$};
						\filldraw (0,0) circle (1pt) node[above]{$e_{23}$} node[below]{$(0)$};
						\draw [-] (0,0)--(2,0);
						\filldraw (2,0) circle (1pt) node[above]{$\mathbf{e_{12}}$} node[below]{$(1)$};
					}
					\end{tikzpicture}
				}
				&
				\parbox[c]{1.5cm}{
					\centering
					\begin{tikzpicture}[scale = 0.5]
					{
						\filldraw (0,0) circle (1pt) node[above]{$e_{123}$} node[below]{$(0)$};
						\draw [-] (0,0)--(1,0);
						\draw [-] (0,0)--(-1,0);
					}
					\end{tikzpicture}
				}&Yes
				\\\hline
				\multirow{2}{*}{\vspace{-0.7cm}$\mathfrak{r}_{3,1}$}&\multirow{2}{*}{\vspace{-0.7cm}$e_2$} &\multirow{2}{*}{\vspace{-0.7cm}$e_3$} &\multirow{2}{*}{\vspace{-0.7cm}$0$}&\multirow{2}{*}{\vspace{-0.7cm}\parbox[c]{4.6cm}{\vspace{3pt}$x^2+y^2\neq 0, z \in \mathbb{R}$, $k = 1$,\\ $x = y = 0, z \neq 0$, $k = 2$\vspace{3pt}}}&
				$\mathbb{Z}$&
				\parbox[c]{2.9cm}{
					\centering
					\begin{tikzpicture}[scale = 0.5]
					{
						\filldraw (0,0) circle (1pt) node[above]{$e_{1}$} node[below]{$(0)$};
						\draw [-] (0,0)--(3,0);
						\filldraw (3,0) circle (1pt) node[above]{$e_{2}, e_{3}$} node[below]{$(1)$};
					}
					\end{tikzpicture}
				}
				&
				\parbox[c]{2.9cm}{
					\centering
					\begin{tikzpicture}[scale = 0.5]
					{
						\filldraw (0,0) circle (1pt) node[above]{$e_{12}, e_{13}$} node[below]{$(1)$};
						\draw [-] (0,0)--(3,0);
						\filldraw (3,0) circle (1pt) node[above]{$\mathbf{e_{23}}$} node[below]{$(2)$};
					}
					\end{tikzpicture}
				}
				&
				\parbox[c]{1.5cm}{
					\centering
					\begin{tikzpicture}[scale = 0.5]
					{
						\filldraw (0,0) circle (1pt) node[above]{$e_{123}$} node[below]{$(2)$};
						\draw [-] (0,0)--(1,0);
						\draw [-] (0,0)--(-1,0);
					}
					\end{tikzpicture}
				}&Yes
				\\\cline{6-10}
				& & &&&
				$\mathbb{Z}^2$&
				\parbox[c]{2.9cm}{
					\centering
					\begin{tikzpicture}[scale = 0.5]
					{
						\filldraw (0,0) circle (1pt) node[left]{$e_{1}$} node[below]{$(0,0)$};
						\draw [-] (0,0)--(3,0);
						\draw [-] (0,0)--(0,1);
						\filldraw (3,0) circle (1pt) node[right]{$e_{2}$} node[below]{$(1,0)$};
						\filldraw (0,1) circle (1pt) node[right]{$e_{3}$} node[left]{$(0,1)$};
					}
					\end{tikzpicture}
				}
				&
				\parbox[c]{2.9cm}{
					\centering
					\begin{tikzpicture}[scale = 0.5]
					{
						\filldraw (3,0) circle (1pt) node[right]{${\bf e_{12}}$} node[left]{$(1,0)$};
						\filldraw (0,1) circle (1pt) node[above]{${\bf e_{13}}$} node[below]{$(0,1)$};
						\draw [-] (0,1)--(3,1);
						\draw [-] (3,0)--(3,1);
						\filldraw (3,1) circle (1pt) node[above]{${\bf e_{23}}$} node[right]{$(1,1)$};
					}
					\end{tikzpicture}
				}
				&
				\parbox[c]{1.5cm}{
					\centering
					\begin{tikzpicture}[scale = 0.5]
					{
						\filldraw (0,0) circle (1pt) node[above]{$e_{123}$} node[below]{$(1,1)$};
						\draw [-] (0,0)--(1,0);
						\draw [-] (0,0)--(-1,0);
					}
					\end{tikzpicture}
				}&No
				\\\hline
				$\mathfrak{r}_{3}$&
				$0$ &$-e_1$ &$e_1+e_2$&\parbox[c]{4.6cm}{\vspace{3pt}$x>0, y=0, z = 0$ $k = 1$,\\ $x < 0, y = 0, z = 0$, $k = 2$, \\ $ y \neq  0, z = 0$, $k = 3$\vspace{3pt}}&
				$\mathbb{Z}$&
				\parbox[c]{2.9cm}{
					\centering
					\begin{tikzpicture}[scale = 0.5]
					{
						\filldraw (0,0) circle (1pt) node[above]{$e_{3}$} node[below]{$(0)$};
						\draw [-] (0,0)--(3,0);
						\filldraw (3,0) circle (1pt) node[above]{$e_{1}, e_{2}$} node[below]{$(1)$};
					}
					\end{tikzpicture}
				}
				&
				\parbox[c]{2.9cm}{
					\centering
					\begin{tikzpicture}[scale = 0.5]
					{
						\filldraw (0,0) circle (1pt) node[above]{$e_{13}, e_{23}$} node[below]{$(1)$};
						\draw [-] (0,0)--(3,0);
						\filldraw (3,0) circle (1pt) node[above]{$\mathbf{e_{12}}$} node[below]{$(2)$};
					}
					\end{tikzpicture}
				}
				&
				\parbox[c]{1.5cm}{
					\centering
					\begin{tikzpicture}[scale = 0.5]
					{
						\filldraw (0,0) circle (1pt) node[above]{$e_{123}$} node[below]{$(2)$};
						\draw [-] (0,0)--(1,0);
						\draw [-] (0,0)--(-1,0);
					}
					\end{tikzpicture}
				}&No
				\\\hline
				$\mathfrak{r}_{3,\lambda \in (-1,1)}$ &
				$0$ &$-e_1$ &$\lambda e_2$ &\parbox[c]{4.6cm}{\vspace{3pt}$y =0, z \neq 0, x \in \mathbb{R}$, $k = 1$, \\ $z = 0, y \neq 0, x \in \mathbb{R}$, $k = 2$, \\ $y = 0, z = 0, x\in \mathbb{R}$, $k = 3$\vspace{3pt}}&
				$\mathbb{R}$&
				\parbox[c]{2.9cm}{
					\centering
					\begin{tikzpicture}[scale = 0.5]
					{
						\draw [-] (0,0)--(-2,0);
						\filldraw (-2,0) circle (1pt) node[above]{$e_{3}$} node[below]{$(0)$};
						\filldraw (0,0) circle (1pt) node[above]{$e_{1}$} node[below]{$(1)$};
						\draw [-] (0,0)--(2,0);
						\filldraw (2,0) circle (1pt) node[above]{$e_{2}$} node[below]{$(\lambda)$};
					}
					\end{tikzpicture}
				}
				&
				\parbox[c]{2.9cm}{
					\centering
					\begin{tikzpicture}[scale = 0.5]
					{
						\draw [-] (0,0)--(-2,0);
						\filldraw (-2,0) circle (1pt) node[above]{${\bf e_{13}}$} node[below]{$(1)$};
						\filldraw (0,0) circle (1pt) node[above]{${\bf e_{23}}$} node[below]{$(\lambda)$};
						\draw [-] (0,0)--(2,0);
						\filldraw (2,0) circle (1pt) node[above]{${\bf e_{12}}$} node[below]{$(1+\lambda)$};
					}
					\end{tikzpicture}
				}
				&
				\parbox[c]{1.5cm}{
					\centering
					\begin{tikzpicture}[scale = 0.5]
					{
						\draw [-] (0,0)--(-1,0);
						\filldraw (0,0) circle (1pt) node[above]{$e_{123}$} node[below]{$(1+\lambda)$};
						\draw [-] (0,0)--(1,0);
					}
					\end{tikzpicture}
				}&Yes
				\\\hline
				$\mathfrak{r}'_{3,\lambda > 0}$&
				$0$ &$e_2-\lambda e_1$ &$\lambda e_2+e_1$&\parbox[c]{4.6cm}{\vspace{3pt}$x > 0, y = 0, z = 0$, $k = 1$, \\ $x < 0, y = 0, z = 0$, $k = 2$\vspace{3pt}}&
				$\mathbb{Z}$&
				\parbox[c]{2.9cm}{
					\centering
					\begin{tikzpicture}[scale = 0.5]
					{
						\filldraw (0,0) circle (1pt) node[above]{$e_{3}$} node[below]{$(0)$};
						\draw [-] (0,0)--(3,0);
						\filldraw (3,0) circle (1pt) node[above]{$e_{1}, e_{2}$} node[below]{$(1)$};
					}
					\end{tikzpicture}
				}
				&
				\parbox[c]{2.9cm}{
					\centering
					\begin{tikzpicture}[scale = 0.5]
					{
						\filldraw (0,0) circle (1pt) node[above]{$e_{13}, e_{23}$} node[below]{$(1)$};
						\draw [-] (0,0)--(3,0);
						\filldraw (3,0) circle (1pt) node[above]{$\mathbf{e_{12}}$} node[below]{$(2)$};
					}
					\end{tikzpicture}
				}
				&
				\parbox[c]{1.5cm}{
					\centering
					\begin{tikzpicture}[scale = 0.5]
					{
						\filldraw (0,0) circle (1pt) node[above]{$ e_{123}$} node[below]{$(2)$};
						\draw [-] (0,0)--(1,0);
						\draw [-] (0,0)--(-1,0);
					}
					\end{tikzpicture}
				}&No
				\\\hline
			\end{tabular}
			\caption{Commutation relations, non-zero orbits of equivalent $r$-matrices, and $G$-gradations of $\mathfrak{g}$ with $\dim\mathfrak{g}=3$ along with decompositions on their Grassmann algebras. The letters $a,b,c$ stand for arbitrary different values within $\{1,2,3\}$. As throughout the work, $e_{i_1\ldots i_r}:=e_{i_1}\wedge\ldots\wedge e_{i_r}$, with $i_1,\ldots,i_r\in \overline{1,r}$, and $\{x,y,z\}$ is the dual basis to $\{e_{12},e_{13},e_{23}\}$. Solutions of CYBEs obtained via gradations are written in bold. The spaces $\mathcal{S}_k$ are the equivalence classes of reduced $r$-matrices.
			}\label{tabela3w}
			}
	\end{table}
	\end{landscape}
 
	\section{A four-dimensional case}\label{Ch:alg_Sec:4D}
In the previous section, we used the methods introduced at the beginning of this chapter to classify coboundary Lie bialgebra structure on three-dimensional Lie algebras. As it is the simplest nontrivial case that has been thoroughly studied \cite{FJ15, Go00}, it mostly served as a benchmark for the introduced formalism. In order to illustrate that our techniques facilitate the analysis of higher-dimensional examples, let us classify coboundary Lie bialgebra structures on $\mathfrak{gl}_2$. 
	
 First, note that $\mathfrak{gl}_2=\mathfrak{sl}_2\oplus \mathbb{R}$. Denote a basis of $\mathfrak{gl}_2$ by $\{e_1,\ldots,e_4\}$, where $e_1,e_2,e_3$ satisfy the conditions for $\mathfrak{sl}_2$ in Table \ref{tabela3w} and $\langle e_4\rangle $ is the centre of $\mathfrak{gl}_2$. One has the following gradations for $\mathfrak{gl}_2$ and the associated Grassmann spaces $\Lambda^2 \mathfrak{gl}_2$ and $\Lambda^3 \mathfrak{gl}_2$:

	\begin{center}
	\begin{tikzpicture}[scale = 0.5]
		{
			\filldraw (0,0) circle (1pt) node[above left]{$e_1$} node[below]{$(0,0)$};
			\draw [-] (0,0)--(3,0)
			node[above]{$e_2$};
			\filldraw (3,0) circle (1pt) node[below]{$(1,0)$};
			\draw [-] (0,0)--(-3,0) node[above]{$e_3$};
			\filldraw (-3,0) circle (1pt) node[below]{$(-1,0)$};
			\filldraw (0,2) circle (1pt) node[right]{$e_4$};
			\draw [-] (0,0)--(0,2) node[left]{$(0,1)$};
		}
		\end{tikzpicture}$\qquad$			\begin{tikzpicture}[scale = 0.5]
		{
			\filldraw (0,0) circle (1pt) node[below]{$e_{23}$};
			\draw [-] (0,0)--(3,0) node[below]{$\mathbf{e_{12}}$};
			\filldraw (3,0) circle (1pt);
			\draw [-] (0,0)--(-3,0) node[below]{$\mathbf{e_{13}}$};
			\filldraw (-3,0) circle (1pt);
			\filldraw (0,2) circle (1pt);
			\draw [-] (0,0)--(0,2) node[above]{$\mathbf{e_{14}}$};
			\filldraw (3,2) circle (1pt);
			\draw [-] (0,0)--(3,2) node[above]{$\mathbf{e_{24}}$};
			\filldraw (-3,2) circle (1pt);
			\draw [-] (0,0)--(-3,2) node[above]{$\mathbf{e_{34}}$};
		}
		\end{tikzpicture}	
		\begin{tikzpicture}[scale = 0.5]
		{
			\filldraw (0,0) circle (1pt) node[below]{$e_{123}$};
			\draw [-] (0,0)--(3,2) node[above]{$e_{124}$};
			\filldraw (3,2) circle (1pt);
	    	\draw [-] (0,0)--(-3,2) node[above]{$e_{134}$};
			\filldraw (-3,2) circle (1pt);
			\draw [-] (0,0)--(0,2) node[above]{$e_{234}$};
			\filldraw (0,2) circle (1pt);
		}
		\end{tikzpicture}
		\end{center}
						
It is straightforward to verify that $(\Lambda^2\mathfrak{gl}_2)^{\mathfrak{gl}_2}=\{0\}$ and  $(\Lambda^3\mathfrak{gl}_2)^{\mathfrak{gl}_2}=\{e_{123}\}$. Let $\{e_{12},e_{13},e_{14},e_{23},e_{24},e_{34}\}$ be the basis of $\Lambda^2\mathfrak{gl}_2$, and write its dual as $\{\lambda_{12},\lambda_{13},\lambda_{14},\lambda_{23},\lambda_{24},\lambda_{34}\}$. With the help of computer software (see Appendix \ref{App:code} for an exemplary code used for examples in this thesis), one finds that in the given basis of $\Lambda^2 \mathfrak{gl}_2$, symmetric $\mathfrak{gl}_2$-invariant forms $b_{\Lambda^2 \mathfrak{gl}_2}$ on $\Lambda^2\mathfrak{gl}_2$ read
$$
[b_{\Lambda^2 \mathfrak{gl}_2}] = \left(
\begin{array}{cccccc}
0 & a & 0 & 0 & 0 & -b \\
a & 0 & 0 & 0 & b & 0 \\
0 & 0 & c & -b & 0 & 0 \\
0 & 0 & -b & -a & 0 & 0 \\
0 & b & 0 & 0 & 0 & c \\
-b & 0 & 0 & 0 & c & 0
\end{array}
\right), \qquad a,b,c \in \mathbb{R}.
$$
In the new basis $\{\tilde{e}_1, \tilde{e}_2, \tilde{e}_3, \tilde{e}_4, \tilde{e}_5 ,\tilde{e}_6\}$ of $\Lambda^2 \mathfrak{gl}_2$, given by 
$$
\tilde{e}_1=\frac{e_{24}+e_{34}}{\sqrt{2}},\,\, \tilde{e}_2=e_{14}, \,\,  \tilde{e}_3=\frac{e_{34}-e_{24}}{\sqrt{2}}, \,\, \tilde{e}_4=\frac{e_{12}-e_{13}}{\sqrt{2}},\,\, \tilde{e}_5=e_{23},\,\,
\tilde{e}_6=\frac{e_{12}+e_{13}}{\sqrt{2}},
$$
we obtain the following form of $b_{\Lambda^2 \mathfrak{gl}_2}$:
$$
[b_{\Lambda^2 \mathfrak{gl}_2}] = \left(
\begin{array}{cccccc}
c & 0 & 0 & -b & 0 & 0 \\
0 & c & 0 & 0 & -b & 0 \\
0 & 0 & -c & 0 & 0 & -b \\
-b & 0 & 0 & -a & 0 & 0 \\
0 & -b & 0 & 0 & -a & 0 \\
0 & 0 & -b & 0 & 0 & a
\end{array}
\right), \qquad a,b,c \in \mathbb{R}.
$$

Let us define
	\begin{equation}\label{ChVa}
\begin{gathered}
x=\frac{\lambda_{24}+\lambda_{34}}{\sqrt{2}},\,\, y=\lambda_{14}, \,\,  z=\frac{\lambda_{34}-\lambda_{24}}{\sqrt{2}}, \,\, \dot x=\frac{\lambda_{12}-\lambda_{13}}{\sqrt{2}},\,\, \dot y=\lambda_{23},\,\,
\dot z=\frac{\lambda_{12}+\lambda_{13}}{\sqrt{2}},
\end{gathered}
\end{equation}
which can be understood as a global adapted coordinate system to $T\mathbb{R}^3$. 
Then, we conclude that the invariant metrics on $\Lambda^2\mathfrak{gl}_2$ are linear combinations of
	$$
 f_1=z^2-x^2-y^2, \qquad f_2=\dot z^2-\dot y^2-\dot x^2, \qquad f_3=x\dot x+y\dot y+z\dot z.
 $$
 
In the next step, let us analyse the orbits of the action of ${\rm Inn}(\mathfrak{gl}_2)$ on $\Lambda^2\mathfrak{gl}_2$. The fundamental vector fields of this action are spanned by
	\begin{equation}\label{gl2_alg_fvf}
\begin{gathered}
X_1 = \dot x \partial_{\dot z} + \dot z \partial_{\dot x} - x \partial_{z} - z \partial_{x}, \quad 
X_2 = - \dot y \partial_{\dot x} -  (\dot x - \dot z) \partial_{\dot y} -  \dot y \partial_{\dot z} -  y \partial_{x} +  (x + z) \partial_{y} +  y \partial_{z}, \\
X_3 = - \dot y \partial_{\dot x} -  (\dot x + \dot z) \partial_{\dot y} +  \dot y \partial_{\dot z} +  y \partial_{x} -  (x - z) \partial_{y} +  y \partial_{z}.
\end{gathered}
\end{equation}

For $p=(x,y,z,\dot x,\dot y,\dot z) \in T \mathbb{R}$, an easy but tedious calculation gives that $\mathcal{D}_p=\langle (X_1)_p,(X_2)_p,(X_3)_p\rangle$ has dimension three if and only if the subspace of $\mathbb{R}^3$ spanned by the vectors $(-x,-y,z)$ and $(\dot x,\dot y,\dot z)$ is two-dimensional. 
Moreover, $\dim\mathcal{D}_0=0$ and $\dim \mathcal{D}_p=2$ elsewhere. The connected components of the orbits of the action of ${\rm Inn}(\mathfrak{gl}_2)$ on $\Lambda^2\mathfrak{gl}_2$ are, analogously as in the previous section, the maximal connected submanifolds of each strata $S_{(k_1,k_2,k_3)}=f_1^{-1}(k_1)\cap f_2^{-1}(k_2)\cap f_3^{-1}(k_3)$. 
	 
Note that due to the isomorphism $(x,y,z)\in \mathbb{R}^3\mapsto (x,y,z)\in T_\xi \mathbb{R}^3$ for any $\xi \in \mathbb{R}^3$, each $T_\xi\mathbb{R}^3$ can be endowed with an inner product induced by the standard inner product in $\mathbb{R}^3$.
Consider $f_1$ as a function on $\mathbb{R}^3$. Then, every $S_{(k_1,k_2,k_3)}$ is given by $(\xi,\dot \xi)\in T\mathbb{R}^3$, where $\xi=(x,y,z),\dot \xi=(\dot x,\dot y,\dot z)$, such that $\xi\in f_1^{-1}(k_1)$, while $\dot \xi$ belongs to the intersection of $S_{\xi, k_2} := f_2^{-1}(k_2)\cap T_\xi\mathbb{R}^3$ and the plane $\pi_{\xi,k_3}=\{\dot \xi\in T_\xi\mathbb{R}^3:\xi\cdot \dot \xi=k_3\}$, whose normal vector is $(x,y,z)\in T_\xi\mathbb{R}^3$. 

Let $\mathcal{O}_{k}:=f_1^{-1}(k)\subset \mathbb{R}^3$ for any $k \in \mathbb{R}$ and denote by $0_n$ the zero vector in $\mathbb{R}^n$, $n > 1$. For $\xi=(x,y,z) \in \mathbb{R}^3\backslash \{0_3\}$, the tangent space to $\mathcal{O}_{f_1(\xi)}$ at $\xi$ is given by $T_\xi \mathcal{O}_{f_1(\xi)}= \ker ({\rm d}f_1)_{\xi}$ (see \cite{L03}). Explicitly, $T_\xi \mathcal{O}_{f_1(\xi)}=\{\zeta\in T_\xi\mathbb{R}^3:  \zeta \cdot(-x,-y,z)=0\}$. Hence, $\dim \mathcal{D}_{(\xi,\dot \xi)}<3$ if and only if $\dot \xi\in (T_\xi \mathcal{O}_{f_1(\xi)})^\perp$. If $\xi=0_3$, $\dim \mathcal{D}_{(\xi,\dot \xi)}<3$ for each $(0,\dot \xi)\in T \mathbb{R}^3$. Since $\dim \mathcal{D}_{(0_3,0_3)}=0$, we get that $\dim \mathcal{D}_p = 2$ for $p \in \mathcal{A}_2$ and $\dim \mathcal{D}_p = 3$ for $p \in \mathcal{A}_3$, where we introduce
\begin{equation*}
\begin{split}
&\mathcal{A}_2 := (\{0_3\} \times T_0\mathbb{R}^3\backslash\{0 \})\,\,\bigcup \,\,\{(\xi,\dot \xi) \in T\mathbb{R}^3: \xi \in \mathbb{R}^3\backslash \{0\},\dot \xi\in (T_\xi \mathcal{O}_{f_1(\xi)})^\perp\}, \\
&\mathcal{A}_3 := \{(\xi,\dot \xi) \in T\mathbb{R}^3: \xi \in \mathbb{R}^3\backslash \{0_3\},\dot \xi\notin (T_\xi \mathcal{O}_{f_1(\xi)})^\perp\}
\end{split}
\end{equation*}

Let us finally study the orbits of ${\rm Inn}(\mathfrak{gl}_2)$ acting on $\Lambda^2\mathfrak{gl}_2 \simeq T\mathbb{R}^3$. Since any automorphism of $\mathfrak{gl}_2$ respects the decomposition $\mathfrak{gl}_2\simeq \langle e_1,e_2,e_3\rangle\oplus \langle e_4\rangle$, it leaves invariant the subspaces $\langle\lambda_{12}, \lambda_{13}, \lambda_{23} \rangle$ and $\langle \lambda_{14}, \lambda_{24}, \lambda_{34}\rangle$ of $\Lambda^2 \mathfrak{gl}_2^\star$. In consequence, $g(\xi,\dot \xi)=(g\xi,g\dot \xi)$, where $g\in {\rm Aut}(\mathfrak{gl}_2)$ acts linearly on each component. Hence, $gT_0\mathbb{R}^3=T_0\mathbb{R}^3$. Since $f_3(0_3,\dot\xi)=0$, it follows that the orbits of ${\rm Inn}(\mathfrak{gl}_2)$ in $T\mathbb{R}^3$ are $(0_3, 0_3)$ and $\{\{0_3\} \times f_2^{-1}(k)\backslash\{0_3\}\}_{k\in \mathbb{R}}$. Representative elements of each connected components for the latter family of orbits read
$$
r_{k>0,\pm} = (0_3,0,0,\pm \sqrt{k}), \quad r_{k < 0} = (0_3,\sqrt{|k|},0,0), \quad r_{0, \pm} = (0_3,0, \pm 1, 1).
$$

Let us take $(\xi,\dot \xi)\in \mathcal{A}_2$ and $\xi \in \mathbb{R}^3\backslash \{0\}$. In this case, ${\rm Inn}(\mathfrak{gl}_2)\xi=\mathcal{O}_{f_1(\xi)}\backslash\{0\}$. If $\dot \xi = 0$, it immediately follows that the orbits of ${\rm Inn}(\mathfrak{gl}_2)$ in $T\mathbb{R}^3$ are given by $ \times f_1^{-1}(k)\backslash\{0_3\}\}_{k\in \mathbb{R}} \times \{\{0_3\}$. Thus, representative elements of each connected components of orbits of ${\rm Inn}(\mathfrak{gl}_2)$ read
$$
r_{k>0,\pm} = (0,0,\pm \sqrt{k},0_3), \quad r_{k < 0} = (\sqrt{|k|},0,0,0_3), \quad r_{0, \pm} = (0, \pm 1, -1,0_3)
$$
Let $\dot \xi \in \mathbb{R}^3 \backslash \{0\}$. it can be verified by direct inspection of the graphs that for $k_1\neq 0$, the plane $\pi_{g\xi,k_3}$ cuts $(T_{g\xi} \mathcal{O}_{f_1(g\xi)})^\perp$ at one point determining $g\dot \xi$. Similarly, on sees that for $k_1=0$, we get that $(\mathcal{O}_{f_1(g\xi)})^\perp$ is parallel to $\pi_{\xi,k_3}$ and in particular, $\mathcal{O}_{f_1(g\xi)})^\perp \subset \pi_{\xi,k_3}$ for $k_3 = 0$.

Take a point $v_{\xi} := (\xi,\dot \xi)\in \mathcal{A}_3$. The action of ${\rm Aut}(SL_2)$ leaves $S_{k_1,k_2,k_3}$ invariant. Hence, the orbit of $v_\xi$ is included in $S_{(k_1,k_2,k_3)}$ for $k_i=f_i(v_\xi)$. Hence, $\xi\in \mathcal{O}_{k_1}$, $f_2(\dot \xi)=k_2$, and $f_3(\xi,\dot \xi)=k_3$. Then, the $g\xi$ generate $\mathcal{O}_{k_1}\backslash\{0\}$. Let $G_\xi$ be the isotropy group of $\xi$. Since $\dim \mathcal{O}_{k_1} = 2$ at $\xi$ and $\dim \mathcal{D}_{v_{\xi}} = 3$, then $G_\xi\dot\xi$ must be a one-dimensional submanifold of $T_\xi \mathbb{R}^3$ within the intersection of $S_{\xi,k_2} := f_2^{-1}(k_2)\cap T_\xi \mathbb{R}^3$ with $\pi_{\xi,k_3}$.  

For convenient qualitative analysis of the intersections mentioned previously, we introduce appropriate coordinate systems on each level set $f_1 = k_1$ and $f_2 = k_2$, depending on the value of $k_1, k_2 \in \mathbb{R}$. If $k_i < 0$, $i \in \{1,2\}$, we set $(\theta_i, \phi_i)$, $\theta_i \in \mathbb{R}$, $\phi_i \in [0, 2\pi[$ by
\begin{align*}
&x = \sqrt{|k_1|} \cosh{\theta_1} \sin{\phi_1}, & 
&y = \sqrt{|k_1|} \cosh{\theta_1} \cos{\phi_1}, & 
&z = \sqrt{|k_1|} \sinh{\theta_1} \\
&\dot x = \sqrt{|k_2|} \cosh{\theta_2} \sin{\phi_2}, & 
&\dot y = \sqrt{|k_2|} \cosh{\theta_2} \cos{\phi_2}, & 
&\dot z = \sqrt{|k_2|} \sinh{\theta_2}
\end{align*}
If $k_i > 0$, we define $(\theta_i, \delta_i)$, $\theta_i, \delta_i \in \mathbb{R}$, by
\begin{align*}
&x = \sqrt{k_1} \sinh{\theta_1} & 
&y = ak_1 \cosh{\theta_1} \sinh{\delta_1} & 
&z = ak_1 \cosh{\theta_1} \cosh{\delta_1} \\
&\dot x = \sqrt{k_2} \sinh{\theta_2} & 
&\dot y = bk_2 \cosh{\theta_2} \sinh{\delta_2} & 
&\dot z = bk_2 \cosh{\theta_2} \cosh{\delta_2}
\end{align*}
where $a,b \in \{\pm 1\}$. If $k_i = 0$, then we introduce $(z_i, \phi_i)$, $z_i \in \mathbb{R}$, $\phi_i \in [0, 2\pi[$, by
\begin{align*}
&x = z_1 \sin{\phi_1}, & &y= z_1 \cos{\phi_1}, & &z = z_1 \\
&x = z_2 \sin{\phi_2}, & &y= z_2 \cos{\phi_2}, & &z = z_2
\end{align*}

It is immediate to conclude that in general, there are nine choices of pairs $(k_1, k_2)$ which possibly yield qualitatively different intersections of $S_{\xi, k_2}$ with $\pi_{k_3}$. Let us first discuss the solutions of the modified Yang--Baxter equation, which happen to reduce the analysis to only few specific cases.

The mCYBE reads
$$
x \dot y - y \dot x = 0, \quad y \dot z + z \dot y = 0, \quad x \dot z + z \dot x = 0.
$$
By tedious computations, we get that the only nontrivial solutions can be obtained for pairs $(k_1 < 0, k_2, k_2 < 0)$, $(k_1 > 0, k_2 > 0)$ and $(k_1 = 0, k_2 = 0)$. These solutions expressed in the initial coordinate system $\{x, y, z, \dot x, \dot y, \dot z\}$ read:
\begin{itemize}
\item for $k_1 < 0$ and $k_2 < 0$, we get
\begin{itemize}
\item[1)] $\theta_2 = -\theta_1$ and $\phi_2 = \phi_1$, which yields
\begin{align*}
&x = \sqrt{|k_1|} \cosh{\theta_1} \sin{\phi_1}, & 
&y = \sqrt{|k_1|} \cosh{\theta_1} \cos{\phi_1}, & 
&z = \sqrt{|k_1|} \sinh{\theta_1} \\
&\dot x = \sqrt{|k_2|} \cosh{\theta_1} \sin{\phi_1}, & 
&\dot y = \sqrt{|k_2|} \cosh{\theta_1} \cos{\phi_1}, & 
&\dot z = -\sqrt{|k_2|} \sinh{\theta_1}
\end{align*}
\item[2)] $\theta_2 = \theta_1$ and $\phi_2 = \phi_1 - \pi$, which yields
\begin{align*}
&x = \sqrt{|k_1|} \cosh{\theta_1} \sin{\phi_1}, & 
&y = \sqrt{|k_1|} \cosh{\theta_1} \cos{\phi_1}, & 
&z = \sqrt{|k_1|} \sinh{\theta_1} \\
&\dot x = -\sqrt{|k_2|} \cosh{\theta_1} \sin{\phi_1}, & 
&\dot y = -\sqrt{|k_2|} \cosh{\theta_1} \cos{\phi_1}, & 
&\dot z = \sqrt{|k_2|} \sinh{\theta_1}
\end{align*}
\end{itemize}
\item for $k_1 > 0$ and $k_2 > 0$, we get
\begin{itemize}
\item[3)] $\theta_2 = -\theta_1$ and $\delta_2 = -\delta_1$ (for $a = +1$, $b = +1$), which yields
\begin{align*}
&x = \sqrt{k_1} \sinh{\theta_1} & 
&y = \sqrt{k_1} \cosh{\theta_1} \sinh{\delta_1} & 
&z = \sqrt{k_1} \cosh{\theta_1} \cosh{\delta_1} \\
&\dot x = -\sqrt{k_2} \sinh{\theta_1} & 
&\dot y = -\sqrt{k_2} \cosh{\theta_1} \sinh{\delta_1} & 
&\dot z = \sqrt{k_2} \cosh{\theta_1} \cosh{\delta_1}
\end{align*}
\item[4)] $\theta_2 = \theta_1$ and $\delta_2 = -\delta_1$ (for $a = +1$, $b = -1$), which yields
\begin{align*}
&x = \sqrt{k_1} \sinh{\theta_1} & 
&y = \sqrt{k_1} \cosh{\theta_1} \sinh{\delta_1} & 
&z = \sqrt{k_1} \cosh{\theta_1} \cosh{\delta_1} \\
&\dot x = \sqrt{k_2} \sinh{\theta_1} & 
&\dot y = \sqrt{k_2} \cosh{\theta_1} \sinh{\delta_1} & 
&\dot z = -\sqrt{k_2} \cosh{\theta_1} \cosh{\delta_1}
\end{align*}
\item[5)] $\theta_2 = \theta_1$ and $\delta_2 = -\delta_1$ (for $a = -1$, $b = +1$), which yields
\begin{align*}
&x = \sqrt{k_1} \sinh{\theta_1} & 
&y = -\sqrt{k_1} \cosh{\theta_1} \sinh{\delta_1} & 
&z = -\sqrt{k_1} \cosh{\theta_1} \cosh{\delta_1} \\
&\dot x = \sqrt{k_2} \sinh{\theta_1} & 
&\dot y = -\sqrt{k_2} \cosh{\theta_1} \sinh{\delta_1} & 
&\dot z = \sqrt{k_2} \cosh{\theta_1} \cosh{\delta_1}
\end{align*}
\item[6)] $\theta_2 = -\theta_1$ and $\delta_2 = -\delta_1$ (for $a = -1$, $b = -1$), which yields
\begin{align*}
&x = \sqrt{k_1} \sinh{\theta_1} & 
&y = -\sqrt{k_1} \cosh{\theta_1} \sinh{\delta_1} & 
&z = -\sqrt{k_1} \cosh{\theta_1} \cosh{\delta_1} \\
&\dot x = -\sqrt{k_2} \sinh{\theta_1} & 
&\dot y = \sqrt{k_2} \cosh{\theta_1} \sinh{\delta_1} & 
&\dot z = -\sqrt{k_2} \cosh{\theta_1} \cosh{\delta_1}
\end{align*}
\end{itemize}
\item for $k_1 = 0$ and $k_2 = 0$, we get
\begin{itemize}
\item[7)] $\phi_2 = \phi_1 - \pi$, which yields
\begin{align*}
&x = z_1 \sin{\phi_1}, & &y= z_1 \cos{\phi_1}, & &z = z_1 \\
&\dot x = -z_2 \sin{\phi_1}, & &\dot y= -z_2 \cos{\phi_1}, & &\dot z = z_2
\end{align*}
\end{itemize}
\end{itemize}

On subsets given by all seven cases listed above, we compute the rank of the distribution $\mathcal{D}_{\mathfrak{gl}_2}$ spanned by vector fields $X_1, X_2, X_3$. In each case, we get ${\rm rk} \mathcal{D}_{\mathfrak{gl}_2} = 2$. Thus, we conclude that each subset given by solutions 1)-6) is a separate connected component and thus, a single orbit of the action of ${\rm Inn}(\mathfrak{gl}_2)$. For these orbits, let us conveniently choose the following representative elements:
\begin{itemize}
\item[1)] $r_1 = \sqrt{|k_1|} e_{12} + \sqrt{|k_2|} e_{23}$ (for $\theta_1 = 0$, $\phi_1 = \frac{\pi}{2}$ and $k_1, k_2 < 0$)
\item[2)] $r_2 = \sqrt{|k_1|} e_{12} - \sqrt{|k_2|} e_{23}$ (for $\theta_1 = 0$, $\phi_1 = \frac{3\pi}{2}$ and $k_1, k_2 < 0$)
\item[3)] $r_3 = \sqrt{\frac{k_1}{2}} e_{12} + \sqrt{\frac{k_1}{2}} e_{13} - \sqrt{\frac{k_2}{2}} e_{24} + \sqrt{\frac{k_2}{2}} e_{34}$ (for $\theta_1 = 0$, $\delta_1 = 0$ and $k_1, k_2 > 0$)
\item[4)] $r_4 = -\sqrt{\frac{k_1}{2}} e_{12} - \sqrt{\frac{k_1}{2}} e_{13} - \sqrt{\frac{k_2}{2}} e_{24} + \sqrt{\frac{k_2}{2}} e_{34}$ (for $\theta_1 = 0$, $\delta_1 = 0$ and $k_1, k_2 > 0$)
\item[5)] $r_5 = \sqrt{\frac{k_1}{2}} e_{12} + \sqrt{\frac{k_1}{2}} e_{13} + \sqrt{\frac{k_2}{2}} e_{24} - \sqrt{\frac{k_2}{2}} e_{34}$ (for $\theta_1 = 0$, $\delta_1 = 0$ and $k_1, k_2 > 0$)
\item[6)] $r_6 = -\sqrt{\frac{k_1}{2}} e_{12} - \sqrt{\frac{k_1}{2}} e_{13} + \sqrt{\frac{k_2}{2}} e_{24} - \sqrt{\frac{k_2}{2}} e_{34}$ (for $\theta_1 = 0$, $\delta_1 = 0$ and $k_1, k_2 > 0$)
\end{itemize}

Notice that the solution 7) depends on $(z_1, z_2, \phi_1)$. Since ${\rm rk} \mathcal{D}_{\mathfrak{gl}_2} = 2$ on this subset, one needs to analyse the vector fields (\ref{gl2_alg_fvf}) in order to find the orbits of ${\rm Inn}(\mathfrak{gl}_2)$-action. In coordinates $(z_1, z_2, \phi_1)$, these vector fields read
\begin{equation*}
\begin{split}
&X_1 = -2 \cos{\phi_1} \partial_{\phi_1} - 2z_1 \sin{\phi_1} \partial_{z_1} - 2z_2 \sin{\phi_1} \partial_{z_2} \\
&X_2 = -\sqrt{2} (1 + \sin{\phi_1}) \partial_{\phi_1} + \sqrt{2} z_1 \cos{\phi_1} \partial_{z_1} + \sqrt{2} z_2 \cos{\phi_1} \partial_{z_2} \\ 
&X_3 = -\sqrt{2} (-1 + \sin{\phi_1}) \partial_{\phi_1} + \sqrt{2} z_1 \cos{\phi_1} \partial_{z_1} + \sqrt{2} z_2 \cos{\phi_1} \partial_{z_2}
\end{split}
\end{equation*}

Denote $X_{+} := X_1 + X_2$ and $X_{-} := X_1 - X_2$. In new coordinate system $(\phi_1, z_{+}, z_{-})$, where $z_{+} := z_1 + z_2$ and $z_{-} := z_1 - z_2$, we obtain
\begin{equation*}
\begin{split}
&X_1 = -2 \cos{\phi_1} \partial_{\phi_1} - 2z_{+} \sin{\phi_1} \partial_{z_{+}} \\
&X_{+} = -2\sqrt{2} \partial_{\phi_1} + 2\sqrt{2} \cos{\phi_1} z_{+} \partial_{z_{+}} \\
&X_{-} = 2\sqrt{2} \partial_{\phi_1}.
\end{split}
\end{equation*}
Then, it is immediate to verify that these vector fields do not change the value of $z_{-}$. Thus, $z_1 - z_2 = const$ for any orbit of ${\rm Inn}(\mathfrak{gl}_2)$-action. In consequence, each connected component of the family, parametrised by $\beta \in \mathbb{R}$, within the subset given by solution 7) of the form
\begin{align*}
&x = z_1 \sin{\phi_1}, & &y= z_1 \cos{\phi_1}, & &z = z_1 \\
&\dot x = (\beta - z_1) \sin{\phi_1}, & &\dot y= (\beta - z_1) \cos{\phi_1}, & &\dot z = z_1 - \beta
\end{align*}
gives rise to the orbit of ${\rm Inn}(\mathfrak{gl}_2)$-action on $\Lambda^2 \mathfrak{gl}_2$. Since both $z_1, z_2$ cannot be zero, we get that $z_1 \in \mathbb{R} \backslash \{0,\beta\}$. Thus, there are three connected components for $\beta \neq 0$, given by the subsets $z > \max (0, \beta)$, $\min (0, \beta) < z < \max (0, \beta)$ and $z < \min (0, b\eta)$. Their representative elements read (for $\phi_1 = \frac{3\pi}{2}$)
$$
r_{1, \beta \neq 0} = \beta\sqrt{2} e_{12} - 2\beta \sqrt{2} e_{24}, \quad
r_{2}, \beta \neq 0 = -\frac{1}{\sqrt{2}} \beta e_{12} - \frac{1}{\sqrt{2}} \beta e_{24}, \quad
r_{3, \beta \neq 0} = -2\beta \sqrt{2} e_{12} + \beta \sqrt{2} e_{24}.
$$
Similarly, two components for $\beta = 0$ are given by $z > 0$ and $z < 0$ and the representative elements of each subset read
$$
r_{1, \beta = 0} = -\sqrt{2} e_{12} + \sqrt{2} e_{24}, \quad r_{2, \beta \neq 0} = \sqrt{2} e_{12} - \sqrt{2} e_{24}
$$

It is immediate to verify that the automorphism $T \in {\rm Aut}(\mathfrak{gl}_2)$ given by 
$$
T(e_1) := e_1, \quad T(e_2) := \frac{1}{\alpha} e_2, \quad T(e_3) := \alpha e_3, \quad T(e_4) := \beta e_4
$$ 
with $\alpha, \beta \in \rz$ allows us to identify all connected components within a given value of $\beta$ and moreover, all elements in the family. Thus, we conclude that there is one orbit associated with the solution 7) and its representative element is of the form $r_7 = e_{12} + e_{24}$.

Similarly, the automorphism $T$ allows us to identify certain classes among solutions 1) - 6). In consequence, we obtain $r_{I, \alpha} = \alpha e_{13} + e_{24}$ and $r_{II, \alpha} = \alpha e_{12} + \alpha e_{13} + e_{24} - e_{34}$ for $\alpha > 0$. Upon a similar identification, the orbits listed previously that belong to $\mathcal{A}_2$ give
\begin{align*}
&r_{0, k > 0, \pm} = \pm \sqrt{\frac{k}{2}} e_{12} \pm \sqrt{\frac{k}{2}} e_{13}, &
&r_{0, k < 0} = \sqrt{|k|} e_{23}, &
&r_{0, k = 0, \pm} = e_{12} + e_{13} \pm \sqrt{2} e_{23} \\
&r_{k > 0, 0, \pm} = \pm e_{24} \mp e_{34}, &
&r_{k < 0, 0} = e_{14}, &
&r_{k = 0, \pm} = e_{24} - e_{34} \pm \sqrt{2} e_{14}
\end{align*}
Thus, we obtain the following classes of $r$-matrices relative to ${\rm Aut}(\mathfrak{gl}_2)$:
\begin{align*}
&r_{1, \alpha > 0, \pm} = \pm \alpha e_{12} \pm \alpha e_{13}, &
&r_{2, \alpha > 0} = \alpha e_{23}, &
&r_{3, \pm} = e_{12} + e_{13} \pm \sqrt{2} e_{23} \\
&r_{4, \pm} = \pm e_{24} \mp e_{34}, &
&r_{5} = e_{14}, &
&r_{6, \pm} = e_{24} - e_{34} \pm \sqrt{2} e_{14}, \\
&r_{7, \alpha > 0} = \alpha e_{13} + e_{24}, &
&r_{8, \alpha > 0} = \alpha e_{12} + \alpha e_{13} + e_{24} - e_{34}, &
&r_{9} = e_{12} + e_{24}.
\end{align*}

\chapter{Darboux families and the classification of Lie bialgebras}\label{Ch:Darboux}

 This chapter introduces a new concept, the so-called {\it Darboux family}, which is employed to determine, to analyse geometrically, and to classify up to Lie algebra automorphisms, in a relatively easy manner, coboundary Lie bialgebras on real four-dimensional indecomposable Lie algebras.  The Darboux family notion can be considered as a generalisation of the Darboux polynomial for a vector field. The classification of $r$-matrices and solutions to classical Yang-Baxter equations for real four-dimensional indecomposable Lie algebras is also given in detail. Our methods can further be applied to general, even higher-dimensional, Lie algebras. As a byproduct, a method to obtain matrix representations of certain Lie algebras with a non-trivial center is developed.

\section{On the notion of Darboux family}\label{Ch:Darb_Sec:DarFam}

On a finite-dimensional vector space $E$, a polynomial function is a polynomial expression on a set of linear coordinates on $E$. A {\it Darboux polynomial}, $P$, for a polynomial vector field $X$ on $E$ is a polynomial function on $E$ so that $XP=fP$ for a certain polynomial $f$ on $E$, the so-called {\it cofactor} of $P$ relative to $X$ \cite{Gu11,LV13}. It is worth noting that, geometrically, one cannot define intrinsically what a polynomial on a general manifold is, as the explicit form of a function on a manifold  depends on the chosen coordinate system. However, one can introduce the following generalisation of a Darboux polynomial on manifolds.

\begin{definition}\label{Def:Darb_fam}
A {\it Darboux family} for a Vessiot--Guldberg Lie algebra $V$ on a manifold $M$ is an $s$-dimensional linear space $\mathcal{A}$ spanned by certain functions $f_1,\ldots,f_s \in C^{\infty}(M)$ such that  one can write
$$
Xf_j=\sum_{i=1}^sh_{j X}^i f_i, \quad \forall j=1,\ldots,s
$$
for every vector field $X\in V$ and certain smooth functions $h_{jX}^i \in C^{\infty}(M)$. We call each $h_{jX}^i$ a {\it co-factor} of $f_j$ relative to $X$ and the basis $\{f_1,\ldots,f_s\}$. The subset $\ell_{\mathcal{A}}:= \{p \in M: f(p)=0, \forall f \in \mathcal{A}\}$ is called the {\it locus} of the Darboux family $\mathcal{A}$.
\end{definition}

If all the functions $h_{jX}^i$ are equal to zero for every $X \in V$ and $i = 1, \ldots, s$, then $f_j $ becomes a constant of motion for the vector fields of $V$ on $M$. If the $h_{jX}^i$ are constants for every $X \in V$ and $i,j = 1, \ldots, s$, then we say that the Darboux family is {\it linear}. In this case, $V$ gives rise to a Lie algebra representation $\rho:X\in V\mapsto D_X\in {\rm End}(\mathcal{A})$, where $D_X$ stands for the action of the vector field $X$ on the space of functions $\mathcal{A}$. 

The vector fields of $V$ span a generalised distribution $\mathcal{D}^V$, which is integrable by Theorem \ref{Th:StSus}. This leads to a stratification of $M$ by some disjoint immersed submanifolds which may have different dimensions. 

\begin{lemma}\label{Lem:locus_orb}
For a Darboux family $\mathcal{A}$, its locus $\ell_{\mathcal{A}}$ is the sum (as subsets of $M$) of a collection of strata of the generalised distribution $\mathcal{D}^V$.
\end{lemma}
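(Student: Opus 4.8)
The plan is to show that $\ell_{\mathcal{A}}$ is invariant under the local flows of every vector field $X \in V$, and then invoke the description of strata given in equation~(\ref{eq:dec}) to conclude that $\ell_{\mathcal{A}}$ is a union of strata. First I would fix a point $p \in \ell_{\mathcal{A}}$ and a vector field $X \in V$, and consider the integral curve $\gamma(t) := \exp(tX)p$ defined on some interval $]-\epsilon, \epsilon[$. The key observation is that along this curve, the functions $f_i$ of a chosen basis $\{f_1, \ldots, f_s\}$ of $\mathcal{A}$ satisfy a linear system of ODEs: setting $g_i(t) := f_i(\gamma(t))$, the Darboux family condition $Xf_j = \sum_{i=1}^s h_{jX}^i f_i$ gives
$$
\frac{d g_j}{dt}(t) = (Xf_j)(\gamma(t)) = \sum_{i=1}^s h_{jX}^i(\gamma(t))\, g_i(t), \qquad j = 1, \ldots, s.
$$
This is a linear homogeneous ODE system with smooth (hence locally Lipschitz) time-dependent coefficients $t \mapsto h_{jX}^i(\gamma(t))$. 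Since $p \in \ell_{\mathcal{A}}$ means $g_i(0) = f_i(p) = 0$ for all $i$, the zero function is a solution with the same initial condition, and by uniqueness of solutions to linear ODE systems, $g_i(t) = 0$ for all $t$ and all $i$. Therefore $\gamma(t) \in \ell_{\mathcal{A}}$ for all $t$ in the interval of definition, i.e. $\ell_{\mathcal{A}}$ is invariant under the local flow of $X$.

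Next I would iterate this: since $\ell_{\mathcal{A}}$ is invariant under $\exp(tX)$ for every $X \in V$ and every admissible $t$, it is invariant under all finite compositions $\exp(t_1 X_1) \circ \cdots \circ \exp(t_s X_s)$ with $X_1, \ldots, X_s \in V$. By the description~(\ref{eq:dec}) of the strata of $\mathcal{D}^V$ — recalling that $V$ is a finite-dimensional Lie algebra of vector fields, so $\mathcal{D}^V$ is integrable by Theorem~\ref{Th:StSus} — the stratum $\mathcal{F}_k$ through any point $q$ consists precisely of the points reachable from $q$ by such compositions. Hence if $q \in \ell_{\mathcal{A}}$, the entire stratum through $q$ is contained in $\ell_{\mathcal{A}}$. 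Since $\ell_{\mathcal{A}}$ is a subset of $M$ and $M$ is partitioned into strata, $\ell_{\mathcal{A}}$ is exactly the union of those strata that it meets, which proves the lemma.

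I expect the main obstacle to be purely presentational rather than mathematical: one should be careful that the argument is local in $t$ (the ODE uniqueness is applied on the interval where the integral curve $\gamma$ is defined) and that this local invariance correctly globalizes to full strata via~(\ref{eq:dec}), which already encodes arbitrarily long chains of flows. A minor point worth addressing explicitly is that the statement "$\ell_{\mathcal{A}}$ is a sum of strata" should be read as: $\ell_{\mathcal{A}} = \bigcup \{\mathcal{F}_k : \mathcal{F}_k \cap \ell_{\mathcal{A}} \neq \emptyset\}$, and both inclusions follow immediately — "$\supseteq$" from flow-invariance, "$\subseteq$" because every point of $\ell_{\mathcal{A}}$ lies in some stratum, all of which is then inside $\ell_{\mathcal{A}}$. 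No other subtlety arises, since the co-factors $h_{jX}^i$ being merely smooth (not constant) is already enough to run the linear ODE uniqueness argument.
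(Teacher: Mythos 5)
Your proof is correct and follows essentially the same route as the paper: vanishing of the basis functions along integral curves via uniqueness for the linear ODE system $\dot u_j=\sum_i h^i_{jX}(\gamma(t))u_i$ with zero initial data, and then globalization to whole strata through the description (\ref{eq:dec}). The only (welcome) addition is your explicit remark that ``sum of strata'' means $\ell_{\mathcal{A}}=\bigcup\{\mathcal{F}_k:\mathcal{F}_k\cap\ell_{\mathcal{A}}\neq\emptyset\}$, which the paper leaves implicit.
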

\begin{proof}
Let us first show that the integral curves of vector fields in $V$ passing through the locus $\ell_\mathcal{A}$ of the Darboux family $\mathcal{A}$ remain within $\mathcal{A}$. 
Indeed, take a point in the locus $\ell_{\mathcal{A}}$ and consider a basis $Y_1,\ldots,Y_q$ of the Vessiot--Guldberg Lie algebra $V$ associated with $\mathcal{A}$. Then, each function $f_i \in \mathcal{A}$ satisfy
$$
\frac{df_i(\Psi(t))}{dt}=(Xf_i)\Psi(t)=\sum_{i=1}^sh_{iX}^j(\Psi(t))f_j(\Psi(t)),\qquad i=1,\ldots,s.
$$
where $\Psi(t)$ is an integral curve of a vector field $X \in V$ such that $\Psi(0)\in \mathcal{S}_\mathcal{A}$.
Hence, the values of the $f_i(\Psi(t))$ can be understood as the solutions $u_i(t)$ to the linear system of first-order ordinary differential equations in normal form with $t$-dependent coefficients 
\begin{equation}\label{darb_sys}
\frac{du_i}{dt}=\sum_{j=1}^sh^j_{iX}(\Psi(t))u_j,\qquad i=1,\ldots,s,
\end{equation}
with initial condition $u_1(0)=\ldots=u_s(0)=0$. By the existence and uniqueness theorem, the solution to (\ref{darb_sys}) is $u_1(t)=\ldots=u_s(t)=0$. Therefore, the functions $f_1,\ldots,f_s$ vanish on the integral curves of $X$. 

From this fact and the decomposition (\ref{eq:dec}), it follows that the functions $f_1,\ldots,f_s$ vanish on the strata of the distribution $\mathcal{D}^V$ containing a point within $\ell_\mathcal{A}$. In other words, the strata of $\mathcal{D}^V$ containing some point of $\ell_{\mathcal{A}}$ are fully contained in $\ell_{\mathcal{A}}$. In consequence, the locus $\ell_{\mathcal{A}}$ must be the sum (as subsets of $M$) of a collection of strata of the generalised distribution $\mathcal{D}^V$.
\end{proof}

In view of Lemma \ref{Lem:locus_orb}, one can separate the analysis of the strata in $M$ of the generalised distribution $\mathcal{D}^V$ to two regions of $M$: the locus $\ell_{\mathcal{A}}$ and its completion $M \backslash \ell_{\mathcal{A}}$. This approach will be particularly useful to obtain the strata of $\mathcal{D}^V$ at points where the generalised distribution is not regular and it may not exist a constant of motion common to all the vector fields in $V$ that can be used to obtain the strata of $\mathcal{D}^V$. 

As we show next, Darboux families are useful in the study of solutions to  mCYBEs and CYBEs. Let us start with a general result.

 \begin{theorem}\label{Th:ConsDar32} 
 Let $\mathcal{A}^{(3)}_{\mathfrak{g}}$ be a Darboux family for a Vessiot--Guldberg Lie algebra $V_{\mathfrak{g}}^{(3)}$ on $\Lambda^3\mathfrak{g}$. Then, the space of functions
 \begin{equation}\label{Eq:Ind}
 \mathcal{A}: = \left\{g \in C^{\infty}(\Lambda^2 \mathfrak{g}): \, g = f \circ [\cdot, \cdot], \, f\in \mathcal{A}_{\mathfrak{g}}^{(3)} \right\}
 \end{equation}
 is a Darboux family for a Vessiot--Guldberg Lie algebra $V^{(2)}_{\mathfrak{g}}$ on $\Lambda^2\mathfrak{g}$. If $\mathcal{A}_{\mathfrak{g}}^{(3)}$ is a linear Darboux family, then $\mathcal{A}$ is also a linear Darboux family.
 \end{theorem}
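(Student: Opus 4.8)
<br>

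The plan is to set up a correspondence between the two Vessiot–Guldberg Lie algebras, $V_{\mathfrak{g}}^{(2)}$ on $\Lambda^2\mathfrak{g}$ and $V_{\mathfrak{g}}^{(3)}$ on $\Lambda^3\mathfrak{g}$, that is compatible with the map $\beta := [\cdot,\cdot]\colon \Lambda^2\mathfrak{g}\to \Lambda^3\mathfrak{g}$, $r\mapsto [[r,r]]$ (or its polarisation). Concretely, both $\Lambda^2\mathfrak{g}$ and $\Lambda^3\mathfrak{g}$ carry natural linear actions of $\mathrm{Aut}_c(\mathfrak{g})$, hence of $\mathfrak{der}(\mathfrak{g})\simeq\mathfrak{aut}(\mathfrak{g})$, whose fundamental vector fields are $\Lambda^2$- and $\Lambda^3$-extensions of derivations of $\mathfrak{g}$; these are exactly the vector fields appearing in Corollary \ref{Re:DerAlg} and Proposition \ref{prop:derbiv}. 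The key structural fact I would invoke is that $\beta$ is \emph{equivariant}: for every $d\in\mathfrak{der}(\mathfrak{g})$ one has $(\Lambda^3 d)\circ\beta = \beta\circ(\text{linearisation along }X^2_d)$, because $d([x,y])=[d(x),y]+[x,d(y)]$ propagates through the Schouten bracket (this is essentially Lemma \ref{lemma} differentiated, or the naturality of $[[\cdot,\cdot]]$). Thus $\beta$ intertwines the vector field $X^2_d$ on $\Lambda^2\mathfrak{g}$ with the vector field $X^3_d$ on $\Lambda^3\mathfrak{g}$, meaning $\beta$-related vector fields.

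The first concrete step is therefore to make precise what $V^{(2)}_{\mathfrak{g}}$ and $V^{(3)}_{\mathfrak{g}}$ are in this setting and to record that $\beta_*X^2_d = X^3_d\circ\beta$, i.e. $X^2_d$ is $\beta$-related to $X^3_d$ for each generator $d$. The second step is the pullback computation: given $f\in\mathcal{A}_{\mathfrak{g}}^{(3)}$ with cofactors $h^i_{jX}\in C^\infty(\Lambda^3\mathfrak{g})$ relative to a basis $\{f_1,\dots,f_s\}$, set $g_j := f_j\circ\beta$. Then for a generator $X = X^2_d$ of $V^{(2)}_{\mathfrak{g}}$, using $\beta$-relatedness, $X^2_d(f_j\circ\beta) = (X^3_d f_j)\circ\beta = \big(\sum_i h^i_{jX^3_d} f_i\big)\circ\beta = \sum_i (h^i_{jX^3_d}\circ\beta)\, g_i$. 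Hence $\mathcal{A} = \langle g_1,\dots,g_s\rangle$ is a Darboux family for $V^{(2)}_{\mathfrak{g}}$ with cofactors $h^i_{jX}\circ\beta\in C^\infty(\Lambda^2\mathfrak{g})$. One should also remark that $\dim\mathcal{A}$ may drop if $\beta$ is non-injective on the relevant subspace; strictly speaking $\mathcal{A}$ is the image of $\mathcal{A}_{\mathfrak{g}}^{(3)}$ under $f\mapsto f\circ\beta$, so I would phrase the conclusion in terms of the span rather than the dimension, matching the statement's wording. The linear case is then immediate: if all $h^i_{jX^3_d}$ are constants, so are $h^i_{jX^3_d}\circ\beta$, so $\mathcal{A}$ is a linear Darboux family, and moreover $\beta$ intertwines the induced representations $V^{(2)}_{\mathfrak{g}}\to\mathrm{End}(\mathcal{A})$ and $V^{(3)}_{\mathfrak{g}}\to\mathrm{End}(\mathcal{A}_{\mathfrak{g}}^{(3)})$.

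I expect the main obstacle to be pinning down the precise definitions of $V^{(2)}_{\mathfrak{g}}$ and $V^{(3)}_{\mathfrak{g}}$ and verifying the $\beta$-relatedness identity cleanly — that is, checking that the quadratic-in-$r$ map $r\mapsto[[r,r]]$ genuinely conjugates the derivation-induced flows on $\Lambda^2\mathfrak{g}$ to those on $\Lambda^3\mathfrak{g}$. Once that equivariance lemma is in hand (it is a direct consequence of the Leibniz rule for $\mathfrak{der}(\mathfrak{g})$ with respect to the Schouten bracket, so no hard computation, only bookkeeping over the terms of $[r_{12},r_{13}]+[r_{12},r_{23}]+[r_{13},r_{23}]$), the rest is the routine chain-rule argument above. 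A secondary subtlety worth a sentence is well-definedness: different choices of basis $\{f_i\}$ of $\mathcal{A}_{\mathfrak{g}}^{(3)}$ give bases $\{f_i\circ\beta\}$ of $\mathcal{A}$ related by the same (constant) change-of-basis matrix, so the Darboux-family property and its linearity are basis-independent.
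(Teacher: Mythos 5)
Your proposal is correct and takes essentially the same route as the paper: the paper establishes the equivariance $[\Lambda^2 T_t(r),\Lambda^2 T_t(r)]=\Lambda^3 T_t([r,r])$ at the level of the one-parameter groups of automorphisms generating the fundamental vector fields and then differentiates at $t=0$, which is exactly your $\beta$-relatedness of $X^{(2)}$ and $X^{(3)}$ phrased at the group rather than the infinitesimal level. The subsequent chain-rule computation pulling back the cofactors along $[\cdot,\cdot]$, and the observation that constant cofactors pull back to constants, coincide with the paper's argument.
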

 \begin{proof}
Take $X^{(2)} \in V^{(2)}_{\mathfrak{g}}$. Since $X^{(2)}$ is a fundamental vector field of the action of ${\rm Aut}(\mathfrak{g})$ on $\Lambda^2 \mathfrak{g}$, its flow is given by the one-parametric group of diffeomorphisms $\Lambda^2 T_t$ induced by a certain one-parametric group $\{T_t\}_{t \in \mathbb{R}}$ of Lie algebra automorphisms of $\mathfrak{g}$. Every $g \in \mathcal{A}$ is of the form $g(r) = f([r, r])$ for some $f \in \mathcal{A}_{\mathfrak{g}}^{(3)}$ and every $r\in \Lambda^2\mathfrak{g}$. Then,
 \begin{equation*}
 \begin{split}
 (X^{(2)}g)(r) &= \frac{d}{dt}\bigg\vert_{t = 0} g(\Lambda^2 T_t (r)) = \frac{d}{dt}\bigg\vert_{t = 0} f([\Lambda^2 T_t (r), \Lambda^2 T_t (r)]) \\
 &= \frac{d}{dt}\bigg\vert_{t = 0} f(\Lambda^3 T_t [r, r]) = (X^{(3)} f)([r, r]) = \sum_{i = 1}^{r} h^i([r,r])f_i([r, r]),
 \end{split}
 \end{equation*}
 where $f_1, \ldots, f_s$ form a basis of the Darboux family $\mathcal{A}_{\mathfrak{g}}^{(3)}$, the functions $h^1,\ldots,h^s$ are the cofactors of $f$ relative to $X^{(3)}$ and the basis $\{f_1,\ldots,f_s\}$, and $X^{(3)}$ is the fundamental vector field of the action of ${\rm Aut}(\mathfrak{g})$ on $\Lambda^3\mathfrak{g}$ induced by the one-parameter group $\{T_t\}_{t\in \mathbb{R}}$ of Lie algebra automorphisms of $\mathfrak{g}$. For any $i \in \{1, \ldots, s \}$, the function $g_i \in C^{\infty}(\Lambda^2 \mathfrak{g})$ of the form $g_i(r) := f_i([r, r])$ belongs to $\mathcal{A}$ by definition. Moreover, each $\tilde{h}_j(r) := h^j([r,r])$ for $j \in \{1, \ldots, s\}$ gives a function on $\Lambda^2\mathfrak{g}$. Thus, it follows that $\mathcal{A}$ forms a Darboux family for $V_{\mathfrak{g}}$ on $\Lambda^2 \mathfrak{g}$.
 
 If $\mathcal{A}^{(3)}_{\mathfrak{g}}$ is a linear Darboux family, then the cofactors $h^1,\ldots,h^s$ are constants. In consequence, the functions $h^i([\cdot, \cdot])$ for $i \in \{1, \ldots, s\}$ are also constants. Hence, $\mathcal{A}$ is a linear Darboux family for $V^{(2)}_{\mathfrak{g}}$ on $\Lambda^2\mathfrak{g}$.
 \end{proof}
 
 The following proposition shows that the solutions to the mCYBE on $\mathfrak{g}$ correspond to the locus of a Darboux family relative to $V_{\mathfrak{g}}$. A similar result holds for the CYBE.

 \begin{proposition}  
 Let $((\Lambda^3\mathfrak{g})^\mathfrak{g})^\circ$ denote the annihilator of $(\Lambda^3\mathfrak{g})^\mathfrak{g}$, i.e. the subspace of elements of $(\Lambda^3\mathfrak{g})^*$ vanishing on $(\Lambda^3\mathfrak{g})^\mathfrak{g}$.
The functions
\begin{equation}\label{Eq:ParFam}
f_\upsilon:r\in \Lambda^2\mathfrak{g}\mapsto \upsilon([r,r])\in \mathbb{R},\qquad \upsilon\in ((\Lambda^3\mathfrak{g})^\mathfrak{g})^\circ,
\end{equation}
span a linear Darboux family, $\mathcal{A}$, for the Lie algebra $V_\mathfrak{g}$ of fundamental vector fields of the action of ${\rm Aut}(\mathfrak{g})$ on $\Lambda^2\mathfrak{g}$. The locus of $\mathcal{A}$ is the set of solutions to the mCYBE for $\mathfrak{g}$. Moreover, the components of the CYBE in a coordinate system given by a basis of $\Lambda^2\mathfrak{g}^*$ span a linear Darboux family for $V_{\mathfrak{g}}$ on $\Lambda^2\mathfrak{g}$.  
\end{proposition}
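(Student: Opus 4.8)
The plan is to exhibit the claimed Darboux families explicitly and then invoke the general machinery already set up, rather than redoing computations. First I would recall that for $r\in\Lambda^2\mathfrak{g}$ the element $[[r,r]]$ reduces to the Schouten bracket $[r,r]_S\in\Lambda^3\mathfrak{g}$, so the map $r\mapsto[r,r]_S$ is a well-defined quadratic polynomial map $\Lambda^2\mathfrak{g}\to\Lambda^3\mathfrak{g}$. The key structural observation, which I would isolate as the first step, is that the one-parameter groups of diffeomorphisms generated by fundamental vector fields $X^{(2)}\in V_{\mathfrak{g}}$ of the ${\rm Aut}(\mathfrak{g})$-action on $\Lambda^2\mathfrak{g}$ are of the form $\Lambda^2 T_t$ for one-parameter subgroups $\{T_t\}\subset{\rm Aut}(\mathfrak{g})$, and by the properties of the Schouten bracket (Proposition~\ref{Pr:PropSchou}) one has $\Lambda^3 T_t([r,r]_S)=[\Lambda^2 T_t(r),\Lambda^2 T_t(r)]_S$; hence $r\mapsto[r,r]_S$ intertwines the ${\rm Aut}(\mathfrak{g})$-actions on $\Lambda^2\mathfrak{g}$ and $\Lambda^3\mathfrak{g}$.

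Next I would treat the two assertions in turn. For the first, observe that $(\Lambda^3\mathfrak{g})^{\mathfrak{g}}$ is, by Proposition~\ref{prop:lgg}, an $\mathfrak{aut}(\mathfrak{g})$-invariant (indeed ${\rm Aut}_c(\mathfrak{g})$-invariant) subspace of $\Lambda^3\mathfrak{g}$; consequently its annihilator $((\Lambda^3\mathfrak{g})^{\mathfrak{g}})^\circ\subset(\Lambda^3\mathfrak{g})^*$ is invariant under the contragredient action. Taking a basis $\upsilon_1,\dots,\upsilon_s$ of $((\Lambda^3\mathfrak{g})^{\mathfrak{g}})^\circ$, the functions $f_{\upsilon_a}=\upsilon_a\circ([\cdot,\cdot]_S)$ are linear functionals composed with the equivariant quadratic map, so for each fundamental vector field $X^{(2)}\in V_{\mathfrak{g}}$ generated by $\{T_t\}$ one computes
\begin{equation*}
(X^{(2)}f_{\upsilon_a})(r)=\frac{d}{dt}\Big|_{t=0}\upsilon_a\big([\Lambda^2 T_t(r),\Lambda^2 T_t(r)]_S\big)=\frac{d}{dt}\Big|_{t=0}\big((\Lambda^3 T_t)^{*}\upsilon_a\big)([r,r]_S)=\sum_{b=1}^{s}c_{ab}\,f_{\upsilon_b}(r),
\end{equation*}
where the constants $c_{ab}$ are the matrix entries of the infinitesimal contragredient action of $X^{(2)}$ on $((\Lambda^3\mathfrak{g})^{\mathfrak{g}})^\circ$ — these are constants precisely because the derivative at $t=0$ of the linear $t$-dependent operator $(\Lambda^3 T_t)^*$ restricted to the invariant subspace is a fixed linear endomorphism. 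This shows $\mathcal{A}=\langle f_{\upsilon_1},\dots,f_{\upsilon_s}\rangle$ is a linear Darboux family for $V_{\mathfrak{g}}$ on $\Lambda^2\mathfrak{g}$ (it is the case of Theorem~\ref{Th:ConsDar32} in which the Darboux family on $\Lambda^3\mathfrak{g}$ is the linear space $((\Lambda^3\mathfrak{g})^{\mathfrak{g}})^\circ$ of functionals). The identification of the locus is then immediate: $f_{\upsilon}([r,r]_S)=0$ for all $\upsilon\in((\Lambda^3\mathfrak{g})^{\mathfrak{g}})^\circ$ if and only if $[r,r]_S$ annihilates the whole annihilator, i.e. $[[r,r]]=[r,r]_S\in(\Lambda^3\mathfrak{g})^{\mathfrak{g}}$, which by Theorem~\ref{Thm:YBE} is exactly the mCYBE.

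For the second assertion I would fix a basis $\{\omega_1,\dots,\omega_N\}$ of $(\Lambda^3\mathfrak{g})^*$ with associated linear coordinates on $\Lambda^3\mathfrak{g}$; the components of the CYBE $[[r,r]]=0$ in a coordinate system on $\Lambda^2\mathfrak{g}$ given by a basis of $\Lambda^2\mathfrak{g}^*$ are precisely the functions $g_k(r):=\omega_k([r,r]_S)$, $k=1,\dots,N$, i.e. the full collection $\langle\omega_1,\dots,\omega_N\rangle\circ([\cdot,\cdot]_S)$. The same equivariance computation as above — now with the unrestricted contragredient action of $V_{\mathfrak{g}}$ on the whole of $(\Lambda^3\mathfrak{g})^*$ — shows $X^{(2)}g_k=\sum_l c_{kl}g_l$ with constant $c_{kl}$, so this span is a linear Darboux family for $V_{\mathfrak{g}}$ on $\Lambda^2\mathfrak{g}$, again an instance of Theorem~\ref{Th:ConsDar32}. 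I expect no serious obstacle here; the only point requiring care is the verification that differentiating the $t$-dependent pullback at $t=0$ genuinely produces \emph{constant} cofactors rather than $r$-dependent ones, which hinges on the linearity of $r\mapsto[r,r]_S$ being quadratic-homogeneous and the action $\Lambda^2 T_t$, $\Lambda^3 T_t$ being by linear maps — so that the chain rule yields $(X^{(2)}(\upsilon\circ Q))(r)=\big((X^{(3)}\upsilon)\circ Q\big)(r)$ where $X^{(3)}\upsilon$ is a fixed element of $(\Lambda^3\mathfrak{g})^*$. Beyond that, everything reduces to citing Proposition~\ref{Pr:PropSchou}, Proposition~\ref{prop:lgg}, Theorem~\ref{Thm:YBE}, and Theorem~\ref{Th:ConsDar32}.
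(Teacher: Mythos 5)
Your proposal is correct and follows essentially the same route as the paper: both arguments establish that $((\Lambda^3\mathfrak{g})^{\mathfrak{g}})^\circ$ (respectively the whole of $(\Lambda^3\mathfrak{g})^*$ for the CYBE part) is a linear Darboux family for the fundamental vector fields on $\Lambda^3\mathfrak{g}$, using the ${\rm Aut}(\mathfrak{g})$-invariance of $(\Lambda^3\mathfrak{g})^{\mathfrak{g}}$ and the linearity of the flows $\Lambda^3 T_t$, and then transfer everything to $\Lambda^2\mathfrak{g}$ through the equivariance of $r\mapsto[r,r]$, i.e. Theorem~\ref{Th:ConsDar32}, identifying the locus with the mCYBE solution set exactly as the paper does. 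The only cosmetic difference is that you unfold the chain-rule computation of the constant cofactors explicitly, where the paper establishes the Darboux-family property upstairs on $\Lambda^3\mathfrak{g}$ and then cites Theorem~\ref{Th:ConsDar32} wholesale.
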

\begin{proof} 
Let us prove that the annihilator of $(\Lambda^3\mathfrak{g})^{\mathfrak{g}}$ is a Darboux family for $V_{\mathfrak{g}}^{(3)}$ on $\Lambda^3\mathfrak{g}$. In fact, if $X^{(3)} \in V_{\mathfrak{g}}^{(3)}$, then its flow is given by a one-parameter group of diffeomorphisms of the form $\Lambda^3 T_t$ for certain Lie algebra automorphisms $T_t\in {\rm Aut}(\mathfrak{g})$ with $t\in \mathbb{R}$. Consider a function $f \in \Lambda^3 \mathfrak{g}^*$ such that $f(w) = 0$ for every $w \in (\Lambda^3\mathfrak{g})^{\mathfrak{g}}$, i.e. $f\in ((\Lambda^3\mathfrak{g})^\mathfrak{g})^\circ$. Then,
 $$
 (X^{(3)}f)(w) = \frac{d}{dt}\bigg\vert_{t = 0} f((\Lambda^3 T_t)( w)) = \frac{d}{dt}\bigg\vert_{t = 0} f\left(w_t \right) =
 0,
 $$
where we have used  that $(\Lambda^3\mathfrak{g})^{\mathfrak{g}}$ is closed under the extension to $\Lambda^3\mathfrak{g}$ of Lie algebra automorphisms of $\mathfrak{g}$ and thus, $(\Lambda^3T_t)(w) = :w_t\in (\Lambda^3\mathfrak{g})^{\mathfrak{g}}$ for any $w \in (\Lambda^3\mathfrak{g})^{\mathfrak{g}}$. Hence, $X^{(3)}f$ vanishes on $(\Lambda^3\mathfrak{g})^{\mathfrak{g}}$. Since $X^{(3)}$ is a linear vector field on $\Lambda^3\mathfrak{g}$ and $f$ is a linear function on $\Lambda^3\mathfrak{g}$, one obtains that $X^{(3)}f$ is a linear function on $\Lambda^3\mathfrak{g}$ vanishing on $(\Lambda^3\mathfrak{g})^{\mathfrak{g}}$. Therefore, $X^{(3)}f$ must be a linear combination of elements of a basis of $((\Lambda^3\mathfrak{g})^{\mathfrak{g}})^\circ$ and in consequence, $((\Lambda^3\mathfrak{g})^{\mathfrak{g}})^\circ$ becomes a linear Darboux family of $V_{\mathfrak{g}}^{(3)}$ on $\Lambda^3\mathfrak{g}$. By Theorem \ref{Th:ConsDar32}, the space of functions (\ref{Eq:ParFam}) 
becomes a linear Darboux family for $V_{\mathfrak{g}}$ on $\Lambda^2\mathfrak{g}$. 

The locus of (\ref{Eq:ParFam}) consists of elements $r \in \Lambda^2 \mathfrak{g}$ such that $f_v(r) = v([r,r]) = 0$ for all $v \in ((\Lambda^3\mathfrak{g})^{\mathfrak{g}})^{\circ}$. Hence, $[r,r] \in (\Lambda^3\mathfrak{g})^{\mathfrak{g}}$, which means that $r$ is a solution of the mCYBE for $\mathfrak{g}$. Conversely, any solution  $r \in \Lambda^2 \mathfrak{g}$ to mCYBE satisfies $[r,r] \in (\Lambda^3\mathfrak{g})^{\mathfrak{g}}$. Then, it follows that $v([r,r]) = 0$ for any $v \in ((\Lambda^3\mathfrak{g})^{\mathfrak{g}})^{\circ}$. Thus, the locus of (\ref{Eq:ParFam}) is the set of solutions to mCYBE. 

Since every $X^{(3)}\in V^{(3)}_{\mathfrak{g}}$ is linear vector field in a linear coordinate system on $\Lambda^3 \mathfrak{g}$, the space $\Lambda^3\mathfrak{g}^*$ is also a Darboux family for $V_{\mathfrak{g}}^{(3)}$ on $\Lambda^3\mathfrak{g}$. By Theorem \ref{Th:ConsDar32}, one has that the family of functions on $\Lambda^2\mathfrak{g}$ induced by (\ref{Eq:Ind}) is a linear Darboux family relative to $V_{\mathfrak{g}}$ on $\Lambda^2\mathfrak{g}$. The Darboux family is indeed spanned by the components of the CYBE of $\mathfrak{g}$, i.e. $[r,r]=0$, and its locus is its set of solutions.
\end{proof}

Finally, let us provide several results on the determination of Darboux families. 

\begin{proposition}\label{Prop:darb_sum} 
If $\mathcal{A}_1$ and $\mathcal{A}_2$ are Darboux families for the same Vessiot--Guldberg Lie algebra $V$ on a manifold $M$, then the sum $\mathcal{A}_1+\mathcal{A}_2$ is a Darboux family for $V$ on $M$ as well.
\end{proposition}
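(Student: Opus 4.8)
\textbf{Proof plan for Proposition~\ref{Prop:darb_sum}.} The statement is essentially a linear-algebra bookkeeping exercise once the definition of a Darboux family is unpacked, so the plan is to exhibit an explicit spanning set of $\mathcal{A}_1+\mathcal{A}_2$ and verify the defining closure condition of Definition~\ref{Def:Darb_fam} on it. First I would fix bases: let $\{f_1,\ldots,f_s\}$ span $\mathcal{A}_1$ and $\{g_1,\ldots,g_t\}$ span $\mathcal{A}_2$, so that $\mathcal{A}_1+\mathcal{A}_2$ is the (finite-dimensional) linear span of the combined collection $\{f_1,\ldots,f_s,g_1,\ldots,g_t\}$. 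Extract from this collection a basis $\{h_1,\ldots,h_q\}$ of $\mathcal{A}_1+\mathcal{A}_2$; each $h_k$ is either one of the $f_i$ or one of the $g_j$.

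Next I would check the co-factor condition for an arbitrary $X\in V$. Since $\mathcal{A}_1$ is a Darboux family, for each $i$ we have $Xf_i=\sum_{l=1}^s h^l_{iX}f_l$ with $h^l_{iX}\in C^\infty(M)$; likewise $Xg_j=\sum_{m=1}^t \tilde h^m_{jX}g_m$. In both displays the right-hand side lies in $\mathcal{A}_1$, respectively $\mathcal{A}_2$, hence in $\mathcal{A}_1+\mathcal{A}_2$. Therefore for each basis element $h_k$ (which equals some $f_i$ or $g_j$) the function $Xh_k$ already lies in $\mathcal{A}_1+\mathcal{A}_2$, and since $\{h_1,\ldots,h_q\}$ is a basis of that space we may rewrite $Xh_k=\sum_{n=1}^q c^n_{kX}h_n$ for suitable smooth functions $c^n_{kX}\in C^\infty(M)$ obtained by expressing the original combination of $f_l$'s (or $g_m$'s) in the chosen basis; the change-of-basis coefficients are constants, so the $c^n_{kX}$ are genuinely smooth functions on $M$. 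This verifies that $\mathcal{A}_1+\mathcal{A}_2$, together with the basis $\{h_1,\ldots,h_q\}$, satisfies Definition~\ref{Def:Darb_fam}, hence is a Darboux family for $V$ on $M$.

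There is essentially no serious obstacle here; the only point requiring a small amount of care is that in Definition~\ref{Def:Darb_fam} the co-factors are attached to a \emph{chosen basis} of the family, so one must be slightly careful that after passing to a basis of the sum the new co-factors are still well-defined smooth functions --- which they are, being finite linear combinations of the original co-factors with constant coefficients. If one additionally wants the analogue for \emph{linear} Darboux families (not asked for in the statement, but worth a remark), the same argument works verbatim: constant co-factors combined with constant change-of-basis coefficients stay constant, so the sum of two linear Darboux families is again linear. I would close by noting that, as an immediate corollary, the collection of Darboux families for a fixed $V$ on $M$ is closed under finite sums, which is the form in which this fact will be used when assembling Darboux families from simpler pieces in the classification of $r$-matrices.
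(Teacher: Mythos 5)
Your proposal is correct and follows essentially the same route as the paper's proof: both apply a vector field $X\in V$ to a general element of $\mathcal{A}_1+\mathcal{A}_2$, use the Darboux-family condition on each summand separately, and observe that the result stays in the span of the combined generators. Your extra care in passing to a genuine basis of the sum (the paper tacitly works with the possibly linearly dependent union of generators) and your remark on linear Darboux families are harmless refinements, not a different argument.
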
 
\begin{proof}
Let $\mathcal{A}_1 := \langle f^{(1)}_1, \ldots, f^{(1)}_{s_1}\rangle$ and $\mathcal{A}_2 := \langle f^{(2)}_1, \ldots, f^{(2)}_{s_2}\rangle$. Then, the sum is spanned by $\mathcal{A}_1+\mathcal{A}_2 = \langle f^{(1)}_1, \ldots, f^{(1)}_{s_1}, f^{(2)}_1, \ldots, f^{(2)}_{s_2} \rangle$. Obviously, any element $g$ of $\mathcal{A}_1+\mathcal{A}_2$ reads 
$$
g = \sum_{i = 1}^{s_1} \alpha_i f^{(1)}_i + \sum_{j=1}^{s_2} \beta_j f^{(2)}_j
$$
for certain $\alpha_i, \beta_j \in \mathbb{R}$. Given $X \in V$, one gets
\begin{equation*}
\begin{split}
X(g) &= \sum_{i = 1}^{s_1} \alpha_i X(f^{(1)}_i) + \sum_{j=1}^{s_2} \beta_j X(f^{(2)}_2) = \sum_{i = 1}^{s_1} \alpha_i \sum_{p = 1}^{s_1} h^{(1,i)}_p f^{(1)}_p + \sum_{j=1}^{s_2} \beta_j \sum_{q = 1}^{s_2} h^{(2,j)}_p f^{(2)}_p \\
&= \sum_{p = 1}^{s_1} \left[\sum_{i = 1}^{s_1} \alpha_i h^{(1,i)}_p \right] f^{(1)}_p + \sum_{q=1}^{s_2} \left[\sum_{j = 1}^{s_2} \beta_j h^{(2,j)}_p \right] f^{(2)}_p 
\end{split}
\end{equation*}
Thus, $\mathcal{A}_1+\mathcal{A}_2$ forms a Darboux family for $V$.
\end{proof}

Obviously, the locus of $\mathcal{A}_1+\mathcal{A}_2$ is contained in the locus of $\mathcal{A}_1$ and $\mathcal{A}_2$. This fact will be used to distinguish between different strata of the generalised distribution spanned by the vector fields of $V_{\mathfrak{g}}$ on $\mathcal{Y}_{\mathfrak{g}}$. 

In next sections, an important role is played by the linear Darboux families for a certain $V_{\mathfrak{g}}$ given by a one-dimensional $\mathcal{A}\subset (\Lambda^2\mathfrak{g})^*$. We will call such a linear one-dimensional Darboux family a {\it brick}. From the practical standpoint, bricks are quite easy to analyse due to their dimension and moreover, by Proposition \ref{Prop:darb_sum}, one can efficiently construct more complex and useful Darboux families with such one-dimensional objects. Indeed, for every linear Darboux family for $V_{\mathfrak{g}}$, bricks  are easy to obtain since, in view of (\ref{Eq:Red}), they are given by the intersection of an eigenvector space for each endomorphism of the form $\Lambda^2d\in \mathfrak{gl}(\Lambda^2\mathfrak{g})$ with  $d\in \mathfrak{der}(\mathfrak{g})$.

Let us give another interesting proposition, which is an immediate consequence of the fact that the vector fields of $V_{\mathfrak{g}}$ are linear relative to a linear coordinate set on $\Lambda^2\mathfrak{g}$.

\begin{proposition} The space $\Lambda^2\mathfrak{g}^*$ is a linear Darboux family of functions on  $\Lambda^2\mathfrak{g}$ relative to the Lie algebra $V_{\mathfrak{g}}$.
\end{proposition}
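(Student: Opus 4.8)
The plan is to unwind the definitions and observe that the claim is essentially immediate once we recall how the Vessiot--Guldberg Lie algebra $V_{\mathfrak{g}}$ acts on $\Lambda^2\mathfrak{g}$. By Corollary \ref{Re:DerAlg}, the fundamental vector fields of the action of ${\rm Aut}(\mathfrak{g})$ on $\Lambda^2\mathfrak{g}$ are the vector fields $X^2_v$ associated with derivations $v\in\mathfrak{der}(\mathfrak{g})$, and in any linear coordinate system $\{x_1,\dots,x_s\}$ on $\Lambda^2\mathfrak{g}$ they take the form $X^2_v=\sum_{i=1}^s [\Upsilon^2_w]_i\,\partial_{x_i}$, whose components are linear functions of the coordinates (they are the coordinates of the linear endomorphism $\Lambda^2 d$ applied to a generic point). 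Thus every $X\in V_{\mathfrak{g}}$ is a linear vector field on $\Lambda^2\mathfrak{g}$.

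First I would fix a basis $\{e_1,\dots,e_s\}$ of $\Lambda^2\mathfrak{g}^*$, which yields a linear coordinate system $\{x_1,\dots,x_s\}$ on $\Lambda^2\mathfrak{g}$ via $x_j := e_j$ regarded as a coordinate function. The space $\Lambda^2\mathfrak{g}^*$ is then precisely the $s$-dimensional linear span $\langle x_1,\dots,x_s\rangle$ inside $C^\infty(\Lambda^2\mathfrak{g})$. Next, for an arbitrary $X\in V_{\mathfrak{g}}$, writing $X=\sum_{i=1}^s a_i(x)\,\partial_{x_i}$ with each $a_i$ a linear (homogeneous degree-one) function of $x_1,\dots,x_s$, I would compute $X x_j = a_j(x)$, and since $a_j$ is linear it can be written as $a_j=\sum_{i=1}^s c^i_{jX}\,x_i$ for constants $c^i_{jX}\in\mathbb{R}$. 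Therefore $X x_j=\sum_{i=1}^s c^i_{jX}\,x_i\in\Lambda^2\mathfrak{g}^*$, which is exactly the defining condition of Definition \ref{Def:Darb_fam} with constant co-factors; hence $\Lambda^2\mathfrak{g}^*$ is a linear Darboux family for $V_{\mathfrak{g}}$ on $\Lambda^2\mathfrak{g}$.

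There is no genuine obstacle here: the only point requiring a word of care is the justification that the components $a_i$ of the fundamental vector fields are honestly linear in the chosen coordinates, but this is supplied verbatim by Corollary \ref{Re:DerAlg} together with formula \eqref{Eq:Red}, which exhibits the fundamental vector field coordinates as those of the extensions to $\Lambda^2\mathfrak{g}$ of the derivations of $\mathfrak{g}$, i.e. as the images under linear endomorphisms $\Lambda^2 d$ of a generic point. A brief remark that a general element of $V_{\mathfrak{g}}$ is a linear combination of such $X^2_v$ — hence still linear — completes the argument. I would state the proof in two or three sentences only, since the content is a direct consequence of the linearity already established in Section \ref{Sec:StSus} and Corollary \ref{Re:DerAlg}.
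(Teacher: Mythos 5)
Your proof is correct and follows exactly the route the paper intends: the paper gives no written proof, merely noting the proposition is ``an immediate consequence of the fact that the vector fields of $V_{\mathfrak{g}}$ are linear relative to a linear coordinate set on $\Lambda^2\mathfrak{g}$,'' and your argument simply makes that observation explicit via Corollary \ref{Re:DerAlg}. The only cosmetic point is that the linearity of the fundamental vector fields comes from Corollary \ref{Re:DerAlg} and equation \eqref{Eq:Red} rather than from Section \ref{Sec:StSus}, but this does not affect the validity of the argument.
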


Since $\Lambda^2 \mathfrak{g}^*$ is a linear Darboux family of $V_{\mathfrak{g}}$, there exists a linear representation of $\mathfrak{aut}(\mathfrak{g})$ on $
\Lambda^2\mathfrak{g}^*$. Its irreducible representations also give rise to Darboux families, which can be summed (as linear spaces) to the Darboux family, or potentially elements thereof, associated with the mCYBE to determine new Darboux families. The locus of such sums will be interested to as when they contain elements of  $\mathcal{Y}_{\mathfrak{g}}$. This will be employed to obtain the strata of $V_{\mathfrak{g}}$ within $\mathcal{Y}_{\mathfrak{g}}$ and, therefore, inequivalent $r$-matrices relative to the action of ${\rm Aut}_c(\mathfrak{g})$ on $\Lambda^2\mathfrak{g}$. This in turn will be used to obtain families of inequivalent $r$-matrices and coboundary cocommutators for real four-dimensional indecomposable Lie algebras \cite{SW14} in Section \ref{Ch:Darb_Sec:Cla}.

\section{Geometry of mCYBE solutions and Darboux families}\label{Ch:Darb_Sec:mcybe}

As reminded throughout this thesis, the classification of equivalent $r$-matrices for the Lie algebra $\mathfrak{g}$ amounts to the study of the orbits of the action of ${\rm Aut}(\mathfrak{g})$ on the space $\mathcal{Y}_{\mathfrak{g}} \subset \Lambda^2 \mathfrak{g}$ of solutions to mCYBE. Since this task is not elementary and it cannot be accomplished efficiently by straightforward brute-force approach, we developed in Chapter \ref{Ch:alg_met} several techniques based on $\mathfrak{g}$-invariant forms to facilitate the work on this problem.

In this section, we attempt the analysis of the orbits of the ${\rm Aut}(\mathfrak{g})$-action on $\mathfrak{Y}_{\mathfrak{g}}$ within the geometric framework. We detail some observations regarding the geometry of $\mathcal{Y}_{\mathfrak{g}}$ and most importantly, we discuss the role of Darboux families in the study of such orbits.

As previously, we hereafter write $V_{\mathfrak{g}}$ for the Lie algebra of fundamental vector fields of the action of ${\rm Aut}(\mathfrak{g})$ on $\Lambda^2\mathfrak{g}$ and $\mathscr{E}_\mathfrak{g}$ stands for the distribution on $\Lambda^2\mathfrak{g}$ spanned  by the vector fields of $V_{\mathfrak{g}}$. Recall that $\Lambda^2_R\mathfrak{g}:=\Lambda^2\mathfrak{g}/(\Lambda^2\mathfrak{g})^\mathfrak{g}$ and $\pi_\mathfrak{g}:w\in \Lambda^2\mathfrak{g}\mapsto [w]\in \Lambda^2_R\mathfrak{g}$ is the canonical projection onto the quotient space  $\Lambda_R^2\mathfrak{g}$. Finally, $\mathcal{Y}_\mathfrak{g}$ denotes the space of $r$-matrices of $\mathfrak{g}$, i.e. solutions to mCYBE.

By Lemma \ref{reduced_action}, the action of ${\rm Aut}(\mathfrak{g})$ on $\Lambda^2\mathfrak{g}$  induces an action of ${\rm Aut}(\mathfrak{g})$ on $\Lambda^2_R\mathfrak{g}$ of the form 
$$
\Psi:T\in {\rm Aut}(\mathfrak{g})\mapsto [\Lambda^2T]\in { GL}(\Lambda^2_R\mathfrak{g}), \qquad [\Lambda^2T]([w]):=[\Lambda^2T(w)],\qquad \forall w\in \Lambda^2\mathfrak{g}.
$$
In consequence,
\begin{equation}\label{Eq:Equiv}
\pi_{\mathfrak{g}}((\Lambda^2T)(w))=[\Psi(T)](\pi_{\mathfrak{g}}(w)),\qquad \forall w\in \Lambda^2\mathfrak{g},\qquad \forall T\in {\rm Aut}(\mathfrak{g}).
\end{equation}

Let $N_1, N_2$ be manifolds and denote the action of a Lie group on $N_1$ and $N_2$ by $\Phi_1:G\times N_1\rightarrow N_1$ and $\Phi_2:G\times N_2\rightarrow N_2$, respectively. W say that $\Phi_1$ and $\Phi_2$ are {\it equivariant} relative to  $\phi:N_1\rightarrow N_2$ if $\phi(\Phi_1(g,x))=\Phi_2(g,\phi(x))$ for every $g\in G$ and $x\in N_1$. It can be proved that if $\Phi_1$ and $\Phi_2$ are equivariant, the Lie algebra of fundamental vector fields of $\Phi_1$ projects via $\phi_*$ onto the Lie algebra of fundamental vector fields of $\Phi_2$ and moreover, the orbits of $\Phi_1$ project via $\phi$ onto the orbits of $\Phi_2$  (see \cite{AM87} for details). 

Expression (\ref{Eq:Equiv}) shows that the actions of ${\rm Aut}(\mathfrak{g})$ on $\Lambda^2\mathfrak{g}$ and $\Lambda^2_R\mathfrak{g}$  are equivariant relative to $\pi_\mathfrak{g}:\Lambda^2\mathfrak{g}\rightarrow \Lambda^2_R\mathfrak{g}$. Then, the vector fields of  $V_\mathfrak{g}$ project via $\pi_{\mathfrak{g}*}$ onto the Lie algebra of fundamental vector fields of the action of ${\rm Aut}(\mathfrak{g})$ on $\Lambda^2_R\mathfrak{g}$. We write $\mathscr{E}^R_{\mathfrak{g}}$ for the generalised distribution spanned by the fundamental vector fields of the action of ${\rm Aut}(\mathfrak{g})$ on $\Lambda^2_R\mathfrak{g}$. This means that the strata of the generalised distribution $\mathscr{E}_{\mathfrak{g}}$ project onto the strata of the generalised distribution $\mathscr{E}^R_\mathfrak{g}$. 

As the elements of ${\rm Aut}(\mathfrak{g})$ preserve the space $\mathcal{Y}_{\mathfrak{g}}$ due to the properties of the Schouten bracket, the orbits of ${\rm Aut}(\mathfrak{g})$ containing a solution to the mCYBE consist of solutions to the mCYBE. Similarly, the orbits of the action of ${\rm Aut}(\mathfrak{g})$ on $\Lambda^2\mathfrak{g}$ containing a solution to the CYBE consist of solutions to the CYBE. Thus, two $r$-matrices are equivalent up to Lie algebra automorphisms of $\mathfrak{g}$ if and only if they belong to the same orbit of ${\rm Aut}(\mathfrak{g})$ within $\mathcal{Y}_\mathfrak{g}$. 

Recall that  ${\rm Aut}(\mathfrak{g})/{\rm Aut}_c(\mathfrak{g})$, where ${\rm Aut}_c(\mathfrak{g})$ is the connected component of the identity of the Lie group ${\rm Aut}(\mathfrak{g})$, is discrete and therefore countable \cite{Bo05}. In other words, the Lie group ${\rm Aut}(\mathfrak{g})$ is a numerable sum (as subsets) of disjoint and connected subsets of ${\rm Aut}(\mathfrak{g})$  diffeomorphic to ${\rm Aut}_c(\mathfrak{g})$. The strata of $\mathscr{E}_\mathfrak{g}$ coincide with the orbits of ${\rm Aut}_c(\mathfrak{g})$. Therefore, the orbits of ${\rm Aut}(\mathfrak{g})$ are given by a numerable sum of  strata of $\mathscr{E}_\mathfrak{g}$. Moreover, each particular orbit of ${\rm Aut}(\mathfrak{g})$ is an immersed submanifold in $\Lambda^2\mathfrak{g}$ of a fixed dimension. Hence, each one of the orbits of ${\rm Aut}_c(\mathfrak{g})$, whose sum gives rise to an orbit of ${\rm Aut}(\mathfrak{g})$, must have the same dimension. Similarly, the orbits of ${\rm Aut}(\mathfrak{g})$ on $\Lambda^2_R\mathfrak{g}$ are given by a numerable sum (as subsets) of strata of $\mathscr{E}^R_{\mathfrak{g}}$ of the same dimension, which are orbits of the action of ${\rm Aut}_c(\mathfrak{g})$ on $\Lambda^2_R\mathfrak{g}$. 

By Proposition \ref{prop:requiv}, two coboundary cocommutators $\delta_i:v\in \mathfrak{g}\mapsto [r_i,v]\in \mathfrak{g}\wedge\mathfrak{g}$, with $i=1,2$ and $r_i\in \mathcal{Y}_\mathfrak{g}$, are equivalent under a Lie algebra automorphism of $\mathfrak{g}$ if and only if there exists $T\in {\rm Aut}(\mathfrak{g})$ such that $(\Lambda^2T)(r_1)$ and $r_2$ belong to the same equivalence class in $\Lambda^2_R\mathfrak{g}$. Consequently, two $r$-matrices are equivalent relative to the action of ${\rm Aut}(\mathfrak{g})$ on $\Lambda^2\mathfrak{g}$ if and only if they belong to the same family of strata of the distribution $\mathscr{E}_\mathfrak{g}$ giving rise to an orbit of ${\rm Aut}(\mathfrak{g})$ in $\mathcal{Y}_\mathfrak{g}$. 

The following proposition summarises previous considerations.

\begin{proposition}\label{Pro:ClassificationRmatrix} 
The strata of $\mathscr{E}_\mathfrak{g}$ coincide with the orbits of the action of ${\rm Aut}_c(\mathfrak{g})$ on $\Lambda^2 \mathfrak{g}$. Each set of $r$-matrices for $\mathfrak{g}$ that are equivalent relative to ${\rm Aut}_c(\mathfrak{g})$ corresponds to a single stratum of the generalised distribution $\mathscr{E}_\mathfrak{g}$ within $\mathcal{Y}_\mathfrak{g}$. Moreover, there exists a one-to-one correspondence between the families of equivalent (relative to ${\rm Aut}(\mathfrak{g})$) coboundary cocommutators of $\mathfrak{g}$ and the orbits of the action of ${\rm Aut}(\mathfrak{g})$ on $\pi_{\mathfrak{g}}(\mathcal{Y}_{\mathfrak{g}})\subset \Lambda^2_R\mathfrak{g}$. Every such an orbit in $\pi_{\mathfrak{g}}(\mathcal{Y}_{\mathfrak{g}})$ is the sum (as subsets) of a numerable collection of strata of the same dimension of the generalised distribution $\mathscr{E}_\mathfrak{g}^R$.
\end{proposition}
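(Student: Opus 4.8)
The plan is to assemble the statement from three facts already established in the excerpt, each of which handles one of the three claims. First, for the claim that the strata of $\mathscr{E}_\mathfrak{g}$ coincide with the orbits of ${\rm Aut}_c(\mathfrak{g})$: I would invoke Theorem \ref{Th:StSus} to guarantee that $\mathscr{E}_\mathfrak{g}$, being the generalised distribution spanned by the finite-dimensional Lie algebra $V_\mathfrak{g}$ of fundamental vector fields of the ${\rm Aut}(\mathfrak{g})$-action on $\Lambda^2\mathfrak{g}$, is integrable, and then use the explicit description of its strata via formula (\ref{eq:dec}) together with Proposition \ref{aut}: since $\mathfrak{der}(\mathfrak{g})\simeq\mathfrak{aut}(\mathfrak{g})$ and ${\rm Aut}_c(\mathfrak{g})$ is generated by $\exp(\mathfrak{aut}(\mathfrak{g}))$ (cf. Lemma \ref{Lio} and Proposition \ref{Out}), the points reachable from $w\in\Lambda^2\mathfrak{g}$ by compositions of flows of vector fields in $V_\mathfrak{g}$ are exactly ${\rm Aut}_c(\mathfrak{g})\cdot w$. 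This identifies the stratum through $w$ with the ${\rm Aut}_c(\mathfrak{g})$-orbit of $w$.

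Second, for the statement that each class of $r$-matrices equivalent under ${\rm Aut}_c(\mathfrak{g})$ corresponds to a single stratum inside $\mathcal{Y}_\mathfrak{g}$: I would note that, by the properties of the Schouten bracket, every $T\in{\rm Aut}(\mathfrak{g})$ satisfies $\Lambda^3 T[r,r]_S=[\Lambda^2 Tr,\Lambda^2 Tr]_S$ (as used in the proof of Proposition \ref{Prop:Aut2} and in Theorem \ref{Th:ConsDar32}), so ${\rm Aut}(\mathfrak{g})$ preserves the set $\mathcal{Y}_\mathfrak{g}$ of mCYBE solutions; hence a fortiori ${\rm Aut}_c(\mathfrak{g})$ does. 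Combined with the first part, the ${\rm Aut}_c(\mathfrak{g})$-orbit of any $r\in\mathcal{Y}_\mathfrak{g}$ lies entirely in $\mathcal{Y}_\mathfrak{g}$ and is simultaneously a stratum of $\mathscr{E}_\mathfrak{g}$; this is precisely the asserted bijection between ${\rm Aut}_c(\mathfrak{g})$-equivalence classes of $r$-matrices and strata of $\mathscr{E}_\mathfrak{g}$ contained in $\mathcal{Y}_\mathfrak{g}$.

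Third, for the correspondence with cocommutators and the structure of ${\rm Aut}(\mathfrak{g})$-orbits on $\pi_\mathfrak{g}(\mathcal{Y}_\mathfrak{g})$: I would use Proposition \ref{prop:requiv}, which says $\delta_{r_1}=\delta_{r_2}$ iff $r_1-r_2\in(\Lambda^2\mathfrak{g})^\mathfrak{g}$, i.e.\ iff $\pi_\mathfrak{g}(r_1)=\pi_\mathfrak{g}(r_2)$, so that coboundary cocommutators of $\mathfrak{g}$ are in bijection with points of $\pi_\mathfrak{g}(\Lambda^2\mathfrak{g})$ and two of them are ${\rm Aut}(\mathfrak{g})$-equivalent iff the corresponding points of $\Lambda^2_R\mathfrak{g}$ lie on the same ${\rm Aut}(\mathfrak{g})$-orbit via the induced action $\Psi$ of Lemma \ref{reduced_action}. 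The equivariance relation (\ref{Eq:Equiv}) then shows $\pi_\mathfrak{g}$ intertwines the two actions, so (as recalled from \cite{AM87}) $\pi_{\mathfrak{g}*}$ maps $\mathscr{E}_\mathfrak{g}$ onto $\mathscr{E}^R_\mathfrak{g}$ and maps strata onto strata, and orbits onto orbits; finally, since ${\rm Aut}(\mathfrak{g})/{\rm Aut}_c(\mathfrak{g})$ is discrete hence countable \cite{Bo05}, each ${\rm Aut}(\mathfrak{g})$-orbit on $\Lambda^2_R\mathfrak{g}$ is a countable union of ${\rm Aut}_c(\mathfrak{g})$-orbits, i.e.\ of strata of $\mathscr{E}^R_\mathfrak{g}$, all of the same dimension because the orbit is an immersed submanifold of fixed dimension. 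Restricting to $\pi_\mathfrak{g}(\mathcal{Y}_\mathfrak{g})$ (which is ${\rm Aut}(\mathfrak{g})$-invariant since $\mathcal{Y}_\mathfrak{g}$ is) gives the last sentence.

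The only genuinely delicate point I expect is bookkeeping rather than mathematics: making sure the passage "stratum of $\mathscr{E}_\mathfrak{g}$ $=$ orbit of ${\rm Aut}_c(\mathfrak{g})$" is justified at singular points of the distribution, where Theorem \ref{Th:StSus} and the decomposition (\ref{eq:dec}) must be applied carefully, and verifying that $\pi_\mathfrak{g}$ indeed carries strata to strata and not merely to subsets of strata — this follows from the submersion property of $\pi_\mathfrak{g}$ and equivariance, but it is the step most in need of an explicit justification. Everything else is a direct quotation of Proposition \ref{prop:requiv}, Lemma \ref{reduced_action}, Theorem \ref{Th:StSus}, and the countability of $\pi_0({\rm Aut}(\mathfrak{g}))$.
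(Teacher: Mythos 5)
Your proposal is correct and follows essentially the same route as the paper: the proposition there is explicitly a summary of the immediately preceding discussion, which uses exactly the ingredients you cite — the identification of strata of $\mathscr{E}_\mathfrak{g}$ with ${\rm Aut}_c(\mathfrak{g})$-orbits, invariance of $\mathcal{Y}_\mathfrak{g}$ under $\Lambda^2 T$ via the Schouten bracket, Proposition \ref{prop:requiv} together with the equivariance of $\pi_\mathfrak{g}$ from Lemma \ref{reduced_action}, and the countability of ${\rm Aut}(\mathfrak{g})/{\rm Aut}_c(\mathfrak{g})$. The "delicate points" you flag (strata versus orbits at singular points, and $\pi_\mathfrak{g}$ mapping strata onto strata) are resolved in the paper exactly as you suggest, by appealing to the general fact that equivariant maps project fundamental vector fields and orbits onto orbits.
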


In practice, we shall obtain each orbit of ${\rm Aut}(\mathfrak{g})$ on $\mathcal{Y}_\mathfrak{g}$ as a numerable family of strata of $\mathscr{E} _\mathfrak{g}$ in $\mathcal{Y}_\mathfrak{g}$. Such orbits represent the families of $r$-matrices that are equivalent up to an element of ${\rm Aut}(\mathfrak{g})$. Then, we will derive all strata in $\mathcal{Y}_\mathfrak{g}$ that map onto the same space in $\Lambda_R^2\mathfrak{g}$. This last task will give us, along with the orbits of ${\rm Aut}(\mathfrak{g})$ on $\mathcal{Y}_\mathfrak{g}$, the equivalent classes of  coboundary cocommutators on $\mathfrak{g}$ up to ${\rm Aut}(\mathfrak{g})$. 

Finally, let us show how Darboux families can be employed to obtain the orbits of ${\rm Aut}(\mathfrak{g})$ on $\Lambda^2\mathfrak{g}$ and to identify the strata of $\mathscr{E}_\mathfrak{g}$.

\begin{proposition}\label{Pro:UseProp} 
The locus $\ell_{\mathfrak{g}}$ of a Darboux family $\mathcal{A}_\mathfrak{g}$ relative to the Lie algebra $V_\mathfrak{g}$ on $\Lambda^2\mathfrak{g}$ is a sum (as subsets) of the orbits of the action of ${\rm Aut}_c(\mathfrak{g})$ on $\Lambda^2\mathfrak{g}$ containing a point in $\ell_{\mathfrak{g}}$. If $\ell_{\mathfrak{g}}$ is a connected submanifold in $\Lambda^2\mathfrak{g}$ of dimension given by the rank of the generalised distribution $\mathscr{E}_{\mathfrak{g}}$, then $\ell_{\mathfrak{g}}$ is a separate stratum of the generalised distribution $\mathscr{E}_{\mathfrak{g}}$. 
\end{proposition}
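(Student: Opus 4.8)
The plan is to combine two facts already established in the excerpt: Lemma \ref{Lem:locus_orb}, which says the locus of a Darboux family is a union (as subsets of the ambient manifold) of strata of the generalised distribution generated by the associated Vessiot--Guldberg Lie algebra, and the identification, obtained in Section \ref{Ch:Darb_Sec:mcybe} (see in particular the discussion preceding Proposition \ref{Pro:ClassificationRmatrix} and formula (\ref{eq:dec})), of the strata of $\mathscr{E}_{\mathfrak{g}}$ with the orbits of ${\rm Aut}_c(\mathfrak{g})$ on $\Lambda^2\mathfrak{g}$. First I would recall that $V_{\mathfrak{g}}$ is, by definition, the Lie algebra of fundamental vector fields of the ${\rm Aut}(\mathfrak{g})$-action on $\Lambda^2\mathfrak{g}$, so that $\mathscr{E}_{\mathfrak{g}} = \mathcal{D}^{V_{\mathfrak{g}}}$ in the notation of Section \ref{Sec:StSus}; since $V_{\mathfrak{g}}$ is finite-dimensional, Theorem \ref{Th:StSus} guarantees $\mathscr{E}_{\mathfrak{g}}$ is integrable, and by (\ref{eq:dec}) its stratum through a point $w$ is exactly the set of points reachable from $w$ by finite compositions of flows of elements of $V_{\mathfrak{g}}$, i.e. the orbit of the connected group generated by $\exp(V_{\mathfrak{g}})$, which is ${\rm Aut}_c(\mathfrak{g})$ by Lemma \ref{Lio} and Proposition \ref{Out}. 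This establishes the first sentence of the proposition.

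Next I would apply Lemma \ref{Lem:locus_orb} directly to the Darboux family $\mathcal{A}_{\mathfrak{g}}$ and its locus $\ell_{\mathfrak{g}}$: it gives that $\ell_{\mathfrak{g}}$ is the sum, as subsets of $\Lambda^2\mathfrak{g}$, of a collection of strata of $\mathscr{E}_{\mathfrak{g}}$, and by the previous paragraph these strata are precisely the ${\rm Aut}_c(\mathfrak{g})$-orbits meeting $\ell_{\mathfrak{g}}$. (One should note explicitly that any orbit meeting $\ell_{\mathfrak{g}}$ is entirely contained in $\ell_{\mathfrak{g}}$ — this is the content of the uniqueness-of-solutions argument inside the proof of Lemma \ref{Lem:locus_orb} — so the decomposition is into full orbits, not partial ones.) This proves the first assertion of Proposition \ref{Pro:UseProp}.

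For the second assertion, suppose in addition that $\ell_{\mathfrak{g}}$ is a connected submanifold of $\Lambda^2\mathfrak{g}$ whose dimension equals the rank of $\mathscr{E}_{\mathfrak{g}}$ — here one should read this rank as the rank computed at points of $\ell_{\mathfrak{g}}$, i.e. $\dim \ell_{\mathfrak{g}} = \dim (\mathscr{E}_{\mathfrak{g}})_w$ for $w \in \ell_{\mathfrak{g}}$. From the first part, $\ell_{\mathfrak{g}}$ is a disjoint union of ${\rm Aut}_c(\mathfrak{g})$-orbits, each of which is an immersed submanifold tangent to $\mathscr{E}_{\mathfrak{g}}$; pick one such orbit $\mathcal{O}$ through a point $w$. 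Then $T_w\mathcal{O} = (\mathscr{E}_{\mathfrak{g}})_w$ and $T_w\mathcal{O} \subseteq T_w\ell_{\mathfrak{g}}$, while $\dim T_w\mathcal{O} = \dim(\mathscr{E}_{\mathfrak{g}})_w = \dim \ell_{\mathfrak{g}} = \dim T_w\ell_{\mathfrak{g}}$, so $\mathcal{O}$ and $\ell_{\mathfrak{g}}$ agree infinitesimally at $w$. Since $\mathcal{O}$ is an open subset of $\ell_{\mathfrak{g}}$ (equal dimension, $\mathcal{O}$ an immersed submanifold of the manifold $\ell_{\mathfrak{g}}$, hence an open submanifold), $\ell_{\mathfrak{g}} \setminus \mathcal{O}$ is a union of the remaining orbits, each of which is likewise open in $\ell_{\mathfrak{g}}$; so $\ell_{\mathfrak{g}}$ is partitioned into disjoint nonempty open subsets, and connectedness forces there to be exactly one, namely $\ell_{\mathfrak{g}} = \mathcal{O}$. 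Hence $\ell_{\mathfrak{g}}$ is a single stratum of $\mathscr{E}_{\mathfrak{g}}$.

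The main subtlety to get right is the openness claim: an immersed submanifold need not be embedded, so "$\mathcal{O}$ open in $\ell_{\mathfrak{g}}$" has to be justified from the equality of dimensions together with the fact that both $\mathcal{O}$ and $\ell_{\mathfrak{g}}$ are (possibly only immersed) submanifolds of $\Lambda^2\mathfrak{g}$ sharing the same tangent space at each common point — which is exactly where the rank hypothesis is used. I would also be careful to state that the dimension of $\mathscr{E}_{\mathfrak{g}}$ is constant along $\ell_{\mathfrak{g}}$ (which follows since $\ell_{\mathfrak{g}}$ is a union of orbits of a group action, all of the same dimension when $\ell_{\mathfrak{g}}$ is connected of that dimension), so that "the rank of the generalised distribution $\mathscr{E}_{\mathfrak{g}}$" is unambiguous on $\ell_{\mathfrak{g}}$. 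No heavy computation is needed; the content is entirely in assembling Lemma \ref{Lem:locus_orb}, the orbit description (\ref{eq:dec}), and the dimension-counting connectedness argument.
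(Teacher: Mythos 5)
Your proposal is correct and follows essentially the same route as the paper: Lemma \ref{Lem:locus_orb} combined with the identification of the strata of $\mathscr{E}_{\mathfrak{g}}$ with the ${\rm Aut}_c(\mathfrak{g})$-orbits (Proposition \ref{Pro:ClassificationRmatrix}) gives the first claim, and the second follows from the openness-plus-connectedness argument, which the paper phrases as $\ell_{\mathfrak{g}}$ being ``locally generated'' by the action around each of its points. Your treatment of the openness of each orbit inside $\ell_{\mathfrak{g}}$ is somewhat more explicit than the paper's, but it is the same argument.
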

\begin{proof} 
Let $E$ denote a single stratum of a generalised distribution $\mathcal{E}_{\mathfrak{g}}$ passing through a point $p \in \ell_{\mathcal{A}}$. Lemma \ref{Lem:locus_orb} implies that all functions of $\mathcal{A}^E_{\mathfrak{g}}$ are equal to zero on $E$. Thus, $E$ is contained in $\ell_{\mathfrak{g}}$. Since the strata of $\mathscr{E}_\mathfrak{g}$ are the orbits of ${\rm Aut}_c(\mathfrak{g})$ in $\Lambda^2\mathfrak{g}$ by Proposition \ref{Pro:ClassificationRmatrix}, it follows that $\ell_{\mathfrak{g}}$ is a sum of strata of $\mathscr{E}_\mathfrak{g}$.

If the rank of $\mathscr{E}_\mathfrak{g}$ on $\ell_\mathfrak{g}$ is equal to $\dim \ell_{\mathfrak{g}}$, where $\ell_{\mathfrak{g}}$ is assumed to be a connected submanifold, then $\ell_{\mathfrak{g}}$ is locally generated around any point $p\in \ell_{\mathfrak{g}}$ by the action of ${\rm Aut}_c(\mathfrak{g})$ on that point. Since $\ell_1$ is connected, it is wholly generated by the action of ${\rm Aut}_c(\mathfrak{g})$ and it becomes a strata of $\mathscr{E}_{\mathfrak{g}}$.
\end{proof}

Note that we are interested only in those loci of Darboux families of $V_{\mathfrak{g}}$ contained in the space $\mathcal{Y}_\mathfrak{g}$ of solutions to the mCYBE of $\mathfrak{g}$.

\section{Study of coboundary Lie bialgebras on four-dimensional indecomposable Lie algebras}\label{Ch:Darb_Sec:Cla}
\sectionmark{Study of coboundary Lie bialgebras on 4D indecomposable Lie alg.}

A Lie algebra $\mathfrak{g}$ is {\it decomposable} when it can be written as the direct sum of two of its proper ideals. Otherwise, we say that $\mathfrak{g}$ is {\it indecomposable}. Winternitz and \v{S}nobl classified in \cite[Part 4]{SW14} all indecomposable Lie algebras up to dimension six (see \cite[p. 217]{SW14} for comments on previous classifications) . The aim of this section is to present the classification, up to Lie algebra automorphisms, of coboundary Lie bialgebras on real four-dimensional indecomposable Lie algebras (according to the Winternitz--\v{S}nobl (W\v{S}) classification) via Darboux families. Moreover, the equivalence of $r$-matrices and solutions to CYBEs for these Lie algebras are also analysed.

We hereafter assume $\mathfrak{g}$ to be a real four-dimensional and indecomposable Lie algebra. For the sake of completeness, the structure constants for every $\mathfrak{g}$ in a basis $\{e_1,e_2,e_3,e_4\}$ are given in Table \ref{Tab:StruCons}. We hereafter endow $\Lambda^2\mathfrak{g}$  with a basis $\{e_{12}, e_{13}, e_{14}, e_{23}, e_{24}, e_{34} \}$ with $e_{ij} := e_1 \wedge e_j$, while its dual basis is denoted by $\{x_1,x_2,x_3,x_4,x_5,x_6\}$. Additionally, $\Lambda^3\mathfrak{g}$ is endowed with a basis $\{e_{123},e_{124},e_{134},e_{234}\}$, where $e_{ijk} := e_i \wedge e_j \wedge e_k$. 
The classes of Lie algebras in Table \ref{Tab:StruCons} may contain several non-isomorphic Lie algebras depending on several parameters, e.g. $\alpha,\beta$. To simplify the notation, we will not detail specific values of the parameters when we refer to properties of a whole class of Lie algebras or when the particular case we are discussing is known from the context.

	\begin{table}[h]
\centering		
\begin{tabular}{|c||c|c|c|c|c|c|c|}
	\hline
	Lie algebra & $[e_1,e_2]$ & $[e_1,e_3]$ & $[e_1,e_4]$ & $[e_2,e_3]$ & $[e_2,e_4]$ & $[e_3,e_4]$ & Parameters \\
	\hhline{|=#=|=|=|=|=|=|=|}
	$\mathfrak{s}_{1}$ & 0 & 0 & 0 & 0 & $-e_1$ & $-e_3$ & \\
	\hline
	$\mathfrak{s}_{2}$ & 0 & 0 & $-e_1$ & 0 & $-e_1 -e_2$ & $-e_2 -e_3$ & \\
	\hline
	\multirow{2}{*}{$\mathfrak{s}_{3}$} & \multirow{2}{*}{0} & \multirow{2}{*}{0} & \multirow{2}{*}{$-e_1$} & \multirow{2}{*}{0} & \multirow{2}{*}{$-\alpha e_2$} & \multirow{2}{*}{$-\beta e_3$} & $0 < |\beta| \leq |\alpha| \leq 1$, \\
 & & & & & & & $(\alpha, \beta) \neq (-1,-1)$  \\
	\hline
	$\mathfrak{s}_{4}$ & 0 & 0 & $-e_1$ & 0 & $-e_1 -e_2$ & $-\alpha e_3$ & $\alpha \in \rz$ \\
	\hline
	$\mathfrak{s}_{5}$ & 0 & 0 & $-\alpha e_1$ & 0 & $-\beta e_2 + e_3$ & $-e_2 -\beta e_3$ & $\alpha > 0, \beta \in \rr$ \\
	\hline
	$\mathfrak{s}_{6}$ & 0 & 0 & 0 & $e_1$ & $-e_2$ & $e_3$ & \\
	\hline
	$\mathfrak{s}_{7}$ & 0 & 0 & 0 & $e_1$ & $e_3$ & $-e_2$ & \\
	\hline
	$\mathfrak{s}_{8}$ & 0 & 0 & $-(1 + \alpha) e_1$ & $e_1$ & $-e_2$ & $-\alpha e_3$ & $\alpha \in ]-1,1] \backslash \{0\}$ \\
	\hline
	$\mathfrak{s}_{9}$ & 0 & 0 & $-2\alpha e_1$ & $e_1$ & $-\alpha e_2 + e_3$ & $-e_2 -\alpha e_3$ & $\alpha > 0$ \\
	\hline
	$\mathfrak{s}_{10}$ & 0 & 0 & $-2e_1$ & $e_1$ & $-e_2$ & $-e_2 -e_3$ & \\
	\hline
	$\mathfrak{s}_{11}$ & 0 & 0 & $-e_1$ & $e_1$ & $-e_2$ & 0 & \\
	\hline
	$\mathfrak{s}_{12}$ & 0 & $-e_1$ & $e_2$ & $-e_2$ & $-e_1$ & 0 & \\
	\hline
	$\mathfrak{n}_{1}$ & 0 & 0 & 0 & 0 & $e_1$ & $e_2$ & \\
	\hline
\end{tabular}
\caption{Structure constants for real four-dimensional indecomposable Lie algebras according to the W\v{S} classification. Such Lie algebras are denoted by $\mathfrak{s}_{4,k},\mathfrak{n}_{4,1}$ for $1\leq k\leq 12$ in the W\v{S} classification. Nevertheless, the subindex concerning the dimension of the Lie algebra will be removed in our work to simplify the notation. The letters $\mathfrak{n}$ and $\mathfrak{s}$ are used to denote nilpotent and solvable (but not nilpotent) Lie algebras, respectively. To avoid duplicities of Lie algebras, the parameters $\alpha,\beta$ of $\mathfrak{s}_3$ must satisfy additional restrictions that were not detailed explicitly in \cite{SW14} (see \cite[p. 228]{SW14}). Such restrictions will be determined in Subsection \ref{Subsec:3}.}  \label{Tab:StruCons}
	\end{table}

Finally, it is worth mentioning some physical and mathematical applications of the Lie bialgebras given in our list and their relations to some previous works. For instance, the nilpotent Lie algebra $\mathfrak{n}_1$ corresponds to the so-called Galilean Lie algebra, which is deformed  in the study of XYZ models with the quantum Galilei group appearing in  \cite{BCGST92}. Lie bialgebras defined on Lie algebras of the type $\mathfrak{n}_1$ also occur in \cite{Ne16,Op98}. The coboundary Lie bialgebras on the Lie algebra $\mathfrak{s}_6$, which corresponds to the so-called {\it harmonic oscillator algebra}, are obtained  in \cite{BH96} without studying the equivalence up to Lie algebra automorphisms. In that work, the quantisations and $R$-matrices of such a sort of Lie bialgebras are also studied. In particular, equation (3.7) in \cite{BH96} matches the mCYBE obtained in Section \ref{Se:s6}. The work \cite{BH97} proves that all Lie bialgebras on the harmonic oscillator algebra are coboundary ones. This shows that our classification of Lie bialgebras on $\mathfrak{s}_6$ finishes the classification of all Lie bialgebras on the harmonic oscillator algebra. On the other hand, the Lie algebra $\mathfrak{s}_8$ with $\alpha=-1/2$ is isomorphic to the Lie algebra $A\bar{G}_2(1)$ in \cite{Ne16}. The Lie algebra $\mathfrak{s}_7$ provides a central extension of Caley-Klein Lie algebras analysed in \cite[eq. (3.13)]{BHOS93}. Other cases of real four-dimensional indecomposable Lie bialgebras can be found in \cite{BCH00,Op00}, where, for instance, two-dimensional central extended Galilei algebras are studied. 

\subsection{General classification scheme}
Recall that by Proposition \ref{Pro:UseProp}, the locus of a Darboux family for a Lie algebra $V_{\mathfrak{s}_1}$ is the sum (as subsets) of strata of the generalised distribution $\mathscr{E}_{\mathfrak{s}_1}$ spanned by the vector fields of $V_{\mathfrak{s}_1}$. Moreover, these loci that form a connected submanifold of $\Lambda^2 \mathfrak{s}_1$, whose dimension is equal to the rank of $\mathscr{E}_{\mathfrak{s}_1}$, are the orbits of ${\rm Aut}_c(\mathfrak{s}_1)$ on $\mathcal{Y}_{\mathfrak{s}_1}$, the space of $r$-matrices for $\mathfrak{s}_1$.

For a given Darboux family $\mathcal{A}_1$ of $V_\mathfrak{g}$ on $\Lambda^2\mathfrak{g}$, let us consider its locus $\ell_1$. As we are interested in determining the strata of $\mathscr{E}_{\mathfrak{g}}$ within $\mathcal{Y}_{\mathfrak{g}}$, we look for an $\ell_1$ containing some solution to the mCYBE, namely $\ell_1\cap \mathcal{Y}_\mathfrak{g}\neq \emptyset$. Recall that  the strata of $\mathscr{E}_\mathfrak{g}$ are completely contained within the locus or completely contained off the locus of every Darboux family for $V_{\mathfrak{g}}$. This divides the strata of $\mathscr{E}_\mathfrak{g}$  on $\mathcal{Y}_\mathfrak{g}$ into those within or outside $\ell_1$. If $\ell_1\cap \mathcal{Y}_\mathfrak{g}$ is a submanifold of dimension given by the rank of $\mathscr{E}_{\mathfrak{g}}$ on it, its connected parts are the strata of $\mathscr{E}_{\mathfrak{g}}$ in $\ell_1\cap \mathcal{Y}_{\mathfrak{g}}$. Otherwise, to obtain the strata within $\ell_1\cap \mathcal{Y}_\mathfrak{g}$, we look for a new Darboux family $\mathcal{A}_2$ for $V_\mathfrak{g}$ containing the previous Darboux family $\mathcal{A}_1$ so that its new locus, $\ell_2$,  will satisfy $\ell_2\cap \mathcal{Y}_\mathfrak{g}\neq\emptyset$. This process is repeated until we obtain a Darboux family $\mathcal{A}_k$, whose locus $\ell_k$ is a submanifold of dimension equal to the rank of $\mathscr{E}_\mathfrak{g}$ on $\ell_k$. Then, the connected components of the locus $\ell_k$ are the orbits of ${\rm Aut}_c(\mathfrak{g})$ in $\ell_k \cap \mathcal{Y}_\mathfrak{g}$. 

In the next step, we determine all the strata of $\mathscr{E}_\mathfrak{g}$ on $\mathcal{Y}_\mathfrak{g}\cap (\ell_k)^c$, where $(\ell_k)^c$ stands for the complementary subset of $\ell_k$ in $\Lambda^2\mathfrak{g}$, which is always open. For that purpose, one considers a subspace of the final Darboux family $\mathcal{A}_k$ and the previous procedure is applied again iteratively to the obtained Darboux family on $\Lambda^2\mathfrak{g}\backslash\ell_k$. This leads to another family of strata of $\mathscr{E}_{\mathfrak{g}}$ and the procedure can be repeated to search for remaining strata of $\mathscr{E}_{\mathfrak{g}}$ in $\mathcal{Y}_{\mathfrak{g}}$. In the end, the process described above allows us to obtain the orbits of ${\rm Aut}_c(\mathfrak{g})$ on $\mathcal{Y}_\mathfrak{g}$ for all four-dimensional indecomposable Lie algebras $\mathfrak{g}$.

Our procedure can be represented in a diagram, which is hereafter called a {\it Darboux tree}. Every Darboux family of the above-mentioned method is described by a collection of boxes going from an edge of the diagram on the left to one of the edges on the right. The squared boxes of the type $f=0$, for a certain function $f$, give the generating functions of the Darboux family while the squared boxes of the type $f\neq 0$,  give the conditions restricting the manifold in $\Lambda^2\mathfrak{g}$ where the Darboux family and $V_{\mathfrak{g}}$ are restricted to. The connected parts of the loci of all the Darboux families represented in a Darboux tree give rise to a decomposition of $\mathcal{Y}_{\mathfrak{g}}$ into orbits of ${\rm Aut}_c(\mathfrak{g})$. We use sometimes oval boxes with additional information that helps undertand the ongoing calculations. In many cases considered in this section, bricks were employed to generate Darboux families. 

\subsection{Lie algebra $\mathfrak{s}_{1}$} \label{Sec:s1}

Using the commutation relations for $\mathfrak{s}_1$ given in Table \ref{Tab:StruCons}, one verifies directly that $(\Lambda^2\mathfrak{s}_1)^{\mathfrak{s}_1}=\langle e_{12}\rangle$ and $(\Lambda^3\mathfrak{s}_1)^{\mathfrak{s}_1}=0$.

By Remark \ref{Re:DerAlg}, the Lie algebra $V_{\mathfrak{s}_1}$ of fundamental vector fields of the natural Lie group action of ${\rm Aut}(\mathfrak{s}_{1})$ on $\Lambda^2\mathfrak{s}_{1}$ is spanned by the basis (over the reals)
\begin{equation}\label{fundv_s1}
\begin{gathered}
X_1=2x_1\partial_{x_1}+x_2\partial_{x_2}+x_3\partial_{x_3}+x_4\partial_{x_4}+x_5\partial_{x_5},\qquad X_2=x_4\partial_{x_2}+x_5\partial_{x_3},\qquad X_3= -x_5\partial_{x_1}-x_6\partial_{x_2},\\
X_4=x_3\partial_{x_1}-x_6\partial_{x_4},\qquad X_5=x_2\partial_{x_2}+x_4\partial_{x_4}+x_6\partial_{x_6},\qquad X_6=x_3\partial_{x_2}+x_5\partial_{x_4}.
\end{gathered}
\end{equation}
As $X_1,\ldots,X_6$ close on a finite-dimensional Lie algebra,  Theorem \ref{Th:StSus} shows that they span an integrable generalised distribution $\mathscr{E}_{\mathfrak{s}_1}$ on $\Lambda^2\mathfrak{s}_{1}$.

We define $M(p)$ to be a matrix whose entry $(i,j)$ is the $j$-coefficient of $X_i$  at $p\in \Lambda^2\mathfrak{s}_1$ in the basis $\partial_{x_1}|_p,\ldots,\partial_{x_6}|_p$, namely
$$
M(p):=\left(
\begin{array}{cccccc}
2x_1 & x_2 & x_3 & x_4 & x_5 & 0 \\
0 & x_4 & x_5 & 0 & 0 & 0 \\
-x_5 & -x_6 & 0 & 0 & 0 & 0 \\
x_3 & 0 & 0 & -x_6 & 0 & 0 \\
0 & x_2 & 0 & x_4 & 0 & x_6 \\
0 & x_3 & 0 & x_5 & 0 & 0 
\end{array}
\right),\qquad p:=(x_1,\ldots,x_6)\in \Lambda^2\mathfrak{s}_{1}.
$$
The rank of  $\mathscr{E}_{\mathfrak{s}_1}$ at $p\in \Lambda^2\mathfrak{s}_{1}$ is equal to the rank of $M(p)$.

For an element $r \in \Lambda^2 \mathfrak{s}_1$, we get
$$
[r,r] = 2(-x_2 x_5 + x_3 x_4 - x_4 x_5)e_{123}  -2x_5^2e_{124} + 2(x_3 -x_5)x_6e_{134} + 2x_5 x_6e_{234}.
$$
This expression can easily be derived by using the properties of the algebraic Schouten bracket and the structure constants for $\mathfrak{s}_1$. 
Since $(\Lambda^3 \mathfrak{s}_1)^{\mathfrak{s}_1} = 0$, the mCYBE and the CYBE for $\mathfrak{s}_1$ are equal and read
\begin{equation}\label{s1_cybe}
x_3 x_4 = 0, \quad x_3 x_6 = 0, \quad x_5=0.
\end{equation}

Let us start the construction of the Darboux tree for $\mathfrak{s}_1$. First, we will search for the bricks. It is immediate that the one-dimensional subspace $\mathcal{A}_1 := \langle x_5 \rangle$ gives a Darboux family. Its locus $\ell_{\mathcal{A}_1}$ belongs to $\mathcal{Y}_{\mathfrak{s}_1}$, whereas the complement $\Lambda^2 \mathfrak{s}_1 \backslash \ell_{\mathcal{A}_1}$ does not. Similarly, one verifies that $\mathcal{A}_2 := \langle x_6 \rangle$ is also a Darboux family belonging to $\mathcal{Y}_{\mathfrak{s}_1}$. By Proposition \ref{Prop:darb_sum}, the sum $\mathcal{A}_{12} := \mathcal{A}_1 + \mathcal{A}_2$ is a Darboux family, which is associated with the first two boxes in the first row of the Darboux tree.

We continue our analysis within the locus $\ell_{\mathcal{A}_{12}}$ of the Darboux family $\mathcal{A}-{12} = \langle x_5, x_6 \rangle$. Consider the subspace $\mathcal{B} = \langle x_5 x_6, x_3 \rangle$. Obviously, it gives a Darboux family and its locus $\ell_\mathcal{B}$ lies in $\mathcal{Y}_{\mathfrak{s}_1}$. For the points in $\ell_\mathcal{B}$, the matrix $M(p)$ reads
$$
M(p):=\left(
\begin{array}{cccccc}
2x_1 & x_2 & 0 & x_4 & 0 & 0 \\
0 & x_4 & 0 & 0 & 0 & 0 \\
0 & 0 & 0 & 0 & 0 & 0 \\
0 & 0 & 0 & 0 & 0 & 0 \\
0 & x_2 & 0 & x_4 & 0 & 0 \\
0 & 0 & 0 & 0 & 0 & 0 
\end{array}
\right),\qquad p:=(x_1,\ldots,x_6)\in \ell_{\mathcal{B}}.
$$
It is easy to verify that the rank of $M(p)$ for points $p \in \ell_{\mathcal{B}}$ is not constant. Thus, $\ell_{\mathcal{B}}$ still consists of orbits of ${\rm Aut}_c(\mathfrak{s}_1)$-action of different dimension and we need to search for more complex Darboux families containing $\mathcal{B}$ in order to fully determine them.

Let $\mathcal{B}_1 := \langle x_5, x_6, x_3, x_4\rangle \subset \mathcal{B}$. Then, $\mathcal{B}_1$ is a Darboux family and its locus $\ell_{\mathcal{B}_1} \subset \mathcal{Y}_{\mathfrak{s}_1}$. The inspection of the matrix $M(p)$ for $p \in \ell_{\mathcal{B}_1}$ shows that its rank is again not constant. We reach the same conclusion for $\mathcal{B}_2 := \langle x_5, x_6, x_3, x_4, x_2 \rangle \subset \mathcal{B}_1$. 

Finally, take $\mathcal{B}_3 := \langle x_5, x_6, x_3, x_4, x_2, x_1 \rangle \subset \mathcal{B}_2$. Its locus $\ell_{\mathcal{B}_3} \in \mathcal{Y}_{\mathfrak{s}_1}$. One sees immediately that $\ell_{\mathcal{B}_3}$ is a submanifold of dimension zero and moreover, ${\rm rk} M(p) = 0$ for $p \in \ell_{\mathcal{B}_3}$. Thus, $\ell_{\mathcal{B}_3}$ gives a separate orbit of the action of ${\rm Aut}_c (\mathfrak{s}_1)$ on $\Lambda^2 \mathfrak{s}_1$, which is described by the conditions in the first row of the Darboux tree. 

Let us consider the complement $\ell_{\mathcal{B}_2} \backslash \ell_{\mathcal{B}_3}$, given by the conditions $x_5 = x_6 = x_3 = x_4 = x_2 = 0$ and $x_1 \neq 0$. Obviously, it consists of two connected submanifolds of dimension one. Additionally, one checks directly that ${\rm rk} M(p) = 1$ for points in $\ell_{\mathcal{B}_2} \backslash \ell_{\mathcal{B}_3}$. Then, it follows that both connected components of $\ell_{\mathcal{B}_2} \backslash \ell_{\mathcal{B}_3}$ give rise to new orbits of ${\rm Aut}_c (\mathfrak{s}_1)$, denoted in the Darboux tree by the index I. 

In the next step, we consider the complement $\ell_{\mathcal{B}_1} \backslash \ell_{\mathcal{B}_2}$, given by the conditions $x_5 = x_6 = x_3 = x_4 = 0$ and $x_2 \neq 0$. By direct computation, one verifies that the rank of $M(p)$ for $p \in \ell_{\mathcal{B}_1} \backslash \ell_{\mathcal{B}_2}$ is not constant. This means we need to find a new Darboux family on $\ell_{\mathcal{B}_1} \backslash \ell_{\mathcal{B}_2}$. This is satisfied by the subspace $\langle x_1 \rangle$. One sees that the locus $x_1 = 0$ consists of two connected parts in $\ell_{\mathcal{B}_1} \backslash \ell_{\mathcal{B}_2}$ of dimension one. Meanwhile, ${\rm rk} M(p) = 1$ for $p \in \ell_{\mathcal{B}_1} \backslash \ell_{\mathcal{B}_2}$. Similarly, the complement $x_1 \neq 0$ in $\ell_{\mathcal{B}_1} \backslash \ell_{\mathcal{B}_2}$ has four connected components of dimension two, which equals the rank of $M(p)$ on this subspace. Both cases yield separate orbits of ${\rm Aut}_c (\mathfrak{s}_1)$ (denoted by II and III in the Darboux tree).

Similar analysis is carried out for $\ell_{\mathcal{B}} \backslash \ell_{\mathcal{B}_1}$. As the matrix $M(p)$ is not constant on $\ell_{\mathcal{B}} \backslash \ell_{\mathcal{B}_1}$, we introduce the new Darboux family $\langle x_1 \rangle$ on $\ell_{\mathcal{B}} \backslash \ell_{\mathcal{B}_1}$. One computes as in previous cases the dimension of its locus on $\ell_{\mathcal{B}} \backslash \ell_{\mathcal{B}_1}$ and verifies that the rank of $M(p)$ on this subset is the same. Analogously, one obtains that ${\rm rk} M(p) = 3$ for the complement $x_1 \neq 0$ in $\ell_{\mathcal{B}} \backslash \ell_{\mathcal{B}_1}$ and it equals the dimension of the four connected components of this subset, which give rise to the orbits of ${\rm Aut}_c (\mathfrak{s}_1)$ denoted by IV and V.

Consider the complement $\ell_{\mathcal{A}_{12}} \backslash \ell_{\mathcal{B}}$. On this subset, the subspace $\langle x_4 \rangle$ defines a Darboux family such that its locus belongs to $\mathcal{Y}_{\mathfrak{s}_1}$. Meanwhile, its complement in $\ell_{\mathcal{A}_{12}} \backslash \ell_{\mathcal{B}}$ does not, which is noted in the oval box. Moreover, ${\rm rk} M(p) = 3$ for the points in the locus and it is equal to its dimension. In consequence, only the locus gives rise to orbits of ${\rm Aut}_c (\mathfrak{s}_1)$, denoted by VI.

Finally, we are left with the last region of $\Lambda^2 \mathfrak{s}_1$ that has not been considered, namely the complement $\ell_{\mathcal{A}_1} \backslash \mathcal{A}_{12}$. We consider the Darboux family $\langle x_3 \rangle$ on this subset. It is straightforward to see that only its locus belongs to $\mathcal{Y}_{\mathfrak{s}_1}$. Further analysis for the new Darboux family $\langle x_3, x_1\rangle$ follows as in the previous cases and leads to the last two strata containing orbits of dimension four and three (denoted by VII and VIII, respectively).

The Darboux tree constructed by the procedure discussed earlier is given below.

\begin{center}
{\small
\begin{tikzpicture}[
roundnode/.style={rounded rectangle, draw=green!40, fill=green!3, very thick, minimum size=2mm},
squarednode/.style={rectangle, draw=red!30, fill=red!2, thick, minimum size=4mm}
]
%%%
\node[squarednode] (brick) at (0,0) {$x_5=0$};

\node[squarednode] (u)  at (2,0) {$x_6=0$};
\node[squarednode] (d)  at (2,-7) {$x_6 \neq 0$};

\node[squarednode] (uu)  at (4,0) {$x_3=0$};
\node[squarednode] (ud)  at (4,-5) {$x_3 \neq 0$};
\node[squarednode] (du)  at (4,-7) {$x_3=0$};
\node[roundnode] (dd)  at (4,-8) {$\stackrel{\tiny{\rm No \, solutions}}{x_3 \neq 0}$};

\node[squarednode] (uuu)  at (6,0) {$x_4=0$};
\node[squarednode] (uud)  at (6,-4) {$x_4 \neq 0$};
\node[roundnode] (udu)  at (6,-5) {$\stackrel{\tiny{\rm No \, solutions}}{x_4 \neq 0}$};
\node[squarednode] (udd)  at (6,-6) {$x_4=0$};
\node[squarednode] (duu)  at (6,-7) {$x_1=0$};
\node[squarednode] (dud)  at (6,-8) {$x_1\neq 0$};

\node[squarednode] (uuuu)  at (8,0) {$x_2=0$};
\node[squarednode] (uuud)  at (8,-2) {$x_2 \neq 0$};
\node[squarednode] (uudu)  at (8,-4) {$x_1=0$};
\node[squarednode] (uudd)  at (8,-5) {$x_1 \neq 0$};

\node[squarednode] (uuuuu)  at (10,0) {$x_1 = 0$};
\node[squarednode] (uuuud)  at (10,-1) {$x_1 \neq 0$};
\node[squarednode] (uuudu)  at (10,-2) {$x_1 = 0$};
\node[squarednode] (uuudd)  at (10,-3) {$x_1 \neq 0$};

%%%
\node[squarednode] (0) at (12,0) {0};
\node[squarednode] (I) at (12,-1) {I};
\node[squarednode] (II) at (12,-2) {II};
\node[squarednode] (III) at (12,-3) {III};
\node[squarednode] (IV) at (12,-4) {IV};
\node[squarednode] (V) at (12,-5) {V};
\node[squarednode] (VI) at (12,-6) {VI};
\node[squarednode] (VII) at (12,-7) {VII};
\node[squarednode] (VIII) at (12,-8) {VIII};

%%%
\draw[->] (brick.east) -- (u.west);
\draw[->] (brick.east) -- (d.west);

\draw[->] (u.east) -- (uu.west);
\draw[->] (u.east) -- (ud.west);
\draw[->] (d.east) -- (du.west);
\draw[->] (d.east) -- (dd.west);

\draw[->] (uu.east) -- (uuu.west);
\draw[->] (uu.east) -- (uud.west);
\draw[->] (ud.east) -- (udu.west);
\draw[->] (ud.east) -- (udd.west);
\draw[->] (du.east) -- (duu.west);
\draw[->] (du.east) -- (dud.west);

\draw[->] (uuu.east) -- (uuuu.west);
\draw[->] (uuu.east) -- (uuud.west);
\draw[->] (uud.east) -- (uudu.west);
\draw[->] (uud.east) -- (uudd.west);

\draw[->] (uuuu.east) -- (uuuuu.west);
\draw[->] (uuuu.east) -- (uuuud.west);
\draw[->] (uuud.east) -- (uuudu.west);
\draw[->] (uuud.east) -- (uuudd.west);

\end{tikzpicture}
}
\end{center}

The connected components of the loci of the Darboux families presented in the Darboux tree above give the orbits of ${\rm Aut}_c(\mathfrak{s}_1)$ on $\mathcal{Y}_{\mathfrak{s}_1}$. Results are given in Table \ref{Tab:g_orb_1}. Explicitly, the representative $r$-matrices for each orbit of ${\rm Aut}_c(\mathfrak{s}_1)$ read
\begin{align*}
&r_0 = 0, & &r^{\pm}_{I} = \pm e_{12}, & &r^{\pm}_{II} = \pm e_{13}, & &r^{\pm, \pm}_{III} = \pm e_{12} \pm e_{13}, & &r^{\pm}_{IV} = \pm e_{23}, \\ &r^{\pm, \pm}_{V} = \pm e_{12} \pm e_{23}, & &r^{\pm}_{VI} = \pm e_{14}, & &r^{\pm}_{VII} = \pm e_{34}, & &r^{\pm, \pm}_{VIII} = \pm e_{12} \pm e_{34}
\end{align*}
Let us study the equivalence of $r$-matrices up to action of the whole ${\rm Aut}(\mathfrak{s}_1)$. For that purpose, we use that the group of automorphisms of $\mathfrak{s}_1$ reads
$$
{\rm Aut}(\mathfrak{s}_1) = \left\{
\left( 
\begin{array}{cccc}
T_2^2 & T^2_1 & 0 & T^4_1 \\
0 & T^2_2 & 0 & T^4_2 \\
0 & 0 & T_3^3 & T^4_3 \\
0 & 0 & 0 & 1
\end{array}
\right): T_2^2,  T_3^3 \in \rz, \, T^2_1, T^4_1, T^4_2, T^4_3 \in \mathbb{R}
\right\}.
$$
This Lie group is easy to be derived from the structure constants of $\mathfrak{s}_1$ with the help of symbolic computation software (see Apendix \ref{App:code}). In reality, it is enough for our purposes to consider an element of each connected component of ${\rm Aut}(\mathfrak{s}_1)$. In our case, it is immediate that ${\rm Aut}(\mathfrak{s}_1)$ has four such components. One element of each connected component of ${\rm Aut}(\mathfrak{s}_1)$ and its extension  to $\Lambda^2\mathfrak{s}_1$ are given by 
\begin{equation*}
{\small T_{\lambda_1,\lambda_2}:=\left(
\begin{array}{cccc}
\lambda_1 & 0 & 0 & 0 \\
0 & \lambda_1 & 0 & 0 \\
0 & 0 & \lambda_2 & 0 \\
0 & 0 & 0 & 1
\end{array}
\right), \qquad
\Lambda^2T_{\lambda_1,\lambda_2}:=\left(
\begin{array}{cccccc}
1 & 0 & 0 & 0 &0&0 \\
0 & \lambda_1 \lambda_2 & 0 & 0 &0&0 \\
0 & 0 & \lambda_1 & 0 &0&0 \\
0 & 0 & 0 & \lambda_1 \lambda_2&0&0 \\
0 & 0 & 0 & 0&\lambda_1&0 \\
0 & 0 & 0 & 0&0&\lambda_2 \\
\end{array}
\right),\qquad \lambda_1, \lambda_2 \in \{\pm 1\}}.
\end{equation*}
The orbits of ${\rm Aut}(\mathfrak{s}_1)$ in $\mathcal{Y}_{\mathfrak{s}_1}$ are given by the action of $\Lambda^2T_{\lambda_1,\lambda_2}$ on the orbits of ${\rm Aut}_c(\mathfrak{s}_1)$ on $\mathcal{Y}_{\mathfrak{s}_1}$.
By using all mappings $\Lambda^2T_{\lambda_1,\lambda_2}$, we can verify whether some of the orbits of ${\rm Aut}_c(\mathfrak{s}_1)$ on $\mathcal{Y}_{\mathfrak{s}_1}$ can be  connected among themselves by a Lie algebra automorphism of $\mathfrak{s}_1$. Our results are summarised in Table \ref{Tab:g_orb_1}.

Recall that $(\Lambda^2\mathfrak{s}_1)^{\mathfrak{s}_1}=\langle e_{12}\rangle$. Consequently, all orbits of ${\rm Aut}(\mathfrak{s}_1)$ on $\mathcal{Y}_{\mathfrak{s}_1}$ mapping onto the same space in $\Lambda^2_R\mathfrak{s}_1$ via $\pi_{\mathfrak{s}_1}$ give equivalent coboundary coproducts up to the action of elements of ${\rm Aut}(\mathfrak{s}_1)$. In particular, we get five classes of inequivalent cocommutators induced by the following representative $r$-matrices:
$$
r^{\pm}_{I} = \pm e_{12}, \quad r_{II} = e_{13}, \quad r^{+}_{IV} = e_{23}, \quad r^{+}_{VI} = e_{14}, \quad r^{+}_{VII} = e_{34}
$$
Remarkably, all given $r$-matrices are solutions to the CYBE.

\subsection{Lie algebra $\mathfrak{s}_{2}$}
Let us apply the formalism given in the previous section to  Lie algebra $\mathfrak{s}_2$. Using the commutators for the basis elements of $\mathfrak{s}_2$ given in Table \ref{Tab:StruCons}, one verifies directly that $(\Lambda^2\mathfrak{s}_2)^{\mathfrak{s}_2}=0$ and $(\Lambda^3\mathfrak{s}_2)^{\mathfrak{s}_2}=0$.

By Remark \ref{Re:DerAlg}, it follows that $$\mathfrak{der}(\mathfrak{s}_2)=
\left\{\left(
\begin{array}{cccc}
\mu_{11} & \mu_{12} & \mu_{13} & \mu_{14} \\
0 & \mu_{11} & \mu_{12} & \mu_{24} \\
0 & 0 & \mu_{11} & \mu_{34} \\
0 & 0 & 0 & 0
\end{array}
\right): \mu_{11}, \mu_{12}, \mu_{13}, \mu_{14}, \mu_{24}, \mu_{34} \in \mathbb{R}\right\}.
$$
By lifting these derivations to $\Lambda^2\mathfrak{s}_2$, one obtains the following basis of $V_{\mathfrak{s}_2}$, the space of fundamental vector fields of the action of ${\rm Aut}(\mathfrak{s}_2)$ on $\Lambda^2\mathfrak{s}_2$:
\begin{equation*}
\begin{gathered} 
X_1 = 2x_1 \partial_{x_1} + 2x_2 \partial_{x_2} + x_3 \partial_{x_3} + 2x_4 \partial_{x_4} + x_5 \partial_{x_5} +  x_6 \partial_{x_6}, \quad
X_2 = x_2 \partial_{x_1} + x_4 \partial_{x_2} + x_5 \partial_{x_3} + x_6 \partial_{x_5}, \\
X_3 = -x_4 \partial_{x_1} + x_6 \partial_{x_3}, \quad
X_4 = -x_5 \partial_{x_1} - x_6 \partial_{x_2}, \quad
X_5 = x_3 \partial_{x_1} - x_6 \partial_{x_4}, \quad
X_6 = x_3 \partial_{x_2} + x_5 \partial_{x_4}.
\end{gathered}
\end{equation*}

Meanwhile, for $r\in \Lambda^2\mathfrak{s}_2$, one has
$$
[r,r]=2(2x_1x_6 - 2x_2x_5 + x_2x_6 + 2x_3x_4 - x_4x_5)e_{123} + 2(x_3x_6 - x_5^2)e_{124} - 2x_5x_6e_{134} -2x_6^2e_{234}.
$$

Since $(\Lambda^3 \mathfrak{s}_2)^{\mathfrak{s}_2}=0$, the  mCYBE and the CYBE are the same and read
$$
x_3x_4 = 0, \quad x_5 = 0, \quad x_6 = 0.
$$

With the previous information, we are  ready to obtain the classification of orbits of the action of Aut$_c(\mathfrak{s}_2)$ on the subset of $r$-matrices $\mathcal{Y}_{\mathfrak{s}_2}\subset \Lambda^2\mathfrak{s}_2$ by using Darboux families. We shall start by using bricks. The only brick of $\mathfrak{s}_2$ is $x_6$. The procedure is accomplished as in the previous section and it is summarised in the following Darboux tree. The resulting orbits of ${\rm Aut}_c(\mathfrak{s}_2)$ are also detailed in Table \ref{Tab:g_orb_1}.

\begin{center}
 {\small
\begin{tikzpicture}[
roundnode/.style={rounded rectangle, draw=green!40, fill=green!3, very thick, minimum size=2mm},
squarednode/.style={rectangle, draw=red!30, fill=red!2, thick, minimum size=4mm}
]
%%%
\node[squarednode] (brick) at (0,0) {$x_6=0$};

\node[squarednode] (u)  at (2,0) {$x_5=0$};
\node[roundnode] (d)  at (2,-5) {$\stackrel{\tiny{\rm No\,\, solutions}}{x_5 \neq 0}$};

\node[squarednode] (uu)  at (4,0) {$x_4=0$};
\node[squarednode] (ud)  at (4,-4) {$x_4 \neq 0$};

\node[squarednode] (uuu)  at (6,0) {$x_3=0$};
\node[squarednode] (uud)  at (6,-3) {$x_3 \neq 0$};
\node[squarednode] (udu)  at (6,-4) {$x_3=0$};
\node[roundnode] (udd)  at (6,-5) {$\stackrel{\tiny{\rm No\,\, solutions}}{x_3 \neq 0}$};

\node[squarednode] (uuuu)  at (8,0) {$x_2=0$};
\node[squarednode] (uuud)  at (8,-2) {$x_2 \neq 0$};

\node[squarednode] (uuuuu)  at (10,0) {$x_1 = 0$};
\node[squarednode] (uuuud)  at (10,-1) {$x_1 \neq 0$};

%%%
\node[squarednode] (0) at (12,0) {0};
\node[squarednode] (I) at (12,-1) {I};
\node[squarednode] (II) at (12,-2) {II};
\node[squarednode] (III) at (12,-3) {III};
\node[squarednode] (IV) at (12,-4) {IV};

%%%
\draw[->] (brick.east) -- (u.west);
\draw[->] (brick.east) -- (d.west);

\draw[->] (u.east) -- (uu.west);
\draw[->] (u.east) -- (ud.west);

\draw[->] (uu.east) -- (uuu.west);
\draw[->] (uu.east) -- (uud.west);
\draw[->] (ud.east) -- (udu.west);
\draw[->] (ud.east) -- (udd.west);

\draw[->] (uuu.east) -- (uuuu.west);
\draw[->] (uuu.east) -- (uuud.west);

\draw[->] (uuuu.east) -- (uuuuu.west);
\draw[->] (uuuu.east) -- (uuuud.west);

\end{tikzpicture}}
\end{center}

As  commented in Subsection \ref{Sec:s1}, the extension to $\Lambda^2\mathfrak{s}_2$ of a single element of each connected component of Aut$(\mathfrak{s}_2)$ is enough to obtain the orbits of ${\rm Aut}(\mathfrak{s}_2)$ on $\mathcal{Y}_{\mathfrak{s}_2}$.
The automorphisms of the Lie algebra $\mathfrak{s}_{2}$, obtained with the help of symbolic computation software (see Appendix \ref{App:code}), read
$$
{\rm Aut}(\mathfrak{s}_2)=\left\{\left(
\begin{array}{cccc}
T_2^2 & T^3_2 & T^3_1 & T^4_1 \\
0 & T^2_2 & T^3_2 & T^4_2 \\
0 & 0 & T^2_2 & T^4_3 \\
0 & 0 & 0 & 1
\end{array}
\right): T^2_2 \in \rz, \, T_{1}^3,T_2^3,T_1^4,T_2^4,T_3^4 \in \mathbb{R}\right\}.
$$
Then, one element of ${\rm Aut}(\mathfrak{s}_2)$ for each of its connected components and their extensions to $\Lambda^2\mathfrak{s}_2$ read
\begin{equation*}
T_\lambda:=\left(
\begin{array}{cccc}
\lambda & 0 & 0 & 0 \\
0 & \lambda & 0 & 0 \\
0 & 0 & \lambda & 0 \\
0 & 0 & 0 & 1
\end{array}
\right), \qquad \Lambda^2T_\lambda={\small
\left(
\begin{array}{cccccc}
1 & 0 & 0 & 0 &0&0 \\
0 & 1 & 0 & 0 &0&0 \\
0 & 0 & \lambda & 0 &0&0 \\
0 & 0 & 0 & 1&0&0 \\
0 & 0 & 0 & 0&\lambda&0 \\
0 & 0 & 0 & 0&0&\lambda \\
\end{array}
\right)},\qquad \lambda \in \{\pm 1\}.
\end{equation*}
The connected parts of the submanifolds $0$, I$_\pm$, II$_\pm$, III, IV$_\pm$ are the orbits of Aut$_c(\mathfrak{s}_2)$ on $\mathcal{Y}_{\mathfrak{s}_2}$. 
It is simple to see which of them are further connected through an element of  Aut$(\mathfrak{s}_2)$. In result,
the orbits of Aut$(\mathfrak{s}_2)$ on $\mathcal{Y}_{\mathfrak{s}_2}$ are given by the eight submanifolds $0,{\rm I}_-,{\rm I}_+,{\rm II}_-,{\rm II}_+, {\rm III}, {\rm IV}_-, {\rm IV}_+$ given in Table \ref{Tab:g_orb_1}.  Since $(\Lambda^2\mathfrak{s}_2)^{\mathfrak{s}_2}=0$, each such a submanifold gives a family of equivalent coboundary coproducts that is non-equivalent to $r$-matrices within remaining submanifolds. This gives all possible classes of non-equivalent coboundary coproducts. As previously, all $r$-matrices are solutions to the CYBE in $\mathfrak{s}_2$.

\subsection{Lie algebra $\mathfrak{s}_{3}$}\label{Subsec:3}

The structure constants of the Lie algebras of type $\mathfrak{s}_3$ are given in Table \ref{Tab:StruCons}. It was noted in \cite[p. 228]{SW14}  that additional restrictions must be imposed on the parameters $\alpha,\beta$ given in Table \ref{Tab:StruCons} to avoid the repetition of isomorphic Lie algebras within the Lie algebra class $\mathfrak{s}_3$. Such restrictions were not explicitly detailed in \cite{SW14}, but it is immediate that isomorphic cases within Lie algebras of the class $\mathfrak{s}_3$ can be classified by the adjoint action of $e_4$ on $\langle e_1,e_2,e_3\rangle$. In particular, if $|\alpha|=|\beta|$, then the Lie algebras with parameters $(\alpha,\beta)$ and $(\beta,\alpha)$ are isomorphic relative to the Lie algebra isomorphism that interchanges $e_2$ with $e_3$ and leaves $e_1$ invariant. 
Due to the previous isomorphism, we restrict ourselves to the case $\alpha\geq \beta$ when $|\alpha|=|\beta|$.

Using the structure constants given in Table \ref{Tab:StruCons}, one gets that $(\Lambda^2 \mathfrak{s}_3)^{\mathfrak{s}_3} \subset \{e_{12}, e_{13}, e_{23}\}$. More specifically, $e_{12} \in (\Lambda^2 \mathfrak{s}_3)^{\mathfrak{s}_3}$ for $\alpha = -1$, while $e_{13} \in (\Lambda^2 \mathfrak{s}_3)^{\mathfrak{s}_3}$ for $\beta = -1$. Finally, $e_{23} \in (\Lambda^2 \mathfrak{s}_3)^{\mathfrak{s}_3}$ for $\alpha + \beta = 0$. Moreover, $(\Lambda^3\mathfrak{s}_3)^{\mathfrak{s}_3}=\langle e_{123}\rangle$ if $\alpha+\beta=-1$. Otherwise, $(\Lambda^3\mathfrak{s}_3)^{\mathfrak{s}_3}=0$.

Since the Lie algebra class $\mathfrak{s}_3$ contains so many subcases that relevant properties may change from one to another subcase with given parameters $(\alpha,\beta)$, e.g. the structure of the Lie algebra automorphism group, we develop here a modification of our classification scheme.

First, we apply the Darboux family method to the fundamental vector fields of the action of the Lie group ${\rm Aut}_{\rm all}(\mathfrak{s}_3)$ of common Lie algebra automorphisms for all parameters $\alpha,\beta$. In the next step, we analyse its relation to the Lie algebra automorphisms for each case, $\mathfrak{s}_{3}^{\alpha,\beta}$, to obtain our final classification. We omit writing the parameters $\alpha,\beta$ in $\mathfrak{s}_{3}^{\alpha,\beta}$ whenever their values are irrelevant for the argument or if this information was communicated differently.

By Remark \ref{Re:DerAlg}, one obtains that $\mathfrak{der}_{\rm all}(\mathfrak{s}_3)$, the Lie algebra of derivations of $\mathfrak{s}_3$  common for all the values of $\alpha$ and $\beta$, read:
$$
\mathfrak{der}_{\rm all}(\mathfrak{s}_3)=\left\{\left(
\begin{array}{cccc}
\mu_{11} & 0 & 0 & \mu_{14} \\
0 & \mu_{22} & 0 & \mu_{24} \\
0 & 0 & \mu_{33} & \mu_{34} \\
0 & 0 & 0 & 0
\end{array}
\right): \mu_{11}, \mu_{22}, \mu_{33}, \mu_{14}, \mu_{24}, \mu_{34} \in \mathbb{R}\right\},
$$
By lifting the elements of a basis of $\mathfrak{der}_{\rm all}(\mathfrak{s}_3)$ to $\Lambda^2\mathfrak{s}_3$, we obtain a basis of the Lie algebra $V^{\rm all}_{\mathfrak{s}_3}$ of fundamental vector fields of the action of the Lie algebra automorphism group ${\rm Aut}_{\rm all}(\mathfrak{s}_3)$ common to all $\alpha,\beta$ acting on $\Lambda^2\mathfrak{s}_3$, namely 
\begin{align*}
&X_1 = x_1 \partial_{x_1} + x_2 \partial_{x_2} + x_3 \partial_{x_3}, &
&X_2 = - x_5 \partial_{x_1} - x_6 \partial_{x_2}, &
&X_3 = x_1 \partial_{x_1} + x_4 \partial_{x_4} + x_5 \partial_{x_5}, \\
&X_4 = x_3 \partial_{x_1} - x_6 \partial_{x_4}, &
&X_5 = x_2 \partial_{x_2} + x_4 \partial_{x_4} + x_6 \partial_{x_6}, &
&X_6 = x_3 \partial_{x_2} + x_5 \partial_{x_4}.
\end{align*}
Interestingly, for $\alpha=\beta$ or if one on the coefficients $\alpha,\beta$ is equal to one, the Lie algebra of derivations of the particular $\mathfrak{s}^{\alpha,\beta}_3$ is larger, since there exists Lie algebra automorphisms of $\mathfrak{s}^{\alpha,\beta}_3$ leaving invariant the eigenspaces of ${\rm ad}_{e_4}$ acting on $\langle e_1,e_2,e_3\rangle$. These cases are discussed separately at the end of this subsection. 

Using $\mathfrak{der}_{\rm all}(\mathfrak{s}_3)$, we shall derive the orbits of the connected part of the identity element of the group ${\rm Aut}_{\rm all}(\mathfrak{s}_3)$ on each $\mathcal{Y}_{\mathfrak{s}^{\alpha,\beta}_3}$ via Darboux families. The final classification of inequivalent $r$-matrices up to the action of ${\rm Aut}(\mathfrak{s}^{\alpha,\beta}_3)$ for each pair $(\alpha,\beta)$ will be obtained by using the action of elements of ${\rm Aut}(\mathfrak{s}^{\alpha,\beta}_3)$ not contained in ${\rm Aut}_{\rm all,c}(\mathfrak{s}_3)$ for each particular pair of parameters $(\alpha,\beta)$. 

For an element $r \in \Lambda^2 \mathfrak{s}_3$, we get
$$
[r,r] = 2[(1 + \alpha)x_1 x_6 - (1 + \beta)x_2 x_5 + (\alpha + \beta)x_3 x_4]e_{123} + 2(\alpha - 1)x_3 x_5e_{124} + 2(\beta - 1)x_3 x_6e_{134} + 2(\beta - \alpha)x_5 x_6e_{234}.
$$

For $1 + \alpha + \beta \neq 0$, we have $(\Lambda^3 \mathfrak{s}^{\alpha,\beta}_3)^{\mathfrak{s}^{\alpha,\beta}_3} = 0$. Then, the mCYBE and the CYBE are the same in this case and they read
$$
(1 + \alpha)x_1 x_6 - (1 + \beta)x_2 x_5 + (\alpha + \beta)x_3 x_4 = 0, \quad (\alpha - 1)x_3 x_5 = 0, \quad (\beta - 1)x_3 x_6 = 0, \quad (\beta - \alpha)x_5 x_6 = 0.
$$

If $1 + \alpha + \beta = 0$, then $(\Lambda^3 \mathfrak{s}^{\alpha,-1-\alpha}_3)^{\mathfrak{s}^{\alpha,-1-\alpha}_3} = \langle e_{123}\rangle$ and the mCYBE reads
$$
(\alpha - 1)x_3 x_5 = 0, \quad (\beta - 1)x_3 x_6 = 0, \quad (\beta - \alpha)x_5 x_6 = 0.
$$

Since $x_3, x_5, x_6$ are   bricks of $\mathfrak{s}^{\alpha,\beta}_3$ for every pair $(\alpha,\beta)$, our Darboux tree for the Darboux families of  $V_{\mathfrak{s}_3}^{\rm all}$ starts with cases $x_i = 0$ and $x_i \neq 0$, $i \in \{3,5,6\}$. The full Darboux tree is presented below.

\begin{center}
{\small
\begin{tikzpicture}[
roundnode/.style={rounded rectangle, draw=green!40, fill=green!3, very thick, minimum size=2mm},
squarednode/.style={rectangle, draw=red!30, fill=red!2, thick, minimum size=4mm}
]
%%%
\node[squarednode] (brick) at (0,0) {$x_6 = 0$};

\node[squarednode] (u)  at (2,0) {$x_5=0$};
\node[squarednode] (d)  at (2,-10) {$x_5 \neq 0$};

\node[squarednode] (uu)  at (4,0) {$x_3=0$};
\node[squarednode] (ud)  at (4,-8) {$x_3 \neq 0$};
\node[squarednode] (du)  at (4,-10) {$x_3=0$};
\node[squarednode] (dd)  at (4,-12) {$\stackrel{\alpha = 1}{x_3 \neq 0}$};

\node[squarednode] (uuu)  at (6,0) {$x_4=0$};
\node[squarednode] (uud)  at (6,-4) {$x_4 \neq 0$};
\node[squarednode] (udu)  at (6,-8) {$x_4=0$};
\node[squarednode] (udd)  at (6,-9) {$\stackrel{\alpha + \beta \in \{0, -1\}}{x_4 \neq 0}$};
\node[squarednode] (duu)  at (6,-10) {$x_2=0$};
\node[squarednode] (dud)  at (6,-11) {$\stackrel{\beta = -1 \, {\rm or} \, \alpha + \beta = -1}{x_2 \neq 0}$};
\node[squarednode] (ddu)  at (7,-12) {$x_3 x_4 - x_2 x_5=0$};
\node[squarednode] (ddd)  at (7,-13) {$\stackrel{\beta  = -1}{x_3 x_4 - x_2 x_5 \neq 0}$};

\node[squarednode] (uuuu)  at (8,0) {$x_2=0$};
\node[squarednode] (uuud)  at (8,-2) {$x_2 \neq 0$};
\node[squarednode] (uudu)  at (8,-4) {$x_2=0$};
\node[squarednode] (uudd)  at (8,-6) {$x_2 \neq 0$};

\node[squarednode] (uuuuu)  at (10,0) {$x_1=0$};
\node[squarednode] (uuuud)  at (10,-1) {$x_1 \neq 0$};
\node[squarednode] (uuudu)  at (10,-2) {$x_1=0$};
\node[squarednode] (uuudd)  at (10,-3) {$x_1 \neq 0$};
\node[squarednode] (uuduu)  at (10,-4) {$x_1=0$};
\node[squarednode] (uudud)  at (10,-5) {$x_1 \neq 0$};
\node[squarednode] (uuddu)  at (10,-6) {$x_1=0$};
\node[squarednode] (uuddd)  at (10,-7) {$x_1 \neq 0$};

%%%
\node[squarednode] (0) at (12,0) {0};
\node[squarednode] (I) at (12,-1) {I};
\node[squarednode] (II) at (12,-2) {II};
\node[squarednode] (III) at (12,-3) {III};
\node[squarednode] (IV) at (12,-4) {IV};
\node[squarednode] (V) at (12,-5) {V};
\node[squarednode] (VI) at (12,-6) {VI};
\node[squarednode] (VII) at (12,-7) {VII};
\node[squarednode] (VIII) at (12,-8) {VIII};
\node[squarednode] (IX) at (12,-9) {${\rm IX}_{\alpha + \beta \in \{0, -1\}}$};
\node[squarednode] (X) at (12,-10) {X};
\node[squarednode] (XI) at (12,-11) {${\rm XI}^{\beta = -1}_{\alpha + \beta = -1}$};
\node[squarednode] (XII) at (12,-12) {${\rm XII}_{\alpha = 1}$};
\node[squarednode] (XIII) at (12,-13) {${\rm XIII}_{\alpha = -\beta = 1}$};

%%%
\draw[->] (brick.east) -- (u.west);
\draw[->] (brick.east) -- (d.west);

\draw[->] (u.east) -- (uu.west);
\draw[->] (u.east) -- (ud.west);
\draw[->] (d.east) -- (du.west);
\draw[->] (d.east) -- (dd.west);

\draw[->] (uu.east) -- (uuu.west);
\draw[->] (uu.east) -- (uud.west);
\draw[->] (ud.east) -- (udu.west);
\draw[->] (ud.east) -- (udd.west);
\draw[->] (du.east) -- (duu.west);
\draw[->] (du.east) -- (dud.west);
\draw[->] (dd.east) -- (ddu.west);
\draw[->] (dd.east) -- (ddd.west);

\draw[->] (uuu.east) -- (uuuu.west);
\draw[->] (uuu.east) -- (uuud.west);
\draw[->] (uud.east) -- (uudu.west);
\draw[->] (uud.east) -- (uudd.west);

\draw[->] (uuuu.east) -- (uuuuu.west);
\draw[->] (uuuu.east) -- (uuuud.west);
\draw[->] (uuud.east) -- (uuudu.west);
\draw[->] (uuud.east) -- (uuudd.west);
\draw[->] (uudu.east) -- (uuduu.west);
\draw[->] (uudu.east) -- (uudud.west);
\draw[->] (uudd.east) -- (uuddu.west);
\draw[->] (uudd.east) -- (uuddd.west);

\end{tikzpicture}
}
\end{center}

\begin{center}
{\small
\begin{tikzpicture}[
roundnode/.style={rounded rectangle, draw=green!40, fill=green!3, very thick, minimum size=2mm},
squarednode/.style={rectangle, draw=red!30, fill=red!2, thick, minimum size=4mm}
]
%%%
\node[squarednode] (brick) at (0,0) {$x_6 \neq 0$};

\node[squarednode] (u)  at (2,0) {$x_5=0$};
\node[squarednode] (d)  at (2,-4) {$\stackrel{\alpha = \beta}{x_5 \neq 0}$};

\node[squarednode] (uu)  at (4,0) {$x_3=0$};
\node[squarednode] (ud)  at (4,-2) {$\stackrel{\beta = 1}{x_3 \neq 0}$};
\node[squarednode] (du)  at (4,-4) {$x_3=0$};
\node[squarednode] (dd)  at (4,-6) {$\stackrel{\alpha = 1}{x_3 \neq 0}$};

\node[squarednode] (uuu)  at (6,0) {$x_1=0$};
\node[squarednode] (uud)  at (6,-1) {$\stackrel{\alpha = -1 \, {\rm or} \, \alpha + \beta = -1}{x_1 \neq 0}$};
\node[squarednode] (udu)  at (7,-2) {$x_3 x_4 + x_1 x_6=0$};
\node[roundnode] (udd)  at (7,-3) {$\stackrel{{\rm Isomorphic\, to\, XIII}_{\alpha=-\beta=1}}{\alpha = -1,x_3 x_4 + x_1 x_6 \neq 0}$};
\node[squarednode] (duu)  at (7,-4) {$x_1 x_6 - x_2 x_5=0$};
\node[squarednode] (dud)  at (7,-5) {$\stackrel{\alpha=\beta=-1/2}{x_1 x_6 - x_2 x_5 \neq 0}$};
\node[squarednode] (ddu)  at (8,-6) {$x_1 x_6 - x_2 x_5 + x_3 x_4=0$};
\node[roundnode] (ddd)  at (8,-7) {$\stackrel{{\rm No \, solutions}}{x_1 x_6 - x_2 x_5 + x_3 x_4 \neq 0}$};

%%%
\node[squarednode] (0) at (12,0) {XIV};
\node[squarednode] (I) at (12,-1) {${\rm XV}^{\alpha = -1}_{\alpha + \beta = -1}$};
\node[squarednode] (II) at (12,-2) {${\rm XVI}_{\beta = 1}$};
\node[squarednode] (IV) at (12,-4) {${\rm XVII}_{\alpha = \beta}$};
\node[squarednode] (V) at (12,-5) {${\rm XVIII}_{\alpha =\beta=-1/2}$};
\node[squarednode] (VI) at (12,-6) {${\rm XIX}_{\alpha = \beta=1}$};

%%%
\draw[->] (brick.east) -- (u.west);
\draw[->] (brick.east) -- (d.west);

\draw[->] (u.east) -- (uu.west);
\draw[->] (u.east) -- (ud.west);
\draw[->] (d.east) -- (du.west);
\draw[->] (d.east) -- (dd.west);

\draw[->] (uu.east) -- (uuu.west);
\draw[->] (uu.east) -- (uud.west);
\draw[->] (ud.east) -- (udu.west);
\draw[->] (ud.east) -- (udd.west);
\draw[->] (du.east) -- (duu.west);
\draw[->] (du.east) -- (dud.west);
\draw[->] (dd.east) -- (ddu.west);
\draw[->] (dd.east) -- (ddd.west);

\end{tikzpicture}
}
\end{center}

We now study four subcases: a) $
\alpha=\beta=1$, b) $\alpha=1\neq \beta$, c) $\alpha=\beta\neq 1$, d) remaining non-isomorphic cases.

{\bf Case d)}: In this case, ${\rm ad}_{e_4}$ acts on $\langle e_1,e_2,e_3\rangle$ having three different eigenvalues and this leads to the fact that the only derivations are those common to all $\alpha,\beta$, i.e. $\mathfrak{der}(\mathfrak{s}^{\alpha,\beta}_3)=\mathfrak{der}_{\rm all}(\mathfrak{s}_3)$. The connected parts of the loci of the Darboux families of the above Darboux tree can be found in Table \ref{Tab:g_orb_2}. Such connected parts are the orbits of ${\rm Aut}_{c}(\mathfrak{s}^{\alpha,\beta}_3)$ in $\mathcal{Y}_{\mathfrak{s}^{\alpha,\beta}_3}$ for $\alpha \neq 1$ and $\beta \neq 1$. Let us obtain the classification of $r$-matrices up to elements of ${\rm Aut}(\mathfrak{s}^{\alpha,\beta}_3)$. The Lie algebra automorphism group reads
$$
{\rm Aut}(\mathfrak{s}^{\alpha,\beta}_3) = \left\{ \left(
\begin{array}{cccc}
T^1_1 & 0 & 0 & T^4_1 \\
0 & T^2_2 & 0 & T^4_2 \\
0 & 0 & T^3_3 & T^4_3 \\
0 & 0 & 0 & 1
\end{array}
\right):  T^1_1, T^2_2, T^3_3 \in \rz, T^4_1, T^4_2, T^4_3 \in \rr \right\}.
$$
As in previous sections, we only need for our purposes one element of each connected part of ${\rm Aut} (\mathfrak{s}^{\alpha,\beta}_3)$. An element of each connected component and its lift to $\Lambda^2\mathfrak{s} _3$ are given by
\begin{equation*}
T_{\lambda_1,\lambda_2,\lambda_3}:=\left(
\begin{array}{cccc}
\lambda_1 & 0 & 0 & 0 \\
0 & \lambda_2 & 0 & 0 \\
0 & 0 & \lambda_3 & 0 \\
0 & 0 & 0 & 1
\end{array}
\right), \quad \Lambda^2T_{\lambda_1,\lambda_2,\lambda_3}:={\small
\left(
\begin{array}{cccccc}
\lambda_1 \lambda_2 & 0 & 0 & 0 &0&0 \\
0 & \lambda_1 \lambda_3 & 0 & 0 &0&0 \\
0 & 0 & \lambda_1 & 0 &0&0 \\
0 & 0 & 0 & \lambda_2 \lambda_3&0&0 \\
0 & 0 & 0 & 0&\lambda_2&0 \\
0 & 0 & 0 & 0&0&\lambda_3 \\
\end{array}
\right)}
\end{equation*}

for $\lambda_1, \lambda_2, \lambda_3 \in \{\pm 1\}$. By using the maps $\Lambda^2 T_{\lambda_1, \lambda_2, \lambda_3}$, we can verify whether some of the strata of $\mathscr{E}_{\mathfrak{s} _3}$ in $\mathcal{Y}_{\mathfrak{s}_3^{\alpha,\beta}}$ are still connected by a Lie algebra automorphism of $\mathfrak{s}^{\alpha,\beta}_3$. The results are summarised in Table \ref{Tab:g_orb_1}. 

Equivalence classes of coboundary cocommutators for each $\alpha,\beta$ are obtained by identifying the orbits in Table \ref{Tab:g_orb_1} whose elements are the same up to an element of $(\Lambda^2\mathfrak{s}^{\alpha,\beta}_3)^{\mathfrak{s}^{\alpha,\beta}_3}$.
	In particular, we have the following subcases:
\begin{itemize}
    \item Case $\alpha=-1, \beta \neq 1$. Hence, $(\Lambda^2\mathfrak{s}^{-1,\beta}_3)^{\mathfrak{s}^{-1,\beta}_3}=\langle e_{12}\rangle$. By analysing Table \ref{Tab:g_orb_1}, we obtain the equivalence classes of coboundary cocommutators induced by the following representative $r$-matrices:
$$
r_{I} = e_{12}, \quad r_{II} = e_{13}, \quad r_{IV} = e_{23}, \quad r_{VI} = e_{13} + e_{23}, \quad r_{VIII} = e_{14}, \quad r_{X} = e_{24}, \quad r_{XIV} = e_{34} 
$$
\item Case $\alpha\neq -1$ and $\beta\neq -1$. Since $(\Lambda^2\mathfrak{s}^{\alpha,\beta}_3)^{\mathfrak{s}^{\alpha,\beta}_3}=0$, each orbits of ${\rm Aut}(\mathfrak{s}^{\alpha,\beta}_3)$ within $\mathcal{Y}_{\mathfrak{s}^{\alpha,\beta}_3}$ correspond to the separate class of equivalent coboundary cocommutators. 
\end{itemize}
{\bf Case c):} For $\alpha=\beta\neq 1$, we have
$$
{\rm Aut}(\mathfrak{s}^{\alpha,\alpha}_3) = \left\{ \left(
\begin{array}{cccc}
T^1_1 & 0 & 0 & T^4_1 \\
0 & T^2_2 & T_2^3 & T^4_2 \\
0 & T_3^2 & T^3_3 & T^4_3 \\
0 & 0 & 0 & 1
\end{array}
\right):  T^1_1, T^2_2T^3_3-T^2_3T^3_2 \in \rz, T_1^1,T^2_2,T^3_3,T_2^3,T_3^2,T^4_1, T^4_2, T^4_3 \in \rr \right\}.
$$
To obtain the orbits of ${\rm Aut}(\mathfrak{s}^{\alpha,\alpha}_3)$ on $\mathcal{Y}_{\mathfrak{s}^{\alpha,\alpha}_3}$ from the orbits of ${\rm Aut}_{{\rm all},c}(\mathfrak{s}_3)$, it is necessary to write ${\rm Aut}(\mathfrak{s}^{\alpha,\alpha}_3)$ as a composition of ${\rm Aut}_{{\rm all},c}(\mathfrak{s}_3)$ with certain Lie algebra automorphisms of $\mathfrak{s}^{\alpha,\alpha}_3$ so that their composition generates ${\rm Aut}(\mathfrak{s}^{\alpha,\alpha}_3)$. This can be done by using the Lie algebra automorphisms of $\mathfrak{s}_3^{\alpha,\alpha}$ of the form $
T_A:={\rm Id}\otimes A{\otimes} {\rm Id}
$, for $A\in GL(2,\mathbb{R})$. Then, $\Lambda^2T_A=A\otimes {\rm Id}\otimes (\det A) {\rm Id}\otimes A$. By taking the action of these $\Lambda^2T_A$  on the strata of the distribution spanned by $V^{\rm all}_{\mathfrak{s}_3}$ in $\mathcal{Y}_{\mathfrak{s}^{\alpha,\alpha}_3}$, we obtain the orbits of ${\rm Aut}(\mathfrak{s}^{\alpha,\alpha}_3)$ on $\mathcal{Y}_{\mathfrak{s}^{\alpha,\alpha}_3}$. Our results are summarised in Table \ref{Tab:g_orb_2}. 

Since $(\Lambda^2\mathfrak{s}^{\alpha,\alpha}_3)^{\mathfrak{s}^{\alpha,\alpha}_3}=0$, each class of equivalent $r$-matrices detailed in Table \ref{Tab:g_orb_2} gives rise to a separate equivalence class of coboundary cocommutators for any value of $\alpha=\beta$.

{\bf Case b):} For $\alpha=1\neq \beta$,
$$
{\rm Aut}(\mathfrak{s}^{1,\beta}_3) = \left\{ \left(
\begin{array}{cccc}
T^1_1 & T_1^2 & 0 & T^4_1 \\
T_2^1 & T^2_2 & 0 & T^4_2 \\
0 & 0 & T^3_3 & T^4_3 \\
0 & 0 & 0 & 1
\end{array}
\right):  T^3_3, T^1_1T^2_2-T^1_2T^2_1 \in \rz, T_1^1,T_2^2,T_1^2,T_2^1,T^4_1, T^4_2, T^4_3 \in \rr \right\}.
$$
To derive the orbits of ${\rm Aut}(\mathfrak{s}^{1,\beta}_3)$ on $\mathcal{Y}_{\mathfrak{s}^{1,\beta}_3}$ from the orbits of ${\rm Aut}_{{\rm all},c}(\mathfrak{s}_3)$, we again write ${\rm Aut}(\mathfrak{s}^{1,\beta}_3)$ as a composition of ${\rm Aut}_{{\rm all},c}(\mathfrak{s}_3)$ with certain Lie algebra automorphisms of $\mathfrak{s}^{1,\beta}_3$ so that their composition generates ${\rm Aut}(\mathfrak{s}^{1,\beta}_3)$. This can be done by employing the Lie algebra autmorphisms of $\mathfrak{s}^{1,\beta}_{3}$ given by $
T_A:=A{\otimes} {\rm Id}
\otimes {\rm Id}$ for $A\in GL(2,\mathbb{R})$. Then $\Lambda^2T_A=(\det A){\rm Id}\otimes (\tau_{43}\circ A\otimes A\circ \tau_{43})\otimes {\rm Id}$, where $\tau_{43}$ is the permutation of coordinates three and four in $\Lambda^2\mathfrak{s}^{1,\beta}_3$. Considering the action of these $\Lambda^2T_A$  on the strata of the distribution spanned by $V^{\rm all}_{\mathfrak{s}_3}$ in $\mathcal{Y}_{\mathfrak{s}^{1,\beta}_3}$, we obtain the orbits of ${\rm Aut}(\mathfrak{s}^{1,\beta}_3)$ on $\mathcal{Y}_{\mathfrak{s}^{1,\beta}_3}$. Our results are summarised in Table \ref{Tab:g_orb_2}. 

In order classify coboundary cocommutators, let us identify the equivalence classes of $r$-matrices in Table \ref{Tab:g_orb_1} whose elements are the same up to an element of $(\Lambda^2\mathfrak{s}^{1,\beta}_3)^{\mathfrak{s}^{1,\beta}_3}$. If $\beta\neq -1$, these cocommutators correspond to the orbits of ${\rm Aut}(\mathfrak{s}^{1,\beta}_3)$ on $\mathcal{Y}_{\mathfrak{s}^{1,\beta}_3}$. Otherwise, $(\Lambda^2\mathfrak{s}^{1,-1}_3)^{\mathfrak{s}^{1,-1}_3}=\langle e_{13},e_{23}\rangle$ and the cocommutators are induced by the following representative $r$-matrices:
 $$
r_{I} = 0, \quad r_{IV} = e_{14}, \quad r_{VII} = e_{34}
 $$

{\bf Case a):} In this case, due to the larger family of symmetries for $\alpha=\beta=1$, the Lie algebra $V_{\mathfrak{s}_3^{1,1}}$ is spanned by
\begin{align*}
&X_1 = x_1 \partial_{x_1} + x_2 \partial_{x_2} + x_3 \partial_{x_3}, &
&X_2 = - x_5 \partial_{x_1} - x_6 \partial_{x_2}, &
&X_3 = x_1 \partial_{x_1} + x_4 \partial_{x_4} + x_5 \partial_{x_5}, \\
&X_4 = x_3 \partial_{x_1} - x_6 \partial_{x_4}, &
&X_5 = x_2 \partial_{x_2} + x_4 \partial_{x_4} + x_6 \partial_{x_6}, &
&X_6 = x_3 \partial_{x_2} + x_5 \partial_{x_4},\\
&X_7=x_4\partial_{x_2}+x_5\partial_{x_3}, & &X_8=x_4\partial_{x_1}+x_6\partial_{x_3}, & &X_9=x_2\partial_{x_4}+x_3\partial_{x_5},\\
&X_{10}=x_2\partial_{x_1}+x_6\partial_{x_5},&
&X_{11}=-x_1\partial_{x_4}+x_3\partial_{x_6}, & &X_{12}=x_1\partial_{x_2}+x_5\partial_{x_6}.
\end{align*}

One obtains the following simple Darboux tree:
\begin{center}
{\small
\begin{tikzpicture}[
roundnode/.style={rounded rectangle, draw=green!40, fill=green!3, very thick, minimum size=2mm},
squarednode/.style={rectangle, draw=red!30, fill=red!2, thick, minimum size=4mm}
]
%%%
\node[squarednode] (brick) at (0,0) {$x_3x_4-x_2x_5+x_1x_6=0$};
\node[roundnode] (nosol) at (0,-3) {$\stackrel{{\rm No\, solution}}{x_3x_4-x_2x_5+x_1x_6\neq 0}$};
\node[squarednode] (u)  at (4,0) {$x_3=0,x_5=0,x_6=0$};
\node[squarednode] (d)  at (4,-2) {$x_3^2+x_5^2+x_6^2\neq 0$};

\node[squarednode] (uu)  at (8,0) {$x_2=0,x_1=0,x_4=0$};
\node[squarednode] (ud)  at (8,-1) {$x_2^2+x_1^2+x_4^2\neq 0$};

%%%
\node[squarednode] (0) at (12,0) {0};
\node[squarednode] (I) at (12,-1) {I};
\node[squarednode] (II) at (12,-2) {II};

%%%
\draw[->] (brick.east) -- (u.west);
\draw[->] (brick.east) -- (d.west);

\draw[->] (u.east) -- (uu.west);
\draw[->] (u.east) -- (ud.west);
\end{tikzpicture}
}
\end{center}
It is immediate to verify that the above orbits give rise to two inequivalent classes of $r$-matrices, as shown in Table \ref{Tab:g_orb_2}. Since in this case $(\Lambda^2\mathfrak{s}^{1,1}_3)^{\mathfrak{s}^{1,1}_3}=0$, each class of equivalent $r$-matrices amounts to a class of equivalent coboundary cocommutators.

\subsection{Lie algebra $\mathfrak{s}_{4}$}

Using Table \ref{Tab:g_orb_1}, one directly verifies that $(\Lambda^3\mathfrak{s}_4)^{\mathfrak{s}_4}=\langle e_{123}\rangle$ for $2+\alpha=0$ and $(\Lambda^3\mathfrak{s}_4)^{\mathfrak{s}_4}=\{0\}$ for $2+\alpha \neq 0$. Meanwhile, $(\Lambda^2\mathfrak{s}_4)^{\mathfrak{s}_4}=0$ for $\alpha\neq -1$ and $(\Lambda^2\mathfrak{s}_4)^{\mathfrak{s}_4}=\langle e_{13}\rangle$ for $\alpha=-1$.

As the Lie algebras of the class $\mathfrak{s}_4$ depend on a parameter $\alpha\in \mathbb{R}\backslash\{0\}$, the space of derivations depend on $\alpha$. It is indeed the same for all values of $\alpha\in \mathbb{R}\backslash\{0,1\}$, and it becomes larger for $\alpha=1$. In consequence, we shall proceed as in Subsection \ref{Subsec:3}. 
By Remark \ref{Re:DerAlg}, the space of common derivations of $\mathfrak{s}_4$ for all possible values of $\alpha$ read
$$
\mathfrak{der}_{\rm all}(\mathfrak{s}_4):=\left\{\left(
\begin{array}{cccc}
\mu_{11} & \mu_{12} & 0 & \mu_{14} \\
0 & \mu_{11} & 0 & \mu_{24} \\
0 & 0 & \mu_{33} & \mu_{34} \\
0 & 0 & 0 & 0
\end{array}
\right):\mu_{11}, \mu_{12}, \mu_{33}, \mu_{14}, \mu_{24}, \mu_{34} \in \mathbb{R}\right\}.
$$
By extending the previous derivations to $\Lambda^2\mathfrak{s}_4$, we obtain a basis of $V^{\rm all}_{\mathfrak{s}_4}$ of the form
\begin{align*}
&X_1 = 2x_1 \partial_{x_1} + x_2 \partial_{x_2} + x_3 \partial_{x_3} + x_4 \partial_{x_4} + x_5 \partial_{x_5}, &
&X_2 = x_4 \partial_{x_2} + x_5 \partial_{x_3}, &
&X_3 = -x_5 \partial_{x_1} - x_6 \partial_{x_2}, \\
&X_4 = x_2 \partial_{x_2} + x_4 \partial_{x_4} + x_6 \partial_{x_6}, &
&X_5 = x_3 \partial_{x_1} - x_6 \partial_{x_4}, &
&X_6 = x_3 \partial_{x_2} + x_5 \partial_{x_4}.
\end{align*}
We recall that these vector fields span the Lie algebra of fundamental vector fields of the action on $\Lambda^2\mathfrak{s}_4$ of the Lie algebra automorphisms ${\rm Aut}_{\rm all}(\mathfrak{s}_4)$ that are common for all values of $\alpha\in \mathbb{R}\backslash\{0\}$.

For an element $r \in \Lambda^2 \mathfrak{s}_4$, we get
$$
[r,r] = 2[2x_1 x_6 - (1 + \alpha)x_2 x_5 + (1 + \alpha)x_3 x_4 - x_4 x_5]e_{123} - 2x_5^2e_{124} + 2[-(1 - \alpha)x_3 x_6 - x_5 x_6]e_{134} + 2(\alpha - 1)x_5 x_6e_{234}.
$$
If $\alpha \neq -2$, we have $(\Lambda^3 \mathfrak{s}_4)^{\mathfrak{s}_4} = 0$. Thus, the mCYBE and the CYBE are equal in this case and they read
$$
2x_1 x_6 + (1 + \alpha) x_3 x_4 = 0, \quad (1 - \alpha)x_3 x_6 = 0, \quad x_5 = 0.
$$
For the case $\alpha + 2 = 0$, we have $(\Lambda^3 \mathfrak{s}_4)^{\mathfrak{s}_4} = \langle e_{123} \rangle$. Thus, the mCYBE reads
$$
\quad (1 - \alpha)x_3 x_6 = 0, \quad x_5 = 0.
$$

Since $x_5, x_6$ are the bricks for $\mathfrak{s}_4$, our Darboux tree  for the Darboux families starts with the cases $x_i  = 0$ and $x_i \neq 0$, $i \in \{5,6\}$. The full Darboux tree is presented below.

\begin{center}
{\small
\begin{tikzpicture}[
roundnode/.style={rounded rectangle, draw=green!40, fill=green!3, very thick, minimum size=2mm},
squarednode/.style={rectangle, draw=red!30, fill=red!2, thick, minimum size=4mm}
]
%%%
\node[squarednode] (brick) at (0,0) {$x_6=0$};

\node[squarednode] (u)  at (2,0) {$x_5=0$};
\node[roundnode] (d)  at (2,-7) {$\stackrel{{\rm No \, solutions}}{x_5 \neq 0}$};

\node[squarednode] (uu)  at (4,0) {$x_4=0$};
\node[squarednode] (ud)  at (4,-5) {$x_4 \neq 0$};

\node[squarednode] (uuu)  at (6,0) {$x_3=0$};
\node[squarednode] (uud)  at (6,-4) {$x_3 \neq 0$};
\node[squarednode] (udu)  at (6,-5) {$x_3=0$};
\node[squarednode] (udd)  at (6,-7) {$\stackrel{\alpha = -2 \lor \alpha = -1}{x_3 \neq 0}$};

\node[squarednode] (uuuu)  at (8,0) {$x_2=0$};
\node[squarednode] (uuud)  at (8,-2) {$x_2 \neq 0$};
\node[squarednode] (uduu)  at (8,-5) {$x_1=0$};
\node[squarednode] (udud)  at (8,-6) {$x_1 \neq 0$};

\node[squarednode] (uuuuu)  at (10,0) {$x_1=0$};
\node[squarednode] (uuuud)  at (10,-1) {$x_1 \neq 0$};
\node[squarednode] (uuudu)  at (10,-2) {$x_1=0$};
\node[squarednode] (uuudd)  at (10,-3) {$x_1 \neq 0$};

%%%
\node[squarednode] (0) at (12,0) {0};
\node[squarednode] (I) at (12,-1) {I};
\node[squarednode] (II) at (12,-2) {II};
\node[squarednode] (III) at (12,-3) {III};
\node[squarednode] (IV) at (12,-4) {IV};
\node[squarednode] (V) at (12,-5) {V};
\node[squarednode] (VI) at (12,-6) {VI};
\node[squarednode] (VII) at (12,-7) {${\rm VII}_{\alpha \in \{-2,-1\}}$};

%%%
\draw[->] (brick.east) -- (u.west);
\draw[->] (brick.east) -- (d.west);

\draw[->] (u.east) -- (uu.west);
\draw[->] (u.east) -- (ud.west);

\draw[->] (uu.east) -- (uuu.west);
\draw[->] (uu.east) -- (uud.west);
\draw[->] (ud.east) -- (udu.west);
\draw[->] (ud.east) -- (udd.west);

\draw[->] (uuu.east) -- (uuuu.west);
\draw[->] (uuu.east) -- (uuud.west);
\draw[->] (udu.east) -- (uduu.west);
\draw[->] (udu.east) -- (udud.west);

\draw[->] (uuuu.east) -- (uuuuu.west);
\draw[->] (uuuu.east) -- (uuuud.west);
\draw[->] (uuud.east) -- (uuudu.west);
\draw[->] (uuud.east) -- (uuudd.west);
\end{tikzpicture}

\vspace{1cm}

\begin{tikzpicture}[
roundnode/.style={rounded rectangle, draw=green!40, fill=green!3, very thick, minimum size=2mm},
squarednode/.style={rectangle, draw=red!30, fill=red!2, thick, minimum size=4mm}
]
%%%
\node[squarednode] (brick) at (0,0) {$x_6\neq 0$};

\node[squarednode] (u)  at (2,0) {$x_5=0$};
\node[roundnode] (d)  at (2,-2) {$\stackrel{{\rm No \, solutions}}{x_5 \neq 0}$};

\node[squarednode] (uu)  at (4,0) {$x_3=0$};
\node[squarednode] (ud)  at (4,-2) {$\stackrel{\alpha = 1}{x_3 \neq 0}$};

\node[squarednode] (uuu)  at (6,0) {$x_1=0$};
\node[squarednode] (uud)  at (6,-1) {$\stackrel{\alpha = -2}{x_1 \neq 0}$};
\node[squarednode] (udu)  at (7,-2) {$x_3 x_4 + x_1 x_6 = 0$};
\node[roundnode] (udd)  at (7,-3) {$\stackrel{{\rm No \, solutions}}{x_3 x_4 + x_1 x_6 \neq 0}$};

%%%
\node[squarednode] (VIII) at (10,0) {VIII};
\node[squarednode] (IX) at (10,-1) {${\rm IX}_{\alpha = -2}$};
\node[squarednode] (X) at (10,-2) {${\rm X}_{\alpha = 1}$};

%%%
\draw[->] (brick.east) -- (u.west);
\draw[->] (brick.east) -- (d.west);

\draw[->] (u.east) -- (uu.west);
\draw[->] (u.east) -- (ud.west);

\draw[->] (uu.east) -- (uuu.west);
\draw[->] (uu.east) -- (uud.west);
\draw[->] (ud.east) -- (udu.west);
\draw[->] (ud.east) -- (udd.west);

\end{tikzpicture}
}
\end{center}
The connected parts of the subspaces denoted in the above diagram are the orbits of ${\rm Aut}_{\rm all,c}(\mathfrak{s}_4)$ on $\mathcal{Y}_{\mathfrak{s}_4}$. 

The Lie group ${\rm Aut}_{\rm all}(\mathfrak{s}_4)$ of the common Lie algebra automorphisms of $\mathfrak{s}_4$ for all $\alpha \in \mathbb{R} \backslash \{0\}$ is given by
$$
{\rm Aut}_{\rm all}(\mathfrak{s}_4) = \left\{
\left(
\begin{array}{cccc}
T^2_2 & T^2_1 & 0 & T^4_1 \\
0 & T^2_2 & 0 & T^4_2 \\
0 & 0 & T^3_3 & T^4_3 \\
0 & 0 & 0 & 1
\end{array}
\right): T^2_2, T^3_3 \in \rz, \, T^2_1, T_1^4, T_2^4, T_3^4 \in \rr
\right\},
$$
In order to obtain the orbits of ${\rm Aut}_{\rm all}(\mathfrak{s}_4)$ on $\mathcal{Y}_{\mathfrak{s}_4}$, let us search for elements of ${\rm  Aut}_{\rm all}(\mathfrak{s}_4)$ that identify some of the connected components of the orbits of ${\rm Aut}_{\rm all, c} (\mathfrak{s}_4)$ on $\mathcal{Y}_{\mathfrak{s}_4}$. As previously, we consider one element of each connected component of ${\rm Aut}_{\rm all}(\mathfrak{s}_4)$ and their lifts to $\Lambda^2\mathfrak{s}_4$, that is 
\begin{equation*}
T_{\lambda_1,\lambda_2}:=\left(
\begin{array}{cccc}
\lambda_1 & 0 & 0 & 0 \\
0 & \lambda_1 & 0 & 0 \\
0 & 0 & \lambda_2 & 0 \\
0 & 0 & 0 & 1
\end{array}
\right), \quad \Lambda^2T_{\lambda_1,\lambda_2}={\small
\left(
\begin{array}{cccccc}
1 & 0 & 0 & 0 &0&0 \\
0 & \lambda_1 \lambda_2 & 0 & 0 &0&0 \\
0 & 0 & \lambda_1 & 0 &0&0 \\
0 & 0 & 0 & \lambda_1 \lambda_2&0&0 \\
0 & 0 & 0 & 0&\lambda_1&0 \\
0 & 0 & 0 & 0&0&\lambda_2 \\
\end{array}
\right)}, \quad \lambda_1, \lambda_2 \in \{\pm 1\}.
\end{equation*}
For $\alpha\in \mathbb{R}\backslash\{1,0\}$, one has that ${\rm Aut}(\mathfrak{s}^\alpha_4)={\rm Aut}_{\rm all}(\mathfrak{s}_4)$, where $\mathfrak{s}^\alpha_4$ stands for the Lie algebra $\mathfrak{s}_4$ for a fixed value of $\alpha$. Then, the classes of equivalent $r$-matrices (up to Lie algebra automorphisms of $\mathfrak{s}_4^\alpha$) on $\mathfrak{s}^\alpha_4$ can  easily be obtained and they are summarised in Table \ref{Tab:g_orb_1}. 

For $\alpha=-1$, we get $(\Lambda^2\mathfrak{s}_4)^{\mathfrak{s}_4}=\langle e_{13}\rangle$ and thus, the classes of equivalent coboundary cocommutators in this case are induced by the following representative $r$-matrices:
$$
r_{0} = 0, \quad r^{}_{I} = \pm e_{12}, \quad r_{IV} = e_{14}, \quad r_{V} = e_{23}, \quad r^{}_{VI} = \pm e_{12} + e_{23}, \quad r_{VII} = e_{14} + e_{23}, \quad e_{VIII} = e_{34}
$$
For those $\mathfrak{s}_{4}^\alpha$ with $\alpha \in \mathbb{R} \backslash \{-1,0,1\}$, one has $(\Lambda^2\mathfrak{s}_4^{\alpha})^{\mathfrak{s}_4^{\alpha}}=0$ and the classes of equivalent coboundary coproducts are given by each orbit of ${\rm Aut}(\mathfrak{s}^\alpha_4)$.

Note that for values $|\alpha|\neq 1$, not all the classes listed in Table \ref{Tab:g_orb_2} may be simultaneously available as they arise for particular values of $\alpha$. 

Let us consider now the case of the Lie algebra $\mathfrak{s}^1_4$, i.e. the Lie algebra of the class $\mathfrak{s}_4$ for $\alpha=1$. In this case, the group of Lie algebra automorphisms is larger than for remaining admissible values of $\alpha$. In particular,  
$$
{\rm Aut}(\mathfrak{s}^1_4) = \left\{
\left(
\begin{array}{cccc}
T^2_2 & T^2_1 & T^3_1 & T^4_1 \\
0 & T^2_2 & 0 & T^4_2 \\
0 & T_3^2 & T^3_3 & T^4_3 \\
0 & 0 & 0 & 1
\end{array}
\right): T^2_2, T^3_3 \in \rz, \, T_3^2,T^2_1,T^3_1, T_1^4, T_2^4, T_3^4 \in \rr
\right\}
$$
Moreover, one obtains ${\rm Aut}(\mathfrak{s}^1_4)$ by composing ${\rm Aut}_{\rm all}(\mathfrak{s}_4)$ with the Lie algebra automorphisms of $\mathfrak{s}_4^1$ of the form
$$
T(e_1)=e_1,\,T(e_2)=e_2+\lambda e_3,\,T(e_3)=e_3+\mu e_1,\,T(e_4)=e_4,\qquad \forall \lambda,\mu\in \mathbb{R}.
$$
Hence, we get the orbits of the action ${\rm Aut}(\mathfrak{s}_4)$ on $\mathcal{Y}_{\mathfrak{s}_4}$ by acting with $\Lambda^2T_A$ on the orbits of ${\rm Aut}_{\rm all}(\mathfrak{s}_4)$. Note that
\begin{align*}
&\Lambda^2T(e_{12})=e_{12}+\lambda e_{13}, & &\Lambda^2T(e_{13})=e_{13}, & &\Lambda^2T(e_{14})=e_{14}, \\
&\Lambda^2T(e_{23})=e_{23}-e_{12}\mu-\lambda\mu e_{13}, & &\Lambda^2T(e_{24})=e_{24}+\lambda e_{34}, & &\Lambda^2T_{34}=e_{34}+\mu e_{14},
\end{align*}
for every $\lambda,\mu\in \mathbb{R}$.
This will give the final orbits detailed in Table \ref{Tab:g_orb_2}. Since $(\Lambda^2\mathfrak{s}^1_4)^{\mathfrak{s}^1_4}=0$,  the classes of equivalent coboundary cocommutators are induced by all classes of equivalent $r$-matrices.

\subsection{Lie algebra $\mathfrak{s}_{5}$}\label{Sec:s5}

Structure constants for Lie algebra $\mathfrak{s}_5$ are given in Table \ref{Tab:StruCons}. Using this information, one obtains $(\Lambda^2 \mathfrak{s}_5)^{\mathfrak{s}_5} = \langle e_{23}\rangle$ for $\beta = 0$ and $(\Lambda^2 \mathfrak{s}_5)^{\mathfrak{s}_5} = \{0\}$ for $\beta \neq 0$. Moreover, $(\Lambda^3\mathfrak{s}_5)^{\mathfrak{s}_5}=0$ for $\alpha+2\beta\neq 0$ and $(\Lambda^3\mathfrak{s}_5)^{\mathfrak{s}_5}=\langle e_{123}\rangle$ for $\alpha+2\beta = 0$.

By Remark \ref{Re:DerAlg}, we obtain that the derivations of $\mathfrak{s}_5$ read
$$
\mathfrak{der}(\mathfrak{s}_5):=\left\{\left(
\begin{array}{cccc}
 \mu_{11} & 0 & 0 & \mu_{14} \\
0 & \mu_{22} & \mu_{23} & \mu_{24} \\
0 & -\mu_{23} & \mu_{22} & \mu_{34} \\
0 & 0 & 0 & 0
\end{array}
\right): \mu_{11}, \mu_{22}, \mu_{23}, \mu_{14}, \mu_{24}, \mu_{34} \in \mathbb{R}\right\},
$$
which give rise to the basis of $V_{\mathfrak{s}_5}$ of the form
\begin{align*}
&X_1 = x_1 \partial_{x_1} + x_2 \partial_{x_2} + 2x_4 \partial_{x_4} + x_5 \partial_{x_5} + x_6 \partial_{x_6}, &
&X_2 = -x_5 \partial_{x_1} - x_6 \partial_{x_2}, &
&X_3 = x_1 \partial_{x_1} + x_2 \partial_{x_2} + x_3 \partial_{x_3}, \\
&X_4 = x_2 \partial_{x_1} - x_1 \partial_{x_2} + x_6 \partial_{x_5} - x_5 \partial_{x_6}, &
&X_5 = x_3 \partial_{x_1} - x_6 \partial_{x_4}, &
&X_6 = x_3 \partial_{x_2} + x_5 \partial_{x_4}.
\end{align*}

For an element $r \in \Lambda^2 \mathfrak{s}_5$, we get
\begin{equation*}
\begin{split}
[r,r]&= 2[x_1 x_5 + (\alpha + \beta)x_1 x_6 - (\alpha + \beta)x_2 x_5 + x_2 x_6 + 2\beta x_3 x_4]e_{123} + 2[(\beta - \alpha)x_3 x_5 + x_3 x_6]e_{124} \\
&+ 2[-x_3 x_5 + (\beta - \alpha)x_3 x_6]e_{134} - 2(x_5^2+x_6^2)e_{234}.
\end{split}
\end{equation*}

If $\alpha + 2\beta \neq 0$, then $(\Lambda^3 \mathfrak{s}_5)^{\mathfrak{s}_5} = 0$. Thus, the mCYBE and the CYBE are equal in this case and they read
$$
\beta x_3 x_4 = 0, \quad x_5 = 0, \quad x_6 = 0.
$$

For $\alpha + 2\beta = 0$, we have $(\Lambda^3 \mathfrak{s}_5)^{\mathfrak{s}_5} = \langle e_{123} \rangle$ and the mCYBE reads $x_5 = 0, x_6 = 0$.

Since $x_3$ is the only brick for $\mathfrak{s}_5$, we start our Darboux tree with the cases $x_3 = 0$ and $x_3 \neq 0$. The full Darboux tree is presented below.

\begin{center}
{\small
\begin{tikzpicture}[
roundnode/.style={rounded rectangle, draw=green!40, fill=green!3, very thick, minimum size=2mm},
squarednode/.style={rectangle, draw=red!30, fill=red!2, thick, minimum size=4mm}
]
%%%
\node[squarednode] (brick) at (0,0) {$x_3=0$};

\node[squarednode] (u)  at (2,0) {$x_6=0$};
\node[roundnode] (d)  at (2,-3) {$\stackrel{{\rm No \, solutions}}{x_6 \neq 0}$};

\node[squarednode] (uu)  at (4,0) {$x_5=0$};
\node[roundnode] (ud)  at (4,-3) {$\stackrel{{\rm No\, solutions}}{x_5 \neq 0}$};

\node[squarednode] (uuu)  at (6,0) {$x_4=0$};
\node[squarednode] (uud)  at (6,-2) {$x_4 \neq 0$};

\node[squarednode] (uuuu)  at (8,0) {$x_1^2 + x_2^2=0$};
\node[squarednode] (uuud)  at (8,-1) {$x_1^2 + x_2^2 \neq 0$};
\node[squarednode] (uudu)  at (8,-2) {$x_1^2 + x_2^2=0$};
\node[squarednode] (uudd)  at (8,-3) {$x_1^2 + x_2^2 \neq 0$};

%%%
\node[squarednode] (0) at (12,0) {0};
\node[squarednode] (I) at (12,-1) {I};
\node[squarednode] (II) at (12,-2) {II};
\node[squarednode] (III) at (12,-3) {III};

%%%
\draw[->] (brick.east) -- (u.west);
\draw[->] (brick.east) -- (d.west);

\draw[->] (u.east) -- (uu.west);
\draw[->] (u.east) -- (ud.west);

\draw[->] (uu.east) -- (uuu.west);
\draw[->] (uu.east) -- (uud.west);

\draw[->] (uuu.east) -- (uuuu.west);
\draw[->] (uuu.east) -- (uuud.west);
\draw[->] (uud.east) -- (uudu.west);
\draw[->] (uud.east) -- (uudd.west);

\end{tikzpicture}

\vspace{1cm}

\begin{tikzpicture}[
roundnode/.style={rounded rectangle, draw=green!40, fill=green!3, very thick, minimum size=2mm},
squarednode/.style={rectangle, draw=red!30, fill=red!2, thick, minimum size=4mm}
]
%%%
\node[squarednode] (brick) at (0,0) {$x_3\neq 0$};

\node[squarednode] (u)  at (3,0) {$x_6=0$};
\node[roundnode] (d)  at (3,-1) {$\stackrel{{\rm No \, solutions}}{x_6 \neq 0}$};

\node[squarednode] (uu)  at (6,0) {$x_5=0$};
\node[roundnode] (ud)  at (6,-1) {$\stackrel{{\rm No \, solutions}}{x_5 \neq 0}$};

\node[squarednode] (uuu)  at (9,0) {$x_4=0$};
\node[squarednode] (uud)  at (9,-1) {$\stackrel{ \beta = 0 \lor \alpha + 2\beta = 0}{x_4 \neq 0}$};

%%%
\node[squarednode] (0) at (12,0) {IV};
\node[squarednode] (I) at (12,-1) {${\rm V}^{\alpha = -2\beta}_{\beta = 0}$};

%%%
\draw[->] (brick.east) -- (u.west);
\draw[->] (brick.east) -- (d.west);

\draw[->] (u.east) -- (uu.west);
\draw[->] (u.east) -- (ud.west);

\draw[->] (uu.east) -- (uuu.west);
\draw[->] (uu.east) -- (uud.west);

\end{tikzpicture}
}
\end{center}

Orbits of ${\rm Aut}_c(\mathfrak{s}_5)$, described in Table \ref{Tab:g_orb_1}, are given by the connected parts of the loci of the Darboux families detailed in the above Darboux tree.

The automorphism group of $\mathfrak{s}_5$ reads
$$
{\rm Aut}(\mathfrak{s}_5) = \left\{
\left(
\begin{array}{cccc}
T^1_1 & 0 & 0 & T^4_1 \\
0 & T^2_2 & T^3_2 & T^4_2 \\
0 & -T^3_2 & T^2_2 & T^4_3 \\
0 & 0 & 0 & 1
\end{array}
\right): T^1_1\in \mathbb{R} , (T^2_2)^2 + (T^3_2)^2>0, \, T_2^2,T_2^3,T^4_1, T^4_2, T^4_3 \in \rr
\right\}.
$$

Let us set $T^2_2=\mu \cos\phi$ and $T^3_2=\mu \sin \phi$ with $\phi\in [0,2\pi[$ and $\mu \in \mathbb{R}_+$. Then, one sees that there are two connected components of ${\rm Aut}(\mathfrak{s}_5)$. A representative of each connected part and its lift to $\Lambda^2\mathfrak{s}_5$ read
\begin{equation*}
T_\lambda:=\left(
\begin{array}{cccc}
\lambda & 0 & 0 & 0 \\
0 & 1 & 0 & 0 \\
0 & 0 & 1 & 0 \\
0 & 0 & 0 & 1
\end{array}
\right), \qquad \Lambda^2T_\lambda:=\left(
\begin{array}{cccccc}
\lambda  & 0 & 0 & 0 &0&0 \\
0 & \lambda  & 0 & 0 &0&0 \\
0 & 0 & \lambda  & 0 &0&0 \\
0 & 0 & 0 & 1 &0&0 \\
0 & 0 & 0 & 0& 1 &0 \\
0 & 0 & 0 & 0&0& 1 \\
\end{array}
\right), \quad \lambda  \in \{\pm 1\}.
\end{equation*}

Using techniques from previous sections, we can easily  verify whether the orbits of ${\rm Aut}_c(\mathfrak{s}_5)$ are additionally connected by a Lie algebra automorphism of $\mathfrak{s}_5$ via $\Lambda^2T_\lambda$. Our results are summarised in Table \ref{Tab:g_orb_1}. Moreover, for $\beta \neq 0$, the classes of coboundary cocommutators are induced by all classes of equivalent $r$-matrices listed in Table \ref{Tab:g_orb_2} .

Meanwhile, for $\beta=0$, the classes of equivalent coboundary cocommutators come from the following representative $r$-matrices:
$$
r_{0} = 0, \quad r_1 = e_{12}, \quad r_2 = e_{14}
$$

\subsection{Lie algebra $\mathfrak{s}_{6}$}\label{Se:s6}

Structure constants of Lie algebra $\mathfrak{s}_6$ are given in Table \ref{Tab:StruCons}. Using this information, one verifies directly that $(\Lambda^3\mathfrak{s}_6)^{\mathfrak{s}_6}=\langle e_{123}\rangle$ while $(\Lambda^2\mathfrak{s}_6)^{\mathfrak{s}_6}=\{0\}$.

By Remark \ref{Re:DerAlg}, one obtains that
$$
\mathfrak{der}(\mathfrak{s}_6):=\left\{\left(
\begin{array}{cccc}
\mu_{11} & \mu_{12} & \mu_{13} & \mu_{14} \\
0 & \mu_{22} & 0 & -\mu_{13} \\
0 & 0 & \mu_{11} - \mu_{22} & -\mu_{12} \\
0 & 0 & 0 & 0
\end{array}
\right): \mu_{11}, \mu_{12}, \mu_{13}, \mu_{14}, \mu_{22} \in \mathbb{R}\right\},
$$
which gives rise to the basis of $V_{\mathfrak{s}_6}$ of the form
\begin{align*}
&X_1 = x_1 \partial_{x_1} + 2x_2 \partial_{x_2} + x_3 \partial_{x_3} + x_4 \partial_{x_4} + x_6 \partial_{x_6}, &
&X_2 = (-x_3 + x_4) \partial_{x_2} + x_5 \partial_{x_3} - x_5 \partial_{x_4}, \\
&X_3 = (-x_3 - x_4) \partial_{x_1} + x_6 \partial_{x_3} + x_6 \partial_{x_4}, &
&X_4 = -x_5 \partial_{x_1} - x_6 \partial_{x_2}, \\
&X_5 = x_1 \partial_{x_1} - x_2 \partial_{x_2} + x_5 \partial_{x_5} - x_6 \partial_{x_6}.
\end{align*}

For an element $r \in \Lambda^2 \mathfrak{s}_6$, we get
$$
[r,r] = 2(x_1 x_6 + x_2 x_5 + x_4^2)e_{123} + 2(x_3 + x_4)x_5 e_{124} + 2(x_4-x_3)x_6 e_{134} - 4x_5 x_6 e_{234}.
$$
Since $(\Lambda^3 \mathfrak{s}_6)^{\mathfrak{s}_6} = \langle e_{123} \rangle$, the mCYBE reads
$$
(x_3 + x_4) x_5 = 0\quad (x_4 - x_3) x_6 = 0, \quad x_5 x_6 = 0,
$$
whereas the CYBE is
$$
x_1 x_6 + x_2 x_5 + x_4^2 = 0, \quad (x_3 + x_4) x_5 = 0, \quad (x_4 - x_3) x_6 = 0, \quad x_5 x_6 = 0.
$$
Note that the CYBE obtained above is exactly the result obtained in \cite[eq. (3.7)]{BH96} under the substitution $x_5=-\alpha_+$, $x_6=-\alpha_-$, $x_4=\xi$ and $x_3=\vartheta$. 

Since $x_5, x_6$ are the bricks of $\mathfrak{s}_6$, our Darboux tree starts with cases $x_i = 0$ and $x_i \neq 0$, for $i \in \{5,6\}$. The full Darboux tree is presented below.

\begin{center}
{\small
\begin{tikzpicture}[
roundnode/.style={rounded rectangle, draw=green!40, fill=green!3, very thick, minimum size=2mm},
squarednode/.style={rectangle, draw=red!30, fill=red!2, thick, minimum size=4mm}
]
%%%
\node[squarednode] (brick) at (0,0) {$x_6=0$};

\node[squarednode] (u)  at (2,0) {$x_5=0$};
\node[squarednode] (d)  at (2,-10) {$x_5 \neq 0$};

\node[squarednode] (uu)  at (4,0) {$x_4=0$};
\node[squarednode] (ud)  at (4,-5) {$x_4 \neq 0$};
\node[roundnode] (du)  at (4,-10) {$\stackrel{\tiny{\rm No \, solutions}}{x_3 + x_4 \neq 0}$};
\node[squarednode] (dd)  at (4,-11) {$x_3 + x_4 = 0$};

\node[squarednode] (uuu)  at (6,0) {$x_3=0$};
\node[squarednode] (uud)  at (6,-4) {$x_3 \neq 0$};
\node[squarednode] (udu)  at (6,-5) {$x_3 = 0$};
\node[squarednode] (udd)  at (6,-6) {$x_3 \neq 0$};

\node[squarednode] (uuuu)  at (8,0) {$x_2=0$};
\node[squarednode] (uuud)  at (8,-2) {$x_2 \neq 0$};
\node[squarednode] (uddu)  at (9,-6) {$x_3 - x_4 = 0$};
\node[squarednode] (udddi)  at (9,-8) {$x_3 + x_4 = 0$};
\node[squarednode] (udddiid)  at (7,-11) {$x_2x_5+x_4^2=0$};
\node[squarednode] (udddiii)  at (7,-12) {$x_2x_5+x_4^2\neq 0$};
\node[squarednode] (udddii)  at (9,-10) {$\stackrel{k \notin \{-1,0,1\}}{x_3 - kx_4 = 0}$};

\node[squarednode] (uuuuu)  at (10,0) {$x_1 = 0$};
\node[squarednode] (uuuud)  at (10,-1) {$x_1 \neq 0$};
\node[squarednode] (uuudu)  at (10,-2) {$x_1 = 0$};
\node[squarednode] (uuudd)  at (10,-3) {$x_1 \neq 0$};
\node[squarednode] (udduu)  at (11,-6) {$x_2 = 0$};
\node[squarednode] (uddud)  at (11,-7) {$x_2 \neq 0$};
\node[squarednode] (udddiu)  at (11,-8) {$x_1 = 0$};
\node[squarednode] (udddid)  at (11,-9) {$x_1 \neq 0$};

%%%
\node[squarednode] (0) at (13,0) {0};
\node[squarednode] (I) at (13,-1) {I};
\node[squarednode] (II) at (13,-2) {I$^{\rm ext}$};
\node[squarednode] (III) at (13,-3) {II};
\node[squarednode] (IV) at (13,-4) {III};
\node[squarednode] (V) at (13,-5) {IV};
\node[squarednode] (VI) at (13,-6) {V};
\node[squarednode] (VII) at (13,-7) {VI};
\node[squarednode] (VII) at (13,-8) {${\rm V}^{{\rm ext}}$};
\node[squarednode] (IX) at (13,-9) {${\rm VI}^{{\rm ext}}$};
\node[squarednode] (X) at (13,-10) {VII$_{|k| \notin \{0, 1\}}$};
\node[squarednode] (XI) at (13,-11) {VIII};
\node[squarednode] (XII) at (13,-12) {IX};

%%%
\draw[->] (brick.east) -- (u.west);
\draw[->] (brick.east) -- (d.west);

\draw[->] (u.east) -- (uu.west);
\draw[->] (u.east) -- (ud.west);
\draw[->] (d.east) -- (du.west);
\draw[->] (d.east) -- (dd.west);

\draw[->] (uu.east) -- (uuu.west);
\draw[->] (uu.east) -- (uud.west);
\draw[->] (ud.east) -- (udu.west);
\draw[->] (ud.east) -- (udd.west);

\draw[->] (uuu.east) -- (uuuu.west);
\draw[->] (uuu.east) -- (uuud.west);
\draw[->] (udd.east) -- (uddu.west);
\draw[->] (udd.east) -- (udddi.west);
\draw[->] (udd.east) -- (udddii.west);

\draw[->] (uuuu.east) -- (uuuuu.west);
\draw[->] (uuuu.east) -- (uuuud.west);
\draw[->] (uuud.east) -- (uuudu.west);
\draw[->] (uuud.east) -- (uuudd.west);
\draw[->] (uddu.east) -- (udduu.west);
\draw[->] (uddu.east) -- (uddud.west);
\draw[->] (udddi.east) -- (udddiu.west);
\draw[->] (udddi.east) -- (udddid.west);

\draw[->] (dd.east) -- (udddiid.west);
\draw[->] (dd.east) -- (udddiii.west);

\end{tikzpicture}

\vspace{1cm}

\begin{tikzpicture}[
roundnode/.style={rounded rectangle, draw=green!40, fill=green!3, very thick, minimum size=2mm},
squarednode/.style={rectangle, draw=red!30, fill=red!2, thick, minimum size=4mm}
]
%%%
\node[squarednode] (brick) at (0,0) {$x_6 \neq 0$};

\node[squarednode] (u)  at (3,0) {$x_5 = 0$};
\node[roundnode] (d)  at (3,-1) {$\stackrel{\tiny{\rm No \, solutions}}{x_5 \neq 0}$};

\node[roundnode] (uu)  at (6,0) {$\stackrel{\tiny{\rm No \, solutions}}{- x_3 + x_4 \neq 0}$};
\node[squarednode] (ud)  at (6,-1) {$- x_3 + x_4 = 0$};

\node[squarednode] (udu)  at (9,-1) {$ x_1 x_6 + x_4^2 = 0$};
\node[squarednode] (udd)  at (9,-2) {$ x_1 x_6 + x_4^2 \neq 0$};

%%%
\node[squarednode] (XII) at (11,-1) {VIII$^{\rm ext}$};
\node[squarednode] (XIII) at (11,-2) {${\rm IX}^{{\rm ext}}$};

%%%
\draw[->] (brick.east) -- (u.west);
\draw[->] (brick.east) -- (d.west);

\draw[->] (u.east) -- (uu.west);
\draw[->] (u.east) -- (ud.west);

\draw[->] (ud.east) -- (udu.west);
\draw[->] (ud.east) -- (udd.west);

\end{tikzpicture}
}
\end{center}
The connected parts of the subsets denoted in the above Darboux tree are the orbits of ${\rm Aut}_c(\mathfrak{s}_6)$. The results are summarised in Table \ref{Tab:g_orb_2}.

The Lie algebra automorphism group of $\mathfrak{s}_6$ reads
$$
{\rm Aut}(\mathfrak{s}_6) = \left\{
\left(
\begin{array}{cccc}
-T^3_2 T^2_3 & T^4_2 T_3^2 & T^4_3 T^3_2 & T^4_1 \\
0 & 0 & T^3_2 & T^4_2 \\
0 & T^2_3 & 0 & T^4_3 \\
0 & 0 & 0 & -1
\end{array}
\right),\left(
\begin{array}{cccc}
T^2_2 T^3_3 & -T^2_2 T^4_3 & -T^4_2 T^3_3 & T^4_1 \\
0 & T^2_2 & 0 & T^4_2 \\
0 & 0 & T^3_3 & T^4_3 \\
0 & 0 & 0 & 1
\end{array}
\right):
\begin{array}{c}
 T^2_3,T^2_2 \in \rz, \\
 T^3_2,T^3_3 \in \rz, \\
 T^4_1, T^4_2, T^4_3 \in \rr
\end{array}
\right\}.
$$

One representative element of each connected component of ${\rm Aut}(\mathfrak{s}_6)$ and their lifts to $\Lambda^2\mathfrak{s}_6$ are given by, respectively,
\begin{equation*}
T_{\lambda_1, \lambda_2} := \left(
\begin{array}{cccc}
\mp\lambda_1 \lambda_2 & 0 & 0 & 0 \\
0 & \theta(\mp 1)\lambda_1 & \theta(\pm 1)\lambda_2 & 0 \\
0 & \theta(\pm 1)\lambda_1 & \theta(\mp 1)\lambda_2 & 0 \\
0 & 0 & 0 & \mp 1
\end{array}
\right),
\end{equation*}
and
\begin{equation*}
\Lambda^2 T_{\lambda_1, \lambda_2} := {\small
\left(
\begin{array}{cccccc}
\theta(\mp 1)\lambda_2 & -\theta(\pm 1)\lambda_1  & 0 & 0 &0&0 \\
 -\theta(\pm 1) \lambda_2 & \theta(\mp 1)\lambda_1 & 0 & 0 &0&0 \\
0 & 0 & \lambda_1 \lambda_2 & 0 &0&0 \\
0 & 0 & 0 & \mp \lambda_1\lambda_2&0&0 \\
0 & 0 & 0 & 0&\theta(\mp 1)\lambda_1&-\theta(\pm 1)\lambda_2 \\
0 & 0 & 0 & 0&-\theta(\pm 1)\lambda_1 &\theta(\mp 1)\lambda_2 \\
\end{array}
\right)},
\end{equation*}
where $\theta(x)$ stands for the Heaviside function and $\lambda_1,\lambda_2\in \{1,-1\}$. 

Using the maps $\Lambda^2 T_{\lambda_1, \lambda_2}$, we can verify whether some of the orbits of the action of ${\rm Aut}_c(\mathfrak{s}_6)$ on $\mathcal{Y}_{\mathfrak{s}_6}$ can be identified, giving rise to the classes of equivalent $r$-matrices up to the action of the full Lie group ${\rm Aut} (\mathfrak{s}_6)$. Our results are presented in Table \ref{Tab:g_orb_2}.

Since $(\Lambda^2\mathfrak{s}_6)^{\mathfrak{s}_6}=0$, each class of equivalent $r$-matrices gives rise to a separate class of coboundary Lie bialgebras on $\mathfrak{s}_6$. 

\subsection{Lie algebra $\mathfrak{s}_{7}$}\label{Sec:s7}

The structure constants of Lie algebra $\mathfrak{s}_7$ are given in Table \ref{Tab:StruCons}. Using this information, one verifies that $(\Lambda^2\mathfrak{s}_7)^{\mathfrak{s}_7}=0$ and $(\Lambda^3\mathfrak{s}_7)^{\mathfrak{s}_7}=\langle e_{123}\rangle$.

By Remark \ref{Re:DerAlg}, one obtains that
$$
\mathfrak{der}(\mathfrak{s}_7):=\left\{\left(
\begin{array}{cccc}
\mu_{11} & \mu_{12} & \mu_{13} & \mu_{14} \\
0 & \frac{1}{2}\mu_{11} & \mu_{23} & \mu_{12} \\
0 & -\mu_{23} & \frac{1}{2}\mu_{11} & \mu_{13} \\
0 & 0 & 0 & 0
\end{array}
\right):  \mu_{11}, \mu_{12}, \mu_{13}, \mu_{14}, \mu_{23} \in \mathbb{R}\right\},
$$
which give rise to the basis of $V_{\mathfrak{s}_7}$ of the form
\begin{align*}
&X_1 = \frac{3}{2} x_1 \partial_{x_1} + \frac{3}{2} x_2 \partial_{x_2} + x_3 \partial_{x_3} + x_4 \partial_{x_4} + \frac{1}{2} x_5 \partial_{x_5} + \frac{1}{2} x_6 \partial_{x_6}, &
&X_2 = x_3 \partial_{x_1} + x_4 \partial_{x_2} + x_5 \partial_{x_3} - x_6 \partial_{x_4}, \\
&X_3 = - x_4 \partial_{x_1} + x_3 \partial_{x_2} + x_6 \partial_{x_3} + x_5 \partial_{x_4}, &
&X_4 = x_2 \partial_{x_1} - x_1 \partial_{x_2} + x_6 \partial_{x_5} - x_5 \partial_{x_6} \\
&X_5 = -x_5 \partial_{x_1} - x_6 \partial_{x_2}.
\end{align*}

Then,
$$
[r,r] = 2 (x_4^2 + x_1 x_5 + x_2 x_6) e_{123} + 2 (x_4 x_5 + x_3 x_6) e_{124} + 2 (x_4 x_6 - x_3 x_5) e_{134} - 2 (x_5^2 + x_6^2) e_{234}.
$$
Since $(\Lambda^3\mathfrak{s}_7)^{\mathfrak{s}_7}=\langle e_{123}\rangle$, then the mCYBE reads $x_5=0, x_6=0$. Meanwhile, the CYBE takes the form $x_4=0, x_5=0, x_6=0$.

The Darboux tree is presented below.

\begin{center}
{\small
\begin{tikzpicture}[
roundnode/.style={rounded rectangle, draw=green!40, fill=green!3, very thick, minimum size=2mm},
squarednode/.style={rectangle, draw=red!30, fill=red!2, thick, minimum size=4mm}
]
%%%
\node[squarednode] (brick) at (0,0) {$x_6=0$};

\node[squarednode] (u)  at (2,0) {$x_5=0$};
\node[roundnode] (d)  at (2,-4) {$\stackrel{\tiny{\rm No \, solutions}}{x_5 \neq 0}$};

\node[squarednode] (uu)  at (4,0) {$x_4=0$};
\node[squarednode] (ud)  at (4,-3) {$x_4 \neq 0$};

\node[squarednode] (uuu)  at (6,0) {$x_3=0$};
\node[squarednode] (uud)  at (6,-2) {$x_3 \neq 0$};
\node[squarednode] (udu)  at (6,-3) {$x_3 = 0$};
\node[squarednode] (udd)  at (6,-4) {$x_4 - k x_3 = 0$};

\node[squarednode] (uuuu)  at (8,0) {$x_1^2 + x_2^2 = 0$};
\node[squarednode] (uuud)  at (8,-1) {$x_1^2 + x_2^2 \neq 0$};

%%%
\node[squarednode] (0) at (12,0) {0};
\node[squarednode] (I) at (12,-1) {I};
\node[squarednode] (II) at (12,-2) {II};
\node[squarednode] (III) at (12,-3) {III};
\node[squarednode] (IV) at (12,-4) {${\rm IV}_{k\neq 0}$};

%%%
\draw[->] (brick.east) -- (u.west);
\draw[->] (brick.east) -- (d.west);

\draw[->] (u.east) -- (uu.west);
\draw[->] (u.east) -- (ud.west);

\draw[->] (uu.east) -- (uuu.west);
\draw[->] (uu.east) -- (uud.west);
\draw[->] (ud.east) -- (udu.west);
\draw[->] (ud.east) -- (udd.west);

\draw[->] (uuu.east) -- (uuuu.west);
\draw[->] (uuu.east) -- (uuud.west);

\end{tikzpicture}
}
\end{center}

The connected parts of the subsets denoted in the above Darboux tree are the orbits of ${\rm Aut}_c(\mathfrak{s}_{7})$ in $\mathcal{Y}_{\mathfrak{s}_7}$. 

The Lie algebra automorphism group of $\mathfrak{s}_7$ reads
\begin{equation*}
\begin{split}
{\rm Aut}(\mathfrak{s}_7) &= \left\{
\left(
\begingroup
\setlength\arraycolsep{4.5pt}
\begin{array}{cccc}
\pm [(T^2_2)^2 + (T^2_3)^2] & \pm T^4_2 T^2_2 - T^4_3 T^2_3 & T^4_3 T^2_2 \pm  T^4_2 T^2_3 & T^4_1 \\
0 & T^2_2 & T^2_3 & T^4_2 \\
0 &\mp T^2_3 & \pm T^2_2 & T^4_3 \\
0 & 0 & 0 & \pm 1
\end{array}
\endgroup
\right):
\begingroup
\setlength\arraycolsep{4.5pt}
\begin{array}{c}
(T^2_2)^2+(T_3^2)^2 \in \mathbb{R} \backslash \{ 0\} \\
T^2_2,T^2_3,T^4_1, T^4_2, T^4_3 \in \rr
\end{array}
\endgroup
\right\}. 
\end{split}
\end{equation*}
Since the two subsets of $GL(2,\mathbb{R})$ of matrices of the form
$$
\left(\begin{array}{cc}T^2_2&T^2_3\\\mp T^2_3&\pm T^2_2\end{array}\right)\in GL(2,\mathbb{R}),\qquad (T^2_2)^2+(T^2_3)^2>0,$$
can be parametrised via $\phi\in [0,2\pi[$ and $\mu:=[(T^2_2)^2+(T^2_3)^2]^{1/2}\in \mathbb{R}_+$ by setting $T^2_2=\mu \cos\phi$ and $T^2_3=\mu \sin \phi$, one gets that ${\rm Aut}(\mathfrak{s}_7)$ has two connected components. As usual, we only need one element for every connected component of ${\rm Aut}(\mathfrak{s}_7)$ and their lifts to $\Lambda^2\mathfrak{s}_7$, namely
$$
T_{\pm}:=\left(\begin{array}{cccc}
\pm 1& 0 & 0&0 \\
0 & 1 & 0 & 0 \\
0 &0 & \pm 1 &0 \\
0 & 0 & 0 & \pm 1
\end{array}\right), \qquad \Lambda^2T_{\pm}=\left(
\begin{array}{cccccc}
\pm 1 & 0 & 0 & 0 &0&0 \\
0 & 1 & 0 & 0 &0&0 \\
0 & 0 & 1 & 0 &0&0 \\
0 & 0 & 0 & \pm1&0&0 \\
0 & 0 & 0 & 0&\pm 1&0 \\
0 & 0 & 0 & 0&0&1 \\
\end{array}
\right).
$$
Using $\Lambda^2T_{\pm}$, we can verify whether some of these parts are additionally connected by a Lie algebra automorphism of $\mathfrak{s}_{7}$. The results are summarised in Table \ref{Tab:g_orb_2}. 

Since $(\Lambda^2\mathfrak{s}_7)^{\mathfrak{s}_7}=0$, we obtain that each class of equivalent coboundary Lie bialgebras is given by the separate class of equivalent $r$-matrices, listed in Table \ref{Tab:g_orb_2}.

\subsection{Lie algebra $\mathfrak{s}_{8}$}

The structure constants for the Lie algebra $\mathfrak{s}_{8}$ are given in Table \ref{Tab:StruCons}. Recall that $\alpha\in ]-1,1]\backslash\{0\}$. As in the previous cases, one can obtain by direct computation that $(\Lambda^2 \mathfrak{s}_8)^{\mathfrak{s}_8} = \langle e_{13}\rangle$ for $\alpha = -\frac{1}{2}$ and $(\Lambda^2 \mathfrak{s}_8)^{\mathfrak{s}_8} = 0$,  otherwise. Moreover, $(\Lambda^3\mathfrak{s}_8)^{\mathfrak{s}_8}=0$.

By Remark \ref{Re:DerAlg}, one obtains that the derivations of $\mathfrak{s}^\alpha_8$, for a fixed value $\alpha\in ]-1,1[\backslash\{0\}$ read
\begin{equation}\label{Eq:Der8a}
\mathfrak{der}(\mathfrak{s}^\alpha_8)=\left\{\left(
\begin{array}{cccc}
\mu_{11} & \mu_{12} & \mu_{13} & \mu_{14} \\
0 & \mu_{22} & 0 & -\mu_{13} \\
0 & 0 & \mu_{11} - \mu_{22} & \alpha \mu_{12} \\
0 & 0 & 0 & 0
\end{array}
\right): \mu_{11}, \mu_{12}, \mu_{13}, \mu_{14}, \mu_{22} \in \mathbb{R}\right\}.
\end{equation}
In case $\alpha=1$, one gets
\begin{equation}\label{Eq:Der8a1}
\mathfrak{der}(\mathfrak{s}^1_8)=\left\{\left(
\begin{array}{cccc}
\mu_{11} & \mu_{12} & \mu_{13} & \mu_{14} \\
0 & \mu_{22} & \mu_{23} & -\mu_{13} \\
0 & \mu_{32} & \mu_{11} - \mu_{22} & \mu_{12} \\
0 & 0 & 0 & 0
\end{array}
\right): \mu_{11}, \mu_{12}, \mu_{13}, \mu_{14}, \mu_{22},\mu_{23},\mu_{32} \in \mathbb{R}\right\}.
\end{equation}
For any value of $\alpha\in ]-1,1]\backslash\{0\}$ and an element $r \in \Lambda^2\mathfrak{s} _{8}$, one obtains
$$
[r,r]=2[(2+ \alpha)x_1x_6 - (1 + 2\alpha)x_2x_5 + (1+\alpha)x_3x_4 + x_4^2]e_{123} + 2(x_4-\alpha x_3)x_5e_{124} + 2x_6(x_4-x_3)e_{134} + 2(\alpha - 1)x_5x_6e_{234}.
$$ 
Since $ (\Lambda^3\mathfrak{s}_{8})^{\mathfrak{s}_{8}}=0$, the mCYBE reads
$$
(2+ \alpha)x_1x_6 - (1 + 2\alpha)x_2x_5 + (1+\alpha)x_3x_4 + x_4^2= 0, \quad (x_4-\alpha x_3)x_5= 0, \quad x_6(x_4-x_3)= 0, \quad (\alpha - 1)x_5x_6 = 0.
$$
Let us now consider two cases given by $\alpha\in ]-1,1[\backslash\{0\}$ and $\alpha=1$. In the first case, the space of derivations gives rise to the basis of $V_{\mathfrak{s}^\alpha_8}$ of the form
\begin{align}\label{Eq:Vs8a}
&X_1=	x_1\partial_{x_1}+2x_2\partial_{x_2}+x_3\partial_{x_3}+x_4\partial_{x_4}+x_6\partial_{x_6}, &
&X_2=(x_3\alpha+x_4)\partial_{x_2}+x_5\partial_{x_3}+\alpha x_5\partial_{x_4}, \nonumber \\ 
&X_3=-(x_3+x_4)\partial_{x_1}+x_6\partial_{x_3}+x_6\partial_{x_4}, &
&X_4=x_1\partial_{x_1}-x_2\partial_{x_2}+x_5\partial_{x_5}-x_6\partial_{x_6},\\
&X_5=-x_5\partial_{x_1}-x_6\partial_{x_2} \nonumber.
\end{align}

	The following diagram depicts the Darboux families that give us the orbits of ${\rm Aut}_c(\mathfrak{s}^\alpha_{8})$ on $\mathcal{Y}_{\mathfrak{s}^\alpha_{8}}$. Since $x_5, x_6$ are the bricks for every $\mathfrak{s}^\alpha_8$, our Darboux tree starts with the cases $x_i = 0$ and $x_i \neq 0$, for $i \in \{5,6\}$. The full Darboux tree is presented below.
\begin{center}
{\small
	\begin{tikzpicture}[
roundnode/.style={rounded rectangle, draw=green!40, fill=green!3, very thick, minimum size=2mm},
squarednode/.style={rectangle, draw=red!30, fill=red!2, thick, minimum size=4mm}
]
	%%%
	\node[squarednode] (brick) at (0,0) {$x_6=0$};

	\node[squarednode] (65)  at (2,0) {$x_5=0$};
	\node[squarednode] (65u)  at (2,-7) {$x_5\neq 0$};

	\node[squarednode]   (654)  at (4,0) {$x_4=0$};
	\node[squarednode]   (654u) at (4,-5)  {$x_4\neq 0$};
	\node[squarednode]   (65ue) at (4,-7)  {$x_4-\alpha x_3=0$};
	\node[roundnode]   (65ueu) at (4,-8)  {$\stackrel{\rm No\,\,solutions}{x_4-\alpha x_3\neq 0}$};

	\node[squarednode]   (6543) at (7,0)  {$x_3=0$};
	\node[squarednode]   (6543u) at (7,-4)  {$x_3\neq 0$};
	\node[squarednode]   (654ufu) at (7,-5)  {$\stackrel{(x_3\neq 0)}{x_3+x_4\neq 0}$};
	\node[roundnode]   (654uf) at (7,-6)  {$\stackrel{{\rm No\,\, solutions}}{x_3+x_4=0}$};
	\node[squarednode]   (65uey) at (7,-7)  {$\alpha x_3^2-x_2x_5=0$};
	\node[squarednode]   (65ueyu) at (7,-8)  {$\stackrel{(\alpha=-1/2)}{\alpha x_3^2-x_2x_5\neq 0}$};

	\node[squarednode]   (65432) at (10,0)  {$x_2=0$};
	\node[squarednode]   (65432u) at (10,-2)  {$x_2\neq 0$};
	\node[squarednode]   (654ufug) at (10,-5)  {$(1+\alpha)x_3+x_4=0$};
	\node[roundnode]   (654ufugu) at (10,-6)  {$\stackrel{{\rm No\,\,solutions}}{(1+\alpha)x_3+x_4\neq 0}$};

	\node[squarednode]   (654321) at (13,0)  {$x_1=0$};
	\node[squarednode]   (654321u) at (13,-1)  {$x_1\neq 0$};
	\node[squarednode]   (65432u1) at (13,-2)  {$x_1=0$};
	\node[squarednode]   (65432u1u) at (13,-3)  {$x_1\neq 0$};

%%%
	\node[squarednode] (0) at (14,0) {0};
	\node[squarednode] (I) at (14,-1) {I};
	\node[squarednode] (II) at (14,-2) {II};
	\node[squarednode] (III) at (14,-3) {III};
	\node[squarednode] (IV) at (14,-4) {IV};
	\node[squarednode] (V) at (14,-5) {V};
	\node[squarednode] (VI) at (14,-7) {VI};
	\node[squarednode] (VIa) at (14,-8) {VII$^{\alpha=-1/2}$};
	
%%%
	\draw[->] (brick.east) -- (65.west);
	\draw[->] (brick.east) -- (65u.west);
	\draw[->] (65.east) -- (654.west);
	\draw[->] (65.east) -- (654u.west);
	\draw[->] (65u.east) -- (65ue.west);
	\draw[->] (65u.east) -- (65ueu.west);
	\draw[->] (65ue.east) -- (65uey.west);
	\draw[->] (65ue.east) -- (65ueyu.west);
	\draw[->] (6543.east)--(65432.west);
	\draw[->] (654.east) -- (6543.west);
\draw[->] (654.east) -- (6543u.west);
	\draw[->] (654u.east) -- (654uf.west);
\draw[->] (654u.east) -- (654ufu.west);
	\draw[->] (654ufu.east) -- (654ufug.west);
	\draw[->] (654ufu.east) -- (654ufugu.west);
	\draw[->] (6543.east) -- (65432u.west);
\draw[->] (65432.east) -- (654321.west);
\draw[->] (65432.east) -- (654321u.west);
\draw[->] (65432u.east) -- (65432u1.west);
\draw[->] (65432u.east) -- (65432u1u.west);
\end{tikzpicture}
	
\vspace{1cm}

		\begin{tikzpicture}[
roundnode/.style={rounded rectangle, draw=green!40, fill=green!3, very thick, minimum size=2mm},
squarednode/.style={rectangle, draw=red!30, fill=red!2, thick, minimum size=4mm}
]
	%%%
	\node[squarednode] (6u) at (0,0) {$x_6\neq 0$};

	\node[squarednode] (6u5)  at (2,0)  {$x_5=0$};
	\node[squarednode] (6u5u)  at (2,-2) {$\stackrel{(\alpha=1)}{x_5\neq 0}$};

	\node[squarednode]   (6u5e)  at (4,0) {$x_3-x_4=0$};
	\node[roundnode]   (6u5eu) at (4,-1)  {$\stackrel{{\rm No\,\, solutions}}{x_3-x_4\neq 0}$};
	\node[squarednode]   (6u5ue) at (4,-2)  {$x_4-x_3=0$};
	\node[roundnode]   (6u5ueu) at (4,-3)  {$\stackrel{\rm No\,\,solutions}{x_4-x_3\neq 0}$};

	\node[roundnode]   (6u5efu) at (8,0)  {$\stackrel{{\rm No\,\, solutions}}{x_3^2+x_1x_6\neq 0}$};
	\node[squarednode]   (6u5ef) at (8,-1)  {$x_3^2+x_1x_6=0$};
	\node[squarednode]   (6u5ueg) at (8,-2)  {$x_1x_6-x_2x_5+x_3^2=0$};
	\node[roundnode]   (6u5uegu) at (8,-3)  {$\stackrel{{\rm No\,\,solutions}}{x_1x_6-x_2x_5+x_3^2\neq0}$};

%%%
	\node[squarednode] (VII) at (13,-1) {VIII};
	\node[squarednode] (VIII) at (13,-2) {IX$_{\alpha=1}$};

	%%%
	\draw[->] (6u.east) -- (6u5.west);
	\draw[->] (6u.east) -- (6u5u.west);
	\draw[->] (6u5.east) -- (6u5e.west);
	\draw[->] (6u5.east) -- (6u5eu.west);
	\draw[->] (6u5e.east) -- (6u5ef.west);
	\draw[->] (6u5e.east) -- (6u5efu.west);
	\draw[->] (6u5u.east)--(6u5ue.west);
	\draw[->] (6u5u.east) -- (6u5ueu.west);
	\draw[->] (6u5ue.east) -- (6u5ueg.west);
	\draw[->] (6u5ue.east) -- (6u5uegu.west);

	\end{tikzpicture}
	}
\end{center}

The Lie algebra automorphism group of $\mathfrak{s}^\alpha_8$ for each $\alpha \in ]-1,1[\backslash\{0\}$, reads
$$
{\rm Aut}(\mathfrak{s}^\alpha_8) = \left\{
\left(
\begin{array}{cccc}
T^2_2 T^3_3 & T^2_1 & T^3_1 & T^4_1 \\
0 & T^2_2 & 0 & -\frac{T^3_1}{T^3_3} \\
0 & 0 & T^3_3 & \frac{\alpha T^2_1}{T^2_2} \\
0 & 0 & 0 & 1
\end{array}
\right):
\begin{array}{c}
 T^2_2, \, T^3_3 \in \rz \\
T^2_1, T^3_1, T^4_1 \in \rr
\end{array}
\right\}.
$$

In reality, we are only concerned with obtaining one element of ${\rm Aut}(\mathfrak{s}^\alpha_8)$ and its lift to $\Lambda^2\mathfrak{s}_8$ for each one of its connected components. For instance, we can choose
\begin{equation*}
T_{\lambda_1, \lambda_2} = \left(
\begin{array}{cccc}
\lambda_1 \lambda_2 & 0 & 0 & 0 \\
0 & \lambda_1 & 0 & 0 \\
0 & 0 & \lambda_2 & 0 \\
0 & 0 & 0 & 1
\end{array}
\right), \quad 
\Lambda^2T_{\lambda_1,\lambda_2}=\left(
\begin{array}{cccccc}
\lambda_2 & 0 & 0 & 0 & 0 & 0 \\
0 & \lambda_1 & 0 & 0 & 0 & 0 \\
0 & 0 & \lambda_1 \lambda_2 & 0 & 0 & 0 \\
0 & 0 & 0 & \lambda_1 \lambda_2 & 0 & 0 \\
0 & 0 & 0 & 0 & \lambda_1 & 0 \\
0 & 0 & 0 & 0 & 0 & \lambda_2 
\end{array}
\right),  \quad \lambda_1, \lambda_2 \in \{\pm 1\}.
\end{equation*}

As in previous sections, the maps $\Lambda^2T_{\lambda_1,\lambda_2}$ allow us to identify the orbits of the action of ${\rm Aut} (\mathfrak{s}^\alpha_8)$ on $\mathcal{Y}_{\mathfrak{s}^\alpha_8}$. Our results are presented in Table \ref{Tab:g_orb_2}. Moreover, for $\alpha \neq -\frac{1}{2}$, each family of equivalent $r$-matrices give rise to a separate class of equivalent coboundary Lie bialgebras. For $\alpha = -\frac{1}{2}$, we obtain eight families of equivalent coboundary  Lie bialgebras given by:
 $$
r_1 = e_{12}, \qquad r_2 = e_{14}, \qquad r_3 = e_{14} - \frac{1}{2} e_{23}, \qquad r_{4, \pm} = \pm e_{24}, \qquad r_5 = e_{34}.
 $$
Let us know study the case $\alpha=1$ using the previous results. Since (\ref{Eq:Der8a}) for $\alpha=1$ is a Lie subalgebra of the space of derivations for $\mathfrak{s}_8^\alpha$ given in (\ref{Eq:Der8a1}), the Lie algebra spanned by the vector fields of (\ref{Eq:Vs8a}) for $\alpha=1$ is a Lie subalgebra of the Lie algebra of fundamental vector fields of the action of ${\rm Aut}(\mathfrak{s}_8^1)$ on $\Lambda^2\mathfrak{s}_8$. Hence, the loci of the above Darboux families allow us to characterise the strata of the distribution spanned by the vector fields (\ref{Eq:Vs8a}) for $\alpha=1$, which in turn gives as the orbits of a Lie subgroup of ${\rm Aut}(\mathfrak{s}_8^1)$. To easily follow our discussion, we detail that 
$$
{\rm Aut}(\mathfrak{s}^\alpha_8) = \left\{
\left(
\begin{array}{cccc}
T^2_2T^3_3-T_2^3T^2_3 & -T_2^4T_3^2+T^2_2T_3^4 & -T_2^4T^3_3+T_2^3T_3^4 & T^4_1 \\
0 & T^2_2 & T_2^3 & T_2^4 \\
0 & T_3^2 & T^3_3 & T_3^4\\
0 & 0 & 0 & 1
\end{array}
\right):
\begin{array}{c}
 T^2_2T^3_3-T_3^2T^3_2\neq 0, \\ 
 T_1^4, T_2^4, T_3^4\in \rr
\end{array}
\right\}.
$$

To obtain the orbits of ${\rm Aut}(\mathfrak{s}_8^1)$ on $\mathcal{Y}_{\mathfrak{s}_8^1}$, it is enough to write ${\rm Aut}(\mathfrak{s}_8^1)$ as a composition of the previous subgroup with certain Lie algebra automorphisms of ${\rm Aut}(\mathfrak{s}_8^1)$, e.g. the Lie algebra automorphisms $T_A=(\det A){\rm Id}\otimes A\otimes {\rm Id}$ for every $A\in GL(2,\mathbb{R})$. 
Hence, $\Lambda^2T_A=(\det A)A\otimes (\det A){\rm Id}\otimes (\det A){\rm Id}$.

The action of the  $\Lambda^2T_A$ on the loci of the Darboux families of (\ref{Eq:Vs8a}) assuming  $\alpha=1$ allows us to obtain the orbits of the action of ${\rm Aut}(\mathfrak{s}_8^1)$ on $\mathcal{Y}_{\mathfrak{s}_8^1}$. In particular, one obtains the subsets given in Table \ref{Tab:g_orb_2}. 
Since $(\Lambda^2\mathfrak{s}^1_8)^{\mathfrak{s}^1_8}=0$, the classes of equivalent coboundary coproducts for $\mathfrak{s}_8^1$ are given separately by each representative element listed in Tabee \ref{Tab:g_orb_2}.

\subsection{Lie algebra $\mathfrak{s}_{9}$}

As in previous cases, we use the structure constants for $\mathfrak{s}_9$ in Table \ref{Tab:StruCons} to verify that $(\Lambda^2 \mathfrak{s}_9)^{\mathfrak{s}_9} = 0$ and $(\Lambda^3\mathfrak{s}_9)^{\mathfrak{s}_9}=0$ for every $\alpha>0$. 

By Remark \ref{Re:DerAlg}, one obtains that
$$
\mathfrak{der}(\mathfrak{s}^\alpha_9):=\left\{\left(
\begin{array}{cccc}
2 \mu_{22} & \mu_{12} & \mu_{13} & \mu_{14} \\
0 & \mu_{22} & \mu_{23} & \mu_{12} - \alpha \mu_{13} \\
0 & -\mu_{23} & \mu_{22} & \alpha \mu_{12} + \mu_{13} \\
0 & 0 & 0 & 0
\end{array}
\right): \mu_{12}, \mu_{13}, \mu_{14}, \mu_{22}, \mu_{23} \in \mathbb{R}\right\}.
$$
The obtained derivations give rise to the basis of $V_{\mathfrak{s}^\alpha_9}$ of the form
\begin{align*}
&X_1 = x_3 \partial_{x_1} + (\alpha x_3 + x_4) \partial_{x_2} + x_5 \partial_{x_3} + (\alpha x_5 - x_6) \partial_{x_4}, &
&X_2 = -x_5 \partial_{x_1} - x_6 \partial_{x_2}, \\
&X_3 = (- \alpha x_3 - x_4) \partial_{x_1} + x_3 \partial_{x_2} + x_6 \partial_{x_3} + (x_5 + \alpha x_6) \partial_{x_4}, &
&X_5 = x_2 \partial_{x_1} - x_1 \partial_{x_2} + x_6 \partial_{x_5} - x_5 \partial_{x_6}, \\
&X_4 = 3x_1 \partial_{x_1} + 3x_2 \partial_{x_2} + 2x_3 \partial_{x_3} + 2x_4 \partial_{x_4} + x_5 \partial_{x_5} + x_6 \partial_{x_6}.
\end{align*}

For an element $r \in \Lambda^2 \mathfrak{s}_9$, we get
\begin{equation*}
\begin{split}
[r,r] &= 2(x_1 x_5 + 3\alpha x_1 x_6 - 3\alpha x_2 x_5 + x_2 x_6 + 2\alpha x_3 x_4 + x_4^2)e_{123} + 2(-\alpha x_3 x_5 + x_3 x_6 + x_4 x_5)e_{124} \\
&+ 2(x_4 x_6-x_3 x_5 - \alpha x_3 x_6 )e_{134} - 2(x_5^2 + x_6^2)e_{234}.
\end{split}
\end{equation*}

Since $(\Lambda^3 \mathfrak{s}_9)^{\mathfrak{s}_9} = 0$ for every value of $\alpha$, the mCYBE and the CYBE are equal and read
$$
(2\alpha x_3 + x_4) x_4 = 0, \quad x_5 = 0, \quad x_6 = 0.
$$

The Darboux tree for the class $\mathfrak{s}_9$ is presented below.

\begin{center}
{\small
\begin{tikzpicture}[
roundnode/.style={rounded rectangle, draw=green!40, fill=green!3, very thick, minimum size=2mm},
squarednode/.style={rectangle, draw=red!30, fill=red!2, thick, minimum size=4mm}
]
	%%%
	\node[squarednode] (brick) at (0,0) {$x_6 = 0$};

	\node[squarednode] (u)  at (2,0) {$x_5=0$};
	\node[roundnode] (d)  at (2,-3) {$\stackrel{{\rm No \, solutions}}{x_5 \neq 0}$};

	\node[squarednode] (uu)  at (4,0) {$x_3=0$};
	\node[squarednode] (ud)  at (4,-2) {$x_3 \neq 0$};

	\node[squarednode] (uuu)  at (6,0) {$x_4=0$};
	\node[roundnode] (uud)  at (6,-1) {$\stackrel{{\rm No \, solutions}}{x_4 \neq 0}$};
	\node[squarednode] (udu)  at (6,-2) {$x_4=0$};
	\node[squarednode] (udd)  at (6,-3) {$x_4 \neq 0$};

	\node[squarednode] (uuuu)  at (9,0) {$x_1^2 + x_2^2=0$};
	\node[squarednode] (uuud)  at (9,-1) {$x_1^2 + x_2^2 \neq 0$};
	\node[squarednode] (uddu)  at (9,-3) {$x_4 + 2\alpha x_3  = 0$};
    \node[roundnode] (uddd)  at (9,-4) {$\stackrel{{\rm No \, solutions}}{x_4 + 2\alpha x_3  \neq 0}$};

%%%
	\node[squarednode] (0) at (12,0) {0};
	\node[squarednode] (I) at (12,-1) {I};
	\node[squarednode] (II) at (12,-2) {II};
	\node[squarednode] (I) at (12,-3) {III};
	
%%%
	\draw[->] (brick.east) -- (u.west);
	\draw[->] (brick.east) -- (d.west);

	\draw[->] (u.east) -- (uu.west);
	\draw[->] (u.east) -- (ud.west);

	\draw[->] (uu.east) -- (uuu.west);
	\draw[->] (uu.east) -- (uud.west);
	\draw[->] (ud.east) -- (udu.west);
	\draw[->] (ud.east) -- (udd.west);

	\draw[->] (uuu.east) -- (uuuu.west);
	\draw[->] (uuu.east) -- (uuud.west);
	\draw[->] (udd.east) -- (uddu.west);
	\draw[->] (udd.east) -- (uddd.west);

\end{tikzpicture}
}
\end{center}

If we define $\Delta:=(T^2_2)^2 + (T^2_3)^2$, the Lie algebra automorphisms group of each $\mathfrak{s}^\alpha_9$ reads
$$
{\rm Aut}(\mathfrak{s}^\alpha_9) = \left\{
\left(
\begin{array}{cccc}
\Delta & \frac{T^4_2( T^2_2 + \alpha T^2_3) + T^4_3(\alpha T^2_2 -  T^2_3)}{1+\alpha^2} & \frac{(-\alpha T^2_2 + T^2_3) T^4_2 + T^4_3 (T^2_2 + \alpha  T^2_3)}{1+\alpha^2} & T^4_1 \\
0 & T^2_2 & T^2_3 & T^4_2 \\
0 & -T^2_3 & T^2_2 & T^4_3 \\
0 & 0 & 0 & 1
\end{array}
\right): 
\begin{array}{c}
\Delta\in \mathbb{R}_+, \\
T^2_2,T^2_3 \in \rr \\
T^4_1, T^4_2, T^4_3 \in \rr
\end{array}
\right\}.
$$

Using ideas from Section \ref{Sec:s5}, we obtain that each ${\rm Aut}(\mathfrak{s}^\alpha_9)$ has one connected component.  Consequently, the orbits of ${\rm Aut}(\mathfrak{s}^\alpha_9)$ on $\Lambda^2\mathfrak{s}^\alpha_{9}$ are the strata of $\mathscr{E}_{\mathfrak{s}^\alpha_9}$. Since $(\Lambda^2\mathfrak{s}_9)^{\mathfrak{s}_9}=0$ for every $\alpha>0$, the strata of  $\mathscr{E}_{\mathfrak{s}^\alpha_9}$ within $\mathcal{Y}_{\mathfrak{s}^\alpha_9}$ amount for the families of equivalent coboundary Lie bialgebras on $\mathfrak{s}^\alpha_9$. Our final results are summarised in Table \ref{Tab:g_orb_2}.

\subsection{Lie algebra $\mathfrak{s}_{10}$}

As in the previous cases, we use the structure constants for Lie algebra $\mathfrak{s}_{10}$ in Table \ref{Tab:StruCons} to show that $(\Lambda^3\mathfrak{s}_{10})^{\mathfrak{s}_{10}}=0$ and $(\Lambda^2\mathfrak{s}_{10})^{\mathfrak{s}_{10}}=0$.

By Remark \ref{Re:DerAlg}, the derivations of $\mathfrak{s}_{10}$ read
$$
\mathfrak{der}(\mathfrak{s}_{10})=\left\{\left(
\begin{array}{cccc}
\mu_{11} & \mu_{12} & \mu_{13} & \mu_{14} \\
0 & \frac{1}{2}\mu_{11} & \mu_{23} & \mu_{12} - \mu_{13} \\
0 & 0 & \frac{1}{2}\mu_{11} & \mu_{12} \\
0 & 0 & 0 & 0
\end{array}
\right): \mu_{11}, \mu_{12}, \mu_{13}, \mu_{14}, \mu_{23} \in \mathbb{R}\right\},
$$
which give rise to the basis of $V_{\mathfrak{s}_{10}}$ of the form
\begin{align*}
&X_1 = \frac{3}{2} x_1 \partial_{x_1} + \frac{3}{2} x_2 \partial_{x_2} + x_3 \partial_{x_3} + x_4 \partial_{x_4} + \frac{1}{2} x_5 \partial_{x_5} + \frac{1}{2} x_6 \partial_{x_6}, &
&X_2 = -x_5 \partial_{x_1} - x_6 \partial_{x_2}, \\
&X_3 = x_3 \partial_{x_1} + (x_3 + x_4) \partial_{x_2} + x_5 \partial_{x_3} + (x_5 - x_6) \partial_{x_4}, &
&X_4 = x_2 \partial_{x_1} + x_6 \partial_{x_5} \\
&X_5 = (-x_3-x_4) \partial_{x_1} + x_6 \partial_{x_3} + x_6 \partial_{x_4}.
\end{align*}

For an element $r \in \Lambda^2 \mathfrak{s}_{10}$, we get
$$
[r,r] = 2(3x_1 x_6 - 3x_2 x_5 + x_2 x_6 + 2x_3 x_4 + x_4^2)e_{123} + 2(-x_3 x_5 + x_3 x_6 + x_4 x_5)e_{124} + 2(x_4-x_3) x_6e_{134} -2x_6^2e_{234}.
$$

Since $(\Lambda^3 \mathfrak{s}_{10})^{\mathfrak{s}_{10}} = 0$, the mCYBE and the CYBE are equal and they read
$$
- 3x_2 x_5 + 2x_3 x_4 + x_4^2 = 0, \quad (x_4 - x_3) x_5 = 0, \quad x_6 = 0.
$$

Since $x_6$ is the only brick for $\mathfrak{s}_{10}$, our Darboux tree starts with the cases $x_6 = 0$ and $x_6 \neq 0$. The full Darboux tree is presented below.

\begin{center}
{\small
\begin{tikzpicture}[
roundnode/.style={rounded rectangle, draw=green!40, fill=green!3, very thick, minimum size=2mm},
squarednode/.style={rectangle, draw=red!30, fill=red!2, thick, minimum size=4mm}
]
	%%%
	\node[squarednode] (brick) at (0,0) {$x_6=0$};

	\node[squarednode] (u)  at (2,0) {$x_5=0$};
	\node[squarednode] (d)  at (2,-5) {$x_5 \neq 0$};

	\node[squarednode] (uu)  at (4,0) {$x_4=0$};
	\node[squarednode] (ud)  at (4,-4) {$x_4 \neq 0$};
	\node[roundnode] (du)  at (4,-5) {$\stackrel{{\rm No \, solutions}}{x_4 - x_3 \neq 0}$};
	\node[squarednode] (dd)  at (4,-6) {$x_4 -x_3 = 0$};

	\node[squarednode] (uuu)  at (6,0) {$x_3=0$};
	\node[squarednode] (uud)  at (6,-3) {$x_3 \neq 0$};
	\node[squarednode] (udu)  at (7,-4) {$x_3 + \frac{1}{2} x_4 = 0$};
	\node[roundnode] (udd)  at (7,-5) {$\stackrel{{\rm No \, solutions}}{x_3 + \frac{1}{2} x_4 \neq 0}$};
	\node[roundnode] (ddu)  at (7,-6) {$\stackrel{{\rm No \, solutions}}{x_3^2 - x_2 x_5  \neq 0}$};
	\node[squarednode] (ddd)  at (7,-7) {$x_3^2 - x_2 x_5 = 0$};
	
	\node[squarednode] (uuuu)  at (8,0) {$x_2=0$};
	\node[squarednode] (uuud)  at (8,-2) {$x_2 \neq 0$};

	\node[squarednode] (uuuuu)  at (10,0) {$x_1 = 0$};
	\node[squarednode] (uuuud)  at (10,-1) {$x_1 \neq 0$};
	
%%%
	\node[squarednode] (0) at (12,0) {0};
	\node[squarednode] (I) at (12,-1) {I};
	\node[squarednode] (II) at (12,-2) {II};
	\node[squarednode] (III) at (12,-3) {III};
	\node[squarednode] (IV) at (12,-4) {IV};
	\node[squarednode] (V) at (12,-7) {V};
	
%%%
	\draw[->] (brick.east) -- (u.west);
	\draw[->] (brick.east) -- (d.west);

	\draw[->] (u.east) -- (uu.west);
	\draw[->] (u.east) -- (ud.west);
	\draw[->] (d.east) -- (du.west);
	\draw[->] (d.east) -- (dd.west);

	\draw[->] (uu.east) -- (uuu.west);
	\draw[->] (uu.east) -- (uud.west);
	\draw[->] (ud.east) -- (udu.west);
	\draw[->] (ud.east) -- (udd.west);
	\draw[->] (dd.east) -- (ddu.west);
	\draw[->] (dd.east) -- (ddd.west);

	\draw[->] (uuu.east) -- (uuuu.west);
	\draw[->] (uuu.east) -- (uuud.west);
	
	\draw[->] (uuuu.east) -- (uuuuu.west);
	\draw[->] (uuuu.east) -- (uuuud.west);

\end{tikzpicture}
}
\end{center}
The connected parts of the subspaces denoted in the Darboux tree are the orbits of ${\rm Aut}_c(\mathfrak{s}_{10})$ within $\mathcal{Y}_{\mathfrak{s}_{10}}$. We proceed as in previous sections to obtain the orbits of ${\rm Aut}(\mathfrak{s}_{10})$ within $\mathcal{Y}_{\mathfrak{s}_{10}}$. The automorphism group of $\mathfrak{s}_{10}$ takes the form
$$
{\rm Aut}(\mathfrak{s}_{10}) = \left\{
\left(
\begin{array}{cccc}
(T^2_2)^2 & T^4_3 T^2_2 & T^4_3 T^2_2 + T^4_3 T^3_2 - T^4_2 T^2_2 & T^4_1 \\
0 & T^2_2 & T^3_2 & T^4_2 \\
0 & 0 & T^2_2 & T^4_3 \\
0 & 0 & 0 & 1
\end{array}
\right):
\begin{array}{c}
 T^2_2 \in \rz, \\
T^3_2, T^4_1, T^4_2, T^4_3 \in \rr
\end{array}
\right\}.
$$
It can be verified that there are two connected components of ${\rm Aut}(\mathfrak{s}_{10})$. These representative automorphisms of ${\rm Aut}(\mathfrak{s}_{10})$ and their extensions to $\Lambda^2\mathfrak{s}_{10}$ read
\begin{equation*}
T_\lambda:=\left(
\begin{array}{cccc}
1 & 0 & 0 & 0 \\
0 & \lambda & 0 & 0 \\
0 & 0 & \lambda & 0 \\
0 & 0 & 0 & 1
\end{array}
\right), \qquad \Lambda^2T_\lambda:={\small
\left(
\begin{array}{cccccc}
\lambda  & 0 & 0 & 0 &0&0 \\
0 & \lambda  & 0 & 0 &0&0 \\
0 & 0 & 1 & 0 &0&0 \\
0 & 0 & 0 & 1&0&0 \\
0 & 0 & 0 & 0 &\lambda&0 \\
0 & 0 & 0 & 0&0&\lambda \\
\end{array}
\right)},\quad \lambda \in \{\pm 1\}.
\end{equation*}
By applying $\Lambda^2T_\lambda$ to the orbits of ${\rm Aut}_c(\mathfrak{s}_{10})$ in $\mathcal{Y}_{\mathfrak{s}_{10}}$, we obtain the orbits of Aut($\mathfrak{s}_{10})$ in $\mathcal{Y}_{\mathfrak{s}_{10}}$, shown in Table \ref{Tab:g_orb_2}. Since $(\Lambda^2\mathfrak{s}_{10})^{\mathfrak{s}_{10}}=0$, each such orbit amounts to the separate class of equivalent coboundary Lie bialgebras up to Lie algebra automorphisms of $\mathfrak{s}_{10}$.

\subsection{Lie algebra $\mathfrak{s}_{11}$}

As previously, commutation relations in Table \ref{Tab:StruCons} allow to verify that $(\Lambda^2\mathfrak{s}_{11})^{\mathfrak{s}_{11}}=0$ and $(\Lambda^3\mathfrak{s}_{11})^{\mathfrak{s}_{11}}=0$.

By Remark \ref{Re:DerAlg}, one obtains 
$$
\mathfrak{der}(\mathfrak{s}_{11})=\left\{\left(
\begin{array}{cccc}
\mu_{11} & \mu_{12} & \mu_{13} & \mu_{14} \\
0 & \mu_{22} & 0 & -\mu_{13} \\
0 & 0 & \mu_{11} - \mu_{22} & 0 \\
0 & 0 & 0 & 0
\end{array}
\right): \mu_{11}, \mu_{12}, \mu_{13}, \mu_{14}, \mu_{22} \in \mathbb{R}\right\}.
$$
Such derivations give rise to the basis of $V_{\mathfrak{s}_{11}}$ of the form
\begin{align*}
&X_1 = x_1 \partial_{x_1} + 2x_2 \partial_{x_2} + x_3 \partial_{x_3} + x_4 \partial_{x_4}+ x_6 \partial_{x_6}, &
&X_2 = x_4 \partial_{x_2} + x_5 \partial_{x_3}, \\
&X_3 = (-x_3-x_4) \partial_{x_1} + x_6 \partial_{x_3} + x_6 \partial_{x_4}, &
&X_4 = -x_5 \partial_{x_1} - x_6 \partial_{x_2}, \\
&X_5 = x_1 \partial_{x_1} - x_2 \partial_{x_2} + x_5 \partial_{x_5} - x_6 \partial_{x_6}.
\end{align*}

For an element $r \in \Lambda^2 \mathfrak{s}_{11}$, we get
$$
[r,r] = 2(2x_1 x_6 - x_2 x_5 + x_3 x_4 + x_4^2)e_{123} + 2x_4 x_5e_{124} + 2(x_4-x_3) x_6 e_{134} -2x_6x_5e_{234}.
$$
Since $(\Lambda^3 \mathfrak{s}_{11})^{\mathfrak{s}_{11}} = \{0\}$, the mCYBE and the CYBE are equal and read
$$
2x_1 x_6 - x_2 x_5 + x_3 x_4 + x_4^2 = 0, \quad x_4 x_5 = 0, \quad (x_4 - x_3) x_6 = 0, \quad x_5 x_6 = 0.
$$

Since $x_5, x_6$ are the bricks for $\mathfrak{s}_{11}$, our Darboux tree starts with the cases $x_i = 0$ and $x_i \neq 0$, for $i \in \{5,6\}$. The Darboux tree is presented below.

\begin{center}
{\small 
\begin{tikzpicture}[
roundnode/.style={rounded rectangle, draw=green!40, fill=green!3, very thick, minimum size=2mm},
squarednode/.style={rectangle, draw=red!30, fill=red!2, thick, minimum size=4mm}
]
	%%%
	\node[squarednode] (brick) at (0,0) {$x_6 \neq 0$};

	\node[squarednode] (u)  at (2,0) {$x_5=0$};
	\node[roundnode] (d)  at (2,-1) {$\stackrel{{\rm No \, solutions}}{x_5 \neq 0}$};

	\node[squarednode] (uu)  at (5,0) {$x_4 - x_3 = 0$};
	\node[roundnode] (ud)  at (5,-1) {$\stackrel{{\rm No \, solutions}}{x_4 - x_3 \neq 0}$};

	\node[squarednode] (uuu)  at (8,0) {$x_3^2 + x_1 x_6 = 0$};
	\node[roundnode] (uud)  at (8,-1) {$\stackrel{{\rm No \, solutions}}{x_3^2 + x_1 x_6 \neq 0}$};
	
%%%
	\node[squarednode] (IX) at (14,0) {IX};
	
%%%
	\draw[->] (brick.east) -- (u.west);
	\draw[->] (brick.east) -- (d.west);

	\draw[->] (u.east) -- (uu.west);
	\draw[->] (u.east) -- (ud.west);

	\draw[->] (uu.east) -- (uuu.west);
	\draw[->] (uu.east) -- (uud.west);

\end{tikzpicture}}
\end{center}
\begin{center}
{\small
\begin{tikzpicture}[
roundnode/.style={rounded rectangle, draw=green!40, fill=green!3, very thick, minimum size=2mm},
squarednode/.style={rectangle, draw=red!30, fill=red!2, thick, minimum size=4mm}
]
	%%%
	\node[squarednode] (brick) at (0,0) {$x_6=0$};

	\node[squarednode] (u)  at (2,0) {$x_5=0$};
	\node[squarednode] (d)  at (2,-7) {$x_5 \neq 0$};

	\node[squarednode] (uu)  at (4,0) {$x_4=0$};
	\node[squarednode] (ud)  at (4,-6) {$x_4 \neq 0$};
	\node[roundnode] (du)  at (4,-7) {$\stackrel{{\rm No \, solutions}}{x_4 \neq 0}$};
	\node[squarednode] (dd)  at (4,-8) {$x_4 = 0$};

	\node[squarednode] (uuu)  at (7,0) {$x_3=0$};
	\node[squarednode] (uud)  at (7,-4) {$x_3 \neq 0$};
	\node[squarednode] (udu)  at (7,-6) {$x_3 + x_4 = 0$};
	\node[roundnode] (udd)  at (7,-7) {$\stackrel{{\rm No \, solutions}}{x_3 + x_4 \neq 0}$};
	\node[squarednode] (ddu)  at (7,-8) {$x_2 = 0$};
	\node[roundnode] (ddd)  at (7,-9) {$\stackrel{{\rm No \, solutions}}{x_2 \neq 0}$};

	\node[squarednode] (uuuu)  at (10,0) {$x_2=0$};
	\node[squarednode] (uuud)  at (10,-2) {$x_2 \neq 0$};
	\node[squarednode] (uudu)  at (10,-4) {$x_2 = 0$};
	\node[squarednode] (uudd)  at (10,-5) {$x_2 \neq 0$};
	\node[squarednode] (uduu)  at (10,-6) {$x_1 = 0$};
	\node[squarednode] (udud)  at (10,-7) {$x_1 \neq 0$};

	\node[squarednode] (uuuuu)  at (12,0) {$x_1=0$};
	\node[squarednode] (uuuud)  at (12,-1) {$x_1 \neq 0$};
	\node[squarednode] (uuudu)  at (12,-2) {$x_1 = 0$};
	\node[squarednode] (uuudd)  at (12,-3) {$x_1 \neq 0$};
	
%%%
	\node[squarednode] (0) at (14,0) {0};
	\node[squarednode] (I) at (14,-1) {I};
	\node[squarednode] (II) at (14,-2) {II};
	\node[squarednode] (III) at (14,-3) {III};
	\node[squarednode] (IV) at (14,-4) {IV};
	\node[squarednode] (V) at (14,-5) {V};
	\node[squarednode] (VI) at (14,-6) {VI};
	\node[squarednode] (VII) at (14,-7) {VII};
	\node[squarednode] (VIII) at (14,-8) {VIII};
	
%%%
	\draw[->] (brick.east) -- (u.west);
	\draw[->] (brick.east) -- (d.west);

	\draw[->] (u.east) -- (uu.west);
	\draw[->] (u.east) -- (ud.west);
	\draw[->] (d.east) -- (du.west);
	\draw[->] (d.east) -- (dd.west);

	\draw[->] (uu.east) -- (uuu.west);
	\draw[->] (uu.east) -- (uud.west);
	\draw[->] (ud.east) -- (udu.west);
	\draw[->] (ud.east) -- (udd.west);
	\draw[->] (dd.east) -- (ddu.west);
	\draw[->] (dd.east) -- (ddd.west);

	\draw[->] (uuu.east) -- (uuuu.west);
	\draw[->] (uuu.east) -- (uuud.west);
	\draw[->] (uud.east) -- (uudu.west);
	\draw[->] (uud.east) -- (uudd.west);
	\draw[->] (udu.east) -- (uduu.west);
	\draw[->] (udu.east) -- (udud.west);

	\draw[->] (uuuu.east) -- (uuuuu.west);
	\draw[->] (uuuu.east) -- (uuuud.west);
	\draw[->] (uuud.east) -- (uuudu.west);
	\draw[->] (uuud.east) -- (uuudd.west);

\end{tikzpicture}

\vspace{0.5cm}

}
\end{center}

The above Darboux tree gives the orbits of ${\rm Aut}_c(\mathfrak{s}_{11})$ in $\mathcal{Y}_{\mathfrak{s}_{11}}$. The Lie group ${\rm Aut}(\mathfrak{s}_{11})$ reads
$$
{\rm Aut}(\mathfrak{s}_{11}) = \left\{
\left(
\begin{array}{cccc}
T^2_2 T^3_3 & T^2_1 & -T^3_3 T^4_2 & T^4_1 \\
0 & T^2_2 & 0 & T^4_2 \\
0 & 0 & T^3_3 & 0 \\
0 & 0 & 0 & 1
\end{array}
\right): 
\begin{array}{c}
T^2_2, T^3_3 \in \rz \\
T^2_1, T^4_1, T^4_2 \in \rr
\end{array}
\right\}.
$$
There are four connected components of ${\rm Aut}(\mathfrak{s}_{11})$. Representative elements of these components and their lifts to $\Lambda^2\mathfrak{s}_{11}$ are given by
\begin{equation*}
T_{\lambda_1,\lambda_2}:=\left(
\begin{array}{cccc}
\lambda_1 \lambda_2 & 0 & 0 & 0 \\
0 & \lambda_1 & 0 & 0 \\
0 & 0 & \lambda_2 & 0 \\
0 & 0 & 0 & 1
\end{array}
\right), \quad \Lambda^2T_{\lambda_1,\lambda_2}= {\small
\left(
\begin{array}{cccccc}
  \lambda_2 & 0 & 0 & 0 &0&0 \\
0 & \lambda_1   & 0 & 0 &0&0 \\
0 & 0 & \lambda_1 \lambda_2 & 0 &0&0 \\
0 & 0 & 0 & \lambda_1 \lambda_2&0&0 \\
0 & 0 & 0 & 0 &\lambda_1&0 \\
0 & 0 & 0 & 0&0&\lambda_2 \\
\end{array}
\right),\qquad \lambda_1, \lambda_2 \in \{ \pm 1\}}.
\end{equation*}

The $\Lambda^2T_{\lambda_1,\lambda_2}$ allow us to check whether the orbits of ${\rm Aut}_c(\mathfrak{s}_{11})$ are connected by an element of ${\rm Aut}(\mathfrak{s}_{11})$ acting on $\Lambda^2\mathfrak{s}_{11}$. The representative $r$-matrices of each equivalence class are given in Table \ref{Tab:g_orb_2}. Since $(\Lambda^2\mathfrak{s}_{11})^{\mathfrak{s}_{11}}=0$, each family of $r$-matrices gives rise to a separate class of equivalent coboundary Lie bialgebras on $\mathfrak{s}_{11}$.

\subsection{Lie algebra $\mathfrak{s}_{12}$}

The structure constants for Lie algebra $\mathfrak{s}_{12}$ are given in Table \ref{Tab:StruCons}. This allows to verify that  $(\Lambda^2\mathfrak{s}_{12})^{\mathfrak{s}_{12}}=0$ and $(\Lambda^3\mathfrak{s}_{12})^{\mathfrak{s}_{12}}=0$.

By Remark \ref{Re:DerAlg}, one obtains that
$$
\mathfrak{der}(\mathfrak{s}_{12})=\left\{\left(
\begin{array}{cccc}
\mu_{11} & \mu_{12} & \mu_{13} & \mu_{14} \\
-\mu_{12} & \mu_{11} & \mu_{14} & -\mu_{13} \\
0 & 0 & 0 & 0 \\
0 & 0 & 0 & 0
\end{array}
\right): \mu_{11}, \mu_{12}, \mu_{13}, \mu_{14} \in \mathbb{R}\right\}.
$$
The previous derivations give rise to the basis of $V_{\mathfrak{s}_{12}}$ given by
\begin{align*}
&X_1 = 2x_1 \partial_{x_1} + x_2 \partial_{x_2} + x_3 \partial_{x_3} + x_4 \partial_{x_4}+ x_5 \partial_{x_5}, &
&X_2 = x_4 \partial_{x_2} + x_5 \partial_{x_3} -  x_2 \partial_{x_4} -  x_3 \partial_{x_5}, \\
&X_3 = (-x_3-x_4) \partial_{x_1} + x_6 \partial_{x_3} + x_6 \partial_{x_4}, &
&X_4 = (x_2 - x_5) \partial_{x_1} - x_6 \partial_{x_2} + x_6 \partial_{x_5}.
\end{align*}

For an element $r \in \Lambda^2 \mathfrak{s}_{12}$, we get
$$
[r,r] = -2(x_2 x_3 + x_4 x_5)e_{123} + 2(-2x_1 x_6 + x_2 x_5 -x_3^2 - x_3 x_4 - x_5^2)e_{124} - 2(x_2 + x_5)x_6e_{134} + 2(x_3 - x_4)x_6e_{234}.
$$
Since $(\Lambda^3 \mathfrak{s}_{12})^{\mathfrak{s}_{12}} = 0$, the mCYBE and CYBE are equal and read
$$
x_2 x_3 + x_4 x_5 = 0, \quad 2x_1 x_6 - x_2 x_5 + x_3^2 + x_3 x_4 + x_5^2 = 0, \quad (x_2 + x_5) x_6 = 0, \quad (x_3 - x_4) x_6 = 0.
$$

Since $x_6$ is the only brick for $\mathfrak{s}_{12}$, our Darboux tree starts with the cases $x_6 = 0$ and $x_6 \neq 0$. The full Darboux tree is given next.

\begin{center}
{\small
\begin{tikzpicture}[
roundnode/.style={rounded rectangle, draw=green!40, fill=green!3, very thick, minimum size=2mm},
squarednode/.style={rectangle, draw=red!30, fill=red!2, thick, minimum size=4mm}
]
	%%%
	\node[squarednode] (brick) at (0,0) {$x_6=0$};

	\node[squarednode] (u)  at (3,0) {$\begin{array}{c}
 x_2-x_4=0\\
x_2+x_4=0 
\end{array} $};
	\node[squarednode] (d)  at (2,-2) {$x_2^2 + x_4^2 \neq 0$};

	\node[squarednode] (uu)  at (7,0) {$\begin{array}{c}
 x_3-x_5=0\\
x_3+x_5=0 
\end{array} $};
	\node[roundnode] (ud)  at (7,-1) {$\stackrel{{\rm No \, solutions}}{x_3^2 + x_5^2 \neq 0}$};
	\node[squarednode] (du)  at (5,-2) {$x_3^2 + x_5^2 = 0$};
	\node[squarednode] (dd)  at (5,-3) {$x_3^2 + x_5^2 \neq 0$};

	\node[squarednode] (uuu)  at (10,0) {$x_1=0$};
	\node[squarednode] (uud)  at (10,-1) {$x_1 \neq 0$};
	\node[squarednode] (ddu)  at (8,-3) {$x_2 x_3 + x_4 x_5 = 0$};
	\node[roundnode] (ddd)  at (8,-4) {$\stackrel{{\rm No \, solutions}}{x_2 x_3 + x_4 x_5 \neq 0}$};

	\node[roundnode] (dddu)  at (12,-3) {$\stackrel{{\rm No \, solutions}}{-x_2 x_5 + x_3^2 + x_3 x_4 + x_5^2 \neq 0}$};
	\node[squarednode] (dddd)  at (12,-4) {$-x_2 x_5 + x_3^2 + x_3 x_4 + x_5^2 = 0$};
	
%%%
	\node[squarednode] (0) at (15,0) {0};
	\node[squarednode] (I) at (15,-1) {I};
	\node[squarednode] (II) at (15,-2) {II};
	\node[squarednode] (III) at (15,-4) {III};
	
%%%
	\draw[->] (brick.east) -- (u.west);
	\draw[->] (brick.east) -- (d.west);

	\draw[->] (u.east) -- (uu.west);
	\draw[->] (u.east) -- (ud.west);
	\draw[->] (d.east) -- (du.west);
	\draw[->] (d.east) -- (dd.west);

	\draw[->] (uu.east) -- (uuu.west);
	\draw[->] (uu.east) -- (uud.west);
	\draw[->] (dd.east) -- (ddu.west);
	\draw[->] (dd.east) -- (ddd.west);

	\draw[->] (ddu.east) -- (dddu.west);
	\draw[->] (ddu.east) -- (dddd.west);

\end{tikzpicture}

\vspace{0.5cm}

\begin{tikzpicture}[
roundnode/.style={rounded rectangle, draw=green!40, fill=green!3, very thick, minimum size=2mm},
squarednode/.style={rectangle, draw=red!30, fill=red!2, thick, minimum size=4mm}
]
	%%%
	\node[squarednode] (brick) at (0,0) {$x_6-k=0$, ($k\neq 0$)};

	\node[squarednode] (u)  at (4,0) {\parbox{1.7cm}{$x_2 + x_5=0 \\ x_3 - x_4 = 0$}};
	\node[roundnode] (d)  at (4,-1) {$\stackrel{{\rm No \, solutions}}{x_2 + x_5 \neq 0 \lor x_3 - x_4 \neq 0}$};

	\node[squarednode] (uu)  at (9.5,0) {$x_5^2 + x_3^2 - x_2 x_5 + x_3 x_4 + 2x_1 x_6 = 0$};
	\node[roundnode] (ud)  at (9.5,-1) {$\stackrel{{\rm No \, solutions}}{x_5^2 + x_3^2 - x_2 x_5 + x_3 x_4 + 2x_1 x_6 \neq 0}$};
	
%%%
	\node[squarednode] (IV) at (13.5,0) {IV$_{|k|>0}$};
	
%%%
	\draw[->] (brick.east) -- (u.west);
	\draw[->] (brick.east) -- (d.west);

	\draw[->] (u.east) -- (uu.west);
	\draw[->] (u.east) -- (ud.west);

\end{tikzpicture}
}
\end{center}
The connected parts of the loci of the Darboux families of the Darboux tree are the orbits of ${\rm Aut}_c(\mathfrak{s}_{12})$ within $\mathcal{Y}_{\mathfrak{s}_{12}}$. These orbits are given in Table \ref{Tab:g_orb_2}. 

Finally, let us obtain the orbits of ${\rm Aut}(\mathfrak{s}_{12})$ on $\mathcal{Y}_{\mathfrak{sl}_{12}}$. As usual, we first determine the Lie algebra automorphism group of $\mathfrak{s}_{12}$:
\begin{equation*}
{\rm Aut}(\mathfrak{s}_{12}) = \left\{
\left(
\begingroup
\setlength\arraycolsep{4pt}
\begin{array}{cccc}
T^1_1 & T^2_1 & T^3_1 & \pm T^3_2 \\
\mp T^2_1 & \pm T^1_1 & T^3_2 & \mp T^3_1 \\
0 & 0 & 1 & 0 \\
0 & 0 & 0 & \pm 1 
\end{array}
\endgroup
\right):
\begingroup
\setlength\arraycolsep{4pt}
\begin{array}{c}
 (T_1^1)^2 + (T^2_1)^2 \in \rz \\
T^1_1,T^2_1,T^3_1, T^3_2 \in \rr
\end{array}
\endgroup
\right\}.
\end{equation*}
There are two connected components of ${\rm Aut}(\mathfrak{s}_{12})$. One such element for each connected component of ${\rm Aut}(\mathfrak{s}_{12})$ and their extensions to $\Lambda^2\mathfrak{s}_{12}$ read
$$
T_\pm:=\left(
\begin{array}{cccc}
1 & 0 & 0 & 0 \\
0 & \pm 1 & 0 & 0 \\
0 & 0 & 1 & 0 \\
0 & 0 & 0 & \pm 1 
\end{array}
\right), \quad \Lambda^2T_\pm =\left(
\begin{array}{cccccc}
\pm 1 & 0 & 0 & 0&0&0 \\
0 & 1 & 0 & 0 &0&0\\
0 & 0 & \pm 1 & 0 &0&0\\
0 & 0 & 0 & \pm 1&0&0\\
0 & 0 & 0 & 0 &1 &0\\
0 & 0 & 0 & 0 &0&\pm 1\\
\end{array}
\right).
$$
 By using $\Lambda^2T_{\pm}$ to identify the orbits of ${\rm Aut}_c (\mathfrak{s}_{12})$, we obtain the orbits of the full Lie group ${\rm Aut}(\mathfrak{s}_{12})$. The representative $r$-matrices of each equivalence class are detailed in Table \ref{Tab:g_orb_2}.  Since $(\Lambda^2\mathfrak{s}_{12})^{\mathfrak{s}_{12}}=0$, each family of equivalent $r$-matrices gives rise to a separate class of equivalent coboundary Lie bialgebras.

\subsection{Lie algebra $\mathfrak{n}_1$}
Using the commutation relations for the Lie algebra $\mathfrak{n}_1$ given in Table \ref{Tab:StruCons}, one obtains that  $(\Lambda^2\mathfrak{n}_1)^{\mathfrak{n}_1}= \langle e_{12}\rangle$ and $(\Lambda^3\mathfrak{n}_1)^{\mathfrak{n}_1}=\langle e_{123},e_{124}\rangle$.

By Remark \ref{Re:DerAlg}, one has that the derivations of $\mathfrak{n}_1$ take the form
$$
\mathfrak{der}(\mathfrak{n}_1)=\left\{\left(
\begin{array}{cccc}
\mu_{11} & \mu_{12} & \mu_{13} & \mu_{14} \\
0 & \mu_{22} & \mu_{12} & \mu_{24} \\
0 & 0 & 2\mu_{22} - \mu_{11} & \mu_{34} \\
0 & 0 & 0 & \mu_{11} - \mu_{22}
\end{array}
\right): \mu_{11}, \mu_{12}, \mu_{13}, \mu_{14}, \mu_{22}, \mu_{24}, \mu_{34} \in \mathbb{R}\right\},
$$
which, by lifting them to $\Lambda^2\mathfrak{n}_1$, give rise to the basis of $V_{\mathfrak{n}_1}$ of the form
\begin{align*}
&X_1 = x_1 \partial_{x_1} + 2x_3 \partial_{x_3} - x_4 \partial_{x_4} + x_5 \partial_{x_5}, &
&X_2 = x_3 \partial_{x_2} + x_5 \partial_{x_4}, &
&X_3 = -x_4 \partial_{x_1} + x_6 \partial_{x_3}, \\
&X_4 = x_1 \partial_{x_1} + 2x_2 \partial_{x_2} - x_3 \partial_{x_3} + 3x_4 \partial_{x_4} + x_6 \partial_{x_6}, &
&X_5 = -x_5 \partial_{x_1} - x_6 \partial_{x_2}, &
&X_6 = x_3 \partial_{x_1} - x_6 \partial_{x_4}, \\
&X_7 = x_2 \partial_{x_1} + x_4 \partial_{x_2} + x_5 \partial_{x_3} + x_6 \partial_{x_5}.
\end{align*}

For an element $r \in \Lambda^2 \mathfrak{n}_1$, we get $[r, r] = 2(x_4 x_5 - x_2 x_6) e_{123} + 2(x_5^2 - x_3 x_6) e_{124} + 2 x_5 x_6 e_{134} + 2 x_6^2 e_{234}$.

Thus, CYBE reads $x_5=0, x_6 = 0$. Since $(\Lambda^3 \mathfrak{n}_1)^{\mathfrak{n}_1} = \langle e_{123}, e_{124}\rangle$, the mCYBE differs from the CYBE and reads $x_6 = 0$.

The Darboux tree for $\mathfrak{n}_1$ is given below.
\begin{center}
{\small
\begin{tikzpicture}[
roundnode/.style={rounded rectangle, draw=green!40, fill=green!3, very thick, minimum size=2mm},
squarednode/.style={rectangle, draw=red!30, fill=red!2, thick, minimum size=4mm}
]
%%%
\node[squarednode] (brick) at (0,0) {$x_6=0$};

\node[squarednode] (u)  at (2,0) {$x_5=0$};
\node[squarednode] (d)  at (2,-6) {$x_5 \neq 0$};

\node[squarednode] (uu)  at (4,0) {$x_4 = 0$};
\node[squarednode] (ud)  at (4,-4) {$x_4 \neq 0$};
\node[squarednode] (du)  at (5,-6) {$x_2 x_5 - x_3 x_4 = 0$};
\node[squarednode] (dd)  at (5,-7) {$x_2 x_5 - x_3 x_4 \neq 0$};

\node[squarednode] (uuu)  at (6,0) {$x_3=0$};
\node[squarednode] (uud)  at (6,-3) {$x_3 \neq 0$};
\node[squarednode] (udu)  at (6,-4) {$x_3 = 0$};
\node[squarednode] (udd)  at (6,-5) {$x_3 \neq 0$};

\node[squarednode] (uuuu)  at (8,0) {$x_2=0$};
\node[squarednode] (uuud)  at (8,-2) {$x_2 \neq 0$};

\node[squarednode] (uuuuu)  at (10,0) {$x_1 = 0$};
\node[squarednode] (uuuud)  at (10,-1) {$x_1 \neq 0$};

%%%
\node[squarednode] (0) at (12,0) {0};
\node[squarednode] (I) at (12,-1) {I};
\node[squarednode] (II) at (12,-2) {II};
\node[squarednode] (III) at (12,-3) {III};
\node[squarednode] (IV) at (12,-4) {IV};
\node[squarednode] (V) at (12,-5) {V};
\node[squarednode] (VI) at (12,-6) {VI};
\node[squarednode] (VII) at (12,-7) {VII};

%%%
\draw[->] (brick.east) -- (u.west);
\draw[->] (brick.east) -- (d.west);

\draw[->] (u.east) -- (uu.west);
\draw[->] (u.east) -- (ud.west);
\draw[->] (d.east) -- (du.west);
\draw[->] (d.east) -- (dd.west);

\draw[->] (uu.east) -- (uuu.west);
\draw[->] (uu.east) -- (uud.west);
\draw[->] (ud.east) -- (udu.west);
\draw[->] (ud.east) -- (udd.west);

\draw[->] (uuu.east) -- (uuuu.west);
\draw[->] (uuu.east) -- (uuud.west);

\draw[->] (uuuu.east) -- (uuuuu.west);
\draw[->] (uuuu.east) -- (uuuud.west);

\end{tikzpicture}
}
\end{center}
As in all previous sections, the orbits of {\rm Aut}$_c(\mathfrak{n}_1$) in $\mathcal{Y}_{\mathfrak{n}_1}$ are given by the connected components of the loci of the Darboux families of the above Darboux tree. Results can be found in Table \ref{Tab:g_orb_2}.

Let us obtain the orbits of {\rm Aut}$(\mathfrak{n}_1)$ in $\mathcal{Y}_{\mathfrak{n}_1}$. The automorphism group of $\mathfrak{n}_1$ takes the form
$$
{\rm Aut}(\mathfrak{n}_1) = \left\{\left(
\begin{array}{cccc}
T^3_3 (T^4_4)^2 & T^3_2 T^4_4 & T^3_1 & T^4_1 \\
0 & T^3_3 T^4_4 & T^3_2 & T^4_2 \\
0 & 0 & T^3_3 & T^4_3 \\
0 & 0 & 0 & T^4_4
\end{array}
\right): \quad T^3_3 , T^4_4 \in \rz, T^3_1, T^3_2, T^4_1, T^4_2, T^4_3 \in \rr\right\}.
$$
It has four connected components represented by the automorphisms $T_{\lambda_1, \lambda_2}$. These maps and their lifts to $\Lambda^2\mathfrak{n}_1$ read
$$
T_{\lambda_1,\lambda_2}:=\left(
\begin{array}{cccc}
\lambda_1 & 0 & 0 & 0 \\
0 & \lambda_1\lambda_2 & 0 & 0 \\
0 & 0 & \lambda_1 & 0 \\
0 & 0 & 0 & \lambda_2
\end{array}
\right), \quad \Lambda^2T_{\lambda_1,\lambda_2}=\left(
\begin{array}{cccccc}
\lambda_2 & 0 & 0 & 0&0&0 \\
0 & 1 & 0 & 0&0&0 \\
0 & 0 & \lambda_1\lambda_2 & 0&0&0 \\
0 & 0 & 0 & \lambda_2&0&0\\
0 & 0 & 0 & 0&\lambda_1&0\\
0 & 0 & 0 & 0&0&\lambda_1\lambda_2\\
\end{array}
\right),\quad \lambda_1,\lambda_2\in \{\pm 1\}.
$$
The action of the $\Lambda^2T_{\lambda_1,\lambda_2}$ on the loci of the above Darboux tree gives the families of equivalent $r$-matrices given in Table \ref{Tab:g_orb_2}. Since $(\Lambda^2\mathfrak{n}_1)^{\mathfrak{n}_1}=\langle e_{12}\rangle$ and using the information in Table \ref{Tab:g_orb_2}, we see that the families of equivalent coboundary Lie bialgebras on $\mathfrak{n}_1$ are induced by the following nine representative $r$-matrices:
$$
r_0 = 0, \quad r^{\pm}_1 = \pm e_{13}, \quad r_3 = e_{14}, \quad r_4 = e_{23}, \quad r_5 = e_{14} + e_{23}, \quad r_6 = e_{24}, \quad r^{\pm}_7 = e_{24} \pm e_{13}.
$$

{ 
	\begin{table}[h]
\centering
\resizebox{\columnwidth}{!}{
	\begin{tabular}{|c|c|c|c|c|c|c|c|c|c|}
\hline
$\mathfrak{g}$ & \textrm{Orbit} & {\rm Dim} & $x_1$ & $x_2$ & $x_3$ & $x_4$ & $x_5$ & $x_6$ & {\rm Repr. element } \\
\hhline{|=|=|=|=|=|=|=|=|=|=|}
\multirow{8}{*}{$\mathfrak{s}_1$} & ${\rm I}_\pm$ & 1  & $\rpm$ & 0 & 0 & 0 & 0 & 0 & $\pm e_{12}$ \\\hhline{|~|-|-|-|-|-|-|-|-|-|}
& {\rm II} & 1  & 0 & $\rz$ & 0 & 0 & 0 & 0 & $e_{13}$ \\\hhline{|~|-|-|-|-|-|-|-|-|-|}
& ${\rm III}_\pm$ & 2  & $\rpm$ & $\rz$ & 0 & 0 & 0 & 0 & $\pm e_{12} + e_{13}$ \\\hhline{|~|-|-|-|-|-|-|-|-|-|}
& {\rm IV} & 2  & 0 & $\rr$ & 0 & $\rz$ & 0 & 0 & $e_{23}$ \\\hhline{|~|-|-|-|-|-|-|-|-|-|}
& ${\rm V}_\pm$ & 3  & $\rpm$ & $\rr$ & 0 & $\rz$ & 0 & 0 & $\pm e_{12} + e_{23}$ \\\hhline{|~|-|-|-|-|-|-|-|-|-|}
& {\rm VI} & 3  & $\rr$ & $\rr$ & $\rz$ & 0 & 0 & 0 & $e_{14}$ \\\hhline{|~|-|-|-|-|-|-|-|-|-|}
& {\rm VII} & 3  & 0 & $\rr$ & 0 & $\rr$ & 0 & $\rz$ & $e_{34}$ \\\hhline{|~|-|-|-|-|-|-|-|-|-|}
& ${\rm VIII}_\pm$ & 4 & $\rpm$ & $\rr$ & 0 & $\rr$ & 0 & $\rz$ & $\pm e_{12} + e_{34}$  \\
\hline
\multirow{4}{*}{$\mathfrak{s}_2$} & ${\rm I}_\pm$ &1 & $\rpm$ & 0 & 0 & 0 & 0 & 0 & $\pm e_{12}$\\\hhline{|~|-|-|-|-|-|-|-|-|-|}
& ${\rm II}_\pm$ &2  & $\rr$ & $\rpm$ & 0 & 0 & 0 & 0 & $\pm e_{13}$\\\hhline{|~|-|-|-|-|-|-|-|-|-|}
& ${\rm III}$ &3 & $\rr$ & $\rr$ & $\rz$ & 0 & 0 & 0 & $ e_{14}$\\\hhline{|~|-|-|-|-|-|-|-|-|-|}
& ${\rm IV}_\pm$ &3 & $\rr$ & $\rr$ & 0 & $\rpm$ & 0 & 0 & $\pm e_{23}$\\
\hline
\multirow{13}{*}{\begin{tabular}{c} $\mathfrak{s}^{\alpha,\beta}_3$ \\ {\footnotesize ${\alpha\neq \beta}$} \\ {\footnotesize ${\alpha\neq 1}$} \\ {\footnotesize ${\beta\neq 1}$} \end{tabular}} & {\rm I} & 1  & $\rz$ & 0 & 0 & 0 & 0 & 0 & $e_{12}$ \\\hhline{|~|-|-|-|-|-|-|-|-|-|}
& {\rm II} & 1  & 0 & $\rz$ & 0 & 0 & 0 & 0 & $e_{13}$ \\\hhline{|~|-|-|-|-|-|-|-|-|-|}
& {\rm III} & 2  & $\rz$ & $\rz$ & 0 & 0 & 0 & 0 & $e_{12} + e_{13}$ \\\hhline{|~|-|-|-|-|-|-|-|-|-|}
& {\rm IV} & 1 & 0 & 0 & 0 & $\rz$ & 0 & 0 & $e_{23}$ \\\hhline{|~|-|-|-|-|-|-|-|-|-|}
& {\rm V} & 2 & $\rz$ & 0 & 0 & $\rz$ & 0 & 0 & $e_{12} + e_{23}$ \\\hhline{|~|-|-|-|-|-|-|-|-|-|}
& {\rm VI} & 2  & 0 & $\rz$ & 0 & $\rz$ & 0 & 0 & $e_{13} + e_{23}$ \\\hhline{|~|-|-|-|-|-|-|-|-|-|}
&  {${\rm VII}_{\pm}$} &  {3}  &  {$\rz$} &  {$\rz$} & {0} & $\pm x_1x_2x_4>0$  &  {0} &  {0} &  {$\pm(e_{12} + e_{13} + e_{23})$} \\
\hhline{|~|-|-|-|-|-|-|-|-|-|}
& {\rm VIII} & 3  & $\rr$ & $\rr$ & $\rz$ & 0 & 0 & 0 & $e_{14}$ \\\hhline{|~|-|-|-|-|-|-|-|-|-|}
& ${\rm IX}_{\alpha + \beta \in \{0, -1*\}}$ & 4  & $\rr$ & $\rr$ & $\rz$ & $\rz$ & 0 & 0 & $e_{14} + e_{23}$ \\\hhline{|~|-|-|-|-|-|-|-|-|-|}
& {\rm X} & 3 & $\rr$ & 0 & 0 & $\rr$ & $\rz$ & 0 & $e_{24}$ \\\hhline{|~|-|-|-|-|-|-|-|-|-|}
& ${\rm XI}_{\alpha+\beta=-1}$ & 4  & $\rr$ & $\rz$ & 0 & $\rr$ & $\rz$ & 0 & $e_{13} + e_{24}$ \\
\hhline{|~|-|-|-|-|-|-|-|-|-|}
& {\rm XIV} & 3  & 0 & $\rr$ & 0 & $\rr$ & 0 & $\rz$ & $e_{34}$ \\
\hhline{|~|-|-|-|-|-|-|-|-|-|}
& ${\rm XV}^{\alpha=-1}_{\alpha+\beta=-1,*}$ & 4  & $\rz$ & $\rr$ & 0 & $\rr$ & 0 & $\rz$ & $e_{12} + e_{34}$ \\
\hline
\hhline{|~|-|-|-|-|-|-|-|-|-|}
\multirow{7}{*}{\begin{tabular}{c} $\mathfrak{s}^{\alpha,\alpha}_3$ \\ {\footnotesize ${\alpha\neq 1}$} \end{tabular}} & ${\rm I}$& 2  & $\rr$ & $x_1^2+x_2^2\neq 0$ & $0$ & $0$ & $0$ & 0 & $e_{12}$ \\
\hhline{|~|-|-|-|-|-|-|-|-|-|}
& ${\rm II}$&1  & $0$ & $0$ & $0$ & $\rz$ & $0$ & 0 & $e_{23}$ \\
\hhline{|~|-|-|-|-|-|-|-|-|-|}
& ${\rm III}$& 3  & $\rr$ & $x_1^2+x_2^2\neq 0$ & $0$ & $\rz$ & $0$ & 0 & $e_{12}+e_{23}$ \\
\hhline{|~|-|-|-|-|-|-|-|-|-|}
& ${\rm IV}$& 3  & $\rr$ & $\rr$ & $\rz$ & $0$ & $0$ & 0 & $e_{14}$ \\
\hhline{|~|-|-|-|-|-|-|-|-|-|}
& ${\rm V}^*_{\alpha=-1/2}$& 4  & $\rr$ & $\rr$ & $\rz$ & $\rz$ & $0$ & 0 & $e_{14}+e_{23}$ \\
\hhline{|~|-|-|-|-|-|-|-|-|-|}
& ${\rm VI}^*_{\alpha=-1/2}$& 5  & \multicolumn{2}{c|}{$x_2x_5-x_1x_6\neq 0$}& $0$ & $\rr$ & $\rr$& $x_5^2+x_6^2\neq 0$ & $e_{12}+e_{34}$ \\
\hhline{|~|-|-|-|-|-|-|-|-|-|}
& ${\rm VII}$& 4  & \multicolumn{2}{c|}{$x_2x_5-x_1x_6=0$}& $0$ & $\rr$ & $\rr$ & $x_5^2+x_6^2\neq 0$ & $e_{34}$ \\
\hline 
\hhline{|~|-|-|-|-|-|-|-|-|-|}
\multirow{6}{*}{\begin{tabular}{c} $\mathfrak{s}^{1,\beta}_3$ \\ {\footnotesize ${\beta\neq 1}$} \end{tabular}} & ${\rm I}$ & 1  & $\rz$ & $0$ & $0$ & $0$ & $0$ & $0$ & $e_{12}$ \\
\hhline{|~|-|-|-|-|-|-|-|-|-|}
& ${\rm II}$ & 2  & $0$ & $\rr$ & $0$ & $x_2^2+x_4^2\neq 0$ & $0$ & $0$ & $e_{13}$ \\
\hhline{|~|-|-|-|-|-|-|-|-|-|}
&  ${\rm III}$ & 3  & $\rz$ & $\rr$ & $0$ & $x_2^2+x_4^2\neq 0$ & $0$ & $0$ & $e_{12}+e_{13}$ \\
\hhline{|~|-|-|-|-|-|-|-|-|-|}
& ${\rm IV}$ & 4  & $\rr$ & $\rr$ & $x_3^2+x_5^2\neq 0$ & $\rr$ & $x_2x_5-x_3x_4=0$ & $0$ & $e_{13}$ \\
\hhline{|~|-|-|-|-|-|-|-|-|-|}
& ${\rm V}_{\beta=-1}$ & 5  & $\rr$ & $\rr$ & $x_3^2+x_5^2\neq 0$ & $\rr$ & $x_2x_5-x_3x_4\neq 0$ & $0$ & $e_{14}+e_{23}$ \\
\hhline{|~|-|-|-|-|-|-|-|-|-|}
& ${\rm VI}$ & 4  & $0$ & $\rr$ & $0$ & $\rr$ & $0$ & $\rz$ & $e_{34}$ \\
\hline
\hhline{|~|-|-|-|-|-|-|-|-|-|}
\multirow{2}{*}{$\mathfrak{s}^{1,1}_3$} & ${\rm I}$ & 3  & $\rr$ & $\rr$ & $0$ & $x_1^2+x_2^2+x_4^2\neq 0$ & $0$& $0$& $e_{12}$ \\
 \hhline{|~|-|-|-|-|-|-|-|-|-|}
& ${\rm II}$ & 5  & \multicolumn{2}{c|}{$x_3x_4+\!x_1x_6=\!x_2x_5$}&   $\rr$ & $x_3x_4+x_1x_6=x_2x_5$ & $\rr$& $x_3^2+x_5^2+x_6^2\neq 0$ &$e_{13}+e_{34}$ \\
\hline
\multirow{10}{*}{\begin{tabular}{c} $\mathfrak{s}^\alpha_4$ \\ {\footnotesize $\alpha\notin \{0,1\}$} \end{tabular}} & {\rm I}$_\pm$ & 1  & $\rpm$ & 0 & 0 & 0 & 0 & 0 & $\pm e_{12}$ \\\hhline{|~|-|-|-|-|-|-|-|-|-|}
& {\rm II} & 1  & 0 & $\rz$ & 0 & 0 & 0 & 0 & $e_{13}$ \\\hhline{|~|-|-|-|-|-|-|-|-|-|}
& {\rm III}$_\pm$ & 2  &$\rpm$ & $\rz$ & 0 & 0 & 0 & 0 & $\pm e_{12} + e_{13}$ \\\hhline{|~|-|-|-|-|-|-|-|-|-|}
& {\rm IV} & 3  & $\rr$ & $\rr$ & $\rz$ & 0 & 0 & 0 & $e_{14}$ \\\hhline{|~|-|-|-|-|-|-|-|-|-|}
& {\rm V} & 2  & 0 & $\rr$ & 0 & $\rz$ & 0 & 0 & $e_{23}$ \\\hhline{|~|-|-|-|-|-|-|-|-|-|}
& {\rm VI}$_\pm$ & 3  & $\rpm$ & $\rr$ & 0 & $\rz$ & 0 & 0 & $\pm e_{12} + e_{23}$ \\\hhline{|~|-|-|-|-|-|-|-|-|-|}
& ${\rm VII}_{\alpha \in \{-2^*,-1\}}$ & 4  & $\rr$ & $\rr$ & $\rz$ & $\rz$ & 0 & 0 & $e_{14} + e_{23}$ \\\hhline{|~|-|-|-|-|-|-|-|-|-|}
& {\rm VIII} & 3  & 0 & $\rr$ & 0 & $\rr$ & 0 & $\rz$ & $e_{34}$ \\\hhline{|~|-|-|-|-|-|-|-|-|-|}
& ${\rm IX}^{\alpha = -2,*}_\pm$ & 4  & $\rpm$ & $\rr$ & 0 & $\rr$ & 0 & $\rz$ & $\pm e_{12} + e_{34}$ \\\hhline{|~|-|-|-|-|-|-|-|-|-|}
& ${\rm X}_{\alpha = 1}$ & 4  & $\rr$ & $\rr$ & $\rz$ & $-\frac{x_1x_6}{x_3}$ & 0 & $\rz$ & $e_{14} + e_{34}$ \\
\hline
\multirow{5}{*}{$\mathfrak{s}^1_4$} & ${\rm I}_\pm$ & 2  & $\rpm$ & $
\rr$ & 0 & 0 & 0 & 0 & $\pm e_{12}$ \\\hhline{|~|-|-|-|-|-|-|-|-|-|}
& ${\rm II}$& 1  & 0 & $\rz$ & 0 & 0 & 0 & 0 & $e_{13}$ \\\hhline{|~|-|-|-|-|-|-|-|-|-|}
& ${\rm III}$ & 3  & $\rr$ & $\rr$ & $\rz$ & 0 & 0 & 0 & $e_{14}$ \\\hhline{|~|-|-|-|-|-|-|-|-|-|}
& ${\rm IV}$ & 3  & $\rr$ & $\rr$ & 0 & $\rz$ & 0 & 0 & $  e_{23}$ \\
\hhline{|~|-|-|-|-|-|-|-|-|-|}
& ${\rm V}$ & 4  & $\rr$ & $\rr$ & \multicolumn{2}{c|}{$x_4x_3-x_1x_6=0$} & 0 & $\rz$ & $e_{14} + e_{34}$ \\
\hline
\multirow{5}{*}{$\mathfrak{s}_5$} & {\rm I} & 1  & $\rr$ & $x_1^2 + x_2^2 \neq 0$ & 0 & 0 & 0 & 0 & $e_{12}$ \\\hhline{|~|-|-|-|-|-|-|-|-|-|}
& ${\rm II}_{\pm}$ & 1  & 0 & 0 & 0 & $\rpm$ & 0 & 0 & $\pm e_{23}$ \\\hhline{|~|-|-|-|-|-|-|-|-|-|}
& ${\rm III}_{\pm}$ & 2 & $\rr$ & $x_1^2 + x_2^2 \neq 0$ & 0 & $\rpm$ & 0 & 0 & $e_{12} \pm e_{23}$ \\\hhline{|~|-|-|-|-|-|-|-|-|-|}
& {\rm IV} & 3  & $\rr$ & $\rr$ & $\rz$ & 0 & 0 & 0 & $e_{14}$ \\\hhline{|~|-|-|-|-|-|-|-|-|-|}
& $({\rm V}_\pm)^{\beta = 0}_{{\alpha=-2\beta}*}$ & 4  & $\rr$ & $\rr$ & $\rz$ & $\rpm$ & 0 & 0 & $e_{14} \pm e_{23}$ \\
\hline
 \multirow{7}{*}{$\mathfrak{s}_6$}& {\rm I} & 1  & $\rz$ & 0 & 0 & 0 & 0 & 0 & $e_{12}$ \\\hhline{|~|-|-|-|-|-|-|-|-|-|}
& {\rm I}$^{\rm ext}$ & 1  & 0 & $\rz$ & 0 & 0 & 0 & 0 & \\\hhline{|~|-|-|-|-|-|-|-|-|-|}
& {\rm II} & 2  & $\rz$ & $\rz$ & 0 & 0 & 0 & 0 & $e_{12} + e_{13}$ \\\hhline{|~|-|-|-|-|-|-|-|-|-|}
& {\rm III} & 3  & $\rr$ & $\rr$ & $\rz$ & 0 & 0 & 0 & $e_{14}$ \\\hhline{|~|-|-|-|-|-|-|-|-|-|}
& {\rm IV}$^*$ & 3  & $\rr$ & $\rr$ & 0 & $\rz$ & 0 & 0 & $e_{23}$ \\  \hhline{|~|-|-|-|-|-|-|-|-|-|}
& {\rm V}$^*$ & 2  & $\rr$ & 0 & $x_4$ & $\rz$ & 0 & 0 & $e_{14} + e_{23}$ \\
\hhline{|~|-|-|-|-|-|-|-|-|-|}
& ${\rm V}^{{\rm ext}*}$ & 2  & 0 & $\rr$ & $-x_4$ & $\rz$ & 0 & 0 &  \\

\hline
	\end{tabular}
	}
	\end{table}
}

{\scriptsize
\begin{table}[h]
\centering
\resizebox{\columnwidth}{!}{
\begin{tabular}{|c|c|c|c|c|c|c|c|c|c|}
\hline
$\mathfrak{g}$ & \textrm{Orbit} & {\rm Dim.}  & $x_1$ & $x_2$ & $x_3$ & $x_4$ & $x_5$ & $x_6$ & {\rm Repr. element} \\\hline
\hhline{|~|-|-|-|-|-|-|-|-|-|}
$\multirow{8}{*}{$\mathfrak{s}_6$}$& {\rm VI}$^*$ & 3  & $\rr$ & $\rz$ & $x_4$ & $\rz$ & 0 & 0 & $e_{13} + e_{14} + e_{23}$ \\\hhline{|~|-|-|-|-|-|-|-|-|-|}
& ${\rm VI}^{{\rm ext}*}$ & 3  & $\rz$ & $\rr$ & $-x_4$ & $\rz$ & 0 & 0 &  \\\hhline{|~|-|-|-|-|-|-|-|-|-|}	
& ${\rm VII}^*_{|k| \notin \{0,1\}}$ & 3 & $\rr$ & $\rr$ & $kx_4$ & $\rz$ & 0 & 0 & $k e_{14} + e_{23}$ \\
\hhline{|~|-|-|-|-|-|-|-|-|-|}
& {\rm VIII} & 4 & $\rr$ & $-x_4^2/x_5$ & $\rr$ & $-x_3$  & $\rz$ & 0 & $e_{24}$ \\
\hhline{|~|-|-|-|-|-|-|-|-|-|}
& {\rm VIII}$^{\rm ext}$ & 3 & $\frac{-x_4^2}{x_6}$ & $\rr$ & $\rr$ & $x_3$  & 0 & $\rz$ &   \\\hhline{|~|-|-|-|-|-|-|-|-|-|}
& {\rm IX}$^*$ & 4 & $\rr$ & $x_2+x_4^2/x_5\neq 0$ & $\rr$ & $-x_3$  & $\rz$ & 0 & $e_{24}+e_{13}$ \\
\hhline{|~|-|-|-|-|-|-|-|-|-|}
& ${\rm IX}^{{\rm ext}*}$ & 4 & $x _1 + \frac{x_4^2}{x_6} \neq 0$ & $\rr$ & $\rr$ & $x_3$  & 0 & $\rz$ &  \\\hhline{|-|-|-|-|-|-|-|-|-|-|}
\multirow{4}{*}{$\mathfrak{s}_7$} & {\rm I} & 2 & $\rr$ & $x_1^2 + x_2^2 \neq 0$ & 0 & 0 & 0 & 0 & $e_{12}$ \\\hhline{|~|-|-|-|-|-|-|-|-|-|}
& ${\rm II}_{\pm}$ & 3  & $\rr$ & $\rr$ & $\rpm$ & 0 & 0 & 0 & $\pm e_{14}$ \\\hhline{|~|-|-|-|-|-|-|-|-|-|}
& {\rm III}$^*$ & 3  & $\rr$ & $\rr$ & 0 & $\rz$ & 0 & 0 & $e_{23}$ \\\hhline{|~|-|-|-|-|-|-|-|-|-|}
& $({\rm IV}_{\pm})^*_{|k|\in \mathbb{R}_+}$ & 3  & $\rr$ & $\rr$ & $\rpm$ & $k x_3$ & 0 & 0& $\pm e_{14} \pm k e_{23}$ 
\\
\hline
\multirow{11}{*}{$\mathfrak{s}_8$} & {\rm I} & 1  & $\rz$ & 0 & 0 & 0 & 0 & 0 & $e_{12}$ \\\hhline{|~|-|-|-|-|-|-|-|-|-|}
& {\rm II} & 1  & 0 & $\rz$ & 0 & 0 & 0 & 0 & $e_{13}$ \\\hhline{|~|-|-|-|-|-|-|-|-|-|}
& {\rm III} & 2  & $\rz$ & $\rz$ & 0 & 0 & 0 & 0 & $e_{12} + e_{13}$ \\\hhline{|~|-|-|-|-|-|-|-|-|-|}
& {\rm IV} & 3  & $\rr$ & $\rr$ & $\rz$ & 0 & 0 & 0 & $e_{14}$ \\\hhline{|~|-|-|-|-|-|-|-|-|-|}
& {\rm V} & 3  & $\rr$ & $\rr$ & $\rz$ & $-(1+\alpha)x_3$ & 0 & 0 & $e_{14} - (1 + \alpha) e_{23}$ \\\hhline{|~|-|-|-|-|-|-|-|-|-|}
& {\rm VI} & 3  & $\rr$ &  $\frac{\alpha x_3^2}{x_5}$ & $\rr$ & $\alpha x_3$ & $\rz$ & 0 & $e_{24}$ \\\hhline{|~|-|-|-|-|-|-|-|-|-|}
&  {${\rm VII}^{\alpha=-1/2}_\pm$} &  {4}  &  {$\rr$} &  {$x_2x_5-\alpha x_3^2\in \rpm $} &  {$\rr$} &  {$\alpha x_3$} &  {$\rz$} &  {0} & $e_{13} \pm  e_{24}$ 
\\\hhline{|~|-|-|-|-|-|-|-|-|-|}
& {\rm VIII} & 3  & $-\frac{x_3^2}{x_6}$ & $\rr$ & $\rr$ & $x_3$ & 0 & $\rz$ & $e_{34}$ \\
\hline
\multirow{4}{*}{$\mathfrak{s}^1_8$} & ${\rm I}$ & 2  & $\rr$ & $x_1^2+x_2^2\neq 0$ & 0 & 0 & 0 & 0 & $e_{12}$ \\\hhline{|~|-|-|-|-|-|-|-|-|-|}
& ${\rm II}$ & 3  & $\rr$ & $\rr$ & $\rz$ & 0 & 0 & 0 & $e_{14}$ \\\hhline{|~|-|-|-|-|-|-|-|-|-|}
& ${\rm III}$  & 3  & $\rr$ & $\rr$ & $\rz$ & $-2x_3$ & 0 & 0 & $e_{14} -2 e_{23}$ \\\hhline{|~|-|-|-|-|-|-|-|-|-|}
&${\rm IV}$ & 4  & \multicolumn{3}{c|}{$x_1x_6-x_2x_5+x_3^2=0$}  & $ x_3$ & $\rr$ & $x_5^2+x_6^2\neq 0$ & $e_{24}$ 
\\
\hline
\multirow{3}{*}{$\mathfrak{s}_9$} & {\rm I} & 2 & $\rr$ & $x_1^2 + x_2^2 \neq 0$ & 0 & 0 & 0 & 0 & $e_{12}$ \\\hhline{|~|-|-|-|-|-|-|-|-|-|}
& {\rm II}$_\pm$ & 2  & $\rr$ & $\rr$ & $\rpm$ & 0 & 0 & 0 & $\pm e_{14}$ \\\hhline{|~|-|-|-|-|-|-|-|-|-|}
& {\rm III}$_\pm$ & 3  & $\rr$ & $\rr$ & $\rpm$ & $-2\alpha x_3$ & 0 & 0 & $\pm 2\alpha e_{23} \mp e_{14}$ \\
\hline
\multirow{5}{*}{$\mathfrak{s}_{10}$} & {\rm I} & 1  & $\rz$ & 0 & 0 & 0 & 0 & 0 & $e_{12}$ \\\hhline{|~|-|-|-|-|-|-|-|-|-|}
& {\rm II} & 2  & $\rr$ & $\rz$ & 0 & 0 & 0 & 0 & $e_{13}$ \\\hhline{|~|-|-|-|-|-|-|-|-|-|}
& {\rm III}$_\pm$ & 3  & $\rr$ & $\rr$ & $\rpm$ & 0 & 0 & 0 & $\pm e_{14}$ \\\hhline{|~|-|-|-|-|-|-|-|-|-|}
& {\rm IV}$_\pm$ & 3  & $\rr$ & $\rr$ & $\rpm$ & $-2x_3$ & 0 & 0 & $\pm e_{14} \mp 2 e_{23}$ \\\hhline{|~|-|-|-|-|-|-|-|-|-|}
& {\rm V} & 3  & $\rr$ & $\frac{x_3^2}{x_5}$ & $\rr$ & $x_3$ & $\rz$ & 0 & $e_{24}$ \\
\hline
\multirow{9}{*}{$\mathfrak{s}_{11}$} & {\rm I} & 1  & $\rz$ & 0 & 0 & 0 & 0 & 0 & $e_{12}$ \\\hhline{|~|-|-|-|-|-|-|-|-|-|}
& {\rm II} & 1  & 0 & $\rz$ & 0 & 0 & 0 & 0 & $e_{13}$ \\\hhline{|~|-|-|-|-|-|-|-|-|-|}
& {\rm III} & 2  & $\rz$ & $\rz$ & 0 & 0 & 0 & 0 & $e_{12} + e_{13}$ \\\hhline{|~|-|-|-|-|-|-|-|-|-|}
& {\rm IV} & 2  & $\rr$ & 0 & $\rz$ & 0 & 0 & 0 & $e_{14}$ \\\hhline{|~|-|-|-|-|-|-|-|-|-|}
& {\rm V} & 3  & $\rr$ & $\rz$ & $\rz$ & 0 & 0 & 0 & $e_{13} + e_{14}$ \\\hhline{|~|-|-|-|-|-|-|-|-|-|}
& {\rm VI} & 2  & 0 & $\rr$ & $\rz$ & $-x_3$ & 0 & 0 & $e_{14} - e_{23}$ \\\hhline{|~|-|-|-|-|-|-|-|-|-|}
& {\rm VII} & 3  & $\rz$ & $\rr$ & $\rz$ & $-x_3$ & 0 & 0 & $e_{12} + e_{14} - e_{23}$ \\\hhline{|~|-|-|-|-|-|-|-|-|-|}
& {\rm VIII} & 3 & $\rr$ & 0 & $\rr$ & 0 & $\rz$ & 0 & $e_{24}$ \\\hhline{|~|-|-|-|-|-|-|-|-|-|}
& {\rm IX} & 3  & $-\frac{x_3^2}{x_6}$ & $\rr$ & $\rr$ & $x_3$ & 0 & $\rz$ & $e_{34}$ \\
\hline
\multirow{4}{*}{$\mathfrak{s}_{12}$} & {\rm I} & 1  & $\rz$ & 0 & 0 & 0 & 0 & 0 & $e_{12}$ \\\hhline{|~|-|-|-|-|-|-|-|-|-|}
& {\rm II} & 3 & $\rr$ & $\rr$ & 0 & $x_2^2 + x_4^2 \neq 0$ & 0 & 0 & $e_{13}$ \\\hhline{|~|-|-|-|-|-|-|-|-|-|}
& {\rm III} & 3  & $\rr$ & $x_5$ & $\rr$ & $-x_3$ & $x_3^2 + x_5^2 \neq 0$ & 0 & $e_{13} + e_{24}$ \\\hhline{|~|-|-|-|-|-|-|-|-|-|}
& {\rm IV}$_{|k|>0}$ & 2  & $-\frac{x_2^2 + x_3^2}{x_6}$ & $\rr$ & $\rr$ & $x_3$ & $-x_2$ & $k$ & $ke_{34}$ \\
\hline
\multirow{7}{*}{$\mathfrak{n}_{1}$} & {\rm I} & 1 & $\rz$ & 0 & 0 & 0 & 0 & 0 & $e_{12}$ \\\hhline{|~|-|-|-|-|-|-|-|-|-|}
& {\rm II}$_\pm$ & 2 & $\rr$ & $\rpm$ & 0 & 0 & 0 & 0 & $\pm e_{13}$ \\\hhline{|~|-|-|-|-|-|-|-|-|-|}
& {\rm III} & 3 & $\rr$ & $\rr$ & $\rz$ & 0 & 0 & 0 & $e_{14}$ \\\hhline{|~|-|-|-|-|-|-|-|-|-|}
& {\rm IV} & 3 & $\rr$ & $\rr$ & 0 & $\rz$ & 0 & 0 & $e_{23}$ \\\hhline{|~|-|-|-|-|-|-|-|-|-|}
& {\rm V} & 4 & $\rr$ & $\rr$ & $\rz$ & $\rz$ & 0 & 0 & $e_{14} + e_{23}$ \\\hhline{|~|-|-|-|-|-|-|-|-|-|}
& {\rm VI}$^*$ & 4 & $\rr$ & $ \frac{x_3 x_4}{x_5}$ & $\rr$ & $\rr$ & $\rz$ & 0 & $e_{24}$ \\\hhline{|~|-|-|-|-|-|-|-|-|-|}
& {\rm VII}$_{\pm}^*$ & 5 & $\rr$& $x_2- \frac{x_3 x_4}{x_5}\in \rpm$&$\rr$ &$\rr$ & $\rz$ & 0 & $e_{24}\pm e_{13}$\\
\hline
\end{tabular}
}
\caption{Orbits of the action of ${\rm Aut} (\mathfrak{g})$ on $\mathcal{Y}_{\mathfrak{g}}$, their dimensions, elements, and representatives, for real four-dimensional indecomposable Lie algebras $\mathfrak{g}$. Each Lie algebra is divided into several subsets enumerated by roman numbers, which classify the orbits of ${\rm Aut}(\mathfrak{g})$ in $\mathcal{Y}_{\mathfrak{g}}$. The orbits of ${\rm Aut}_c(\mathfrak{g})$ are given by the (topologically) connected components of each orbit of ${\rm Aut}(\mathfrak{g})$. The trivial orbits given by $0\in\Lambda^2\mathfrak{g}$ are not considered. Symbol $A_\pm$ stands for two orbits, one with $+$ and other with $-$. In these cases, $\mathbb{R}_{\pm}$ stands for $\mathbb{R}_+$ for the first orbit and $\mathbb{R}_-$ for the second. If an orbit of Aut$(\mathfrak{g})$ in $\Lambda^2\mathfrak{g}$ belongs to $\mathcal{Y}_{\mathfrak{g}}$ only for a certain set of parameters, each family of possible values of the parameters is indicated in a subindex, first, or as a superindex,  if a second family of parameters is available. A star $(*)$ is used to denote $r$-matrices that are not solutions to the CYBE.}\label{Tab:g_orb_1}
\label{Tab:g_orb_2}
\end{table}
}

\section{Decomposable case: Lie algebra $\mathfrak{gl}_2$}\label{Ch:Darb_Sec:4Ddec}

As a final example illustrating the effectiveness of the classification method via Darboux families, we consider the decomposable Lie algebra $\mathfrak{gl}_2 \simeq \mathfrak{sl}_{2}\oplus \mathbb{R}$. The structure constants for the Lie algebra $\mathfrak{sl}_2$ are given in Table \ref{tabela3w}. On can immediately see that $(\Lambda^2\mathfrak{gl}_2)^{\mathfrak{gl}_2}=0$ and $(\Lambda^3\mathfrak{gl}_2)^{\mathfrak{gl}_2} = \langle e_{123} \rangle$.

By Remark \ref{Re:DerAlg}, the derivations of $\mathfrak{gl}_2$ take the form 
$$
\mathfrak{der}(\mathfrak{gl}_2)=\left\{\left(
\begin{array}{cccc}
 0 & \mu_{12} & \mu_{13} & 0 \\
 -\mu_{13} & \mu_{22} & 0 & 0 \\
 -\mu_{12} & 0 & -\mu_{22} & 0 \\
 0 & 0 & 0 & \mu_{44} \\
\end{array}
\right), \quad \mu_{12}, \mu_{13}, \mu_{22}, \mu_{44} \in \mathbb{R}\right\}.
$$
These derivations lead to the following basis of $V_{\mathfrak{gl}_2}$, fundamental vector fields of the action of ${\rm Aut}(\mathfrak{gl}_2)$ on $\Lambda^2 \mathfrak{gl}_2$:
\begin{align*}
&X_1:=x_4 \partial_{x_2} + x_5 \partial_{x_3} + x_1 \partial_{x_4} -x_3 \partial_{x_6}, & &X_2=-x_4 \partial_{x_1}+x_6 \partial_{x_3}-x_2 \partial_{x_4}-x_3 \partial_{x_5},\\
&X_3=x_1 \partial_{x_1}-x_2 \partial_{x_2}+x_5 \partial_{x_5}-x_6 \partial_{x_6}, & &X_4=x_3 \partial_{x_3}+x_5 \partial_{x_5}+x_6 \partial_{x_6}.
\end{align*}

For an element $r \in \Lambda^2 \mathfrak{sl}_2$, we get
$$
[r,r] = -2(2x_1x_2-x_4^2)e_{123}-2(x_1x_3-x_4x_5)e_{124}+2(x_2x_3+x_4x_6)e_{123}+2(x_2x_5+x_1x_6)e_{234}.
$$
Since $(\Lambda^3\mathfrak{gl}_2)^{\mathfrak{gl}_2}=\Lambda^3\mathfrak{sl}_2\simeq \langle e_{123}\rangle$, the mCYBE for $\mathfrak{gl}_2$ reads
$$
x_1x_3-x_4x_5=0,\quad  x_2x_3+x_4x_6=0 \quad x_2x_5+x_1x_6=0.
$$
Meanwhile, the CYBE reads
$$
2x_1x_2-x_4^2=0,\qquad x_1x_3-x_4x_5=0,\quad  x_2x_3+x_4x_6=0 \quad x_2x_5+x_1x_6=0.
$$

Let us construct a Darboux tree for this Lie algebra.

\begin{center}
{\small
\begin{tikzpicture}[
roundnode/.style={rounded rectangle, draw=green!40, fill=green!3, very thick, minimum size=2mm},
squarednode/.style={rectangle, draw=red!30, fill=red!2, thick, minimum size=4mm}
]
\node[squarednode] (brick) at (0,0) {$x_3=x_5=x_6=0$};

\node[squarednode] (u)  at (4,0) {$x_4=x_2=x_1=0$};
\node[squarednode] (d)  at (4,-1) {$x_1^2+x_2^2+x_4^2\neq 0$};

\node[squarednode] (du)  at (7,-1) {$2x_1x_2 - x_4^2=0$};
\node[squarednode] (dd)  at (7,-2) {$2x_1x_2 - x_4^2 = k, k\neq 0$};

%%%
\node[squarednode] (0) at (12,0) {0};
\node[squarednode] (I) at (12,-1) {I};
\node[squarednode] (II) at (12,-2) {II};

%%%
\draw[->] (brick.east) -- (u.west);
\draw[->] (brick.east) -- (d.west);

\draw[->] (d.east) -- (du.west);
\draw[->] (d.east) -- (dd.west);

\end{tikzpicture}
}
{\small
\begin{tikzpicture}[
roundnode/.style={rounded rectangle, draw=green!40, fill=green!3, very thick, minimum size=2mm},
squarednode/.style={rectangle, draw=red!30, fill=red!2, thick, minimum size=4mm},
rnf/.style={rounded rectangle, draw=blue!40, fill=blue!3, very thick, minimum size=2mm}
]
%%%
\node[squarednode] (brick)  at (0,0) {$x_3^2+x_5^2+x_6^2\neq 0$};

\node[squarednode] (u)  at (4,0) {$x_1=x_2=x_4=0$};
\node[squarednode] (d)  at (4,-2) {$x_1^2+x_2^2+x_4^2\neq 0$};

\node[squarednode] (uu)  at (7,0) {$2x_5x_6 + x_3^2 = 0$};
\node[squarednode] (ud)  at (8,-1) {$2x_5x_6 + x_3^2 = k, \, k \neq 0$};

\node[rnf] (du)  at (7,-2) {(...)};

%%%
\node[squarednode] (0) at (12,0) {III};
\node[squarednode] (I) at (12,-1) {IV};

%%%
\draw[->] (brick.east) -- (u.west);
\draw[->] (brick.east) -- (d.west);

\draw[->] (u.east) -- (uu.west);
\draw[->] (u.east) -- (ud.west);

\draw[->] (d.east) -- (du.west);

\end{tikzpicture}
}
\end{center}

Let us finalise the analysis of the orbits of ${\rm Aut}_c(\mathfrak{gl}_2)$ by considering the last region of $\mathbb{R}^6$ denoted in the Darboux tree by the blue box. It can be easily verified that in this case, the functions $f := 2x_1 x_2 - x_4^2$ and $g := 2x_5 x_6 + x_3$ give a Darboux family. Moreover, the level sets $f = k_1$ for all values $k_1 \in \mathbb{R}$ form a foliation of $\mathbb{R}^3$ spanned by $x_1, x_2, x_4$. Similar situation happens for $g = k_2$ with any $k_2 \in \mathbb{R}$. Thus, we can restrict the analysis of the orbits to any pair of level sets $f = k_1, g = k_2$ for arbitrary $k_1, k_2 \in \mathbb{R}$. In consequence, we get nine qualitatively different cases depending on the sign of $k_1$ and $k_2$. 

In order to facilitate the analysis, we will introduce suitable coordinates on each level set. First, let us focus on the function $f = 2x_1 x_2 - x_4^2$. Set $y_1 := \frac{1}{\sqrt{2}}(x_1 + x_2)$ and $y_2 := \frac{1}{\sqrt{2}}(x_1 - x_2)$. Then, $f = y_1^2 - y_2^2 - x_4^2$. Similarly, we introduce $y_5 := \frac{1}{\sqrt{2}}(x_5 + x_6)$ and $y_6 := \frac{1}{\sqrt{2}}(x_5 - x_6)$, which gives $g = y_5^2 - y_6^2 + x_3^2$.

In the next step, we define the appropriate coordinates on each level set $f = k_1$ and $g_{k_2}$ depending on the value of $k_i \in \mathbb{R}$, $i \in \{1,2\}$. 
For $k_1 < 0$ and $k_2 > 0$, we introduce the coordinates $(\theta_i, \phi_i)$, $\theta_i \in \mathbb{R}$, $\phi_i \in [0, 2\pi[$, by
\begin{align*}
&y_1 = \sqrt{|k_1|} \sinh{\theta_1}, & &y_2 = \sqrt{|k_1|} \cosh{\theta_1} \cos{\phi_1}, & &x_4 = \sqrt{|k_1|} \cosh{\theta_1} \sin{\phi_1} \\
&x_3 = \sqrt{k_2} \cosh{\theta_2} \cos{\phi_2}, & &y_5 = \sqrt{k_2} \cosh{\theta_2} \sin{\phi_2}, & &y_6 = \sqrt{k_2} \sinh{\theta_2}
\end{align*}
For $k_1 > 0$ and $k_2 < 0$, we define $(\theta_i, \delta_i)$, $\theta_i, \delta_i \in \mathbb{R}$, by
\begin{align*}
&y_1 = a \sqrt{k_1} \cosh{\theta_1} \cosh{\delta_1}, & &y_2 = a \sqrt{k_1} \cosh{\theta_1} \sinh{\delta_1}, & &x_4 = \sqrt{k_1} \sinh{\theta_1} \\
&x_3 = \sqrt{|k_2|} \sinh{\theta_2}, & &y_5 = b \sqrt{|k_2|} \cosh{\theta_2} \sinh{\delta_2}, & &y_6 = b \sqrt{|k_2|} \cosh{\theta_2} \cosh{\delta_2}
\end{align*}
where $a, b \in \{\pm 1\}$. 
For $k_1 = k_2 = 0$, the level set $f_{k_1} = 0$ consists of two cones without a vertex. This surface will be parametrised by the coordinates $(\phi_1, z_1)$, $\phi_1 \in [0,2\pi[$, $z_1 \in \rz$, given by
\begin{align*}
&y_1 = z_1, & &y_2 = z_1 \cos{\phi_1}, & &x_4 = z_1 \sin{\phi_1} \\
&x_3 = z_2 \sin{\phi_2}, & &y_5 = z_2 \cos{\phi_2}, & &y_6 = z_2.
\end{align*}

Let us focus on the components of the modified Yang--Baxter equation, namely
$$
x_1x_3-x_4x_5=0,\quad  x_2x_3+x_4x_6=0 \quad x_2x_5+x_1x_6=0.
$$
We will consider an equivalent system of equations, obtained by adding and substracting the second equation from the first one and expressed in the coordinate system $\{y_1, y_2, x_3, x_4, y_5, y_6\}$:
$$
y_1 x_3 - y_6 x_4 = 0, \quad y_2 x_3 - y_5 x_4 = 0, \quad y_1 y_5 - y_2 y_6 = 0.
$$
By tedious computations, we solve the system above in the coordinates chosen for each nine cases of pairs $k_1, k_2$. For several cases the system is inconsistent and it yields no solution. The only nontrivial solutions, expressed in the initial coordinate system $\{x_i\}_{i=1}^{6}$, are the following:
\begin{itemize}
\item for $k_1 < 0$ and $k_2 > 0$, we get: 
\begin{itemize}
\item[1)] $\phi_2 = \pi/2 - \phi_1$, $\theta_1 = \theta_2$, which yields
\begin{align*}
&x_1 = \sqrt{\frac{|k_1|}{2}} (\sinh{\theta_1} + \cosh{\theta_1} \cos{\phi_1}), & 
&x_2 = \sqrt{\frac{|k_1|}{2}} (\sinh{\theta_1} - \cosh{\theta_1} \cos{\phi_1}), \\ 
&x_3 = \sqrt{k_2} \cosh{\theta_1} \sin{\phi_1}, & 
&x_4 = \sqrt{|k_1|} \cosh{\theta_1} \sin{\phi_1}, \\ 
&x_5 = \sqrt{\frac{k_2}{2}} (\cosh{\theta_1} \cos{\phi_1} + \sinh{\theta_1}), & 
&x_6 = \sqrt{\frac{k_2}{2}} (\cosh{\theta_1}\cos{\phi_1} - \sinh{\theta_1}), 
\end{align*}
\item[2)] $\phi_2 = 3\pi/2 - \phi_1$, $\theta_1 = -\theta_2$, which yields
\begin{align*}
&x_1 = \sqrt{\frac{|k_1|}{2}} (\sinh{\theta_1} + \cosh{\theta_1} \cos{\phi_1}), & 
&x_2 = \sqrt{\frac{|k_1|}{2}} (\sinh{\theta_1} - \cosh{\theta_1} \cos{\phi_1}), \\ 
&x_3 = - \sqrt{k_2} \cosh{\theta_1} \sin{\phi_1}, & 
&x_4 = \sqrt{|k_1|} \cosh{\theta_1} \sin{\phi_1}, \\ 
&x_5 = -\sqrt{\frac{k_2}{2}} (\cosh{\theta_1} \cos{\phi_1} + \sinh{\theta_1}), & 
&x_6 = -\sqrt{\frac{k_2}{2}} (\cosh{\theta_1}\cos{\phi_1} - \sinh{\theta_1}),
\end{align*}
\end{itemize}
\item for $k_1 > 0$ and $k_2 < 0$, we get: 
\begin{itemize}
\item[3)] $\theta_2 = \theta_1$ and $\delta_2 = \delta_1$ (for $a =+1$, $b =+1$), which yields
\begin{align*}
&x_1 = \sqrt{\frac{k_1}{2}} \cosh{\theta_1} (\cosh{\delta_1} + \sinh{\delta_1}), & 
&x_2 = \sqrt{\frac{k_1}{2}} \cosh{\theta_1} (\cosh{\delta_1} - \sinh{\delta_1}), \\ 
&x_3 = \sqrt{|k_2|} \sinh{\theta_1}, & 
&x_4 = \sqrt{k_1} \sinh{\theta_1}, \\ 
&x_5 = \sqrt{\frac{|k_2|}{2}} \cosh{\theta_1} (\sinh{\delta_1} + \cosh{\delta_1}), & 
&x_6 = \sqrt{\frac{|k_2|}{2}} \cosh{\theta_1} (\sinh{\delta_1} - \cosh{\delta_1}), 
\end{align*}
\item[4)] $\theta_2 = -\theta_1$ and $\delta_2 = \delta_1$ (for $a =+1$, $b =-1$), which yields
\begin{align*}
&x_1 = \sqrt{\frac{k_1}{2}} \cosh{\theta_1} (\cosh{\delta_1} + \sinh{\delta_1}), & 
&x_2 = \sqrt{\frac{k_1}{2}} \cosh{\theta_1} (\cosh{\delta_1} - \sinh{\delta_1}), \\ 
&x_3 = -\sqrt{|k_2|} \sinh{\theta_1}, & 
&x_4 = \sqrt{k_1} \sinh{\theta_1}, \\ 
&x_5 = -\sqrt{\frac{|k_2|}{2}} \cosh{\theta_1} (\sinh{\delta_1} + \cosh{\delta_1}), & 
&x_6 = -\sqrt{\frac{|k_2|}{2}} \cosh{\theta_1} (\sinh{\delta_1} - \cosh{\delta_1}), 
\end{align*}
\item[5)] $\theta_2 = -\theta_1$ and $\delta_2 = \delta_1$ (for $a =-1$, $b =+1$), which yields
\begin{align*}
&x_1 = -\sqrt{\frac{k_1}{2}} \cosh{\theta_1} (\cosh{\delta_1} + \sinh{\delta_1}), & 
&x_2 = -\sqrt{\frac{k_1}{2}} \cosh{\theta_1} (\cosh{\delta_1} - \sinh{\delta_1}), \\ 
&x_3 = -\sqrt{|k_2|} \sinh{\theta_1}, & 
&x_4 = \sqrt{k_1} \sinh{\theta_1}, \\ 
&x_5 = \sqrt{\frac{|k_2|}{2}} \cosh{\theta_1} (\sinh{\delta_1} + \cosh{\delta_1}), & 
&x_6 = \sqrt{\frac{|k_2|}{2}} \cosh{\theta_1} (\sinh{\delta_1} - \cosh{\delta_1}), 
\end{align*}
\item[6)] $\theta_2 = \theta_1$ and $\delta_2 = \delta_1$ (for $a =-1$, $b =-1$), which yields
\begin{align*}
&x_1 = -\sqrt{\frac{k_1}{2}} \cosh{\theta_1} (\cosh{\delta_1} + \sinh{\delta_1}), & 
&x_2 = -\sqrt{\frac{k_1}{2}} \cosh{\theta_1} (\cosh{\delta_1} - \sinh{\delta_1}), \\ 
&x_3 = \sqrt{|k_2|} \sinh{\theta_1}, & 
&x_4 = \sqrt{k_1} \sinh{\theta_1}, \\ 
&x_5 = -\sqrt{\frac{|k_2|}{2}} \cosh{\theta_1} (\sinh{\delta_1} + \cosh{\delta_1}), & 
&x_6 = -\sqrt{\frac{|k_2|}{2}} \cosh{\theta_1} (\sinh{\delta_1} - \cosh{\delta_1}), 
\end{align*}
\end{itemize}
\item for $k_1 = 0, k_2 = 0$, we get 
\begin{itemize}
\item[7)] $\phi_2 = \phi_1$, which yields
\begin{align*}
&x_1 = \frac{z_1}{\sqrt{2}} (1 + \cos{\phi_1}), & 
&x_2 = \frac{z_1}{\sqrt{2}} (1 - \cos{\phi_1}), & 
&x_3 = z_2 \sin{\phi_1},\\ 
&x_4 = z_1 \sin{\phi_1},& 
&x_5 = \frac{z_2}{\sqrt{2}} (\cos{\phi_1} + 1), & &x_6 = \frac{z_2}{\sqrt{2}} (\cos{\phi_1} - 1). 
\end{align*}
\end{itemize}
\end{itemize}

In each case, we compute the rank of the distribution $\mathcal{D}_{\mathfrak{gl}_2}$ spanned by the vector fields $V_{\mathfrak{gl}_2}$, which informs us about the dimension of the orbits of the action of ${\rm Aut}_c (\mathfrak{gl}_2)$ on the analysed subset of $\Lambda^2 \mathfrak{gl}_2$. For solutions 1)-6) we obtain ${\rm rk} \mathcal{D}_{\mathfrak{gl}_2} = 2$, whereas we get ${\rm rk} \mathcal{D}_{\mathfrak{gl}_2} = 3$ in the case 7). Thus, we conclude that each presented solution is a single orbit of ${\rm Aut_c} (\mathfrak{gl}_2)$.

For convenience, we choose the following representative element of each family of orbits:
\begin{itemize}
\item[1)] $r_1 = \sqrt{k_2} e_{14} + \sqrt{|k_1|} e_{23}$ (for $\phi_1 = \pi/2$, $\theta_1 = 0$ and $k_1 < 0$, $k_2 > 0$)
\item[2)] $r_2 = \sqrt{k_2} e_{14} - \sqrt{|k_1|} e_{23}$ (for $\phi_1 = 3\pi/2$, $\theta_1 = 0$ and $k_1 < 0$, $k_2 > 0$)
\item[3)] $r_3 = \sqrt{\frac{k_1}{2}} e_{12} + \sqrt{\frac{k_1}{2}} e_{13} + \sqrt{\frac{|k_2|}{2}} e_{24} - \sqrt{\frac{|k_2|}{2}} e_{34}$ (for $\theta_1 = 0$, $\delta_1 = 0$ and $k_1 > 0$, $k_2 < 0$)
\item[4)] $r_4 = \sqrt{\frac{k_1}{2}} e_{12} + \sqrt{\frac{k_1}{2}} e_{13} - \sqrt{\frac{|k_2|}{2}} e_{24} + \sqrt{\frac{|k_2|}{2}} e_{34}$ (for $\theta_1 = 0$, $\delta_1 = 0$ and $k_1 > 0$, $k_2 < 0$)
\item[5)] $r_5 = -\sqrt{\frac{k_1}{2}} e_{12} - \sqrt{\frac{k_1}{2}} e_{13} + \sqrt{\frac{|k_2|}{2}} e_{24} - \sqrt{\frac{|k_2|}{2}} e_{34}$ (for $\theta_1 = 0$, $\delta_1 = 0$ and $k_1 > 0$, $k_2 < 0$)
\item[6)] $r_6 = -\sqrt{\frac{k_1}{2}} e_{12} - \sqrt{\frac{k_1}{2}} e_{13} - \sqrt{\frac{|k_2|}{2}} e_{24} + \sqrt{\frac{|k_2|}{2}} e_{34}$ (for $\theta_1 = 0$, $\delta_1 = 0$ and $k_1 > 0$, $k_2 < 0$)
\item[7)] $r_7 = z_1 \sqrt{2} e_{12} + z_2 \sqrt{2} e_{24}$ (for $\phi_1 = 0$).
\end{itemize}

Finally, let us study the orbits of the ${\rm Aut}(\mathfrak{gl}_2)$. The group of automorphisms of $\mathfrak{gl}_2$ reads
\begin{equation*}
\begin{split}
&{\rm Aut}(\mathfrak{gl}_2) = \left\{
\left( 
\begin{array}{cccc}
T_1^1 & T^2_1 & \frac{1 - (T_1^1)^2}{2 T_1^2} & 0 \\
\frac{1 - (T_1^1)^2}{2 T_3^1} & -\frac{(1 - T_1^1) T_1^2}{2 T_3^1} & \frac{(T_1^1 - 1)^2 (1 + T_1^1)}{4 T_1^2 T_3^1} & 0 \\
T_3^1 & \frac{T_1^2 T_3^1}{T_1^1 + 1} & -\frac{(1 + t_1^1) T_3^1}{2 T_1^2} & 0 \\
0 & 0 & 0 & T_4^4
\end{array}
\right): 
\begin{array}{c}
T_1^2, T^1_3, T_4^4 \in \rz \\
T_1^1 \in \mathbb{R} \backslash \{0, -1\}
\end{array}
\right\} \cup \\
&\left\{
\left( 
\begin{array}{cccc}
-1 & 0 & T_1^3 & 0 \\
0 & 0 & \frac{T_1^3}{T_3^1} & 0 \\
T_3^1 & \frac{T_3^1}{T_1^3} & \frac{- T_1^3 T_3^1}{2} & 0 \\
0 & 0 & 0 & T_4^4
\end{array}
\right): 
\begin{array}{c}
T_1^3 \in \rz \\ 
T_3^1 \in \rz \\ 
T^4_4 \in \rz
\end{array}
\right\} \cup
 \left\{
\left( 
\begin{array}{cccc}
-1 & T^2_1 & 0 & 0 \\
\frac{T_1^2}{T_3^2} & -\frac{(T_1^2)^2}{2 T_3^2} & \frac{1}{T_3^2} & 0 \\
0 & T_3^2 & 0 & 0 \\
0 & 0 & 0 & T_4^4
\end{array}
\right): 
\begin{array}{c}
T_1^2 \in \mathbb{R} \\
T_3^2 \in \rz \\
T_4^4 \in \rz
\end{array}
\right\} \cup \\
&\left\{
\left( 
\begin{array}{cccc}
0 & T^2_1 & \frac{1}{2 T_1^2} & 0 \\
\frac{1}{2 T_3^1} & -\frac{T_1^2}{2 T_3^1} & \frac{1}{4 T_1^2 T_3^1} & 0 \\
T_3^1 & T_1^2 T_3^1 & -\frac{T_3^1}{2 T_1^2} & 0 \\
0 & 0 & 0 & 1
\end{array}
\right): 
\begin{array}{c}
T_1^2 \in \rz \\ 
T_3^1 \in \rz \\
T^4_4 \in \rz
\end{array}
\right\} \cup
\left\{
\left( 
\begin{array}{cccc}
1 & 0 & T_1^3 & 0 \\
-\frac{T_1^3}{T_3^3} & \frac{1}{T_3^3} & -\frac{(T_1^3)^2}{2 T_3^3} & 0 \\
0 & 0 & T_3^3 & 0 \\
0 & 0 & 0 & T_4^4
\end{array}
\right): 
\begin{array}{c}
T_1^3 \in \mathbb{R} \\
T_3^3 \in \rz \\ 
T_4^4 \in \rz
\end{array}
\right\}. \cup \\
&\left\{
\left( 
\begin{array}{cccc}
1 & T^2_1 & 0 & 0 \\
0 & -\frac{T^2_1}{T_3^1} & 0 & 0 \\
T_3^1 & \frac{T_3^1 T_1^2}{2} & -\frac{T_3^1}{T_1^2} & 0 \\
0 & 0 & 0 & T_4^4
\end{array}
\right): 
\begin{array}{c}
T_1^2 \in \rz \\ 
T_3^1 \in \rz \\
T_4^4 \in \rz
\end{array}
\right\} \cup
\left\{
\left( 
\begin{array}{cccc}
-1 & 0 & 0 & 0 \\
0 & 0 & \frac{1}{T_3^2} & 0 \\
0 & T_3^2 & 0 & 0 \\
0 & 0 & 0 & T_4^4
\end{array}
\right): T_4^2,  T_4^4 \in \rz
\right\} \cup \\
&\left\{
\left( 
\begin{array}{cccc}
1 & 0 & 0 & 0 \\
0 & \frac{1}{T_3^3} & 0 & 0 \\
0 & 0 & T_3^3 & 0 \\
0 & 0 & 0 & T_4^4
\end{array}
\right): T_3^3,  T_4^4 \in \rz
\right\}.
\end{split}
\end{equation*}
By applying the elements of ${\rm Aut}(\mathfrak{gl}_2)$, we can identify some of the representative elements listed before. Moreover, it can be verified by direct computations that the level set $f = k_1$ is invariant under the action of ${\rm Aut}(\mathfrak{gl}_2)$, whereas the level sets $g = k_2$ with the same sign of $k_2$ can be connected by an element of ${\rm Aut}(\mathfrak{gl}_2)$. Thus, we obtain the following families of inequivalent r-matrices:
\begin{equation*}
r_{\alpha < 0} = e_{14} + |\alpha| e_{23}, \qquad
r_{\alpha > 0} = \alpha e_{12} + \alpha e_{13} +  e_{24} - e_{34}, \qquad
r_0 = e_{12} + e_{24}.
\end{equation*}
The remaining representative nonzero elements of classes of $r$-matrices relative to ${\rm Aut}(\mathfrak{gl}_2)$ associated with other Darboux families specified in the Darboux tree for $\mathfrak{gl}_2$ read
\begin{align*}
&r_{I, \pm} = e_{12} + e_{13} \pm \sqrt{2} e_{23}, &
&r_{II, k > 0, \pm} = \pm \sqrt{\frac{k}{2}} e_{12} \pm \sqrt{\frac{k}{2}} e_{13}, &
&r_{II, k < 0} = \sqrt{|k|} e_{23}, \\
&r_{III, \pm} = \pm \sqrt{2} e_{14} + e_{24} - e_{34}, &
&r_{IV, k = 1} = e_{14}, &
&r_{IV, k = -1, \pm} = \pm e_{24} \mp e_{34}.
\end{align*}
Thus, we obtain the following classes of $r$-matrices relative to ${\rm Aut}(\mathfrak{gl}_2)$:
\begin{align*}
&r_{1, \alpha > 0, \pm} = \pm \alpha e_{12} \pm \alpha e_{13}, &
&r_{2, \alpha > 0} = \alpha e_{23}, &
&r_{3, \pm} = e_{12} + e_{13} \pm \sqrt{2} e_{23} \\
&r_{4, \pm} = \pm e_{24} \mp e_{34}, &
&r_{5} = e_{14}, &
&r_{6, \pm} = e_{24} - e_{34} \pm \sqrt{2} e_{14}, \\
&r_{7, \alpha > 0} = \alpha e_{13} + e_{24}, &
&r_{8, \alpha > 0} = \alpha e_{12} + \alpha e_{13} + e_{24} - e_{34}, &
&r_{9} = e_{12} + e_{24}.
\end{align*}
Let us note that these classes agree with the results obtained in Section \ref{Ch:alg_Sec:4D}.

\chapter{Foliated Lie systems}

Recall that a {\it Lie system} is a nonautonomous  system of first-order ordinary differential equations (ODEs) in normal form
	whose general solution can be written as a function, a so-called
	{\it superposition rule}, of a family of particular solutions and
	some constants related to initial conditions \cite{CGM00,CGM07,Dissertationes,LS,PW}. The Lie--Scheffers theorem \cite{CGM07,LS,PW,LS20} states that a Lie system
	amounts to a $t$-dependent vector field taking values in a finite-dimensional Lie algebra
	of vector fields, called a {\it Vessiot--Guldberg Lie algebra}  of the Lie system. Although most differential equations are not Lie systems \cite{CGL08,Dissertationes}, geometric properties of such systems were extensively studied and relevant applications were found both in mathematics and physics \cite{CGM00,Dissertationes,LS20,LV15,PW}, which motivated their further analysis. The theory of Lie systems has been then extended in different manners to analyse much more general families of systems of differential equations. {\it PDE Lie systems} \cite{CGM07,OG00,Ra11} were applied to the study of conditional symmetries and B\"acklund transformations  \cite{CGL19,LG18}. {\it Quasi-Lie schemes} and {\it quasi-Lie systems}
	were developed to investigate integrability conditions for systems of ODES and PDEs, e.g. dissipative Milne-Piney equations and nonlinear oscillators \cite{CGL08,CGL19,CLL09}. Superposition rules for discrete differential equations were considered by Winternitz and his collaborators \cite{PW04,RW84,RSW97}. Super-superposition rules, which are aimed at the analysis of general solutions to superdifferential equations, were  analysed in \cite{BGHW87,BGHW90}. A detailed survey on the previous and other generalisations of the theory of Lie systems can be found in \cite{LS20}.

In this Chapter, based on \cite{CLW22}, we focus on the least studied generalisation of Lie systems, the {\it foliated Lie systems} \cite{CGM00}. We prove that such systems admit an analogue of a superposition rule, the so-called {\it foliated superposition rule}, and we discuss how to obtain such functions. We also discuss that the relation between standard Lie systems and automorphic Lie systems (see \cite{CGM00,Dissertationes,LS20,Ve93,Ve99} for details) extends to the realm of foliated Lie systems. This in turn allows us to define a special class of such systems, the so-called {\it foliated automorphic Lie systems}, which is studied in detail. Moreover, we show that foliated Lie systems provide a framework to describe a new generalisation of Ermakov systems. Finally, we show that a particular class of $\mathcal{F}$-foliated Lie systems on Lie algebras, related to standard Lie--Hamilton systems, can be studied Poisson structures induced by $r$-matrices.

	\section{On the definition of foliated Lie systems}\label{Sec:FLS_def}
	
	Let us introduce our stratified Lie system notion and
	illustrate its usefulness with several examples of physical and mathematical interest. 
	
	\begin{definition} 
 A $t$-dependent vector field $X$ on $N$ is called {\it foliated Lie system} on $N$, if it can be written as
		\begin{equation}\label{DFLS}
			X(t,x)=\sum_{\alpha=1}^rg_\alpha(t,x) X_\alpha(x),\qquad \forall t\in \mathbb{R},\quad \forall x\in N,
		\end{equation}
		where $X_1,\ldots,X_r$ span an $r$-dimensional  real  Lie algebra $V_X$ of vector fields such that the distribution $\mathcal{D}^{V_X}$ associated with $V_X$ is regular and $g_1,\ldots,g_r$ are common $t$-dependent constants of motion of the elements of $V$, i.e. $X_\alpha g_\beta =0$ on $N$ for every $\alpha,\beta=1,\ldots,r$. The expression (\ref{DFLS}) is referred to as a {\it decomposition} of $X$ and $V$ is a {\it Vessiot--Guldberg Lie algebra} of the foliated Lie system $X$. 
	\end{definition}
	
	In view of the results by Stefan and Sussmann \cite{La18,St80,Su73}, the regular distribution $\mathcal{D}^V$ associated with a Vessiot--Guldberg Lie algebra $V$ of a stratified Lie system is integrable. Thus, it gives rise to a regular stratification $\mathcal{F}$ of $N$ (also called a {\it foliation}, see Section \ref{Sec:StSus}) such that the tangent spaces to its strata  coincide  with $\mathcal{D}^V$. For that reason, a foliated Lie system is sometimes called {\it $\mathcal{F}$-foliated Lie system}, if the form of the stratification $\mathcal{F}$ is relevant in our considerations. 

 As the vector fields of $V$ are tangent to the
strata of the stratification $\mathcal{F}$, the system $X$ can be restricted to the strata of $\mathcal{F}$. Since $X_\beta g_\alpha =0$ for every $\alpha,\beta=1,\ldots,r$, the 
	restrictions of $g_1,\ldots,g_r$ to a stratum $\mathcal{F}_\lambda$ of $\mathcal{F}$ give rise to $r$ functions depending only on $t$ and not on any value $x \in \mathcal{F}_{\lambda}$. Indeed, consider a smooth curve $\gamma:u\in [0,1]\subset\mathbb{R}\mapsto \gamma(u)\in\mathcal{F}_\lambda$ connecting two points of a stratum $\mathcal{F}_\lambda$. Then, the tangent vector to each point $\gamma(u)$ of the curve $\gamma$, let us say $\dot \gamma(u)$, can be written as a linear combination $\dot \gamma(u)=\sum_{\alpha=1}^rf_\alpha(u)X_{\alpha}(\gamma(u))$ of the values of the tangent vectors $X_1(\gamma(u)),\ldots,X_r(\gamma(u))$ spanning   $T_{\gamma(u)}\mathcal{F}_\lambda$ for certain $u$-dependent functions $f_1,\ldots,f_r:[0,1]\rightarrow \mathbb{R}$. Moreover, for any function $g_i$, $i \in \{1,\ ldots, r\}$ we obtain
	$$
	g_i(t,\gamma(1))-g_i(t,\gamma(0))=\int_0^1\frac{\partial }{\partial u }[g_i(t,\gamma(u))]du=\int_0^1\sum_{\alpha=1}^rf_\alpha(u) (X_\alpha g_i)(t, \gamma(u)) du=0.
	$$
	Thus, $g(t,x)=g(t,x')$ for arbitrary points $x,x'\in \mathcal{F}_\lambda$ and any $t\in \mathbb{R}$. Therefore, the restriction of (\ref{DFLS}) to any stratum $\mathcal{F}_{\lambda} \in \mathcal{F}$ becomes a Lie system. More specifically, an $\mathcal{F}$-foliated Lie system $X$ with a Vessiot--Guldberg Lie algebra $V_X$ gives rise to a Lie system $X_{\mathcal{F}_{\lambda}}$ on each stratum $\mathcal{F}_{\lambda} \in \mathcal{F}$ with a Vessiot--Guldberg Lie algebra $V_{X_{\mathcal{F}_{\lambda}}}$ defined by the restriction to $\mathcal{F}_{\lambda}$ of the vector fields $V \in V_{X}$. Consequently, the dimensions of all the induced  Vessiot--Guldberg Lie algebras on the strata are bounded.
 
However, given a foliation $\mathcal{F}$ of a manifold $N$, it is not true in general that a family of Lie systems $X_{\mathcal{F}_{\lambda}}$ defined on each stratum $\mathcal{F}_{\lambda} \in \mathcal{F}$ gives rise to a foliated Lie system on $N$. 
Consider, for instance, the $t$-dependent vector field on $\mathbb{R}^3$ of the form 
	$$
	X(t,x,y,z)= \partial_x + \sum_{n=0}^{\infty}e^{t(n+1)}f_n(z)x^n \partial_y,
	$$
	where it is assumed that $f_0(z)$ does never vanish and each $f_n(z)$, for $n\in \mathbb{N}$, is a smooth function such that $f_n(z) = 0$ for $z\in [-n,n]$ and $f_n(z) \neq 0$ for $z \in \mathbb{R} \backslash [-n,n]$. It can be easily verified that $X$ gives rise to a regular integrable generalised distribution $\mathcal{D}^{V_X}$ spanned by the vector fields
	$$
	\mathcal{D}^{V_X}_{(x,y,z)}=\left\langle \partial_x, \partial_y \right\rangle, 
	$$
	with leaves $\mathcal{G}_z$ of the form $\mathcal{G}_z=\{(x,y,z)\in \mathbb{R}^3:x,y\in \mathbb{R}\}$ for any $z\in \mathbb{R}$.
At points with $k<|z_k|\leq k+1$, $k \in \mathbb{N}$, the vector field $X$ takes the form
	$$
	X(t,x,y,z_k)=\partial_x + \sum_{n=0}^{k}e^{t(n+1)}f_n( {z_0})x^n \partial_y = \sum_{n=0}^{k+1} g_k(t,z_0) X_n,
	$$
where $g_0(t,z) = 1$, $X_0 = \partial_x$ and $g_n(t,z) = e^{tn}f_{n-1}(z)$, $X_n = x^{n-1} \partial_y$ for $n \geq 1$. Thus, the restriction of $X$ to $\mathcal{G}_{z_k}$, denoted by $X_{z_k}$, gives a Lie system with a Vessiot--Guldberg Lie algebra $V_{z_k}=\langle \partial_x,\partial_y, x \partial_y\ldots, x^k\partial_y \rangle$. Obviously, $\dim V_{z_k} = k+1$. In consequence, we obtain the family of Lie systems $X_{z}$ on the foliation of $\mathbb{R}^3$ given by the leaves $\mathcal{G}_z$. However, we have $V_{z_k} \subset V_{z_p}$ for $p > k$, which imply that the dimensions of the Vessiot--Guldberg Lie algebras $V_z$ associated with the Lie systems in this family cannot be bounded, contrary to the earlier result. Therefore, $X$ is not a foliated Lie system.

Let us provide several examples of foliated Lie systems with relevant physical applications. 

 \begin{example}
Consider a Lie system defined by a $t$-dependent vector field on a manifold $M$ of the form
	\begin{equation}
		X(t,x)=\sum_{\alpha=1}^r a_\alpha(t)X_\alpha(x),\qquad \forall x\in M, \qquad \forall t\in \mathbb{R},
		\label{Liesyst}
	\end{equation}
	so that there exist $r^3$ real numbers $c_{\alpha\beta}\,^\gamma$, with $\alpha,\beta,\gamma=1,\ldots,r$, such that $[X_\alpha, X_\beta]={\displaystyle\sum_{\gamma=1}^r}c_{\alpha\beta}\,^\gamma X_\gamma$. 
Take any $f\in C^\infty(M)$. In general, $ f\, X$ is not a Lie system anymore, since the vector fields 
	$\bar X_\alpha=f\,X_\alpha$, with $\alpha=1,\ldots,r$, do not need to close on a finite-dimensional Lie algebra. However, if 
	$f \in C^\infty(M)$ is such that $X_\alpha f=0$ for $  \alpha=1,\ldots, r$, then the new
	$t$-dependent vector field $f\, X$ is a foliated Lie system. 
 \end{example}

 \begin{example}\label{Ex:FLS_cot}
	Consider a $t$-dependent completely integrable Hamiltonian system $(h,\omega,T^*\mathbb{R}^n)$, where $h\in C^\infty(\mathbb{R}\times T^*\mathbb{R}^n)$ and $T^*\mathbb{R}^n$ is equipped with its canonical symplectic form $\omega_0$. Let two coordinate systems on $T^*\mathbb{R}^n$ be denoted by $\{q^i,p_i\}$ and $\{Q^i,P_i\}$, $i \in \{1, \ldots, n\}$. Assume that there exists a $t$-dependent canonical transformation mapping $h(t,q,p)$ onto a new $t$-dependent Hamiltonian $H(t,P)$ that depends only on the momentum coordinates $P_i$ on $T^*\mathbb{R}^n$ \cite{AM87,Go80}.  In this case, the Hamilton equations for $H(t,P)$ read
	\begin{equation}\label{FLS1}
		\frac{dQ_i}{dt}=\frac{\partial H}{\partial P_i}(t,P),\qquad \frac{dP_i}{dt}=0,\qquad i=1,\ldots,n.
	\end{equation}
	This system is associated with the
	$t$-dependent vector field on $T^*\mathbb{R}^n$ given by
\begin{equation}\label{FLS1b}
	X^{HJ}(t,Q,P)=\sum_{i=1}^n\frac{\partial H}{\partial P_i}(t,P) \partial_{Q_i}.
	\end{equation}
	The family of vector fields $Y_t := X_{HJ}(t, Q,P)$, parametrised by $t \in \mathbb{R}$, span an abelian Lie algebra $V^{X_{HJ}}$. For certain choices of the Hamiltonian, $V^{X_{HJ}}$ is finite-dimensional. Then, $X_{HJ}$ gives a Lie system. However, this does not need to be the case. For instance, if $H(t,P)={\displaystyle \sum_{i=1}^n}\cos(tP_i)$, then
	\begin{equation}\label{FLS}
		X^{HJ}(t,Q,P)=-\sum_{i=1}^n\sin(tP_i)t\frac{\partial}{\partial Q_i}
	\end{equation}
	and $V^{X^{HJ}}$ is the infinite-dimensional abelian Lie algebra  spanned by the  vector fields 
	$
	Z_\lambda={\displaystyle\sum_{i=1}^n}\sin(\lambda P_i)\frac{\partial}{\partial Q_i},$ with $\lambda\in \mathbb{R}_+.$ Therefore, $X^{HJ}$ is not a Lie system in this particular  case.
	
	Independently of the specific form of $H(t,P)$, the manifold $T^*\mathbb{R}^n$ always admits a foliation $\mathcal{F}_{HJ}$ by leaves
	\begin{equation}\label{leavesTsQ}
		\mathcal{F}^{HJ}_k =\{(Q,P)\in T^*\mathbb{R}^n:P_1=k_1,\ldots,P_n=k_n\},
	\end{equation}
	parametrised by an $n$-tuple $k=(k_1,\ldots,k_n)\in \mathbb{R}^n$. 
	System (\ref{FLS1}) reduces on each $\mathcal{F}^{HJ}_k$ to 
	\begin{equation}\label{FLSk}
		\frac{dQ_i}{dt}=\frac{\partial H}{\partial P_i}(t,k),\qquad i=1,\ldots,n,
	\end{equation}
	which describes the integral curves of the restriction of (\ref{FLS1b}) to each $\mathcal{F}^{HJ}_k$, namely
	$$
	X^{HJ}_k=\sum_{i=1}^n\frac{\partial H}{\partial P_i}(t,k)\frac{\partial }{\partial Q_i}.
	$$
Given any fixed $k\in \mathbb{R}^n$, the family of vector fields $Y^{(k)}_t := X^{HJ}_{k}(t, Q,P)$ on $\mathcal{F}_{HJ}^k$, parametrised by $t \in \mathbb{R}$, span a Lie algebra, $V_k$, which is contained in the finite-dimensional Lie algebra spanned by the restrictions to $\mathcal{F}_k^{HJ}$ of the vector fields on $T^*\mathbb{R}^n$ given by 
	\begin{equation}\label{VI}
		V^{HJ}=\left\langle \frac{\partial}{\partial Q_1},\ldots,\frac{\partial}{\partial Q_n} \right\rangle.
	\end{equation}
	Hence, $V_k$ is finite-dimensional and $X^{HJ}_k$ is a Lie system for every  $k\in \mathbb{R}^n$. Moreover, the vector fields in $V^{HJ}$ span  an $n$-dimensional  integrable regular distribution on $T^*\mathbb{R}^n$, whose leaves are given by (\ref{leavesTsQ}).
	Therefore, (\ref{FLS1}) becomes a foliated Lie system with an associated Vessiot--Guldberg Lie algebra $V^{HJ}$. 
 \end{example}

 \begin{example}\label{Ex:FLS_lax}
 Let $\{e_i\}_{i = \{1, \ldots, n\}}$ and $\{h_j\}_{j = \{1, \ldots, n\}}$ be two bases of $\mathbb{R}$. Consider the semi-direct sum $\mathfrak{g}^{lp} := \mathbb{R}^n\rtimes \mathbb{R}^n$ with a basis $\{v_k\}_{i=1,\ldots,2n}$, where $v_i = e_i$ and $v_{i+n} = h_i$ for $i \in \{1, \ldots, n\}$ and the Lie algebra structure on $\mathfrak{g}^{lp}$ is given by
\begin{equation}\label{AlgRel}
[v_i, v_j] = 0, \quad  [v_{i+n}, v_{j+n}] = 0, \quad [v_{i+n},v_j]=2\delta^i_{j}v_j,\qquad i,j=1,\ldots,n,
\end{equation}
	where $\delta^i_j$ is the Kronecker delta function.
	Let $\{v^1,\ldots,v^{2n}\}$ be the basis of $\mathfrak{g}^{lp*}$   dual to the basis $\{v_1, \ldots v_{2n}\}$ of $\mathfrak{g}^{lp}$. 
	Hence, $\{v^1,\ldots,v^{2n}\}$ becomes a global coordinate system on $\mathfrak{g}^{lp}$. Define a family of $t$-dependent functions  $f_\alpha:  \mathbb{R}\times \mathfrak{g}^{lp} \to \mathbb{R}$  of the form  $f_\alpha(t,v)=f_\alpha(t,v^{n+1},\ldots,v^{2n})$ for $\alpha=1,\ldots,n$. 
	We set
	\begin{equation}\label{eq}
		\frac{dv}{dt}=-\sum_{\alpha=1}^{n}f_\alpha(t,v){\rm ad}_{v_\alpha}(v)=:X^{lp}(t,v),\qquad \forall v \in \mathfrak{g}^{lp},\forall t\in \mathbb{R},
	\end{equation}
	where ${\rm ad}_{v_\alpha}(v)=[v_\alpha,v]$.
	
	If $\mathfrak{g}^{lp}$ is a matrix Lie algebra, the Lie bracket of $\mathfrak{g}^{lp}$ becomes the matrix commutator. Then,  (\ref{eq}) can be rewritten in the more common manner as a Lax pair
	\begin{equation}\label{FLS2}
		\frac{dv}{dt}=[v,m],\qquad m(t,v)=\sum_{\alpha=1}^{n}f_\alpha(t,v) v_\alpha.
	\end{equation}
	If $\mathfrak{g}^{lp}$ is not a matrix Lie algebra, one can extend by $C^\infty(\mathbb{R}\times \mathfrak{g}^{lp})$-linearity the Lie bracket in $\mathfrak{g}^{lp}$ to the space $C^\infty(\mathbb{R}\times \mathfrak{g}^{lp})\otimes \mathfrak{g}^{lp}$ of $\mathfrak{g}^{lp}$-valued $t$-dependent functions on $\mathfrak{g}^{lp}$. This allows us to use the expression (\ref{FLS2}) to describe every system (\ref{eq}).
	
	Consider the unique connected and simply connected Lie group $G^{LP}$ with Lie algebra $\mathfrak{g}^{lp}$. Then, $G^{LP}$ acts on $\mathfrak{g}^{lp}$ via the adjoint action ${\rm Ad}:(g,v)\in G^{LP}\times \mathfrak{g}^{lp}\mapsto {\rm Ad}_gv\in \mathfrak{g}^{lp}$. The fundamental vector fields of the adjoint action read
	\begin{equation}\label{FVFg}
		X^{\rm ad}_w(v)=\frac{d}{dt}\bigg|_{t=0}{\rm Ad}_{\exp(-tw)}v=-{\rm ad}_w(v),
		\qquad \forall v,w\in \mathfrak{g}^{lp}.
	\end{equation}
	This enables us to bring (\ref{eq}) into the form 
	\begin{equation}\label{SF}
		\frac{dv}{dt}=\sum_{\alpha=1}^{n}f_\alpha (t,v)X^{\rm ad}_{v_\alpha}(v).
	\end{equation}
	In our chosen coordinate system and in view of (\ref{AlgRel}), the fundamental vector fields of the adjoint action for the Lie algebra $\mathfrak{g}^{lp}$ take the form 
	$$
	X^{\rm ad}_{e_\alpha}(v)=2v^{\alpha+n} \partial_{v^{\alpha}},\qquad 
	X^{\rm ad}_{h_\alpha}(v)=-2v^{\alpha} \partial_{v^{\alpha}},\qquad \alpha=1,\ldots,n.
	$$
	Hence, $X^{\rm ad}_{v_\alpha} f_\beta=0,$ for $\alpha=1,\ldots,2n$, $\beta=1,\ldots,n$. In particular, (\ref{eq}) takes the form
	$$
	\frac{dv^\alpha}{dt}=2f_\alpha(t,v)v^{\alpha+n},\qquad \frac{dv^{\alpha+n}}{dt}=0, \qquad \alpha=1,\ldots,n.
	$$
	Consider the Lie algebra $V^{\mathfrak{g}^{lp}}=\langle X^{\mathfrak{g}^{lp}}_i\rangle$, where $X^{\mathfrak{g}^{lp}}_i := 2 \partial_{v^{i}}$ for $i \in \{1, \ldots, n\}$. The previous system can be rewritten as
	\begin{equation}\label{FLS3}
		\frac{dv}{dt}=\sum_{\alpha=1}^ng_\alpha(t,v)X^{\mathfrak{g}^{lp}}_\alpha(v),\qquad g_\alpha(t,v)=f_\alpha(t,v)v^{\alpha+n},\qquad \alpha=1,\ldots,n.
	\end{equation}
	The elements of $V^{\mathfrak{g}^{lp}}$ span a distribution $\mathcal{D}^{V^{\mathfrak{g}^{lp}}}$ of rank $n$ on $\mathfrak{g}^{lp}$. The leaves of the foliation, $\mathcal{F}^{\mathfrak{g}^{lp}}$, associated with $\mathcal{D}^{V^{\mathfrak{g}^{lp}}}$ on $\mathfrak{g}^{lp}$ take the form
	\begin{equation}\label{Leafg}
		\mathcal{F}^{\mathfrak{g}^{lp}}_k=\{ (v^1,\ldots,v^{2n})\in \mathfrak{g}^{lp}\mid v^{n+1}=k_{1},\ldots v^{2n}=k_{n}\},\qquad \forall k=(k_{1},\ldots,k_{n})\in \mathbb{R}^n.
	\end{equation}
	Moreover, the functions $g_
	\alpha$, with $\alpha=1,\ldots,n$, are constants of motion of the vector fields belonging to $V^{\mathfrak{g}^{lp}}$. Therefore, (\ref{FLS3}) is a foliated Lie system associated with a Vessiot--Guldberg Lie algebra $V^{\mathfrak{g}^{lp}}$. In particular, the integration of the distribution $\mathcal{D}^{V^{\mathfrak{g}^{lp}}}$ gives rise to the foliation $\mathcal{F}^{\mathfrak{g}^{lp}}\!\!$. We can say then that $X$ is an $\mathcal{F}^{\mathfrak{g}^{lp}}\!\!$-foliated Lie system.
	
	The $t$-independent Lax pair studied in \cite{BV90} can be considered as a $t$-independent foliated Lie system of the form (\ref{FLS3}) with $f_\alpha=\partial h/\partial v^{\alpha+n}$, with $h=h(v^{n+1},\ldots,v^{2n})$ and $\alpha=1,\ldots,n$.
	
Finally, let us note that the system (\ref{FLS3}) does not need to be a Lie system. For instance, let the $t$-dependent functions $g_\alpha$ take the form 
	$$
	g_\alpha(t,v)= \sin(t v^{\alpha+n})v^{\alpha+n},\qquad \alpha=1,\ldots,n.
	$$
Then,
	$$
	X^{lp}(t,v)= 2\sum_{\alpha=1}^n\sin(t v^{\alpha+n})v^{\alpha+n}\frac{\partial}{\partial v^\alpha}
	$$
	and $V^{X^{lp}}$ is infinite-dimensional.
 \end{example}
	
	\section{Foliated superposition rules}
Recall that a Lie system is usually defined as a system admitting a superposition rule. However, our definition of a foliated Lie system does not refer to any such analogous notion and instead, it focuses on the properties of the decomposition of the system. It is then natural to ask whether one can define an analogue of a superposition rule that is suitable for foliated Lie systems. This section is devoted to address that question.
	
Let us consider again the foliated Lie system (\ref{FLS1}) on $T^*\mathbb{R}^n$. This system was associated with a foliation $\mathcal{F}^{HJ}$ whose leaves $\mathcal{F}^{HJ}_k$, with $k\in \mathbb{R}^n$, were given in (\ref{leavesTsQ}). Particular solutions to (\ref{FLS1}) have the form
	$$
	(Q^{(1)}(t),P_1=P_1^0,\ldots,P_n=P_n^0)
	$$
	for a point $P^0=(P^0_1,\ldots,P^0_n)\in \mathbb{R}^n$ and a particular solution, 
	$Q^{(1)}(t)$, to
	$$
	\frac{dQ_i}{dt}=\frac{\partial H}{\partial P_i}(t,P^0),\qquad i=1,\ldots,n.
	$$
	Moreover,  $(Q^{(1)}(t)+\hat{Q}, P^0_1,\ldots,P^0_n)$, for any $\hat {Q}\in \mathbb{R}^n$, is  another particular solution of (\ref{FLS1}) within $\mathcal{F}^{HJ}_k$. In fact, every solution to (\ref{FLS1}) within  
	$\mathcal{F}^{HJ}_k$ can be written as
	\begin{equation}
 		(Q(t),P^0_1,\ldots,P^0_n)=(Q^{(1)}(t)+\hat{Q},P^0_1,\ldots,P^0_n),
	\end{equation}
	for a unique $\hat Q\in \mathbb{R}^n$. Define a map $\Psi^{HJ}:T^*\mathbb{R}^n\times T^*\mathbb{R}^n\rightarrow T^*\mathbb{R}^n$ as
	\begin{equation}\label{SR_cot}
	\Psi^{HJ}(Q^{(1)},P^{(1)};\hat{Q},\hat{P})=(Q^{(1)}+\hat{Q},\hat {P}).
	\end{equation}
Then, every solution, $\xi(t)=(Q(t),P(t))$, to (\ref{FLS1}) with an initial condition in any $\mathcal{F}^{HJ}_k$ can be brought into the form
	$$
	\xi(t)=\Psi^{HJ}(\xi^{(1)}(t),\lambda),
	$$
	in terms of a particular solution $\xi^{(1)}(t)$ of (\ref{FLS1}) with an initial condition in $\mathcal{F}^{HJ}_k$ and a parameter $\lambda\in \mathcal{F}^{HJ}_k$. Moreover, there exists a one-to-one relation between the initial conditions of the solutions $\xi(t)$ of (\ref{FLS1}) in $\mathcal{F}^{HJ}_k$ and the values of $\lambda\in \mathcal{F}^{HJ}_k$. Finally, one has that $\Psi^{HJ}$ is a standard superposition rule involving one particular solution for any  Lie system on $T^*\mathbb{R}^n$ of the form
	$$
	\frac{dQ_i}{dt}=b_i(t),\qquad \frac{dP_i}{dt}=0,\qquad i=1,\ldots,n,
	$$
	for arbitrary $t$-dependent functions $b_1(t),\ldots,b_n(t)$. 
	
	One can see that the foliated Lie system related to the Lax pair (\ref{FLS2}) shares similar properties. This motivates to introduce 
	the following definition.
	
	\begin{definition} \label{Def:FLS}
		An {\it $\mathcal{F}$-foliated superposition rule} depending on $m$ particular solutions for a system $X$ on a manifold $N$ relative to a foliation $\mathcal{F}$ on $N$  is a superposition rule $\Psi: N^m \times N \rightarrow N$ for a certain Lie system on $N$ such that
		\begin{enumerate} 
			\item $\Psi(\mathcal{F}_k^{m+1})\subset \mathcal{F}_k$ for every leaf $\mathcal{F}_k$ of $\mathcal{F}$, 
			\item every particular  solution, $x(t)$, of $X$ containing a point of a leaf $\mathcal{F}_k$ of $\mathcal{F}$, namely there exists $t_0\in \mathbb{R}$ such that $x(t_0)\in \mathcal{F}_k$, takes the form
			$$
			x(t)=\Psi(x_{(1)}(t),\ldots,x_{(m)}(t),\lambda)
			$$
			in terms of $m$ generic particular solutions $x_{(1)}(t),\ldots, x_{(m)}(t)$ of $X$ contained in $\mathcal{F}_k$ and a unique $\lambda\in \mathcal{F}_k$. 
		\end{enumerate}
	\end{definition}

 We will refer to an $\mathcal{F}$-foliated superposition rule simply as a foliated superposition rule if $\mathcal{F}$ is known from the  context or its specific form is irrelevant to our considerations. 
	
It follows from Definition \ref{Def:FLS} that if $X$ admits an $\mathcal{F}$-foliated superposition rule, then the particular solutions of $X$ are always contained in a leaf of $\mathcal{F}$. Moreover, particular solutions of $X$ taking values in each leaf of $\mathcal{F}$ are always described  in terms of the same number $m$ of particular solutions within the same leaf.  

Recall that the system $X$ can be restricted to any leaf $\mathcal{F}_k \subset \mathcal{F}$ as every vector field $X_t := X(t, \cdot)$, for $t\in \mathbb{R}$, is tangent to the leaves of $\mathcal{F}$. Then, all the solutions of $X$ within $\mathcal{F}_k$ can be written as a function of a generic family of particular solutions of $X$ within $\mathcal{F}_k$. Additionally, the foliated superposition rule becomes a standard superposition rule for the restriction of $X$ to $\mathcal{F}_k$, which becomes a Lie system.

	In view of Definition \ref{Def:FLS}, the foliated Lie system (\ref{FLS1}) admits an $\mathcal{F}^{HJ}$-foliated superposition rule $\Psi^{HJ}:T^*\mathbb{R}^n\times T^*\mathbb{R}^n\rightarrow T^*\mathbb{R}^n$ given by (\ref{SR_cot}), whereas the leaves of $\mathcal{F}^{HJ}$ take the form (\ref{leavesTsQ}).
	
	\section{Foliated Lie--Scheffers theorem}\label{HOFSR}
	
In this section, we study the properties of first-order systems of ODEs admitting a foliated superposition rule. Our results can be
	considered as a generalisation of the standard geometric Lie--Scheffers theorem \cite{CGM07,LS,PW}. As a byproduct, our characterisation gives us an algorithm to obtain foliated superposition rules.
	
Let us begin with a well-known fact in Lie systems theory (see \cite{CGM07,Dissertationes,LS20}).
	
	\begin{definition}  If $X$ is a vector field on $N$, let us say $X={\displaystyle\sum_{i=1}^n}X^i(x)\partial/\partial x^i$, its {\it diagonal prolongation} to $N^m$ is the vector field  on $N^m$ given by
		$$
		X^{[m]}=\sum_{a=1}^m\sum_{i=1}^{n}X^i(x_{(a)})\frac{\partial}{\partial x_{(a)}^i}.
		$$
	\end{definition}
	\begin{lemma}\label{Lem:LS}
		Let $X_1,\ldots, X_r$ be vector fields on $N$ whose diagonal
		prolongations to $N^m$ are linearly independent at a generic point.
		If $b_1,\ldots,b_r\in C^\infty(N^{m+1})$, then ${\displaystyle\sum_{\alpha=1}^r}b_\alpha X^{[m+1]}_\alpha=Y^{[m+1]}$  for a vector field $Y$ on $N$ if and only if $b_1,\ldots,b_r$ are constant.
	\end{lemma}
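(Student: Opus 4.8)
The backward implication is immediate: if $b_1,\dots,b_r$ are constant, then $\sum_{\alpha}b_\alpha X^{[m+1]}_\alpha$ is a linear combination with constant coefficients of diagonal prolongations, and one checks directly from the coordinate formula that a constant-coefficient combination of diagonal prolongations of $X_1,\dots,X_r$ is the diagonal prolongation of $Y=\sum_\alpha b_\alpha X_\alpha$. So the whole content is the forward implication, and my plan is to assume $Z:=\sum_{\alpha=1}^r b_\alpha X^{[m+1]}_\alpha=Y^{[m+1]}$ for some vector field $Y$ on $N$ and deduce that each $b_\alpha$ is constant on $N^{m+1}$.

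\textbf{Main steps.} First I would write everything in coordinates: if $X_\alpha=\sum_i X_\alpha^i(x)\partial_{x^i}$ on $N$ and $Y=\sum_i Y^i(x)\partial_{x^i}$, then the equality $Z=Y^{[m+1]}$ read off at the block of coordinates $x_{(a)}$, for each $a=1,\dots,m+1$, gives
\begin{equation*}
\sum_{\alpha=1}^r b_\alpha(x_{(1)},\dots,x_{(m+1)})\, X_\alpha^i(x_{(a)}) = Y^i(x_{(a)}),\qquad \forall i,\ \forall a.
\end{equation*}
The key observation is that the right-hand side $Y^i(x_{(a)})$ depends only on the single point $x_{(a)}$, whereas $b_\alpha$ a priori depends on all $m+1$ points. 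So fixing $a$ and fixing all the other points $x_{(b)}$ with $b\neq a$ to generic values, differentiating the identity with respect to one of those frozen coordinates $x_{(b)}^j$ (with $b\neq a$) kills the right-hand side and yields $\sum_\alpha (\partial b_\alpha/\partial x_{(b)}^j)\, X_\alpha^i(x_{(a)})=0$ for all $i$. Now I would invoke the linear independence hypothesis on the diagonal prolongations. The cleanest way to use it: the functions $\partial b_\alpha/\partial x_{(b)}^j$ are functions on $N^{m+1}$, and the relation $\sum_\alpha (\partial b_\alpha/\partial x_{(b)}^j)\, X_\alpha^i(x_{(a)})=0$ says that the vector field $\sum_\alpha (\partial b_\alpha/\partial x_{(b)}^j)\, X^{[m+1]}_\alpha$ vanishes identically (since its components in every block $x_{(a)}$ vanish — for $a\neq b$ directly, and for $a=b$ one gets it from the same differentiation applied with a different choice, or by a symmetry/relabelling argument). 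Because the diagonal prolongations $X^{[m+1]}_1,\dots,X^{[m+1]}_r$ are linearly independent at a generic point of $N^{m+1}$, a vanishing $C^\infty(N^{m+1})$-linear combination of them must have all coefficients zero on a dense open set, hence everywhere by continuity. Thus $\partial b_\alpha/\partial x_{(b)}^j\equiv 0$ for all $\alpha$, all $b$, all $j$; that is, each $b_\alpha$ is locally constant, and since $N^{m+1}$ is connected (manifolds are assumed connected throughout the paper, and products of connected manifolds are connected), each $b_\alpha$ is constant.

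\textbf{Anticipated obstacle.} The subtle point is the bookkeeping in the step where I want to conclude that $\sum_\alpha (\partial b_\alpha/\partial x_{(b)}^j)X^{[m+1]}_\alpha$ vanishes: differentiating the coordinate identity with respect to $x_{(b)}^j$ only directly controls the components of this vector field in the blocks $x_{(a)}$ with $a\neq b$, and to get the $x_{(b)}$-block component one needs an extra argument — either freeze $x_{(b)}$ and vary a different point, exploiting that $m+1\geq 2$ so there is always at least one ``other'' point available, or appeal to the symmetry of the diagonal prolongation construction under permuting the $m+1$ copies of $N$. I expect to handle this by noting that for any fixed block $a_0$, one can always choose $b\neq a_0$ (as $m+1\geq 2$), so the argument above shows $\partial b_\alpha/\partial x^{j}_{(b)}\equiv 0$ for every $b$, ranging $b$ over all of $\{1,\dots,m+1\}$; this covers all partial derivatives of $b_\alpha$ without ever needing the $a=b$ component. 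The other routine care point is the phrase ``linearly independent at a generic point'': I would make precise that this means there is a dense open subset $U\subseteq N^{m+1}$ on which the $r$ vectors $X^{[m+1]}_\alpha$ are pointwise linearly independent, so a smooth $C^\infty(N^{m+1})$-combination vanishing on all of $N^{m+1}$ forces the coefficients to vanish on $U$, hence on $\overline{U}=N^{m+1}$.
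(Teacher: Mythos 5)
Your proof is correct. Note that the paper itself gives no proof of Lemma \ref{Lem:LS}: it is quoted as a well-known fact with references to the Lie-systems literature, so there is no in-paper argument to compare against; your argument is the standard one from those sources (read off the identity block by block, differentiate the block-$a$ identity with respect to a coordinate of a different block $b$ to kill the right-hand side, and use linear independence of the prolongations to force all cross-partials of the $b_\alpha$ to vanish, whence constancy on the connected manifold $N^{m+1}$).

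The only point worth tightening is the one you already flagged: in the main body you appeal to linear independence of the $X^{[m+1]}_\alpha$ at a generic point of $N^{m+1}$, which would require controlling the $a=b$ block of $\sum_\alpha (\partial b_\alpha/\partial x^j_{(b)})X^{[m+1]}_\alpha$ — something your differentiation does not directly give. Your repair in the final paragraph is the right one and should be the argument of record: for each fixed $b$, the $m$ identities with $a\neq b$ say exactly that $\sum_\alpha (\partial b_\alpha/\partial x^j_{(b)})\,X^{[m]}_\alpha$ vanishes at the point $(x_{(a)})_{a\neq b}\in N^m$ (with $x_{(b)}$ frozen as a parameter), and the hypothesis is precisely linear independence of the $X^{[m]}_\alpha$ on a dense open subset of $N^m$; the preimage of that subset under the projection forgetting the $b$-th factor is dense in $N^{m+1}$, so the coefficients vanish everywhere by continuity. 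Written that way, the $N^{m+1}$-independence and the $a=b$ block are never needed, and the hypothesis as stated (independence of the prolongations to $N^m$, not $N^{m+1}$) is used exactly.
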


As noted in the previous section, the definition of a foliated Lie system does not refer to the notion of a foliated superposition rule. However, these two concepts are closely related as shown by the following theorems.
 
	\begin{theorem}\label{FLST}
	  If $X$ is an $\mathcal{F}$-foliated Lie system, then it admits an $\mathcal{F}$-foliated superposition rule $\Psi:N^m\times N\rightarrow N$ such that  $
		ms\geq \dim V,
		$
		where $s$ is the dimension of the leaves of $\mathcal{F}$. 
	\end{theorem}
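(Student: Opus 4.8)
The plan is to adapt the proof of the classical geometric Lie--Scheffers theorem to the foliated setting: one performs the standard construction of superposition rules leaf by leaf and then checks that the resulting objects assemble into a single globally defined smooth map enjoying the two properties of Definition~\ref{Def:FLS}.

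First I would record the leafwise picture already implicit in the excerpt. Writing $X(t,x)=\sum_{\alpha=1}^{r}g_\alpha(t,x)X_\alpha(x)$ with $\{X_1,\dots,X_r\}$ a basis of the Vessiot--Guldberg Lie algebra $V$ of $X$, the distribution $\mathcal{D}^V$ is regular of rank $s$ and integrates, by Theorem~\ref{Th:StSus}, to a foliation $\mathcal{F}$ whose leaves have dimension $s$. Since $X_\alpha g_\beta=0$, the functions $g_\alpha(t,\cdot)$ are constant along each leaf $\mathcal{F}_k$, so $X$ restricts on $\mathcal{F}_k$ to the Lie system $X_{\mathcal{F}_k}(t,\cdot)=\sum_\alpha \widehat{g}_\alpha^k(t)\,X_\alpha|_{\mathcal{F}_k}$ with Vessiot--Guldberg Lie algebra $V_{\mathcal{F}_k}:=\{X|_{\mathcal{F}_k}:X\in V\}$, of dimension at most $r$. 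Because $N$ is connected, a nonzero element of $V$ cannot vanish on an open family of leaves, so $\dim V_{\mathcal{F}_k}=\dim V$ for $k$ in an open dense set of leaves.

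Next I would run the usual superposition-rule construction simultaneously on all multi-leaves. Fix $m\in\mathbb{N}$ and consider the diagonal prolongation $X^{[m+1]}$ on $N^{m+1}$; it is tangent to the product foliation $\mathcal{F}^{m+1}$, whose leaves are the products $\mathcal{F}_{k_0}\times\cdots\times \mathcal{F}_{k_m}$, and on each diagonal multi-leaf $\mathcal{F}_k^{m+1}$ it becomes the genuine Lie system $\sum_\alpha \widehat{g}_\alpha^k(t)\,X_\alpha^{[m+1]}|_{\mathcal{F}_k^{m+1}}$, again because $g_\alpha(t,\cdot)$ is constant on $\mathcal{F}_k$. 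By Theorem~\ref{Th:StSus} the vector fields $X_1^{[m+1]},\dots,X_r^{[m+1]}$ span an integrable generalised distribution on $N^{m+1}$, whose integral leaves lie inside the multi-leaves of $\mathcal{F}^{m+1}$. Choosing $m$ large enough that the restrictions $X_1^{[m]},\dots,X_r^{[m]}$ to a generic multi-leaf $\mathcal{F}_k^{m}$ are linearly independent at a generic point (possible since $V_{\mathcal{F}_k}$ is finite-dimensional, exactly as in the classical theorem) one obtains on each $\mathcal{F}_k^{m+1}$ a family of $s$ functionally independent common first integrals of $X_1^{[m+1]},\dots,X_r^{[m+1]}$, which I would package as a map $F^k:\mathcal{F}_k^{m+1}\to\mathcal{F}_k$ such that $x_{(0)}\mapsto F^k(x_{(1)},\dots,x_{(m)},x_{(0)})$ is, for generic $(x_{(1)},\dots,x_{(m)})$, a local diffeomorphism of $\mathcal{F}_k$. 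The foliated superposition rule $\Psi$ is then defined, as in the classical case, by solving $F^k(x_{(1)},\dots,x_{(m)},x)=\lambda=F^k(x_{(1)},\dots,x_{(m)},x_{(0)})$ for $x=\Psi(x_{(1)},\dots,x_{(m)},\lambda)$; since $F$ is built from common first integrals of the $X^{[m+1]}_\alpha$, Lemma~\ref{Lem:LS} and the classical argument show that $\Psi$ is a superposition rule for every Lie system $\sum_\alpha b_\alpha(t)X_\alpha$ on $N$, and by construction $\Psi(\mathcal{F}^{m+1}_k)\subset\mathcal{F}_k$, while any solution of $X$ meeting $\mathcal{F}_k$ stays in $\mathcal{F}_k$ and is there a solution of $X_{\mathcal{F}_k}$, hence is recovered from $m$ generic solutions in $\mathcal{F}_k$ and a unique $\lambda\in\mathcal{F}_k$, giving the two properties of Definition~\ref{Def:FLS}. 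Finally the bound: restricting $\Psi$ to a generic leaf with $\dim V_{\mathcal{F}_k}=\dim V$ yields an ordinary superposition rule with $m$ particular solutions for a Lie system on the $s$-dimensional manifold $\mathcal{F}_k$ whose Vessiot--Guldberg Lie algebra has dimension $\dim V$; the necessary condition of the classical Lie--Scheffers theorem, namely injectivity at a generic point of the evaluation $V_{\mathcal{F}_k}\to\bigoplus_{a=1}^m T_{x_{(a)}}\mathcal{F}_k\cong\mathbb{R}^{ms}$, then forces $ms\ge\dim V$.

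The main obstacle, and the point where the foliated case genuinely differs from the classical one, is to carry out this construction smoothly and simultaneously in the leaf parameter: one must pick a single $m$ that works for a dense open set of leaves at once, and must check that the leafwise first integrals $F^k$ glue into a smooth map $F$ defined on a neighbourhood in $N^{m+1}$ of the multi-diagonal $\{(x_{(0)},\dots,x_{(m)})\in N^{m+1}: x_{(0)},\dots,x_{(m)} \text{ lie in a common leaf}\}$. I expect this to follow because the vector fields $X_\alpha$, the distribution they prolong, and hence the straightening coordinates obtained by flowing out (or by solving the associated regular linear PDE system) all depend smoothly on the base point; but making this uniformity precise, together with the handling of the non-generic leaves where $\dim V_{\mathcal{F}_k}$ may drop and the prolonged distribution may become singular, is the technical heart of the argument.
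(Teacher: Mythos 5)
Your proposal is correct and follows essentially the same route as the paper: restrict to leaves to get genuine Lie systems, prolong diagonally, extract $s$ extra common first integrals beyond the transverse ones, and invert by the implicit function theorem; the bound $ms\geq\dim V$ comes exactly as you say, from the prolonged vector fields being tangent to $\mathcal{F}_k^m$ and linearly independent at a generic point. The one issue you flag as unresolved, namely gluing the leafwise first integrals $F^k$ into a single smooth map, is dispatched in the paper by never working leaf by leaf at that stage: one fixes coordinates $\{\theta_1,\ldots,\theta_s,I_{s+1},\ldots,I_n\}$ adapted to $\mathcal{F}$ near a generic point, takes $\Psi_{s+1}=I^{(0)}_{s+1},\ldots,\Psi_n=I^{(0)}_n$ as $n-s$ of the required first integrals for free (they kill the transverse directions of the zeroth factor), and then finds the remaining $s$ functionally independent first integrals for the distribution spanned by $X_1^{[m+1]},\ldots,X_r^{[m+1]}$ directly on an open subset of $N^{m+1}$, where that distribution has rank $\dim V\leq ms<(m+1)n$. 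Since these are ordinary local first integrals of a single integrable distribution on $N^{m+1}$, they are automatically smooth in all variables, including the leaf parameter, so no separate uniformity argument is needed.
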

	\begin{proof}
Assume that $X$ is an $\mathcal{F}$-foliated Lie system with an associated Vessiot--Guldberg Lie algebra $V$ and admitting a decomposition (\ref{DFLS}). Let us construct an $\mathcal{F}$-foliated superposition rule for $X$.
		
As discussed in Section \ref{Sec:FLS_def}, any $\mathcal{F}$-foliated Lie system $X$ gives rise to a Lie system on each leaf $\mathcal{F}_k$ of $\mathcal{F}$ by restricting $X$ and its Vessiot--Guldberg Lie algebra $v$ to $\mathcal{F}_k$. Thus, we obtain a Lie system 
		$$
		X_{k}(t,x)={\displaystyle\sum_{\alpha=1}^r}g_\alpha(t,k)X_\alpha|_{\mathcal{F}_k}
		$$
with the associated Vessiot--Guldberg Lie algebra $V_{\mathcal{F}_k}$. On the other hand, $X_k$ can be considered as the restriction to $\mathcal{F}_k$ of a Lie system on $N$ of the form $Y_{(k)}={\displaystyle\sum_{\alpha=1}^r}g_\alpha(t,k)X_\alpha$. Remarkably, it possesses the same Vessiot--Guldberg Lie algebra $V$ as $X$. 

All Lie systems $Y_{(k)}$ admit a common superposition rule $\Psi:N^m\times N\rightarrow N$. However, it is not assured in general that $\Psi(\mathcal{F}_k^{m+1})$ belongs $\mathcal{F}_k$. Hence, the curve $\Psi(x_{(1)}(t),\ldots,x_{(m)}(t),\lambda)$, where the $x_{(1)}(t),\ldots,x_{(m)}(t)$ are particular solutions to $X$ and $\lambda \in \mathcal{F}_k$, might not be a solution of $X_k$ and, in consequence, a particular solution to $X$. Let us provide a method to obtain a superposition rule for all the $Y_{(k)}$ such that $\Psi(\mathcal{F}_k^{m+1})\subset \mathcal{F}_k$. This will give an $\mathcal{F}$-foliated superposition rule for $X$ because $\Psi(x_{(1)}(t),\ldots,x_{(m)}(t),\lambda)$ will be a particular solution to $Y_{(k)}$ within $\mathcal{F}_k$ and, therefore, a particular solution to $X$.

		On a neighbourhood of a generic point of $N$, there exists a local coordinate system adapted to the foliation $\mathcal{F}$ with $s$-dimensional leaves. In other words, we can construct a local coordinate system $\{\theta_1,\ldots,\theta_s,I_{s+1},\ldots, I_{n}\}$ whose first $s$ coordinates form a coordinate system on each leaf $\mathcal{F}_k$ and the last $n-s$ coordinates are constant on each leaf of $\mathcal{F}$. Then, one can write in coordinates
		$$
		X_\alpha=\sum_{i=1}^sX^i_\alpha(\theta,I)\frac{\partial}{\partial \theta^i},\qquad \alpha=1,\ldots,r,\qquad \theta=(\theta^1,\ldots,\theta^s),\,\,I=(I^{s+1},\ldots,I^n),
		$$
		for certain functions $X^i_\alpha(\theta,I)$ for $i=1,\ldots,s$ and $\alpha=1,\ldots,r$. Let us restrict ourselves to a generic leaf 
		of $\mathcal{F}$, e.g. $$
		\mathcal{F}_k=\{(\theta_1,\ldots,\theta_s,I_{s+1},\ldots,I_{n})\in N:I_{s+1}=k_{s+1},\ldots,I_{n}=k_{n}\},\qquad k=(k_{s+1},\ldots,k_{n}).
		$$
		The vector fields $X_1,\ldots,X_r$ are tangent to $\mathcal{F}_k$. Hence, their restrictions to $\mathcal{F}_k$ can be considered as vector fields on the leaf. For $m$ large enough, the diagonal prolongations $X_1^{[m]},\ldots,X_r^{[m]}$ on $N^m$ become linearly independent at a generic point  (see \cite{Dissertationes} for a proof). As the vector fields $X_1^{[m]},\ldots,X_r^{[m]}$ are tangent to $\mathcal{F}_k^{m}$, one obtains
		\begin{equation}\label{FLC}
			ms\geq \dim V,
		\end{equation}
		where $s$ is by assumption the dimension of a leaf of $\mathcal{F}$. It is worth stressing that the above expression is more accurate than  Lie's condition, which only shows that $m \dim N
		\geq \dim V$.
		
Consider the diagonal prolongations $X_1^{[m+1]},\ldots,X_{r}^{[m+1]}$ to $N^{m+1}$. Let us define a local coordinate system on $N^{m+1}$ given by $\{\theta^{(a)}_1,\ldots,\theta^{(a)}_s,I^{(a)}_{s+1},\ldots, I^{(a)}_{n}\}$ with $a=0,\ldots, m$. The vector fields $X_1^{[m+1]},\ldots,X_{r}^{[m+1]}$ admit the common first-integrals $\Psi_{s+1}=I^{(0)}_{s+1},\ldots,\Psi_n=I^{(0)}_{n}$. Since $X^{[m+1]}_1,\ldots,X^{[m+1]}_r$ span a distribution of rank $\dim V\leq m\dim N$, one can find, at least, $s$ new functionally independent first-integrals, $\Psi_1,\ldots,\Psi_s$, common to $X_1^{[m+1]},\ldots,X^{[m+1]}_r$ such that 
		$$
		\frac{\partial(\Psi_1,\ldots,\Psi_n)}{\partial(\theta_1^{(0)},\ldots,\theta_s ^{(0)},I^{(0)}_{s+1},\ldots,I^{(0)}_{n})}\neq 0
		\Longrightarrow \frac{\partial(\Psi_1,\ldots,\Psi_s)}{\partial(\theta_1^{(0)},\ldots,\theta_s^{(0)})}\neq 0.
		$$ 
		This gives rise to a mapping $\widetilde{\Psi}:N^{m+1} N\rightarrow N$ of the form
		$
		\widetilde{\Psi}(I^{(0)},\theta^{(0)},\ldots,\theta^{(m)},I^{(m)})=(\Psi_1,\ldots,\Psi_n).
		$
		Then, one can use the Implicit Function theorem to find the unique mapping $\Phi:N^m\times N\rightarrow N$ such that
		$$
		x_{(0)}=\Phi(x_{(1)},\ldots,x_{(m)},\sigma)\Leftrightarrow \widetilde{\Psi}(x_{(0)},\ldots,x_{(m)})=\sigma.
		$$
		In particular, 
		$
		\Phi(x_{(1)},\ldots,x_{(m)},\lambda)\in \mathcal{F}_{\bar k}
		$ for $\lambda=(\lambda_1,\ldots,\lambda_s,\bar{k})$, where $\lambda_1,\ldots,\lambda_s$ are arbitrary and $\bar k=(I_{s+1}^{(0)},\ldots, I_{n}^{(0)})$. In this way, given $m$ particular solutions $x_{(1)}(t),\ldots,x_{(m)}(t)$ to $X_{\bar{k}}$ belonging to the same leaf $\mathcal{F}_{\bar{k}}$, the mapping 
		$$
		\Phi(x_{(1)}(t),\ldots,x_{(m)}(t),k_1,\ldots,k_s,\bar k)=x(t)
		$$
		gives us every solution $x(t)$ to $X_{\bar{k}}$ within the leaf $\mathcal{F}_{\bar k}$ for every $(k_1,\ldots,k_s,\bar k)\in \mathcal{F}_{\bar k}$. Then, $x(t)$ is a solution of $X$ and the mapping $\Phi$ allows us to obtain all solutions to $X$ within $\mathcal{F}_{\bar{k}}$ out of particular solutions to $X$ in the same leaf and a parameter in $\mathcal{F}_{\bar k}$. 
	\end{proof}
	
	\begin{theorem}
		If a system $X$ on $N$
		admits an $\mathcal{F}$-foliated superposition rule $\Psi:N^m\times N\rightarrow N$, then there exists vector fields $X_1,\ldots,X_r$ on $N$ tangent to the leaves of $\mathcal{F}$ and common $t$-dependent constants of the motion $f_1,\ldots,f_r\in C^\infty(\mathbb{R}\times N)$ for $X_1,\ldots,X_r$ so that
		$$
		X(t,x)=\sum_{\alpha=1}^rf_\alpha(t,x)X_\alpha(x)
		$$
		and
		\begin{equation}\label{Eq:rel}
			[X_\alpha,X_\beta]=\sum_{\gamma=1}^rh_{\alpha\beta}^\gamma X_{\gamma},\qquad \alpha,\beta=1,\ldots,r,
		\end{equation}
		where $h_{\alpha\beta}^\gamma\in C^\infty(N^{m+1})$, with $\alpha,\beta,\gamma=1,\ldots,r$, are first-integrals for $X^{[m+1]}_1,\ldots,X^{[m+1]}_r$.
	\end{theorem}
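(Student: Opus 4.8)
The plan is to mimic the classical proof of the (geometric) Lie--Scheffers theorem, adapting it to the foliated setting by keeping careful track of the leaf structure. Suppose $X$ admits an $\mathcal{F}$-foliated superposition rule $\Psi:N^m\times N\rightarrow N$. By definition, $\Psi$ is in particular a standard superposition rule for some Lie system $Y$ on $N$; hence, by the ordinary Lie--Scheffers theorem, there exist vector fields $X_1,\ldots,X_r$ on $N$ spanning a finite-dimensional Lie algebra and $t$-dependent functions $b_1(t),\ldots,b_r(t)$ with $Y=\sum_\alpha b_\alpha(t)X_\alpha$. The first thing I would establish is the \emph{same} structural identity for $X$ itself: writing out what it means for $\Psi(x_{(1)}(t),\ldots,x_{(m)}(t),\lambda)$ to be a solution of $X$ for every generic $m$-tuple of solutions, one differentiates the relation $\widetilde\Psi(x_{(0)}(t),\ldots,x_{(m)}(t))=\mathrm{const}$ (where $\widetilde\Psi$ is the associated map built from the first-integrals, as in the proof of Theorem \ref{FLST}) along the prolonged dynamics. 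This forces $X^{[m+1]}$, the diagonal prolongation of the autonomisation of $X$, to be tangent to the level sets of $\widetilde\Psi$; these level sets, by condition 1 in Definition \ref{Def:FLS} together with the implicit-function construction, are spanned at a generic point by the diagonal prolongations $X_1^{[m+1]},\ldots,X_r^{[m+1]}$ of a finite family of vector fields on $N$. Therefore $X^{[m+1]}_t=\sum_\alpha f_\alpha(t,\cdot)X_\alpha^{[m+1]}$ for some functions $f_\alpha$ on $N^{m+1}$, and projecting to the first factor gives $X(t,x)=\sum_\alpha f_\alpha(t,x)X_\alpha(x)$.

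Next I would show the $X_\alpha$ can be taken tangent to the leaves of $\mathcal{F}$. Condition 1 of Definition \ref{Def:FLS} says $\Psi$ maps $\mathcal{F}_k^{m+1}$ into $\mathcal{F}_k$; dually, the $n-s$ functions cutting out a leaf are among the common first-integrals of the prolonged vector fields, so the $X_\alpha^{[m+1]}$ — and hence the $X_\alpha$ — annihilate the leaf-defining functions, i.e. they are tangent to the leaves of $\mathcal{F}$. For the bracket relation (\ref{Eq:rel}): the distribution on $N^{m+1}$ spanned by $X_1^{[m+1]},\ldots,X_r^{[m+1]}$ is precisely the tangent distribution to the (generic) fibres of $\widetilde\Psi$, and such a distribution is integrable, hence involutive. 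Thus $[X_\alpha^{[m+1]},X_\beta^{[m+1]}]$ lies in the span of $X_1^{[m+1]},\ldots,X_r^{[m+1]}$ over $C^\infty(N^{m+1})$, giving $[X_\alpha^{[m+1]},X_\beta^{[m+1]}]=\sum_\gamma h_{\alpha\beta}^\gamma X_\gamma^{[m+1]}$ with $h_{\alpha\beta}^\gamma\in C^\infty(N^{m+1})$; since the diagonal prolongation is a Lie algebra morphism on vector fields, $[X_\alpha^{[m+1]},X_\beta^{[m+1]}]=[X_\alpha,X_\beta]^{[m+1]}$, and projecting to any factor yields $[X_\alpha,X_\beta]=\sum_\gamma h_{\alpha\beta}^\gamma X_\gamma$ on $N$ (the coefficients being pulled back consistently because of the prolongation structure). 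Finally, to see that each $h_{\alpha\beta}^\gamma$ is a first-integral of $X_1^{[m+1]},\ldots,X_r^{[m+1]}$, I would apply a vector field $X_\delta^{[m+1]}$ to the identity $[X_\alpha,X_\beta]^{[m+1]}=\sum_\gamma h_{\alpha\beta}^\gamma X_\gamma^{[m+1]}$, use the graded Jacobi identity to re-expand the left-hand side again in the $X_\gamma^{[m+1]}$ with coefficients that are themselves combinations of the $h$'s, and compare coefficients; the linear independence of $X_1^{[m+1]},\ldots,X_r^{[m+1]}$ at a generic point forces $X_\delta^{[m+1]}h_{\alpha\beta}^\gamma=0$.

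The remaining point is to verify that the $f_\alpha$ obtained above are common $t$-dependent constants of motion of the $X_\alpha$; here I would argue that $f_\alpha$ depends on $N^{m+1}$ only through first-integrals of the $X_\beta^{[m+1]}$ — this is exactly how the coefficients arise when one expresses a prolongation-tangent vector field in the frame $\{X_\alpha^{[m+1]}\}$ — hence $X_\beta^{[m+1]}f_\alpha=0$, and restricting to the diagonal gives $X_\beta f_\alpha=0$ on $N$; combined with the tangency of the $X_\alpha$ to the leaves, this is precisely the statement to be proved, and it also shows a posteriori that $X$ is an $\mathcal{F}$-foliated Lie system, giving the converse to Theorem \ref{FLST}.

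The main obstacle I anticipate is the bookkeeping in identifying the span of $\{X_\alpha^{[m+1]}\}$ with the tangent distribution to the fibres of $\widetilde\Psi$ \emph{and} simultaneously controlling where the coefficient functions live: one must ensure that ``generic point'' arguments are uniform enough that $X^{[m+1]}_t$ lies in this distribution for all $t$, that the coefficients $f_\alpha,h_{\alpha\beta}^\gamma$ are genuinely well-defined smooth functions (not merely defined off some bad set), and that projecting the prolonged identities down to $N$ is legitimate — this last step relies on the fact that a diagonal prolongation and its coefficients are determined by their value on the diagonal, which needs to be invoked carefully. Everything else is a routine transcription of the Lie--Scheffers machinery, with the leaf-tangency and the first-integral property of the coefficients being the genuinely ``foliated'' refinements.
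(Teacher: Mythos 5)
Your overall strategy --- differentiate $\widetilde\Psi=\mathrm{const}$ along tuples of solutions, identify the resulting tangency distribution with a span of diagonal prolongations, and invoke Lemma \ref{Lem:LS} --- is the right machinery, but the step where you pass to the \emph{global} horizontal foliation of $\Psi$ on $N^{m+1}$ contains a genuine gap. Condition 2 of Definition \ref{Def:FLS} only asserts that $\widetilde\Psi$ is constant along tuples of particular solutions \emph{all lying in the same leaf}; since every $X_t$ is tangent to the leaves, such tuples sweep out only $\bigcup_k\mathcal{F}_k^{m+1}$, a subset of positive codimension in $N^{m+1}$ whenever $\mathcal{F}$ is a proper foliation. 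So you may only conclude that $X_t^{[m+1]}$, restricted to each $\mathcal{F}_k^{m+1}$, is tangent to the intersection of the horizontal foliation with $\mathcal{F}_k^{m+1}$ --- not that $X_t^{[m+1]}$ is tangent to the full horizontal leaves on $N^{m+1}$. The decomposition $X_t^{[m+1]}=\sum_\alpha f_\alpha X_\alpha^{[m+1]}$ therefore only holds leaf by leaf. This is not a technicality: if your global decomposition were valid with the $X_\alpha^{[m]}$ linearly independent at a generic point of $N^m$, Lemma \ref{Lem:LS} would force the $f_\alpha(t,\cdot)$ to be constants, i.e. $X$ would be an ordinary Lie system --- contradicted by examples such as (\ref{FLS}). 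The same over-reach appears at the end, where you claim to recover a genuine finite-dimensional Vessiot--Guldberg Lie algebra and hence a full converse of Theorem \ref{FLST}; the theorem deliberately concludes only that the brackets close with coefficients $h_{\alpha\beta}^\gamma$ that are first-integrals, because the structure ``constants'' produced by the correct argument vary from leaf to leaf.

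Relatedly, your leaf-tangency argument for the $X_\alpha$ does not go through: the vector fields supplied by the ordinary Lie--Scheffers theorem applied to the auxiliary Lie system $Y$ are those whose prolongations are tangent to the global horizontal foliation, and condition 1 of Definition \ref{Def:FLS} does not make the leaf-defining functions $I^{(0)}_j$ first-integrals of those prolongations, since $\Psi(\cdot,\lambda)$ is only constrained when its arguments lie in the same leaf as $\lambda$. The repair is the paper's route: work on a fixed $\mathcal{F}_k^{m+1}$, take the distribution spanned by the restrictions of $\{X_t^{[m+1]}\}_{t\in\mathbb{R}}$ (which are automatically leaf-tangent), close it under Lie brackets inside the tangent spaces of the leaf-wise horizontal foliation, extract a basis given by restrictions of diagonal prolongations $X_1^{[m+1]},\ldots,X_r^{[m+1]}$, and apply Lemma \ref{Lem:LS} \emph{on the leaf} --- where the $X_\alpha^{[m]}$ are linearly independent at a generic point of $\mathcal{F}_k^m$ because the horizontal leaves project diffeomorphically onto $\mathcal{F}_k^m$. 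This yields structure coefficients and decomposition coefficients that are constant on each leaf, hence first-integrals of the prolonged leaf-tangent vector fields, and the construction is then glued over nearby leaves.
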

	\begin{proof}
		Consider that an  $\mathcal{F}$-foliated superposition rule for the system $X$ on $N$ and the leaves of the foliation $\mathcal{F}$ have dimension $s$. By the definition of a foliated superposition rule, the particular solutions to $X$ are contained in the leaves of $\mathcal{F}$ and the vector fields $X_t$, for every $t\in \mathbb{R}$, must be tangent to its leaves. 
		
		Let us fix a point $(x_{(1)},\ldots, x_{(m)})\in \mathcal{F}_k^m$ of a leaf $\mathcal{F}_k$ of $\mathcal{F}$. We can then use the Implicit Function theorem
		to obtain a new function $\widetilde{\Phi}:N^{m+1}\rightarrow N$ such that
		$$
		F(x_{(1)},\ldots, x_{(m)},\lambda)=x_{(0)}\,\,\Leftrightarrow \,\, \widetilde{\Phi}(x_{(0)},x_{(1)},\ldots, x_{(m)})=\lambda.
		$$
		The function $\widetilde{\Phi}$ take constant values on generic families of $m+1$ particular solutions,  $x_{(0)}(t),\ldots, x_{(m)}(t)$,  of $X$  belonging to the same leaf of the foliation $\mathcal{F}$. Therefore,
		$$       
		0=\frac{d}{dt}\widetilde{\Phi}(x_{(0)}(t),\ldots,x_{(m)}(t))=[X_t^{[m+1]}\widetilde{\Phi}](x_{(0)}(t),\ldots,x_{(m)}(t))=0,
		$$
		for every $t\in \mathbb{R}$.
		Let us fix $x_{(1)},\ldots,x_{(m)}\in \mathcal{F}_k$. Then, $\lambda\in \mathcal{F}_k$ determines univocally a point $x_{(0)}$ via $\Psi$ and $x_{(0)}$ comes from a unique $\lambda$. In other words, the foliated superposition rule induces on $\mathcal{F}^{m+1}_k$ a so-called {\it horizontal foliation} over $\mathcal{F}^m_k$ relative to the projection from $\mathfrak{F}^{m+1}_{k}$ onto the last $m$ copies of $\mathcal{F}_k$. The coordinates of the function $\widetilde{\Phi}$ take constant values exactly on the leaves of the horizontal foliation. Consequently, the vector fields $\{X^{[m+1]}_t\}_{t\in \mathbb{R}}$ span on $\mathcal{F}_k^{m+1}$ a distribution $\mathcal{D}_0$ contained in the tangent space to the leaves of the horizontal foliation on $\mathcal{F}^{m+1}_k$. We can extend such a distribution with the linear combinations of successive Lie brackets of $\{X^{[m+1]}_t\}_{t\in \mathbb{R}}$ to obtain a regular distribution $\mathcal{D}$, at a generic point of $\mathcal{F}_{\bar{k}}^{m+1}$, containing $\mathcal{D}_0$ and contained in the tangent space to the leaves of the foliation on $\mathcal{F}_{\bar k}^{m+1}$. 
		
		Consider a finite family of elements $_kX_1^{[m+1]},\ldots,_kX_u^{[m+1]}$ forming a local basis of the distribution $\mathcal{D}_0$ obtained by restricting to $\mathcal{F}^{m+1}_{k}$ some vector fields $X_1^{[m+1]},\ldots,X_u^{[m+1]}$. As the linear combinations of Lie brackets of diagonal prolongations are diagonal prolongations, the previous basis can be expanded to produce a family $_kX^{[m+1]}_1,\ldots,_kX^{[m+1]}_r$  of vector fields  spanning a regular distribution $\mathcal{D}$. Since the leaves of our horizontal foliation on $\mathcal{F}^{m+1}_{\bar k}$ project diffeomorphically onto $\mathcal{F}^{m}_{\bar k}$, via the projections onto the last $m$ factors, the vector fields $_kX^{[m]}_1,\ldots,_kX^{[m]}_r$ on $\mathcal{F}_k^{m}$ become linearly independent at a generic point. By Lemma \ref{Lem:LS}, the Lie brackets of $_kX^{[m+1]}_1,\ldots,_kX^{[m+1]}_r$ close on  a finite-dimensional Lie algebra of vector fields. Hence,
		$$
		[\,_kX_\alpha,\,_kX_\beta]=\sum_{\gamma=1}^rc_{\alpha\beta}\,^\gamma(k) \,_kX_\gamma
		$$
		on $\mathcal{F}_k$. 
		The previous procedure can be extended to other leaves $\mathcal{F}_{k'}$ for different values $k'$ close enough to $k$ so that the vector fields $X^{[m+1]}_1,\ldots,X^{[m+1]}_r$ for the initial $\mathcal{F}_k$ can also be  used locally. 
		Moreover, $X_t^{[m+1]}$ must be on each leaf in $\mathcal{F}_{k}^{m+1}$ a linear combination of the $_kX_1^{[m+1]},\ldots,_kX_r^{[m+1]}$ with coefficients being functions on $\mathcal{F}^{m+1}_{k}$ depending only on $t$. Hence, 
		$$
		X_t^{[m+1]}(\xi)=\sum_{\alpha=1}^{r}g_\alpha(t,\xi) +kX_\alpha^{[m+1]}(\xi),
		$$
		for certain functions $g_\alpha(t,\xi)$, with $\xi\in \mathcal{F}_{k}^{m+1}$.
		Let us restrict the above expression to an arbitrary leaf $\mathcal{F}^{m+1}_k$.  Using Lemma \ref{Lem:LS}, we obtain that 
		$$
		X_t(x)=\sum_{\alpha=1}^{r}g_\alpha(t,x) X_\alpha(x),
		$$
		on $\mathcal{F}_k$ and $g_\alpha(t,x)=g_\alpha(t,x')$ for points $x,x'$ in the same leaf of $\mathcal{F}$ and every $t\in \mathbb{R}$. 
	\end{proof}
		
Observe that Theorem \ref{FLST} gives a procedure to construct a foliated superposition rule.	For reference, let us detail the steps of this method below:
	\begin{itemize}
		\item Consider a Vessiot--Guldberg Lie algebra $V_{\mathcal{F}}$ of an  $\mathcal{F}$-foliated Lie system $X$ on an $d_T$-dimensional manifold $T$. 
		\item Find the smallest $m\in \mathbb{N}$ so that the diagonal prolongations of the vector fields of $V_{\mathcal{F}}$ span a distribution of rank $\dim V_{\mathcal{F}}$ at a generic point of $T^m$.  This states the number of particular solutions of the foliated superposition rule, namely $m$.
		\item Consider a coordinate system $\theta^1,\ldots,\theta^s,I^{s+1},\ldots, I^{d_T}$, adapted to the foliation $\mathcal{F}$ around a generic point $x\in T$, i.e. the first $s$ coordinates form a local coordinate system on each leaf $\mathcal{F}_k$ of $\mathcal{F}$, while the last ${d_T}-s$ coordinates are constant on the leaves of $\mathcal{F}$. Define the same coordinate system $\theta^1,\ldots,\theta^s,I^{s+1},\ldots, I^{d_T}$ on each copy $T$ within $T^{m+1}$. This gives rise to a coordinate system on $T^{m+1}$ of the form $\theta_{(a)}^1,\ldots,\theta_{(a)}^s,I_{(a)}^{s+1},\ldots, I_{(a)}^{d_T}$ for $a=0,\ldots,m$. 
		\item Define $\Psi^{s+1}=I_{(0)}^{s+1},\ldots,\Psi^{{d_T}}= I_{(0)}^{d_T}$  and obtain $s$ common first-integrals $\Psi^1,\ldots, \Psi^s$ for the diagonal prolongations of the elements of $V_{\mathcal{F}}$ to $T^{m+1}$ satisfying that
		$$
		\frac{\partial(\Psi^1,\ldots,\Psi^{d_T})}{\partial (\theta_{(0)}^1,\ldots,\theta_{(0)}^s,I_{(0)}^{s+1},\ldots, I_{(0)}^{d_T})}\neq 0\Leftrightarrow \frac{\partial(\Psi^1,\ldots,\Psi^s)}{\partial (\theta_{(0)}^1,\ldots,\theta_{(0)}^s)}\neq 0.
		$$
		\item Assume $\Psi^1=k_1,\ldots, \Psi^{d_T}=k_{d_T}$ and obtain $\theta^1_{(0)},\ldots,\theta^s_{(0)},I_{(0)}^{s+1},\ldots,I_{(0)}^{d_T}$ as a function of $k_1,\ldots,k_{d_T}$ and $\theta_{(a)}^1,\ldots,\theta_{(a)}^s,I_{(a)}^{s+1},\ldots, I_{(a)}^{d_T}$ for $a=1,\ldots,m$, i.e.
		$$
		\theta^i_{(0)}=F^i(\theta_{(1)},I_{(1)},\ldots,\theta_{(m)},I_{(m)},k_1,\ldots, k_{d_T}),\qquad
		I^j_{(0)}=k_j,
		$$
		for certain functions $F^i:T^m\times T\rightarrow \mathbb{R}$, with $i=1,\ldots,s$ and $j=s+1,\ldots,{d_T}$.  This gives rise to a superposition rule $\Phi:T^m\times T\rightarrow T$ for every Lie system with a Vessiot--Guldberg Lie algebra $V_{\mathcal{F}}$ of the form (see {\cite{CGM07,Dissertationes,LS20}} for details)
		$$
		\Phi(\theta_{(1)},I_{(1)},\ldots,\theta_{(m)},I_{(m)},k_1,\ldots,k_{d_T})=(F^1,\ldots,F^{d_T}),
		$$
		where $F^{k}=I^k_{(0)}$ for $k=s+1,\ldots,{d_T}$.
		The map $\Psi:T^m\times T\rightarrow R$ becomes an $\mathcal{F}$- foliated superposition rule for $X$ at a generic point of $T$.
	\end{itemize}

 \begin{example}
Consider the Lax pair (\ref{FLS3}) discussed in Example \ref{Ex:FLS_lax}. In this case, the manifold where the Lax pair is defined is $2n$-dimensional and $s=n$. We recall that the foliated Lie system given by the Lax pair  (\ref{FLS3}) was related to the Vessiot--Guldberg Lie algebra $V^{\mathfrak{g}^{lp}}$. The vector fields of a basis of $V^{\mathfrak{g}^{lp}}$ are linearly independent at a generic point. Hence, we can obtain a foliated superposition rule depending on one particular solution. 
	The coordinates $\theta^1=v^1,\ldots,\theta^n=v^n,I^1=v^{n+1},\ldots,I^n=v^{2n}$ are adapted to the foliation $\mathcal{F}^{\mathfrak{g}^{lp}}$ of the system under study.  Consider the coordinate system on $(\mathfrak{g}^{lp})^2$ of the form $\theta_{(a)}^1,\ldots,\theta_{(a)}^n,I_{(a)}^1,\ldots,I_{(a)}^n$ with $a=0,1$. Take the diagonal prolongations of the elements of $V^{\mathfrak{g}^{lp}}$ to $(\mathfrak{g}^{lp})^2$. To obtain $2n$ functionally independent constants of motion for such diagonal prolongations choose  $\Psi^{s+1}=I^{1}_{(0)},\ldots,\Psi^{2n}=I^n_{(0)}$ and  $\Psi^i=\theta^i_{(0)}-\theta^i_{(1)}$ for $i=1,\ldots,n$. Then,
	$$
	\frac{\partial (\Psi^1,\ldots,\Psi^n)}{\partial( \theta^1_{(0)},\ldots, \theta^n_{(0)},I^1_{(0)},\ldots I^{n}_{(0)})}\neq 0.
	$$
	By fixing $\Psi^1=k_1,\ldots,\Psi^{2n}=k_{2n}$, a superposition rule $\Phi: \mathfrak{g}^{lp}\times  \mathfrak{g}^{lp}\rightarrow \mathfrak{g}^{lp}$ for every Lie system with a Vessiot--Guldberg Lie algebra $V^{\mathfrak{g}^{lp}}$ reads
	$$
	\Phi(\theta_{(1)},I_{(1)},k)=(\theta^1_{(1)}+k_1,\ldots,\theta^n_{(1)}+k_n,k_{n+1},\ldots,k_{2n}).
	$$
	Restricting oneself to the case  $k_{n+1}=I^1_{(1)},\ldots,k_{2n}=I^n_{(1)}$, one gets an $\mathcal{F}^{\mathfrak{g}^{lp}}$-foliated superposition rule $\Psi^\mathfrak{g}:\mathfrak{g}^{lp}\times \mathfrak{g}^{lp}\rightarrow\mathfrak{g}^{lp}$ such that the particular solutions to $X^{lp}$ in the leaf $\mathcal{F}_k^{\mathfrak{g}^{lp}}$, with $k=I_{(1)}\in \mathbb{R}^n$, are of the form
	$$
	\Psi(\theta_{(1)}(t),I_{(1)},k_1,\ldots,k_{2n})=(\theta^1_{(1)}+k_1,\ldots,\theta^n_{(1)}+k_n,I^1_{(1)},\ldots,I^n_{(1)})
	$$
	for a particular solution $(\theta_{(1)}(t),I_{(1)})$ of $X^{lp}$ in $\mathcal{F}_k^{\mathfrak{g}^{lp}}$.
 \end{example}
	
	\section{Foliated automorphic Lie systems}
	We showed in Section \ref{Sec:LS_intro} that the study of a Lie system can be reduced to that of an automorphic Lie system. In this section, we prove that the evolution of a stratified Lie system can be reduced to the integration of a foliated Lie system on a principal bundle of a particular type.
	\begin{definition}
Let $\pi: G\times M\rightarrow M$ be a trivial principal bundle with structural $r$-dimensional Lie group $G$ acting on $G\times M$ by $\varphi_\pi:(g,(h,k))\in G\times (G\times M)\mapsto (gh,k)\in G\times M$. A {\it foliated automorphic Lie system} is a $t$-dependent vector field $X_\pi^R$ on $G\times M$ given by
		\begin{equation}
			\label{PAut}
			X_\pi^R =\sum_{\alpha=1}^{r}f_\alpha X^R_\alpha,
		\end{equation}
		where  $X^R_1,\ldots,X^R_r$ form a basis of fundamental vector fields of the action of $G$ on $G\times M$, and $f_1,\ldots,f_r\in C^\infty(\mathbb{R}\times G\times M)$ are $t$-dependent common constants of motion of $X^R_1,\ldots,X_r^R$.
	\end{definition}
	
Obviously, the system given by (\ref{PAut}) is a foliated Lie system. Note that every vector field on $G$ can be considered in a canonical way as a vector field on $G\times M$ via the vector bundle isomorphism $T(G\times M)=TG\times TM$. Then, $X^R_1,\ldots,X^R_r$ can be understood as right-invariant vector fields  on $G$. They also span a finite-dimensional Lie algebra of vector fields on $G\times M$ and a regular $r$-dimensional generalised distribution. In fact, at each point of $G\times P$ the values of $X^R_1,\ldots,X_r^R$  span the vertical space at such point of the  bundle $\pi:G\times M\rightarrow M$.
	Since $f_1,\ldots,f_r$ are common $t$-dependent constants of motion for vertical vector fields, they are constant on the fibres of $\pi:G\times M\rightarrow  M$ and they can be considered, in a unique manner, as $t$-dependent functions on $M$. 
	
	When $M$ is a point,  $f_1,\ldots,f_r$ become functions depending only on $t$ and (\ref{PAut}) turns into a standard automorphic Lie system \cite{Dissertationes}. More generally, every trivial principal bundle $\pi:P\rightarrow M$ with a structural group $G$ and fundamental vector fields spanned by $X_1,\ldots,X_r$ give rise to a $t$-dependent vector field $X$ on $P$ of the form
	$$
	X(t,x)=\sum_{\alpha=1}^rf_\alpha(t,x)X_\alpha(x),\qquad \forall x\in P,\quad \forall t\in \mathbb{R},
	$$
	where $f_1,\ldots,f_r\in C^\infty(\mathbb{R}\times P)$ are $t$-dependent constants of motion of $X_1,\ldots,X_r$. The trivialisation of the principal bundle $\pi:P\rightarrow M$ maps diffeomorphically $X$ onto  $X^R_\pi$. Consequently, $X$ is, up to a bundle diffeomorphism, a foliated automorphic Lie system.
	
	Let us prove that, in analogy with Lie systems, every $\mathcal{F}$-foliated Lie system gives rise to a foliated automorphic Lie system, whose solutions allow us to obtain the general solution of the foliated Lie system. Recall that every $t$-dependent constant of motion for the vertical  vector fields of  a fibre bundle can be considered as a $t$-dependent function on its base manifold in a canonical way.

	\begin{theorem}\label{FLSAFLS} Let $X={\displaystyle  \sum_{\alpha=1}^{r}}f_\alpha X_\alpha$ be a foliated Lie system on $N$ associated with a Vessiot--Guldberg Lie algebra $V=\langle X_1,\ldots,X_r\rangle$. Let $\phi:(g,x)\in G\times N\mapsto \phi_x(g)=\phi(g,x)\in N$ be the effective local Lie group action associated with the integration of the vector fields of the Lie algebra $V$. Assume that the space $M$ of leaves of $\mathcal{D}^V$ admits a manifold structure so that the  $\pi:N\rightarrow N/G=M$ is smooth. Let us  define the foliated automorphic Lie system on the total space of the principal bundle $\pi:G\times M\rightarrow M$  given by
		\begin{equation}\label{RAFLS}
			X_\pi^R(t,g,k)=-\sum_{\alpha=1}^rf_\alpha(t,k)X^R_\alpha(g),\qquad \forall t\in \mathbb{R},\quad \forall g\in G,\quad \forall k\in M,
		\end{equation}
		where  $-X^R_\alpha$ and $X_\alpha$ are the fundamental vector fields associated with the same element of $T_eG$ relative to $\varphi_\pi$ and $\phi$, correspondingly. Then, each particular solution $x(t)$ of $X$ contained in the leaf indexed by $k\in M$ can be written as
		\begin{equation}\label{Rel}
			x(t)=\phi(g(t),x(0)),
		\end{equation}
		where $\gamma:t\in \mathbb{R} \mapsto (g(t),k)\in G\times M$ is a particular solution to $X_\pi^R(t,g,k)$ with $g(0)=e$.
	\end{theorem}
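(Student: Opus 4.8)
The plan is to mimic the proof of the classical reduction of a Lie system to its associated automorphic Lie system (the final theorem of Section \ref{Sec:LS_intro}), adapting it to the foliated setting by working leaf-by-leaf. First I would fix a leaf $\mathcal{F}_k$ of the foliation $\mathcal{F}$ associated with the distribution $\mathcal{D}^V$, indexed by $k\in M$. As observed in Section \ref{Sec:FLS_def}, the restriction of $X$ to $\mathcal{F}_k$ is a genuine Lie system $X_k(t,\cdot)=\sum_{\alpha=1}^r f_\alpha(t,k)X_\alpha|_{\mathcal{F}_k}$ with Vessiot--Guldberg Lie algebra $V_{\mathcal{F}_k}$, and the coefficients $f_\alpha(t,k)$ depend only on $t$ (once $k$ is fixed) because the $f_\alpha$ are common constants of motion of the $X_\alpha$. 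Thus on each leaf we are exactly in the situation of the classical Lie--Scheffers reduction, and the theorem there tells us that the solutions of $X_k$ are $x(t)=\phi(g(t),x(0))$ where $g(t)$ solves the automorphic Lie system on $G$ determined by the $t$-dependent functions $-f_\alpha(t,k)$.

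The second step is to package this family of leaf-wise statements into a single statement on the principal bundle $\pi:G\times M\to M$. Here I would verify that the $t$-dependent vector field $X_\pi^R$ defined in (\ref{RAFLS}) is indeed a foliated automorphic Lie system: the $X_\alpha^R$ form a basis of fundamental vector fields of $\varphi_\pi$ (hence span the vertical distribution of $\pi$, which is regular of rank $r$), and the coefficients $f_\alpha(t,k)$, viewed as functions on $\mathbb{R}\times G\times M$ constant along the fibres, are automatically common $t$-dependent constants of motion of the vertical vector fields $X_\alpha^R$. Since $X_\pi^R$ has no component along $M$, the $M$-coordinate of any of its integral curves is constant, so an integral curve through $(e,k)$ stays in $G\times\{k\}$ and its $G$-component $g(t)$ solves precisely the automorphic Lie system $\tfrac{dg}{dt}=-\sum_\alpha f_\alpha(t,k)X_\alpha^R(g)$ with $g(0)=e$.

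The third step is to connect the two: given a particular solution $x(t)$ of $X$ lying in the leaf indexed by $k$ (so that $\pi(x(t))=k$ for all $t$, which holds because the vector fields of $V$ are tangent to the leaves), I would invoke the classical reduction theorem on that leaf with the action $\phi_k:=\phi(\cdot,x(0))$ restricted appropriately, obtaining $x(t)=\phi(g(t),x(0))$ for the solution $g(t)$ of the automorphic system above. The sign convention (fundamental vector fields $-X_\alpha^R$ on $G$ versus $X_\alpha$ on $N$) is exactly the one appearing in the classical statement, so that $\phi(g(t),\cdot)$ pushes solutions of the automorphic system to solutions of $X$; I would be careful to state that $\phi$ is effective and (at least locally) a Lie group action so that the argument $x(0)\mapsto\phi(g(t),x(0))$ is well defined, using the hypothesis that $M=N/G$ carries a manifold structure making $\pi$ smooth.

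\textbf{Main obstacle.} The routine parts (checking $X_\pi^R$ is foliated automorphic, that $\pi$-fibres are preserved, and invoking the leaf-wise classical theorem) are straightforward. The genuinely delicate point is the global/structural hypothesis: the local Lie group action $\phi$ need not be global, $M=N/G$ need only have a manifold structure by assumption, and one must ensure the identification of the leaves of $\mathcal{D}^V$ with the fibres of $\pi$ is consistent, so that ``the leaf indexed by $k$'' and the corresponding fibre $G\times\{k\}$ match up and the formula (\ref{Rel}) is coherent across different initial conditions in the same leaf. I expect that handling this bundle-theoretic bookkeeping carefully — in particular checking that $g(t)$ obtained on $G\times\{k\}$ does not depend on the choice of $x(0)$ within the leaf, and that $\phi(g(t),x(0))$ indeed traces out the solution rather than merely a solution — will be where the real care is needed, even though the underlying dynamical computation is a verbatim copy of the classical Lie systems argument.
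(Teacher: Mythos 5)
Your proposal is correct and follows essentially the same route as the paper: once one observes that the coefficients $f_\alpha(t,k)$ depend only on $t$ along each leaf, the statement is the classical Lie-system/automorphic-Lie-system correspondence applied fibrewise, with the $G$-component of the integral curve of $X_\pi^R$ through $(e,k)$ supplying the curve $g(t)$. The only difference is presentational: rather than citing the classical theorem leaf by leaf, the paper reproves it in situ by differentiating $\phi(g(t),x(0))$ directly, using the identity $-T_e\phi_x X_\alpha^R(e)=X_\alpha(x)$ and the right-invariance of the $X_\alpha^R$ to check that the curve solves $X$, with uniqueness of solutions then yielding the claim that every solution in the leaf has this form.
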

	\begin{proof} Let us prove that $x(t)$ given by (\ref{Rel}) is a particular solution to $X$ for every $x(0)$. Using (\ref{Rel}), we have that
		\begin{equation}\label{Exp1}
			\frac{dx}{dt}(t_0)=\frac{d}{dt}\bigg|_{t=t_0}\phi(g(t)g^{-1}(t_0),\phi(g(t_0),x(0)))=T_e\phi_{\phi(g(t_0),x(0))}\left(\frac{d}{dt}\bigg|_{t=t_0}g(t)g^{-1}(t_0)\right).
		\end{equation}
		Note that $(g(t),x)$ is a particular solution to (\ref{RAFLS}) and $X^R(g)=R_{g*e}X^R(e)$ for every $g\in G$ and every right-invariant vector field $X^R$ on $G$. Moreover, each $X_\alpha^R$ can be considered as a right-invariant vector field on $G$ in the natural way. Using these facts, we have
		$$
		\frac{dg}{dt}(t_0)=-\sum_{\alpha=1}^rf_\alpha(t_0,k)R_{g(t_0)*}X_\alpha^R(e)\Rightarrow 	R_{g^{-1}(t_0)*g(t_0)}\frac{dg}{dt}(t_0)=-\sum_{\alpha=1}^rf_\alpha(t_0,k)X^R_\alpha(e).
		$$
		Since $-X_\alpha^R$ and $X_\alpha$ are fundamental vector fields related to the same element of $T_eG$ relative to the $\phi$ and $\varphi_\pi$ actions, one obtains that $-T_e\phi_{x}X_\alpha^R(e)=X_\alpha(x)$ for every $x\in N$ and $\alpha=1,\ldots,r$. Substituting this relation in (\ref{Exp1}) and since the functions $f_\alpha(t,x)$ are just $t$-dependent on the orbits of $\phi$, we obtain
		$$
		\frac{dx}{dt}(t_0)=-T_e\phi_{x(t_0)}\left[\sum_{\alpha=1}^rf_\alpha(t,k)X^R_\alpha(e)\right]=\sum_{\alpha=1}^rf_\alpha(t,x(t_0))X_\alpha(x(t_0)),
		$$
		where we have used that functions $f(t,k)$ can be understood as $t$-dependent functions on $N$ that are constant on the fibres of the projection $N\rightarrow N/G=M$.
Therefore, $\phi(g(t),x(0))$ is a solution of $X$ for every $x(0)\in N$ and the theorem follows.
	\end{proof}
\begin{example}
Consider the foliated Lie system (\ref{FLS1}) analysed in Example \ref{Ex:FLS_cot}. Its associated Vessiot--Guldberg Lie algebra $V^{HJ}$ of the form (\ref{VI}) is isomorphic to the Lie algebra $(\mathbb{R}^n,+)$. Denote by $\{\lambda^1,\ldots,\lambda^n\}$ the dual basis to the canonical basis $\{e_1,\ldots,e_n\}$ on $\mathbb{R}^n$.

	The Lie group action $\vartheta: \mathbb{R}^{n}\times T^*\mathbb{R}^n \to T^*\mathbb{R}^n$ obtained by integrating the vector fields of $V^{HJ}$ reads $\vartheta(\lambda,Q,P) = (Q-\lambda,P)$.
 
	Observe that the Lie group action has been chosen so that the fundamental vector fields of the elements of the basis $\{e_1,\ldots,e_n\}$ be $\{\partial_{Q^1},\ldots \partial_{Q^n}\}$, respectively. 
	The space of leaves of the distribution spanned by the elements of $V^{HJ}$ is diffeomorphic to the manifold $M=\mathbb{R}^n$. Indeed, the variables $P^1,\ldots,P^n$ on $T^*\mathbb{R}^n\simeq \mathbb{R}^{2n}$ can be considered as a global coordinate system on $M$, which parametrises the leaves of the foliation $\mathcal{F}^{HJ}$. 
	
	The foliated automorphic Lie system related to (\ref{FLS1})  is, in virtue of Theorem \ref{FLSAFLS}, defined on the $(\mathbb{R}^n,+)$-principal bundle  $\pi: \mathbb{R}^n\times M\to  M$, $\pi:(\lambda,P) \mapsto P$, and it reads
	$$
	X(t,P,\lambda)=-\sum_{\alpha=1}^n\frac{\partial H}{\partial P^\alpha}(t,P)\frac{\partial}{\partial \lambda^\alpha}.
	$$
 \end{example}

 \begin{example}
	Consider now the system given by (\ref{FLS2}) in Example \ref{Ex:FLS_lax} with 
	\begin{equation}\label{Casem}
		m(t,v)=\sum_{\alpha=1}^{n}\frac{\partial H}{\partial P^\alpha}e_\alpha.
	\end{equation}
	This particular form of $m(t,v)$ was studied in \cite{BV90} for a $t$-independent function $H$ and it was shown that such choice gives a Lax pair for the system (\ref{FLS1}) under a simple change of variables. 
	
	The Vessiot--Guldberg Lie algebra $V^{\mathfrak{g}^{gl}}$ for (\ref{FLS2}) admits a basis given by $\{2\partial_{v^1},\ldots,2\partial_{v^n}\}$. Such vector fields span an abelian Lie algebra isomorphic to $(\mathbb{R}^n,+)$. The distribution spanned by the vector fields of $V^{\rm ad}$ gives rise to a family of leaves of the form (\ref{Leafg}).
	Therefore, the variables $v^{n+1},\ldots,v^{2n}$ can be considered as a coordinate system on the space of leaves, $M^{\rm ad}$, which becomes a manifold diffeomorphic to $\mathbb{R}^n$. 
	
	The vector fields $X^{\rm ad}_1,\ldots,X^{\rm ad}_{2n}$ can be integrated to obtain a Lie group action
	$$
	\begin{array}{rccc}
		\varphi_{\mathfrak{g}^{gl}}:&\mathbb{R}^{n}\times \mathfrak{g}^{gl}&\longrightarrow& \mathfrak{g}^{gl},\\
		&(\lambda;v^1,\ldots,v^{2n})&\mapsto&(v^1-2\lambda_1,\ldots, v^n-2\lambda_n,v^{n+1},\ldots,v^{2n}),
	\end{array}
	$$
	such that  the fundamental vector field of each $e_i$ in the canonical basis $\{e_1,\ldots,e_n\}$ of $\mathbb{R}^n$  is $X_i^{\rm ad}$.  
	The foliated automorphic Lie system associated with this foliated Lie system reduces to the form of a $t$-dependent vector fields on $\mathbb{R}^{n}\times \mathbb{R}^n$ of the form
	$$
	X(t,\lambda,v)=-\sum_{\alpha=1}^n\frac{\partial H}{\partial v^{\alpha+n}}(t,v^{n+1},\ldots,v^{2n})\frac{\partial}{\partial v^\alpha}.
	$$
	Consequently, the solution to the Lax pair (\ref{FLS2}) for the particular value of $m(t,v)$ given in (\ref{Casem}) reduces to the same automorphic Lie system as the foliated Lie system (\ref{FLS1}). Moreover, there exists a diffeomorphism $\phi:T^*\mathbb{R}^n\rightarrow \mathfrak{g}^{gl}$ mapping (\ref{FLS1}) onto (\ref{FLS2}). It is easy to see that when two foliated Lie systems are diffeomorphic, they share the same foliated  automorphic Lie system. It is also immediate that foliated Lie systems related to the same foliated automorphic Lie system do not need to be diffeomorphic as they may be defined on manifolds of different dimension.
 \end{example}
	
	\section{Applications}
	Let us use our results of previous sections to study several physical problems. First, we focus on an extension of the generalised Ermakov system \cite{Le91}. Then, we analyse the existence of related Hamiltonian structures, which extends, in a geometric way, some of the results given in \cite{BV90}.

	\subsection{A new class of generalised Ermakov systems}
	
	There exists an extensive literature on the so-called Ermakov systems and their generalisations (see \cite{CLR08c,Go90,Le91,Ra80,RR79,RR80} and references therein). We propose here a class of generalised Ermakov systems that cannot be described through Lie systems but they admit a description in terms of foliated ones. 
	
	The so-called {\it generalised Ermakov systems} \cite{RR79,RR80} is a class of systems of differential equations on $\mathbb{R}_0^2=\{(x,y):xy\neq 0\}$ of the form
	\begin{equation}\label{GEE}
		\frac{d^2 x}{dt^2}=-\omega^2(t)x+\frac{g(y/x)}{yx^2},\qquad \frac{d^2 y}{dt^2}=-\omega^2(t)y+\frac{f(x/y)}{xy^2},\qquad t \in \mathbb{R},
	\end{equation}
	where $f,g:\mathbb{R}\rightarrow \mathbb{R}$ are real functions. The equation introduced by Milne \cite{M30} was an example of (\ref{GEE}) given by setting $g(u)=u$ in one of the equations and ignoring the second one \cite{P50}. Meanwhile, the so-called  Ermakov--Pinney  system  	corresponds to the case $g(u)=u$ and $f(u)=0$. The latter generalisation, along with each particular example of (\ref{GEE}), admit a constant of motion, the so-called {\it generalised Lewis invariant}, of the form
\begin{equation}\label{defI}
	I(x,y,\dot x,\dot y)=\frac 12\left(x\dot y-y\dot x\right)^2+\int^{x/y}[f(u)-u^{-2}g(1/u)]du.
	\end{equation}
	It was noted by Ray, Reid, and Goedert \cite{Go90,Ra80,RR80} that the term $\omega^2(t)$ can be replaced by much more general expressions (see  \cite{Le91} and references therein). For instance, there exist generalisations of (\ref{GEE}) where $\omega(t)$ depends on the time-derivatives of $x$ and $y$ \cite{RR80}.  
 
 As a new generalisation, we propose the second-order system of differential equations given by
	\begin{equation}\label{Leach}
		\frac{d^2 x}{dt^2}={-\omega^2(t,I)x}+\frac{g(I,y/x)}{x^2y },\quad \frac{d^2 y}{dt^2}={-\omega^2(t,I)y}+\frac{f(I,x/y)}{xy^2 },\quad (x,y)\in \mathbb{R}_0^2,\quad t\in\mathbb{R},
	\end{equation}
	where $f,g:\mathbb{R}\rightarrow \mathbb{R}$ are arbitrary non-vanishing functions and $I$ is given by (\ref{defI}). In the case where $f$ and $g$ are constants, one recovers the generalised Ermakov system studied in \cite{Le91}. 
	
	Consider the system of first-order differential equations on $T\mathbb{R}_0^2$ of the form
$$
	\dfrac{dx}{dt} = v_x,\qquad \dfrac{dy}{dt} = v_y,\qquad
	\dfrac{dv_x}{dt} = {-\omega^2(t,I)x}+\dfrac{g(I,y/x)}{x^2y },\qquad
  \dfrac{dv_y}{dt} = {-\omega^2(t,I)y}+\dfrac{f(I,x/y)}{y^2x },
	$$
	obtained by adding the variables $v_x=\dot x$ and $v_y=\dot y$ to (\ref{Leach}) and where $I$ is a function as in (\ref{defI}) but with $\dot x$ and $\dot y$ replaced by $v_x$ and $v_y$, respectively. This system is associated with the $t$-dependent vector field $X=\omega^2(t,I)X_3+X_1$ on $T\mathbb{R}_0^2$, where the vector fields 
$$
	\begin{array}{rcl}
	X_1&=&\dfrac{f(I,x/y)}{xy^2} \partial_{v_y}+v_y \partial_{y}+\dfrac{g(I,y/x)}{x^2y } \partial_{v_x}+v_x \partial_{x},\\
	X_2&=&\dfrac 12\left[y \partial_{y}-v_y \partial_{v_y}+x \partial_{x}-v_x \partial_{v_x}\right],\\
	X_3&=&-y \partial_{v_y}-x \partial_{v_x},\end{array}
	$$
	satisfy the commutation relations
	$$
	[X_1,X_2]=X_1,\qquad [X_1,X_3]=2X_2,\qquad [X_2,X_3]=X_3\rc{,}
	$$
	and therefore span a Lie algebra of vector fields $V^{gES}$ isomorphic to $\mathfrak{sl}(2,\mathbb{R})$. Since a straightforward calculation shows that $I$ is a common first-integral to $X_1,X_2,X_3$ and $X_1\wedge X_2\wedge X_3\neq 0$ on a  dense open subset  $\mathcal{O}\subset T\mathbb{R}_0^2$, then $X$ becomes a foliated Lie system on $\mathcal{O}\subset T\mathbb{R}^2_0$. 
	
	Let us show that (\ref{Leach}) admits a Lie algebra of Lie symmetries given by $\mathfrak{sl}(2,\mathbb{R}).$  Note that the restriction of $X$ to every leaf of the foliation determined by the integral leaves of the distribution  $\mathcal{D}^{V^{gES}}$ becomes a Lie system. Since $X_1\wedge X_2\wedge X_3\neq 0$ on these leaves, one obtains a so-called {\it locally automorphic} Lie system on each leaf (see \cite{GLMV19}). It was also proved in \cite{GLMV19} that such a Lie system admits a Lie algebra of Lie symmetries isomorphic to $\mathfrak{sl}(2,\mathbb{R})$. Gluing together these vector fields on each leaf, we obtain a Lie algebra of Lie symmetries of $X$ on $T\mathbb{R}_0^2$ isomorphic to $\mathfrak{sl}(2,\mathbb{R})$. This feature is common to many other generalisations of Ermakov systems \cite{Le91}.

	\subsection{Foliated Lie--Hamilton systems and $r$-matrices}
	Let us illustrate the use of Poisson structures to investigate foliated Lie systems through a couple of examples. This suggests how to generalise the theory of Lie-Hamilton systems in \cite{CLS13} to the realm of foliated Lie systems. As a byproduct, several results on the use of $r$-matrices to study foliated Lie systems will be provided, which generalises previous findings from \cite{BV90}.
	
Let $\{e_1,\ldots,e_{d}\}$ be a basis of a Lie algebra $\mathfrak{g}$  and denote the dual basis in $\mathfrak{g}^*$ by $\{v^1,\ldots,v^d\}$. Moreover, let $G$ be the Lie group associated with $\mathfrak{g}$. Consider the first-order system of differential equations on $\mathfrak{g}$ given by 
\begin{equation}\label{AdFLS}
\frac{dx}{dt}=\sum_{\alpha=1}^{d}f_\alpha (t,x)X^{\rm ad}_{e_\alpha}(x),\qquad \forall x\in \mathfrak{g},
\end{equation}
where the $X^{\rm ad}_{e_\alpha}$ are the fundamental vector fields of the adjoint action of $G$ on $\mathfrak{g}$ and $f_1,\ldots,f_d$ are common $t$-dependent constants of motion for all the vector fields  $X^{\rm ad}_{e_\alpha}$. 
	
Let $\mathcal{D}^{\rm ad} := \langle X^{\rm ad}_1,\ldots,X_d^{\rm ad} \rangle$ and take the point $x_0\in \mathfrak{g}$ where the rank of the distribution $\mathcal{D}^{\rm ad}$ reaches its maximum value  $a_{\rm max}$. Then, there exist $a_{\rm max}$ vector fields on $G$ taking values in $\mathcal{D}^{\rm ad}$ which are linearly independent at $x_0$.  Such vector fields will be also linearly independent at every point of a certain local neighbourhood $\mathcal{O} \subset \mathfrak{g}$ of $x_0$. In consequence, $ {\rm rk} \mathcal{D}^{\rm ad} = a_{\rm max}$ on $\mathcal{O}$ (cf. \cite{Va94}). Hence, we can restrict (\ref{AdFLS}) to an open submanifold $\mathcal{O}\subset \mathfrak{g}$, where $\mathcal{D}^{\rm ad}$ is a regular distribution of order $a_{max}$. Then, system (\ref{AdFLS}) becomes a foliated Lie system on $\mathcal{O}$. 
	
	Assume that $\mathfrak{g}$ admits an  ${\rm ad}$-invariant non-degenerate constant metric ${\bf g}= {{\displaystyle\sum_{\alpha,\beta=1}^d}}g_{\alpha\beta}v^\alpha\otimes v^\beta$, i.e. ${\bf g}([x,x'],x'')+{\bf g}(x',[x,x'']) =0$ for all $x,x',x''\in \mathfrak{g}$ (see \cite{Mi76} for details on ad-invariant metrics). This allows us to define a metric tensor $\mathfrak{G}= {{\displaystyle\sum_{\alpha,\mu=1}^{d}}}g_{\mu\nu}dv^\mu\otimes dv^\nu$ on $\mathfrak{g}$ and a vector bundle isomorphism $\mathfrak{G}^\flat:e_x\in T\mathfrak{g}\mapsto \mathfrak{G}_x(e_x,\cdot)\in T^*\mathfrak{g}$ between the tangent and cotangent bundles over $\mathfrak{g}$. Since $\mathfrak{G}$ is non-degenerate, $\mathfrak{G}^\flat$ has an inverse $\mathfrak{G}^\sharp:T^*\mathfrak{g}\rightarrow T\mathfrak{g}$. Let $E$ be the Euler vector field on $\mathfrak{g}$ generating dilations, namely $E= {{\displaystyle\sum_{\alpha=1}^d}}v^\alpha \partial_{v^\alpha}$. It is immediate that $E$ does not depend on the  linear system on $\mathfrak{g}$ used to define it, which turns $E$ into a geometric object. 
	
Previously introduced geometric objects allow us to define a Poisson bracket on $\mathfrak{g}$, the so-called {\it Kirillov bracket} (see \cite{BV90}), of the form
	\begin{equation}\label{KB}
		\{f,h\}_K=\mathfrak{G}([\mathfrak{G}^\sharp (df),\mathfrak{G}^\sharp (dh)]_{\mathfrak{X}},E),\qquad \forall f,h \in C^\infty(\mathfrak{g}),
	\end{equation}
	where $[\cdot,\cdot]_{\mathfrak{X}}$ is the extension of the Lie bracket on $\mathfrak{g}$ to every tangent space in $T_x\mathfrak{g}$ for $x\in \mathfrak{g}$ by translations. More specifically, if $c_{\alpha\beta}\,^\gamma$, with $\alpha,\beta,\gamma=1,\ldots,d$, are the structure constants of the Lie algebra $\mathfrak{g}$ in the basis $\{e_1,\ldots,e_d\}$, i.e. $[e_\alpha,e_\beta]= {{\displaystyle\sum_{\gamma=1}^d}}c_{\alpha\beta}\,^\gamma e_\gamma$  for $\alpha,\beta=1,\ldots,d$, then $[\partial_{v^\alpha}, \partial_{v^\beta}]_\mathfrak{X} := {{\displaystyle\sum_{\gamma=1}^d}} c_{\alpha\beta}\,^\gamma \partial_{v^\gamma}$ for $\alpha,\beta=1,\ldots,d$. 
 
 Let us prove that (\ref{KB}) recovers the expression of the Kirillov bracket given in \cite{BV90}. 
	It is immediate that expression (\ref{KB}) is antisymmetric and satisfies the Leibniz property. Let us prove that (\ref{KB}) fulfils the Jacobi identity. Since ${\bf g}$ is non-degenerate and $\{e_1,\ldots,e_d\}$ is a basis of $\mathfrak{g}$, the linear functions  $f_\alpha: x\in\mathfrak{g}\mapsto {\bf g}(e_\alpha,x)\in \mathbb{R}$, with $\alpha=1,\ldots,d$, form a coordinate system on $\mathfrak{g}$. In the coordinates $\{f_1,\ldots,f_d\}$, expression (\ref{KB}) becomes
	$
	\{f_\alpha,f_\beta\}_K(x)={\bf g}([e_\alpha,e_\beta],x)
	$  for $\alpha,\beta=1,\ldots,d$ and $x\in \mathfrak{g}$. Then, 
	$
	\{f_\alpha,\{f_\beta,f_\gamma\}\}(x)=\mathfrak{G} ([e_\alpha,[e_\beta,e_\gamma]],x)
	$ for $\alpha,\beta,\gamma=1,\ldots,d$ and $x\in \mathfrak{g}$. It follows that (\ref{KB}) satisfies the Jacobi identity for  any three functions chosen among $f_1,\ldots,f_d$. Due to this and the fact that (\ref{KB}) satisfies the Leibniz property, one gets that (\ref{KB}) satisfies the Jacobi identity for all functions on $\mathfrak{g}$. Hence,  (\ref{KB}) becomes a Poisson bracket, and its Poisson bivector  reads
	\begin{equation}\label{Kirillov}
		\Lambda_K=\frac 12\sum_{\alpha,\beta=1}^d\Lambda(df_\alpha,df_\beta)\frac{\partial}{\partial f_\alpha}\wedge \frac{\partial}{\partial f_\beta}=\frac 12\sum_{\alpha,\beta,\gamma=1}^dc_{\alpha\beta}\,^\gamma f_\gamma\frac{\partial}{\partial f_\alpha}\wedge \frac{\partial}{\partial f_\beta}.
	\end{equation}

	Recall that the vectors $\{e_1,\ldots, e_d\}$ can be considered as a coordinate system on the dual space $\mathfrak{g}^*$. The Kostan-Kirillov-Souriau (KKS) bracket on $\mathfrak{g}^*$ reads  (see \cite{Va94} for details)
 $$
	\Lambda=\sum_{\alpha,\beta, \gamma=1}^dc_{\alpha\beta}\,^\gamma e_\gamma \frac{\partial}{\partial e_\alpha}\wedge \frac{\partial }{\partial e_\beta}.
	$$
	The diffeomorphism $\phi:x\in \mathfrak{g}\rightarrow {\bf g}(x,\cdot)\in \mathfrak{g}^*$ yields that $\Lambda_K=\phi_*\Lambda$. Hence, (\ref{KB}) is induced by the KKS bracket on $\mathfrak{g}^*$.
	
	Let us use the fact that ${\bf g}$ is ${\rm ad}$-invariant to prove that the vector fields $X^{\rm ad}_1,\ldots,X^{\rm ad}_d$ on $\mathfrak{g}$ are Hamiltonian relative to $\Lambda_K$. The ${\rm ad}$-invariance of ${\bf g}$ gives that ${\displaystyle \sum_{\delta=1}^d}c_{\alpha\beta}\,^\delta g_{\delta \gamma}=-{\displaystyle \sum_{\delta=1}^d}c_{\gamma\beta}\,^\delta g_{\delta\alpha}$ for $\alpha,\beta,\gamma=1,\ldots,d$. If $g^{\alpha\beta}$ are the entries of the inverse matrix of the metric ${\bf g}$, one gets that 
	$$
	\sum_{\delta,\gamma,\alpha=1}^dg^{\theta\alpha}c_{\alpha\beta}\,^\delta g_{\delta\gamma}g^{\gamma\pi}=-\sum_{\gamma,\alpha,\delta=1}^dc_{\gamma\beta}\,^\delta g_{\delta\alpha}g^{\gamma\pi}g^{\theta\alpha}\Longrightarrow \sum_{\alpha=1}^dg^{\theta\alpha}c_{\alpha\beta}\,^\pi=-\sum_{\gamma=1}^dc_{\gamma\beta}\,^\theta g^{\gamma\pi}.
	$$
	Renaming indexes,
${\displaystyle \sum_{\beta=1}^d}g^{\gamma \beta}c_{\alpha\beta}\,^\delta=-{\displaystyle \sum_{\beta=1}^d}g^{\delta\beta}c_{\alpha\beta}\,^\gamma$ for every $\gamma,\alpha,\delta=1,\ldots,d$. Since $v^\mu={\displaystyle \sum_{\gamma=1}^d}g^{\mu\gamma}f_\gamma$, using (\ref{Kirillov}), and in view of previous results, we have that 
	$$
	\Lambda_K=-\frac 12\sum_{\alpha,\beta,\gamma,\mu,\nu,\sigma=1}^dc_{\alpha\beta}\,^\gamma g_{\gamma\nu} v^\nu g^{\mu\alpha}g^{\beta \sigma}\frac{\partial}{\partial v^\mu}\wedge\frac{\partial}{\partial v^\sigma}=-\frac 12\sum_{\alpha,\beta,\sigma,\mu,\nu=1}^dg^{\beta\sigma}v^\nu c_{\nu\beta}\,^\mu\frac{\partial}{\partial v^\mu}\wedge \frac{\partial}{\partial v^\sigma}.
	$$
	A short calculation shows that $\Lambda_K(dv^{\bar{\sigma}},\cdot) ={\displaystyle\sum_{\beta=1}^d}g^{\beta\bar {\sigma}}X_\beta^{\rm ad}$ for $\bar\sigma=1,\ldots,r$, and 
	then 
	$$X^{\rm ad}_\beta={ \Lambda_K\left(\displaystyle\sum_{\bar\sigma=1}^dg_{\beta\bar{\sigma}}v^{\bar{\sigma}},\cdot \right)}$$ 
	for every $\beta=1,\ldots,d$. 
	This proves that the Vessiot--Guldberg Lie algebra of (\ref{AdFLS}) consists of Hamiltonian vector fields on {$\mathfrak{g}$}  relative to the  Kirillov bracket on $\mathfrak{g}$. The same applies to the restriction of the Vessiot--Guldberg Lie algebra of (\ref{AdFLS}) and $\Lambda$ to $\mathcal{O}$

	Whether the Lax pair (\ref{AdFLS}) is  a Hamiltonian system or not relative to the Kirillov bracket on $\mathcal{O}$ is not much  relevant to us. It was proved in Section \ref{HOFSR} that our method to derive foliated superposition rules for (\ref{AdFLS}) requires to determine some first-integrals for $X^{\rm ad}_1,\ldots, X^{\rm ad}_r$ and their diagonal prolongations. 
	This can be achieved by using that these vector fields are Hamiltonian (see \cite{CLS13}). Moreover, (\ref{AdFLS}) can be restricted to the intersection of $\mathcal{O}$ with any orbit $\mathcal{S}$ of $\mathcal{D}^{\rm ad}$. This restriction can also be considered as the restriction to $\mathcal{S}\subset \mathcal{O}$ of the Lie--Hamilton system $X={\displaystyle\sum_{\alpha=1}^d}f_\alpha(t,x)X^{\rm ad}_\alpha$, where $x\in \mathcal{S}$.  Again, whether (\ref{AdFLS}) is a Hamiltonian system or not, {\it per se}, is not relevant.
	
	Let us consider now the foliated automorphic Lie system related to the foliated Lie system (\ref{SF}) relative to its Vessiot--Guldberg Lie algebra $V^{\rm ad}$. The analysis of this system will again support our idea about how one should endow foliated Lie systems with a compatible Poisson structure to study their properties. 
	
	Consider a general Lie algebra $\mathfrak{g}^{lp}$ and its simply and simply connected Lie group $G^{LP}$. We consider the elements of $\mathfrak{g}^{lp}$ as left-invariant vector fields on $G^{LP}$. Let $r$ be an antisymmetric triangular $r$-matrix of a certain Lie algebra $\mathfrak{g}^{lp}$, i.e. $[r,r]_{SN}=0$ (see \cite{CP95,KS04} for details), and define
	$$
	\Lambda_r=\sum_{\alpha,\beta=1}^{2n}r_{\alpha\beta}X^R_\alpha\wedge X^R_\beta,
	$$
	where the vector fields $X^R_1,\ldots,X^R_{2n}$ satisfy that their non-zero commutation relations read $[X^R_{\alpha+n},X^R_{\alpha}]=2X^R_i$ for $\alpha=1,\ldots,n$.
	Then, $\Lambda$ is a Poisson bracket on $G^{LP}$ due to the fact that $r$ is a triangular $r$-matrix.  In particular,  consider the $r$-matrix $r={\displaystyle\sum_{\alpha=1}^n}e_\alpha\wedge h_{\alpha+n}$ in $\mathfrak{g}^{lp}$. This gives rise to a Poisson structure $\Lambda_r^{LP}={\displaystyle\sum_{\alpha=1}^n}X^R_\alpha\wedge X^R_{\alpha+n}$ on $G^{LP}$.
	
	Let us prove that the right-invariant vector fields $X_1^R,\ldots,X_n^R$ are  Hamiltonian relative to $\Lambda^{LP}_r$. Take the basis of right-invariant differential one-forms  $\{\eta_1^R,\ldots,\eta_{2n}^R\}$ on $G^{GL}$ that are dual  $\{X^R_1,\ldots,X^R_{2n}\}$. Then,
	$$
	d\eta^R_{\alpha}(X^R_\beta,X^R_\gamma)=-\eta_{\alpha}([X^R_\beta,X^R_\gamma]),\qquad \alpha,\beta,\gamma=1,\ldots,2n.
	$$
	Since $[X^R_\beta,X^R_\gamma]$ is a linear combination of $X^R_1,\ldots,X^R_n$, one gets that $d\eta^R_{\alpha}=0$ for $\alpha=n+1,\ldots,2n$. Since $G^{LP}$ is simply-connected, $\eta_\alpha^R=df_\alpha$  for $\alpha=n+1,\ldots,2n$. Consequently, 
	$$
	X^R_\alpha=\Lambda^{LP}(df_\alpha,\cdot),\qquad \alpha=1,\ldots,n,
	$$
	are Hamiltonian vector fields relative to $\Lambda^{LP}_r$. 
	
	The $t$-dependent vector field  given by (\ref{SF}) is related via Theorem \ref{FLSAFLS}, when one considers that it admits a Vessiot--Guldberg Lie algebra $V^{GL}$, to the foliated automorphic Lie system 
	\begin{equation}\label{AutSpe}
		\frac{dg}{dt}=\sum_{\alpha=1}^nf_\alpha(t,v_{n+1},\ldots,v_{2n})X^R_\alpha(g), \qquad g\in G^{LP},
	\end{equation}
	on the principal bundle $\pi^{PL}:G^{LP}\times \mathbb{R}^n\rightarrow \mathbb{R}^n$. This foliated automorphic Lie system
	admits a Vessiot--Guldberg Lie algebra of $\langle X^R_1,\ldots, X^R_{n}\rangle $ of Hamiltonian vector fields relative to $\Lambda^{LP}_r$. Hence, a superposition rule relative to this Lie algebra can be obtained using the methods in \cite{CLS13}. Once again, one obtains that it is interesting to consider foliated Lie systems whose Vessiot--Guldberg Lie algebras are Hamiltonian relative to some Poisson bivector.
	
	Let us provide another example of a foliated Lie system related to a Vessiot--Guldberg Lie algebra of Hamiltonian vector fields relative to a Poisson structure induced by a general $r$-matrix. Consider the foliated Lie system related to  (\ref{FLS2}). Such a foliated Lie system admits a Vessiot--Guldberg Lie algebra of the form $V=\langle X^{\rm ad}_1,\ldots, X^{\rm ad}_{2n}\rangle$, where $[X^{\rm ad}_{\alpha+n},X^{\rm ad}_\alpha]=2X^{\rm ad}_\alpha$  for $\alpha=1,\ldots,n$. The foliated automorphic Lie system related to this system reads as (\ref{AutSpe}). Recall that if $r={\displaystyle\sum_{\alpha,\beta=1}^{2n}}r^{\alpha\beta}e_\alpha\wedge e_\beta$ is an $r$-matrix for $\mathfrak{g}^{lp}$, one has the {\it Sklyanin Lie bracket} on $G^{LP}$ (see \cite{CP95}) given by the Poisson bivector
	$$
	\Lambda^{S}=\sum_{\alpha,\beta=n}^rr^{\alpha\beta}(X^L_\alpha\wedge X^L_\beta-X^R_\alpha\wedge X^R_\beta).
	$$
	Let us consider the $r$-matrix {$r={\displaystyle\sum_{\alpha=1}^n}e_\alpha\wedge h_\alpha$}. The Sklyanin Poisson bracket reads
	$$
	\Lambda^S=\sum_{\alpha=1}^n(X^L_\alpha\wedge X^L_{\alpha+n}-X^R_\alpha\wedge X^R_{\alpha+n})
	$$
	As before, we obtain $\Lambda^S\eta_{\alpha+n}^R=X_\alpha^R$ for $\alpha=1,\ldots,n$. Consequently,  (\ref{AutSpe})  admits a Hamiltonian Lie algebra of vector fields spanned by $X^R_1,\ldots,X^R_n$. A superposition rule can then obtained by studying these vector fields. Their constants of motions can be obtained through known methods for Hamiltonian--Lie systems.

	Previous examples justify the following definition.

	\begin{definition} A {\it foliated Lie--Hamilton system} is a foliated Lie system $X$ admitting a Vessiot--Guldberg Lie algebra of Hamiltonian vector fields relative to a Poisson structure.
	\end{definition}

As a final remark, let us briefly detail a new method to obtain Poisson structures on a Lie group $G$ from a KKS bracket on the dual to its Lie algebra $\mathfrak{g}$. This method does not require the use of an {\rm ad}-equivariant inner product on $\mathfrak{g}$, as needed to turn (\ref{AdFLS}) into a (foliated) Hamiltonian Lie system via our previous techniques. 

 Consider the two-dimensional Lie algebra $\mathfrak{b}_2=\langle e_1,e_2\rangle$ such that $[e_1,e_2]=e_2$. Let $\{e^1,e^2\}$ be the dual basis to $\{e_1,e_2\}$. Note that $\mathfrak{b}_2$ is a Frobenius Lie algebra \cite{CP95, E82}. Recall that a Frobenius Lie algebra over a field $k$ is a Lie algebra $(\mathfrak{g},[\cdot, \cdot]_{\mathfrak{g}})$ equipped with a non-degenerate $k$-bilinear skew-symmetric map $\beta: \mathfrak{g} \times \mathfrak{g} \to k$ of the form $\beta(v, w) := f([v,w]_{\mathfrak{g}})$, the so-called {\it Kirillov form}, for a certain linear map $f \in \mathfrak{g}^*$ such that $\beta([u,v],w) + \beta([w,u],v) + \beta([v,w],u) = 0$ for any $u,v,w \in \mathfrak{g}$. In case of $\mathfrak{b}_2$, the Kirillov form is given by the element $f := e^1 + e^2$. Note that the dual of $r := e_1 \wedge e_2$ gives rise to the map $\beta$. Such an element $r$ is called a {\it Jordan $r$-matrix} \cite{GS96}.  Moreover, a short calculation shows that any ad-invariant symmetric form on $\mathfrak{b}_2$ is degenerate. Consequently, one cannot study the system (\ref{AdFLS}) for $\mathfrak{b}_2$ as a foliated Lie--Hamilton system using the method discussed in this section.
 
Let $B_2$ be the connected and simply connected Lie group associated with the Lie algebra $\mathfrak{b}_2$. Then, $B_2$ admits a Poisson bivector $X^L_1\wedge X^L_2$, where $X_1^L,X_2^L$ are left-invariant vector fields on $B_2$ such that their values at a neutral element of $B_2$ read $X_1^L(e)=e_1$ and $X_2^L(e)=e_2$. In view of \cite[Proposition 7]{GS96} and the fact that $\mathfrak{b}_2$ is a Frobenius Lie algebra, there exists a local diffeomorphism $\phi_f:g\in B_2\mapsto {\rm Ad}^*_gf\in \mathfrak{b}^*_2$ for a certain $f\in \mathfrak{b}_2^*$. Indeed, the coadjoint orbits of $B_2$ in $\mathfrak{b}_2^*$ are the integral submanifolds of the KKS Poisson bracket on $\mathfrak{b}_2^*$, which have dimension 0 or 2. Since the KKS bracket on $\mathfrak{b}_2^*$ is different from zero, there exists a two-dimensional coadjoint orbit and thus, such $f$ must exist. Then,  $\phi_f(B_2)$ is an open subset of $\mathfrak{b}_2^*$. By \cite[Proposition 7]{GS96}, the map $\phi_f$ becomes a Poisson map relative to the KKS bracket on $\mathfrak{b}_2^*$ and the Poisson bracket (with the negative sign) on $B_2$ related to $r$.

\chapter{Deformation of Lie--Hamilton systems via Jacobi structures}

	\section{Deformation of Lie--Hamilton systems}\label{Sec:DfrLH}
	The section recalls a procedure devised in \cite{BCFHL18} to deform Lie--Hamilton systems via Poisson coalgebras within the class of Hamiltonian systems.
	First, we will briefly survey the theory on such systems. Next, we will detail a simple example.
	
	\subsection{General deformation procedure}
	Let us start by a considering the $t$-dependent system of first-order differential equations on $M$ determined by a $t$-dependent vector field on $M$ given by
	\begin{equation}\label{systemLH}
	\frac{ds}{dt}=X(t,s)=\sum_{\alpha=1}^rb_\alpha(t)X_\alpha(s),\qquad \forall t\in \mathbb{R},\quad \forall s\in M,
	\end{equation}
	for certain functions $b_1(t),\ldots,b_r(t)$ and vector fields $X_1,\ldots,X_r$  spanning a Lie algebra $V$, a so-called {\it Vessiot--Guldberg Lie algebra}, of Hamiltonian vector fields relative to a Poisson structure $\Lambda$ on $M$. A system of the form (\ref{systemLH}) is called a {\it Lie--Hamilton system} (see \cite{LS20} and references therein).
	Hamiltonian vector fields relative to a Poisson structure span its characteristic distribution, whose integral submanifolds are symplectic manifolds relative to the restriction of the initial Poisson bracket to the functions on each leaf \cite{Va94}. Therefore, (\ref{systemLH}) can be restricted to the leaves of such a distribution. On each leaf, the vector fields $X_1,\ldots,X_r$ become Hamiltonian and (\ref{systemLH}) becomes a Lie--Hamiltonian system relative to a symplectic form. Hence, the study of Lie--Hamilton systems reduces to the case of Lie--Hamilton systems relative to symplectic forms. We hereafter assume that (\ref{systemLH}) is a Lie--Hamilton system relative to a symplectic form $\omega$ on $M$.
	
	Let $\{e_1,\ldots,e_r\}$ be a basis of $\mathfrak{g}$ satisfying the same commutation relations as $\{h_1,\ldots,h_r\}$. By simplicity, we assume $V\simeq\mathfrak{g}$. 
	By assumption, every $X_{\alpha}$ admits a Hamiltonian function $h_{\alpha}$, with $\alpha=1,\ldots,r$. Although $\mathfrak{M} := \langle h_1, \ldots, h_r\rangle$ need not be a Lie algebra relative to the Poisson bracket, it can always be extended to a finite-dimensional one by adding the successive Poisson brackets of the elements of $\mathfrak{M}$ (cf. \cite{CLS13}). Let us assume that $\mathfrak{M}$ has already been extended by new appropriate functions to close a finite-dimensional Lie algebra. Let $\mathfrak{g}$ be the abstract Lie algebra isomorphic to  $\mathfrak{M}$. 
	
	If $\{e^1,\ldots,e^r\}$ is the dual basis to $
	\{e_1,\ldots,e_r\}$,  
	we can attach the Lie--Hamilton system (\ref{systemLH}) to a mapping
	$$
	\mu:s\in M\mapsto \sum_{i=1}^rh_i(s)e^i\in \mathfrak{g}^*.
	$$
	If $\mathfrak{g}^*$ is endowed with the Kirillov-Kostant-Souriau (KKS) Poisson bracket, then $\mu$ becomes a Poisson morphism.

	Our goal is to get a $z$-parametrised family of Hamiltonian systems determined by the $z$-parametrised family of $t$-dependent vector fields, $\{X_z\}_{z\in\mathbb{R}}$, so that the system determined by $X_z$ for $z=0$, i.e. $X_0$, becomes the starting Lie--Hamilton system. We do not require every $X_z$ to be a  Lie--Hamilton system as in (\ref{systemLH}). Instead, we will obtain a family of new Hamiltonian systems determined by $\{X_z\}_{z\in \mathbb{R}}$ and whose properties will be related to the ones of $X_0$ in a manner to be explained next.  
	
	In that respect, we define a set of $z$-parametrised functions $h_{1,z},\ldots,h_{r,z}$ on $M$ satisfying that
	\begin{equation}\label{def_Ham}
	\{h_{i,z},h_{j,z}\}=F_{ij,z}(h_{1,z},\ldots,h_{r,z}), \qquad h_{i,0} = h_i,\qquad i=1,\ldots,r,\qquad z\in \mathbb{R},
	\end{equation}
 for certain functions $F_{ij,z}:(x_1,\ldots,x_r)\in \mathbb{R}^r\mapsto F_{ij,z}(x_1,\ldots,x_r)\in \mathbb{R}$ with $i,j=1,\ldots,r$ and  $z\in \mathbb{R}$. 
The deformed system, for each value of the parameter $z$, is the system  with $t$-dependent Hamiltonian function $h_{z}(t,s)=\sum_{i=1}^{r}b_\alpha(t)h_{i,z}(s)$ for every $(t,s)\in \mathbb{R}\times M$. Condition (\ref{def_Ham}) implies that the Hamiltonian vector fields $X_{i,z}$ related to the functions $h_{i,z}$ satisfy that
	$$
	[X_{i,z},X_{j,z}]=\sum_{k=1}^r\frac{\partial F_{ij,z}}{\partial x_k}X_{k,z},\qquad i,j=1,\ldots,r,\qquad z\in \mathbb{R}.
	$$
Hence, for each fixed $z\in \mathbb{R}$, the family of vector fields $X_{1,z},\ldots,X_{r,z}$ on $M$ span an involutive distribution. If the rank of the distribution is constant, then the distribution is integrable \cite{OR04}.
 	
	The functions $h_{1,z},\ldots,h_{r,z}$ can be chosen in many manners. We derive them using a Lie bialgebra structure on a Lie algebra $\mathfrak{g}$, since this allows us to define, for every $z\in \mathbb{R}$, a Poisson algebra $C^{\infty}_z(\mathfrak{g}^*)$  equipped with a $z$-parametrised Poisson bivector $\Lambda_z$. To explain the role of $C^{\infty}_z(\mathfrak{g}^*)$ in obtaining $h_{1,z},\ldots,h_{r,z}$, let us remind first that if $\{v_1,\ldots, v_r\}$ is a basis of $\mathfrak{g}$ closing the same commutation relations as $h_1,\ldots, h_r$, then
	$$
	D: f(v_1, \ldots, v_r) \in C^{\infty}(\mathfrak{g}^*) \mapsto f(h_1, \ldots, h_r) \in C^{\infty}(M)
	$$
	is a Poisson algebra morphism if we endow $C^\infty(\mathfrak{g}^*)$ with the KKS Poisson bracket and $C^\infty(M)$ with the Poisson bracket related to $\omega$. Note that $D$ is the pullback on functions on $\mathfrak{g}^*$ induced by $\mu$, i.e. $D=\mu^*$.
	
	Let us construct a similar procedure for every $z\in \mathbb{R}$. Recall that every Poisson manifold is the union of disjoint submanifolds (possibly of different dimension) given by the integral submanifolds of its characteristic distribution, the so-called {\it strata}. Moreover, the initial Poisson manifold reduces to a symplectic form on each such a stratum \cite{Va94,We83}. 
 Assume that $\theta\in \mathfrak{g}^*$ is such that the stratification around it can be considered as a foliation, i.e. all symplectic submanifolds intersecting an open neighbourhood of $\theta$ have the same dimension. Almost every point of $\mathfrak{g}^*$ admits this condition (cf. \cite{Va84}). We also assume that the Poisson bivectors $\{\Lambda_z\}_{z\in \mathbb{R}}$ will additionally induce a foliation of the same codimension on an open neighbourhood of $\theta$ for values of $z$ within an interval $I\ni 0$ in $\mathbb{R}$. Then, one can introduce a $z$-parametrised family of local coordinates $\{\xi_{1,z},\ldots,\xi_{2n,z},c_{1,z},\ldots,c_{s,z}\}$ on an open neighbourhood $U$ of $\theta\in \mathfrak{g}^*$ for $z\in I$  so that $\Lambda_z = \sum_{i=1}^n\partial_{\xi_{2i-1,z}} \wedge \partial_{\xi_{2i,z}}$ on $U$ for $z\in I$.  This means that there exists a chart $\Psi_z:U\rightarrow \mathbb{R}^{2n+s}$, with $\dim \mathfrak{g}=2n+s=r$, such that $\Psi_{z*}\Lambda_z=\Lambda_c$, where $\Lambda_c$ is the Poisson bivector on $\mathbb{R}^{2n+s}$ of the form $$
	\Lambda_c=\sum_{i=1}^n\frac{\partial}{\partial x^{2i-1}}\wedge\frac{\partial}{\partial x^{2i}}
	$$
	and $\Psi_{0*}\Lambda_0=\Lambda_c$. Hence, $\Psi^{-1}_{z*}\Psi_{0*}\Lambda_0=\Lambda_z$ and $\phi_z=\Psi^{-1}_{z}\circ \Psi_{0}$ gives a Poisson morphism mapping $\Lambda_0$ onto $\Lambda_z$. We write $F_z=\phi_z^*$ for the pull-back on functions on $\mathfrak{g}^*$ via $\phi_z$.
 
    Note that, in particular, if $\{v_i,v_j\}_z=F_{ij,z}(v_1,\ldots,v_r)$, where $\{v_1,\ldots,v_r\}$ is a basis of $\mathfrak{g}^*$, then
$$
\{\phi^*_zv_i,\phi^*_zv_j\}=F_{ij,z}(\phi^*_zv_1,\ldots,\phi^*_zv_r).
 $$   
 In other words, the functions $\phi^*v_1,\ldots,\phi^*v_r$ close on $\mathfrak{g}^*$ the same commutation relations relative to $\{\cdot,\cdot\}_0$ as $v_1,\ldots,v_r$ relative to $\{\cdot,\cdot\}_z$. Moreover,
define $D_z:C^\infty( U)\rightarrow C^\infty(M)$ in the form $D_z=D\circ F_z$ and set 
	$$h_{i,z}:=D\circ \phi^*_z (v_i)=v_i\circ \phi_z\circ \mu,\qquad  i=1,\ldots,r,\qquad z\in I.
	$$ The functions $h_{1,z},\ldots,h_{r,z}$ will close the same Poisson bracket relations  relative to the Poisson bracket on functions on $M$ as $v_1,\ldots,v_r$ on $\mathfrak{g}^*$ relative to $\{\cdot,\cdot\}_z$.

	\begin{figure}
	    \centering
	\includegraphics[scale=0.20]{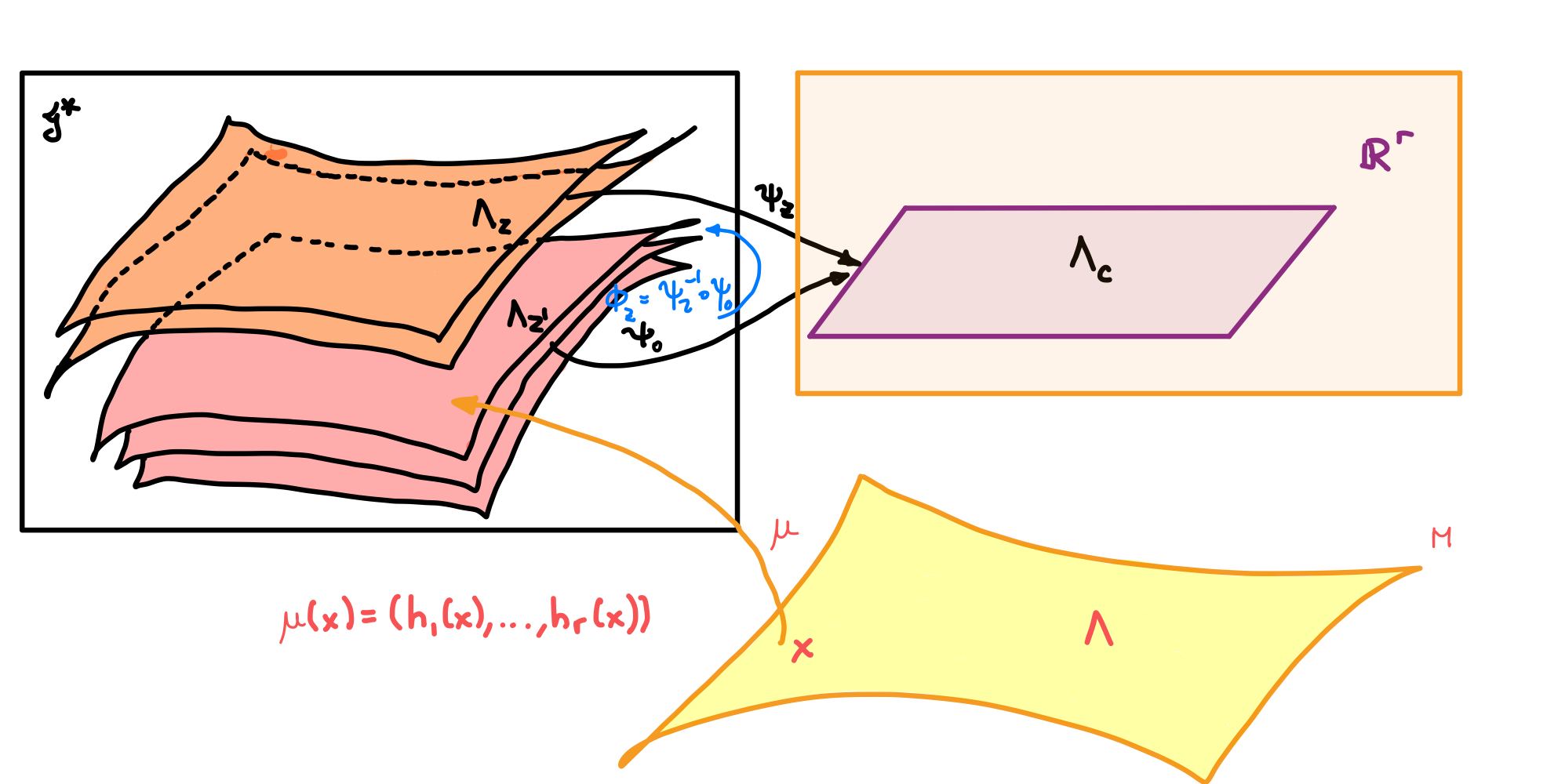}
	    \caption{Scheme of the rectification for the deformation procedure.}
	    \label{fig:my_label}
	\end{figure}
	
	As shown in next sections,  the above procedure uses the ideas implicitly given in  \cite{BCFHL18,BCFHL18b} to generalise previous techniques to a general manifold $M$ and a general Lie algebra $\mathfrak{g}$. Moreover, our approach in here is described in a slightly more detailed geometric picture without relying on Poisson brackets, which are a more of an algebraic entity. It is only worth stressing that $F_z:f\in C^\infty(\phi_z(U))\mapsto \phi_z^*f\in  C^\infty(U)$ gives rise to a Poisson algebra morphism for every $z\in \mathbb{R}$ between the Poisson algebra induced by $\Lambda_z$ on $C^\infty(\phi_z(U))$ and the one given by $\Lambda_0$ on $C^\infty(U)$.  
	
 \subsection{On the construction of $C^{\infty}_z(\mathfrak{g}^*)$ and its properties}\label{Sec:Cons}
	An important step in the presented procedure is to obtain a Poisson algebra $C^{\infty}_z(\mathfrak{g}^*)$ for every $z\in I$. Let us describe in detail how to get $C^{\infty}_z(\mathfrak{g}^*)$ using the Lie bialgebra structure on $\mathfrak{g}$.
	
	Recall that a {\it Poisson--Lie group} is a Lie group $G$ equipped with a Poisson structure whose group multiplication is a Poisson map. In other words, a Poisson--Lie group is determined by a Poisson bivector $\Lambda^G$ on $G$ such that \cite{CP95}
	\begin{equation}\label{gamma_cond}
	\Lambda^G_{gh} = (T_h L_g)(\Lambda^G_h) + (T_g R_h)(\Lambda^G_g), \quad \forall g,h \in G.
	\end{equation}
	In terms of the map $D: g\in G \mapsto  R_{g^{-1}*g}\Lambda^G_g\in\Lambda^2 \mathfrak{g}$, condition (\ref{gamma_cond}) reads
	$$
	D(gh) = {\rm Ad}_g (D(h)) + D(g), \quad \forall g,h \in G.
	$$
	Moreover, it can be proven \cite{Ma95} that
	\begin{equation}\label{PL_integration}
	\frac{d}{d\tau}\bigg|_{\tau=0}D(g\exp(\tau \xi))=(\mathcal{L}_{X^L_\xi}D)(g) = \textrm{Ad}_{g} (\delta(\xi)), \quad \forall g \in G,\qquad \forall \xi\in \mathfrak{g},
	\end{equation}
	where $\delta$ is a cocommutator on $\mathfrak{g}$ and $X^L_{\xi}$ is the left-invariant vector field satisfying $X^L_\xi(e)=\xi \in \mathfrak{g}$. 
	
	We will use (\ref{PL_integration}) to obtain a $z$-parametrised family of Poisson bivectors on $G$. Notice first that every Lie bialgebra $(\mathfrak{g}, [\cdot,\cdot]_{\mathfrak{g}}, \delta_{\mathfrak{g}})$ gives rise to a new one $(\mathfrak{g}, [\cdot,\cdot]_{\mathfrak{g}}, \delta_z := z \delta_{\mathfrak{g}})$ for every $z \in \mathbb{R}$. Then, by virtue of \cite[Theorem 13.2]{CP95}, one can associate a Poisson--Lie group $G^*_z$ with each dual Lie bialgebra $(\mathfrak{g}^*, [\cdot,\cdot]_{\mathfrak{g}^*} := \delta_z^*, \delta_{z,\mathfrak{g}^*} := [\cdot,\cdot]_{\mathfrak{g}}^*)$ for $z\in \mathbb{R}$. Solving (\ref{PL_integration}) for the chosen Lie bialgebra structure on the dual $\mathfrak{g}^*$, we get a $z$-parametrised family of Poisson bivectors on $G^*_z$. All $\delta^*_{z,\mathfrak{g}}$, for $z\neq 0$, determine isomorphic Lie algebras on $\mathfrak{g}^*$. Hence, the $G^*_z$ are isomorphic to $G^*$ for $z\neq 0$. We write $G^*$ for the corresponding Lie group. 
	
	The exponential map $\exp_{G^*}:\mathfrak{g}^*\rightarrow G^*$ is neither injective nor surjective in general \cite{Di57,Sa57}. Nevertheless, $\exp_{G^*}$ is a diffeomorphism for certain types of Lie algebras  \cite{Di57,Sa57}. In general, $\exp_{G^*}$ is always a local diffeomorphism, at least, from an open neighbourhood $\mathcal{O}\subset \mathfrak{g}^*$ containing $0$ and its image. We can then define a Poisson bivector on $\mathcal{O}$ via the condition
	$$
	\exp_{*p}\Lambda^{\mathfrak{g}^*}_p=\Lambda^{G^*}_{\exp(p)},\qquad \forall p\in \mathcal{O}.
	$$
	
	\begin{example}
		Let us derive a deformed Poisson structure on $\mathfrak{sl}_2=\langle J_3,J_+,J_-\rangle$ with the Lie bracket determined by the relations
		$$
	[J_3,J_+]=2J_+,\qquad [J_+,J_-]=J_3,\qquad [J_3,J_-]=-2J_-,\qquad 	
		$$
		via the coproduct satisfying the relations
		$$
		\delta_{\mathfrak{sl}_2}(J_3) = 2 J_3 \wedge J_{+}, \quad \delta_{\mathfrak{sl}_2}(J_{+}) = 0, \quad \delta_{\mathfrak{sl}_2}(J_{-}) = 2 J_{-} \wedge J_{+}.
		$$ 
		Its dual $\delta_{\mathfrak{sl}_2}^*:{\mathfrak{sl}_2}^*\wedge{\mathfrak{sl}_2}^*\rightarrow {\mathfrak{sl}_2}^*$  induces a Lie bracket on  $\mathfrak{sl}_2^* = \langle J_3^*, J_{+}^*, J_{-}^*\rangle$, where $\{J_3^*,J_+^*,J_-^*\}$ is the dual basis to 
		$\{J_3,J_+,J_-\}$ with the commutation relations
		$$
		[J_3^*, J_{+}^*]_{{\mathfrak{sl}^*_2}} = 2J_3^*, \quad [J_{-}^*, J_{+}^*]_{{\mathfrak{sl}_2}^*} = 2J_{-}^*, \quad [J_3^*, J_{-}^*]_{{\mathfrak{sl}^*_2}} = 0.
		$$
		The dual Lie algebra, namely $\mathfrak{sl}_2^*$, becomes then isomorphic to $
		\mathbb{R}
		\ltimes 
		\mathbb{R}^2$, where $\mathbb{R}^2\simeq \langle J_3^*,J^*_-\rangle$ and $\mathbb{R}\simeq \langle J^*_+\rangle$, i.e. $\mathfrak{sl}_2^*$ is isomorphic to the so-called {\it book Lie algebra} $\mathfrak{b}$ \cite{BBM12}. The book Lie algebra $\mathfrak{b}$ is solvable. As every Lie algebra, it admits a connected and simply-connected Lie group, let us denote it by  $B$, whose Lie algebra is isomorphic to $\mathfrak{b}\simeq \mathfrak{sl}_2^*$.  The book Lie algebra admits then a coproduct, $\delta_{{\mathfrak{b}}}$, of the form
		$$
		\delta_{\mathfrak{b}}(J_3^*) = J_{+}^* \wedge J_{-}^*, \quad \delta_{\mathfrak{b}}(J^*_{+}) = 2 J_3^* \wedge J_{+}^*, \quad \delta_{\mathfrak{b}}(J^*_{-}) = -2 J_3^* \wedge J_{-}^*,
		$$
		whose transpose retrieves the original Lie algebra structure on $\mathfrak{sl}_2$. 
		
		Since $B$ is a simply-connected, solvable, real Lie group and $\mathfrak{b}$ admits a flag of ideals, i.e. a series of ideals $\mathfrak{b}_1,\ldots,\mathfrak{b}_4$ such that $\mathfrak{b}_1=\mathfrak{b},\mathfrak{b}_4=0$ and $\dim \mathfrak{b}_i/\mathfrak{b}_{i+1}=1$ for $i=1,2,3$, then $\exp_{\mathfrak{b}}$ is a bijection (see \cite[p. 121]{Di57} or \cite{Sa57}). Then, the Poisson bracket on $B$ induced by $\delta_{\mathfrak{sl}_2}^*$ allows us to define a globally defined Poisson bracket on $\mathfrak{b}$. Let us calculate it.
		
		In canonical coordinates of the second kind \cite{Varadarajan}, every $g$ in an open neighbourhood $\mathcal{O}$ of the neutral element of $B$ can be expressed as
		\begin{equation}\label{sec_kind}
		g = \exp(aJ_3^*) \exp(bJ_{+}^*) \exp(cJ_{-}^*),
		\end{equation}
		where $a,b,c \in \mathbb{R}$ can be understood as local coordinates, at least, on an open subset of $B$ containing its neutral element. To determine the deformed brackets, we consider the Lie bialgebras $(\mathfrak{b}, \delta_{\mathfrak{sl}_2}^*, [\cdot,\cdot]_{\mathfrak{b}})$. By using the adjoint representation of $\mathfrak{b}$ and the decomposition (\ref{sec_kind}) of $g \in \mathcal{O}$, a basis of left-invariant vector fields on $B$ take the form
		$$
		X^L_1=e^{-2b}
\frac{\partial}{\partial a},\qquad X_2^L=\frac{\partial}{\partial b}+2c\frac{\partial}{\partial c},\qquad X_3^L=\frac{\partial}{\partial c}
		$$
		in canonical coordinates of the second kind. Note that
  $$
  [X_1^L,X_2^L]=2X_1^L,\qquad [X_1^L,X_3^L]=0,\qquad [X_2^L,X_3^L]=-2X_3^L.
  $$
  
  Then, condition (\ref{PL_integration}) can be written as
		\[
		\begin{array}{lll}
X_1^LD_1 = 0, & X_2^LD_1 = 2\exp(-2b), & X_3^LD_1 = 0,\\
		X_1^LD_2 = 2 a\exp(-2b), & X_2^LD_2 = 4 c\exp(-4b), & X_3^LD_2 = -2\exp(-4b),\\
		X_1^LD_3 = \exp(-2b), & X_2^LD_3 = 0, & X_3^LD_3 = 0,
		\end{array}
		\]
		where $D(g) = D_1(g) \partial_{a} \wedge \partial_{b} + D_2(g) \partial_{a} \wedge \partial_{c} + D_3(g) \partial_{b} \wedge \partial_{c}$. 
		The solutions of the previous PDEs give rise to a possible general form for $D$: 
		$$
		D(g(a,b,c)) = \left[c_1 - \exp(-2b)\right] \partial_{a} \wedge \partial_{b} + [c_2 + a^2  - 2c\exp(-4b)] \partial_{a} \wedge \partial_{c} + [c_3 + a] \partial_{b} \wedge \partial_{c},
		$$
		where $c_1, c_2, c_3$ are real constants. Since $D(e) = 0$, one has that $c_1 = 1, c_2 = 0, c_3 = 0$. Right-invariant vector fields on $B$ are spanned by 
		$$
		X_1^R=\frac{\partial}{\partial a},
\qquad X_2^R=-2a\frac{\partial}{\partial a}+\frac{\partial}{\partial b},\qquad X_3^R=e^{2b}\frac{\partial}{\partial c},
		$$
		and the Lie--Poisson bracket on $B$ reads, in canonical coordinates of the second kind, as
		$$
		\Lambda_g = [1 - e^{-2b}] \partial_{a} \wedge \partial_{b} - (2ce^{-2b}+a^2e^{2b}) \partial_{a} \wedge \partial_{c} + a e^{2b}\partial_{b} \wedge \partial_{c}.
		$$
		Consequently,
		\[
		\{a, b\} = 1 - e^{-2b}, \quad \{a, c\} =   - 2ce^{-2b}-a^2e^{2b}, \quad \{b, c\} = a e^{2b}.
		\]
		The change of coordinates $v_1 := -b/2, v_2 := a e^b/2, v_3 := c e^b$, transforms the above Poisson brackets into (\ref{DefPoi}). Moreover, the linear approximation $\{\cdot,\cdot\}_1$ of the Poisson bracket $\{\cdot,\cdot\}$ recovers the KKS Poisson bracket on $\mathfrak{sl}_2^*\simeq \mathfrak{b}$, namely
		\begin{equation}\label{Rel:Kir}
		\{a,b\}_1=2b,\qquad \{a,c\}_1=-2c,\qquad \{b,c\}_1=a.
		\end{equation}
		\demo
	\end{example}
	
    Let us study relation (\ref{Rel:Kir}) in a general context.
	The Lie--Poisson bracket $\{\cdot,\cdot\}_{ \mathfrak{g}^*}$ can be related to the KKS Poisson bracket $\{\cdot,\cdot\}_{K}$ on $\mathfrak{g}^*$. To study the relation, let us analyse the linear approximation of $\Lambda^{\mathfrak{g}^*}$ close to the zero in $\mathfrak{g}^*$. To do so, consider a basis $\{v_1,\ldots,v_r\}$ of $\mathfrak{g}$, which gives rise to a global coordinate system on $\mathfrak{g}^*$. If $\Lambda^{\mathfrak{g}^*}_{ij}:=\{v_i,v_j\}_{\mathfrak{g}^*}$ and $\{v_1^*,\ldots, v_r^*\}$ is the dual basis to $\{v_1,\ldots,v_r\}$, then, for $0=\xi\in \mathfrak{g}^*$, one has that
	\begin{equation*}
	\frac{\partial}{\partial v_k}\Bigg\vert_{\xi = 0}\!\!\!\!\!\!\Lambda^{\mathfrak{g}^*}_{ij} 
	= \frac{{\rm d}}{{\rm d}t}\Bigg\vert_{t = 0}\!\!\!\!\!\! \Lambda^{G^*}_{ij}(\exp(tv_k^*))\\
	= \frac{{\rm d}}{{\rm d}t}\Bigg\vert_{t = 0} \left[(R_{\exp(tv_k^*)})_* \left(\frac{\exp({\rm ad}(tv_k^*)) - 1}{{\rm ad}(tv_k^*)} \delta_{\mathfrak{g}^*}(tv_k^*)\right)\right]_{ij}\!\!\!\!
	= [\delta_{\mathfrak{g}^*}(v_k^*)]_{ij},
	\end{equation*}
	where  we have used the formula for the Lie--Poisson bivector presented in \cite[p. 57]{Ch00}. Since $\delta_{\mathfrak{g}^*} = [\cdot,\cdot]_{\mathfrak{g}}^{*}$, one has that $[\delta_{\mathfrak{g}^*}(v_k^*)]_{ij}$ is the coefficient of $v_k$ in the expression of $[v_i, v_j]_{\mathfrak{g}}$ in the chosen basis of $\mathfrak{g}$. Recalling that $\{\cdot,\cdot\}_{\mathfrak{g}^*}$ is a Lie--Poisson bracket and its zero-order term in the coordinates $\{v_1,\ldots,v_r\}$ vanishes, we obtain
	\begin{equation*}
	\{v_i, v_j\}_{\mathfrak{g}^*}(\xi) = \langle \xi, [v_i, v_j]_{\mathfrak{g}}\rangle + \ldots,
	\end{equation*}
	where the latest dots stand for the higher-order polynomial terms in the coordinates $\{v_1,\ldots,v_r\}$. In other words, the first-order term in the expansion of $\{\cdot,\cdot\}_{\mathfrak{g}^*}$ in the linear coordinates on $\mathfrak{g}^*$ retrieves the KKS Poisson bracket.  The following result is also of theoretical interest (see \cite[Theorem 1.3.2]{CP95}).

	\begin{proposition} All Poisson--Lie groups in a connected and simply connected $G$ related to isomorphic Lie bialgebras $(\mathfrak{g},\delta)$ are diffeomorphic, i.e. if $\Lambda_1,\Lambda_2$ are the Poisson bivectors of two Poisson--Lie groups on $G$ related to isomorphic Lie bialgebras on $\mathfrak{g}$, then there exists a Lie group diffeomorphism $\phi:G\rightarrow G$ such that $\phi_*\Lambda_1=\Lambda_2$.
	\end{proposition}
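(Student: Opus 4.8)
The plan is to use the dictionary between Poisson--Lie structures on $G$ and cocommutators on $\mathfrak{g}$ recalled in Section~\ref{Sec:DfrLH}, together with the fact (the converse statement quoted there, which we are allowed to assume) that on a connected and simply connected $G$ this correspondence is a bijection. First I would fix a Lie bialgebra isomorphism $\psi\colon(\mathfrak{g},\delta_1)\to(\mathfrak{g},\delta_2)$, that is, a Lie algebra automorphism of $\mathfrak{g}$ with $(\psi\otimes\psi)\circ\delta_1=\delta_2\circ\psi$ in the sense of Definition~\ref{bialg_hom}; its existence is exactly the hypothesis that the two Lie bialgebras be isomorphic. Since $G$ is connected and simply connected, $\psi$ and $\psi^{-1}$ integrate to Lie group homomorphisms of $G$ whose composites integrate $\psi\circ\psi^{-1}=\mathrm{id}$ and $\psi^{-1}\circ\psi=\mathrm{id}$; by uniqueness of the integration these composites are $\mathrm{id}_G$, so $\psi$ lifts to a Lie group automorphism $\phi\colon G\to G$ with $T_e\phi=\psi$.

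Next I would check that $(G,\phi_*\Lambda_1)$ is again a Poisson--Lie group and compute its cocommutator. The multiplicativity condition (\ref{gamma_cond}) is preserved under $\phi_*$ because $\phi$ is a group homomorphism: $\phi\circ L_g=L_{\phi(g)}\circ\phi$, $\phi\circ R_h=R_{\phi(h)}\circ\phi$, and $\phi$ intertwines the multiplications. Writing $D_i(g):=R_{g^{-1}*g}(\Lambda_i)_g\in\Lambda^2\mathfrak{g}$ and using $\phi\circ R_{\phi^{-1}(g)}=R_g\circ\phi$, hence $T_{\phi^{-1}(g)}\phi\circ R_{\phi^{-1}(g)*e}=R_{g*e}\circ\psi$, one obtains that the map attached to $\phi_*\Lambda_1$ is $\widetilde{D}(g)=(\psi\otimes\psi)\bigl(D_1(\phi^{-1}(g))\bigr)$. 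Differentiating along $\tau\mapsto\exp(\tau\xi)$ at $\tau=0$, and using $\phi^{-1}(\exp(\tau\xi))=\exp(\tau\psi^{-1}\xi)$ together with (\ref{PL_integration}) evaluated at the identity, the cocommutator of $\phi_*\Lambda_1$ is $\widetilde{\delta}=(\psi\otimes\psi)\circ\delta_1\circ\psi^{-1}=\delta_2$, by the choice of $\psi$.

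Finally I would invoke uniqueness: two Poisson--Lie structures on a connected and simply connected $G$ with the same cocommutator coincide. Concretely, $\Lambda_g=R_{g*e}D(g)$ and $D$ is pinned down by $D(e)=0$ together with (\ref{PL_integration}), which is a first-order relation integrating $D$ along one-parameter subgroups; since $G$ is connected, every element is a finite product of exponentials, so $D$, and hence $\Lambda$, is determined by the cocommutator. Applying this to $\phi_*\Lambda_1$ and $\Lambda_2$, both of which have cocommutator $\delta_2$, gives $\phi_*\Lambda_1=\Lambda_2$, which is the claim. I expect the main obstacle to be this last step: one must be sure the integration of $D$ from the cocommutator is globally consistent (the content of the converse direction in Section~\ref{Sec:DfrLH}), and that the intermediate identity $\widetilde{D}(g)=(\psi\otimes\psi)(D_1(\phi^{-1}(g)))$ is derived with care about the base points of the tangent maps $T_{\phi^{-1}(g)}\phi$ and $R_{g*e}$.
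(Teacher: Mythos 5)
Your proof is correct and follows the standard route: integrate the Lie bialgebra isomorphism $\psi$ to a group automorphism $\phi$ of the connected, simply connected $G$, check that $\phi_*\Lambda_1$ is multiplicative with cocommutator $(\psi\otimes\psi)\circ\delta_1\circ\psi^{-1}=\delta_2$, and conclude by uniqueness of the multiplicative bivector with a prescribed cocommutator. The paper itself gives no proof, only the citation to [CP95, Theorem 1.3.2], and your argument is precisely the one underlying that reference; the only cosmetic improvement is in the last step, where instead of integrating along products of exponentials one can simply observe that (\ref{PL_integration}) forces $D_{\phi_*\Lambda_1}-D_{\Lambda_2}$ to be annihilated by every left-invariant vector field, hence constant on the connected $G$ and equal to its value $0$ at $e$ (so simple connectedness is needed only to integrate $\psi$, not for uniqueness).
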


	To accomplish the deformation of the Lie--Hamilton system (\ref{systemLH}), we define a $z$-parametrized family of Poisson bivectors in the following theorem.
	
	\begin{proposition} Let $(\mathfrak{g},\delta)$ be a Lie bialgebra, let $\Lambda^K$ be the KKS Poisson bracket on $\mathfrak{g}^*$, and let $\Lambda^{\mathfrak{g}^*}$ be the Poisson bivector on $\mathfrak{g}^*$ induced by the cobracket $\delta$. If $\phi_z$ is the one-parametric group of diffeomorphisms of the dilatation vector field on $\mathfrak{g}^*$, then
		\begin{equation}\label{Def}
		\Lambda_z:=e^z\phi_{-z*}\Lambda^{\mathfrak{g}^*},\qquad \forall z\in \mathbb{R}\backslash\{0\},\qquad \Lambda_0:=\Lambda_K,\qquad z\in \mathbb{R},
		\end{equation}
		is a $z$-parametric family of Poisson bivectors.
	\end{proposition}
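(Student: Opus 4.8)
The plan is to separate the cases $z\neq 0$ and $z=0$, reducing each to standard facts about the Schouten--Nijenhuis bracket $[\cdot,\cdot]_{SN}$ on multivector fields of $\mathfrak{g}^*$; recall that a bivector field $\pi$ is Poisson precisely when $[\pi,\pi]_{SN}=0$ (see the discussion following Definition~\ref{SNbracket} and Proposition~\ref{Pr:PropSchou}).

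For $z\neq 0$ the argument is purely functorial. First I would recall that $\Lambda^{\mathfrak{g}^*}$ is itself a Poisson bivector: it was built, on an open neighbourhood $\mathcal{O}$ of $0$ in $\mathfrak{g}^*$, from the Poisson--Lie bivector $\Lambda^{G^*}$ of $G^*$ through the local diffeomorphism $\exp_{G^*}$ via $\exp_{*p}\Lambda^{\mathfrak{g}^*}_p=\Lambda^{G^*}_{\exp(p)}$, so that $\Lambda^{\mathfrak{g}^*}=(\exp_{G^*}^{-1})_*\Lambda^{G^*}$ locally; since $\Lambda^{G^*}$ is Poisson and the Schouten bracket is local, $[\Lambda^{\mathfrak{g}^*},\Lambda^{\mathfrak{g}^*}]_{SN}=0$ wherever $\Lambda^{\mathfrak{g}^*}$ is defined (in the cases of interest $\Lambda^{\mathfrak{g}^*}$ is polynomial in linear coordinates and extends uniquely to a global polynomial Poisson bivector on $\mathfrak{g}^*$, which I shall assume). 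Then I would invoke two elementary properties of $[\cdot,\cdot]_{SN}$: (i) it is natural under diffeomorphisms, because it is constructed recursively from the Lie bracket of vector fields and the action of vector fields on functions (Definition~\ref{SNbracket}), so any diffeomorphism $\psi$ of $\mathfrak{g}^*$ satisfies $\psi_*[P,Q]_{SN}=[\psi_*P,\psi_*Q]_{SN}$; and (ii) it is $\mathbb{R}$-bilinear (Definition~\ref{SNbracket}), hence $[c\Lambda,c\Lambda]_{SN}=c^2[\Lambda,\Lambda]_{SN}$ for a constant $c$. Taking $\psi=\phi_{-z}$ and $c=e^z\neq 0$ yields
$$[\Lambda_z,\Lambda_z]_{SN}=e^{2z}[\phi_{-z*}\Lambda^{\mathfrak{g}^*},\phi_{-z*}\Lambda^{\mathfrak{g}^*}]_{SN}=e^{2z}\,\phi_{-z*}[\Lambda^{\mathfrak{g}^*},\Lambda^{\mathfrak{g}^*}]_{SN}=0,$$
so $\Lambda_z$ is a Poisson bivector for every $z\neq 0$. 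For $z=0$ there is nothing to deform: $\Lambda_0:=\Lambda_K$ is the Kirillov--Kostant--Souriau bivector, whose condition $[\Lambda_K,\Lambda_K]_{SN}=0$ is the familiar reformulation of the Jacobi identity of $\mathfrak{g}$ (cf. the references on the KKS bracket cited in the excerpt).

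It remains to check that $\Lambda_0:=\Lambda_K$ is the consistent choice at the distinguished value of the parameter. For this I would expand $\Lambda^{\mathfrak{g}^*}=\sum_{k\geq 1}W_k$ in linear coordinates $\{v^1,\dots,v^r\}$ on $\mathfrak{g}^*$, with $W_k$ homogeneous of degree $k$; the computation already carried out in Section~\ref{Sec:Cons} shows $W_1=\Lambda_K$. Since $\phi_{-z}$ is the linear dilation $v\mapsto e^{-z}v$, a degree-$k$ homogeneous bivector transforms as $\phi_{-z*}W_k=e^{(k-2)z}W_k$, so
$$\Lambda_z=e^z\phi_{-z*}\Lambda^{\mathfrak{g}^*}=\sum_{k\geq 1}e^{(k-1)z}W_k=\Lambda_K+\sum_{k\geq 2}e^{(k-1)z}W_k .$$
Thus the linear part of $\Lambda_z$ is $\Lambda_K$ for every $z$, and $\Lambda_z\to\Lambda_K$ once the higher-order terms are switched off (the classical limit of the deformation), which is exactly the stipulation $\Lambda_0:=\Lambda_K$.

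The main obstacle I anticipate is not the Poisson identity --- which, as above, is ``soft'' and follows from naturality and bilinearity of the Schouten bracket --- but the bookkeeping of domains: $\exp_{G^*}$ need not be a global diffeomorphism, so a priori $\Lambda^{\mathfrak{g}^*}$ lives only on $\mathcal{O}$ and $\Lambda_z$ only on $\phi_{-z}(\mathcal{O})$. One must either argue that $\Lambda^{\mathfrak{g}^*}$ is polynomial and therefore extends globally (as it does in all the worked examples, e.g. the book algebra $\mathfrak{b}\simeq\mathfrak{sl}_2^*$ above), or carry the Poisson identity pointwise on the translated open sets using locality of the Schouten bracket; in either case the functorial computation is unaffected. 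A secondary point to verify carefully is simply that $\phi_{-z*}$ denotes the pushforward by the time-$(-z)$ flow of the Euler field $E=\sum_\alpha v^\alpha\partial_{v^\alpha}$, i.e. by the linear map $v\mapsto e^{-z}v$, which is what makes the homogeneity count in the last display come out with the stated powers of $e^z$.
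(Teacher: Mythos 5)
Your proof is correct and follows essentially the same route as the paper's: the paper's entire argument is the identity $[\varphi_{*}\Lambda^{\mathfrak{g}^*},\varphi_{*}\Lambda^{\mathfrak{g}^*}]_{SN}=\varphi_{*}[\Lambda^{\mathfrak{g}^*},\Lambda^{\mathfrak{g}^*}]_{SN}=0$ for an arbitrary diffeomorphism $\varphi$, together with the (implicit) bilinearity handling the factor $e^{z}$, exactly as in your display. Your additional homogeneity expansion justifying $\Lambda_{0}:=\Lambda_{K}$ as the $z\to 0$ limit, and the remarks on the domain of $\Lambda^{\mathfrak{g}^*}$, are welcome elaborations of what the paper dismisses with ``it is immediate,'' but they do not change the method.
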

	Since $[\varphi_{*}\Lambda^{\mathfrak{g}^*},\varphi_{*}\Lambda^{\mathfrak{g}^*}]_{SN}=\varphi_{*}[\Lambda^{\mathfrak{g}^*},\Lambda^{\mathfrak{g}^*}]=0$, for an arbitrary diffeomorphism $\varphi:\mathfrak{g}^*\rightarrow \mathfrak{g}^*$, the Poisson bivectors on $\mathfrak{g}^*$ given by $\Lambda_z$ in (\ref{Def}) are Poisson bivectors. It is immediate that $\Lambda_0=\Lambda_K$.	 Note that the Poisson bivectors $\{\Lambda_z\}_{z\in \mathbb{R}}$ do not need to give rise to Poisson--Lie bivectors on the Lie group $G^*$. 
	
	\section{On the deformation procedure}\label{Sec:DfrProc}
	
	This section shows that the deformation procedure presented in the previous section do not lead to equivalent families of $z$-parametric dynamical systems, i.e. there is no $z$-parametric family of diffeomorphisms $\varphi_z:M\rightarrow M$ such that $X_z=\varphi_{z*}X_0$ for every $z\in \mathbb{R}$ and $X_z$ is the Hamiltonian vector field determined by the $t$-dependent Hamiltonian function $h_z$ and $\Lambda_z$. 
	In spite of that, the systems $\{X_z\}_{z\in \mathbb{R}}$ are related among themselves through a Poisson algebra structure in such a way that certain properties of their spaces of solutions can be studied simultaneously by means of our techniques, which will allow us to define a new class of Hamiltonian systems.
	
	\begin{definition} Two deformations $X_{z}$ and $X'_z$ on a manifold $M$ of a Lie--Hamilton system $X$ are called {\it equivalent} if there exists a $z$-parametric family of diffeomorphisms $\{\phi_z:U_z\subset M\rightarrow M\}_{z\in\mathbb{R}}$ such that $X'_z=\phi_{z*}X_z$ for every $z\in \mathbb{R}$ and a family of open, not-empty, subsets $U_z\subset M$.
	\end{definition}
	
	The process of obtaining a deformation of a Lie--Hamilton system described previously depends on the point of a symplectic leaf in $\mathfrak{g}^*$. 

	Recall that the $z$-parametric family of Poisson structures $\{\cdot,\cdot\}_z$ allows us to define a family of Hamiltonian systems on $M$. This is done by defining $X_{t,z}$ to be the Hamiltonian vector field related to $h_{t,z}$ for every $t,z\in \mathbb{R}$.  
	
	Consider a {\it Lie--Hamilton system} on $\mathbb{R}^2$ given by
	\begin{equation*}
	\frac{d\xi}{dt}=\sum_{\alpha=1}^{3}b_\alpha(t)X_\alpha(\xi)=X(t,\xi),\qquad \xi\in \mathbb{R}^2,
	\end{equation*}
	where $b_1(t),b_2(t),b_3(t)$ are arbitrary $t$-dependent functions and the vector fields $X_1,X_2,X_3$ on $\mathbb{R}$ span a Lie algebra $V$ of Hamiltonian vector fields relative to a symplectic form $\omega$ on $\mathbb{R}^2$. We also assume that $X_1,X_2,X_3$ satisfy the commutation relations (\ref{ConRel}), and thus $V$ is isomorphic to $\mathfrak{sl}_2$. Denote by $h_1,h_2,h_3$ some Hamiltonian functions for $X_1,X_2,X_3$ relative to $\omega$, respectively. Our goal is to construct a $z$-parametric family of Hamilton systems on $\mathbb{R}^2$ via a Poisson algebra  deformation of a Poisson algebra on $C^{\infty}(\mathfrak{sl}^*_2)$.
	
	The canonical isomorphism $\mathfrak{sl}_2^{**}\simeq \mathfrak{sl}_2$ allows us to understand $v_1 := J_+, v_2 := J_3/2, v_3 := -J_-$ as a system of global linear coordinates on $\mathfrak{sl}^*_2$. Then, $C^\infty(\mathfrak{sl}^*_2)$ is a Poisson algebra relative to the KKS Poisson bracket,  $\{\cdot,\cdot\}_K$, induced by the Lie algebra $\mathfrak{sl}_2$ \cite{Va94}, namely
	$$
	\{v_1, v_2\}_K = -v_1, \quad \{v_1, v_3\}_K = -2v_2, \quad \{v_2, v_3\}_K = -v_3.
	$$
	Additionally, $C^\infty(\mathfrak{sl}^*_2)$ admits a Poisson structure $\{\cdot,\cdot\}_z$ for every $z\in \mathbb{R}$, satisfying that \cite{BM13}
	\begin{equation}\label{DefPoi}
	\{v_1, v_2\}_z = -\textrm{shc}(2zv_1)v_1, \quad \{v_1, v_3\}_z = -2v_2, \quad \{v_2, v_3\}_z = -\textrm{ch}(2zv_1) v_3,
	\end{equation}
	where we recall that ${\rm shc}(2zv_1)={\rm sh}(2zv_1)/2zv_1$ and $\lim_{z\rightarrow 0}{\rm shc}(2zv_1)=1$. As standard, $C^\infty_z(\mathfrak{sl}_2^*)$ stands for the Poisson algebra induced by $\{\cdot,\cdot\}_z$ on $C^\infty(\mathfrak{sl}_2^*)$.
		
	Let us define the coordinate system $\{v_1, v_2, c_z\}$ on the open set $\mathcal{U}:= \{(v_1,v_2,v_3) \in \mathfrak{sl}_2^*: v_1\neq0\}$, where $c_z := \textrm{shc}(2zv_1)v_1 v_3 - v_2^2$ is the Casimir element relative to $\{\cdot,\cdot\}_z$. In these coordinates, the Poisson bivector, $\Lambda_z$, associated with the Poisson bracket $\{\cdot,\cdot\}_z$ reads $\Lambda_z = -\textrm{shc}(2zv_1)v_1 \partial_{v_1} \wedge \partial_{v_2}$ for each value of $z\in \mathbb{R}$. Therefore, $\Lambda_z$ is tangent to the submanifolds where $c_z$ takes a constant value and $\Lambda_z$ amounts to a symplectic form on it. 
 
 The new coordinates 
	$
	\xi_{1,z} := v_1,\, \xi_{2,z} := {v_2}/(- \textrm{shc}(2zv_1)v_1),\,c_z, 
	$ 
	 are such that $\{\xi_{1,z}, \xi_{2,z}\}_z = 1$, $\{\xi_{i,z},c_z\}_z=0$ for $i=1,2$, and then $\Lambda_z = \partial_{\xi_{1,z}} \wedge \partial_{\xi_{2,z}}$ for every $z\in \mathbb{R}$. That is, for each fixed value $z\in \mathbb{R}$, the coordinates $\{\xi_{1,z},\xi_{2,z},c_z\} $ enable us to write the Poisson bivector $\Lambda_z$ in a canonical form on each symplectic stratum of $\Lambda_z$ within $\mathcal{U}$. We can define so $\Psi_z:(v_1,v_2,v_3)\in \mathfrak{sl}_2^*\mapsto (\xi_{1,z},\xi_{2,z},c_z)\in \mathbb{R}^3$ and $\Psi_{z*}\Lambda_z=\Lambda_c$. Note that 
  $$\Psi_z^{-1}(\xi_{1,z},\xi_{2,z},c_z)=\left(\xi_{1,z},-\xi_{2,z}{\rm shc}(2z \xi_{1,z})\xi_{1,z},\frac{4 c_z z^2 \text{csch}(2 \xi_{1,z} z)+\xi_{2,z}^2 \sinh (2 \xi_{1,z} z)}{2 z}\right)\in \mathfrak{sl}_2^*.
  $$
  Moreover,
	$$
	\hat\xi_1 := \lim_{z\rightarrow 0}\xi_{1,z}=v_1, \quad \hat\xi_2 :=\lim_{z\rightarrow 0}\xi_{2,z}= \frac{v_2}{-v_1},\qquad  \{\hat\xi_1, \hat\xi_2\}_0 = 1.
	$$
	Since $\Lambda_0$ and $\Lambda_z$ have the same canonical form in the coordinates $\{\hat\xi_1,\hat\xi_2,c_0\}$ and $\{\xi_{1,z},\xi_{2,z},c_z\}$, it was shown that $\Psi_{z*}^{-1}\Psi_{0*}$ maps $\Lambda_0$ onto $\Lambda_z$. Since $\Psi_0:(v_1,v_2,v_3)\in \mathfrak{sl}_2^* \mapsto (\xi_{1,0},\xi_{2,0},c_0)\in \mathbb{R}^3$, one has that
 $$
 \Psi_z^{-1}\circ\Psi_0(v_1,v_2,v_3)=\left(v_1,v_2{\rm shc}(2 v_1 z),\frac{v_1v_3-v_2^2+v_2^2{\rm shc}(2zv_1)}{v_1{\rm shc}(2v_1 z)}\right).
  $$

Then, $\phi_z=\Psi_z^{-1}\circ\Psi_0$ reads
 $$
 \phi_z(v_1,v_2,v_3)=\left(v_1,{\rm shc}(2zv_1)v_2,\frac{v_1v_3-v_2^2+{\rm shc}(2zv_1)v_2^2}{{\rm shc}(2zv_1)v_1}\right).
 $$
 and the functions
 $$
 v_{1,z}=v_1\circ \phi_z=v_1,\qquad v_{2,z}=v_2\circ \phi_z={\rm shc}(2zv_1)v_2,\qquad v_{3,z}=v_3\circ \phi_z=\frac{v_1v_3-v_2^2+{\rm shc}^2(2zv_1)v_2^2}{{\rm shc}(2zv_1)v_1}
 $$
 close the same commutation relations relative to the KKS bracket than $v_1,v_2,v_3$ relative to the deformed one. 
 
	Let us determine functions $h_{i,z} \in C^{\infty}(\mathbb{R}^2)$ with the same commutation relations (\ref{DefPoi}), i.e.
	$$
	\{h_{1,z}, h_{2,z}\}_{\omega} = -\textrm{shc}(2zh_{1,z})h_{1,z}, \quad \{h_{1,z}, h_{3,z}\}_{\omega} = -2h_{2,z}, \quad \{h_{2,z}, h_{3,z}\}_{\omega} = -\textrm{ch}(2zh_{1,z})h_{3,z}.
	$$
	Recalls that there exists a Poisson algebra map $D: (C^{\infty}(\mathfrak{sl}_2^*),\{\cdot,\cdot\}_K) \to (C^{\infty}(\mathbb{R}^2), \{\cdot,\cdot\}_{\mathbb{R}^2})$, where $\{\cdot,\cdot\}_{\mathbb{R}^2})$  is the Poisson bracket that comes from the Poisson bivector field $\partial_x \wedge \partial_y$ on $\mathbb{R}^2$, given by
	$$
	D(v_1) = h_1,\quad D(v_2) = h_2, \quad D(v_3) = h_3, 
	$$
	and ensuring that $v_1v_3-v_2^2=c$. Otherwise, the map $\mu_M:z\in M\mapsto (h_1(z),h_2(z),h_3(z))\in \mathfrak{g}^*$ allows us to write $D(v_i)=v_i\circ \mu_M$, for $i=1,2,3$. If we define
	$$
	h_{1,z} := D(\phi_z^*(v_1)), \quad h_{2,z} := D(\phi_z^*(v_2)), \quad h_{3,z} := D(\phi_z^*(v_3)),
	$$
	then
	$$
	h_{1,z} = h_1, \quad h_{2,z} = \textrm{shc}(2zh_1)h_2, \quad h_{3,z} = \frac{\textrm{shc}^2(2zh_1)h_2^2 + h_1h_3-h_2^2}{\textrm{shc}(2zh_1)h_1}.
	$$
	Alternatively, it can be written that
	$$
	h_{1,z}=v_1\circ \phi_z\circ \mu=h_1,\quad 	h_{2,z}=v_2\circ \phi_z\circ \mu={\rm shc}(2zh_1)h_2,$$
 $$
	h_{3,z}=v_3\circ \phi_z\circ \mu=\frac{h_1h_3-h_2^2+{\rm shc}(2zh_1)h_2^2}{{\rm shc}(2zh_1)h_1},\qquad 
	$$
	This gives rise to a $z$-parametric family of deformed Hamiltonian systems given by
	$$
	\frac{ds}{dt}=\sum_{\alpha=1}^3b_\alpha(t)X_{\alpha,z}(s),\qquad \forall t\in \mathbb{R},\qquad \forall s\in \mathbb{R},
	$$
	where $X_{1,z},X_{2,z},X_{3,z}$ are the Hamiltonian vector fields for the functions $h_{1,z},h_{2,z},h_{3,z}$ relative to $\omega$. If one considers that particular case of $h_1=x^2/2, h_2=-px/2,h_3=c/(2x^2)+p^2/2$, then the functions $h_{1,z},h_{2,z},h_{3,z}$ retrieve the result computed in \cite{BCFHL18}.

	\section{Deformation procedure for Jacobi--Lie systems}\label{Se:DefJacob}
	
	The deformation procedure described in the previous sections can be accomplished when the initial non-deformed system is Hamiltonian relative to a Poisson structure. Nevertheless, there are $t$-dependent systems of first-order ordinary differential equations that cannot be considered as $t$-dependent Hamiltonian relative to any Poisson structure \cite{CGLS14}. 
	Among $t$-dependent vector fields that cannot be considered as Hamiltonian systems relative to a Poisson structure, we will here highlight certain subfamilies of the so-called {\it locally automorphic Lie systems} \cite{GLMV19}, whose definition is given next.
	
	\begin{definition} A {\it locally automorphic Lie system} $X$ on a manifold $M$ is a Lie system whose minimal Lie algebra $V^X$ satisfies that $\dim V^X=\dim M$ and whose associated distribution, $\mathcal{D}^{V^X}$, is equal to $TM$. 
	\end{definition}
	
Proposition \ref{Prop:NoGoLoc} to be given next shows a class of locally automorphic Lie systems that are not Hamiltonian relative to any Poisson structure. It is worth noting that Proposition \ref{Prop:NoGoLoc} can be considered as an application of Proposition 5.2 in \cite{CGLS14} to a relevant class of Lie systems.
	
	\begin{proposition}\label{Prop:NoGoLoc} Every locally automorphic Lie system $X$ on an odd-dimensional manifold $M$ is not Hamiltonian relative to any Poisson bivector $\Lambda$ on $M$.
	\end{proposition}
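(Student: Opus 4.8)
The plan is to argue by contradiction, exploiting a dimension obstruction that arises from the fact that a Poisson bivector cannot have odd rank at any point. Suppose $X$ is a locally automorphic Lie system on an odd-dimensional manifold $M$ and that $X$ is Hamiltonian relative to some Poisson bivector $\Lambda$ on $M$. Since $X$ is a Lie system, we may write $X(t,\cdot)=\sum_{\alpha=1}^r b_\alpha(t)X_\alpha$ for a basis $X_1,\dots,X_r$ of its minimal Lie algebra $V^X$, and by hypothesis $r=\dim V^X=\dim M$ and $\mathcal{D}^{V^X}=TM$. Being Hamiltonian relative to $\Lambda$ means that each $X_\alpha$ takes values in the characteristic distribution of $\Lambda$, namely $X_\alpha=\Lambda^\sharp(\mathrm{d}f_\alpha)$ for suitable (possibly $t$-dependent in the combination, but the $X_\alpha$ themselves time-independent) functions $f_\alpha$. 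Hence $\mathcal{D}^{V^X}\subseteq \mathrm{Im}\,\Lambda^\sharp$ pointwise.

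First I would invoke the hypothesis $\mathcal{D}^{V^X}=TM$ to conclude that $\mathrm{Im}\,\Lambda^\sharp_p=T_pM$ for every $p\in M$, i.e.\ $\Lambda$ is nondegenerate on all of $M$. Then I would use the standard fact (which follows from the skew-symmetry of $\Lambda_p$ as a bilinear form on $T_p^*M$, exactly the parity argument already used in the excerpt when discussing that ${\rm d}\theta^\flat$ has even rank) that the rank of $\Lambda^\sharp_p$ is even at every point. This forces $\dim M=\mathrm{rk}\,\Lambda^\sharp_p$ to be even, contradicting the assumption that $M$ is odd-dimensional. Therefore no such $\Lambda$ can exist.

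The only genuinely delicate point — and the step I expect to be the main obstacle to state cleanly — is making precise the claim that ``$X$ Hamiltonian relative to $\Lambda$'' forces $\mathcal{D}^{V^X}\subseteq\mathrm{Im}\,\Lambda^\sharp$, and conversely that the locally automorphic condition $\mathcal{D}^{V^X}=TM$ then upgrades this to $\Lambda^\sharp$ being an isomorphism everywhere. One must be careful about what ``Hamiltonian'' means for a $t$-dependent vector field: the natural reading is that $X(t,\cdot)$ is a Hamiltonian vector field of $\Lambda$ for each fixed $t$, which (since the $b_\alpha(t)$ can be taken to realise arbitrary values and the $X_\alpha$ are $t$-independent) is equivalent to each $X_\alpha$ being a Hamiltonian vector field of $\Lambda$. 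Once that is settled, the inclusion $X_\alpha\in\mathrm{Im}\,\Lambda^\sharp$ is immediate, the span over $\alpha$ gives $\mathcal{D}^{V^X}_p\subseteq(\mathrm{Im}\,\Lambda^\sharp)_p$, and the locally automorphic hypothesis closes the argument. The rest is the elementary linear-algebra observation on even rank of skew forms, which needs no computation.

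Finally, I would remark that the statement is sharp in the sense that it genuinely uses oddness of $\dim M$: on even-dimensional manifolds locally automorphic Lie systems can indeed be Hamiltonian (e.g.\ relative to a symplectic form obtained by inverting $\Lambda^\sharp$), and this is precisely why the subsequent sections turn to Jacobi structures, whose characteristic distributions can be odd-dimensional, to carry out the deformation procedure for such systems.
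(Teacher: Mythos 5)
Your proposal is correct and follows essentially the same route as the paper's own proof: both argue by contradiction, noting that Hamiltonian vector fields take values in the characteristic distribution $\mathrm{Im}\,\Lambda^\sharp$, which has even rank at every point by skew-symmetry, while the locally automorphic hypothesis forces the minimal Lie algebra $V^X$ to span the odd-dimensional $TM$. Your additional care about the meaning of ``Hamiltonian'' for a $t$-dependent vector field is a reasonable clarification but does not change the argument.
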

	\begin{proof} Let us proceed by reductio ad absurdum. The characteristic distribution of $\Lambda$  on $M$ is a generalised distribution whose rank is even, but not necessarily constant, at every point of $M$ \cite{We83}. Hence, all Hamiltonian vector fields on $M$ related to $\Lambda$ take values in a generalised distribution that has even rank at every point. Meanwhile, the vector fields of the minimal  Lie algebra, $V^X$, associated with $X$  span $TM$. If all the vector fields of $V^X$  are Hamiltonian, then the unique generalised distribution containing all of them is $TM$, which is odd-dimensional. But then, the vector fields of $V^X$ cannot be contained in a generalised distribution of even rank at every point of $M$. This is a contradiction and the locally automorphic Lie system $X$ is not Hamiltonian relative to any Poisson structure $\Lambda$ on $M$.
	\end{proof}

Let us  provide a Poisson algebra deformation of the Schwarz equation through a Jacobi structure. Since the so-called Schwarz equation \cite{Be07} cannot be considered as a Hamiltonian system relative to a Poisson structure \cite{CGLS14}, the standard method in \cite{BCFHL18} cannot therefore be applied and an extension of that method via Jacobi structures is fully justified. The procedure to be applied next will be clarified in the following sections, and its generalisation to a quite general class of Jacobi-Lie systems will be then provided. 

	The Schwarz equation \cite{Be07,BHLS15,CGL12,CGLS14} reads
	$$
	\frac{{\rm d}^3 x}{{\rm d}t^3} = \frac{3}{2}\left(\frac{{\rm d}x}{{\rm d}t} \right)^{-1} \left(\frac{{\rm d}^2 x}{{\rm d}t^2}\right)^2 + 2b(t) \frac{{\rm d}x}{{\rm d}t},\qquad x\in \mathbb{R},
 \qquad t\in \mathbb{R},
	$$
	where $b(t)$ is an arbitrary $t$-dependent function. The Schwarz equation is relevant due to its relation to Schwarz derivative and its relation to other relevant differential equations appearing in physics \cite{Be07,LG99,OT09}. It can also be considered as a particular kind of third-order Kummer-Schwarz equations. As such, its generalisations or deformations can be of use in the study of generalisations of third-order Kummer--Schwarz equations (see \cite{LS20} and references therein).
	
	The Schwarz equation can be rewritten by adding the new variables $a=dv/dt,v=dx/dt$ as a $t$-dependent system of first-order ordinary differential equations 
 $$\left\{
 \begin{aligned}
 \frac{{\rm d}x}{{\rm d}t}&=v,\\
 \frac{{\rm d}v}{{\rm d}t}&=a,\\
 \frac{{\rm d}a}{{\rm d}t}&=\frac 32 \frac{a^2}v+2b(t)v,\\
 \end{aligned}\right.
 $$
 associated with a $t$-dependent vector field $X$ on $\mathcal{O}=\{(x,v,a):v\neq 0\}$ of the form
	$
	X=N_3+b(t)N_1,
	$
	where
	\begin{equation}\label{schwarz_sl2}
	N_1 = 2v\partial_{a}, \quad N_2 = v \partial_{v} + 2a \partial_{a}, \quad N_3 = v\partial_{x} + a\partial_{v} + \frac{3a^2}{2v} \partial_{a}
	\end{equation}
	satisfy the commutation relations
	$$
	[N_1, N_2] = N_1, \qquad [N_1, N_3] = 2N_2, \qquad [N_2, N_3] = N_3,
	$$
	and therefore span a Vessiot--Guldberg Lie algebra, $V^S$, isomorphic to $\mathfrak{sl}_2$. Let us consider the Lie algebra of Lie symmetries of  (\ref{schwarz_sl2}), which is spanned by the vector fields (see \cite{LS_schwarz} and references therein)
	\begin{equation}\label{Eq:SymSch}
	S_1 = \partial_{x}, \quad S_2 = x\partial_{x} + v \partial_{v} + a \partial_{a}, \quad S_3 = x^2 \partial_{x} + 2vx \partial_{v} + 2(ax+v^2)\partial_{a},
	\end{equation}
	namely $[S_i, N_j] = 0$ for $i,j=1,2,3$. Moreover,
	\begin{equation*}
	\begin{split}
	&[S_1, S_2] = S_1, \quad [S_1, S_3] = 2S_2, \quad [S_2, S_3] = S_3.
	\end{split}
	\end{equation*}
	The  differential one-forms on $\mathcal{O}$ of the form
	$$
	\eta_1 = \textrm{d}x - \frac{2xv^2 + ax^2}{2v^3} \textrm{d}v+\frac{x^2}{2v^2}{\rm d}a, \quad \eta_2 = \frac{v^2 + ax}{v^3} \textrm{d}v - \frac{x}{v^2} \textrm{d}a, \quad \eta_3 = \frac{1}{2v^2} \textrm{d}a - \frac{a}{2v^3} \textrm{d}v
	$$
	are dual to $S_1,S_2,S_3$, namely $\iota_{S_i}\eta_j=\delta_i^j$ for $i,j=1,2,3$, where $\delta_i^j$ is the delta Kronecker function.
	Since $\textrm{d}\eta_2=-\frac 12\eta_1\wedge \eta_3$, it follows that  $\textrm{d}\eta_2 \wedge \eta_2 \neq 0$ and $\eta_2$ becomes  a contact form on $\mathcal{O}$, whereas $\eta_1$ and $\eta_3$ are not. Indeed,  a short calculation shows that $\eta_i\wedge d\eta_i=0$ for $i=1,3$. Moreover, $S_2$ is the Reeb vector field of the contact manifold $(\mathcal{O},\eta_2)$. 

Let us shows that the vector fields $N_1,N_2,N_3$ are Hamiltonian relative to the contact manifold $(\eta_2,\mathcal{O})$, which in turn will show that they are also so for the Jacobi structure $\Lambda_2$ on $\mathcal{O}$ induced by $\eta_\mathcal{O}$ as stated in Section \ref{La:JacCon}. To see this, it is enough to note that 
$$
(\mathcal{L}_{N_i}\eta_2)(S_j)=N_i\eta_2(S_j)-\eta_2([N_i,S_j])=0,\qquad i=1,2,3.
$$
Hence, $N_i$ is Hamiltonian and its Hamiltonian function, $\iota_{N_i}\eta_2$, is a first-integral of $R_\mathcal{O}$ for $i=1,2,3$ (see Section \ref{La:JacCon} for details). 

 The Hamiltonian functions for $N_1,N_2,N_3$ read
	$$
	-h_{1} = \iota_{N_1}\eta_2=-\frac{2x}{v}, \quad -h_{2} =\iota_{N_2}\eta_2= 1 - \frac{ax}{v^2}, \quad -h_{3} = \iota_{N_3}\eta_2=\frac{a}{v} - \frac{xa^2}{2v^3},
	$$
	respectively.
	
	Let us obtain the   Jacobi manifold $(\mathcal{O},\Lambda_\mathcal{O}, R_{\mathcal{O}})$ related to our contact manifold $(\mathcal{O},\eta_2)$, which will allow us to deform Schwarz equations. The Reeb vector field, $R_\mathcal{O}$, of the Jacobi structure $(\mathcal{O},\Lambda_\mathcal{O}, R_{\mathcal{O}})$ is chosen to be minus the Reeb vector field, $S_2$, of the contact manifold $(\mathcal{O},\eta_2)$.  Meanwhile, $\Lambda_{\mathcal{O}}$ is the solution to the system of algebraic equations
	$$
	N_i = \Lambda^\sharp_{\mathcal{O}}\textrm{d}h_{i} + h_{i} R_\mathcal{O}, \qquad i=1,2,3.
	$$
	A simple but long computation shows that
	$$
	\Lambda_\mathcal{O} = xv \partial_x \wedge \partial_v + (v^2 + ax) \partial_x \wedge \partial_a= \frac{1}{2} S_1 \wedge S_3.
	$$
	The Jacobi bivector field, namely $\Lambda_{\mathcal{O}}$, induces a Lie bracket on $C^{\infty}(\mathcal{O})$ given by \cite{HLS15}
	$$
	\{f, g\}_{\Lambda_\mathcal{O},R_\mathcal{O}} := \Lambda_\mathcal{O}(\textrm{d}f, \textrm{d}g) + fR_\mathcal{O}g - gR_\mathcal{O}f,
	$$
 which does not need to satisfy the Leibniz rule \cite{Vi17}.
	The Hamiltonian functions of the $N_i$ are first-integrals of $R_\mathcal{O}$, i.e.
	$$
	R_\mathcal{O}h_i =\mathcal{L}_{S_2} (\iota_{N_i}\eta_2) = \iota_{N_i}\mathcal{L}_{S_2} \eta_2 + \iota_{[S_2, N_i]}\eta_2 = \iota_{N_i}\iota_{S_2} \textrm{d}\eta_2 + \iota_{N_i}\textrm{d}\iota_{S_2} \eta_2 = 0,\qquad i=1,2,3.
	$$
	This makes $h_1,h_2,h_3$ to have a series of nice properties, which justifies calling them {\it good Hamiltonian functions} \cite{HLS15}. Further, 
	$
	\{h_{i}, h_{j}\}_{\Lambda_\mathcal{O},R_\mathcal{O}} = \Lambda(\textrm{d}h_{i}, \textrm{d}h_{j}),
	$
 for $i=1,2,3$, 
	and, more specifically,
	\begin{equation}\label{Ham_com}
	\{h_1, h_2\}_{\Lambda_\mathcal{O},R_\mathcal{O}} = h_1, \quad \{h_2, h_3\}_{\Lambda_\mathcal{O},R_\mathcal{O}} = h_3, \quad \{h_1, h_3\}_{\Lambda_\mathcal{O},R_\mathcal{O}} = 2h_2. 
	\end{equation}
	Let us find the coordinates $(\xi, u_1, u_2)$ in which the Reeb vector field $R_\mathcal{O} = -x\partial_x - v \partial_v - a \partial_a$ takes the form $R_\mathcal{O} = \partial_{\xi}$. The equations $R_\mathcal{O}u_1 = 0, R_\mathcal{O}u_2 = 0$ have a particular solution $u_1 = \frac{a}{v}, u_2 = \frac{x}{v}$ on $\mathcal{O}$, which allows us to write
	$$
	h_1 = 2u_2, \quad h_2 = -1 + u_1 u_2, \quad h_3 = -u_1 + \frac{u_2 u_1^2}{2}.
	$$
	Therefore, $h_i= h_i(u_1, u_2)$ for $i=1,2,3$. Since $\{u_1, u_2\}_{\Lambda_\mathcal{O}, R_\mathcal{O}} = -1$, the space $\mathcal{O}_{H}=\{f(u_1,u_2):f\in C^\infty(\mathbb{R}^2) \}$ becomes a Poisson algebra relative to the Poisson bracket $\{\cdot,\cdot\}_{\Lambda_\mathcal{O},R_\mathcal{O}}$ and $u_1,u_2$ are the corresponding canonical coordinates. Indeed, $R_\mathcal{O}\vert_{\mathcal{O}_{H}} = 0$ implies, by Lemma 4.4 in \cite{HLS15}, that $\mathcal{O}_H$ is a Poisson algebra relative to the restriction of $\Lambda_{\Lambda_{\mathcal{O}},R_{\mathcal{O}}}$ to $\mathcal{O}_H$. This restriction allows us to deform the Hamiltonian functions $h_1,h_2,h_3$ via the procedure described in the previous section.
	
	Recall that $C^{\infty}(\mathfrak{sl}^*_2)$ can be equipped with the  Poisson bracket \cite{BCFHL18}
	$$
	\{v_1, v_2\}_z = -\textrm{shc}(2zv_1)v_1, \quad \{v_1, v_3\}_z = -2v_2, \quad \{v_2, v_3\}_z = -\textrm{ch}(2zv_1)v_3
	$$
 relative to a certain $z\in \mathbb{R}$.
	
	However, to match the relations (\ref{Ham_com}) in the limit $z \to 0$, one has to change the coordinates to $v'_1 = v_1$, $v'_2 = -v_2$, $v'_3 = v_3$. Then, the Poisson bracket on $C_z^{\infty}(\mathfrak{sl}_2^*)$ reads
	$$
	\{v'_1, v'_2\}_z = \textrm{shc}(2zv'_1)v'_1, \quad \{v'_1, v'_3\}_z = 2v'_2, \quad \{v'_2, v'_3\}_z = \textrm{ch}(2zv'_1)v'_3.
	$$
	Henceforth, we will omit the prime sign and write $C_z^{\infty}(\mathfrak{sl}_2    ^*)$ for the Poisson algebra relative to $\{\cdot,\cdot\}_z$. Let $\Lambda_z$ be associated Poisson bivector on $\mathfrak{sl}_2^*$.
	
	If the symplectic leaves of the Poisson manifold $\Lambda_z$ give rise to a foliation of an open subset $U$ of $\mathfrak{sl}_2^*$, then $\Lambda_z$ admits Darboux coordinates mapping it to the canonical form on each symplectic leaf of $U$. For instance, if $\xi_{1,z}, \xi_{2,z}, c_z$ are local coordinates on $\mathcal{O}\subset \mathfrak{sl}_2^*$ of the form
	$$
	\xi_{1,z} = 2 \frac{v_2-1}{v_1 \textrm{shc}(2zv_1)}, \quad \xi_{2,z} = -\frac{1}{2}v_1, \quad c_z = \textrm{shc}(2zv_1)v_1 v_3 - v_2^2,
	$$
	then $\Lambda_z = \partial_{\xi_{1,z}} \wedge \partial_{\xi_{2,z}}$ and $c_z$ is a Casimir. 
 
 Note that we have the Poisson morphism $\mu:x\in M\mapsto (h_1,h_2,h_3)\in \mathfrak{sl}_2^*$
 relative to a KKS Poisson bracket on $\mathfrak{sl}_2^*$ and the Poisson bracket induced Lie bracket induced by $\Lambda_{\mathcal{O},R_\mathcal{O}}$ on the functions on $M$ of the form $f(h_1,h_2,h_3)$ for any $f\in \mathbb{R}^3\rightarrow \mathbb{R}$.
 
Proceeding like in previous section for the maps induced by previous rectifying variables
	$$
	h_{i,z} := v_i\circ \phi_z\circ \mu,\qquad i=1,2,3,
	$$
and hence
 $$
 h_{1,z} = 2u_2 \quad h_{2,z} =1+(-2+u_1 u_2){\rm shc}(4 u_2 z), $$
 $$h_{3,z} = (-2+u_1u_2)(1/u_2+(-1/u_2+u_1/2){\rm shc}(4u_2 z)).
 $$

	As expected,
	$$
	\{h_{1,z}, h_{2,z}\} = h_{1,z} \textrm{shc}(2zh_{1,z}),\quad \{h_{1,z}, h_{3,z}\} = 2h_{2,z}, \quad \{h_{2,z}, h_{3,z}\} = \textrm{ch}(2zh_{1,z})h_{3,z},
	$$
	Moreover, $\lim\limits_{z \to 0} h_{i,z}(\xi) = h_i(\xi)$ for each $\xi \in \mathcal{O}$ and the above commutation relations reduce to (\ref{Ham_com}).
	
	The deformed Hamiltonian vector fields $Y_{i,z}$ are defined as Hamiltonian ones relative to the previously introduced Jacobi structure, namely $Y_{i,z} = \Lambda^\sharp \textrm{d}h_{i,z} + h_{i,z}R$. In the coordinates
	$$
	u_1 = \frac{a}{v}, \quad u_2 = \frac{x}{v}, \quad \xi = \log v ,
	$$
	the used Jacobi manifold $(\mathcal{O}, \Lambda, R)$ reads 
	$$
	\Lambda = \partial_{u_2} \wedge \partial_{u_1} + u_2 \partial_{u_2} \wedge \partial_{\xi},
	\quad R = \partial_{\xi}.
	$$
	Therefore, in coordinates $\{u_1, u_2, \xi\}$, one obtains $Y_{1,z}= 2\partial_{u_1},$ while

\begin{multline*}
   Y_{2,z} = \frac{2 u_2 z (u_1 u_2-2) \cosh (4 u_2 z)+\sinh (4 u_2 z)}{2 u_2^2 z}\partial_{u_1} - \frac{\sinh (4 u_2 z)}{4 z} \partial_{u_2} \\-\left[u_2 \frac{2 u_2 z (u_1 u_2-2) \cosh (4 u_2 z)+\sinh (4 u_2 z)}{2 u_2^2 z}+\frac{(u_1 u_2-2) \sinh (4 u_2 z)}{4 u_2 z}+1\right] \partial_{\xi},\\
\end{multline*}
and
\begin{multline*}
		Y_{3,z} = \frac{(4 u_2 z + 
 u_2 (-2 + u_1 u_2)^2 z \cosh[4 u_2 z] + (-2 + u_1 u_2) {\rm sinh}[
   4 u_2 z])}{2 u_2^3 z}\partial_{u_1} \\- \left[1 + \frac{(-2 + u_1 u_2) \sinh[4 u_2 z]}{4 u_2 z}\right] \partial_{u_2} \\+ \left[u_2\frac{(4 u_2 z + 
 u_2 (-2 + u_1 u_2)^2 z \cosh[4 u_2 z] + (-2 + u_1 u_2) {\rm sinh}[
   4 u_2 z])}{2 u_2^3 z}\right.\\\left.+((-2 + u_1 u_2) (8 u_2 z + (-2 + u_1 u_2) \sinh[4 u_2 z]))/(8 u_2^2 z) \right] \partial_{\xi}.
\end{multline*}
 
	In the limit $z \to 0$,
	$$
	Y_{1,0} = 2\partial_{u_1}, \quad Y_{2,0} = u_1 \partial_{u_1} - u_2 \partial_{u_2} + \partial_{\xi}, \quad Y_{3,0} = \frac{1}{2} u_1^2 \partial_{u_1} + [1 - u_1 u_2] \partial_{u_2} + u_1 \partial_{\xi},
	$$
	which in coordinates $\{x,v,a\}$ read
	$$
	Y_{1,0} = 2v \partial_a, \quad Y_{2,0} = v\partial_v + 2a\partial_a, \quad Y_{3,0} = v\partial_x + a \partial_v + \frac{3a^2}{2v} \partial_a.
	$$
	In the limit $z \to 0$,
	\begin{equation*}
	\begin{split}
	&[Y_{1,0}, Y_{2,0}] = 2 \partial_{u_1} = Y_{1,0},\\
	&[Y_{1,0}, Y_{3,0}] = 2u_1 \partial_{u_1} - 2u_2 \partial_{u_2} + 2\partial_{\xi} = 2Y_{2,0},\\
	&[Y_{2,0}, Y_{3,0}] = \frac{1}{2} u_1^2 \partial_{u_1} + [1 - u_1 u_2] \partial_{u_2} + u_1 \partial_{\xi} = Y_{3,0}.
	\end{split}
	\end{equation*}
	The deformed dynamical system is governed by a $t$-dependent vector field $X_{z,t}$, defined similarly as in the undeformed case by $X_{z,t} =b(t) Y_{z,1}  + Y_{z,3}$.

\chapter{Conclusions and Outlook}

Both methods devised in this thesis to address the classification problem for Lie bialgebras have proven efficient in computations of inequivalent r-matrices for three-dimensional and four-dimensional indecomposable Lie algebras. In the former case, the results agreed with already known works on this problem \cite{FJ15, Go00}. From the practical standpoint, these methods allow for noticeably faster determination of equivalent classes of r-matrices, especially by using the first approach. The main advantage in this aspect comes from the fact that the analysis of invariant forms is a relatively simple linear problem. The second method, although not as straightforward, offers additional geometric perspectives on the classification issue. 

Despite these advantages, there several difficulties related to both approaches. As already hinted in Section \ref{Ch:alg_Sec:4D}, the analysis of the orbits of the action of ${\rm Inn}(\mathfrak{g})$ gets complicated when there are more than one invariant form admissible for the considered Lie algebra $\mathfrak{g}$. In this case, an appropriate coordinate system was necessary to proceed. Similarly, the study of the modified Yang--Baxter equation is practically impossible unless suitable coordinates are employed.  Additionally, the determination of Darboux families is a nontrivial task that requires careful examination of Lie algebra derivations. Another difficulty is the identification of the orbits (either of the action of ${\rm Inn}(\mathfrak{g})$ in the first method, or of ${\rm Aut}_c(\mathfrak{g})$ in the second one) relative to the whole automorphism group. The structure of such group is already complicated for ${\rm dim}(\mathfrak{g}) = 4$ and direct computations, even though supported by software, might be more intricate when ${\rm dim}(\mathfrak{g}) > 4$. Such problems might be a significant hindrance to an efficient study of higher-dimensional Lie algebras.

However, the considerations in Section \ref{Ch:alg_Sec:4D} suggest a few directions to possibly overcome some of these troubles. In this example, we dealt with the three-parameter family of invariant forms $b_{\Lambda^2 \mathfrak{gl}_2}$ on $\mathfrak{gl}_2$ or in other words, an invariant form that was a linear span of three separate ones. Its analysis was greatly simplified by using the coordinates in which $b_{\Lambda^2 \mathfrak{gl}_2}$ gets an especially manageable form, as the first two forms are diagonal. Thus, the problem splits into two subspaces of smaller dimension than the initial one, given accordingly by the coordinates of each of these forms, that are additionally interconnected by the quadratic function associated with the third form. In consequence, we could easily analyse the orbits on these subspaces and in the last step, we imposed the condition given by the remaining form. It would be interesting to verify whether an analogous procedure yields results efficiently in higher-dimensional cases.

Apart from the concept of a brick, we have not established any techniques for the determination of Darboux families that would significantly facilitate their search. Much work in this thesis related to this problem has been solely based on an inspection of Lie algebra derivations. Thus, an important improvement of the method would rely on finding such techniques. 

Let us also notice that, contrary to the first method, the structure of a Lie algebra seems to play no role in the second approach. Similarly, the structure of Lie algebra of its derivations is not relevant in subsequent computations. This issue opens another direction in the study of Darboux families and their geometry.

One technical question that remains to be tackled concerns the generalisation of both methods to Lie algebras over other fields than the reals. In particular, special attention should be paid to the complex case. It would be interesting to investigate whether there exists any relation between the solutions for a complex Lie algebra and the solutions for its real forms. 

Finally, one might attempt to generalise both methods to classify general cocommutators.

In the second part of this thesis, we focused on applications of Lie bialgebras.	In this regard, we have discussed the notion of foliated Lie systems, which significantly extend the mostly theoretical examples given in \cite{CGM00}. We have introduced foliated superposition rules and first-order systems of ODEs admitting such a superposition rule have been studied. Moreover, we have defined foliated automorphic Lie systems and their relations to foliated Lie systems have been investigated. 

The theory of foliated Lie systems has been applied to study to a generalisation of Ermakov systems, automorphic Lie systems related to Lax pairs and certain Hamiltonian systems. In particular, the theory of Lie--Hamilton systems has been extended to foliated Lie-Hamilton systems.
	
We expect that our results can be extended to the so-called PDE Lie systems \cite{CGM07,Dissertationes,OG00,Ra11}. This should be accomplished by using the same fundamental ideas as presented in Chapter 5, but the development is technically much more complicated due to the nature of PDEs.
	
Several foliated Lie systems studied in the applications of this work are concerned with Hamiltonian systems admitting a maximal number of functionally independent autonomous constants of motion in involution relative to the Poisson bracket induced by the associated symplectic structure. Such systems can be understood as a $t$-dependent analogue of completely integrable Hamiltonian systems (see \cite{GMS02,Sa98,Sa12} for related notions). 
We aim to apply our methods as well as their possible generalisations to the analysis of $t$-dependent integrable non-commutative systems (see \cite{FT98,LLV18,Ma12,SV00}). 

Finally, we have sketched the generalisation of the deformation procedure for Lie--Hamilton systems \cite{BCFHL18,BCFHL21} in terms of Jacobi structures. In particular, we have studied the Schwarz equation that cannot be considered as a Hamiltonian system relative to a Poisson structure. 

Following \cite{BCFHL21}, the more geometric approach to this generalised method needs to be introduced. It is also necessary to analyse further examples. As indicated in Section \ref{Se:DefJacob}, locally automorphic Lie systems might be of particular interest for that purpose.

\appendix
 
\chapter{Commands for Mathematica}\label{App:code}
	
	In this section, we detail several commands and procedures designed for Mathematica that allowed us to accomplish certain easy but slightly lengthy calculations in Chapter \ref{Ch:alg_met} and Chapter \ref{Ch:Darboux}.

Each command needs to be initialised by specifying the matrices ${\tt e[i]}$ that constitute a matrix representation of the basis of the Lie algebra $\mathfrak{g}$, the dimension {\tt Dim} of $\mathfrak{g}$ and the size {\tt nMatrix} of the representation space of $\mathfrak{g}$. For instance, initialisation parameters for the Lie algebra $\mathfrak{sl}_2$ used in this thesis are  ${\tt Dim = 3}$, ${\tt nMatrix = 3}$ and
 {\small
$$
{\tt
e[1] = \left( \begin{array}{ccc}
0 & 0 & 0 \\
0 & 1 & 0 \\
0 & 0 & -1 
\end{array} \right), \qquad
e[2] = \left( \begin{array}{ccc}
0 & 0 & 1 \\
-1 & 0 & 0 \\
0 & 0 & 0 
\end{array} \right), \qquad
e[3] = \left( \begin{array}{ccc}
0 & -1 & 0 \\
0 & 0 & 0 \\
1 & 0 & 0 
\end{array} \right).
}
$$
}
The basis of $\Lambda^2 \mathfrak{g}$ is given by the bivectors $e_{ij} := e_i \wedge e_j$, where $1 \leq i < j \leq {\tt Dim}$ and it is ordered lexicographically, namely it reads $(e_{12}, e_{13}, \ldots, e_{1 {\tt Dim}}, e_{23}, e_{24}, \ldots, e_{2, {\tt Dim}}, \ldots e_{({\tt Dim} - 1), {\tt Dim}})$. Similarly, if the output is an element of $\lambda^3 \mathfrak{g}$, it is represented in the basis $e_{ijk} := e_i \wedge e_j \wedge e_k$, where $1 \leq i < j < k \leq {\tt Dim}$ with the lexicographical order as well.

Whenever a command requires a bivector $r \in \Lambda^2 \mathfrak{g}$ as an input, it has to be represented as a tuple of length $\binom{Dim}{2}$. Each position corresponds to the coefficient in the described basis. For instance, let $\{e_1, e_2, e_3, e_4\}$ be a basis of the Lie algebra $\mathfrak{gl}_2$. Then, the induced basis for $\Lambda^2 \mathfrak{gl}_2$ reads $\{e_{12}, e_{13}, e_{14}, e_{23}, e_{24}, e_{34}\}$ and an element $r = e_{12} + 2 e_{13} - 3 e_{24} \in \Lambda^2 \mathfrak{gl}_2$ needs to be represented as ${\tt r = \{1,2,0,0,-3,0\}}$.

\paragraph{{\tt \bf Com[i, j]}}
This command computes the commutator $[e_i, e_j] = c_{ij}^k e_k$ of the basis elements ${\tt e[i]}$ and ${\tt e[j]}$ in form of a tuple of length ${\tt Dim}$, where the structure constant $c_{ij}^k$ is presented at the $k$-th position in the tuple.

\begin{verbatim}
Com[i_, j_] := {
A = Table[a[l], {l, 1, Dim}];
temp = {e[i] . e[j] - e[j] . e[i] == Sum[e[k] a[k], {k, 1, Dim}]};
sol = Solve[Flatten[temp], Flatten[A]];
Flatten[A /. sol]}[[1]]
\end{verbatim}

\paragraph{{\tt \bf Str}}
This command gives the structure constants for the commutators of basis elements of $\mathfrak{g}$. The result is presented as a matrix with ${\tt Dim}^2$ rows and ${\tt Dim}$ columns. The structure constant $c_{ij}^k$ is given in the $((i - 1) \cdot {\tt Dim} + j)$-th row and $k$-th column.

\begin{verbatim}
Str := Table[Com[i, j][[k]], {i, 1, Dim}, {j, 1, Dim}, {k, 1, Dim}]
\end{verbatim}

\paragraph{{\tt \bf Deriv}}
This command gives the derivations of a Lie algebra $\mathfrak{g}$, i.e. linear maps $D \in {\rm End}(\mathfrak{g})$ such that $D([v,w]) = [D(v), w] + [v, D(w)]$. The result is presented as a matrix with ${\tt Dim}$ rows and ${\tt Dim}$ columns, where $i$-th row corresponds to $D(e_i)$.

\begin{verbatim}
Deriv := {
Der = Table[der[i, j], {i, 1, Dim}, {j, 1, Dim}];
temp = Table[Sum[Der[[l, k]] Str[[i, j, k]] - Der[[k, i]] Str[[k, j, l]] 
  - Str[[i, k, l]] Der[[k, j]], {k, 1, Dim}] == 0, 
  {i, 1, Dim - 1}, {j, i + 1, Dim}, {l, 1, Dim}];
sol = Flatten[Solve[Flatten[temp], Flatten[Der]]];
Der /. sol}
\end{verbatim}

\paragraph{{\tt \bf Id}}
This command constructs the identity matrix of dimension given by the parameter ${\tt nMatrix}$.

\begin{verbatim}
Id := Table[KroneckerDelta[i, j], {i, 1, nMatrix}, {j, 1, nMatrix}]
\end{verbatim}

\paragraph{{\tt \bf OI2}}
This command constructs a table of length $\frac{1}{2} {\tt Dim}({\tt Dim} - 1)$, consisting of pairs of indices $(i,j)$, such that $1 \leq i < j \leq {\tt Dim}$, in lexicographic order.

\begin{verbatim}
OI2 := Flatten[Table[{i, j}, {i, 1, Dim - 1}, {j, i + 1, Dim}], 1]
\end{verbatim}

\paragraph{{\tt \bf OI3}}
This command constructs table of length $\frac{1}{6} {\tt Dim}({\tt Dim} - 1)({\tt Dim} - 2)$, consisting of pairs of indices $(i,j,k)$, such that $1 \leq i < j < k \leq {\tt Dim}$, in lexicographic order.

\begin{verbatim}
OI3 := Flatten[Table[{i, j, k}, {i, 1, Dim - 2}, {j, i + 1, Dim - 1}, {k, j + 1, Dim}], 2]
\end{verbatim}

\paragraph{{\tt \bf e[i, j]}}
This command creates the matrix representation for the bivector $e_{ij} := e_i \wedge e_j$, which is the basis element of $\Lambda^2 \mathfrak{g}$.

\begin{verbatim}
e[i_, j_] :=  KroneckerProduct[e[i], e[j]] - KroneckerProduct[e[j], e[i]]
\end{verbatim}

\paragraph{{\tt \bf e[i,j,k]}}
This command creates the matrix representation for the bivector $e_{ijk} := e_i \wedge e_j \wedge e_k$, which is the basis element of $\Lambda^3 \mathfrak{g}$.

\begin{verbatim}
e[i_, j_, k_] := KroneckerProduct[e[i], e[j], e[k]] - KroneckerProduct[e[j], e[i], e[k]] 
- KroneckerProduct[e[k], e[j], e[i]] - KroneckerProduct[e[i], e[k], e[j]] 
+ KroneckerProduct[e[k], e[i], e[j]] + KroneckerProduct[e[j], e[k], e[i]]
\end{verbatim}

\paragraph{{\tt \bf LiftMor2[T, i, j]}}
Given $T \in {\rm End}(\mathfrak{g})$, this command constructs a map $\Lambda^2 T \in {\rm End}(\Lambda^2 \mathfrak{g})$ defined as $\Lambda^2 T := T \otimes 1 + 1 \otimes T$. The endomorphism ${\tt T}$ must be a two-dimensional table with ${\tt Dim}$ rows and ${\tt Dim}$ columns. As the output, this command gives a tuple of length $\frac{1}{2}{\tt Dim}({\tt Dim} - 1)$ corresponding to $\Lambda^2 T(e_{ij}) = \sum_{p < q} a_{pq} e_{pq}$. The $((p - 1) \cdot {\tt Dim} + q)$-th position in this tuple is the coefficient $a_{pq}$.

\begin{verbatim}
LiftMor2[T_, i_, j_] := {
A = Flatten[Table[a[l, m], {l, 1, Dim - 1}, {m, l + 1, Dim}]];
temp = Sum[T[[i, l]] e[l, j] + T[[j, l]] e[i, l], {l, 1, Dim}] 
  == Sum[e[l, m] a[l, m], {l, 1, Dim - 1}, {m, l + 1, Dim}];
sol = Flatten[Solve[Flatten[temp], A]];
A /. sol}[[1]]
\end{verbatim}

\paragraph{{\tt \bf GLiftMor2[T, i, j]}}
Given $T \in {\rm End}(\mathfrak{g})$, this command constructs a map $\Lambda^2 T \in {\rm End}(\Lambda^2 \mathfrak{g})$ defined as $\Lambda^2 T := T \otimes T$. The endomorphism ${\tt T}$ must be a two-dimensional table with ${\tt Dim}$ rows and ${\tt Dim}$ columns. As the output, this command gives a tuple of length $\frac{1}{2}{\tt Dim}({\tt Dim} - 1)$ corresponding to $\Lambda^2 T(e_{ij}) = \sum_{p < q} a_{pq} e_{pq}$. The $((p - 1) \cdot {\tt Dim} + q)$-th position in this tuple is the coefficient $a_{pq}$.

\begin{verbatim}
GLiftMor2N[T_, i_, j_] := {
A = Flatten[Table[a[l, m], {l, 1, Dim}, {m, l + 1, Dim}]];
temp = Sum[T[[i, l]]*T[[j, m]] e[l, m], {l, 1, Dim}, {m, 1, Dim}] 
  == Sum[e[l, m] a[l, m], {l, 1, Dim}, {m, l + 1, Dim}];
sol = Flatten[Solve[Flatten[temp], A]];
A /. sol}[[1]]
\end{verbatim}

\paragraph{{\tt \bf Com2[k, i, j]}}
This command gives a Schouten bracket $[e_k, e_{ij}]_S$ in terms of basis elements of $\Lambda^2 \mathfrak{g}$. The result is presented as a tuple of length $\frac{1}{2} {\tt Dim} ({\tt Dim} - 1)$.  

\begin{verbatim}
Com2[k_, i_, j_] := {
A = Table[a[l, m], {l, 1, Dim}, {m, l + 1, Dim}];
temp = {KroneckerProduct[e[k], Id] . e[i, j] - e[i, j] . KroneckerProduct[e[k], Id] 
  + KroneckerProduct[Id, e[k]] . e[i, j] - e[i, j] . KroneckerProduct[Id, e[k]] 
  == Sum[e[l, m] a[l, m], {l, 1, Dim}, {m, l + 1, Dim}]};
sol = Solve[Flatten[temp], Flatten[A]];
Flatten[A /. sol]}[[1]]
\end{verbatim}

\paragraph{{\tt \bf Com3[p, i, j, k]}}
This command gives a Schouten bracket $[e_p, e_{i j k}]_S$ in terms of basis elements of $\Lambda^3 \mathfrak{g}$. The result is presented as a tuple of length $\binom{{\tt Dim}}{3}$.  

\begin{verbatim}
Com3[p_, i_, j_, k_] := {
fun=OI3;
A = Table[a[fun[[t, 1]], fun[[t, 2]], fun[[t, 3]]], {t, 1, Binomial[Dim, 3]}];
temp = {Sum[a[x, y, z]*e[x, y, z], {x, 1, Dim}, {y, x + 1, Dim}, {z, y + 1, Dim}] 
  == Sum[Com[p, i][[x]]*e[x, j, k], {x, 1, j - 1}] 
  + Sum[Com[p, j][[y]]*e[i, y, k], {y, i + 1, k - 1}] 
  +  Sum[Com[p, k][[z]]*e[i, j, z], {z, j + 1, Dim}]};
sol = Solve[Flatten[temp], Flatten[A]];
Flatten[A /. sol]}[[1]]
\end{verbatim}

\paragraph{{\tt \bf Ad2[k]}}
This command gives an action of basis elements of $\mathfrak{g}$ on the basis elements of $\Lambda^2 \mathfrak{g}$.

\begin{verbatim}
Ad2[k_] := {
Part1 = Flatten[Table[Com2[k, i, j], {i, 1, Dim}, {j, i + 1, Dim}]]; 
Table[Part1[[i + (j - 1)*Dim (Dim - 1)/2 ]], 
  {i, 1, Dim (Dim - 1)/2}, {j, 1, Dim (Dim - 1)/2}]}[[1]]
\end{verbatim}

\paragraph{{\tt \bf AllAd2Invar}}
This command computes all $\mathfrak{g}$-invariant forms on $\Lambda^2 \mathfrak{g}$.

\begin{verbatim}
AllAd2Invar := {
tmp = Table[If[i <= j, a[i, j], a[j, i]], {i, 1, Dim (Dim - 1)/2}, 
  {j, 1, Dim (Dim - 1)/2}];
zeroes = Table[0, {i, 1, Dim (Dim - 1)/2}, {j, 1, Dim (Dim - 1)/2}];
Eqn = Transpose[Sum[a[i] Ad2[i], {i, 1, Dim }]] . tmp 
  + tmp . Sum[a[i] Ad2[i], {i, 1, Dim }] == zeroes;
Eqs = Table[Equation[pil], {pil, 1, Dim}];
For[Index = 1, Index <= Dim, Index++, Eqs[[Index]] = D[Eqn, a[Index]]];
Solve[Flatten[Eqs], Flatten[tmp]];
tmp/.sol}[[1]]
\end{verbatim}

\paragraph{{\tt \bf Coboundary2D[$r_1$, $r_2$]}}
This command constructs the matrix form of the Schouten bracket $[r_1, r_2]_S$ for two bivectors $r_1, r_2 \in \Lambda^2 \mathfrak{g}$. It is an auxilliary function for the command ${\tt CobD2Sim}$.

\begin{verbatim}
CoboundaryD2[r1_, r2_] := {
fun = OI2; 
Sum[r1[[i]] r2[[k]] Sum[(-1)^(l + m)
Wedge[e[fun2[[i]][[l]]] . e[fun[[k]][[m]]] - e[fun[[k]][[m]]] . e[fun[[i]][[l]]], 
  e[fun[[i]][[3 - l]]], e[fun[[k]][[3 - m]] ]], {l, 1, 2}, {m, 1, 2}], 
  {i, 1, Dim (Dim - 1)/2}, {k, 1, Dim (Dim - 1)/2}]}[[1]]  
\end{verbatim}

\paragraph{{\tt \bf CobD2Sim[$r_1$, $r_2$]}}
This command computes the Schouten bracket $[r_1, r_2]_S$ of $r_1, r_2 \in \Lambda^2 \mathfrak{g}$. The result is presented as a tuple of length $\binom{Dim}{3}$, where $k$-th position corresponds to the $k$-th element of the ordered basis of $\Lambda^3 \mathfrak{g}$.

\begin{verbatim}
CobD2Sim[r1_, r2_] := {
fun = OI3;
temp = {CoboundaryD2[r1, r2] == Sum[a[z] Wedge[e[fun[[z]][[1]]], 
  e[fun[[z]][[2]]], e[fun[[z]][[3]]]], {z, 1, Binomial[Dim, 3]}]};
pars = Table[a[z], {z, 1, Dim!/(3! (Dim - 3)!)}];
Resul = Solve[Flatten[temp], Flatten[pars]];
pars /. Resul}[[1]][[1]]
\end{verbatim}

\paragraph{{\tt \bf Ad2InvarElements}}
This command gives all $\mathfrak{g}$-invariant elements in $\Lambda^2 \mathfrak{g}$ relative to the $\mathfrak{g}$-module $(\Lambda^2 \mathfrak{g}, \Lambda^2 {\rm ad})$.

\begin{verbatim}
Ad2InvarElements := {
varrob = Table[a[i], {i, 1, Binomial[Dim, 2]}];
fun = OI2;
eqn = Table[0, {i, 1, Dim}]; 
For[ind = 1, ind <= Dim, ind++, 
  eqn[[ind]] = Sum[varrob[[k]]*Com2N[ind, fun[[k, 1]], fun[[k, 2]]], 
  {k, 1, Binomial[Dim, 2]}]]; 
sol = Flatten[Solve[Table[eqn[[p]] == 0, {p, 1, Dim}], varrob]];
varrob /. sol}
\end{verbatim}

\paragraph{{\tt \bf Ad3InvarElements}}
This command computes the $\mathfrak{g}$-invariant elements in $\Lambda^3 \mathfrak{g}$ relative to the $\mathfrak{g}$-module $(\Lambda^3 \mathfrak{g}, \Lambda^3 {\rm ad})$. The result is presented as a possibly nonempty tuple, where each position corresponds to the element of the ordered basis of $\Lambda^3 \mathfrak{g}$. If it contains a combination of parameters ${\tt a[k]}$, we get the family of $\mathfrak{g}$-invariant elements in $\Lambda^3 \mathfrak{g}$ parametrised by these values. For example, if the output reads ${\tt \{a[1], 0, a[3], 0\}}$, we obtain the family of $\mathfrak{g}$-invariant elements of the form $a_1 e_{123} + a_3 e_{134}$, where $a_1, a_3 \in \mathbb{R}$.

\begin{verbatim}
Ad3InvarElements := {
varrob = Table[a[i], {i, 1, Binomial[Dim, 3]}];
eqn = Table[0, {i, 1, Dim}]; 
For[ind = 1, ind <= Dim, ind++, 
  eqn[[ind]] = Sum[varrob[[k]]*Com3[ind, fun[[k, 1]], fun[[k, 2]], fun[[k, 3]]], 
  {k, 1, Binomial[Dim, 3]}]]; 
Solve[Table[eqn[[p]] == 0, {p, 1, Dim}], varrob];
varrob/.sol}
\end{verbatim}

\paragraph{{\tt \bf AllAut}}
This command computes automorphisms of $\mathfrak{g}$. As an output, it gives the matrix form of the obtained automorphism together with its determinant and the matrix form of its extension to $\Lambda^2 \mathfrak{g}$.

\begin{verbatim}
AllAutN := {
autmatrix = Table[T[a, b], {a, 1, Dim}, {b, 1, Dim}];
temp = Table[Sum[T[p, i]*T[q, j]*Com[p, q][[k]], {p, 1, Dim}, {q, 1, Dim}] 
  == Sum[Com[i, j][[l]]*T[k, l], {l, 1, Dim}], 
  {i, 1, Dim - 1}, {j, i + 1, Dim}, {k, 1, Dim}];
solutions = Solve[Flatten[temp], Flatten[autmatrix]];
k = 1;
For[i = 1, i < Length[solutions] + 1, i++,
  autsol = autmatrix /. solutions[[i]];
  If[Det[autsol] === 0, Null,
    Print["Automorphism matrix no ", k];
    Print[Simplify[MatrixForm[autsol]]];
    Print["Determinant: "];
    Print[Simplify[Det[autsol]]];
    Print["Matrix of a automorphism no ", k, " lifted to bivector space:"]; 
    tempmatrix = {}; 
    For[p = 1, p <= Dim - 1, p++, 
      For[q = p + 1, q <= Dim, q++, 
      AppendTo[tempmatrix, GLiftMor2N[autsol, p, q]]]]; 
    Print[MatrixForm[Simplify[tempmatrix]]];
  k++]]}
\end{verbatim}

\chapter{Matrix representation for Lie algebras with nontrival center}	\label{Se:MatrRep}

If a Lie algebra $\mathfrak{g}$ has a nontrivial center, then the image of the map ${\rm ad}: \mathfrak{g} \ni v \mapsto [v, \cdot] \in \mathfrak{gl}(\mathfrak{g})$ is a matrix Lie algebra that is not isomorphic to $\mathfrak{g}$. Hence, the image of {\rm ad} does not give a matrix algebra representation of $\mathfrak{g}$, which may give rise to a problem as matrix Lie algebra representation are quite practical in computations \cite{GBGH19}. The aim of this section is to provide a method to obtain a matrix Lie algebra isomorphic to $\mathfrak{g}$ when its center, $\mathcal{Z}(\mathfrak{g})$, is not trivial, namely $\mathcal{Z}(\mathfrak{g})\neq 0$, and $\mathfrak{g}$ satisfy some additional conditions. Our method is to be employed in the rest of our work during calculations.

Let $\{e_1,\ldots, e_s, e_{s+1}, \ldots, e_n\}$ be a basis of $\mathfrak{g}$ such that $\{e_1,\ldots, e_s\}$ form a basis for  $\mathcal{Z}(\mathfrak{g})$. Let $c_{ij}^k$, with $i,j,k=1,\ldots,n$, be the structure constants of $\mathfrak{g}$ in the given basis, i.e. $[e_i,e_j]=\sum_{k=1}^nc_{ij}^ke_k$ for $i,j=1,\ldots,n$. Define then $\tilde{\mathfrak{g}} := \langle e_1, \ldots, e_n, e\rangle$. Let us determine the conditions ensuring that $\tilde{\mathfrak{g}}$ is a Lie algebra relative to a Lie bracket, $[\cdot,\cdot]_{\bar{\mathfrak{g}}}$, that is an extension of the Lie bracket on $\mathfrak{g}$, i.e. $[e',e'']_{\bar{\mathfrak{g}}}=[e',e'']_{\mathfrak{g}}$ for every $e',e''\in \mathfrak{g}$, and $[e, e_i]_{\bar{\mathfrak{g}}} = \alpha_i e_i$ for $i=1,\ldots,n$ so that 
\begin{equation}\label{cond1}
\alpha_1 \cdot \alpha_2 \cdot \ldots \cdot \alpha_s \neq 0.
\end{equation}
The meaning of the last condition will be clear in a while. If  $[\cdot,\cdot]_{\bar{\mathfrak{g}}}$ is to be a Lie bracket, its Jacobi identity leads to the following conditions 
\begin{equation}\label{cond2}
\begin{gathered}
[e, [e_i, e_j]_{\bar{\mathfrak{g}}}]_{\bar{\mathfrak{g}}} = [[e,e_i]_{\bar{\mathfrak{g}}},e_j]_{\bar{\mathfrak{g}}} + [e_i, [e,e_j]_{\bar{\mathfrak{g}}}]_{\bar{\mathfrak{g}}} \iff \sum_{k=1}^n\alpha_k c_{ij}^ke_k =\sum_{k=1}^n (\alpha_i + \alpha_j) c_{ij}^ke_k,\qquad \forall i,j=1,\ldots,n, \\
[e, [e, e_j]_{\bar{\mathfrak{g}}}]_{\bar{\mathfrak{g}}} = [[e,e]_{\bar{\mathfrak{g}}},e_j]_{\bar{\mathfrak{g}}} + [e, [e,e_j]_{\bar{\mathfrak{g}}}]_{\bar{\mathfrak{g}}}=[e, [e,e_j]_{\bar{\mathfrak{g}}}]_{\bar{\mathfrak{g}}},\qquad \forall j=1,\ldots,n.
\end{gathered}
\end{equation}
Thus, previous conditions can be reduced to requiring that, for those indices $i,j,k$ satisfying $c_{ij}^k \neq 0$, one gets 
$$
\alpha_i + \alpha_j = \alpha_k.
$$
If the latter condition is satisfied, a new Lie algebra $\tilde{\mathfrak{g}}$ arises. Note that $\mathfrak{g}$ is an ideal of $\tilde{\mathfrak{g}}$. This leads to a Lie algebra morphism $\mathscr{R}:v\in {\mathfrak{g}}\mapsto \mathscr{R}_v:=[v,\cdot]_{\tilde{\mathfrak{g}}}\in \mathfrak{gl}(\tilde{\mathfrak{g}})$. If $\mathscr{R}_v=0$, then $[v,e_i]_{\tilde{\mathfrak{g}}}=0$ for $i=1,\ldots,n$, which means that $v\in \mathcal{Z}(\mathfrak{g})$. Thus, $v = \sum_{i=1}^s \lambda_i e_i$ for certain constants $\lambda_1,\ldots,\lambda_s$ and $0 = [e,v]_{\tilde{\mathfrak{g}}} = \sum_{i=1}^s \lambda_i \alpha_i e_i$. Then, condition (\ref{cond1}) yields $\lambda_1 = \ldots =\lambda_s = 0$ and it turns out that $v=0$.
Hence, the elements $\mathscr{R}_v$, with $v\in \mathfrak{g}$, span a matrix Lie algebra of endomorphisms on $\tilde{\mathfrak{g}}$ isomorphic to $\mathfrak{g}$. 

It is clear that conditions (\ref{cond1}) and (\ref{cond2}) do not need to be satisfied for a general Lie algebra $\mathfrak{g}$ with a nontrivial center. 
Nevertheless, we used in this work that this method works for all indecomposable real four-dimensional Lie algebras with non-trivial center (see \cite{SW14} for a complete list of such Lie algebras). This will be enough for our purposes. 

Let us illustrate our method. Consider the Lie algebra $\mathfrak{s}_{ 1} = \langle e_1, e_2, e_3, e_4\rangle$ with non-vanishing commutation relations $[e_4, e_2] = e_1, [e_4, e_3] = e_3$. Let us construct a new Lie algebra $\tilde{\mathfrak{g}}=\langle e_1,e_2, e_3,e_4,e\rangle$ following (\ref{cond1}) and (\ref{cond2}). This gives rise to the following system of equations:
$$
\alpha_4 + \alpha_2 = \alpha_1, \quad \alpha_4 + \alpha_3 = \alpha_3, \quad \alpha_1 \neq 0.
$$
We have that $\mathcal{Z}(\mathfrak{s}_{1})=\langle e_1\rangle$. The previous system, under the corresponding restriction (\ref{cond1}),  has a solution $\alpha_4 =0, \alpha_3 \in \mathbb{R}, \alpha_2 = \alpha_1, \alpha_1 \in \mathbb{R}\backslash\{0\}$. In particular, set $\alpha_2=\alpha_1 = 1$ and $\alpha_3 = 0$. Thus, the endomorphisms $\mathscr{R}_{e_i}$ on $\tilde{\mathfrak{g}}$ read
{\footnotesize
$$
\mathscr{R}_{e_1} = \left(
\begingroup
\setlength\arraycolsep{4pt}
\begin{array}{ccccc}
0 & 0 & 0 & 0 & -1 \\
0 & 0 & 0 & 0 & 0 \\
0 & 0 & 0 & 0 & 0 \\
0 & 0 & 0 & 0 & 0 \\
0 & 0 & 0 & 0 & 0 
\end{array}
\endgroup
\right),
\mathscr{R}_{e_2} = \left(
\begingroup
\setlength\arraycolsep{4pt}
\begin{array}{ccccc}
0 & 0 & 0 & -1 & 0 \\
0 & 0 & 0 & 0 & -1 \\
0 & 0 & 0 & 0 & 0 \\
0 & 0 & 0 & 0 & 0 \\
0 & 0 & 0 & 0 & 0 
\end{array}
\endgroup
\right),
\mathscr{R}_{e_3} = \left(
\begingroup
\setlength\arraycolsep{4pt}
\begin{array}{ccccc}
0 & 0 & 0 & 0 & 0 \\
0 & 0 & 0 & 0 & 0 \\
0 & 0 & 0 & -1 & 0 \\
0 & 0 & 0 & 0 & 0 \\
0 & 0 & 0 & 0 & 0 
\end{array}
\endgroup
\right),
\mathscr{R}_{e_4} = \left(
\begingroup
\setlength\arraycolsep{4pt}
\begin{array}{ccccc}
0 & 1 & 0 & 0 & 0 \\
0 & 0 & 0 & 0 & 0 \\
0 & 0 & 1 & 0 & 0 \\
0 & 0 & 0 & 0 & 0 \\
0 & 0 & 0 & 0 & 0 
\end{array}
\endgroup
\right).
$$
}
As desired, the above matrices have the same non-vanishing commutation relations as $\mathfrak{s}_{ 1}$, i.e.
$$
[\mathscr{R}_{e_4},\mathscr{R}_{e_2}]=\mathscr{R}_{e_1},\qquad [\mathscr{R}_{e_4},\mathscr{R}_{e_3}]=\mathscr{R}_{e_3}.
$$

Let us prove that not every Lie algebra $\mathfrak{g}$ with non-trivial center admits an extended Lie algebra $\tilde{\mathfrak{g}}$ of the form required by our method. Consider the Lie algebra $\mathfrak{s}_{6,231} = \langle e_1, \ldots, e_6\rangle$ with nonzero commutation relations (see \cite{SW14} for details):
$$
[e_2, e_3] = e_1, \quad [e_5, e_1] = e_1, \quad [e_5, e_2] = e_2, \quad [e_6, e_1] = e_1, \quad [e_6, e_3] = e_3, \quad [e_6, e_5] = e_4.
$$
Thus, $\mathcal{Z}(\mathfrak{s}_{4,231}) = \langle e_4\rangle$. The corresponding system of equations (\ref{cond2}) reads
$$
\alpha_2 + \alpha_3 = \alpha_1, \quad \alpha_5 + \alpha_1 = \alpha_1, \quad \alpha_5 + \alpha_2 = \alpha_2, \quad \alpha_6 + \alpha_1 = \alpha_1, \quad \alpha_6 + \alpha_3 = \alpha_3, \quad \alpha_6 + \alpha_5 = \alpha_4.
$$
Then, $\alpha_5 = 0, \alpha_6 = 0$ and $\alpha_4 = \alpha_5=0$, which contradicts the assumption of our method (\ref{cond2}), namely $\alpha_4 \neq 0$. 

Consider the 7-dimensional Lie algebra $\mathfrak{g} := \langle e_1, \ldots, e_7\rangle$ with $\mathcal{Z}(\mathfrak{g}) = \langle e_7 \rangle$ and nonzero commutation relations (cf. \cite[p. 492, case $7_I$]{Se93})
\begin{equation*}
\begin{gathered}
[e_1, e_2] = e_3, \quad [e_1, e_3] = e_4, \quad [e_1, e_4] = e_5, \quad [e_1, e_6] = e_7, \quad [e_2, e_3] = e_6,\\
[e_2, e_4] = e_7, \quad [e_2, e_5] = e_7, \quad [e_2, e_6] = e_7, \quad [e_3, e_4] = -e_7.
\end{gathered}
\end{equation*}
The corresponding system (\ref{cond2}) reads
$$
\begin{array}{rlrlrl}
{\rm (i)} & \alpha_1 + \alpha_2 = \alpha_3, & {\rm (ii)} & \alpha_1 + \alpha_3 = \alpha_4, & {\rm (iii)} & \alpha_1 + \alpha_4 = \alpha_5, \\
{\rm (iv)} & \alpha_1 + \alpha_6 = \alpha_7, & {\rm (v)} & \alpha_2 + \alpha_3 = \alpha_6, & {\rm (vi)} & \alpha_2 + \alpha_4 = \alpha_7, \\
{\rm (vii)} & \alpha_2 + \alpha_5 = \alpha_7, & {\rm (viii)} & \alpha_2 + \alpha_6= \alpha_7, & {\rm (ix)} & \alpha_3 + \alpha_4 = \alpha_7.
\end{array}
$$
From (vii) and (viii), we  get $\alpha_5 = \alpha_6$; from (vi) and (ix), we get $\alpha_2 = \alpha_3$; and from (iv) and (viii), we get $\alpha_1 = \alpha_2$. Thus, $\alpha_1 = \alpha_2 = \alpha_3$. From (i), it follows that $\alpha_1 = 0$. Thus, $\alpha_2 =  \alpha_3 = 0$. From (ii), we get $\alpha_4 = 0$ and (iii) yields $\alpha_5 = 0$. From (iv), one gets $\alpha_6 = \alpha_7$. From (ix), we get $\alpha_7  =0$, and thus $\alpha_6 = 0$. Therefore, $\alpha_1 = \ldots = \alpha_7 = 0$ is the only possibility, but it does not satisfy (\ref{cond1}).

\end{document}